\title{Bandits roaming Hilbert space}
\author{JOSEP LUMBRERAS ZARAPICO}
\begin{document}

\maketitle

\begin{abstract}
This thesis studies exploration/exploitation trade-offs in online learning of properties of quantum states using multi-armed bandits. Given streaming access to an unknown quantum state, in each round we select an observable from a set of actions to maximize its expectation value. Using past information, we refine actions to minimize regret: the cumulative gap between current reward and maximum possible. We derive information-theoretic lower bounds and optimal strategies with matching upper bounds, showing regret typically scales as the square root of rounds. As an application, we reframe quantum state tomography to both learn the state efficiently and minimize measurement disturbance. For pure states and continuous actions, we achieve polylogarithmic regret using a sample-optimal algorithm based on a weighted online least squares estimator. The algorithm relies on the optimistic principle and controls the eigenvalues of the design matrix. We also apply our framework to quantum recommender systems and thermodynamic work extraction from unknown states. In this last setting, our results demonstrate an exponential advantage in work dissipation over tomography-based protocols.
\end{abstract}

%Acknowledgments

\begin{acknowledgements}

  I would like to begin by thanking my advisor, Marco Tomamichel. I am grateful for all the blackboard discussions we had over the years, which helped me grow as a researcher. More than that, I appreciate how he showed me what it means to work as a theoretical scientist, from turning ideas into solid results to carefully going through every step of a proof, and paying attention to all the details when writing.

I would like to thank Roberto Rubboli and Erkka Haapasalo, with whom I shared an office from the beginning of my PhD. We frequently exchanged ideas and engaged in discussions; with Roberto, I often discussed research problems when facing difficulties, and from Erkka I learned the level of precision required in theoretical work, as he was always willing to address my questions in the early stages of my PhD.

Since research is never done alone, I want to thank all my co-authors: Shrigyan Bramachari, Niklas Galke, Mikhail Terekhov, Erkka Haapasalo, Ruo Cheng, Yanglin Hu, Mile Gu, Marco Fannizza, Andreas Winter and Aaliya Vij. I’m also grateful to Jan Seyfried, Ruo Cheng and Sayantan Sen for their feedback and help in revising this thesis.

\end{acknowledgements}

% Table of Contents
\tableofcontents
\clearpage

% Summary
\cleardoublepage
\phantomsection
\addcontentsline{toc}{chapter}{Summary}
\chapter*{Summary}

In this thesis, we initiate the study of exploration/exploitation trade-offs in online learning of properties of quantum states using the multi-armed bandit framework. Given streaming access to copies of an unknown quantum state, in each round, we are tasked with choosing an observable from a set of actions, aiming to maximize its expectation value on the unknown state (the reward). Information gained from previous rounds is used to gradually improve the choice of action, thereby reducing the regret—the gap between the reward and the maximal reward attainable with the given action set. We provide various information-theoretic lower bounds and optimal strategies with matching upper bounds, showing that for most action sets and unknown states, regret scales as the square root of the number of rounds.

As a first application, we take a different perspective on quantum state tomography, aiming not only to efficiently learn the state but also to use measurements that minimally disturb it. This task is connected to our previous setting when the unknown state is pure and the action set is continuous. Surprisingly, in this setting, we show a polylogarithmic regret scaling using a new, sample-optimal tomography algorithm based on a median-of-means weighted online least squares estimator. To design and analyze this algorithm, we study a more general classical linear bandit problem where the variance of the noise parameter vanishes linearly as actions on the unit sphere are chosen closer to the unknown vector. Our strategy relies on the optimistic principle and introduces a novel approach that controls the eigenvalues of the design matrix at each time step through geometrical arguments, which we believe to be of independent interest. This provides a new approach to study regret bounds in linear stochastic bandits that require to have guarantees on the instantaneous regret at any time.

In the final part of this thesis, we explore other quantum applications where the exploration/exploitation trade-off is relevant and our framework applies. In particular, we investigate work extraction protocols designed to transfer the maximum possible energy to a battery using sequential access to finite copies of an unknown pure qubit state. The challenge lies in designing interactions between the qubit and battery in each round that not only efficiently charge the battery but also extract information about the unknown system to enhance performance in subsequent rounds. We present adaptive protocols for different battery models that incur energy dissipation growing only polylogarithmically on the number of copies, an exponential improvement over naïve protocols that first perform full state tomography. This performance is made possible by leveraging a connection to the exploration-exploitation trade-off and relating the cumulative dissipation to the regret of our framework.

The last application we study is recommender systems for quantum data, where the learner, at each round, is given a context in the form of an observable and must choose from a set of unknown quantum processes performing a measurement. Using this model, we formulate the low-energy quantum state recommendation problem, where the context is a Hamiltonian and the goal is to recommend the state with the lowest energy. For this task, we study two families of contexts: the Ising model and a generalized cluster model.

\pagenumbering{arabic}

% Introduction Chapter
\chapter{Introduction}

%-- put all papers from the author

%Some reinforcement learning? decision making in general?

Humans have always built tools to make work easier. First came machines to replace physical labor, then computers to automate tasks like calculation. Now, the goal is to create machines that can think and learn, machines with intelligence. But what does intelligence mean?

At its core, intelligence is the ability to learn from experience and adjust behavior based on what has been learned. A bird that escapes a predator learns to be more cautious next time. A child playing with blocks figures out how to stack them higher after several failed attempts. These are examples of learning through interaction—by trying, failing, and gradually improving. This is the kind of learning we aim to bring to machines: the ability to improve through experience and adapt to new situations.

To achieve this, researchers have developed machine learning, where algorithms improve by analyzing data. While traditional programming requires rules to be explicitly defined by the programmer, machine learning systems instead discover patterns and strategies by learning from examples or feedback. Rather than being told exactly what to do~\cite{samuel1959some}, these systems adapt their behavior through experience, improving as they process more data. Within this discipline, \textit{reinforcement learning} (RL) stands out by capturing the dynamic, trial-and-error nature of decision-making over time. Unlike supervised learning, which relies on labeled datasets, or unsupervised learning, which looks for patterns without feedback, RL involves an agent that learns by interacting with its environment. At each step, the agent takes actions, receives feedback in the form of rewards, and gradually learns a policy to maximize long-term outcomes. This setup makes RL particularly powerful in scenarios where feedback is sparse or delayed, such as in robotics~\cite{kober2013reinforcement}, game-playing~\cite{mnih2015human, silver2016mastering}, or autonomous systems~\cite{lillicrap2015continuous}, where decisions influence not just the present, but future outcomes as well. At the heart of reinforcement learning lies the challenge of balancing \textit{exploration}—trying out new actions to gather information via feedback—and \textit{exploitation}—using current knowledge to achieve the best outcome.

To study and design reinforcement learning algorithms, researchers often rely on simplified mathematical models. One of the most widely used is the \textit{Markov Decision Process} (MDP). An MDP provides a formal framework for sequential decision-making, where an agent interacts with an environment that is fully observable and evolves in a probabilistic way. At each time step, the agent observes the current state, selects an action, and receives a reward while the environment transitions to a new state. The key assumption is the Markov property: the future depends only on the present state and action, not on the full history. In many real-world situations, however, the agent cannot directly observe the true state of the environment. These problems are modeled by \textit{Partially Observable Markov Decision Processes} (POMDPs), where the agent must infer hidden states from noisy or incomplete observations.  However, algorithms for both MDP and POMDPs often come with high computational complexity and require handling large or even infinite state spaces. Motivated by these limitations, we ask: what is the simplest framework that still captures the core challenge of reinforcement learning—the tradeoff between exploration and exploitation? To address this question, we turn to the \textit{multi-armed bandit} problem, one of the simplest models of decision making with uncertainty.

\section{Multi-armed bandits}

The multi-armed bandit problem was introduced long before the term ``reinforcement learning" was coined. It dates back to 1933, when Thompson~\cite{thompson33} proposed it as a way to model decision-making in medical trials. The goal was to decide which treatment to give to the next patient, given several available options, to increase the chances of success. The problem was later formalized and popularized by Robbins~\cite{robbins52}, who described it as a scenario where each action, called an arm, gives a reward drawn from an unknown and independent probability distribution. This version is known as the stochastic multi-armed bandit model. The term ``bandit" comes from old slot machines, often called ``one-armed bandits" because they would take your money with the pull of a lever. 

Given a set of arms, the bandit problem models a learner who interacts with these arms sequentially, observing a reward after each choice. The goal is to identify the arm with the highest expected reward while also maximizing the total reward collected over time. This problem captures the fundamental dilemma in reinforcement learning of \emph{exploration-exploitation}. On the one hand, the learner needs to explore different arms to discover which ones are best; on the other, the learner wants to exploit the arms that have already given high rewards. Bandit algorithms are online, meaning that their decisions are updated after each round based on past observations.

More formally, we can model a multi-armed bandit problem using a set of \textit{actions} $\mathcal{A}$ and a set of possible \textit{environments} $\Gamma$. The interaction takes place between a learner and an environment $\gamma \in \Gamma$, which is unknown to the learner. The interaction happens over a finite number of rounds $T \in \mathbb{N}$, called the \textit{time horizon}. At each time step $t \in [T]$, the learner selects an action $A_t \in \mathcal{A}$ and then observes a reward $X_t$. Typically, the reward is modeled as a real number, that is, $X_t \in \mathbb{R}$, representing a quantifiable outcome the learner aims to maximize. Thus, the goal of the learner is to maximize the cumulative reward $\sum_{t=1}^T X_t$. 

The challenge lies in how to design strategies that, at any time step $t \in [T]$, use the past information of actions and rewards $(A_s, X_s)_{s=1}^{t-1}$ to decide the next action $A_t$. We can model the learner by a \emph{policy} $\pi = (\pi_t)_{t=1}^T$, where each $\pi_t$ is a probability distribution over $\mathcal{A}$ conditioned on the past history $(A_s, X_s)_{s=1}^{t-1}$. 

The difficulty of the problem depends on how the rewards are generated by the environment, given the selected actions. In general, the environment can be seen as a function that maps the past history and the current action $A_t$ to a reward $X_t$, which can be deterministic or randomized. In this thesis, we focus on \textit{stochastic bandits}, where the learner is interacting with an environment that is modeled as a fixed set of probability distributions $(P_a : a \in \mathcal{A})$ such that when the learner selects an action $A_t \in \mathcal{A}$, the reward $X_t$ is sampled according to $P_{A_t}$. Another important class of problems is \textit{adversarial bandits}, where the environment has more flexibility and can adapt the rewards based on the past history of actions and observations. For a more complete treatment of stochastic bandits, we refer the reader to~\cite[Chapter 4]{lattimore_banditalgorithm_book}, and for adversarial bandits, to~\cite[Chapter 11]{lattimore_banditalgorithm_book}.

How do we quantify the performance of the learner for stochastic bandits? Usually, we want to quantify the performance relative to the best policy. In general, we assume that there exists an optimal policy that, for any environment in $\Gamma$, always selects the action with the highest expected reward. Then we define the \textit{regret} $R_T$ as the cumulative difference between the expected reward obtained by always playing the best action and the expected cumulative reward obtained by the learner:
\begin{align}\label{eq:regret_mab_intro}
    R_T = T \max_{\substack{a \in \mathcal{A} \\ X \sim P_a}} \EX[X] - \EX\left[\sum_{t=1}^T X_t\right].
\end{align}
The expectation is taken over the randomness of the policy of the learner and the unknown environment and the first factor of $T$ arises from assuming that at every time step the maximal expected reward is the same.
The regret measures how much reward the learner loses by not acting optimally from the beginning. A learner is considered successful if the regret grows sublinearly with the time horizon $T$, meaning that the average regret $R_T/T$ vanishes as $T$ grows to infinity. As an example one can consider a two-armed Bernoulli bandit where the action set contains only two elements $\mathcal{A} = \lbrace 1, 2 \rbrace$ and the unknown environment is given by the set $\nu = \lbrace \text{Bern}(p_1), \text{Bern}(p_2) \rbrace  $ with $p_1 > p_2$, where $ \text{Bern}(p)$ is a Bernoulli distribution with mean $p\in[0,1]$. In this scenario, if the learner chooses $A_t\in\mathcal{A}$ it receives reward $X_t = 1$ with probability $p_{A_t}$ and $X_t = 0$ with probability $1-p_{A_t}$. Then the regret then can be written as $R_T = Tp_1 - \EX [ \sum_{t=1}^T X_{A_t} ]$.

There are two main techniques to address the problem of regret minimization in stochastic bandits. One is based on \textit{upper confidence bounds} (UCB), introduced by Lai and Robbins~\cite{lairobbins85} in 1985, and the other is based on posterior sampling, introduced earlier by Thompson~\cite{thompson33} in 1933. The algorithm proposed by Thompson has the honour of being the first bandit algorithm and is usually referred to as \textit{Thompson sampling}. The original algorithm did not have theoretical guarantees, but it presented a simple and intuitive idea: update a prior over the environment using the observed data and act according to the best action suggested by the posterior. Although theoretical guarantees for Thompson sampling were only established recently~\cite{agrawal2012analysis,agrawal2017near}, and are generally slightly weaker in scaling compared to UCB methods, Thompson sampling is often found to perform better empirically and is much more computationally efficient. On the other hand, the UCB-based approach had asymptotic theoretical guarantees already in the original work~\cite{lairobbins85}, and later finite-time guarantees were established in the seminal work by Auer et al.~\cite{auer2002finite}. In this thesis, we will mainly consider the UCB approach.

The family of UCB algorithms is based on one of the main principles to solve reinforcement learning problems: the \textit{optimism in the face of uncertainty principle} (OFU). In short, this principle tells us to choose the best actions within the limits of the observed data. More precisely, in our previous two-armed Bernoulli bandit example, at each round, the learner constructs a confidence interval for the mean reward of each arm based on past observations and selects the arm with the highest plausible reward. This strategy naturally balances the exploration of arms with high uncertainty and the exploitation of arms that have shown good performance with high empirical means.

As we described, multi-armed stochastic bandits have a simple yet powerful theoretical formulation, and nowadays they form a very active research area, with growing interest in both theory and practice~\cite{lattimore_banditalgorithm_book,bandittheory1,bandittheory2,banditapplications1}. Their adaptive nature makes them well-suited to modern applications~\cite{bouneffouf2020survey}, including clinical trials~\cite{durand2018contextual}, dynamic pricing~\cite{dynamical}, advertisement recommendation~\cite{advertisement}, online recommender systems~\cite{contextual_new_article_recommendation,mcinerney2018explore}, or identifying chemical reaction conditions~\cite{wang2024identifying}. As an example, in~\cite{contextual_new_article_recommendation}, a news article recommender system was considered, where the actions correspond to the articles to recommend, and based on user features, a binary reward model indicates whether the user clicks on the recommended article.

\section{Reinforcement learning for quantum technologies}

Over recent years, quantum technologies have emerged with applications ranging from cryptography and sensing to the holy grail of universal quantum computing. What distinguishes quantum technologies from classical ones is the hardware, which is based on the manipulation of quantum systems, allowing the exploitation of uniquely quantum properties such as entanglement and superposition. These properties open the door to new computational and communication paradigms that are fundamentally different from classical systems. However, some of the main weaknesses of these systems are their instability, which motivates the need for quantum error-correcting codes, and the difficulty of learning a classical description of quantum experiments or systems, as this typically requires quantum state tomography, a process that is highly resource-intensive in terms of both sample complexity and time. In particular, real-time control of these systems is an important task. Consider a scenario where a quantum computer experiences small drifts in the qubit frequencies due to changes in the surrounding environment. If we do not detect and correct these drifts quickly, the quantum gates that rely on precise control of the qubit states will accumulate errors. To prevent this, we need real-time strategies that monitor the system’s behavior and update control parameters on the fly, ensuring that the device maintains good performance during computations.

In order to accomplish these tasks, researchers have widely employed classical computing techniques based on reinforcement learning. Some applications of reinforcement learning to quantum technologies include quantum noise modeling~\cite{bordoni2024quantum}, measurement of quantum devices~\cite{nguyen2021deep}, quantum gate control~\cite{an2019deep, niu2019universal, youssry2020modeling}, quantum multiparameter estimation~\cite{cimini2023deep}, and the design of error correction codes~\cite{su2025discovery}, among many others. Although these techniques inherently involve the fundamental dilemma of exploration versus exploitation, they generally apply well-established classical methods to complex models and tasks with large state spaces, often with limited theoretical study in the quantum context. In this thesis, we aim to fill the gap in the literature where the basic exploration-exploitation problem has not been explored within a quantum setting. This leads us to the first question of the thesis:

\begin{itemize} 
\item \textit{\textbf{Question 1.} Can we model a simple exploration-exploitation setting motivated by a quantum mechanical problem?} 
\end{itemize}

\section{Thesis framework}

As explained before, the multi-armed bandit problem is the simplest reinforcement learning setting that captures the exploration-exploitation tradeoff. Thus, in order to study this tradeoff in the quantum context, we propose the \textit{multi-armed quantum bandit} model. In this model, the arms correspond to different observables or measurements, and the corresponding reward is distributed according to Born's rule for these measurements on an unknown quantum state. We are interested in the optimal tradeoff between \emph{exploration} (i.e., learning more about the unknown quantum state) and \emph{exploitation} (i.e., using the acquired information about the unknown state to choose the most rewarding, but not necessarily the most informative, measurement). More precisely, we consider a learner that at each round has access to a copy of an unknown quantum state $\rho$ (the \emph{environment}) and must choose an observable from a given set $\mathcal{A}$ (the \emph{action set}) in order to perform a measurement and receive a \emph{reward}. The {reward} is sampled from the probability distribution associated with the measurement of $\rho$ using the chosen observable.

The figure of merit that we study is the \emph{cumulative expected regret}, defined as the sum over all rounds of the difference between the maximal expected reward over $\mathcal{A}$ and the expected actual reward associated with the chosen observable at each round. The regret for this model can be written as 
\begin{align}\label{eq:regret_maqb_intro} 
R_T := T \max_{O\in\mathcal{A}} \Tr(\rho O) - \EX \left[ \sum_{t=1}^T \Tr(\rho O_{t}) \right], 
\end{align} 
where $O_t\in\mathcal{A}$ is the observable selected by the learner at time step $t\in[T]$. We note that this regret has the same form as that of the multi-armed stochastic bandit~\eqref{eq:regret_mab_intro} (where we have already taken the expectation over the environment). Indeed, the multi-armed quantum bandit problem falls into the class of multi-armed stochastic bandits. The precise formulation of this framework will be discussed in Chapter~\ref{ch:maqb}.

This notion of regret~\eqref{eq:regret_maqb_intro} has a natural interpretation in certain physical settings. Consider, for example, a sparse source of single photons with a fixed but unknown polarization (i.e., the reference frame is unknown). In order to learn the unknown reference frame, we can perform a phase shift and apply a polarization filter, adjusting the phase (the action) until the photons consistently pass through the filter (the reward). The regret would then be proportional to the energy absorbed by the filter during the learning process.

The regret can also be interpreted in terms of the disturbance caused by applying measurements. In the previous photon example, we can consider that the photons are prepared in some unknown state $\psi$, and at each time step $t$, the polarization filter corresponds to a known direction $\psi_t$. According to Born's rule, after the measurement, with probability $p_t = |\langle \psi | \psi_t \rangle|^2$, the state will collapse to $\psi$, and with probability $1 - p_t$, it will collapse to its orthogonal complement $\psi^c$. We can then define the cumulative disturbance over a sequence of $T$ steps as the cumulative infidelity between the unknown state and the post-measurement state $\rho_t = p_t \psi + (1-p_t) \psi^c$: % 
\begin{align} 
\text{Disturbance}(T) := \sum_{t=1}^T \EX\left[ 1 - \langle \psi | \rho_t | \psi \rangle \right]. 
\end{align} 
We note that $\text{Disturbance}(T) \leq 2 R_T$ since $\max_{O\in \mathcal{A}} \Tr(\rho O) = 1$ when the environment $\rho$ is a pure state and the observables correspond to measurement directions (i.e., rank-1 projectors). Thus, minimizing the regret also implies minimizing the disturbance.

As mentioned above, classical bandit algorithms have found important applications in online recommendation systems, and it is a natural question to ask whether our quantum bandit models could similarly be used for recommendation systems for quantum devices or if the regret can have other physical interpretations. This leads us to the following question:

\begin{itemize} 
\item \textit{\textbf{Question 2.} Can we find natural applications of the multi-armed quantum bandit model?} 
\end{itemize}

The applications of the multi-armed quantum bandit model will be discussed in Chapter~\ref{ch:applications}, which will include a more detailed treatment of the notion of disturbance. We will also discuss applications to quantum state-work extraction protocols, where the goal is to transfer the maximum possible amount of energy from an unknown quantum system into a battery system. In this setting, the regret will quantify the dissipation (i.e., the loss of energy) resulting from not applying the optimal work extraction protocol. Lastly, we will explore the application of our model to recommender systems for quantum data. An example of this application could model a set of noisy quantum computers, where at each round, the learner is given a quantum algorithm to run and must recommend the best quantum computer for performing the task.

\section{Related works}

As we mentioned earlier, beyond bandits, there are other theoretical reinforcement learning frameworks that model more complex environments where actions have long-term consequences, such as (hidden) Markov decision processes (MDPs). MDPs have been generalized to the quantum setting (see~\cite{barry11,ying2021}), where the underlying states are quantum states and the evolution and rewards follow quantum processes. Our model can be seen as a restricted version of the framework in~\cite{ying2021}, since in our setting, actions do not induce state transformations. The model in~\cite{ying2021} also incorporates measurements that provide additional partial information about the environment. It is worth emphasizing that these models do not focus on regret minimization, but rather on a different problem known as ``state reachability", which involves deciding under what conditions certain states in the state space can be reached. By contrast, our more specific model enables us to obtain finer results, including tight lower and upper bounds on the cumulative regret.

Another approach to a quantum version of the multi-armed bandit problem involves quantum algorithms. This approach considers the classical problem encoded into a quantum circuit or quantum computer and seeks speedups by using quantum algorithmic techniques. Some works in this direction include quantum algorithms for the best-arm identification problem~\cite{quantumbandits,quantumbandits2} (finding the best arm), regret minimization in stochastic environments~\cite{wan2022quantum,wu2023quantumheavytailedbandits} (both uncorrelated and linearly correlated actions), and adversarial environments~\cite{cho2022quantum}.  A quantum version of the Hedging algorithm, which is related to the adversarial bandit model, has also been studied~\cite{hedging}. A quantum neural network approach for a simple best-arm identification problem was also explored in~\cite{hu2019training}. Additionally, a quantum algorithm for a classical recommender system was proposed in~\cite{kerenidis2017quantum}, claiming an exponential speedup over known classical algorithms; however,~\cite{tang2019quantum} later demonstrated that the speedup relies heavily on the assumptions regarding quantum state preparation, and argued that under analogous classical assumptions, a classical algorithm can achieve a similar speedup.

\section{Technical challenges and contributions}

From the technical side, in Chapter~\ref{ch:lowerbounds} and Chapter~\ref{ch:upperbounds}, we will characterize the regret~\eqref{eq:regret_maqb_intro} for different settings. More specifically, we will study the minimax version of the regret, which is a figure of merit that quantifies the difficulty of a multi-armed quantum bandit problem. The minimax regret is defined as
\begin{align}
    R^*_T := \min_{\pi} \max_{\rho\in\Gamma} R_T ,
\end{align}
where the minimization is taken over the set of policies $\pi$, and the maximization is over a set of possible environments $\Gamma$.

Our goal is to establish tight upper and lower bounds on the minimax regret for different sets of actions and environments, as functions of the time horizon $T$, the number of observables (arms) in the action set, and the dimension $d$ of the underlying Hilbert space. For the lower bounds, we refer to Chapter~\ref{ch:lowerbounds}, where we use information-theoretic techniques to argue about the inherent difficulty of the bandit problem for any policy. For the upper bounds, we refer to Chapter~\ref{ch:upperbounds}, where we design explicit policies for specific problem instances and bound the regret achieved by these policies. Since the minimax regret involves a minimization over all possible policies, establishing an upper bound for a particular policy is sufficient. In general, we will prove the following result:
\begin{itemize}
    \item \textit{\textbf{Result 1.} For general $d$-dimensional environments and $k$ observables (arms), we have }
        \begin{align}
            R^*_T = \widetilde{\Theta} \left( \sqrt{\min\lbrace k , d^2 \rbrace T} \right).
        \end{align}
\end{itemize}
The square root behavior is very typical in bandit algorithms. Indeed, it is worth stressing here that our model falls within the class of (classical) multi-armed stochastic linear bandits. This is not surprising: most problems in quantum tomography, metrology, or learning can be converted to a classical problem with additional structure, since, in the end, we are interested in learning a matrix of complex numbers constituting the density operator. The multi-armed linear stochastic bandit problem was first considered in~\cite{abe1999associative}. In this model, the arms can be viewed as vectors, and the expected reward of each arm is given by the inner product of the vector associated with the arm and an unknown vector that is the same for all arms. As we will see in Section~\ref{sec:connection_linearbandits}, the multi-armed quantum bandit problem is a specific class of linear bandits.

The question of interest then is whether the structure imposed by the quantum problem is helpful to find more efficient algorithms. Indeed, the fact that our model is a subset of a more general class of problems for which classical algorithms are known does not imply that these algorithms yield the smallest possible regret for the subset of problems we consider. In fact, our results show that classical algorithms are optimal in most cases. However, in at least one setting—learning pure states using the set of all rank-1 projectors (see Theorem~\ref{th:regret_PSMAQB})—we prove that bespoke algorithms achieve a strictly lower regret than any previously studied classical algorithm. Moreover, while inspired by the lower bounds on the classical multi-armed stochastic bandit problem, our bounds require novel constructions that are specific to the quantum state space. On the one hand, due to correlations in the reward distribution associated to each arm (observable), the standard multi-armed stochastic bandit lower bounds proofs do not apply to our case. On the other hand, lower bounds for linear stochastic bandits have been studied only for specific action sets like the hypercube or unit sphere (see \cite{hypercube} and \cite{lin2}) and there is no combination of action sets and environments in the classical proofs that can be mapped to our particular class of problems. For that reason, known classical regret lower bounds do not apply to our case.

Returning to the previously discussed setting of pure states, we note that the regret can be expressed as 
\begin{align}\label{eq:intro_purestateregret} 
R_T = \sum_{t=1}^T \left( 1 - |\langle \psi | \psi_t \rangle|^2 \right), \end{align} 
where $\psi$ is the unknown pure state and $\psi_t$ is the selected action or measurement direction at time $t$. Minimizing the regret in this setting corresponds to finding measurement directions $\psi_t$ that are close to $\psi$ in terms of infidelity distance. Thus, the problem can be viewed as a task in online quantum state tomography, with the regret serving as a figure of merit. Leveraging this observation, we can interpret the actions taken by a strategy up to time $t \in [T]$ as performing quantum state tomography using $t$ copies of the state. From known results, the optimal achievable fidelity scales as $1 - 1/t$~\cite{hayashi2005reexamination,haah2016sample,ODonnell_2016,Yuen_2023,chen_adaptive}. Adapting this tomography result to our bandit setting, we can establish the following lower bound.
\begin{itemize}
    \item \textit{\textbf{Result 2.} For pure states environments and rank-1 projectors observables, we have}
    \begin{align}
        R^*_T = \Omega ( d\log T ).
    \end{align}
\end{itemize}
This is the only setting that we can not prove a $\Omega (\sqrt{T})$ lower bound. Surprisingly, if the unknown environment is a mixed state, we can prove the square root lower bound. This motivates the following question whether, in fact, the square root barrier can be surpassed.

\begin{itemize}
\item \textit{ \textbf{Question 3.} Can we identify a multi-armed quantum bandit setting where the square root barrier of regret is broken?}
\end{itemize}

Achieving a better scaling for an instance of the multi-armed quantum bandit problem would provide a physically motivated linear bandit setting where the square root barrier can be surpassed. Finding such a setting would provide the first example of a nontrivial linear stochastic bandit problem with continuous action sets that breaks the square root barrier in the regret. Achieving this requires new algorithms and techniques that take advantage of the additional structure provided by the multi-armed quantum bandit problem as a linear stochastic bandit.

Since our problem is related to quantum state tomography, we also study the following question at the intersection of linear bandits and quantum state tomography.

\begin{itemize}
\item \textit{ \textbf{Question 4.} Can we perform single-copy sample-optimal state tomography in infidelity while achieving sublinear regret in the number of copies $T$? How much adaptiveness is needed for this task? }
\end{itemize}

Adaptiveness plays a crucial role in algorithms that aim to minimize the regret of the multi-armed quantum bandit setting while performing sample-optimal state tomography. A natural approach is to modify existing sample-optimal algorithms from the incoherent setting (single copy measurements), such as those in~\cite{haah2016sample,kueng2017low,guctua2020fast}, for the bandit problem. However, these algorithms rely on fixed bases or randomized measurements, which inevitably lead to a linear regret scaling, i.e, $\text{Regret}(T) = O(T)$ (ignoring dimensional dependencies). A natural next step is to consider a simple strategy with one round of adaptiveness, where we use a fraction $\alpha \in [0,1] $ of the copies for state tomography to produce a good estimate \(\hat{\psi}\) of the unknown $\psi$, and use the remaining copies to measure along the estimated direction. Using sample-optimal state tomography algorithms, this leads to a regret scaling
\begin{align}
   R_T = O\left(\alpha T + (T - \alpha T)\frac{1}{\alpha T} \right),
\end{align}
which, optimized over $ \alpha$, gives $\text{Regret}(T) = O(\sqrt{T})$, but results in a sub-optimal error scaling $1- |\langle \psi | \psi_T \rangle|^2= O(1/\sqrt{T})$.

In Chapter~\ref{ch:upperbounds}, we provide affirmative answers at the same time to Questions 3 and 4 through the following result.

\begin{itemize}
\item \textit{\textbf{Result 3.}    For any unknown pure qubit state $\psi$ environment, our algorithm achieves
    \begin{align}
    R_T = O \big( \log^2 (T) \big).
    \end{align}
    Moreover, at each time step $t\in [T]$, our algorithm outputs an online estimate $\hat{\psi}_t$  with infidelity scaling as
    \begin{align}
    \mathbb{E} \left[ 1 -  |\langle \psi | \hat{\psi}_t \rangle|^2\right] = \widetilde{O}\left( \frac{1}{t} \right).
    \end{align}
    Both statements also holds with high probability. }
\end{itemize}

To prove this result, we will provide an almost fully adaptive quantum state tomography algorithm that uses $O(T/\log(T))$ rounds of adaptiveness. The exact algorithm and Theorem can be found in Section~\ref{sec:linucb_vvn}. We say that our algorithm is ``online'' because it is able to output at each time step $t\in [T]$ an estimator with the almost optimal infidelity scaling $O( \frac{1}{t})$ up to logarithmic factors. Now we sketch the main idea of how our algorithm updates the actions (measurements).

\begin{enumerate}
    \item \textbf{Estimation.} At each time step $t\in[T]$, we use the past information of measurements on the direction of $\ket{\psi_{A_1}},...,\ket{\psi_{A_{t-1}}}$ and associated outcomes $X_1,...,X_{t-1}\in\lbrace 0 , 1 \rbrace^{\otimes t-1}$ to build a high probability confidence region $\mathcal{C}_t$ for the unknown environment $|\psi \rangle $.
    \item \textbf{Exploration--exploitation.} A batch of measurements is performed, given by the directions of maximum uncertainty of $\mathcal{C}_t$ such that they give enough information to construct $\mathcal{C}_{t+1}$ (exploration) and also minimise the regret~\eqref{eq:intro_purestateregret} (exploitation).
\end{enumerate}

For the estimation part, we work with the Bloch sphere representation of the unknown state $\Pi = |\psi \rangle \! \langle \psi | = \frac{I+\theta\cdot\sigma}{2}$ where $\theta\in\mathbb{R}^3$ is normalzied $\|\theta\|_2 = 1$ and for $\sigma$ we can take the standard Pauli Basis, i.e, $\sigma = (\sigma_x,\sigma_y,\sigma_z )$. For each measurement direction $\Pi_{A_t} = |\psi_{A_t}\rangle \! \langle \psi_{A_t} |$, our algorithm performs $k$ independent measurements using the same direction, and it builds the following $k$ online weighted least squares estimators of $\theta$,
\begin{align}\label{eq:abstract_weighted_lse}
    \widetilde{\theta}_{t,i} = V_t^{-1} \sum_{s=1}^t \frac{1}{\hat{\sigma}^2_s (A_s)} X_{s,i} A_s \quad \text{for }i\in[k],
\end{align}
where $X_{s,i}\in\lbrace 0,1\rbrace$ is the outcome of the measurement (up to some renormalization) using the projector $\Pi_{A_s}$ with Bloch vector $A_s\in\mathbb{R}^3$, $V_t =  \mathbb{I} + \sum_{s=1}^t \frac{1}{\hat{\sigma}^2_s (A_s)} A_s A_s^{\mathsf{T}}$ is the \textit{design matrix} and $\hat{\sigma}^2_s (A_s)$ is a variance estimator of the real variance associated to the outcome $X_s$. The key point where we take advantage of the structure of the quantum problem is that the variance of the outcome $X_a$ associated with the action $\Pi_A$ can be bounded as $\VX [X_a] \leq 1 - |\langle \psi | \psi_A \rangle|^2$.

The idea is that, through a careful choice of actions, we can make the terms \( 1/\hat{\sigma}^2_s (A_s) \) arbitrarily large and thereby ``boost" the confidence on the directions \( A_s \) in the estimators~\eqref{eq:abstract_weighted_lse} that are close to \(\theta\). However, this comes at a cost: in order to obtain good concentration bounds for our estimator, we must address the challenge of dealing with unbounded random variables with only finite variance. We tackle this issue by using recent advances on median-of-means (MoM) techniques for online least squares estimators, introduced in~\cite{bandits_heavytail,heavy_tail_linear_noptimal,heavy_tail_linear_optimal}. The construction is inspired by the classical \emph{median-of-means} method for real-valued random variables with unbounded support and bounded variance~\cite[Chapter 3]{lerasle2019lecture}, but requires nontrivial adaptations to the online linear least squares setting. Analogously to the real-valued case, we use \(k\) independent weighted estimators of the form~\eqref{eq:abstract_weighted_lse} to construct the MoM estimator \(\widetilde{\theta}^{\text{\tiny wMoM}}_{t}\), enabling the construction of confidence regions with concentration bounds scaling as \(1-\exp(-k)\). We present the exact construction in Section~\ref{sec:MoMLSE}.

\begin{figure}
    \centering
    \begin{overpic}[percent,width=0.45\textwidth]{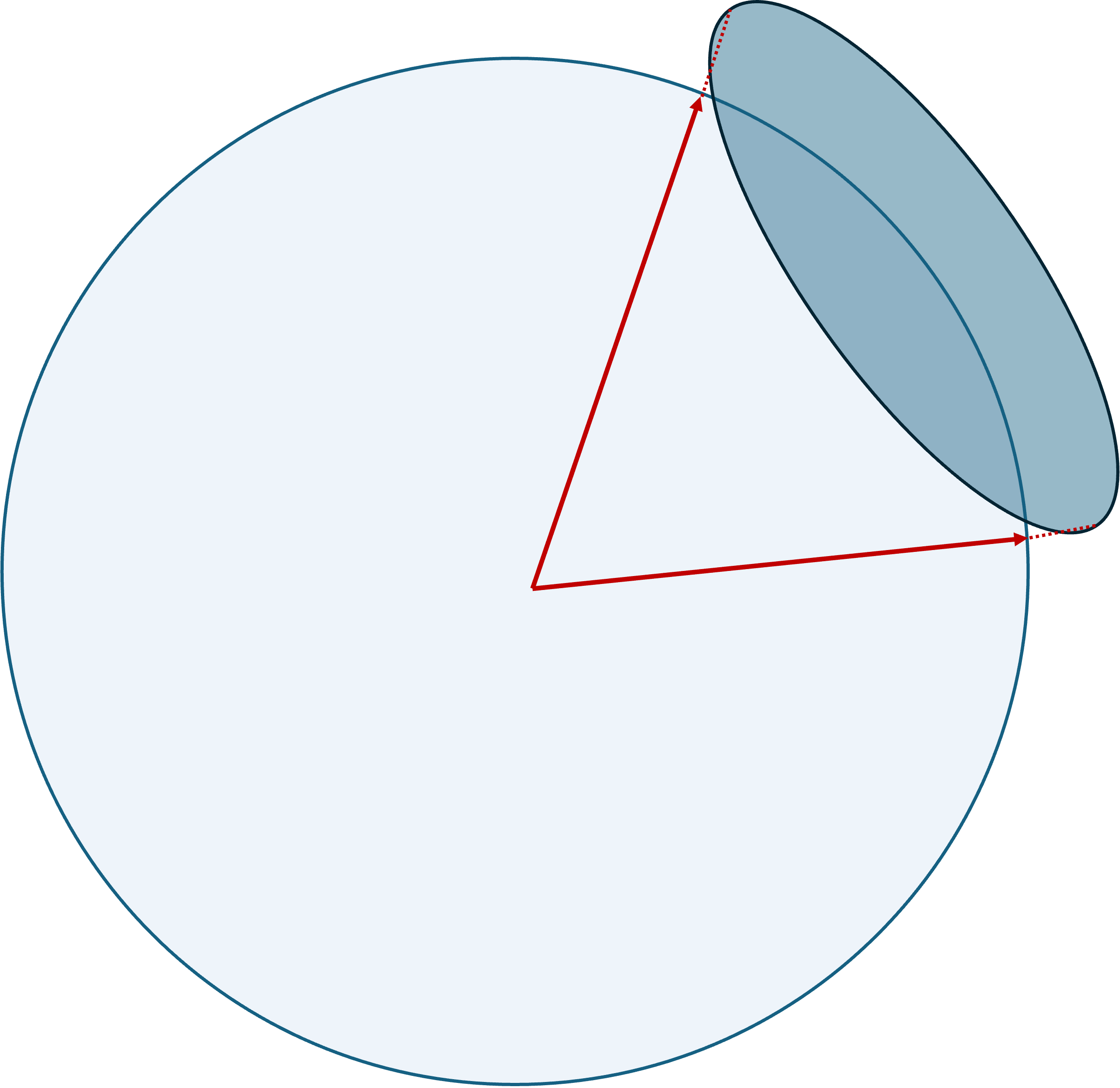}
    \put(50,70){\rotatebox{80}{$|\psi^+_{A_t}\rangle$}}
    \put(70,42){\rotatebox{0}{$|\psi^-_{A_t}\rangle$}}
    \put(75,85){\rotatebox{0}{$\mathcal{C}_t$}}
    \put(82,70){{$|\widehat{\psi}_t\rangle$}}
    \put(90,60){{$| \psi \rangle$}}
    
    \end{overpic}
    \caption{The algorithm at each time step outputs an estimator $|\widehat{\psi}_t\rangle$ and builds a high-probability confidence region $\mathcal{C}_t$ (shaded region) around the unknown state $ | \psi \rangle $ on the Bloch sphere representation. Then uses the optimistic principle to output measurement directions $|\psi^\pm_{A_t} \rangle$ that are close to the unknown state $\ket{\psi} $ projecting into the Bloch sphere, the extreme points of the largest principal axis of $\mathcal{C}_t$. This particular choice allows optimal learning of $\ket{\psi}$ (exploration) and simultaneously minimizes the regret (exploitation).}
    \label{fig:psmaqb_exploration_exploitation}
\end{figure}

For the exploration-exploitation part, we need to develop new ideas in order to update the actions (see Figure~\ref{fig:psmaqb_exploration_exploitation}). We give the precise action choice through Chapter~\ref{ch:upperbounds}, and here we sketch the main points. We take inspiration from the \emph{optimistic principle} for bandit algorithms that tells us to choose the most rewarding actions with the available information. In order to use this idea, we use the confidence region that we built in the estimation part, and we select measurements that align with the (unknown) direction of $\ket{\psi}$. See Figure~\ref{fig:psmaqb_exploration_exploitation}.
Our algorithm also achieves the relation $1 - |\langle \psi | \psi_{A_t} \rangle |^2 = O\left( 1/\lambda_{\min} (V_t )\right)$, where the minimum eigenvalue $\lambda_{\min}(V_t )$ quantifies the direction of maximum uncertainty (exploration) of our estimator. The maximum eigenvalue $\lambda_{\max}(V_t)$ quantifies the amount of exploitation (direction that has been selected the most). We can relate these two concepts through Theorem~\ref{th:main_eigenvalues} that we formally state and prove in Section~\ref{sec:designmatrix}, which states that for our particular measurement choice we have $\lambda_{\min}(V_t) = \Omega (\sqrt{\lambda_{\max}(V_t)} )$. Using this relation and a careful analysis, we can show that $\lambda_{\max}(V_t) = \Omega (t^2)$ which gives $\lambda_{\min}(V_t) = \Omega (t)$ and the scaling $1 -  |\langle \psi | \psi_{a_t} \rangle |^2 = O(1/t)$. We emphasize that the key point that allows us to achieve the rate $\lambda_{\min}(V_t) = \Omega (t)$ is the fact that the variance estimators $\hat{\sigma}^2_s$ can get as close as possible to zero since the variance of the rewards goes to zero if we select measurements close to $|\psi \rangle$.

\section{Outline}

The rest of the thesis is organized as follows.

\begin{itemize}
    \item In Chapter~\ref{ch:maqb}, we introduce the main framework of this thesis: the multi-armed quantum bandit problem and the figure of merit we will study, the minimax regret. We discuss both discrete and continuous versions of the problem and explain how this framework fits into the class of linear stochastic bandits.

    \item In Chapter~\ref{ch:lowerbounds}, we introduce several information-theoretic tools, including a modified version of the Bretagnolle–Huber inequality. We use these tools to establish regret lower bounds for different combinations of environments and action sets.

    \item In Chapter~\ref{ch:upperbounds}, we begin by reviewing the linear version of the UCB algorithm and its discrete variants. We then present a new alternative to UCB based on tight control of the eigenvalues of the design matrix. We introduce and study linear bandits with vanishing subgaussian parameters and variance, and show how our new algorithm achieves polylogarithmic regret. This algorithm is also applicable to the multi-armed quantum bandit setting with pure state environments and a continuous action set.

    \item In Chapter~\ref{ch:applications}, we explore applications of the multi-armed quantum bandit model, including learning with minimal quantum state disturbance, quantum state-agnostic work extraction protocols, and recommender systems for quantum data.

    \item In Chapter~\ref{ch:open_problems}, we summarize the results of the previous chapters and present several open problems that arise from our results, highlighting potential directions for future research.
\end{itemize}

\chapter{The multi-armed quantum bandit framework}\label{ch:maqb}

This chapter is based on the author's work~\cite{Lumbreras2022multiarmedquantum} and provides an introduction to the multi-armed quantum bandit framework. The notation is partly inspired by~\cite[Chapter 4]{lattimore_banditalgorithm_book}, which offers a rigorous treatment of stochastic bandits. We also present the linear stochastic bandit model and explore its connection to the multi-armed quantum bandit framework, emphasizing both similarities and differences. The linear stochastic bandit model plays a significant role in this thesis, particularly in Chapter~\ref{ch:upperbounds}, where we develop and analyze algorithms for linear bandits with applications to the quantum setting.

\section{Notation and conventions}

We present some standard notations and conventions that we will use throughout the thesis.

\textbf{Real matrices and vectors.} For real vectors $x,y\in\mathbb{R}^d$, we denote their inner product as $\langle x, y \rangle = x_1y_1+...+x_dy_d$. The trace of a square matrix $A\in\mathbb{R}^{d\times d}$ is denoted as $\Tr (A) $ and the transpose of a matrix $B\in\mathbb{R}^{m\times n}$ is denoted $B^{\mathsf{T}}$. The set of positive
semi-definite matrices as $\mathsf{P}^d_+ = \lbrace X\in\mathbb{R}^{d\times d}: X\geq 0 \rbrace$. Given a real vector $x\in\mathbb{R}^d$, we denote the 2-norm as $\| x \|_2 = \sqrt{\langle x , x \rangle}$ and for a real semi-positive definite matrix $A\in\mathbb{R}^{d\times d}$, $A \geq 0$, the weighted norm with $A$ as $\|x \|_A = \sqrt{\langle x , A x \rangle }$. For a real symmetric matrix $A\in\mathbb{R}^{d\times d}$, we denote by $\lambda_{\max} (A)$, $\lambda_{\min} (A)$ its maximum and minimum eigenvalues, respectively. We use the ordering $\lambda_{\min} (A) \leq \lambda_2 (A) \leq .... \leq \lambda_{d-1} (A)\leq \lambda_{\max}(A)$ for the $i$-th $\lambda_i (A)$ eigenvalue in non-decreasing order. The $d$-dimensional identity matrix is denoted as $\mathbb{I}_{d\times d}$ and we will also denote it as $\mathbb{I}$ if $d$ when is clear from the context.

 \textbf{Sets.} Let $[t]$ denote the set $[t] = \lbrace 1,2,...,t \rbrace$ for $t\in\mathbb{N}$. The $d$-dimensional unit sphere is denoted as $\mathbb{S}^{d-1} = \lbrace x\in\mathbb{R}^d : \| x \|_2 = 1 \rbrace$ and the ball of radius $r>0$ and center $c\in\mathbb{R}^d$ is denoted as $\mathbb{B}^d_r (c ) = \lbrace x\in\mathbb{R}^d: \| x - c \|_2 \leq r \rbrace$.

\textbf{Probability}. For a random variable $X$ (discrete or continuous, we denote $\EX [X]$ and $\mathrm{Var} [X] $ its expectation value and variance, respectively. Given an event $A$, we denote by $\mathbbm{1}(A)$ the indicator function of $A$, that is,  $\mathbbm{1}(A) =1$ if $A$ is true and $0$ otherwise.

 For two probability measures $P, Q$ defined in the same probability space $(\Omega, \Sigma)$, we may introduce a (probability) measure $\mu$ that dominates $P$ and $Q$ and define the Radon--Nikodym derivatives $p = \frac{dP}{d\mu}$ and $q=\frac{dQ}{d\mu}$. The choice of $\mu$ is arbitrary for the following definitions.
We define the \emph{Kullback--Leibler} divergence as
\begin{align}
	D( P \| Q ) := \int p(\log p - \log q)\,d\mu ,
\end{align} 
where we use the convention that $0\log 0 = 0$ and $D(P\|Q) = \infty$ whenever $Q$ does not dominate $P$. We also define the (squared) \emph{Bhattacharyya coefficient} as 
\begin{align}
	F(P, Q) := \left( \int \sqrt{p q} \, d\mu \right)^2 \,.
\end{align}

\textbf{Quantum information.} Let $\mathcal{S}_d = \lbrace \rho\in\mathbb{C}^{d\times d}: \Tr(\rho) = 1 , \rho \geq 0\rbrace$ denote the set of \textit{quantum states} in a $d$-dimensional Hilbert space $\hil = \mathbb{C}^d$, and let $\mathcal{S}^*_d = \lbrace \rho\in\mathcal{S}_d : \rho^2 = \rho \rbrace$ be the subset of \textit{pure states}, which are rank-1 projectors. To compare quantum states, several distance and divergence measures are commonly used. The \textit{quantum fidelity} between two states $\rho, \sigma \in \mathcal{S}_d$ is defined as
  \begin{align}
  F(\rho,\sigma) := \left( \Tr\left( \sqrt{ \sqrt{\sigma} \rho \sqrt{\sigma} } \right) \right)^2,
  \end{align}
  and the \textit{infidelity} is given by $1 - F(\rho, \sigma)$.  The \textit{trace distance} between $\rho$ and $\sigma$ is
  \begin{align}
  \|\rho - \sigma\|_1 := \Tr\left( |\rho - \sigma| \right),
  \end{align}
  where $|A| := \sqrt{A^\dagger A}$. It is a metric on the space of quantum states and has an operational meaning in terms of the optimal probability of distinguishing the states. The \textit{quantum relative entropy} is a divergence defined as
  \begin{align}
  D(\rho \| \sigma) := 
  \begin{cases}
    \Tr(\rho \log \rho) - \Tr(\rho \log \sigma), & \text{if } \text{supp}(\rho) \subseteq \text{supp}(\sigma), \\
    \infty, & \text{otherwise}.
  \end{cases}
  \end{align}
  While not a metric, it plays a central role in quantum information theory due to its connections with hypothesis testing and information measures.

We use \textit{Dirac’s bra-ket notation} throughout: vectors in $\hil$ are denoted by kets $|\psi\rangle$, and their duals by bras $\langle\psi|$. The inner product between two vectors $|\psi\rangle, |\phi\rangle \in \hil$ is written as $\langle\psi|\phi\rangle$, and the outer product $|\psi\rangle\langle\psi|$ corresponds to the rank-1 projector onto $|\psi\rangle$. In particular, a pure state $\Pi \in \mathcal{S}^*_d$ can be written as $\Pi = |\psi\rangle \langle\psi|$ for some normalized vector $|\psi\rangle \in \hil$.

 For a Hilbert space $\hil$, the set of linear operators on it will be denoted by $\End(\hil)$. The joint state of a system consisting of $n$ copies of a pure state $\Pi\in \mathcal{S}_d^*$ is given by the $n$-th tensor power $\Pi^{\otimes n}\in \End(\hil^{\otimes n})$.  Then, the span of all $n$-copy states of the form $|\psi \>^{\otimes n}$ is called the symmetric subspace of $\hil^{\otimes n}$, denoted by $\hil^{\otimes n}_+$. Its dimension is $D_n=\binom{n+d-1}{d}$. The symmetrization operator $\Pi^+_n\in\End(\hil^{\otimes n})$ is the projector onto $\hil^{\otimes n}_+$.

Measurements in quantum theory are modeled by positive operator-valued measures (POVMs). A POVM is a collection $\lbrace M_a \rbrace_{a \in \mathcal{A}}$ of positive semidefinite operators acting on $\hil$ such that $\sum_{a \in \mathcal{A}} M_a = \mathbb{I}$. The probability of obtaining outcome $a$ when measuring a quantum state $\rho$ is given by the \emph{Born rule}: $\Pr[a] = \Tr(\rho M_a)$. Moreover, \textit{observables} are Hermitian operators acting on $\mathbb{C}^d$, collected in the set $\mathcal{O}_d = \lbrace O\in\mathbb{C}^{d\times d}: O^\dagger = O\rbrace.$ An observable $O$ is called $\textit{traceless}$ if $\Tr (O) =0$ and called \textit{sub-normalized} if $\| O\|\leq 1$, where $\| \cdot \|$ denotes the operator norm.  Measuring a quantum state $\rho$ with an observable $O$ means performing a projective measurement in the orthonormal basis of eigenvectors of $O$: if $O = \sum_{i} \lambda_i \Pi_i$ is the spectral decomposition of $O$, where $\lambda_i$ are the eigenvalues and $\Pi_i$ are the orthogonal projectors onto the corresponding eigenspaces, then the outcome $\lambda_i$ is obtained with probability $\Tr(\rho \Pi_i)$.

In order to parametrize quantum states and observables, we will use the Pauli matrices defined as follows 
\begin{align}
\sigma_x = \begin{pmatrix} 0 & 1 \\ 1 & 0 \end{pmatrix}, \quad
\sigma_y = \begin{pmatrix} 0 & -i \\ i & 0 \end{pmatrix}, \quad
\sigma_z = \begin{pmatrix} 1 & 0 \\ 0 & -1 \end{pmatrix}.
\end{align}

% We will use the subscript $\theta$ in the quantum state $\rho_\theta$ in order to denote the vector of the parametrization~\eqref{eq:parametrization}. In particular the normalization is taken such that $\|\theta\|_2^2\leq 1$ with equality if  $\rho_\theta$ is pure. Note that the parametrization enforces $\rho^\dagger_\theta = \rho_\theta$ and $\Tr(\rho_\theta) = 1$. Also there are some extra conditions on the vector $\theta$ regarding the positivity of the density matrix $\rho_\theta$ but we will not use them. 

\section{Multi-armed quantum bandits}

\subsection{Discrete bandits}

In the classical setting, a multi-armed bandit is defined by a set of actions (arms) and an unknown environment. At each step, a learner selects an action, and a corresponding reward is drawn from a probability distribution that depends on the chosen action and the underlying environment. The quantum version we consider is a specialized subset of the multi-armed bandit framework, where both the set of actions and the environment exhibit quantum-specific structure. Specifically, the actions correspond to a set of quantum observables, and the environment is characterized by a quantum state.  To formalize this, we define the notion of a multi-armed quantum bandit as follows:
\begin{definition}[Multi-armed quantum bandit]
	Let $d \in \mathbb{N}$. 
	A $d$-dimensional \emph{discrete multi-armed quantum bandit} (MAQB) is given by a finite set $\mathcal{A} \subseteq \mathcal{O}_d$ of observables that we call the \emph{action set}. The bandit is in an \emph{environment}, a quantum state $\rho \in \Gamma$, that is unknown but taken from a set of \emph{potential environments} $\Gamma \subseteq \mathcal{S}_d$. The bandit problem is fully characterized by the tuple $(\mathcal{A}, \Gamma)$.
\end{definition}
The interaction between the learner and the multi-armed quantum bandit is governed by the principles of quantum mechanics, particularly in how the reward is sampled. The stochastic reward follows the probabilistic nature described by Born's rule. Informally, the reward corresponds to the outcome of a measurement performed on the quantum state, using the observable associated with the chosen action from the action set.

Let us fix a  multi-armed quantum bandit ($\mathcal{A},\Gamma $) with environment $\rho\in \Gamma$ and action set $\mathcal{A} = \lbrace O_1, .... ,O_k \rbrace$ of cardinality $k = |\mathcal{A |}$. For each observable $O_a$ we introduce its spectral decomposition,
\begin{align}
    O_a = \sum_{i=1}^{d_a} \lambda_{a,i}\Pi_{a,i} ,
\end{align}
where $\lambda_{a,i} \in \mathbb{R}$ denote the $d_a \leq d$ distinct eigenvalues of $O_a$ and $\Pi_{a,i}$ are the orthogonal projectors on the respective eigenspaces. The learning process between the learner and the multi-armed quantum bandit occurs sequentially over $T$ rounds, where $T\in\mathbb{N}$ represents a finite time horizon. At each time step $t\in [T ]$ the learner interacts with the bandit as follows:
\begin{enumerate}
    \item The learner receives a copy of the unknown environment $\rho \in \Gamma$.
    \item The learner selects an action $A_t\in [k ]$ and performs a measurement on $\rho$ with the observable $O_{A_t}\in\mathcal{A}$.
    \item The learner samples a reward $X_t\in\mathbb{R}$ that is the outcome of the measurement and is distributed according to Born's rule as
    \begin{align}\label{eq:reward_distribution_discrete}
   P_{\rho} (x | a ) :=   \mathrm{Pr}\left( X_t = x | A_t = a  \right) = \begin{cases}
        \Tr \left(\rho \Pi_{a,i} \right) \quad \text{if } x = \lambda_{a,i} \\
        0 \quad \text{else}.
    \end{cases}
\end{align}
\end{enumerate}
The choice of action by the learner is governed by a policy, which is a conditional probability distribution over the set of actions based on the history of past actions and rewards observed by the learner. To formalize this, we define a policy as follows:
\begin{definition}
	\label{def:policy}
	A \emph{policy} (or algorithm) for a multi-armed quantum bandit is a set of (conditional) probability distributions $\pi = \{ \pi_t \}_{t \in \mathbb{N}}$ on the action index set $[k]$ of the form
	\begin{align}
		\pi_t ( \cdot \vert a_1 , x_1 , ... , a_{t-1}, x_{t-1} ) ,
	\end{align}
	defined for all valid combinations of actions and rewards $(a_1, x_1), \ldots, (a_{t-1}, x_{t-1})$ up to time $t-1$.
\end{definition}
For the conditional probability of reward $X_t$ given $A_1 , X_1 , \cdots , A_{t-1}, X_{t-1} , A_t$ we will use simply the notation $P_{\rho}(\cdot | A_t )$ as defined in~\eqref{eq:reward_distribution_discrete}. Then, if we run the policy $\pi$ on the state $\rho$ over $ T \in \mathbb{N}$ rounds, we can define a joint probability distribution over the set of actions and rewards as
\begin{align}\label{eq:probdens_discrete}
	P_{\rho ,\pi }(a_1,x_1,...,a_T,x_T) := \prod_{t=1}^n \pi_t ( a_t \vert a_1 , x_1 , ... , a_{t-1}, x_{t-1} ) P_{\rho}(x_t|a_t).
\end{align}
This distribution captures the stochastic nature of the learning process resulting from the interaction between the learner and the environment.

\begin{figure}
\centering
\begin{overpic}[percent,width=0.7\textwidth]{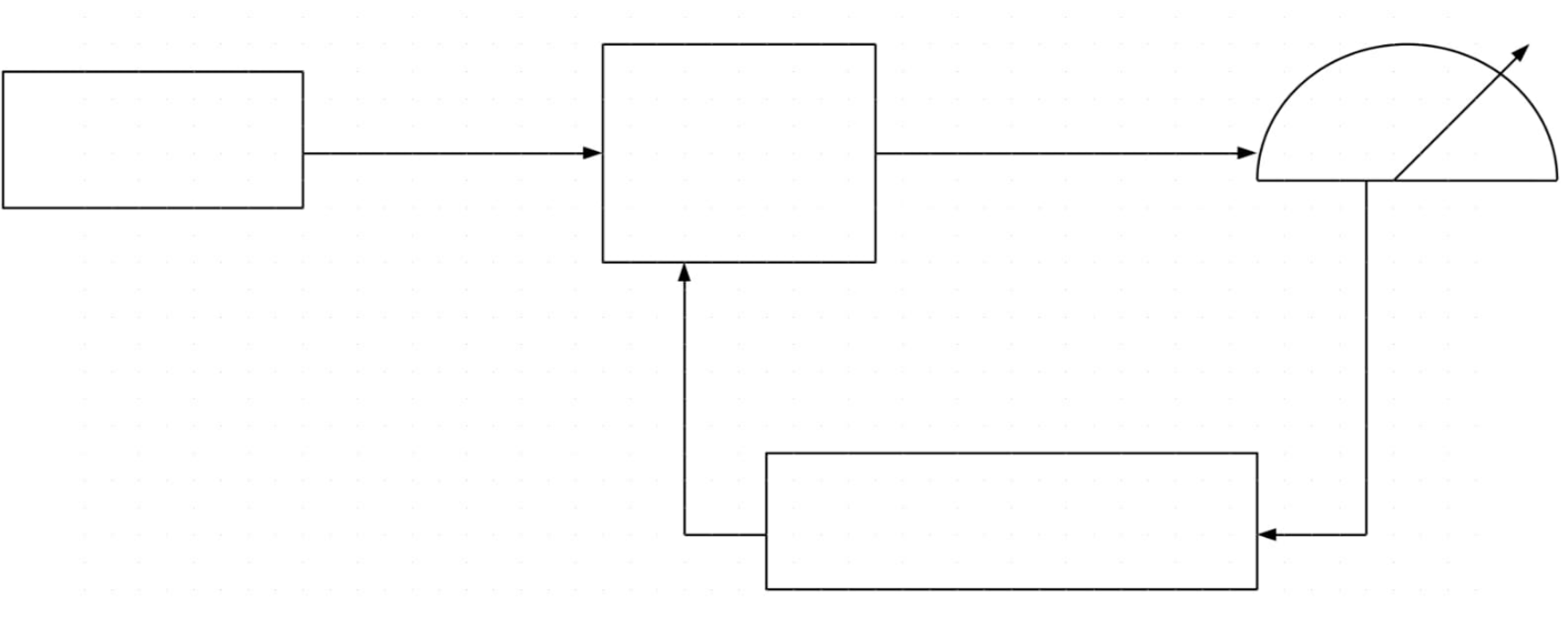}
\put(3,30){\color{black} Oracle}
\put(29,32){\large$\rho$}
\put(44,30){\color{black}\large$O_{a_t}$}
\put(50,6){$\pi_t (\cdot \vert x_{t-1},a_{t-1},\cdots )$}
\put(90,23){$x_t$:reward}
\put(60,0){\color{black} Learner}
\put(58,32){\color{black} Measurement}
\end{overpic}
\caption{Scheme for the multi-armed quantum bandit setting. At each time step $t$, the learner receives an unknown quantum state $\rho$, selects an action $a_t$ according to a policy $\pi_t$ based on past rewards and actions, and chooses the corresponding observable $O_{a_t}$ from the action set $\mathcal{A}$. A measurement of $\rho$ with $O_{a_t}$ yields an outcome $x_t$, which determines the reward.
}
\label{fig:scheme}
\end{figure}

\subsection{General bandits}\label{sec:general_bandits}

A natural extension of the previous model is to allow the action set to include a continuous set of observables. A particularly important subset, which will play a central role in this thesis, consists of all rank-1 projectors. While we have thus far focused on describing the model rather than specifying the learner's objective, the primary goal—similar to the classical multi-armed bandit problem—is to maximize the cumulative reward. As we will explore in later sections, the transition from discrete to continuous action sets introduces significant differences in how a learner can maximize the cummulative reward (or minimize the regret that we will introduce later). 

\begin{definition}\label{def:general_maqb}
	Let $d \in \mathbb{N}$.
	A general $d$-dimensional \emph{multi-armed quantum bandit} is given by a measurable space $(\mathcal{A} ,\Sigma)$, where $\Sigma$ is a $\sigma$-algebra of subsets of $\mathcal{A}$ and $\mathcal{A} \subseteq \mathcal{O}_d$ is a subset of observables. The bandit is in an \emph{environment}, a quantum state $\rho \in \Gamma$, that is unknown but taken from a set of \emph{potential environments} $\Gamma \subseteq \mathcal{S}_d$. The bandit problem is fully characterized by the tuple $(\mathcal{A}, \Gamma)$.
\end{definition}

The learning process remains the same as described in the previous section, with the reward distribution $P_{\rho}( x | a )$~\eqref{eq:reward_distribution_discrete} still determined by Born's rule and remaining discrete. The learner's strategy is defined by a policy, which in this case is represented by a set of conditional probability measures:
\begin{align}
    \pi_t(\cdot|a_1,x_1,\ldots,a_{t-1},x_{t-1}) : \Sigma\to[0,1] .
\end{align}
Here, the policy determines the probability distribution over the continuous action set at each time step based on the learner's past interactions. To simplify our presentation, we in the following also assume that the spectra of the observables satisfy $\Lambda_a \subset \mathcal{X}$ for some finite set $\mathcal{X}$, i.e., we only allow for a discrete set of possible rewards independently of the choice of action.  We also need to assume that the function $a \mapsto P_{\rho}(x | a) $ is $\Sigma$-measurable for all environment states $\rho$ and $x \in \mathcal{X}$. This is, for example, trivially satisfied for the case of rank-1 projectors, where $\mathcal{X} = \{0,1\}$ is trivial. In fact, this action set will be the main one that we will study in the next sections. Moreover, a specific setting of general bandits that will be important during this thesis is the setting of action set rank-1 projectors and the environment is an unknown pure state. Given its importance, we give a definition of this particular problem below.
\begin{definition}\label{def:PSMAQB}
 Let $d\in\mathbb{N}$. A  $d$-dimensional \textit{pure state multi-armed quantum bandit} (PSMAQB) is given
by a measurable space $(\mathcal{A}, \Sigma)$, where $\mathcal{A} = \mathcal{S}_d^*$ is the \textit{action set} and $\Sigma$ is a $\sigma$-algebra of subsets of $\mathcal{A}$. The bandit is in an environment, a quantum state $\Pi_\theta\in \mathcal{S}_d^*$, that is unknown.
\end{definition}
Thus, in a PSMAQB the learner will interact at each time step $t\in[T]$ with a pure state environment $|\psi\rangle \! \langle \psi |\in\mathcal{S}^*_d$ selecting an action $\Pi_{A_t}\in\mathcal{S}^*_d$ and sampling a Bernoulli reward distributed accordingly to
\begin{align}\label{eq:prob_quantum_reward_PSMAQB}
    \mathrm{Pr}_{\psi} \left(X_t | \Pi_{A_t} \right) = 
    \begin{cases}
    \langle \psi | \Pi_{A_t} | \psi \rangle \quad \text{if } x=1,\\
    1-\langle \psi | \Pi_{A_t} | \psi \rangle \quad \text{if }  x=0, \\
    0 \quad \text{else}.
    \end{cases}
\end{align}
Recall that we are measuring the unknown $\psi$ with the 2-outcome POVM $\lbrace \Pi_{A_t}, \mathbb{I} - \Pi_{A_t} \rbrace$.

Then for general bandits the environment state $\rho$ and the policy $\pi$ define the probability measure $P_{\rho,\pi}: (\Sigma \times \mathcal{X})^{\times T} \to[0,1]$ through
\begin{align}\label{eq:probdens_continous}
	&P_{\rho,\pi}(A_1,x_1,\ldots, A_T, x_T)\nonumber\\
	=&  \int_{A_1} \cdots \int_{A_T} P_{\rho}(x_T|a_T) \pi_T(\text{d} a_T|a_1,x_1,\ldots,a_{T-1},x_{T-1}) \cdots 
 	P_{\rho}(x_1|a_1) \pi_1(\mathrm{d} a_1),
\end{align}
which describes the full stochastic process analogous to the discrete case~\eqref{eq:probdens_discrete}.

\subsection{The regret}

So far, we have established the definitions and the interaction between the learner and the multi-armed quantum bandit. However, we have not yet defined the learner's objective. In the bandit literature, two primary objectives are typically considered: best-arm identification~\cite{audibert2010best,garivier2016optimal}, which involves finding the action (arm) with the highest expected reward (subject to certain constraints), and reward maximization~\cite{auer2002finite,lattimore_banditalgorithm_book}. In this work, we focus on the latter over a finite time horizon $T\in\mathbb{N}$. Specifically, this involves designing policies that maximize the cumulative reward $\sum_{t=1}^T X_t$. Instead of directly maximizing this quantity, we adopt a widely used metric in reinforcement learning called \emph{regret}, which quantifies the performance loss compared to the “best strategy". Formally, we define the regret in our setting as follows:

\begin{definition}\label{def:regret_maqb}
Given a multi-armed quantum bandit problem $(\mathcal{A}, \Gamma)$, a state $\rho \in \Gamma$, a policy $\pi$ and $T \in \mathbb{N}$, we define the \emph{cumulative regret} as
\begin{align}\label{regret}
  R_T ( \mathcal{A} ,\rho,\pi ) := \sum_{t=1}^T \max_{O \in \mathcal{A}} \Tr ( \rho O ) - \Tr ( \rho O_{A_t} ) .
\end{align}
The \emph{expected cumulative regret} is $\EX_{\rho , \pi} [  R_T ( \mathcal{A} ,\rho,\pi ) ]$ where the expectation value is taken with respect to the probability density~\eqref{eq:probdens_discrete} for the discrete case and~\eqref{eq:probdens_continous} for the general case.
\end{definition}

\textbf{Note.} If the environment is pure i.e $\rho = \ket{\psi}\! \bra{\psi}$ with $\ket{\psi}\in\mathbb{C}^d$ we will simply denote the regret as $R_T ( \mathcal{A},\psi , \pi )$ and the expectation over that environment as $\EX_{\psi , \pi} [ \cdot ] $. For the PSMAQB the regret for this problem has the following equivalent expressions
\begin{align}\label{eq:regret_psmaqb}
    R_T ( \mathcal{S}_d^*, \psi , \pi ) &= \sum_{t=1}^T 1 - \langle \psi | \Pi_{A_t} | \psi \rangle \\
    &= \sum_{t=1}^T 1 - F(\psi , \Pi_{A_t}) \\
    & = \frac{1}{4} \sum_{t=1}^T \||\psi\rangle\! \langle \psi | - \Pi_{A_t}  \|^2_1 ,
\end{align}
where we have used that for the optimal action 
\begin{align}
\max_{\Pi\in\mathcal{S}_d^*} \Tr (|\psi\rangle\! \langle \psi |\Pi) = \max_{\Pi\in\mathcal{S}_d^*} \langle \psi | \Pi | \psi \rangle = 1 .
\end{align}

We note that the regret can be interpreted as the sum, over all rounds, of the difference between the maximum expected reward achievable over the action set $\mathcal{A}$ and the expected reward associated with the observable chosen at each round. The learner's objective is to minimize the regret, which is equivalent to maximizing the cumulative reward. Although there is no universal consensus in the community, this definition of regret is often referred to as pseudo-regret~\cite{audibert2009exploration}. In the original work on multi-armed quantum bandits~\cite{Lumbreras2022multiarmedquantum}, the expected regret was introduced using the term $\EX_{\rho , \pi} [ X_t ]$. However, inconsistencies arose when analyzing the regret of certain policies, where the Definition~\ref{def:regret_maqb} was implicitly used instead. 

In this thesis, we will adhere to the above definition of regret, as it aligns with the connection to linear stochastic bandits discussed in Section~\ref{sec:connection_linearbandits}. It is worth noting that both regret and pseudo-regret have the same expected value, with the primary distinction being that pseudo-regret exhibits lower variance~\cite{abbasi2013online}. Despite this difference, both metrics serve to evaluate the performance of a policy against the best action with the highest expected value.

For a fixed policy $\pi$, the cumulative expected regret varies with $\rho$. Therefore, to evaluate the performance of a policy across a given set $\Gamma$ of environments, we need to establish state-independent metrics.

\begin{definition}
	Given a multi-armed quantum bandit problem $(\mathcal{A}, \Gamma)$, a policy $\pi$ and $T \in \mathbb{N}$, we define the \emph{worst-case regret} as
	\begin{align}
		R_T (\mathcal{A}, \Gamma , \pi) : = \sup_{\rho \in \Gamma} \EX_{\rho , \pi} [  R_T ( \mathcal{A}, \rho, \pi) ].
	\end{align}
	Moreover, the \textit{minimax regret} is defined as, 
	\begin{align}
		R_T ( \mathcal{A}, \Gamma) := \inf_{\pi} R_T ( \mathcal{A}, \Gamma , \pi) ,
	\end{align}
	where the infimum goes over all possible policies of the form in Definition~\ref{def:policy}.
\end{definition}

The worst-case regret quantifies how well a policy performs for a given set of density matrices $\Gamma$, while the minimax regret provides a measure of the inherent difficulty of the multi-armed quantum bandit problem—the smaller the minimax regret, the easier the problem.

\subsubsection{Expected regret decomposition}

It will be convenient for the lower bound analysis an equivalent definition of the expected regret in terms of the  \emph{sub-optimality gap} as,
\begin{align}\label{suboptimality}
\Delta_a := \max_{O \in \mathcal{A}} \Tr ( \rho O) - \Tr( \rho O_a),
\end{align}
for $a \in [k]$. The sub-optimality gap represents the relative loss between the optimal measurement and the measurement induced by $O_a \in \mathcal{A}$. Note that $\Delta_a \geq 0$ and the equality is only achieved by the optimal observables (it does not need to be unique). 
The equivalent expression for the discret case is given by
\begin{align}\label{regretarms}
\EX_{\rho,\pi}[ R_T  ( \mathcal{A}, \rho , \pi)] &= \sum_{a=1}^k \sum_{t=1}^T \EX_{\rho,\pi}[(\max_{O \in \mathcal{A}} \Tr ( \rho O )- \Tr (\rho O_{a} ) ) \mathbbm{1}\lbrace A_t = a \rbrace ] \\
							  &=	 \sum_{a=1}^k \sum_{t=1}^T \Delta_a \EX_{\rho,\pi}[\mathbbm{1}\lbrace A_t = a \rbrace]	\\			
							  &=	\sum_{a=1}^k \Delta_a \EX_{\rho,\pi}[T_a(T)] \,,
\end{align}
where $T_a(T) = \sum_{t=1}^T \mathbbm{1}\lbrace A_t = a \rbrace$ is a random variable that tells us how many times we have picked the action~$a$ over $T$ rounds.

In order to present a formula for regret for the general case, we need the continuous counterpart for the function $a\mapsto\mathbb{E}_{\rho,\pi}\big(T_a(T)\big)$ of the discrete case. To this end, we define, for all $t=1,\ldots,\,T$, the margins $P_{\rho,\pi}^{(t)}:\Sigma\times\mc X\to[0,1]$ through
\begin{align}
P_{\rho,\pi}^{(t)}(A,x)=\sum_{x_1\in\mc X}\cdots\sum_{x_{t-1}\in\mc X}\sum_{x_{t+1}\in\mc X}\cdots\sum_{x_T\in\mc X}P_{\rho,\pi}(\mc A,x_1,\ldots,\mc A,x_{t-1},A,x,\mc A,x_{t+1},\ldots,\mc A,x_T).
\end{align}
We now define the measure $\gamma_{\rho,\pi}:\Sigma\to[0,T]$ through
\begin{align}\label{eq:gammameasure}
\gamma_{\rho,\pi}(A)=\sum_{t=1}^T\sum_{x\in\mc X}P_{\rho,\pi}^{(t)}(A,x).
\end{align}
We denote by $E_{\rho|a}$ the expectation value of the conditional distribution $x\mapsto P_\rho(x|a)$ and by $\Delta_a$ the %$\gamma_{\rho,\pi}$-essential
supremum of the function $b\mapsto E_{\rho|b}-E_{\rho|a}$ for any action $a\in\mc A$; this is naturally the continuous analogue of the sub-optimality gap. Note that, in all the cases we study, $\mc A$ is a compact topological space, $\Sigma$ is the associated Borel $\sigma$-algebra, and $a\mapsto E_{\rho|a}$ is continuous for every environment $\rho$, so this supremum (which is also a maximum) technically makes sense. Using the measure $\gamma_{\rho,\pi}$ and the above sub-optimality gap, we may define the expected cumulative regret analogously to the discrete case: 
\begin{align}
\EX_{\rho,\pi}[ R_T  ( \mathcal{A}, \rho , \pi)])&:=\sum_{t=1}^T\mb E_{\rho,\pi}(\Delta_{a_t})\label{eq:contRegr2}\\
&=\sum_{t=1}^n\sum_{x_1\in\mc X}\cdots\sum_{x_n\in\mc X}\int_{\mc A^n}\Delta_{a_t}\,P_{\rho,\pi}(da_1,x_1,\ldots,da_n,x_n)\\
&=\int_{\mc A}\Delta_a\,d\gamma_{\rho,\pi}(a).\label{eq:contRegr1}
\end{align}
Note that~\eqref{eq:contRegr2} implies that, for the optimal strategy which always plays the best arm, the regret vanishes.

\subsubsection{Differences from quantum state tomography and online learning} 

Note that one can try to apply techniques from quantum state tomography~\cite{haah2016sample,guctua2020fast}, shadow tomography~\cite{shadow_aaronson} or classical shadows~\cite{huang2020predicting} in the multi-armed quantum bandit setting using the observables of the action set in order to estimate the unknown quantum state. If we learn the unknown quantum state (or some approximation thereof) we can choose the best action in order to minimize the regret. However, this strategy is not optimal: quantum state tomography algorithms can be thought of as pure exploration strategies since the algorithm only cares about choosing the action that helps us to learn most about the unknown quantum state, which is not necessarily the action that minimizes the regret. In particular, our setting is thus different from the setting of online learning of quantum states~\cite{online_learning_aaronson}. There, the learner is tasked to produce an estimate $\omega_t$ of $\rho$ in each round and the regret is related to the quality of this estimate. The main difference with our model is that we do not require our policy to produce an estimate of the unknown quantum state, just to choose an observables with large expectation value on the unknown quantum state. Also, the online quantum state learning problem is adversarial in the sense that the learner does not control the observables used for measurements. Instead, these observables are given by an unknown environment or an external agent (not necessarily stochastically), restricting the learner’s ability to choose the best measurements.

\section{Connection with linear stochastic bandits}\label{sec:connection_linearbandits}

Linear stochastic bandits extend the classical multi-armed bandit framework by incorporating a structured relationship between actions and rewards. In this model, each action is represented by a feature vector and belongs to a known decision set $\mathcal{A} \subseteq \mathbb{R}^d$ (for simplicity we assume to be fixed at each time $t$). After the learner selects an action $A_t \in \mathcal{A}_t$, the reward is sampled according to 
\begin{align}\label{eq:linar_bandit_reward}
    X_t = \langle A_t , \theta \rangle + \epsilon_t ,
\end{align}
where $\theta \in \mathbb{R}^d$ is an unknown parameter vector, and $\epsilon_t$ is \emph{conditionally $\eta$-subgaussian} which means
\begin{align}\label{eq:subgaussian_def}
    \EX \left[ \exp\left( \alpha \epsilon_t \right) | A_1, X_1, \ldots,  A_{t-1}, X_{t-1}, A_t \right] \leq \exp \left( \frac{\eta^2 \alpha^2}{2}\right) , \quad \text{for all } \alpha\in\mathbb{R} \text{ and } t , 
\end{align}
where the expectation is conditioned over all past actions and outcomes including $A_t$ before reward $X_t$ is observed. For simplicity we define the past history as $\mathcal{H}_{t-1} = (A_s,X_s)_{s=1}^{t-1}$. The term $\epsilon_t$ will be referred to as the \textit{noise} of reward or \textit{statistical noise}. The definition of $\eta$-subgaussian immediatly imples that $\EX [\epsilon_t | \mathcal{H}_{t-1},A_t ] = 0 $ and $\textrm{Var} (\epsilon_t | \mathcal{H}_{t-1},A_t) \leq \eta^2$. This can be seen immediately expanding the terms in~\eqref{eq:subgaussian_def}\footnote{
    $1 + \alpha \EX [ \epsilon_t | \mathcal{F}_{t-1}] + \frac{\alpha^2}{2}\EX (\epsilon^2_t|\mathcal{F}_{t-1}) + O( \alpha^3) \leq  1 + \frac{\alpha^2}{2}\sigma^2 + O(\alpha^4) $}. Then this implies
\begin{align}
    \EX [X_t | A_1, X_1, \ldots,  A_{t-1}, X_{t-1}, A_t] &= \langle \theta , A_t \rangle , \\ 
    \mathrm{Var}(X_t| A_1, X_1, \ldots,  A_{t-1}, X_{t-1}, A_t ) &\leq \eta^2 .
\end{align}
Thus, the expected reward is linear in both the action and the unknown parameter, which is what gives rise to the name linear bandits. The subgaussian parameter $\eta$ can be thought of as a quantity that controls the reward variance. Throughout, we will refer to the reward as $\eta$-subgaussian, meaning that the noise term $\epsilon_t$ satisfies the condition in~\eqref{eq:subgaussian_def}.

This structured model allows the learner to generalize knowledge across actions, making it especially effective in settings with large or continuous action sets. The objective remains to maximize the cumulative reward over a finite time horizon, but the main challenge lies in balancing exploration of the action space to estimate the unknown parameter $\theta$ and exploitation of the learned information to make optimal decisions. The regret in this model takes the form
\begin{align}\label{eq:regret_linearbandits}
    R_T (\mathcal{A} , \theta , \pi ) = \sum_{t=1}^T \max_{A\in\mathcal{A}} \langle \theta ,  A \rangle - \langle \theta , A_t \rangle,
\end{align}
where $\pi = \lbrace \pi_t \rbrace_{t\in\mathbb{N}}$ is the learner's policy as defined analogously as in~\ref{def:policy}.

The multi-armed quantum bandit model can be viewed as a specific instance of the linear stochastic bandit framework. To see this, consider a fixed set of orthonormal Hermitian matrices $\lbrace {\sigma_i \rbrace}_{i=1}^{d^2}$, which serve as a basis for parameterizing both the environment $\rho \in \Gamma$ and the action $O_{A_t} \in \mathcal{A}$. Using this basis, we express the state and observable as
\begin{align}\label{dparametrization}
\rho = \sum_{i=1}^{d^2} \theta_i \sigma_i, \quad O_{A_t} = \sum_{i=1}^{d^2} A_{t,i} \sigma_i,
\end{align}
where, with a slight abuse of notation, $A_t$ indexes the observable and also refers to the parameterized vector $A_t \in \mathbb{R}^d$. With this parameterization, the reward can be written in the same form as in Eq.~\eqref{eq:linar_bandit_reward}, since $\EX_{\rho,\pi} [X_t] = \Tr (\rho O_{A_t} ) = \langle \theta , A_t \rangle$. The additive noise $\epsilon_t$ can be seen to be subgaussian, with parameter $\eta_t$ determined by the distribution of eigenvalues of $O_{A_t}$ and applying Hoeffding Lemma~\cite[Equation 4.16]{Hoeffding}\footnote{The Hoeffding Lemma states that if is $X$ a random variable such that $X \in [a, b]$ almost surely. Then, for any $\lambda \in \mathbb{R}$, the following inequality $\mathbb{E}\left[e^{\lambda (X - \mathbb{E}[X])}\right] \leq \exp\left(\frac{\lambda^2 (b-a)^2}{8}\right) $.}

This is not surprising: most problems in quantum tomography, metrology or learning (in the single copy regime) can be converted to a classical problem with additional structure, since, in the end, we are interested in learning a matrix of complex numbers constituting the density operator. The question of interest then is whether the structure imposed by the quantum problem is helpful to find more efficient algorithms. Indeed, the fact that our model is a subset of a more general class of problems for which classical algorithms are known does not imply that these algorithms yield the smallest possible regret for the subset of problems we consider.

\chapter{Regret lower bounds}\label{ch:lowerbounds}

The content of this chapter is devoted to the derivation of minimax lower bounds for the multi-armed quantum bandit problem, separating the discrete and general settings. The results are mainly extracted from the works~\cite{Lumbreras2022multiarmedquantum,lumbreras24pure} with added extended discussions.

\section{Bretagnolle-Huber inequality and divergence decomposition lemma}

In this section, we present some important technical results that we need in the subsequent proofs in order to prove the lower bounds for the minimax regret.
One of the technical tools that we will use in order to bound the regret is the following lemma due to Bretagnolle and Huber \cite{divergenceineq}.
\begin{lemma}[Bretagnolle--Huber inequality]\label{pinsker}
Let $P$ and $Q$ be probability measures on the same measurable space $(\Omega , \Sigma) $, and let $A\in \Sigma$ be an arbitrary event. Then,
\begin{align}
 P(A) + Q(A^c)  \geq \frac{1}{2} \exp ( -D(P \| Q) ) ,   
\end{align} 
where $A^c = \Omega \backslash A $ is the complement of $A$ and $D(P\|Q)$ is the Kullback–Leibler divergence.
\end{lemma}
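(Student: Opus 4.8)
The plan is to bound $P(A) + Q(A^c)$ from below by a quantity that no longer depends on the event $A$, and then relate that quantity to the Kullback--Leibler divergence in two further steps. Writing $p = \frac{dP}{d\mu}$ and $q = \frac{dQ}{d\mu}$ for densities with respect to a common dominating measure $\mu$, the first step would observe that
\[
P(A) + Q(A^c) = \int_A p \, d\mu + \int_{A^c} q \, d\mu \geq \int_A \min(p,q)\,d\mu + \int_{A^c}\min(p,q)\,d\mu = \int_\Omega \min(p,q)\,d\mu,
\]
so it suffices to prove $\int_\Omega \min(p,q)\,d\mu \geq \frac{1}{2}\exp(-D(P\|Q))$, uniformly in $A$.

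The second step connects this ``testing affinity'' to the squared Bhattacharyya coefficient $F(P,Q) = \left(\int \sqrt{pq}\,d\mu\right)^2$ defined earlier. Writing $\sqrt{pq} = \sqrt{\min(p,q)}\,\sqrt{\max(p,q)}$ and applying Cauchy--Schwarz gives
\[
\left(\int \sqrt{pq}\,d\mu\right)^2 \leq \left(\int \min(p,q)\,d\mu\right)\left(\int \max(p,q)\,d\mu\right).
\]
Since $\min(p,q) + \max(p,q) = p + q$ pointwise and $\int(p+q)\,d\mu = 2$, setting $m = \int\min(p,q)\,d\mu$ yields $\int\max(p,q)\,d\mu = 2 - m \leq 2$, hence $F(P,Q) \leq 2m$, i.e. $\int \min(p,q)\,d\mu \geq \frac{1}{2}F(P,Q)$.

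The third step lower bounds $F(P,Q)$ by $\exp(-D(P\|Q))$ via Jensen's inequality: restricting to the support of $P$ and using convexity of $\exp$,
\[
\int \sqrt{pq}\,d\mu = \int p\,\exp\!\left(\tfrac{1}{2}\log\tfrac{q}{p}\right)d\mu \geq \exp\!\left(\tfrac{1}{2}\int p\log\tfrac{q}{p}\,d\mu\right) = \exp\!\left(-\tfrac{1}{2}D(P\|Q)\right),
\]
so that $F(P,Q) \geq \exp(-D(P\|Q))$. Chaining the three inequalities gives the claim.

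I do not expect a serious obstacle, as all three steps are elementary; the one place to be careful is the degenerate case where $Q$ does not dominate $P$, so that $D(P\|Q) = \infty$ and the right-hand side is $0$ --- then the inequality holds trivially since the left-hand side is nonnegative --- and, relatedly, ensuring the Jensen step is applied only on $\{p > 0\}$, where $q$ must also be positive for $D(P\|Q)$ to be finite. The slightly clever ingredient is the Cauchy--Schwarz splitting $\sqrt{pq} = \sqrt{\min(p,q)}\,\sqrt{\max(p,q)}$ together with the normalization $\int\max(p,q)\,d\mu = 2 - \int\min(p,q)\,d\mu$, which is exactly what converts the affinity bound into the factor $\frac{1}{2}$.
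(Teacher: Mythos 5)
Your proof is correct and follows essentially the same route as the paper: the paper proves the sharper bound $P(A)+Q(A^c)\geq \tfrac12 F(P,Q)$ (its Lemma on $D_{1/2}$) using exactly your first two steps — the $\min(p,q)$ decomposition and the Cauchy--Schwarz splitting $\sqrt{pq}=\sqrt{\min(p,q)}\sqrt{\max(p,q)}$ with $\int\max(p,q)\,d\mu\leq 2$ — and cites the textbook for the KL version, which is recovered from the $F(P,Q)$ bound by precisely your Jensen step $F(P,Q)\geq\exp(-D(P\|Q))$. Your handling of the degenerate case $D(P\|Q)=\infty$ and the restriction of Jensen to $\{p>0\}$ is the right care to take.
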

This inequality was first used for bandits in~\cite{Bubeck13}; however, we will also use a stronger bound that replaces $D(P \|Q)$ with the R\'enyi divergence of order $\alpha = \frac12$, defined as
\begin{align}
D_{\frac{1}{2}}(P \| Q) = - \log F(P, Q). 
\end{align}
We use the proof given in~\cite{lattimore_banditalgorithm_book}[Chapter 14] for the Bretagnolle-Huber inequality in order to prove the alternative version.
\begin{lemma}\label{pinsker2}
Let $P$ and $Q$ be probability measures on the same measurable space $(\Omega , \Sigma) $, and let $A\in \Sigma$ be an arbitrary event with $A^c$ its complement. Then,
\begin{align}
 P(A) + Q(A^c)  \geq \frac{1}{2} \exp ( -D_{\frac{1}{2}}(P\|Q) ) = \frac12 F(P, Q).
\end{align} 
\end{lemma}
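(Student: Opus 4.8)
The plan is to follow the structure of the Bretagnolle--Huber proof from~\cite{lattimore_banditalgorithm_book}[Chapter 14], but to carry the Bhattacharyya coefficient all the way through rather than immediately relaxing to the Kullback--Leibler divergence. As in the setup preceding the statement, I would fix a dominating measure $\mu$ and write $p = \frac{dP}{d\mu}$, $q = \frac{dQ}{d\mu}$. The first step is the elementary but crucial observation that for \emph{any} event $A \in \Sigma$,
\begin{align}
P(A) + Q(A^c) = \int_A p \, d\mu + \int_{A^c} q \, d\mu \geq \int_A \min(p,q)\, d\mu + \int_{A^c} \min(p,q)\, d\mu = \int_\Omega \min(p,q)\, d\mu ,
\end{align}
so it suffices to lower bound the total overlap $\int_\Omega \min(p,q)\, d\mu$ by $\tfrac12 F(P,Q)$, and this bound is independent of the choice of $A$.

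The heart of the argument is an application of the Cauchy--Schwarz inequality exploiting the pointwise identity $\min(p,q)\cdot\max(p,q) = pq$. Writing $\sqrt{pq} = \sqrt{\min(p,q)}\,\sqrt{\max(p,q)}$ and applying Cauchy--Schwarz gives
\begin{align}
F(P,Q) = \left( \int \sqrt{pq}\, d\mu \right)^2 \leq \left( \int \min(p,q)\, d\mu \right)\left( \int \max(p,q)\, d\mu \right).
\end{align}
I would then bound the second factor crudely by $\int \max(p,q)\, d\mu \leq \int (p+q)\, d\mu = 2$, since both $p$ and $q$ integrate to one against $\mu$. Rearranging yields $\int \min(p,q)\, d\mu \geq \tfrac12 F(P,Q)$, and combining with the first step closes the chain. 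Finally, since the statement records $D_{\frac12}(P\|Q) = -\log F(P,Q)$, we have $\exp(-D_{\frac12}(P\|Q)) = F(P,Q)$, which identifies the two forms of the right-hand side.

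I do not expect a genuine obstacle here; the proof is short once the right factorization is chosen. The only step requiring a moment of care is recognizing that Cauchy--Schwarz should be applied to the \emph{splitting} $\sqrt{pq} = \sqrt{\min(p,q)}\sqrt{\max(p,q)}$ rather than some other grouping, so that one factor reproduces the overlap integral we want while the other is uniformly controlled by the normalization. It is worth remarking, as a sanity check, that this bound is strictly stronger than the classical Bretagnolle--Huber inequality of Lemma~\ref{pinsker}: the inequality $F(P,Q) \geq \exp(-D(P\|Q))$ (a consequence of Jensen's inequality, i.e. concavity of the logarithm) immediately recovers the Kullback--Leibler version from the one proved here, which is exactly why the R\'enyi-$\tfrac12$ formulation is the natural quantity to aim for.
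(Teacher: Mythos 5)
Your proposal is correct and follows essentially the same route as the paper's proof: reduce to the overlap integral $\int \min(p,q)\,d\mu$, then apply Cauchy--Schwarz to the splitting $\sqrt{pq}=\sqrt{\min(p,q)}\sqrt{\max(p,q)}$ and bound $\int\max(p,q)\,d\mu\leq 2$. The only cosmetic difference is that the paper fixes $\mu=\tfrac12(P+Q)$ explicitly while you leave the dominating measure generic, which changes nothing.
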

\begin{proof}
Let us define the probability measure $\mu:=\frac{1}{2}(P+Q)$ which dominates both $P$ and $Q$ and denote $dP/d\mu=:p$ and $dQ/d\mu=:q$. Note that,
\begin{align}
P(A) + Q(A^c) =&\int_A p\,\dd\mu + \int_{A^C} q\,d\mu \geq \int_A \min \lbrace p , q \rbrace\,\dd\mu + \int_{A^C} \min \lbrace p , q \rbrace\,\dd\mu\\
=& \int \min \lbrace p , q \rbrace\,\dd\mu.
\end{align}
In order to find a lower bound on this expression, we use the Cauchy–Schwarz inequality to show that
\begin{align}
\frac{1}{2} \left( \int \sqrt{pq}\,d\mu \right)^2 
= \frac{1}{2} \left( \int \sqrt{\max \lbrace p , q \rbrace \min \lbrace p , q \rbrace}\,\dd\mu \right)^2 
& \\ \leq \frac{1}{2} \left( \int \max \lbrace p , q \rbrace\,\dd\mu  \right)\left( \int \min\lbrace p , q \rbrace\,\dd\mu \right) 
&\leq \int \min\lbrace p , q \rbrace\,\dd\mu ,
\end{align}
where in the final step we used that $ \int \max \lbrace p , q \rbrace \dd \mu \leq  \int (p + q) \dd \mu = 2$.
\end{proof}

The other main result that we will need allows us to decompose the divergence computed between two joint distributions that result from the same policy applied to two different quantum states.

\begin{lemma}[Divergence decomposition lemma]\label{divergence}
Let $\mathcal{A}=\left\lbrace O_1,...,O_k \right\rbrace$ be a set of actions and $\rho$ and $\rho'$ two quantum states defining two multi-armed quantum bandits with action set $\mathcal{A}$. Fix some policy $\pi$, time horizon $T\in\mathbb{N}$
and let $P_{\rho, \pi }$ and $ P_{\rho' \pi }$ be the probability distributions induced by the policy $\pi $ and  environment $\rho $ as described in~\eqref{eq:probdens_discrete}. Then,
\begin{align}
	D(P_{\rho, \pi}  \| P_{\rho' , \pi}) = \sum_{a=1}^k \EX_{\rho,\pi} [T_a ( T ) ] D \big( P_{\rho}(\cdot | a) \big\| P_{\rho'}(\cdot | a) \big) .
\end{align}
\end{lemma}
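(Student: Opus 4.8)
The plan is to expand the Kullback--Leibler divergence directly from the product formula~\eqref{eq:probdens_discrete} and exploit the fact that both joint distributions are generated by the \emph{same} policy $\pi$. First I would write the divergence as the expectation of a log-likelihood ratio under $P_{\rho,\pi}$,
\[
D(P_{\rho,\pi} \| P_{\rho',\pi}) = \EX_{\rho,\pi}\left[\log \frac{P_{\rho,\pi}(A_1, X_1, \ldots, A_T, X_T)}{P_{\rho',\pi}(A_1, X_1, \ldots, A_T, X_T)}\right],
\]
and then substitute the factorization~\eqref{eq:probdens_discrete} into the ratio. The key structural observation is that the policy factors $\pi_t(a_t \vert a_1, x_1, \ldots, a_{t-1}, x_{t-1})$ are \emph{identical} in numerator and denominator, since the same $\pi$ acts on both environments; they therefore cancel, leaving only a product of per-round reward-likelihood ratios. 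Taking logarithms turns this into a sum,
\[
\log \frac{P_{\rho,\pi}}{P_{\rho',\pi}} = \sum_{t=1}^T \log \frac{P_{\rho}(X_t \vert A_t)}{P_{\rho'}(X_t \vert A_t)},
\]
where I would stress the second crucial fact that the reward law $P_{\rho}(\cdot \vert a)$ defined in~\eqref{eq:reward_distribution_discrete} depends only on the current action $a$ and not on the past history.

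Next I would apply the tower property of conditional expectation to each summand. Conditioning on the history $\mathcal{H}_{t-1}$ and the action $A_t$, and using that given $\lbrace A_t = a \rbrace$ the conditional law of $X_t$ is exactly $P_{\rho}(\cdot \vert a)$ independently of the past, the inner conditional expectation of the $t$-th term becomes precisely the per-arm divergence $D\big(P_{\rho}(\cdot \vert a) \big\| P_{\rho'}(\cdot \vert a)\big)$. Taking the outer expectation and summing over the finitely many possible actions gives
\[
\EX_{\rho,\pi}\left[\log \frac{P_{\rho}(X_t \vert A_t)}{P_{\rho'}(X_t \vert A_t)}\right] = \sum_{a=1}^k \EX_{\rho,\pi}[\mathbbm{1}\lbrace A_t = a \rbrace] \, D\big(P_{\rho}(\cdot \vert a) \big\| P_{\rho'}(\cdot \vert a)\big).
\]
Summing over $t \in [T]$, swapping the order of the two finite sums, and recognizing that $\sum_{t=1}^T \EX_{\rho,\pi}[\mathbbm{1}\lbrace A_t = a \rbrace] = \EX_{\rho,\pi}[T_a(T)]$ by linearity of expectation applied to $T_a(T) = \sum_{t=1}^T \mathbbm{1}\lbrace A_t = a \rbrace$ then yields the claimed identity. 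This mirrors the regret decomposition already carried out in~\eqref{regretarms}.

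I do not expect a genuine obstacle here, as the result is structural rather than analytic; the work is in justifying two points carefully. The first is the cancellation of the policy terms, which relies on both distributions arising from the identical policy and must be stated cleanly despite the heavy conditioning notation. The second, and the more delicate one, is the conditional-independence argument: one must argue that the per-round reward distribution $P_{\rho}(\cdot \vert a)$ is independent of the full past given the current action, so that the inner conditional expectation collapses to a fixed divergence not depending on $\mathcal{H}_{t-1}$. I would also note the support/finiteness convention (with $D = \infty$ when supports are incompatible, as fixed in the notation), so that the interchange of sums and the identity remain valid when some per-arm divergences are infinite.
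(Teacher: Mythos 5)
Your proposal is correct and is essentially the proof the paper relies on: the paper does not write out the argument but defers to the classical divergence decomposition lemma in~\cite[Chapter 15]{lattimore_banditalgorithm_book}, noting that it "is a consequence of the chain rule for the KL divergence," which is exactly the expansion of the log-likelihood ratio, cancellation of the common policy factors, and tower-property collapse to per-arm divergences that you carry out. Your added care about the conditional independence of $P_\rho(\cdot\mid a)$ from the history and about the $D=\infty$ convention is appropriate but does not change the route.
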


The above lemma and the proof can be found in \cite[Chapter 15]{lattimore_banditalgorithm_book} for the classical model of multi-armed stochastic bandits. We have restated the lemma for our quantum case, but this statement and the proof follow trivially from the original one. The proof is a consequence of the chain rule for the KL divergence. Unfortunately, there is no such decomposition for R\'enyi divergences but we will see in this Chapter that applying the data processing inequality and bounding it with the sandwiched quantum R\'enyi divergences (see, e.g.,~\cite{lennert13,wilde13}) between $\rho$ and $\rho'$ is sufficient to bound the regret.

For this purpose, we present a result on the divergences between the probability distributions $P_{\rho,\pi}$. Note that this result also holds in the general bandit case where the set of arms can be continuous. What we mean below by `quantum extension' of a classical relative entropy $D$ is that, if quantum states $\rho$ and $\sigma$ commute, then $\tilde{D}(\rho\|\sigma)=D(p\|q)$ where $p$ and, respectively, $q$ are the vectors of eigenvalues of $\rho$ and, respectively, $\sigma$, in a common eigenbasis.

\begin{lemma}\label{lemma:D1/2ineq}
Let $\alpha\in\R$ be such that the classical R\'{e}nyi relative entropy $D_\alpha$ can be given a quantum extension (which we denote with the same symbol) that is additive and satisfies the data processing inequality. For any policy $\pi$, time horizon $T\in\mathbb{N}$ and environment states $\rho,\,\rho'$, we have
\begin{align}\label{eq:D1/2ineq}
D_\alpha(P_{\rho,\pi}\|  P_{\rho',\pi})\leq T D_\alpha(\rho\|\rho').
\end{align}
\end{lemma}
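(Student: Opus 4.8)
The plan is to exhibit the two joint distributions $P_{\rho,\pi}$ and $P_{\rho',\pi}$ as the images of the two input states $\rho^{\otimes T}$ and $\rho'^{\otimes T}$ under one and the same quantum channel, and then to invoke only the two hypotheses placed on $D_\alpha$: the data-processing inequality and additivity under tensor products. Concretely, over the $T$ rounds the learner consumes $T$ independent copies of the environment, so the total quantum resource is $\rho^{\otimes T}$; I would package the entire adaptive interaction---action sampling, measurement, and bookkeeping of the history---into a single quantum-to-classical channel $\mathcal{M}_\pi \colon \End(\hil^{\otimes T}) \to \ell^1\big((\mathcal{A}\times\mathcal{X})^{\times T}\big)$ whose output, read as a diagonal (classical) density operator, carries the joint law~\eqref{eq:probdens_discrete} (or~\eqref{eq:probdens_continous} in the general case) on its diagonal. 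The crucial point is that $\mathcal{M}_\pi$ depends only on the policy and on the fixed POVMs attached to the actions, and \emph{not} on which environment is actually present; it is therefore literally the same channel for $\rho$ and for $\rho'$.

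To construct $\mathcal{M}_\pi$ I would proceed round by round. In round $t$ the map reads the classical history register $(a_1,x_1,\ldots,a_{t-1},x_{t-1})$, samples $a_t$ from $\pi_t(\cdot\mid a_1,x_1,\ldots,a_{t-1},x_{t-1})$ (this randomness is classical and state-independent), measures \emph{only} the $t$-th tensor factor with the POVM $\{\Pi_{a_t,i}\}_i$ associated with $O_{a_t}$, records the outcome $x_t$, and appends $(a_t,x_t)$ to the register. Each such round is a valid channel acting on a single fresh copy together with the classical register, so their composition $\mathcal{M}_\pi$ is again a channel. Because the measurement in round $t$ touches only the $t$-th copy and the outcome statistics obey Born's rule $P_\rho(\cdot\mid a_t)$, unfolding the composition reproduces exactly the product form of $P_{\rho,\pi}$, and the same computation with $\rho'$ in place of $\rho$ yields $P_{\rho',\pi}$.

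With the channel in hand the bound is immediate. Since the two outputs are classical (diagonal) states, the quantum extension of $D_\alpha$ evaluated on them coincides with the classical $D_\alpha$ of the corresponding distributions, and therefore
\begin{align}
D_\alpha(P_{\rho,\pi}\|P_{\rho',\pi}) = D_\alpha\big(\mathcal{M}_\pi(\rho^{\otimes T})\,\big\|\,\mathcal{M}_\pi(\rho'^{\otimes T})\big) \leq D_\alpha(\rho^{\otimes T}\|\rho'^{\otimes T}) = T\,D_\alpha(\rho\|\rho'),
\end{align}
where the inequality is the data-processing inequality applied to $\mathcal{M}_\pi$ and the final equality is additivity under the $T$-fold tensor power. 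This is exactly~\eqref{eq:D1/2ineq}.

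The step that deserves the most care, and which I expect to be the main obstacle, is the rigorous verification that the adaptive protocol really is a single state-independent channel whose classical output matches $P_{\rho,\pi}$---in particular, keeping the history register and the ``one fresh copy per round'' structure honest, so that no information about the unknown environment leaks into the definition of $\mathcal{M}_\pi$ itself. In the general bandit case this construction goes through verbatim: the action is drawn from a (possibly continuous) probability measure $\pi_t$, but action selection remains classical and state-independent, the per-round outcome set $\mathcal{X}$ stays finite, and the data-processing inequality still holds for the resulting channel with continuous classical output, so the same chain of relations delivers the claim.
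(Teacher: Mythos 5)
Your proposal is correct and follows essentially the same route as the paper: the paper likewise packages the adaptive protocol into a single state-independent quantum-to-classical map (constructed as a POVM $\ms E$ on $\hil^{\otimes T}$ with $P_{\sigma,\pi}=\tr{\sigma^{\otimes T}\ms E(\cdot)}$, built in product/differential form from the per-round policy kernels and action POVMs) and then applies the data-processing inequality followed by additivity. Your round-by-round channel composition is just a different presentation of the same construction.
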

In particular, for $\alpha=1/2$, we can choose $D_{\frac12}(\rho\|\rho') = - \log{F(\rho,\rho')}$ and, for $\alpha=1$, we can let the quantum extension $D \equiv D_1$ of the Kullback-Leibler relative entropy be the quantum relative entropy.

\begin{proof}
Let us fix the policy $\pi$ and use the notations and definitions of Section \ref{sec:general_bandits}. We prove the claim by constructing a POVM $\ms E$ over the value space $\mc C:=(\Sigma\times\mc X)^T$ and operating in $\hil^{\otimes T}$ such that
\begin{align}\label{eq:policyPOVM}
P_{\sigma,\pi}(A_1,x_1,\ldots,A_T,x_T)=\tr{\sigma^{\otimes T}\ms E(A_1,x_1,\ldots,A_T,x_T)}
\end{align}
for all states $\sigma$, and $A_t\in\Sigma$ and $x\in\mc X$ for $t=1,\ldots,\,T$. Note that, as $\ms E$ is not a set function, it is, strictly speaking, not a POVM in the same sense as $P_{\sigma,\pi}$ is not a measure. However, this should cause no confusion. Using the data-processing inequality and the additivity of $D_\alpha$, we have, for all states $\rho$ and $\rho'$,
\begin{align}
D_\alpha(P_{\rho,\pi}\|P_{\rho',\pi})\leq D_\alpha(\rho^{\otimes T}\|\rho'^{\otimes T})= TD_\alpha(\rho\|\rho'),
\end{align}
implying inequality \eqref{eq:D1/2ineq}. (Note that the POVM $\ms E$ corresponds to a quantum-to-classical channel which maps $\sigma^{\otimes T}$ into $ P_{\sigma,\pi}$). Thus, all that remains is to write down $\ms E$.

Recall the reward distributions $P_\sigma(x|a)$ in state $\sigma$ conditioned by the action $a\in\mc A$. Through linear extension, we may define these distributions $P_\tau(x|a)$ for any operators $\tau$; these are, in general, complex distributions. We may define the POVM $\ms E$ through
\begin{align}
&\tr{(\tau_1\otimes\cdots\otimes\tau_T)\ms E(A_1,x_1,\ldots,A_T,x_T)} \nonumber \\
=&\int_{A_1}\cdots\int_{A_T}\prod_{t=1}^TP_{\tau_t}(x_t|a_t)\pi_t(da_t|a_1,x_1,\ldots,a_{t-1},x_{t-1})\label{eq:EPOVM}
\end{align}
for all operators $\tau_t$ and $A_t\in\mc B$ and $x_t\in\mc X$, $t=1,\ldots,T$. As $\tau\mapsto P_\tau(x|a)$ is a linear functional and $x\mapsto P_\sigma(x|a)$ is a probability distribution whenever $\sigma$ is a state, it easily follows that there are POVMs $\ms P_a$ on $\mc X$ operating in $\hil$ such that $P_\tau(x|a)=\tr{\tau\ms P_a(x)}$. Equivalently (although slightly less formally), we may now write $\ms E$ in the differential form
\begin{align}
\ms E(da_1,x_1,\ldots,da_T,x_T)=\bigotimes_{t=1}^T\pi_t(da_t|a_1,x_1,\ldots,a_{t-1},x_{t-1})\ms P_{a_t}(x_t).
\end{align}
By substituting $\tau_t=\sigma$ for all $t=1,\ldots,T$ in~\eqref{eq:EPOVM}, we immediately see that \newline $P_{\sigma,\pi}(A_1,x_1\ldots,A_T,x_T)=\tr{\sigma^{\otimes n}\ms E(A_1,x_1\ldots,A_T,x_T)}$ for all $A_t\in\Sigma$ and $x_t\in\mc X$, $t=1,\ldots,T$, and, thus,~\eqref{eq:policyPOVM} holds.
\end{proof}

\section{Discrete lower bounds}

As an illustrative case to introduce our lower bound proof techniques, we consider multi-armed quantum bandit where the environment can be any state in a Hilbert space of dimension $d$, and the action set is any set of observables. We will derive a lower bound for the minimax regret by constructing a quantum state that ensures a non-trivial lower bound for every policy. Notably, if no conditions are imposed on the action set, the lower bound vanishes. The reason is that if there is an operator in the action set that dominates over the others, the policy that always chooses this operator at each round achieves $0$ regret. More specifically, suppose that $\mathcal{A} = \left\lbrace O_1 , O_2 \right\rbrace$ with $O_1\geq O_2$. Then we know that independently of the environment $\rho$, $\Tr(\rho O_1 ) \geq \Tr(\rho O_2 )$ and the policy that always chooses $O_1$ will always pick the optimal action.

So, we will impose a condition on the action set that ensures that there is no such dominant action/operator. The condition is the following:
in the action set $\mathcal{A}$ there exist at least two operators $O_a,O_b\in \mathcal{A}$ with maximal eigenvectors $\ket{\psi_A},\ket{\psi_B}$ such that for any $i\neq a$ and $j\neq b$ 
\begin{align}\label{condition}
\bra{\psi_A } O_a \ket{\psi_A} > \bra{\psi_A } O_i \ket{\psi_A}\quad \text{and}\quad
\bra{\psi_B } O_b \ket{\psi_B} > \bra{\psi_B } O_j \ket{\psi_B}.
\end{align}

\begin{theorem}\label{generalower}
Let $T\in\mathbb{N}$. For any policy $\pi$ and action set of traceless observables $\mathcal{A}$ that obeys condition \eqref{condition} there exists an environment $\rho\in\mathcal{S}_d $ such that,
\begin{align}
\EX_{\rho,\pi}  \left[  R_T(\mathcal{A},\rho,\pi) \right] \geq C_{\mathcal{A}}\sqrt{T} ,
 \end{align}
where $C_{\mathcal{A}}>0$ is a constant that depends on the action set.
\end{theorem}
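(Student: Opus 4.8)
The plan is to use the standard two-point (Bretagnolle--Huber) method, adapted to the quantum setting. Since the minimax regret is an infimum over policies of a supremum over environments, it suffices to exhibit \emph{two} environments $\rho_1,\rho_2\in\mathcal{S}_d$ that a fixed policy $\pi$ cannot simultaneously handle well: they should have \emph{different} optimal arms (namely $a$ and $b$ from condition~\eqref{condition}), a common suboptimality gap of order $\Delta$, yet be statistically so close that no policy can reliably tell them apart in $T$ rounds. The key structural input is that the observables are \emph{traceless}: at the maximally mixed state $\mathbb{I}/d$ every arm has expected reward $\Tr(\tfrac{\mathbb{I}}{d}O_i)=0$, so all arms are tied there. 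This makes $\mathbb{I}/d$ a natural ``decision boundary'' around which to build the two confusing environments by small perturbations.

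Concretely, writing $\Pi_A=\ket{\psi_A}\!\bra{\psi_A}$ and $\Pi_B=\ket{\psi_B}\!\bra{\psi_B}$ for the top eigenvectors in~\eqref{condition}, I would set
\begin{align}
\rho_1=(1-\epsilon)\tfrac{\mathbb{I}}{d}+\epsilon\,\Pi_A,\qquad \rho_2=(1-\epsilon)\tfrac{\mathbb{I}}{d}+\epsilon\,\Pi_B,
\end{align}
for a small $\epsilon\in(0,\tfrac12]$ to be optimized. These are valid states (convex combinations of $\mathbb{I}/d$ and a pure state) and, being full rank with eigenvalues $\geq(1-\epsilon)/d$, are bounded away from the boundary. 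Using tracelessness, $\Tr(\rho_1 O_i)=\epsilon\bra{\psi_A}O_i\ket{\psi_A}$, so by condition~\eqref{condition} arm $a$ is the unique optimal arm under $\rho_1$, with every other arm having gap at least $\epsilon\,\delta_A$, where $\delta_A:=\min_{i\neq a}\big(\bra{\psi_A}O_a\ket{\psi_A}-\bra{\psi_A}O_i\ket{\psi_A}\big)>0$; symmetrically arm $b$ is optimal under $\rho_2$ with all gaps at least $\epsilon\,\delta_B$. Setting $\Delta:=\epsilon\min\{\delta_A,\delta_B\}$ gives a uniform gap over all non-optimal arms in each environment.

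With this in hand I would run the usual argument. Using the gap decomposition~\eqref{regretarms}, under $\rho_1$ we have $\EX_{\rho_1,\pi}[R_T]\geq \Delta\,\EX_{\rho_1,\pi}[T-T_a(T)]$ and under $\rho_2$ we have $\EX_{\rho_2,\pi}[R_T]\geq \Delta\,\EX_{\rho_2,\pi}[T-T_b(T)]$. Introducing the event $A=\{T_a(T)\geq T/2\}$ and using $a\neq b$ (forced by~\eqref{condition}, since $a$ cannot be both strictly best and strictly beaten at the same eigenvector), these are bounded below by $\Delta\tfrac{T}{2} P_{\rho_1,\pi}(A^c)$ and $\Delta\tfrac{T}{2} P_{\rho_2,\pi}(A)$ respectively. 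Summing and applying Lemma~\ref{pinsker} gives
\begin{align}
\EX_{\rho_1,\pi}[R_T]+\EX_{\rho_2,\pi}[R_T]\geq \frac{\Delta T}{4}\exp\!\big(-D(P_{\rho_2,\pi}\|P_{\rho_1,\pi})\big).
\end{align}
Since both environments are full rank, the per-arm reward divergences $D(P_{\rho_2}(\cdot|a)\|P_{\rho_1}(\cdot|a))$ are finite and of order $\epsilon^2$ (all outcome probabilities are $\geq(1-\epsilon)/d$, so a $\chi^2$ bound gives $O(d\,\epsilon^2)$); the divergence decomposition Lemma~\ref{divergence} then yields $D(P_{\rho_2,\pi}\|P_{\rho_1,\pi})\leq C_d\,\epsilon^2 T$, with $C_d$ depending only on $d$. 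Choosing $\epsilon\propto 1/\sqrt{T}$ makes the exponential a constant while $\Delta T\propto\sqrt{T}$, so $\max\{\EX_{\rho_1,\pi}[R_T],\EX_{\rho_2,\pi}[R_T]\}\geq C_{\mathcal{A}}\sqrt{T}$ with $C_{\mathcal{A}}$ built from $\delta_A,\delta_B,C_d$.

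I expect the main obstacle to be the construction, not the mechanical bandit argument: one needs two environments that simultaneously (i) swap the identity of the optimal arm between $a$ and $b$, (ii) keep \emph{all} remaining arms uniformly suboptimal by the same order-$\Delta$ gap, and (iii) stay within Bhattacharyya/KL distance $O(\Delta^2)$ of each other. Perturbing around the tie point $\mathbb{I}/d$ is precisely what reconciles these competing demands, and it is where the traceless hypothesis is essential; condition~\eqref{condition} then upgrades ``tied'' to ``strictly separated in the right direction.'' A secondary technical point is keeping the divergence finite: using full-rank mixed states lets me invoke the KL-based Lemma~\ref{divergence} directly, but if one preferred pure or rank-deficient environments the same scheme goes through verbatim by replacing Lemma~\ref{pinsker} and Lemma~\ref{divergence} with their Rényi-$\tfrac12$ counterparts, Lemma~\ref{pinsker2} and Lemma~\ref{lemma:D1/2ineq}, which bound everything in terms of $F(\rho_1,\rho_2)^T\geq(1-C\epsilon^2)^T$. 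Finally, the $\epsilon\leq\tfrac12$ restriction only constrains small $T$, which is absorbed into $C_{\mathcal{A}}$.
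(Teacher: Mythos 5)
Your proposal is correct and follows essentially the same route as the paper: the same two perturbed environments $\frac{1-\epsilon}{d}\mathbb{I}+\epsilon\,\Pi_A$ and $\frac{1-\epsilon}{d}\mathbb{I}+\epsilon\,\Pi_B$, the same gap decomposition with Markov's inequality, Bretagnolle--Huber, the divergence decomposition with data processing, and $\epsilon\propto 1/\sqrt{T}$. The only (immaterial) difference is that you bound the per-round divergence by a $\chi^2$ argument on the classical reward distributions, whereas the paper bounds it by the quantum relative entropy $D(\rho\|\rho')$ and Taylor-expands $f(\Delta)=D(\rho\|\rho')$ around $\Delta=0$; both yield the required $O(\epsilon^2)$ control.
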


\begin{proof}
Choose two operators $O_a,O_b\in\mathcal{A}$ that obey the condition in~\eqref{condition} with maximal eigenvectors $\ket{\psi_A}$ and $\ket{\psi_B}$, respectively. Define the following environments:
\begin{align}
\rho := \frac{1-\Delta}{d}\mathbb{I} + \Delta \ket{\psi_A}\!\bra{\psi_A},\quad \rho' := \frac{1-\Delta}{d}\mathbb{I} + \Delta \ket{\psi_B}\!\bra{\psi_B},
\end{align}
for some constant $0\leq \Delta \leq \frac{1}{2}$ to be defined later. Note that $\rho$ and $\rho'\geq 0$. Define
\begin{align}
c = \min \left\lbrace \bra{\psi_A } O_a \ket{\psi_A} - \bra{\psi_A } O_i \ket{\psi_A},
\bra{\psi_B } O_b \ket{\psi_B} - \bra{\psi_B } O_j \ket{\psi_B} \right\rbrace,
\end{align}
for $i\neq a$ and $j\neq b$. Using the expression for the regret in~\eqref{regretarms}, we can compute the regret for $\rho$ as,
\begin{align}
\EX_{\rho,\pi}  \left[ R_T(\mathcal{A}, \rho , \pi) \right] = \sum_{i=1}^k \EX_{\rho,\pi}(T_i(T))\Delta \left( \bra{\psi_A } O_a \ket{\psi_A} - \bra{\psi_A } O_i \ket{\psi_A} \right)
\end{align}
where we have used that the observables are traceless and 
\begin{align}
\max_{O_i\in \mathcal{A}} \Tr (\rho O_i ) = \Tr(\rho O_a ) = \Delta \bra{\psi_A } O_a \ket{\psi_A}.
\end{align}
Note that for $i=a$ the sub-optimality gaps in~\eqref{suboptimality} are zero. Thus, using Condition \eqref{condition} we can bound the regret as,
\begin{align}
\EX_{\rho,\pi}  \left[  R_T(\mathcal{A},\rho , \pi ) \right] = \sum_{i\neq a} \EX_{\rho,\pi}[T_i(T)]\Delta \left( \bra{\psi_A } O_a \ket{\psi_A} - \bra{\psi_A } O_i \ket{\psi_A} \right)\geq \Delta c \sum_{i\neq a} \EX_{\rho,\pi} [T_i(T)],
\end{align}
where $c>0$. Using that $T =\sum_{i=1}^k \EX_{\rho,\pi}[T_i(T)]$ and Markov inequality we have,
\begin{align}\label{rega}
\EX_{\rho,\pi}  \left[ R_T(\mathcal{A},\rho , \pi) \right] \geq \Delta c \EX_{\rho,\pi}  \left[ T - T_a(T) \right] \geq \frac{cn\Delta}{2}P_{\rho,\pi}\left( T_a(T)\leq \frac{T}{2} \right).
\end{align}
Similarly, the regret for $\rho'$ can be bounded as,
\begin{align}
\EX_{\rho',\pi}  \left[ R_T(\mathcal{A},\rho',\pi) \right] \geq \Delta \EX_{\rho',\pi}[T_a(T)]\left(  \bra{\psi_B } O_b \ket{\psi_B} - \bra{\psi_B } O_i \ket{\psi_B} \right),
\end{align}
where we have taken into account just the term with $i=a$. Using Condition \eqref{condition} and Markov inequality we have,
\begin{align}\label{regb}
\EX_{\rho',\pi}  \left[ R_T(\mathcal{A},\rho',\pi ) \right] \geq \frac{cT\Delta}{2}P_{\rho' , \pi} \left( T_a(T) > \frac{T}{2} \right).
\end{align}
Thus, combining~\eqref{rega} and~\eqref{regb},
\begin{align}
\EX_{\rho,\pi}  \left[ R_T(\mathcal{A},\rho ,\pi ) \right] + \EX_{\rho',\pi}  \left[ R_T(\mathcal{A},\rho' , \pi ) \right] \geq \frac{cT\Delta}{2}\left( P_{\rho,\pi}\left( T_a(T)\leq \frac{n}{2} \right) + P_{\rho' , \pi} \left( T_a(T) > \frac{T}{2} \right) \right).
\end{align}
Using Lemma \ref{pinsker} we can bound the above expression as,
\begin{align}\label{regb2}
\EX_{\rho,\pi}  \left[ R_T(\mathcal{A},\rho ,\pi ) \right] + \EX_{\rho',\pi}  \left[ R_T(\mathcal{A},\rho' , \pi ) \right] \geq \frac{cT\Delta}{4}\exp\left( -D(P_{\rho , \pi}  \| P_{\rho' , \pi}  ) \right).
\end{align}
Using Lemma \ref{divergence} combined with the data-processing inequality, we can bound the Kullback-divergence as,
\begin{align}
D(P_{\rho , \pi}  \| P_{\rho' , \pi}  ) =& \sum_{a=1}^k \EX_{\rho , \pi } (T_a(T) )  D \big( P_{\rho}(\cdot | a) \big\| P_{\rho'}(\cdot | a) \big) )\leq D(\rho \| \rho' )\sum_{a=1}^k \EX_{\rho , \pi } [T_a(T) ]\nonumber\\
=& T D(\rho \| \rho' ).\label{brelent}
\end{align}
where $D(\rho \| \rho' )= \Tr\rho\log\rho-\Tr\rho\log\rho'$ is the relative entropy between $\rho$ and $\rho'$.
Now define $f(\Delta) = D(\rho \| \rho' )$ for the corresponding expressions of $\rho$ and $\rho'$ and note that $f(0) = 0$ since for $\Delta=0$, $\rho = \rho'$ and $f'(0)=0$ using the convexity of the quantum relative entropy. Thus, using Taylor's theorem, we can express $f(\Delta )$ as,
\begin{align}
f(\Delta ) = \frac{\Delta^2}{2}f''(\chi ),
\end{align}
for some $\chi \in [0,\Delta ] $. Define, 
\begin{align}
c_f = \max f''(\chi ) \quad \text{for}\quad \chi\in \left[0,\frac{1}{2} \right].
\end{align}
Note that $f''(\chi)$ is well defined since the states $\rho,\rho'$ have full support, and then $f(\chi )$ is a smooth function. Thus using that $f''(\chi)\leq c_f$ for $\Delta\in [0,\frac{1}{2} ]$ we can plug the above expression into~\eqref{brelent} and using~\eqref{regb2} we have,
\begin{align}
\EX_{\rho,\pi}  \left[ R_T(\mathcal{A},\rho ,\pi ) \right] + \EX_{\rho',\pi}  \left[ R_T(\mathcal{A},\rho' , \pi ) \right]  \geq \frac{cT\Delta}{4}\exp\left( -\frac{T\Delta^2}{2} c_f\right).
\end{align}
Finally, if we choose $\Delta = \frac{1}{2\sqrt{T}}$, 
\begin{align}
\EX_{\rho,\pi}  \left[ R_T(\mathcal{A},\rho ,\pi ) \right] + \EX_{\rho',\pi}  \left[ R_T(\mathcal{A},\rho' , \pi ) \right]  \geq \frac{c}{8}\exp\left( -\frac{1}{8} c_f \right)\sqrt{T},
\end{align}
and the result follows using 
\begin{align}
2\max \left\lbrace \EX_{\rho,\pi}  \left[ R_T(\mathcal{A},\rho ,\pi ) \right] , \EX_{\rho',\pi}  \left[  R_T(\mathcal{A},\rho' , \pi) \right] \right\rbrace \geq \EX_{\rho,\pi}  \left[ R_T(\mathcal{A},\rho ,\pi ) \right] + \EX_{\rho',\pi}  \left[ R_T(\mathcal{A},\rho' , \pi ) \right].
\end{align}
\end{proof}

\subsection{Pauli observables}

In this section, we analyze the case of Pauli observables to show how the number of available actions affects the regret bound. If we consider a $d=2^m-$dimensional ($m$ qubits) Hilbert space, there are $d^2$ different Pauli observables and they can be expressed as the $m$-fold tensor product of the $2\times 2$ Pauli matrices. Let $\sigma_1,...,\sigma_{d^2}$ denote all the possible Pauli observables. Each $\sigma_i$ can be expressed as $\sigma_i = \Pi^+_i - \Pi^-_i$, where $\Pi^+_i ,\Pi^-_i$ are projectors associated to the $+1$ and $-1$ subspaces and they describe the 2 possible outcomes when we perform measurements using Pauli observables.

\begin{theorem}\label{th:pauliobservables}
Let $T\in\mathbb{N}$. For any policy $\pi$ and action set of observables $\mathcal{A}$ comprised of $k$ distinct length-$m$ strings of single-qubit Pauli observables for $d=2^m$ with $m\in\mathbb{N}$, there exists an environment $\rho \in \mathcal{S}_d$ such that
\begin{align}
	\EX_{\rho,\pi} [ R_T(\mathcal{A},\rho , \pi ) ] \geq \frac{3}{100}	\sqrt{(k-1)T},
\end{align}
for $T \geq 2( k-1)$.
\end{theorem}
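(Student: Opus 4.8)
The plan is to adapt the classical minimax lower bound for $k$-armed bandits (the two-point / change-of-measure argument, in the same spirit as Theorem~\ref{generalower}) to the Pauli setting, exploiting the clean linear structure that Pauli observables provide. First I would relabel the action set as $\{\sigma_1,\dots,\sigma_k\}$ and parametrize candidate environments through their Pauli coefficients, $\rho=\frac1d(\mathbb{I}+\sum_i\theta_i\sigma_i)$. Since distinct Pauli strings are traceless and orthogonal, $\Tr(\sigma_i\sigma_j)=d\,\delta_{ij}$, the expected reward of arm $i$ is exactly $\Tr(\rho\sigma_i)=\theta_i$, and the $\pm1$ outcome of arm $i$ is $+1$ with probability $(1+\theta_i)/2$; so the problem reduces to a linear bandit over the parameter $\theta$. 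I would then fix a small $\Delta\in(0,1/2]$ and take the base instance $\rho=\frac1d(\mathbb{I}+\Delta\sigma_1)$, in which arm $1$ is optimal with reward $\Delta$ and every other arm has reward $0$ and sub-optimality gap $\Delta$.

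Next comes a pigeonhole step: under $\rho$ we have $\sum_{i\neq1}\EX_{\rho,\pi}[T_i(T)]\le T$, so some arm $j\neq1$ is played at most $T/(k-1)$ times in expectation. I would construct the alternative $\rho'=\frac1d(\mathbb{I}+\Delta\sigma_1+2\Delta\sigma_j)$, in which arm $j$ becomes optimal (reward $2\Delta$) while every other arm keeps gap $\ge\Delta$. The crucial observation is that $\rho'-\rho=\tfrac{2\Delta}{d}\sigma_j$, so by orthogonality the reward distributions of $\rho$ and $\rho'$ coincide on every arm except $j$; by the divergence decomposition (Lemma~\ref{divergence}) the KL divergence localizes to $D(P_{\rho,\pi}\|P_{\rho',\pi})=\EX_{\rho,\pi}[T_j(T)]\,D\big(\mathrm{Bern}(\tfrac12)\big\|\mathrm{Bern}(\tfrac12+\Delta)\big)\le\frac{T}{k-1}\,c\,\Delta^2$, where $-\tfrac12\log(1-4\Delta^2)$ supplies an explicit constant $c$. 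Introducing the event $A=\{T_j(T)\ge T/2\}$ and using the regret decomposition~\eqref{regretarms}, I would bound $\EX_{\rho,\pi}[R_T]\ge\frac{\Delta T}{2}P_{\rho,\pi}(A)$ and $\EX_{\rho',\pi}[R_T]\ge\frac{\Delta T}{2}P_{\rho',\pi}(A^c)$, then apply Bretagnolle--Huber (Lemma~\ref{pinsker}) to obtain
\begin{align}
\EX_{\rho,\pi}[R_T]+\EX_{\rho',\pi}[R_T]\ge\frac{\Delta T}{4}\exp\!\Big(-\frac{c\,T\Delta^2}{k-1}\Big).
\end{align}

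The two delicate points, which I expect to be the main obstacle, are the positivity of $\rho'$ and the constant bookkeeping needed to reach exactly $3/100$. For positivity I would use that any two distinct Pauli strings either commute or anticommute: in the commuting case $\Delta\sigma_1+2\Delta\sigma_j$ has smallest eigenvalue $-3\Delta$ (the worst case, since anticommuting gives $\pm\sqrt5\,\Delta$), so $\rho'\ge0$ as long as $\Delta\le1/3$. Choosing the optimizer $\Delta=\Theta(\sqrt{(k-1)/T})$ of the displayed bound makes the exponent a fixed constant and yields a bound of the form $(\text{const})\sqrt{(k-1)T}$; the hypothesis $T\ge2(k-1)$ keeps $\Delta$ safely below the $1/3$ threshold, so $\rho'$ is a genuine state. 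Tightening $c$ via $-\log(1-x)\le x/(1-x)$ on the small interval permitted by $T\ge2(k-1)$, together with $2\max\{\EX_{\rho,\pi}[R_T],\EX_{\rho',\pi}[R_T]\}\ge\EX_{\rho,\pi}[R_T]+\EX_{\rho',\pi}[R_T]$, upgrades the constant to $3/100$ and produces an environment (namely $\rho$ or $\rho'$) with $\EX_{\rho,\pi}[R_T]\ge\frac{3}{100}\sqrt{(k-1)T}$.
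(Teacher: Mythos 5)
Your proposal is correct and follows essentially the same route as the paper's proof: the same pair $\rho=\frac1d(\mathbb{I}+\Delta\sigma_1)$, $\rho'=\rho+\frac{2\Delta}{d}\sigma_j$ with $j$ the least-played alternative arm, the same localization of the KL divergence to arm $j$ via Lemma~\ref{divergence}, Markov plus Bretagnolle--Huber, and $\Delta=\Theta(\sqrt{(k-1)/T})$ with $T\ge 2(k-1)$ guaranteeing $\Delta\le 1/3$ for positivity. The only (cosmetic) differences are that the paper conditions its Bretagnolle--Huber event on $T_1(T)$ rather than $T_j(T)$, and that it reaches the constant $3/100$ by the specific choice $\Delta=\frac12\sqrt{1-e^{-(k-1)/T}}$ (which makes the KL bound exactly $1/2$) rather than by your looser $-\log(1-x)\le x/(1-x)$ estimate, which you would need to apply on the self-consistently restricted range of $\Delta$ for the numbers to squeak past $3/100$.
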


\begin{proof}
The case $k = 1$ is trivial since $R_T = 0$ always. Suppose $k>1$ and let $0\leq \Delta \leq \frac{1}{3}$ be a constant to be chosen later. Pick $\sigma_1 \in \mathcal{A}$ and define the following environment,
\begin{align}
\rho := \frac{\mathbb{I}}{d} + \frac{\Delta}{d}\sigma_1.
\end{align}
Define
\begin{align}\label{lessplayed}
l := \argmin_{j>1} \EX_{\rho,\pi} \left[ T_j (n)  \right]
\end{align}
as the index for the least expected picked observable different from $\sigma_1$. Define a second environment as,
\begin{align}
\rho' := \frac{\mathbb{I}}{d} + \frac{\Delta}{d}\sigma_1 + \frac{2\Delta}{d}\sigma_l.
\end{align}
Note that $\rho \geq 0$ and $\rho' \geq 0$ since $0\leq \Delta \leq \frac{1}{3}$.
In order to compute the sub-optimality gaps \eqref{suboptimality} we compute the following quantities for $\sigma_i\in\mathcal{A}$,
\begin{align}\label{mean1pauli}
\Tr(\rho \sigma_i) = \frac{\Delta}{d}\Tr(\sigma_1\sigma_i) = \begin{cases}
\Delta \quad \text{if}\quad  i=1.\\
0 \quad \text{otherwise.}
\end{cases}
\end{align}
\begin{align}\label{mean2pauli}
\Tr(\rho' \sigma_i) = \frac{\Delta}{d}\left( \Tr(\sigma_1\sigma_i)+2\Tr(\sigma_l\sigma_i) \right) = \begin{cases}
2\Delta \quad i=l. \\ \Delta \quad i=1 . \\ 0 \quad \text{otherwise.}
\end{cases}
\end{align}
For $\rho$ the sub-optimality \eqref{suboptimality} gaps are $\Delta_i = \Delta$ for $i\neq 1$ and $\Delta_i = 0 $ for $i=1$. Thus we can compute the regret as,
\begin{align}
\EX_{\rho,\pi}[ R_T(\mathcal{A},\rho , \pi) ] = \sum_{a=1}^k \EX_{\rho,\pi}[ T_a(T) ]\Delta_a = \Delta \sum_{a\neq 1} \EX_{\rho,\pi}[ T_a(T) ]
\end{align}
Using Markov's inequality and $ T = \sum_{i=a}^k \EX_{\rho,\pi}[ T_a(T) ]$,
\begin{align}\label{pauli1}
\EX_{\rho,\pi} [R_T(\mathcal{A},\rho , \pi ) ] = \Delta \EX_{\rho,\pi}\left[ T - T_1(T) \right] \geq \frac{n\Delta}{2}P_{\rho , \pi}\left( T_1(T) \leq \frac{n}{2} \right).
\end{align}
For $\rho'$ we bound the regret for the term $i=1$, using that the sub-optimality gap is $\Delta_1 = \Delta$ and Markov's inequality we have,
\begin{align}\label{paulil}
\EX_{\rho',\pi} [ R_T(\mathcal{A},\rho' , \pi ) ] = \sum_{a=1}^k \EX_{\rho',\pi}[ T_a(T) ]\Delta_a \geq \Delta \EX_{\rho',\pi}[ T_1(T) ]\geq \frac{T\Delta}{2} P_{\rho' , \pi}\left( T_1(T) > \frac{T}{2} \right) .
\end{align}
%
%\begin{align}
%R_n(\rho_a , \pi ) + R_n(\rho_b , \pi ) \geq \frac{n\Delta}{2}\left( \mathbb{P}_{\rho_a , \pi}\left( T_a(n) \leq \frac{n}{2} \right) +  \mathbb{P}_{\rho_b , \pi}\left( T_a(n) > \frac{n}{2} \right)  \right)
%\end{align}
Combining~\eqref{pauli1},\eqref{paulil} and the Bretagnolle-Huber inequality \eqref{pinsker} we have,
\begin{align}\label{pauliregsum}
\EX_{\rho,\pi} [R_T(\mathcal{A},\rho , \pi )] + \EX_{\rho',\pi} [ R_T(\mathcal{A},\rho' , \pi ) ] \geq \frac{T\Delta}{4}\exp\left(-D(P_{\rho , \pi} \| P_{\rho' , \pi})  \right).
\end{align}
Note that the probabilities of the rewards are just Bernoulli distributions since the observables $\sigma_i\in\mathcal{A}$ have two outcomes, +1 and -1. Using Lemma \ref{divergence} we can decompose the relative entropy as
\begin{align}
D (P_{\rho , \pi} \| P_{\rho' , \pi }) = \sum_{a=1}^k \EX_{\rho ,\pi} [T_a ( T) ] D \big( P_{\rho}(\cdot | a) \big\| P_{\rho'}(\cdot | a) \big) .
\end{align}
Note that $P_{\rho}(\cdot | a) , P_{\rho'}(\cdot | a) $ are not equal only when $a=l$, so $D\big( P_{\rho}(\cdot | a) \big\| P_{\rho'}(\cdot | a) \big)  = 0$ for $a\neq l$ and,
\begin{align}\label{paulidiv1}
D (P_{\rho , \pi} ,P_{\rho' , \pi }) = \EX_{\rho ,\pi} [T_l (T ) ] D\big( P_{\rho}(\cdot | l) \big\| P_{\rho'}(\cdot | l) \big) .
\end{align}
Using~\eqref{mean1pauli} and~\eqref{mean2pauli} we have 
\begin{align}
P_{\rho}(1 | l) = \frac{1}{2}, \quad P_{\rho'}(1 | l) = \frac{1}{2}+\Delta.
\end{align}
Then we can compute the Kullback–Leibler divergence as,
\begin{align}\label{paulidivergence}
 D\big( P_{\rho}(\cdot | l) \big\| P_{\rho'}(\cdot | l) \big)  = \frac{1}{2}\log \frac{\frac{1}{2}}{\frac{1}{2}+\Delta}+\frac{1}{2}\log \frac{\frac{1}{2}}{\frac{1}{2}-\Delta} = \frac{1}{2}\log\frac{1}{1-4\Delta^2}.
\end{align}
Note that using the definition of $l$ $\eqref{lessplayed}$ it follows that 
\begin{align}
T = \sum_{i=1}^k E_{\rho,\pi}[T_i( T ) ] \geq \sum_{i=2}^k E_{\rho,\pi}[T_i( T ) ] \newline \geq (k-1)E_{\rho,\pi}[T_l( T ) ].
\end{align}
Thus, it holds that, 
 \begin{align}\label{pauliless}
 E_{\rho , \pi} [T_l ( T ) ] \leq \frac{T}{k-1}.
 \end{align}
Combining~\eqref{pauliregsum},\eqref{paulidiv1},\eqref{paulidivergence} and \eqref{pauliless} we have,
\begin{align}
\EX_{\rho,\pi} [R_T(\mathcal{A},\rho , \pi )] + \EX_{\rho',\pi} [ R_T(\mathcal{A},\rho' , \pi ) ] \geq \frac{T\Delta}{4}\exp\left(-\frac{T}{2(k-1)}\log\frac{1}{1-4\Delta^2}  \right).
\end{align}
Finally choosing $\Delta = \frac{1}{2}\sqrt{1-e^{-\frac{k-1}{T}}}$ and using that $e^{-x}\leq (e^{-1}-1)x+1$ for $0\leq x\leq 1$ we arrive to the result,
\begin{align}
\EX_{\rho,\pi} [R_T(\mathcal{A},\rho , \pi )] + \EX_{\rho',\pi} [ R_T(\mathcal{A},\rho' , \pi ) ]  \geq \frac{e^{-1/2}\sqrt{1-e^{-1}}}{8}\sqrt{(k-1)n}.
\end{align}
The condition $T\geq 2(k-1)$ suffices to have $0 \leq \Delta\leq \frac{1}{3}$. The theorem follows using $ \frac{e^{-1/2}\sqrt{1-e^{-1}}}{8} > \frac{3}{50}$.
\end{proof}

\subsection{Pure state environment}

In the previous setting, we considered mixed-state environments to construct worst-case instances that led to the $\sqrt{T}$ lower bound on regret. A natural question that arises is whether a similar behavior persists in the case of pure-state environments. That is, does the fundamental exploration-exploitation tradeoff still impose the same scaling on regret, or do pure states with discrete action sets allow for a more efficient learning strategy? To investigate this, we focus on a simplified yet representative scenario: 1-qubit pure-state environments, where the action set consists of the Pauli observables.

\begin{theorem}\label{th:pauliobservablesdiscrete}
Let $T\in \mathbb{N}$. For any policy $\pi$ and action set $\mathcal{A}$ containing the Pauli observables for 1-qubit, there exists an environment $|\psi \rangle \! \langle \psi | \in \mathcal{S}_2^*$ such that
\[  \EX_{\psi , \pi } [ R_T(\mathcal{A},\psi , \pi  ) ] \geq \frac{3}{200}\sqrt{T}. \]
\end{theorem}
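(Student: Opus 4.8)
The plan is to reduce this to a two-point argument on the Bloch sphere, in direct analogy with the proof of Theorem~\ref{th:pauliobservables}, but adapted to the rigidity of pure states. Writing a qubit pure state as $\Pi_\theta = \frac{\mathbb{I} + \theta\cdot\sigma}{2}$ with $\|\theta\|_2 = 1$, the expected reward of the Pauli observable $\sigma_i$ is simply $\Tr(\Pi_\theta\sigma_i) = \theta_i$, so among the three arms the optimal one is the largest Bloch coordinate. The essential new difficulty compared with the mixed-state case is that we can no longer shrink the sub-optimality gap by scaling the Bloch vector (its norm is pinned to $1$); instead I would \emph{rotate} the state near the decision boundary $\theta_x = \theta_y$, where the identity of the best arm flips between $\sigma_x$ and $\sigma_y$. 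Concretely, fix a small angle $\delta > 0$ and take the two environments
\begin{align}
\theta = \big(\cos(\tfrac{\pi}{4}-\delta),\ \sin(\tfrac{\pi}{4}-\delta),\ 0\big), \qquad \theta' = \big(\sin(\tfrac{\pi}{4}-\delta),\ \cos(\tfrac{\pi}{4}-\delta),\ 0\big),
\end{align}
which are reflections of one another across the diagonal: for $\theta$ the optimal arm is $\sigma_x$ and for $\theta'$ it is $\sigma_y$, while both have vanishing $z$-coordinate and hence give a fair coin on $\sigma_z$.

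Next I would expand the regret through the gap decomposition~\eqref{regretarms}. For $\theta$ the gaps are $\Delta_{\sigma_x} = 0$, $\Delta_{\sigma_y} = \cos(\tfrac{\pi}{4}-\delta) - \sin(\tfrac{\pi}{4}-\delta) = \sqrt{2}\sin\delta$, and $\Delta_{\sigma_z} = \cos(\tfrac{\pi}{4}-\delta) \geq \tfrac{1}{\sqrt 2}$ (symmetrically for $\theta'$ with $x,y$ interchanged). The observation that disposes of the third arm is that $\Delta_{\sigma_z} \geq \Delta_{\sigma_y}$ for all small $\delta$, so playing the uninformative arm $\sigma_z$ is at least as costly as playing the wrong in-plane arm; hence
\begin{align}
\EX_{\theta,\pi}\big[R_T(\mathcal{A},\theta,\pi)\big] \geq \sqrt{2}\sin\delta\,\big(T - \EX_{\theta,\pi}[T_{\sigma_x}(T)]\big),
\end{align}
and the analogous bound for $\theta'$ with $T_{\sigma_y}(T)$, using $T - T_{\sigma_y} \geq T_{\sigma_x}$. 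Introducing the event $A = \{T_{\sigma_x}(T) > T/2\}$ and applying Markov's inequality exactly as in~\eqref{pauli1}--\eqref{paulil} turns the two right-hand sides into $\tfrac{T}{2}\sqrt{2}\sin\delta$ times $P_{\theta,\pi}(A^c)$ and $P_{\theta',\pi}(A)$ respectively.

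Summing the two regrets and applying the Bretagnolle--Huber inequality (Lemma~\ref{pinsker}) then lower-bounds the sum by $\tfrac{T}{4}\sqrt{2}\sin\delta\,\exp(-D(P_{\theta,\pi}\|P_{\theta',\pi}))$. To control the divergence I would invoke the decomposition Lemma~\ref{divergence}: the $\sigma_z$ term drops out since both states induce the same distribution on that arm, and the surviving $\sigma_x,\sigma_y$ terms are Kullback--Leibler divergences between Bernoulli laws whose means differ by $\tfrac{1}{\sqrt 2}\sin\delta$, each bounded by $O(\sin^2\delta)$. Using $\EX_{\theta,\pi}[T_{\sigma_x}] + \EX_{\theta,\pi}[T_{\sigma_y}] \leq T$ then gives $D(P_{\theta,\pi}\|P_{\theta',\pi}) \leq C\,T\sin^2\delta$ for an absolute constant $C$. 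Finally, choosing $\delta$ of order $1/\sqrt{T}$ (so that $T\sin^2\delta = O(1)$ while $\sqrt{T}\sin\delta = \Omega(1)$) balances the two factors; passing from the sum to $2\max\{\EX_{\theta,\pi}[R_T],\EX_{\theta',\pi}[R_T]\}$ as in the previous proofs yields $\EX[R_T] = \Omega(\sqrt{T})$, and optimizing $\delta$ together with the admissible range ensuring $0\le\sqrt2\sin\delta\le\tfrac1{\sqrt2}$ produces the stated constant $\tfrac{3}{200}$.

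The step I expect to require the most care is the treatment of the uninformative arm $\sigma_z$. Unlike the mixed-state construction of Theorem~\ref{th:pauliobservables}, where the sub-optimal arms could be made informationally equivalent under both environments, here $\sigma_z$ carries a large sub-optimality gap yet contributes nothing to distinguishing $\theta$ from $\theta'$; I must check that this large gap only \emph{helps} the lower bound (through $\Delta_{\sigma_z} \geq \Delta_{\sigma_y}$) rather than opening a regime that evades the argument, while simultaneously contributing zero to the divergence decomposition. A secondary technical point is that the pure-state construction forces \emph{both} in-plane reward distributions to differ, so two Bernoulli divergence terms must be bounded at once, rather than the single term that appeared in the mixed-state proof.
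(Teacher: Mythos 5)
Your proposal is correct, but it takes a genuinely different route from the paper. The paper picks two environments $|\psi_\pm\rangle$ from an asymmetric one-parameter family for which the optimal arms are $\sigma_x$ and $\sigma_z$ respectively (with $\sigma_y$ uninformative), checks numerically that all nonzero sub-optimality gaps are at least $\tfrac{3}{10}\Delta$, and then — crucially — applies the \emph{modified} Bretagnolle--Huber inequality (Lemma~\ref{pinsker2}) together with Lemma~\ref{lemma:D1/2ineq} to bound the divergence at the quantum level by $T\,D_{\frac12}(\psi_+\|\psi_-) = -T\log|\langle\psi_+|\psi_-\rangle|^2$, never decomposing arm by arm. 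You instead use a symmetric reflection of the Bloch vector about the $\theta_x=\theta_y$ diagonal, the \emph{standard} KL Bretagnolle--Huber (Lemma~\ref{pinsker}), and the divergence decomposition Lemma~\ref{divergence} with per-arm Bernoulli KL bounds. Your route is valid here precisely because the Bernoulli parameters of the in-plane arms sit near $\tfrac{1+1/\sqrt2}{2}$, bounded away from $\{0,1\}$, so each per-arm KL is $O(\sin^2\delta)$ with an explicit constant; this is the same reason the continuous-action PSMAQB lower bound in Section~\ref{sec:lower_bound} fails (there the relevant Bernoulli parameters approach $0$ or $1$ and the KL blows up), so it is worth being aware that the fidelity-based route of the paper is the more robust tool even though the KL route suffices for three Pauli arms. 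Your symmetric construction also makes the gaps and the vanishing of the $\sigma_z$ divergence term transparent, whereas the paper has to compute and plot four gap functions. Two minor points: your handling of $\sigma_z$ via $\Delta_{\sigma_z}\ge\Delta_{\sigma_y}$ is exactly right and mirrors what the paper does implicitly; and the final constant $\tfrac{3}{200}$ is asserted rather than derived, but a quick optimization of your bound ($\sin\delta = c/\sqrt{T}$, $D\le 4Tc^2/T$, maximize $\tfrac{\sqrt2}{4}c\,e^{-4c^2}$ at $c=\tfrac{1}{2\sqrt2}$) gives a sum of regrets at least $\tfrac18 e^{-1/2}\sqrt{T}\approx 0.076\sqrt{T}$, so the maximum comfortably exceeds $\tfrac{3}{200}\sqrt{T}$.
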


\begin{proof}
Let $\Delta \in [0,1]$ and define the following the following two pure-states environments,
\begin{align}
| \psi_{\pm} \rangle := c_{\pm} \left( 1+\frac{1\pm\Delta}{\sqrt{2}} |0 \rangle +  \frac{1+\Delta}{\sqrt{2}} |1\rangle \right),
\end{align}
where $c_{\pm} = \frac{1}{\sqrt{\left( 1+\frac{1\pm\Delta}{\sqrt{2}} \right)^2 + \left(\frac{1\pm\Delta}{\sqrt{2}}  \right)^2 }}$.
Let $|0\rangle, |+\rangle , |\psi_y \rangle$ be the eigenvector of the Pauli observables $\sigma_z , \sigma_x , \sigma_y $ respectively and define $x_{\pm} = \frac{1\pm \Delta}{\sqrt{2}}$. Then, compute the following quantities
\begin{align}\label{probpure}
p^x_\pm &= | \langle + | \psi_{\pm} \rangle |^2 = \frac{(1+2x_\pm )^2}{2\left( (1+x_{\pm})^2+x_{\pm}^2 \right)} , \quad
p^z_\pm = | \langle 0 | \psi_{\pm} \rangle |^2 = \frac{(1+x_{\pm})^2}{(1+x_{\pm})^2+x_{\pm}^2} , \\
p^y_\pm &= | \langle \psi_y | \psi_\pm \rangle |^2 = \frac{1}{2}. \nonumber
\end{align}
Note that the expectation values for each action on the environments  can be computed as,
\begin{align}\label{gappure}
\Tr \left( \sigma_i |\psi_\pm \rangle \! \langle \psi_\pm | \right) = 2p^i_\pm -1,
\end{align}
for $i = x,y,z$. Note that $\Tr \left( \sigma_y |\psi_\pm \rangle \! \langle \psi_\pm | \right) = 0$, 
$\Tr \left( \sigma_x |\psi_+ \rangle \! \langle \psi_+ | \right) \geq \Tr \left( \sigma_z |\psi_+ \rangle \! \langle \psi_+ | \right)  \geq $ 0 and 
$\Tr \left( \sigma_z |\psi_- \rangle \! \langle \psi_- | \right) \geq \Tr \left( \sigma_x |\psi_- \rangle \! \langle \psi_- | \right) \geq 0 $  for $\Delta \in [0,1]$. Thus, for the environment $|\psi_+ \rangle$ the optimal action is given by $\sigma_x$ and for $|\psi_- \rangle$ by $\sigma_z$. Let $\Delta^{\pm}_i$ denote the sub-optimality gap for the $ i$-th action in the environment $\psi_\pm$ respectively. Using~\eqref{probpure} and \eqref{gappure} we can compute the sub-optimality gaps as,
\begin{align}\label{puregaps}
\Delta^+_x = 0, \quad 
\Delta^+_z = \frac{2+\Delta}{(1+\Delta)^2 + \sqrt{2}(1+\Delta)+1}\Delta, \quad 
\Delta^+_y = \frac{(\Delta+1)(1+\sqrt{2}+\Delta )}{\Delta^2 + (2+\sqrt{2})\Delta +2 +\sqrt{2}} \\
\Delta^-_x = \frac{2-\Delta}{(1-\Delta)^2 + \sqrt{2}(1-\Delta ) + 1}\Delta,\quad 
\Delta^-_z = 0 , \quad 
\Delta^-_y = \frac{-\sqrt{2}\Delta +1+\sqrt{2}}{\Delta^2-(\sqrt{2}+2)\Delta+\sqrt{2}+2}.
\end{align}

\begin{figure}[h]
\centering
\begin{overpic}[percent,width=0.75\textwidth]{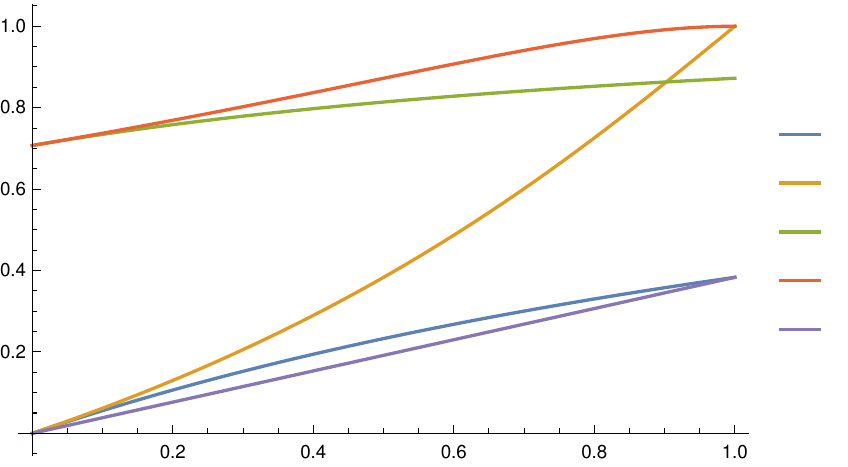}
\put(98,38){$\Delta^+_z$}
\put(98,33){$\Delta^-_x$}
\put(98,27){$\Delta^+_y$}
\put(98,21){$\Delta^-_y$}
\put(98,15){$\frac{3\Delta}{5+2\sqrt{2}}$}
\end{overpic}
\centering
\caption{Sub-optimality gaps for the environments $\psi_+,\psi_-$ and the lower bound $\frac{3}{5+2\sqrt(2)}\Delta $ in the range $\Delta \in [0,1]$.}
\end{figure}

Analysing the behaviour of $\Delta^+_z,\Delta^+_y,\Delta^-_x,\Delta^-_y$ for $\Delta \in [0,1]$ we see that all these quantities are lower bounded by $\frac{3}{5+2\sqrt(2)}\Delta$. In order to simplify the expression we will use that $\frac{3}{5+2\sqrt(2)}\Delta > \frac{3}{10}\Delta  $ and take $\frac{3}{10}\Delta$ as the lower bound.

We start analysing the regret for $\psi_+$. Using the Equations~\eqref{puregaps} for the sub-optimality gaps and the bound $\Delta^+_z,\Delta^+_y\geq \frac{3}{10}\Delta$ we have
\begin{align}
\EX_{\psi_+,\pi} [ R_T(\mathcal{A},\psi_+ , \pi  ) ] = \sum_{i=x,y,z} \EX_{\psi_+ , \pi}[ T_i(T)]\Delta^+_i \geq \frac{3}{10}\Delta \sum_{i=y,z}\EX_{\psi_+ , \pi}(T_i(T)).
\end{align}
Now using $T = \sum_{i=x,y,z} \EX_{\psi_+ , \pi } [T_i (T) ]$ and applying Markov's inequality we have
\begin{align}\label{regret+}
\EX_{\psi_+,\pi} [ R_n(\mathcal{A},\psi_+ , \pi  ) ]\geq \frac{3}{10}\Delta \EX_{\psi_+,\pi} [ n - T_x (n)  ] \geq \frac{3T\Delta}{20}P_{\psi_+ , \pi} \left( T_x (T) \leq \frac{T}{2} \right).
\end{align}
In order to bound the regret for the environment $\psi_-$, we bound just the first term, use Markov inequality, and the bound $\Delta^-_x$ in order to obtain
\begin{align}\label{regret-}
\EX_{\psi_- , \pi} [R_T(\mathcal{A},\psi_-, \pi  ) ] = \sum_{i=x,y,z} \EX_{\psi_- , \pi}[T_i(T)]\Delta^-_i \geq \Delta^-_x \EX_{\psi_- , \pi} [ T_i(T) ] \geq \frac{3T\Delta}{20} P_{\psi_- , \pi} \left( T_x (T) > \frac{T}{2} \right).
\end{align}
Combining~\eqref{regret+} and \eqref{regret-}
\begin{align}
\EX_{\psi_+ , \pi} [ R_T(\mathcal{A},\psi_+ , \pi  )] &+ \EX_{\psi_- , \pi} [R_T(\mathcal{A},\psi_-, \pi  ) ] \\
&\geq \frac{3T\Delta}{20} \left( P_{\psi_+ , \pi} \left( T_x ( T ) \leq \frac{T}{2} \right) + P_{\psi_- , \pi} \left( T_x (T) > \frac{T}{2} \right) \right).
\end{align}
Applying Lemma \ref{pinsker2} together with Lemma \ref{lemma:D1/2ineq} we obtain
\begin{align}\label{discretesumreg2}
\EX_{\psi_+ , \pi} [ R_T(\mathcal{A},\psi_+ , \pi  )] + \EX_{\psi_- , \pi} [R_T(\mathcal{A},\psi_-, \pi  ) ] \geq \frac{3n\Delta}{40}\exp \left( -T D_{\frac{1}{2}} (\psi_+ \| \psi_- )  \right),
\end{align}
where $D_{\frac{1}{2}} (\psi_+ \| \psi_- ) = -\log | \langle \psi_+ | \psi_- \rangle |^2$. The overlap between the two environments can be computed as
\begin{align}
\langle \psi_+ | \psi_- \rangle = 2+\sqrt{2} - \Delta^2,
\end{align}
and we use it to give the following upper bound
\begin{align}\label{logoverlap}
-\log | \langle \psi_+ | \psi_- \rangle |= \log \frac{1}{2+\sqrt{2}-\Delta^2} \leq \frac{\Delta^2 -1 -\sqrt{2}}{2+\sqrt{2}-\Delta^2}\leq \frac{\Delta^2}{2+\sqrt{2}-\Delta^2} \leq \frac{\Delta^2}{1+\sqrt{2}},
\end{align}
where the last inequality follows from $\frac{1}{2+\sqrt{2}-\Delta^2} \leq \frac{1}{1+\sqrt{2}}$ for $\Delta \in [0,1]$. Thus, plugging~\eqref{logoverlap} into~\eqref{discretesumreg2} we have
\begin{align}
\EX_{\psi_+ , \pi} [ R_T(\mathcal{A},\psi_+ , \pi  )] + \EX_{\psi_- , \pi} [R_T(\mathcal{A},\psi_-, \pi  ) ] \geq \frac{3T\Delta}{40}\exp \left( -\frac{2}{1+\sqrt{2}} T\Delta^2 \right).
\end{align}
Finally, if we choose $\Delta = \frac{1}{\sqrt{T}}$,
\begin{align}
\EX_{\psi_+ , \pi} [ R_T(\mathcal{A},\psi_+ , \pi  )] + \EX_{\psi_- , \pi} [R_T(\mathcal{A},\psi_-, \pi  ) ]  \geq \frac{3}{40} \exp \left( -\frac{2}{1+\sqrt{2}}  \right) \sqrt{T},
\end{align}
and the result follows using $ \frac{3}{40} \exp \left( -\frac{2}{1+\sqrt{2}}  \right) \geq \frac{3}{100}$.
\end{proof}

\subsection{Problem dependent lower bound: one qubit environment and rank-1 projectors}\label{sec:discrete1qubit}

The final lower bound we study for discrete action sets has a different flavor from the previous ones. Until now, we have focused on worst-case bounds to establish minimax regret guarantees. In contrast, this last bound aims to provide an explicit expression for the constant in Theorem~\ref{generalower}, which depends on the action set. This results in a \textit{problem-dependent bound} that quantifies the difficulty of the problem as a function of the elements in the action set. To make the analysis more concrete, we restrict our study to environments consisting of \textit{one-qubit quantum states}, with the action set given by a collection of rank-1 projectors
\begin{align}
\mathcal{A} = \left\lbrace \Pi_1, \dots, \Pi_k \right\rbrace,
\end{align}
where each \(\Pi_i\) is a rank-1 matrix satisfying \(\Pi_i^2 = \Pi_i\), or $\Pi\in\mathcal{S}_2^*$.

In order to prove the main result we will need the computation of the relative entropy between two quantum states given by the following Lemma.
\begin{lemma}\label{relativeentropy}
Let $\rho = \frac{\mathbb{I}}{2} + \frac{\Delta}{ 2} \sigma_x$ and $\rho '= \frac{\mathbb{I}}{2} + \frac{\Delta}{ 2} \sigma_z$ be two one-qubit density matrices. Then, their quantum relative entropy can be computed as,
\begin{align}
	D(\rho \| \rho' ) = \frac{\Delta}{2} \log \left( \frac{1+\Delta}{1-\Delta} \right).
\end{align}
\end{lemma}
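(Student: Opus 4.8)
The plan is to evaluate the two terms in $D(\rho\|\rho') = \Tr(\rho\log\rho) - \Tr(\rho\log\rho')$ separately, exploiting the Bloch-vector form of both states. First I would diagonalise each state. Writing $\rho = \frac{1}{2}(\mathbb{I}+\Delta\sigma_x)$ and $\rho' = \frac{1}{2}(\mathbb{I}+\Delta\sigma_z)$, both share the spectrum $\lbrace\tfrac{1+\Delta}{2}, \tfrac{1-\Delta}{2}\rbrace$, but $\rho$ is diagonal in the eigenbasis $\lbrace|+\rangle, |-\rangle\rbrace$ of $\sigma_x$ while $\rho'$ is diagonal in the computational basis $\lbrace|0\rangle, |1\rangle\rbrace$ of $\sigma_z$.

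The first term is immediate from the spectrum of $\rho$, giving
\begin{align}
\Tr(\rho\log\rho) = \frac{1+\Delta}{2}\log\frac{1+\Delta}{2} + \frac{1-\Delta}{2}\log\frac{1-\Delta}{2}.
\end{align}
The second term requires more care, since $\rho$ and $\rho'$ do not commute and $\Tr(\rho\log\rho')$ is therefore not a function of the eigenvalues alone. The idea is that $\log\rho'$ is diagonal in the computational basis,
\begin{align}
\log\rho' = \log\!\Big(\frac{1+\Delta}{2}\Big)\,|0\rangle\!\langle 0| + \log\!\Big(\frac{1-\Delta}{2}\Big)\,|1\rangle\!\langle 1|,
\end{align}
so that $\Tr(\rho\log\rho')$ depends only on the diagonal entries $\langle 0|\rho|0\rangle$ and $\langle 1|\rho|1\rangle$. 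The key simplification is that the off-diagonal $\sigma_x$ term contributes nothing to these diagonal entries, because $\langle 0|\sigma_x|0\rangle = \langle 1|\sigma_x|1\rangle = 0$; hence $\langle 0|\rho|0\rangle = \langle 1|\rho|1\rangle = \tfrac{1}{2}$, yielding
\begin{align}
\Tr(\rho\log\rho') = \frac{1}{2}\log\frac{1+\Delta}{2} + \frac{1}{2}\log\frac{1-\Delta}{2}.
\end{align}

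Finally I would subtract the two expressions: the $\log\tfrac{1}{2}$ contributions cancel, and collecting the coefficients of $\log(1+\Delta)$ and $\log(1-\Delta)$ leaves $\tfrac{\Delta}{2}\log(1+\Delta) - \tfrac{\Delta}{2}\log(1-\Delta) = \tfrac{\Delta}{2}\log\tfrac{1+\Delta}{1-\Delta}$, which is the claim. There is no genuine obstacle here; the only point demanding attention is the non-commutativity of $\rho$ and $\rho'$, which is handled by the observation that $\rho$ has the maximally mixed diagonal in the eigenbasis of $\rho'$, so the cross-term $\Tr(\rho\log\rho')$ collapses to a purely classical expression in the shared eigenvalues.
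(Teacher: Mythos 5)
Your proof is correct and follows essentially the same route as the paper: both compute $\Tr(\rho\log\rho)$ from the common spectrum and $\Tr(\rho\log\rho')$ by noting that the diagonal of $\rho$ in the eigenbasis of $\rho'$ is maximally mixed (the paper phrases this via the overlap identity $\Tr(\Pi^+_i\Pi^-_j)=\tfrac{1}{2}$ for $i\neq j$, which is the same fact), then subtract. No gaps.
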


\begin{proof}
Let $\Pi^+_i , \Pi^-_i$ be the projectors for the $i=x,z$ Pauli matrix into the subspaces of eigenvalue $+1$ and $-1$ respectively. Then we can express the density matrices as,
\begin{align}\label{roproj}
\rho = \frac{1+\Delta}{2}\Pi^+_x + \frac{1-\Delta}{2}\Pi^-_x, \quad \rho'  = \frac{1+\Delta}{2}\Pi^+_z + \frac{1-\Delta}{2}\Pi^-_z,
\end{align}
where we have used that $ \sigma_i = \Pi^+_i - \Pi^-_i   $. Then using that $ \mathbb{I} =  \Pi^+_i + \Pi^-_i$, $\Tr(\sigma_i)=0$ and $\Tr ( \sigma_x \sigma_z ) = 0$ we can compute the following identity
\begin{align}\label{trproj}
\Tr (\Pi^+_i \Pi^-_j) = \left\lbrace\begin{array}{c} \frac{1}{2} \quad \text{if } i\neq j \\ 0 \quad \text{if } i=  j. \end{array}\right.
\end{align}
Then, using~\eqref{roproj} and \eqref{trproj},
\begin{align}
\Tr \rho \log \rho = \frac{1+\Delta}{2}\log\left( \frac{1+\Delta}{2} \right) + \frac{1-\Delta}{2}\log\left( \frac{1-\Delta}{2} \right),
\end{align}
\begin{align}
\Tr \rho \log \rho' = \frac{1}{2}\log\left( \frac{1+\Delta}{2} \right) + \frac{1}{2}\log\left( \frac{1-\Delta}{2} \right).
\end{align}
The result follows from the definition of the relative entropy and rearranging the above terms.

\end{proof}

\begin{theorem}\label{teo:discrete1qubit}
Let $T\in\mathbb{N}$. For any policy $\pi $ and finite action set of observables $\mathcal{A}$ of rank-1 projectors, there exists a one-qubit environment $\rho\in\mathcal{S}_2 $ such that,
\begin{align}
	\EX_{\rho,\pi}[ R_T (\mathcal{A},\rho,\pi ) ]\geq\frac{ \sqrt{1-c} - (1-c) }{30} \sqrt{T}, 
\end{align}
where $c = \Tr ( \Pi_a \Pi_b) $ and $(\Pi_a,\Pi_b) = \argmin_{\Pi_i,\Pi_j\in\mathcal{A}} \max \big\{ \Tr ( \Pi_i\Pi_j ), \Tr \big( \Pi_i(\mathbb{I}-\Pi_j ) \big) \big\} $.
\end{theorem}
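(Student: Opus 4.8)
The plan is to run the two-environment argument behind Theorem~\ref{generalower}, but to track the constant explicitly through the Bloch-sphere geometry of the action set. Write every rank-1 projector as $\Pi_i = \frac{1}{2}(\mathbb{I}+n_i\cdot\sigma)$ with $n_i\in\mathbb{S}^2$, so that $\Tr(\Pi_i\Pi_j)=\frac{1}{2}(1+n_i\cdot n_j)$ and $\Tr(\Pi_i(\mathbb{I}-\Pi_j))=\frac{1}{2}(1-n_i\cdot n_j)$. Minimising $\max\{\Tr(\Pi_i\Pi_j),\Tr(\Pi_i(\mathbb{I}-\Pi_j))\}$ therefore selects the pair $(\Pi_a,\Pi_b)$ whose Bloch vectors are as close to orthogonal as the set permits, and $c=\Tr(\Pi_a\Pi_b)$ fixes $\|n_a-n_b\|_2^2 = 2\big(1-(2c-1)\big)=4(1-c)$. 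I would take as the two hard environments
\begin{align}
\rho=\frac{\mathbb{I}+\Delta\,n_a\cdot\sigma}{2},\qquad \rho'=\frac{\mathbb{I}+\Delta\,n_b\cdot\sigma}{2},
\end{align}
with a parameter $\Delta\in(0,\frac{1}{2}]$ fixed at the end. Because the Bloch vector of $\rho$ points along $n_a$, the action $\Pi_a$ maximises $\Tr(\rho\Pi_j)$ over the whole set (and $\Pi_b$ does so for $\rho'$), and the sub-optimality gap~\eqref{suboptimality} of any $\Pi_j$ under $\rho$ equals $\Delta_j=\Delta\big(1-\Tr(\Pi_a\Pi_j)\big)=\frac{\Delta}{4}\|n_a-n_j\|_2^2$.

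The step I expect to be the real obstacle is that the action set is arbitrary and may pile up many projectors near $\Pi_a$, so the usual dichotomy ``either play the optimal arm or pay regret'' does not close. I would fix this with a geometric partition: split the arms into $S_a=\{j:n_a\cdot n_j\ge n_b\cdot n_j\}$ and $S_b=\mathcal{A}\setminus S_a$. Any $n_j$ with $j\in S_b$ lies on the $n_b$-side of the perpendicular bisector of the segment $[n_a,n_b]$, hence $\|n_a-n_j\|_2\ge\frac{1}{2}\|n_a-n_b\|_2=\sqrt{1-c}$, which gives the \emph{uniform} gap $\Delta_j\ge\frac{\Delta(1-c)}{4}$ under $\rho$ for every $j\in S_b$ (and, symmetrically, $\Delta_j\ge\frac{\Delta(1-c)}{4}$ under $\rho'$ for every $j\in S_a$). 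Writing $N_{S_b}$ for the number of rounds in which an arm of $S_b$ is chosen, the regret decomposition~\eqref{regretarms} together with Markov's inequality yields
\begin{align}
\EX_{\rho,\pi}[R_T]\ge\frac{\Delta(1-c)}{4}\EX_{\rho,\pi}[N_{S_b}]\ge\frac{\Delta(1-c)T}{8}\,P_{\rho,\pi}\!\left(N_{S_b}\ge\frac{T}{2}\right),
\end{align}
and, using the complementary event $\{N_{S_b}<T/2\}=\{N_{S_a}>T/2\}$,
\begin{align}
\EX_{\rho',\pi}[R_T]\ge\frac{\Delta(1-c)T}{8}\,P_{\rho',\pi}\!\left(N_{S_b}<\frac{T}{2}\right).
\end{align}
Adding these and applying the Bretagnolle--Huber inequality (Lemma~\ref{pinsker}) to the complementary events gives
\begin{align}
\EX_{\rho,\pi}[R_T]+\EX_{\rho',\pi}[R_T]\ge\frac{\Delta(1-c)T}{16}\exp\!\big(-D(P_{\rho,\pi}\|P_{\rho',\pi})\big).
\end{align}

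It remains to bound the divergence and to optimise $\Delta$. By the divergence-decomposition Lemma~\ref{divergence} together with the data-processing bound $D(P_{\rho,\pi}\|P_{\rho',\pi})\le T\,D(\rho\|\rho')$, and by repeating the spectral computation of Lemma~\ref{relativeentropy} for general (non-orthogonal) Bloch directions $n_a,n_b$, I expect
\begin{align}
D(\rho\|\rho')=\Delta(1-c)\log\frac{1+\Delta}{1-\Delta},
\end{align}
which recovers Lemma~\ref{relativeentropy} in the balanced case $c=\frac{1}{2}$. Substituting, the sum of regrets is at least $\frac{\Delta(1-c)T}{16}\exp\!\big(-T\Delta(1-c)\log\frac{1+\Delta}{1-\Delta}\big)$. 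Choosing $\Delta\asymp\big((1-c)T\big)^{-1/2}$ balances the linear prefactor against the exponent, each reducing to an absolute constant times $\sqrt{(1-c)T}$ and a constant respectively, and leaving a bound of order $\sqrt{(1-c)T}$; fixing the constant in $\Delta$ explicitly and bounding $\log\frac{1+\Delta}{1-\Delta}$ and the exponential by elementary inequalities (as in the final lines of Theorem~\ref{th:pauliobservables}) produces the stated closed form $\frac{1}{30}\big(\sqrt{1-c}-(1-c)\big)\sqrt{T}$. The argument closes with $2\max\{\EX_{\rho,\pi}[R_T],\EX_{\rho',\pi}[R_T]\}\ge\EX_{\rho,\pi}[R_T]+\EX_{\rho',\pi}[R_T]$ to pass from the sum to a single environment.
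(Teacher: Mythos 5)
Your skeleton (two perturbed environments, sub-optimality-gap decomposition, Markov, Bretagnolle--Huber, divergence decomposition plus data processing, then optimize $\Delta$) is the same as the paper's, and several of your ingredients are correct and even clean: the identification of the argmin pair with the smallest $|n_a\cdot n_b|$, the bisector partition $S_a/S_b$ with the triangle-inequality bound $\|n_a-n_j\|_2\geq\tfrac{1}{2}\|n_a-n_b\|_2=\sqrt{1-c}$ for $j\in S_b$, and the closed form $D(\rho\|\rho')=\Delta(1-c)\log\frac{1+\Delta}{1-\Delta}$ for non-orthogonal Bloch directions (a correct generalization of Lemma~\ref{relativeentropy}). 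But your construction differs from the paper's in one essential way, and that difference is where the proof breaks.

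The gap is quantitative and it is real: your method does not reach the stated constant. Placing the environments along the \emph{original} directions $n_a,n_b$ makes every gap quadratic in the angular separation: for $j\in S_b$ you get $\Delta_j\geq\frac{\Delta(1-c)}{4}$, and this is essentially tight (an arm sitting at the midpoint of the arc between $n_a$ and $n_b$, which the argmin condition does not exclude, has gap $\Theta(\Delta(1-c))$ under \emph{both} environments). Your divergence shrinks by the same factor, $D(P_{\rho,\pi}\|P_{\rho',\pi})\leq 2\log(3)\,T\Delta^2(1-c)$, so after optimizing $\Delta=u/\sqrt{(1-c)T}$ the best you can extract from $\frac{\Delta(1-c)T}{16}e^{-2\log(3)T\Delta^2(1-c)}$ is $\max_u \frac{u}{16}e^{-2\log(3)u^2}\cdot\sqrt{(1-c)T}\approx 0.018\sqrt{(1-c)T}$ for the sum, i.e.\ about $\frac{1}{110}\sqrt{(1-c)T}$ for a single environment. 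Since $\frac{\sqrt{1-c}-(1-c)}{30}=\frac{\sqrt{1-c}\,(1-\sqrt{1-c})}{30}$, your bound implies the theorem only when $1-\sqrt{1-c}\leq 30/110$, i.e.\ $c\lesssim 0.47$; for $c$ closer to $1$ your constant is strictly too small, so the final sentence of your proposal ("elementary inequalities produce the stated closed form") does not go through. The paper's resolution is the step you are missing: it rotates $n_a,n_b$ symmetrically within their plane to \emph{orthogonal} unit vectors $A'_a,A'_b$ and places the environments along those. This keeps $D(\rho_a\|\rho_b)=\frac{\Delta}{2}\log\frac{1+\Delta}{1-\Delta}$ at its orthogonal value (no $(1-c)$ shrinkage, Lemma~\ref{relativeentropy} applies directly), while the gap of every arm in the "wrong" half-space becomes $\Delta\bigl(p-\tfrac{1}{\sqrt{2}}\bigr)$ with $p-\tfrac{1}{\sqrt{2}}=\tfrac{1}{\sqrt{2}}\bigl(\sqrt{c}+\sqrt{1-c}-1\bigr)$ — \emph{linear} rather than quadratic in $\sqrt{1-c}$ near $c=1$ — and the fixed choice $\Delta=\frac{1}{2\sqrt{T}}$ then yields $\frac{\sqrt{c}+\sqrt{1-c}-1}{30}\sqrt{T}\geq\frac{\sqrt{1-c}-(1-c)}{30}\sqrt{T}$. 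A secondary issue with your route: $\Delta\asymp\bigl((1-c)T\bigr)^{-1/2}$ violates $\Delta\leq\tfrac{1}{2}$ for small $T$ when $c$ is close to $1$, forcing a restriction on $T$ that the theorem does not carry; the paper's $T$-only choice of $\Delta$ avoids this as well.
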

Note that that $\sqrt{1-c} \geq 1-c$ since $c \in [0,1]$, and the constant is thus positive and only vanishes for $c \in \{0,1\}$. The value $c=1$ corresponds to the case where the action set has all elements equal, thus the regret always vanishes. The case $c=1$ corresponds to the case where the action set contains the projector $\Pi$ and the orthogonal $\mathbb{I} - \Pi$, which correspond to the same POVM but with the reverse outcomes, thus they are the same measurement. The intermediate cases $c\in(0,1)$ recover the $\sqrt{T}$ behaviour but the constant on the regret depends on the overlap of all projectors contained in the action set.  

\begin{proof}
Let $k = |\mathcal{A}| <\infty$ denote the number of elements in $\mathcal{A}$. We study the case $k\geq 2$ since the case $k=1$ is trivial. We will use the Bloch sphere representation of the actions, i.e, for each $\Pi_i\in\mathcal{A}$ we decompose it as
\begin{align}
\Pi_i = \frac{\mathbb{I}}{2} + \frac{1}{2} \sum_{j=x,y,z} A_{i,j} \sigma_j, \quad i\in[k],
\end{align}
where $A_i\in\mathbb{R}^3$ is a normalized unit vector ($\| A_i \|_2 = 1$) since $\Pi_i\in\mathcal{S}_2^*$ and $\sigma_x,\sigma_y$ and $\sigma_z$ are the qubit Pauli matrices. Let $\mathcal{A}_r=\left\lbrace A_1,...,A_k \right\rbrace$ be the set of Bloch vectors associated to the $k$ elements of $\mathcal{A}$.  Let $\Delta \in [0,\frac{1}{2}]$ be a constant to be chosen later. Pick $A_a,A_b\in \mathcal{A}_r$ such that they are the two directions with the smallest inner product, that is
\begin{align}
( A_a , A_b ) &= \argmin_{A_i,A_j \in \mathcal{A}_r} \vert A_i\cdot A_j \vert 
= \argmin_{\Pi_i,\Pi_j\in\mathcal{A}} \left| \Tr ( \Pi_i\Pi_j ) - \Tr \big( \Pi_i(I-\Pi_j ) \big) \right| \\
&= \argmin_{\Pi_i,\Pi_j\in\mathcal{A}} \max \big\{ \Tr ( \Pi_i\Pi_j ) - \Tr \big( \Pi_i(I-\Pi_j ) \big) , \Tr \big( \Pi_i(I-\Pi_j ) \big) - \Tr ( \Pi_i\Pi_j ) \big\} \\
&= \argmin_{\Pi_i,\Pi_j\in\mathcal{A}} \max \big\{ \Tr ( \Pi_i\Pi_j ), \Tr \big( \Pi_i(I-\Pi_j ) \big) \big\} \,,
\end{align}
where in the last step we used that $ \Tr ( \Pi_i\Pi_j ) + \Tr \big( \Pi_i(\mathbb{I}-\Pi_j ) \big) = 1$.
Since $\mathcal{A}$ contains at least two independent directions, $\bold{r}_a, \bold{r}_b$ determine a plane. Now rotate $A_a , A_b$ symmetrically in this plane until we find two orthogonal directions $A'_a$ and $A'_b$ that obey the following conditions: 
\begin{itemize}
\item $\vert  A'_a \vert = \vert A'_b \vert  = 1$ and $\langle A'_a , A'_b \rangle =0$ (unit orthogonal vectors).
\item $ \langle A_a , A'_a \rangle= \langle A_b ,  A'_b \rangle$ (symmetrical rotation).
\item $A_a \times A_b = A'_a \times A'_b$ (they remain in the same plane).
\item $\max_{A_i\in \mathcal{A}} \langle A'_a , A_i \rangle = \langle A'_a , A_a \rangle$, $\max_{A_i\in \mathcal{A}} \langle A'_b ,  A_i \rangle = \langle A'_b ,  A_b \rangle$ (closest directions that obey the above conditions).
\end{itemize}

\begin{figure}
\centering
\begin{overpic}[percent,width=0.4\textwidth]{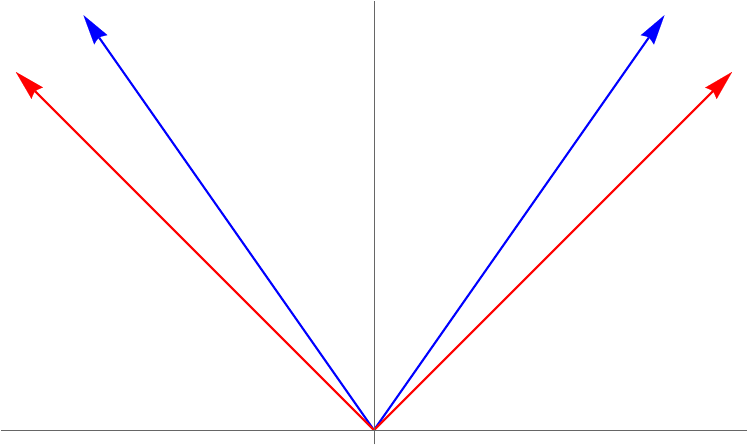}
\put(7,58){\color{blue} $A_a$}
\put(89,58){\color{blue} $A_b$}
\put(1,51){\color{red} $A'_a$}
\put(96,51){\color{red} $A'_b$}
\end{overpic}
\caption{Scheme for the choice of the vectors $A'_a$ and $A'_b$.}
\label{fig:scheme2}
\end{figure}

Using the directions $A'_a$ and $A'_b$ define the following environments $\rho_a , \rho_b$
\begin{align} 
\rho_a := \frac{\mathbb{I}}{2} + \frac{\Delta}{2}A'_a\cdot \sigma \quad \rho_b := \frac{I}{2} + \frac{\Delta}{2}A'_b \cdot \sigma .
\end{align}

Note that $\rho\geq 0$ and $\rho_b\geq 0$. In order to calculate the sub-optimality gaps~\eqref{suboptimality} for $\rho$ and $\rho_b$ we will use an orthonormal system of coordinates defined by the directions $\mathcal{C}_{a,b}= \left\lbrace A'_a , A'_b ,A'_a \times A'_b \right\rbrace$. Using the above conditions for $A'_a $ and $A'_b$ we have
\begin{align}\label{suboptimalityab}
\Delta^a_i = \Tr (\rho\Pi_i) = \Delta ( p - A_{ia} ), \quad \Delta^b_i= \Tr(\rho_b\Pi_b ) = \Delta ( p - A_{ib} ) \quad \text{for } i = 1,...,k,
\end{align}
where $p = \langle A_a , A'_a \rangle$ and $r_{ia} , r_{ib}$ are the first and second coordinates of $A_i \in \mathcal{A}$ in the coordinate system  $\mathcal{C}_{a,b}$ respectively. This last fact is because we defined $\mathcal{C}_{a,b}$ to be a orthonormal set of coordinates. Note that by the construction of $A'_a$ and $A'_b$ we have
\begin{align}
p = \cos \left( \frac{\pi}{4} - \frac{\theta}{2}\right),
\end{align}
where $\theta$ is defined via $\cos\theta = \langle A_a , A_b \rangle$. Let $\Pi_a,\Pi_b$ be the projectors associated to $\bold{r}_a,\bold{r}_b$, then using the trigonometric identity for $\cos\left(\frac{a}{2}\right)$ we have $\Tr(\Pi_a\Pi_b ) = \cos^2\left(\frac{\theta}{2} \right)$. Thus, using the trigonometric identity for $\cos(a+b)$ we have
\begin{align}\label{constantp}
p = \frac{1}{\sqrt{2}}\left( \sqrt{\Tr(\Pi_a \Pi_b}) + \sqrt{1-\Tr(\Pi_a \Pi_b})\right).
\end{align}
Now define the following subsets of indices of $[k]$,
\begin{align}
\mathcal{N}_a = \left\lbrace i \colon A^2_{ia} \geq \frac{1}{2} \text{ for } A_i \in \mathcal{A}_r\right\rbrace \quad \mathcal{N}_a^C = \left\lbrace i \colon A^2_{ia} < \frac{1}{2} \text{ for } A_i \in \mathcal{A}_r \right\rbrace.
\end{align}
Note that this sets are complementary, $\mathcal{N}_a \cap \mathcal{N}_a^C =\emptyset$ and $\mathcal{N}_a \cup \mathcal{N}_a^C = [k]$. Using the expression for the regret~\eqref{regretarms} and the sub-optimality gap~\eqref{suboptimalityab}, the regret for $\rho $ can be bounded as,
%
%\begin{equation} R_n ( \rho_a , \pi,\mathcal{A} ) = \Delta \sum_{i=1}^k ( p - r_{ia} ) \EX_{\rho_a, \pi} (T_i (n ) ) \geq \Delta \sum_{i\in \mathcal{N}_a^C} ( p - r_{ia} ) \EX_{\rho_a , \pi} (T_i (n ) )   
%\end{equation}
%
%\begin{equation} 
%\geq \Delta  \left( p - \frac{1}{\sqrt{2}} \right) \sum_{i\in   \mathcal{N}_a^C} \EX_{\rho_a , \pi} (T_i (n ) ), 
%\end{equation}
%
\begin{align} 
	\EX_{\rho_a , \pi} [ R_T (\mathcal{A}, \rho_a , \pi) ] &= \Delta \sum_{i=1}^k ( p - A_{ia} ) \EX_{\rho_a, \pi} [T_i ( T ) ] \\
	&\geq \Delta \sum_{i\in \mathcal{N}_a^C} ( p - A_{ia} ) \EX_{\rho_a , \pi} [T_i (T ) ]   \\
	&\geq \Delta  \left( p - \frac{1}{\sqrt{2}} \right) \sum_{i\in   \mathcal{N}_a^C} \EX_{\rho_a , \pi} [T_i ( T ) ], 
\end{align}
where we have used that for $i\in  \mathcal{N}_a^C$, we have $A_{ia} \leq \frac{1}{\sqrt{2}}$. Using that $ T = \sum_{i=1}^k  \EX_{\pi, \rho_a} (T_i ( T ) ),$
\begin{align}
 \EX_{\rho_a, \pi } [ R_T (\mathcal{A}, \rho_a , \pi) ] \geq \Delta  \left( p - \frac{1}{\sqrt{2}}\right) \EX_{ \rho_a , \pi} \left[ T - \sum_{i\in   \mathcal{N}_a} T_i (T) \right]. 
\end{align}
Applying Markov's inequality to the above expression we have,
\begin{align} 
\EX_{\rho_a,\pi} [R_T ( \mathcal{A},\rho_a , \pi ) ] \geq \frac{ T \Delta}{2} \left( p - \frac{1}{\sqrt{2}} \right) P_{ \rho_a , \pi  } \left(   T - \sum_{i\in   \mathcal{N}_a} T_i (T ) \geq \frac{T}{2} \right).   
\end{align}
Rearranging all the terms,
\begin{align}\label{regret1rho}
\EX_{\rho_a,\pi} [ R_T ( \mathcal{A},\rho_a , \pi ) ]  \geq \frac{T \Delta}{2} \left( p - \frac{1}{\sqrt{2}} \right) P_{\rho_a , \pi} \left(  \sum_{i\in   \mathcal{N}_a} T_i (T ) \leq \frac{T}{2} \right).
\end{align}
In order to bound the regret for $\rho_b$ note that if $i \in \mathcal{N}_a$ then $A^2_{ib} \leq \frac{1}{2}$ since $\| A_i \| = 1$. Using the same tricks as before,
\begin{align} 
	\EX [R_T (\mathcal{A}, \rho_b , \pi ) ]  &\geq \Delta \sum_{i\in \mathcal{N}_a} ( p - A_{ib} ) \EX_{\pi , \rho_b } [T_i ( T ) ]\\ 		&\geq \Delta  \left( p - \frac{1}{\sqrt{2}} \right) \sum_{i\in   \mathcal{N}_a} \EX_{\pi , \rho_b} [T_i ( T )]. 
\end{align}
Using again Markov's inequality,
\begin{align}\label{regret2rho}
\EX_{\rho_b , \pi} [ R_T (\mathcal{A}, \rho_b , \pi ) ] \geq \frac{ T \Delta}{2} \left( p - \frac{1}{\sqrt{2}} \right) P_{ \rho_b , \pi } \left(  \sum_{i\in   \mathcal{N}_a} T_i (T ) > \frac{T}{2} \right).
\end{align}
Thus, combining~\eqref{regret1rho} and \eqref{regret2rho},
\begin{align}
\EX_{\rho_a , \pi}[R_T ( \mathcal{A},\rho_a , \pi )]  + \EX_{\rho_b , \pi} [ R_T ( \mathcal{A}, \rho_b , \pi ) ] \geq   \nonumber
\end{align}
\begin{align}
\frac{T \Delta}{2} \left( p - \frac{1}{\sqrt{2}} \right) \left(   P_{ \rho_a , \pi } \left(  \sum_{i\in   \mathcal{N}_a} T_i ( T ) \leq \frac{T}{2} \right) +  P_{ \rho_b , \pi } \left(  \sum_{i\in   \mathcal{N}_a} T_i ( T ) > \frac{T}{2} \right) \right).
\end{align}
Using Lemma~\ref{pinsker}, we can bound the above expression as,
\begin{align}\label{sumregret}
\EX_{\rho_a , \pi}[R_T ( \mathcal{A},\rho_a , \pi )]  + \EX_{\rho_b , \pi} [ R_T ( \mathcal{A}, \rho_b , \pi ) ] \geq  \frac{ T \Delta}{4} \left( p - \frac{1}{\sqrt{2}}\right)\exp \left(- D(P_{\pi , \rho_a  } \| P_{\pi , \rho_b}) \right).
\end{align}
Using Lemma~\ref{divergence} combined with the data-processing inequality, we can bound the Kullback–Leibler divergence as,
\begin{align}
 D(P_{\pi , \rho_a  } \| P_{\pi , \rho_b}) &= \sum_{i=1}^k \EX_{\pi , \rho_a  } [T_i ( T ) ] D \big( P_{\rho_a}(\cdot | i) \big\| P_{\rho_b}(\cdot | i) \big) ) \\
 &\leq D ( \rho_a \| \rho_b ) \sum_{i=1}^k \EX_{\rho , \pi } [T_i ( T ) ] = T  D ( \rho_a \| \rho_b )  .
\end{align}
Using that the relative entropy $D ( \rho_a  \| \rho_b )$ is unitarily invariant and ${A}'_a,{A}'_b$ are orthogonal, we can use the computation of the relative entropy given by Lemma \ref{relativeentropy}. Thus we can lower bound~\eqref{sumregret} as,
\begin{align}
\EX_{\rho_a , \pi}[R_T ( \mathcal{A},\rho_a , \pi )]  + \EX_{\rho_b , \pi} [ R_T ( \mathcal{A}, \rho_b , \pi ) ] \geq \frac{T \Delta}{4} \left( p - \frac{1}{\sqrt{2}}\right)\exp \left(  - \frac{T\Delta}{2} \log \left( \frac{1+\Delta}{1-\Delta} \right) \right).
\end{align}
Choosing $\Delta = \frac{1}{2\sqrt{T}}$ we have the following upper bound,
\begin{align}
 \log \left( \frac{1+\Delta}{1-\Delta} \right) \leq 2 \log (3 ) \Delta \quad \text{for }\Delta \in [0,1/2],
\end{align}
where we have used that $\log \left( \frac{1+x}{1-x} \right) $ is convex and finite for $x\in [ 0,\frac{1}{2} ]$. Thus, choosing $\Delta = \frac{1}{2\sqrt{T}}$ we have
\begin{align}
 \EX_{\rho_a , \pi}[R_T ( \mathcal{A},\rho_a , \pi )]  + \EX_{\rho_b , \pi} [ R_T ( \mathcal{A}, \rho_b , \pi ) ] \geq  \frac{ p - \frac{1}{\sqrt{2}}}{8\cdot 3^{1/4}} \sqrt{T},
\end{align}
and the result follows using the expression for $p$ \eqref{constantp} and $15 > 8\cdot 3^{1/4}\sqrt{2}$. From~\eqref{constantp} is easy to check that $p\geq \frac{1}{\sqrt{2}}$, thus the above lower bound is positive.
\end{proof}

\section{Continuous lower bounds}

We now turn to the problem of establishing minimax regret lower bounds for the general bandit setting introduced in Section~\ref{sec:general_bandits}.

\subsection{Rank-1 projection measurements for mixed state environments}\label{sec:AllPureStates}

In the following theorem, we consider a bandit with a mixed state environment whose action set is the set of all rank-1 projections, i.e.,\ $\mc A=\mc S_d^*$ where $d\in\N$ is the dimension of the Hilbert space. When we play the action $|\fii\>\!\<\fii|\in \mathcal{S}_d^*$ on the environment state $\rho$, the probability of reward 1 is $\<\fii|\rho|\fii\>$ and the probability of reward 0 is $1-\<\fii|\rho|\fii\>$. %In this setting, we restrict to those policies $\pi$ such that the $\gamma_{\rho,\pi}$-essential supremum of $\<\fii|\rho|\fii\>|$ over all $|\fii\>\!\<\fii|\in\mc S_d^*$ (recall the definition of the measure $\gamma_{\rho,\pi}$ in Section~\ref{subsec:generalbandits})
%Denoting by $\lambda_{\rm max}(\rho)$ the highest eigenvalue of any environment state $\rho$, 
The sub-optimality gap is $\Delta_\fii:=\lambda_{\rm max}(\rho)-\<\fii|\rho|\fii\>$ for the environment state $\rho$ upon playing the action $|\fii\> \! \langle \fii|$.

\begin{theorem}\label{theor:AllPureStates}
Let $T,\,d\in\N$. For any policy $\pi$ and action set of observables $\mathcal{A}$ containing all rank-1 projections, i.e.,\ $\mc A=\mc S_d^*$, there exists an environment $\rho\in\mc S_d$ such that
\begin{align}
 \EX_{\rho , \pi} [ R_T(\mc A,\rho,\pi) ] \geq C \sqrt{T}
\end{align}
for some constant $C >0$.
\end{theorem}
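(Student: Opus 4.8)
The plan is to adapt the two-point (two-environment) method of Theorem~\ref{generalower} and Theorem~\ref{teo:discrete1qubit} to the continuous action set $\mathcal{A}=\mathcal{S}_d^*$. I would construct two nearly indistinguishable mixed environments supported on a fixed two-dimensional subspace $V=\mathrm{span}\{|0\rangle,|1\rangle\}$ and show that every policy---even one allowed to use all of $\mathcal{S}_d^*$---incurs $\Omega(\sqrt T)$ regret against at least one of them, which yields a bound independent of $d$. Concretely, for a parameter $\Delta\in[0,\tfrac12]$ to be fixed later I take
\begin{align}
\rho_a=\tfrac12 P_V+\tfrac{\Delta}{2}\sigma_x, \qquad \rho_b=\tfrac12 P_V+\tfrac{\Delta}{2}\sigma_z,
\end{align}
where $P_V$ is the projector onto $V$ and $\sigma_x,\sigma_z$ are the Pauli operators on $V$ (extended by zero). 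Each environment has $\lambda_{\max}=\tfrac{1+\Delta}{2}$, so the optimal rank-1 actions are the top eigenvectors of $\sigma_x$ and $\sigma_z$, which are orthogonal in Bloch space.

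The geometric heart of the argument is that no single action can be near-optimal for both environments. Writing an arbitrary action $|\phi\rangle\!\langle\phi|\in\mathcal{S}_d^*$ with in-subspace weight $p=\|P_V|\phi\rangle\|^2\le 1$ and normalized in-subspace Bloch vector $(m_x,m_y,m_z)$, one computes $\langle\phi|\rho_a|\phi\rangle=\tfrac{p}{2}(1+\Delta m_x)$ and similarly for $\rho_b$ with $m_z$, giving sub-optimality gaps
\begin{align}
\Delta_\phi^a=\tfrac12(1-p)+\tfrac{\Delta}{2}(1-p\,m_x), \qquad \Delta_\phi^b=\tfrac12(1-p)+\tfrac{\Delta}{2}(1-p\,m_z).
\end{align}
Since $m_x^2+m_z^2\le 1$, at most one of $m_x,m_z$ can exceed $1/\sqrt2$: if $m_x<1/\sqrt2$ then $\Delta_\phi^a\ge \tfrac{\Delta}{2}(1-\tfrac1{\sqrt2})$, while if $m_x\ge 1/\sqrt2$ then $m_z\le 1/\sqrt2$ and $\Delta_\phi^b\ge \tfrac{\Delta}{2}(1-\tfrac1{\sqrt2})$ (actions with $p<1$ only enlarge both gaps, so $p=1$ is the worst case for us). Introducing the random variable $M=\sum_{t=1}^T\mathbbm{1}\{m_x(A_t)\ge 1/\sqrt2\}$ and using the continuous regret formula~\eqref{eq:contRegr1} with the measure $\gamma_{\rho,\pi}$ of~\eqref{eq:gammameasure}, I lower bound
\begin{align}
\EX_{\rho_a,\pi}[R_T]\ge \tfrac{\Delta}{2}\Big(1-\tfrac1{\sqrt2}\Big)\EX_{\rho_a,\pi}[T-M], \qquad \EX_{\rho_b,\pi}[R_T]\ge \tfrac{\Delta}{2}\Big(1-\tfrac1{\sqrt2}\Big)\EX_{\rho_b,\pi}[M],
\end{align}
and apply Markov's inequality to each, exactly as in the earlier proofs, turning these into $P_{\rho_a,\pi}(M\le T/2)$ and $P_{\rho_b,\pi}(M> T/2)$.

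With the event $\{M\le T/2\}$ fixed, I would sum the two bounds and invoke the Bretagnolle--Huber inequality (Lemma~\ref{pinsker}) to get
\begin{align}
\EX_{\rho_a,\pi}[R_T]+\EX_{\rho_b,\pi}[R_T]\ge \tfrac{T\Delta}{8}\Big(1-\tfrac1{\sqrt2}\Big)\exp\big(-D(P_{\rho_a,\pi}\|P_{\rho_b,\pi})\big).
\end{align}
The divergence is controlled by Lemma~\ref{lemma:D1/2ineq} (valid in the general/continuous setting, which is why I use it in place of the discrete decomposition Lemma~\ref{divergence}): $D(P_{\rho_a,\pi}\|P_{\rho_b,\pi})\le T\,D(\rho_a\|\rho_b)$. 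Since $\rho_a,\rho_b$ are exactly the states of Lemma~\ref{relativeentropy}, $D(\rho_a\|\rho_b)=\tfrac{\Delta}{2}\log\tfrac{1+\Delta}{1-\Delta}\le \log(3)\,\Delta^2$ on $[0,\tfrac12]$. Choosing $\Delta=\tfrac{1}{2\sqrt T}$ makes the exponent $O(1)$ and gives a bound of the form $C'\sqrt T$ for the sum; as the larger of the two regrets is at least half the sum, the claim follows with $C=C'/2>0$.

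The step I expect to be the main obstacle is the clean measure-theoretic formalization in the continuous setting: one must verify that $\{m_x(A_t)\ge 1/\sqrt2\}$ is a measurable subset of $\mathcal{S}_d^*$ so that $M$ is a genuine random variable on $(\Sigma\times\mathcal{X})^{T}$ and Lemma~\ref{pinsker} applies, and that the regret-to-counting identity $\int f\,d\gamma_{\rho,\pi}=\EX_{\rho,\pi}[\sum_t f(A_t)]$ holds exactly. The accompanying geometric subtlety---that out-of-plane actions with $p<1$ cannot be simultaneously good for both environments---is resolved by the observation that the nonnegative term $\tfrac12(1-p)$ only adds to both gaps, so the qubit Bloch-sphere dichotomy $m_x^2+m_z^2\le 1$ suffices to carry the whole argument.
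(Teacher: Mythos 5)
Your argument is correct, and it reaches the same $\Omega(\sqrt{T})$ conclusion by a genuinely different construction from the paper's. The paper perturbs the maximally mixed state in two "far apart" pure directions, $\rho=\frac{1-\Delta}{d}\mathbb{I}+\Delta|0\rangle\!\langle0|$ and $\rho'=\frac{1-\Delta}{d}\mathbb{I}+\Delta|\psi\rangle\!\langle\psi|$ with $|\psi\rangle$ the uniform superposition; it separates the action space into disjoint fidelity neighborhoods $\mathcal{N}_1,\mathcal{N}_2$ via a triangle inequality for the purified distance, applies Bretagnolle--Huber to the normalized occupation measures $\frac{1}{T}\gamma_{\rho,\pi}$ on $\mathcal{A}$ (which forces an extra joint-convexity-plus-pushforward step to relate $D(\frac{1}{T}\gamma_{\rho,\pi}\|\frac{1}{T}\gamma_{\rho',\pi})$ back to $D(P_{\rho,\pi}\|P_{\rho',\pi})$), and finishes with a Taylor bound $D(\rho\|\rho')\le (c/2)\Delta^2$ whose constant is not explicit. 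You instead confine everything to a fixed two-dimensional subspace, get the separation for free from the Bloch-sphere constraint $m_x^2+m_z^2\le 1$ (with the correct observation that out-of-plane weight $1-p$ only increases both gaps), apply Bretagnolle--Huber directly to the trajectory laws via the counting variable $M$ exactly as in the discrete Theorem~\ref{generalower}, and use the closed form of Lemma~\ref{relativeentropy} together with $\log\frac{1+\Delta}{1-\Delta}\le 2\log(3)\Delta$. What your route buys is a fully explicit, dimension-independent constant and a shorter divergence argument; what the paper's route buys is a construction whose environments have full support (convenient for the Taylor argument it reuses from Theorem~\ref{generalower}) and a per-round occupation-measure formulation that sits naturally on top of the general regret decomposition~\eqref{eq:contRegr1}. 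The two technicalities you flag are indeed the only loose ends and both are routine: $m_x$ is undefined only when $P_V|\phi\rangle=0$, where both gaps are at least $\tfrac12$, so such actions can be assigned to either side; and the condition $m_x\ge 1/\sqrt{2}$, rewritten as $\langle\phi|P_V\sigma_xP_V|\phi\rangle\ge \tfrac{1}{\sqrt2}\langle\phi|P_V|\phi\rangle$, is a continuous constraint and hence Borel measurable, so $M$ is a genuine random variable and Lemma~\ref{pinsker} applies.
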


\begin{proof}
Let us fix a policy $\pi$ and an orthonormal basis $\{|n\>\}_{n=0}^{d-1}$ for $\C^d$ and define $|\psi\>:=d^{-1/2}(|0\>+\cdots+|d-1\>)$ and the sets
\begin{align}
\mc N_1&:=\left\{|\eta\>\!\<\eta|\in\mc S_d^*\,\middle|\,|\<0|\eta\>|^2<\frac{3}{4}+\frac{1}{4d}\right\},\\
\mc N_2&:=\left\{|\eta\>\!\<\eta|\in\mc S_d^*\,\middle|\,|\<\psi|\eta\>|^2<\frac{3}{4}+\frac{1}{4d}\right\}.
\end{align}
Let us show that these sets are disjoint. First note that $P(|0\>\!\<0|,|\psi\>\!\<\psi|)=\sqrt{1-1/d}$ where $P=\sqrt{1-F^2}$ is the purified distance where, in turn, $F$ is the fidelity. Assume that $|\eta\>\!\<\eta|\in\mc N_1$, so that, using the triangle inequality for the purified distance,
\begin{align}
\sqrt{1-\frac{1}{d}} &= P(|0\>\!\<0|,|\psi\>\!\<\psi|)\leq P(|0\>\!\<0|,|\eta\>\!\<\eta|)+P(|\eta\>\!\<\eta|,|\psi\>\!\<\psi|)\\
&< \frac{1}{2}\sqrt{1-\frac{1}{d}}+P(|\eta\>\!\<\eta|,|\psi\>\!\<\psi|)
\end{align}
where we have used the definition of $\mc N_1$ in the final inequality. Thus, $P(|\eta\>\!\<\eta|,|\psi\>\!\<\psi|)>(1/2)\sqrt{1-1/d}$ which is easily seen to imply $|\eta\>\!\<\eta|\in\mc N_2^c$. Thus, $\mc N_1\cap\mc N_2=\emptyset$.

Let us now define 
\begin{align}
\rho:= \frac{1-\Delta}{d}\mathbb{I}+\Delta|0\>\!\<0| \, ,
\end{align}
where $\Delta\in[0,1]$ is a constant to be determined later. For this environment, the sub-optimality gap is
\begin{align}
\Delta_\eta=\Delta(1-|\<0|\eta\>|^2).
\end{align}
We may now evaluate the decomposition of the regret for general bandits~\eqref{eq:contRegr2}
\begin{align}
\EX_{\rho , \pi} [R_T(\mc A,\rho,\pi) ] &\geq \Delta\int_{\mc N_1^c}(1-|\<0|\eta\>|^2)\,d\gamma_{\rho,\pi}(|\eta\>\!\<\eta|)\geq\Delta\frac{d-1}{4d}\gamma_{\rho,\pi}(\mc N_1^c).
\end{align}
Let us define another state 
\begin{align}
\rho':= \frac{1-\Delta}{d}\mathbb{I}+\Delta|\psi\>\!\<\psi| \, ,
\end{align}
where $|\psi\>$ is the unit vector defined earlier. Similarly as above, we find
\begin{align}
\EX_{\rho' , \pi} [ R_T(\mc A,\rho',\pi) ] &\geq \Delta\int_{\mc N_2^c}(1-|\<\psi|\eta\>|^2)\,d\gamma_{\rho',\pi}(|\eta\>\!\<\eta|)\geq\Delta\frac{d-1}{4d}\gamma_{\rho',\pi}(\mc N_2^c)\\
&\geq \Delta\frac{d-1}{4d}\gamma_{\rho',\pi}(\mc N_1)
\end{align}
where the final inequality follows from $\mc N_1\subseteq\mc N_2^c$; recall that $\mc N_1$ and $\mc N_2$ are disjoint. Recalling that $(1/T)\gamma_{\rho,\pi}$ and $(1/T)\gamma_{\rho',\pi}$ are probability measures and using Lemma \ref{pinsker}, we now obtain
\begin{align}\label{eq:summaepayhtalo}
\EX_{\rho , \pi} [ R_T(\mc A,\rho,\pi) ]+ \EX_{\rho' , \pi} [R_T(\mc A,\rho',\pi) ] &\geq\Delta\frac{d-1}{4d}T\left(\frac{1}{T}\gamma_{\rho,\pi}(\mc N_1^c)+\frac{1}{T}\gamma_{\rho',\pi}(\mc N_1)\right) \\
&\geq\Delta\frac{d-1}{4d}T\exp{\left(-D\left(\frac{1}{T}\gamma_{\rho,\pi}\middle\|\frac{1}{T}\gamma_{\rho',\pi}\right)\right)}.
\end{align}
Recall that, e.g.,\ $\gamma_{\rho,\pi}=\sum_{t=1}^T\big(P_{\rho,\pi}^{(t)}(\cdot,0)+P_{\rho,\pi}^{(t)}(\cdot,1)\big)$. Defining, for all $t=1,\ldots,\,T$, the (measurable) function $f_t:(\mc A\times\{0,1\})^T\to\mc A$ through
\begin{align}
f_t\big((|\eta_s\>\!\<\eta_s|,x_s)_{s=1}^T\big)=|\eta_t\>\!\<\eta_t|,
\end{align}
we now see that $\gamma_{\rho,\pi}=\sum_{t=1}^T P_{\rho,\pi}\circ f_t^{-1}$ where we view $P_{\rho,\pi}$ as a measure. Using the joint convexity of the Kullback-Leibler divergence and the data processing inequality, we finally get
\begin{align}
D\left(\frac{1}{T}\gamma_{\rho,\pi}\middle\|\frac{1}{n}\gamma_{\rho',\pi}\right)
&\leq \frac{1}{T}\sum_{t=1}^T D(P_{\rho,\pi}\circ f_t^{-1}\|P_{\rho',\pi}\circ f_t^{-1})\\
&\leq \frac{1}{T}\sum_{t=1}^T D(P_{\rho,\pi}\|P_{\rho',\pi})=D(P_{\rho,\pi}\|P_{\rho',\pi})\\
&\leq T D(\rho\|\rho')
\end{align}
where the final inequality follows from Lemma \ref{lemma:D1/2ineq} (for $\alpha=1$). Combining this with~\eqref{eq:summaepayhtalo} gives us
\begin{align}
\EX_{\rho , \pi} [ R_T(\mc A,\rho,\pi) ]+ \EX_{\rho' , \pi} [R_T(\mc A,\rho',\pi) ] \geq\Delta\frac{d-1}{4d}T\exp{\big(-T D(\rho\|\rho')\big)}.
\end{align}
Whenever $\Delta\leq 1/2$, we may follow the end of the proof of Theorem \ref{generalower} and obtain $c>0$ such that $D(\rho\|\rho')\leq (c/2)\Delta^2$. Fixing $\Delta=1/\sqrt{T}$, we now have
\begin{align}
\max\{ \EX_{\rho , \pi} [ R_T(\mc A,\rho,\pi) ] , \EX_{\rho' , \pi} [R_T(\mc A,\rho',\pi) ] \} &\geq\frac{1}{2}\big(\EX_{\rho , \pi} [ R_T(\mc A,\rho,\pi) ]+ \EX_{\rho' , \pi} [R_T(\mc A,\rho',\pi) ] \big) \nonumber \\ 
&\geq\frac{d-1}{8d}\sqrt{T}e^{-c/2}.
\end{align}
\end{proof}

\subsection{PSMAQB lower bound}\label{sec:lower_bound}
% \JL{this part has to be adapted, it is just copy paste form mikhail notes}

%While the algorithm for PSMAQB presented above is inspired by classical bandit theory, the lower bound on the regret that we derive is essentially based on quantum information theory. The key insight here is that a policy for PSMAQB can be viewed as a sequence of state tomographies. The expected fidelity of these tomographies is linked to the regret. Hence, existing upper bounds on tomography fidelity also provide a lower bound for the expected regret of the policy. 

In the discrete case, we observed that the \(\sqrt{T}\) scaling of regret remains unchanged regardless of whether the environment is mixed or pure. The proof techniques were similar, as they relied on the Bretagnolle-Huber inequality (or its modified version in Lemma~\ref{pinsker2}). Given this, one might expect that the same behavior holds for general bandits and that the proof would closely follow that of Theorem~\ref{theor:AllPureStates}.  

As in the previous section, we consider an action set \(\mathcal{A} = \mathcal{S}_d^*\) and an environment represented by a pure state \(\ket{\psi} \! \bra{\psi} \in \mathcal{S}_d^*\). This is the PSMAQB setting that we defined in~\ref{def:PSMAQB}. However, applying the previous techniques in this setting results in only a constant lower bound. To illustrate this, consider the following example.

We first fix the reference state \(\ket{0}\) in some basis and define two possible environments
\begin{align}
\ket{\psi} := \ket{0}, \quad \ket{\psi'} := \cos (\Delta)\ket{0} + \sin (\Delta) \ket{1},
\end{align}
where \(\Delta \in [0,2\pi)\) is a constant. We then select a subset of the action set, \(\mathcal{N} \subset \mathcal{S}_d^*\), consisting of states that are closer to \(\ket{\psi}\) than to \(\ket{\psi'}\), formally defined as  
\begin{align}
\mathcal{N} := \left\lbrace \ket{v} \! \bra{v} : F (v,\psi ) \geq F ( v , \psi' ) \right\rbrace.
\end{align}
Following similar steps as in the proof of Theorem~\ref{theor:AllPureStates} and using Lemma~\ref{pinsker2} (with \(\alpha = 1/2\)), we obtain the bound  
\begin{align}
\mathbb{E}_{\psi , \pi} [R_T (\mathcal{A} , \psi , \pi )] + \mathbb{E}_{\psi' , \pi} [R_T (\mathcal{A} , \psi' , \pi )] \geq \frac{T\sin^2(\Delta)}{8} \exp \left( - T \frac{\sin^2(\Delta)}{1 - \sin^2 (\Delta )} \right).
\end{align}
Note that the bound $\alpha = 1/2$ is tighter than $\alpha = 1$. By choosing \(\sin^2 (\Delta ) \sim \frac{1}{T}\), this results in a constant lower bound. The reason for this lower bound stems from the fact that the sub-optimality gap for the environment \(\ket{\psi}\), given by  
\begin{align}
\Delta_{\eta} = 1 - |\langle \psi | \eta \rangle |^2,  
\end{align}
for elements in the complement set \(\mathcal{N}^C\), behaves as \(\Delta_{\eta} \sim \sin^2 (\Delta )\). This coincides with the divergence  
\begin{align}\label{eq:constant_d12}
D_{\frac{1}{2}} \left( \psi \| \psi' \right) \sim \sin^2 (\Delta ).
\end{align}
In all our previous results, the relationship between these two quantities was quadratic, which allowed us, by choosing appropriate parameters, to achieve the usual \(\sqrt{T}\) regret scaling.  

The natural question that follows is whether a more refined calculation can recover the standard \(\sqrt{T}\) lower bound, or if there is something fundamental in this setting that enables breaking the square-root barrier.

In order to establish a non-trivial lower bound in this setting, we use the key insight that a policy for PSMAQB can be understood as a sequence of state tomography procedures. This follows from the fact that the regret can be expressed as
\begin{align}\label{eq:regret_purestate_lowerbound}
    R_T (\mathcal{A} , \psi , \pi ) = \sum_{t=1}^T \left( 1 - \langle \psi | \Pi_t | \psi \rangle \right),
\end{align}
where \(\Pi_t = \ket{\psi_t} \! \bra{\psi_t} \in \mathcal{A} = \mathcal{S}_d^*\) is the action selected at time step \(t \in [T]\).  

Minimizing the regret means choosing \(\ket{\psi_t}\) so that it is close to \(\ket{\psi}\) in terms of infidelity distance. Since the expected fidelity of these tomography procedures is directly related to the regret, existing upper bounds on tomography fidelity also provide a lower bound for the expected regret of the policy. Our lower bound proof relies on the fact that individual actions of a strategy at time $t\in[T]$ can be viewed as quantum state tomography algorithms using $t$ copies of the state. A relation between the optimal fidelity achievable by these algorithms and the regret of the strategy allows us to convert the fidelity upper bound from~\cite{hayashi2005reexamination} to a regret lower bound. We use measure-theoretic tools to adapt the proof from~\cite{hayashi2005reexamination} to a more general case where the tomography can output arbitrary distribution of states. In the following we provide the necessary tools to connect a quantum state tomography protocol with our model and a regret lower bound.

\subsubsection{Average fidelity bound for pure state tomography}

In its most general form, a tomography procedure takes $n$ copies of an unknown state $\Pi = |\psi\rangle\!\langle \psi |\in \mathcal{S}_d^*$ and performs a joint measurement on the state $\Pi^{\otimes n}$. This is captured in the following definition. Let $(\mathcal{S}_d^*, \Sigma)$ be a $\sigma$-algebra. A \emph{tomography scheme} is a POVM $\mathcal{T}:\Sigma\to \End(\hil^{\otimes n})$ such that $\mathcal{T}(\mathcal{S}_d^*)=\Pi^+_n$, where $\Pi^+_n$ is the symmetrization operator on $\hil^{\otimes n}$. For any $\rho\in \End(\hil^{\otimes n})$, this POVM gives rise to a complex-valued measure
\begin{equation}
    P_{\mathcal{T},\rho}(A)=\Tr(\mathcal{T}(A)\rho)
\end{equation}
for $A\in\Sigma$. $P_{\mathcal{T},\rho}$ becomes a probability measure if $\rho$ satisfies $\rho\ge 0,\ \Pi^+_n\rho=\rho\Pi^+_n=\rho$, and $\Tr \rho=1$. Given $n$ copies of $\Pi$, the tomography scheme produces the distribution $P_{\mathcal{T},\Pi^{\otimes n}}$ of the predicted states. Note that $\Pi^{\otimes n}$ satisfies the properties above, so $P_{\mathcal{T},\Pi^{\otimes n}}$ is indeed a probability distribution. The fidelity of this distribution is given by
\begin{equation}
    F(\mathcal{T}, \Pi)=\int \Tr(\Pi \sigma) dP_{\mathcal{T},\Pi^{\otimes n}}(\sigma). \label{eq:avg_fidelity}
\end{equation}
Finally, the average fidelity of the tomography scheme is defined as
\begin{equation}
    F(\mathcal{T})=\int F(\mathcal{T}, |\psi\>\<\psi|)d\psi,
\end{equation}
where the integration is taken with respect to the normalized uniform measure over all pure states. In the following, $\int d\psi$ will always imply this measure. We will provide a lower bound on $F(\mathcal{T})$ in terms of $d$ and $n$, following the proof technique from~\cite{hayashi2005reexamination}. In~\cite{hayashi2005reexamination}, the proof is only presented for tomography schemes producing a finite number of predictions. For our definition, we will require more general measure-theoretic tools. Before we introduce the upper bound on the fidelity, we will need two auxiliary lemmas about the nature of the measure $P_{\mathcal{T},\rho}$. The proofs of these Lemmas are left to Appendix~\ref{ap:chap3}.

\begin{lemma}
  \label{lem:radon}
  Let $(\Omega, \Sigma)$ be a  $\sigma$-algebra, and let $O: \Sigma\to \End(\widetilde{\hil})$ be a POVM with values acting on a finite-dimensional Hilbert space $\widetilde{\hil}$ with $\operatorname{dim} \thil=\tdim$ s.t. $O(\Omega)\le\mathbb{I}$, where $\mathbb{I}$ is the identity operator. Further, let $P_{O,\sigma}: \Sigma\to\mathbb{C}$ be a complex-valued measure, defined for any $\sigma\in \End(\widetilde{\hil})$ as 
\begin{equation}
  P_{O,\sigma}(A)=\Tr[O(A)\sigma].
  \label{pdef_alt}
\end{equation}
Then, there exists a set of functions $\{f_\sigma\}$ indexed by $\sigma\in \End \widetilde{\mathcal{H}}$ that are linear w.r.t. $\sigma$ for all $\omega$ and that satisfy
\begin{equation}
  f_\sigma: \Omega\to \mathbb{C}\quad\text{s.t.}\quad \forall A\in\Sigma\ \ P_{O,\sigma}(A)=\int_A f_\sigma(\omega) dP_{O,\id}(\omega). \label{eq:radon_derivatives}
\end{equation}
  % Let $\sigma\in \End(H)$ for a finite-dimensional Hilbert space $H$, and let $P_\sigma$ be a complex-valued measure defined on $\tilde{\Sigma}$ by~\eqref{pdef}, where $\tilde{\Sigma}$ is the $\sigma$-algebra on $S(H)$ ``pushed'' by the tomography map $\Psi$ from the original $\Sigma$. Then, $P_\sigma$ is dominated by the finite nonnegative measure  $P_\mathbbm{1}$, also defined by~\eqref{pdef}, where $\mathbbm{1}\in S(H)$ is the identity operator. 
\end{lemma}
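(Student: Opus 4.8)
The plan is to recognize $P_{O,\id}$ as a finite positive reference measure and to exhibit each $f_\sigma$ as a genuine Radon--Nikodym derivative $dP_{O,\sigma}/dP_{O,\id}$, taking care that the derivatives can be chosen to depend linearly on $\sigma$. First I would check that $P_{O,\id}(A)=\Tr[O(A)]$ is a nonnegative finite measure: each $O(A)$ is positive semidefinite, so $\Tr[O(A)]\ge 0$, and $P_{O,\id}(\Omega)=\Tr[O(\Omega)]\le \tdim$ because $O(\Omega)\le\mathbb{I}$; countable additivity is inherited from that of the POVM $O$, which holds in norm since $\End(\widetilde{\hil})$ is finite-dimensional. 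I would likewise verify that $\sigma\mapsto P_{O,\sigma}$ produces genuine complex measures of bounded total variation, using $|\Tr[O(A)\sigma]|\le\|\sigma\|\,\Tr[O(A)]$ (valid because $O(A)\ge 0$, so its trace norm equals its trace) to bound $\sum_i|P_{O,\sigma}(A_i)|\le\|\sigma\|\,\tdim$ over any measurable partition.

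The crucial observation is absolute continuity $P_{O,\sigma}\ll P_{O,\id}$ for every $\sigma$. Indeed, if $P_{O,\id}(A)=\Tr[O(A)]=0$, then the positive semidefinite operator $O(A)$ has zero trace and hence vanishes identically, forcing $P_{O,\sigma}(A)=\Tr[O(A)\sigma]=0$. With this in hand, the complex Radon--Nikodym theorem, applied to the finite (hence $\sigma$-finite) positive measure $P_{O,\id}$, yields an $L^1(P_{O,\id})$ density for each complex measure $P_{O,\sigma}$.

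To obtain the required pointwise linearity in $\sigma$, rather than extracting a separate density for each $\sigma$, I would fix a basis $\{E_1,\dots,E_{\tdim^2}\}$ of $\End(\widetilde{\hil})$, apply Radon--Nikodym to get densities $f_{E_1},\dots,f_{E_{\tdim^2}}$, and then define $f_\sigma:=\sum_j c_j(\sigma)\, f_{E_j}$, where $c_j(\sigma)$ are the coordinates of $\sigma$ in this basis. By construction $\sigma\mapsto f_\sigma(\omega)$ is linear for each fixed $\omega$, and linearity of both sides of~\eqref{pdef_alt} in $\sigma$ gives $\int_A f_\sigma\,dP_{O,\id}=\sum_j c_j(\sigma)\,P_{O,E_j}(A)=P_{O,\sigma}(A)$, as required.

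The main obstacle---really the only delicate point---is the reconciliation of measure-theoretic and algebraic structure: Radon--Nikodym derivatives are unique only up to $P_{O,\id}$-almost-everywhere equality, so a density chosen independently for each $\sigma$ need not be \emph{exactly} linear in $\sigma$. The basis construction circumvents this by making a single finite family of choices and propagating them linearly, so that linearity becomes an identity rather than an almost-everywhere statement. Everything else---finiteness, bounded variation, absolute continuity---is routine once the positive semidefiniteness of the operators $O(A)$ is exploited.
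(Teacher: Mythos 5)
Your proposal is correct and follows essentially the same route as the paper: establish absolute continuity $P_{O,\sigma}\ll P_{O,\id}$ from the fact that a positive semidefinite operator with zero trace vanishes, then apply Radon--Nikodym only to a finite basis of $\End(\widetilde{\hil})$ (the paper uses the matrix units $|i\rangle\langle j|$) and extend linearly to force exact, rather than almost-everywhere, linearity in $\sigma$. The additional verifications of finiteness and bounded variation are routine and correctly handled.
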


We purposefully formulated this Lemma with slightly more general objects than the ones used in the definition of tomography. That is, $\Omega$ does not need to be $\mathcal{S}_d^*$, and $\widetilde{\hil}$ does not need to be the n-th power $\hil^{\otimes n}$, although we will focus on this case.

Note that for $\sigma\ge 0$, the measure $P_{O,\sigma}$ is finite and nonnegative, but nonnegativity (and even real-valuedness) do not hold for a general $\sigma\in \End(\widetilde{\hil})$. 
% This lemma, applied to $\Omega=\mathcal{S}_d^*$ and $\widetilde{\hil}=\hil^{\otimes n}$, allows us to introduce $f_\sigma:\mathcal{S}_d^*\to \mathbb{C}$, which satisfies
% \begin{equation}
%   \forall A\in\Sigma\ \ P_{\mathcal{T},\sigma}(A)=\int_A f_\sigma(\omega) dP_{\mathcal{T},\mathbbm{1}}(\omega).
% \end{equation}
By our definition of $f_\sigma(\omega)$, it can be written as
\begin{equation}
    f_\sigma(\omega)=\Tr\left[K(\omega)\sigma\right],\quad\text{where}\ K(\omega)=\sum_{i,j=1}^{\text{dim}(\tilde{\mathcal{H}})}f_{|i\>\<j|}(\omega)|j\>\<i|. \label{eq:introk}
\end{equation}
As the following lemma demonstrates, $K(\omega)\ge 0$ for $P_{O,\id}$-almost every $\omega$:

\begin{lemma}
  \label{lem:magic}
  Let $(\Omega, \Sigma, \mu)$ be a measurable space and $V: \Omega\to \End(\thil)$ be a measurable operator-valued function with values acting on a finite-dimensional Hilbert space $\thil$ such that
\begin{equation}
  \forall A\in\Sigma\quad \int_A V(\omega)d\mu(\omega) \ge 0.
\end{equation}
Then, $V(\omega) \ge 0$ $\mu$-almost everywhere.
\end{lemma}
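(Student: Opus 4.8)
The plan is to reduce the operator-valued statement to a scalar one by testing against vectors, exploiting that $\thil$ is finite-dimensional. Fix a vector $|v\rangle \in \thil$ and consider the function $g_v(\omega) := \langle v | V(\omega) | v \rangle$; since $V$ is measurable and the integral $\int_A V\,d\mu$ is assumed to exist entrywise, $g_v$ is a measurable, integrable, complex-valued function. The hypothesis then gives, for every $A \in \Sigma$,
\begin{equation}
\int_A g_v(\omega)\, d\mu(\omega) = \left\langle v \middle| \left( \int_A V(\omega)\, d\mu(\omega) \right) \middle| v \right\rangle \geq 0,
\end{equation}
where the inequality uses positive semidefiniteness of $\int_A V\,d\mu$. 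In particular, each such integral is a nonnegative real number.

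First I would record the elementary scalar fact: if $h : \Omega \to \mathbb{R}$ is integrable and $\int_A h\, d\mu \geq 0$ for all $A \in \Sigma$, then $h \geq 0$ $\mu$-almost everywhere. This follows by setting $A_n = \{ \omega : h(\omega) \leq -1/n \}$; if $\mu(A_n) > 0$ then $\int_{A_n} h\,d\mu \leq -\mu(A_n)/n < 0$, a contradiction, so $\{h < 0\} = \bigcup_n A_n$ is null. Applying this to the imaginary part of $g_v$ (whose integral over every $A$ vanishes, since $\int_A g_v\,d\mu$ is real) yields $\operatorname{Im} g_v = 0$ almost everywhere, and applying it to the real part yields $\operatorname{Re} g_v \geq 0$ almost everywhere. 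Hence, for each \emph{fixed} $|v\rangle$, we obtain $\langle v | V(\omega) | v \rangle \geq 0$ for $\mu$-almost every $\omega$, where the exceptional null set $N_v$ is allowed to depend on $|v\rangle$.

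The only genuinely nontrivial point — and the main obstacle — is to upgrade this to a single null set valid for all directions simultaneously, since the argument so far produces a separate $N_v$ for each $|v\rangle$. Here I would use finite-dimensionality of $\thil$: choose a countable dense subset $\{ |v_k\rangle \}_{k \in \mathbb{N}} \subseteq \thil$ and set $N := \bigcup_k N_{v_k}$, a countable union of null sets and therefore $\mu$-null. For every $\omega \notin N$ we have $\langle v_k | V(\omega) | v_k \rangle \geq 0$ for all $k$. Since $|v\rangle \mapsto \langle v | V(\omega) | v \rangle$ is continuous for each fixed $\omega$, density propagates the inequality to all $|v\rangle \in \thil$. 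Real-valuedness of $\langle v | V(\omega) | v \rangle$ for all $|v\rangle$ forces $V(\omega) = V(\omega)^\dagger$, and nonnegativity of the associated quadratic form then gives $V(\omega) \geq 0$. Thus $V(\omega) \geq 0$ for all $\omega \notin N$, which is exactly the claim.
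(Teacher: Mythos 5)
Your proposal is correct and follows essentially the same route as the paper's proof: test $V(\omega)$ against fixed vectors to reduce to a scalar statement, obtain a null set for each vector, take the union over a countable dense subset of the finite-dimensional space, and extend to all vectors by continuity. The only difference is that you spell out the scalar measure-theoretic step and the handling of the imaginary part (hence Hermiticity of $V(\omega)$), which the paper leaves implicit.
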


Now we can apply this analysis to the POVM corresponding to our tomography scheme, and get the desired upper bound on the fidelity.

\begin{theorem}\label{th:fidelity_upper_bound}
  For any tomography scheme $\mathcal{T}$ utilizing $n$ copies of the input state, the average fidelity is bounded by
\begin{equation}
  F(\mathcal{T})\le \frac{n+1}{n+d}.
\end{equation}
\end{theorem}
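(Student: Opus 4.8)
The plan is to collapse the average fidelity into a single operator inequality on the symmetric subspace, using the operator-valued Radon--Nikodym decomposition of Lemma~\ref{lem:radon} together with the positivity provided by Lemma~\ref{lem:magic}. First I would apply Lemma~\ref{lem:radon} with $\Omega=\mathcal{S}_d^*$, $\widetilde{\hil}=\hil^{\otimes n}$ and $O=\mathcal{T}$ (admissible since $\mathcal{T}(\mathcal{S}_d^*)=\Pi^+_n\le\mathbb{I}$), obtaining, for every operator $\rho\in\End(\hil^{\otimes n})$, the density relation $dP_{\mathcal{T},\rho}(\omega)=\Tr[K(\omega)\rho]\,dP_{\mathcal{T},\mathbb{I}}(\omega)$ with $K(\omega)$ as in~\eqref{eq:introk}; by Lemma~\ref{lem:magic} we have $K(\omega)\ge 0$ for $P_{\mathcal{T},\mathbb{I}}$-almost every $\omega$. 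Substituting $\rho=(\ket{\psi}\bra{\psi})^{\otimes n}$ into~\eqref{eq:avg_fidelity} and averaging over the uniform measure on pure states gives
\[
F(\mathcal{T})=\int d\psi\int_{\mathcal{S}_d^*}\Tr(\ket{\psi}\bra{\psi}\,\omega)\,\Tr\!\big[K(\omega)(\ket{\psi}\bra{\psi})^{\otimes n}\big]\,dP_{\mathcal{T},\mathbb{I}}(\omega).
\]

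Next I would interchange the two integrations by Tonelli's theorem, which is legitimate because the integrand is nonnegative almost everywhere (using $K(\omega)\ge 0$ and that $\omega,\ket{\psi}\bra{\psi}$ are positive) and both measures are finite. The product of the two traces is a single trace on $\hil^{\otimes(n+1)}$ of $(K(\omega)\otimes\omega)(\ket{\psi}\bra{\psi})^{\otimes(n+1)}$, with $K(\omega)$ acting on the first $n$ tensor factors and $\omega$ on the last, so the $\psi$-average produces $\int d\psi\,(\ket{\psi}\bra{\psi})^{\otimes(n+1)}$. By $U(d)$-invariance and Schur's lemma on the irreducible symmetric subspace, this Haar average equals $\Pi^+_{n+1}/D_{n+1}$ (the constant fixed by taking the trace), whence
\[
F(\mathcal{T})=\frac{1}{D_{n+1}}\int_{\mathcal{S}_d^*}\Tr\!\big[(K(\omega)\otimes\omega)\,\Pi^+_{n+1}\big]\,dP_{\mathcal{T},\mathbb{I}}(\omega).
\]

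To finish I would bound $\Pi^+_{n+1}\le\mathbb{I}$ and use $K(\omega)\otimes\omega\ge 0$ to get $\Tr[(K(\omega)\otimes\omega)\Pi^+_{n+1}]\le\Tr[K(\omega)\otimes\omega]=\Tr[K(\omega)]$. The normalization follows from the density relation itself: since $\Tr[\mathcal{T}(A)\rho]=\Tr\!\big[(\int_A K\,dP_{\mathcal{T},\mathbb{I}})\rho\big]$ for all $\rho$, the operators coincide, and at $A=\mathcal{S}_d^*$ this yields $\int_{\mathcal{S}_d^*}K(\omega)\,dP_{\mathcal{T},\mathbb{I}}(\omega)=\Pi^+_n$, hence $\int\Tr[K(\omega)]\,dP_{\mathcal{T},\mathbb{I}}(\omega)=\Tr\Pi^+_n=D_n$. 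Combining, $F(\mathcal{T})\le D_n/D_{n+1}=(n+1)/(n+d)$, where the last equality is the routine binomial identity for $D_m=\dim\hil^{\otimes m}_+$.

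I expect the algebra (Haar averaging, the positivity bound, the binomial ratio) to be routine; the main obstacle is the measure-theoretic bookkeeping around the operator-valued density $K$. Specifically, one must justify that $K$, defined only as an almost-everywhere Radon--Nikodym density, can be integrated against $P_{\mathcal{T},\mathbb{I}}$ to recover $\mathcal{T}$ as an operator-valued measure (so that the normalization $\int K\,dP_{\mathcal{T},\mathbb{I}}=\Pi^+_n$ is valid), and that the Tonelli interchange and the trace manipulations are permissible given that $P_{\mathcal{T},\rho}$ is in general only a complex measure for non-positive $\rho$. Once Lemmas~\ref{lem:radon} and~\ref{lem:magic} secure the existence and positivity of $K$, the argument reduces entirely to the positivity bound $\Pi^+_{n+1}\le\mathbb{I}$ and the dimension ratio.
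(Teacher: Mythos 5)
Your proposal is correct and follows essentially the same route as the paper: the operator-valued density $K(\omega)$ from Lemmas~\ref{lem:radon} and~\ref{lem:magic}, the Tonelli interchange, the Haar average $\int d\psi\,(\ket{\psi}\!\bra{\psi})^{\otimes(n+1)}=\Pi^+_{n+1}/D_{n+1}$, and the normalization $\int\Tr[K]\,dP_{\mathcal{T},\mathbb{I}}=\Tr\Pi^+_n=D_n$. The only difference is cosmetic: where the paper invokes the partial-trace identity~\eqref{eq:trs} and then bounds each $\sigma_n(k)\le\mathbb{I}$, you bound $\Tr[(K(\omega)\otimes\omega)\Pi^+_{n+1}]\le\Tr[K(\omega)]$ directly from $\Pi^+_{n+1}\le\mathbb{I}$ and $K(\omega)\otimes\omega\ge 0$, which is a valid and slightly cleaner way to reach the same final inequality.
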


\begin{proof}
  We will introduce the density $K(\omega)$ from~\eqref{eq:introk} for our tomography scheme $\mathcal{T}$ and the corresponding measure $P_{\mathcal{T},\sigma}$. Lemma~\ref{lem:radon} allows us to introduce for any $\sigma\in\End(\hil^{\otimes n})$ the density $f_\sigma:\Omega\to\mathbb{C}$ s.t.
\begin{equation}
 \forall A\in\Sigma\ \ P_{\mathcal{T},\sigma}(A)=\int_A f_\sigma(\omega) dP_{\mathcal{T},\mathbbm{1}}(\omega).
\end{equation}
This density can be written as $f_\sigma(\omega)=\Tr\left(K(\omega)\sigma\right)$ 
for some $K(\omega)\in\End(\hil^{\otimes n})$. $K(\omega)$ can be considered as the operator-valued density of $\mathcal{T}$ w.r.t. $P_{\mathcal{T},\id}$:
\begin{equation}\label{eq:opdensity}
    \forall A\in\Sigma\quad \mathcal{T}(A)=\int_A K(\omega)dP_{\mathcal{T},\id}(\omega).
\end{equation}
Since $\mathcal{T}(A)\ge 0$, it follows by Lemma~\ref{lem:magic} that $K(\omega)\ge 0$ for $P_{\mathcal{T},\id}$-almost all $\omega$. Furthermore, as $\mathcal{T}(\mathcal{S}_d^*)=\Pi^+_n$, we have that for all $A\in\Sigma$,
$\mathcal{T}(A)\le \Pi^+_n$. Therefore, $\mathcal{T}(A)\Pi^+_n=\Pi^+_n \mathcal{T}(A)=\mathcal{T}(A)$. This means that $\tilde{K}(\omega)=\Pi^+_nK(\omega)\Pi^+_n$ would also satisfy~\eqref{eq:opdensity}. In the following, we will without loss of generality assume that
\begin{equation}
    K(\omega)=\Pi^+_nK(\omega)=K(\omega)\Pi^+_n.
\end{equation}

With these tools at hand, we are ready to adapt the proof from~\cite{hayashi2005reexamination} to the general case of POVM tomography schemes. We begin by rewriting the expression~\eqref{eq:avg_fidelity} for average fidelity:
\begin{align}
  F(\mathcal{T}) &= \int d\psi \int dP_{\mathcal{T},(\ket{\psi}\!\bra{\psi})^{\otimes n}}(\sigma) \Tr(\sigma\ket{\psi}\!\bra{\psi}) \\
                 &= \int d\psi \int dP_{\mathcal{T},\mathbb{I}}(\sigma)\Tr(\ket{\psi}\!\bra{\psi}\sigma)\Tr\left( K(\sigma)(\ket{\psi}\!\bra{\psi})^{\otimes n} \right). \label{eq:fidelity1}
\end{align}
Since fidelity is nonnegative and its average is bounded by 1, we can change the order of integration. Following~\cite{hayashi2005reexamination}, we introduce notation
\begin{equation}
  \sigma_n(k)=\mathbb{I}^{\otimes (k-1)}\otimes \sigma \otimes \mathbb{I}^{\otimes (n-k)}\in \hil^{\otimes n}.
\end{equation}
The product of traces in~\eqref{eq:fidelity1} can be rewritten in the following manner:
\begin{equation}
    F(\mathcal{T})=\int dP_{\mathcal{T},\id}(\sigma)\int d\psi \Tr\left( (K(\sigma)\otimes \mathbb{I})(\ket{\psi}\!\bra{\psi})^{\otimes (n+1)}\sigma_{n+1}(n+1) \right). \label{eq:fidelity2}
\end{equation}
We can now take the inner integral in closed form. As shown in~\cite[Eq.~(4)]{hayashi2005reexamination},
\begin{equation}
  \int d\psi (\ket{\psi}\!\bra{\psi})^{\otimes n}=\frac{\Pi^+_n}{D_n},
  \label{eq:avgpsi}
\end{equation}
where $D_n=\binom{n+d-1}{d}$. Another useful result in this paper is~\cite[Eq.~(8)]{hayashi2005reexamination}:
\begin{equation}
  \Tr_{n+1}\left( \Pi^+_{n+1}\sigma_{n+1}(n+1) \right)=\frac{1}{n+1}\Pi^+_n\left(\mathbb{I}+\sum_{k=1}^n\sigma_n(k)\right),
  \label{eq:trs}
\end{equation}
where $\Tr_{n+1}:\End(\hil^{\otimes (n+1)})\to\End(\hil^{\otimes n})$ is the partial trace on the $(n+1)$-st copy of the system. These expressions allow us to rewrite~\eqref{eq:fidelity2} as follows:
\begin{align}
  F(\mathcal{T}) &= \frac{1}{D_{n+1}}\int dP_{\mathcal{T},\id}(\sigma)\Tr\left( (K(\sigma)\otimes \mathbb{I})\Pi^+_{n+1} \sigma_{n+1}(n+1)  \right) \label{eq:avgnp1} \\
                 &= \frac{1}{D_{n+1}}\int dP_{\mathcal{T},\id}(\sigma)\Tr\left( K(\sigma)\Tr_{n+1}\left(\Pi^+_{n+1} \sigma_{n+1}(n+1)\right)\right)  \\
                 &= \frac{1}{(n+1)D_{n+1}}\int dP_{\mathcal{T},\id}(\sigma)\Tr\left( K(\sigma)\left(\mathbb{I}+\sum_{k=1}^n\sigma_n(k)  \right)\right).
\end{align}
Finally, $\sigma_n(k)\le\mathbb{I}$, so $\Tr(K(\sigma)\sigma_n(k))\le \Tr(K(\sigma))$, and we can bound the above as
\begin{equation}
F(\mathcal{T})\le \frac{1}{D_{n+1}}\int dP_{\mathcal{T},\id}(\sigma)\Tr\left( K(\sigma)\right) = \frac{\Tr \Pi^+_n}{D_{n+1}}=\frac{D_n}{D_{n+1}}=\frac{n+1}{n+d}.
\end{equation}
\end{proof}

\subsubsection{Bandit policy as a sequence of tomographies}

Given the fidelity bound in Theorem~\ref{th:fidelity_upper_bound} we are ready to apply it to our multi-armed quantum bandit problem. In the following theorem, we reduce the PSMAQB setting to a sequence of quantum state tomography protocols and apply the mentioned fidelity bound.

\begin{theorem}\label{th:state_tomography_lowerbound}
 Let $T,\,d\in\N$. For any policy $\pi$ and action set of observables $\mathcal{A}$ containing all rank-1 projections, i.e.,\ $\mc A=\mc S_d^*$, there exists an environment $\ket{\psi}\!\bra{\psi}\in\mc S^*_d$ such that
  \begin{align}
    \int d\psi \EX_{\psi,\pi} \left[R_T(\mathcal{A},\psi,\pi) \right] \geq (d-1)\log\left(\frac{T}{d+1} \right).
  \end{align}
\end{theorem}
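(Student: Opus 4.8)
The plan is to exploit the reduction anticipated in the discussion above, namely to view the action taken at round $t$ as the output of a quantum state tomography scheme consuming $t-1$ copies of the unknown state, and then to invoke the fidelity upper bound of Theorem~\ref{th:fidelity_upper_bound}. Starting from the regret identity~\eqref{eq:regret_purestate_lowerbound}, $R_T(\mathcal{A},\psi,\pi)=\sum_{t=1}^T\big(1-\langle\psi|\Pi_t|\psi\rangle\big)$ with $\Pi_t=|\psi_t\rangle\!\langle\psi_t|$, the first step is to observe that for each fixed $t$ the random projector $\Pi_t$ is a (policy-)function of the history $\mathcal{H}_{t-1}=(A_s,X_s)_{s=1}^{t-1}$, which is generated by measuring $t-1$ copies of $\psi$. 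Hence the map sending $\psi^{\otimes(t-1)}$ to the distribution of $\Pi_t$ defines a tomography scheme $\mathcal{T}_t$ on $n=t-1$ copies.

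Second, I would make this reduction rigorous by building the POVM for $\mathcal{T}_t$ explicitly from the sequential adaptive measurement together with the policy, in the spirit of the construction~\eqref{eq:EPOVM} used in Lemma~\ref{lemma:D1/2ineq}: the first $t-1$ rounds of the bandit interaction define a POVM on $\hil^{\otimes(t-1)}$ with outcomes in $(\mathcal{S}_d^*,\Sigma)$ (the policy's internal randomization being absorbed as classical post-processing) whose probability of declaring $\Pi_t\in A$ equals the corresponding $P_{\psi,\pi}$-probability. Since this POVM is only ever evaluated on the symmetric input $\Pi^{\otimes(t-1)}$, one may replace each POVM element $M$ by $\Pi^+_{t-1}M\Pi^+_{t-1}$ without changing any outcome probability, so that $\mathcal{T}_t(\mathcal{S}_d^*)=\Pi^+_{t-1}$ and $\mathcal{T}_t$ is a bona fide tomography scheme. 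With this identification, the fidelity definition~\eqref{eq:avg_fidelity} gives $F(\mathcal{T}_t,|\psi\rangle\!\langle\psi|)=\EX_{\psi,\pi}[\langle\psi|\Pi_t|\psi\rangle]$, and integrating over the Haar measure yields $\int d\psi\,\EX_{\psi,\pi}[\langle\psi|\Pi_t|\psi\rangle]=F(\mathcal{T}_t)\le \frac{t}{t+d-1}$ by Theorem~\ref{th:fidelity_upper_bound} with $n=t-1$.

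Third, summing the per-round contributions and using $1-\frac{t}{t+d-1}=\frac{d-1}{t+d-1}$,
\begin{align}
\int d\psi\,\EX_{\psi,\pi}\big[R_T(\mathcal{A},\psi,\pi)\big]
=\sum_{t=1}^T\Big(1-\int d\psi\,\EX_{\psi,\pi}\big[\langle\psi|\Pi_t|\psi\rangle\big]\Big)
\ge (d-1)\sum_{t=1}^T\frac{1}{t+d-1}.
\end{align}
Finally, lower bounding the harmonic-type sum by an integral of the decreasing integrand $x\mapsto (x+d-1)^{-1}$ gives $\sum_{t=1}^T (t+d-1)^{-1}\ge\int_1^{T+1}(x+d-1)^{-1}\,dx=\log\frac{T+d}{d}\ge\log\frac{T}{d+1}$, where the last step uses $(T+d)(d+1)>Td$. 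This delivers the claimed bound $\int d\psi\,\EX_{\psi,\pi}[R_T]\ge (d-1)\log\frac{T}{d+1}$, and the existence of a single worst-case environment follows since the maximum over $\psi$ dominates the Haar average.

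I expect the main obstacle to be the second step: turning the adaptive, sequential, single-copy bandit interaction into a legitimate $(t-1)$-copy tomography POVM, verifying the measurability of the induced outcome map on $(\mathcal{S}_d^*,\Sigma)$, and justifying the symmetrization that makes $\mathcal{T}_t$ satisfy $\mathcal{T}_t(\mathcal{S}_d^*)=\Pi^+_{t-1}$. Once the fidelity of $\mathcal{T}_t$ is correctly identified with $\EX_{\psi,\pi}[\langle\psi|\Pi_t|\psi\rangle]$, the remaining estimates are elementary, and the quadratic-versus-linear mismatch that forced the earlier \emph{constant} two-point bound is here replaced by the genuinely sharper tomographic rate $\frac{d-1}{t+d-1}=\Theta\!\big((d-1)/t\big)$, whose summation produces the logarithm.
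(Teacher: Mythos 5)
Your proposal is correct and follows essentially the same route as the paper: reduce the $t$-th action to a tomography scheme via a pushforward of the policy-induced POVM, apply the average-fidelity bound of Theorem~\ref{th:fidelity_upper_bound}, and sum the resulting $\Theta((d-1)/t)$ gaps against an integral. The only (harmless) difference is that you attribute $t-1$ copies to the $t$-th scheme where the paper uses $t$, which gives a marginally tighter per-round fidelity bound $\tfrac{t}{t+d-1}$ versus $\tfrac{t+1}{t+d}$ but lands at the same stated inequality.
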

\begin{proof}
  Given a policy $\pi$, we can introduce a POVM $E_t: (\Sigma\times \{0,1\})^{\times t}\to\End(\hil^{\otimes t})$ such that
  \begin{equation}
    P^t_{\ket{\psi}\!\bra{\psi}, \pi}(A_1,X_1,\dotsc,A_t,X_t)=\Tr\left( (\ket{\psi}\!\bra{\psi})^{\otimes t}E_t(A_1,X_1,\dotsc,A_t,X_t) \right),
  \end{equation}
where $P^t_{\ket{\psi}\!\bra{\psi},\pi}$ is the probability measure defined by~\eqref{eq:probdens_continous}, but only for actions and rewards until step $t$. The construction of this POVM is presented in the proof of Lemma~\ref{lemma:D1/2ineq}.
We will also define the coordinate mapping
\begin{equation}
  \Psi_t(\Pi_{1}, X_1,\dotsc,\Pi_{t},X_t)=\Pi_{t},
\end{equation}
where $\Pi_{i}\in\mathcal{A}$ are actions and $r_i\in\{0,1\}$ are rewards of the PSMAQB. Now we can for each step $t$ define a tomography scheme $\mathcal{T}_t=E_t\circ \Psi_t^{-1}$ as the pushforward POVM from $E_t$ to the space $(\mathcal{A},\Sigma)$. Informally, this tomography scheme takes $t$ copies of the state, runs the policy $\pi$ on them, and outputs the $t$-th action of the policy as the predicted state. For $A\in\Sigma$, we can rewrite the tomography's distribution on predictions as
\begin{equation}
    P_{\mathcal{T},(\ket{\psi}\!\bra{\psi})^{\otimes t}}(A)=\Tr\left(\mathcal{T}_t(A)(\ket{\psi}\!\bra{\psi})^{\otimes t}\right)=\Tr\left(E_t(\Psi_t^{-1}(A))(\ket{\psi}\!\bra{\psi})^{\otimes t}\right)=\left(P^t_{\ket{\psi}\!\bra{\psi},\pi}\circ \Psi^{-1}\right)(A).
\end{equation}
Then, the fidelity of $\mathcal{T}_t$ on the input $\ket{\psi}\!\bra{\psi}$ can be rewritten as
\begin{align}
    F(\mathcal{T}_t,\ket{\psi}\!\bra{\psi}) &=\int \<\psi|\rho|\psi\> dP_{\mathcal{T}_t,(\ket{\psi}\!\bra{\psi})^{\otimes t}}(\rho) \\
    &= \int \<\psi|\Psi_t(\Pi_1,X_1,\dotsc,\Pi_t,X_t)|\psi\> dP^t_{\ket{\psi}\!\bra{\psi},\pi}(\Pi_1,X_1,\dotsc,\Pi_t,X_t) \\
    &= \EX_{\psi,\pi}\left[\<\psi|\Pi_{t}|\psi\>\right].
\end{align}
Using the bound for average tomography fidelity on $\mathcal{T}_t$ from Theorem~\ref{th:fidelity_upper_bound}, we can now bound the average regret of $\pi$:
\begin{align}
  \int \EX_{\psi,\pi}\left[R_T( \mathcal{A} ,  \psi , \pi)\right]d\psi &= T-\sum_{t=1}^T\int \ \EX_{\psi,\pi}\left[\<\psi|\Pi_{t}|\psi\>\right]d\psi \label{eq:sumint}\\
  &= T-\sum_{t=1}^T F(\mathcal{T}_t) \label{eq:introt} \geq \sum_{t=1}^T 1 - \frac{t+1}{t+d} \\
  & = \sum_{t=1}^T \frac{d-1}{t+d} \geq (d-1)\log\left( \frac{T}{d+1} \right),
\end{align}
where the first equality follows from the form of the regret~\eqref{eq:regret_purestate_lowerbound} and the last inequality follows from bounding the sum by below with the integral of the function $f(t) = 1/(t+d)$.
\end{proof}

This result is averaged over all pure states, meaning that at least one state reaches the bound. Given the symmetries of the problem, we can intuitively expect that no particular subset of states is significantly harder to learn than others for any given policy.

Given this result, one may observe that since we employed a tomography-based approach without accounting for the exploration–exploitation constraint, the bound obtained is not tight. In particular, tomography guarantees at least a logarithmic contribution, as the infidelity scales as $1-F = \Omega(1/t)$ with $t$ copies, leading to a cumulative contribution of order $\log(T)$. However, as we will show in the next chapter this bound is indeed almost tight.

\chapter{Algorithms and regret upper bounds}\label{ch:upperbounds}

This chapter builds on the algorithms developed by the author in~\cite{Lumbreras2022multiarmedquantum, pmlr-v247-lumbreras24a, lumbreras24pure}. We begin by reviewing some of the most well-known algorithms based on upper confidence bounds, or $\textsf{UCB}$-type strategies. These methods can be readily applied to the MAQB settings introduced in the previous section and already yield a regret scaling of order $\sqrt{T}$. We then present a simple Explore-Then-Commit strategy for the PSMAQB setting, grounded in a quantum state estimation algorithm. After this warm-up, we move into the core technical contribution of the chapter: the design and theoretical analysis of algorithms for linear bandits with vanishing noise. To study these settings, we introduce a novel technique that tightly controls the eigenvalues of the design matrix, leading to new proofs of regret bounds in the linear stochastic bandit setting. These techniques will allow us to establish a polylogarithmic upper bound for the PSMAQB setting.

\section{Optimism in the face of uncertainty principle}\label{sec:ofu}

Two of the most widely used algorithms for solving bandit problems are UCB (Upper Confidence Bounds)~\cite{auer2002finite} and Thompson Sampling~\cite{thompson33,agrawal2013thompson}. In this thesis, we focus on UCB-type algorithms, which are based on the principle of optimism in the face of uncertainty (OFU). This principle is a natural and effective approach to designing algorithms for many stochastic online learning problems. It suggests that, when faced with uncertainty, the learner should act as if the most favorable plausible scenario is true, encouraging exploration of potentially optimal choices. In the context of bandit problems, this translates into selecting actions that balance exploration and exploitation by considering optimistic estimates of expected rewards.

Given a set of actions $\mathcal{A}$ and a set of possible environments $\Gamma$, the algorithm follows a simple yet effective strategy. At each time step $t \in [T]$, it uses past observations of rewards and actions to construct a confidence set $\mathcal{C}_{t-1} \subseteq \Gamma$ that contains the unknown environment with high probability. The next action is then selected based on the optimism in the face of uncertainty (OFU) principle:  
\begin{align}\label{eq:optimistic_principle}
    a_t = \argmax_{a\in\mathcal{A}} \max_{\theta'\in\mathcal{C}_{t-1}} \langle \theta' , a \rangle,
\end{align}
where $\langle \cdot , \cdot \rangle$ is a well-defined inner product between elements in the action set and the environment set $\Gamma$ (for linear bandits, we consider real vector spaces). The key challenge in applying the OFU principle lies in constructing confidence sets $\mathcal{C}_{t-1}$ that are small enough, so we choose actions close to the optimal one, while still containing the true environment with high probability.

The UCB-type algorithms often require certain assumptions on the reward distribution, such as finite support or bounded variance. The reason for this is that constructing confidence regions relies on concentration inequalities, which provide probabilistic guarantees on how much the observed rewards deviate from their expected values. In particular, we focus on rewards that are sub-Gaussian with a finite sub-Gaussian parameter. This assumption ensures that the reward distribution satisfies strong tail bounds, making it possible to derive meaningful confidence sets. To formalize this in the multi-armed quantum bandit setting, we state the following lemma, which will be used repeatedly throughout our analysis.

\begin{lemma}\label{lem:subgaussianrewards}
 Consider a general or discrete multi-armed quantum bandit problem $(\mathcal{A},\Gamma)$  with environment $\rho\in\Gamma$ such that, for any $O\in\mathcal{A}$, $\|O\| < \infty $. Then if $O_{A_t}\in\mathcal{A}$ is the observable selected at round $t\in[T]$,  the rewards are of the form
\begin{align}\label{eq:quantum_reward_lemma}
X_t = \Tr \left( \rho O_{A_t} \right) + \epsilon_t,
\end{align}
where the statistical noise $\epsilon_t$ is $\eta$-subgaussian given $X_1,O_{A_1},...,X_{t-1},O_{A_{t-1}},O_{A_t}$ with
\begin{align}
    \eta \leq \max_{O\in\mathcal{A}} \| O \| .
\end{align}
\end{lemma}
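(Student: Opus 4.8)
The plan is to identify the reward's conditional mean and then apply the Hoeffding Lemma to the zero-mean fluctuation. First I would condition on the past history $\mathcal{H}_{t-1} = (A_s, X_s)_{s=1}^{t-1}$ together with the current action $A_t$; once these are fixed, the state $\rho$ and the observable $O_{A_t}$ are determined, and the only remaining randomness in $X_t$ is the measurement outcome distributed by Born's rule as in~\eqref{eq:reward_distribution_discrete}. Writing the spectral decomposition $O_{A_t} = \sum_i \lambda_{A_t,i}\Pi_{A_t,i}$, the conditional expectation becomes
\begin{align}
\EX[X_t \mid \mathcal{H}_{t-1}, A_t] = \sum_i \lambda_{A_t,i}\Tr(\rho\Pi_{A_t,i}) = \Tr(\rho O_{A_t}),
\end{align}
which is exactly the claimed mean. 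Defining $\epsilon_t := X_t - \Tr(\rho O_{A_t})$ then yields the decomposition~\eqref{eq:quantum_reward_lemma} with $\EX[\epsilon_t \mid \mathcal{H}_{t-1}, A_t] = 0$.

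The next step is to observe that $X_t$ is bounded almost surely once we condition on $A_t$. Since $O_{A_t}$ is Hermitian, its operator norm equals the largest absolute value of its eigenvalues, so every outcome $\lambda_{A_t,i}$ satisfies $|\lambda_{A_t,i}| \leq \|O_{A_t}\|$. Hence $X_t \in [-\|O_{A_t}\|, \|O_{A_t}\|]$ almost surely, an interval of length $2\|O_{A_t}\|$, and the same bound holds for $\epsilon_t$ up to the additive shift by its mean.

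With boundedness in hand, I would apply the Hoeffding Lemma conditionally, treating $\EX[\,\cdot \mid \mathcal{H}_{t-1}, A_t]$ as the relevant expectation: for any $\alpha \in \mathbb{R}$,
\begin{align}
\EX[\exp(\alpha \epsilon_t) \mid \mathcal{H}_{t-1}, A_t] \leq \exp\left(\frac{\alpha^2(2\|O_{A_t}\|)^2}{8}\right) = \exp\left(\frac{\alpha^2 \|O_{A_t}\|^2}{2}\right).
\end{align}
Comparing with the defining inequality~\eqref{eq:subgaussian_def} of an $\eta$-subgaussian variable identifies $\eta = \|O_{A_t}\|$, and since $\|O_{A_t}\| \leq \max_{O\in\mathcal{A}}\|O\|$ this gives the stated bound. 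The argument is identical in the discrete and general settings, as the reward takes values in a finite set (the spectrum of $O_{A_t}$, contained in $\mathcal{X}$) in both cases.

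The proof has no genuine obstacle; the only point requiring mild care is the conditional application of Hoeffding's Lemma. One must verify that, after conditioning on everything up to and including $A_t$, the variable $X_t$ is a bona fide bounded random variable with a fixed distribution, so that the unconditional Hoeffding Lemma may be invoked inside the conditional expectation. This is guaranteed because $A_t$ is measurable with respect to the conditioning information, so fixing it removes all randomness except the measurement outcome on the fixed state $\rho$.
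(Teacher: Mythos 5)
Your proposal is correct and follows essentially the same route as the paper: define $\epsilon_t := X_t - \Tr(\rho O_{A_t})$, note that $X_t$ lies in an interval of length at most $2\|O_{A_t}\|$ since its values are eigenvalues of $O_{A_t}$, and apply Hoeffding's Lemma to obtain the subgaussian bound. Your additional verification of the conditional mean and the remark on conditioning are sound elaborations of steps the paper leaves implicit.
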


\begin{proof}
We define $\epsilon_t$ simply as
\begin{align}
    \epsilon_t := X_t -  \Tr \left(\rho O_{A_t} \right),
\end{align}
and the form~\eqref{eq:quantum_reward_lemma} follows. Given $\lambda_{\max}(O)$ and $\lambda_{\min}(O)$ we have that $\lambda_{\min}({O_t}) \leq X_t \leq \lambda_{\max}(O_t)$ and $\lambda_{\max}(O_t)-\lambda_{\min}(O_t)\leq 2\|O _{A_t} \|$. Then we can apply Hoeffding Lemma~\cite[Equation 4.16]{Hoeffding} and
   \begin{align}
       \EX [e^{\lambda (X_t-\Tr(\rho O_{A_t} ) } ] \leq \exp \left( \frac{\lambda^2 \| O_{A_t} \|^2}{2} \right) \leq \exp \left( \frac{\lambda^2(\max_{O\in\mathcal{A}} \| O \| )^2}{2} \right)
   \end{align}
   And the result follows from the definition of subgaussian parameter.
\end{proof}

\subsection{\textsf{LinUCB} algorithm for multi-armed quantum bandits}\label{sec:linucb}

In this section, we review one of the most well-known algorithms for linear stochastic bandits, called \textsf{LinUCB} (linear upper confidence bound) or \textsf{LinRel} (linear reinforcement learning). We will just call it \textsf{LinUCB}.  The first algorithm for linear stochastic bandits was studied in~\cite{abe1999associative}, and the first one based on the optimistic principle in~\cite{aurer} where the authors considered a finite action set. The version of the \textsf{LinUCB} algorithm that we introduce is based on~\cite{lin1,lin2,lin3}, which extended the setting to allow an infinite action set. This algorithm follows the optimism in the face of uncertainty principle to construct confidence sets and select actions accordingly. In the following, we present the algorithm and its main properties following these references before discussing its application to our model.

%The classical bandits are analogous to our quantum case with the main difference that instead of performing a measurement at each round we sample a reward from a set of probability distributions. In order to introduce the \textsf{LinUCB} algorithm we quickly review stochastic linear bandits. 

%Now we introduce the \textsf{LinUCB} algorithm and explain how it fits to our quantum model. 

Recall that in the classical linear model, the rewards are sampled as  
\begin{equation}
    X_t = \langle \theta, A_t \rangle + \epsilon_t,
\end{equation}  
where $\theta \in \mathbb{R}^d$ is the unknown parameter, $A_t \in \mathcal{A} \subseteq \mathbb{R}^d$ is the chosen action, and $\epsilon_t$ is conditionally sub-Gaussian noise. The \textsf{LinUCB} algorithm minimizes regret by constructing a confidence ellipsoid for the unknown parameter $\theta$ and selecting the action in $\mathcal{A}$ that maximizes the estimated reward within this confidence region.  

The confidence region is built using the least-squares estimator
\begin{align}\label{eq:least_squares_estimator_online}
    \hat{\theta}_t = \argmin_{\theta' \in \mathbb{R}^d} \left( \sum_{s=1}^t \left( X_s -  \langle \theta', A_s \rangle \right)^2 + \lambda \| \theta' \|_2^2 \right),
\end{align} 
where $X_s$ is the reward received at round $s$, $A_s$ is the action selected from $\mathcal{A}$, and $\lambda \geq 0$ is a regularization parameter ensuring the function has a unique minimum.  

Solving this minimization problem yields the closed-form estimator
\begin{align}\label{eq:lse}
    \hat{\theta}_t = V_t^{-1} \sum_{s=1}^t A_s X_s,
\end{align} 
where $V_t$, known as the \textit{design matrix}, is defined as  
\begin{align}\label{eq:design_matrix}
    V_t := \lambda \mathbb{I} + \sum_{s=1}^t A_s A_s^\mathsf{T}.
\end{align}
By construction, $V_t$ is positive definite and induces the norm $\| x \|^2_{V_t} = \langle x , V_t x\rangle$ for any $x \in \mathbb{R}^d$.  

Using the estimator $\hat{\theta}_t$, we can construct a confidence region $\mathcal{C}_t \subset \mathbb{R}^d$ at each round $t$. The following theorem provides the exact form of this confidence region.

\begin{theorem}[Theorem 20.5 in \cite{lattimore_banditalgorithm_book}]\label{th:confidence_region}
Let $\delta \in (0,1)$ and $(A_t)^\infty_{t=1}$ be the actions of linear stochastic bandit selected by some policy with corresponding rewards $(X_t)^\infty_{t=1}$ given  by $X_t = \langle \theta ,A_t \rangle + \epsilon_t$, where $\theta\in\mathbb{R}^d$ is the unknown parameter and $\epsilon_t$ is $\eta$-subgaussian. Let $L,m\in\mathbb{R}$ such that $\| A_t \|_2$ and $\| \theta \|_2 \leq m$. Then we can define the following confidence region 
\begin{align}\label{eq:confidence_region_book}
    \mathcal{C}_t := \lbrace \theta'\in\mathbb{R}^d : \| \theta' - \hat{\theta}_t \|^2_{V_t} \leq \beta_{t,\delta}  \rbrace,
\end{align}
where $\hat{\theta}_t$ is defined as in~\eqref{eq:lse}, $V_t$ in~\eqref{eq:design_matrix} for some $\lambda > 0$, and
\begin{align}\label{eq:beta_book}
\beta_{t,\delta} = \left( \eta\sqrt{2\log \frac{1}{\delta} + d\log\left(\frac{d\lambda + tL^2}{d \lambda } \right)}+ \eta\sqrt{\lambda}\right)^2. 
\end{align}
Then
\begin{align}\label{eq:prob_confidence}
    \mathrm{Pr}\left( \forall s \in [t]:  \theta \in \mathcal{C}_s \right) \geq 1 - \delta.
\end{align}
\end{theorem}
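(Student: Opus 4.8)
The plan is to control the weighted estimation error $\|\hat{\theta}_t - \theta\|_{V_t}$ by splitting it into a deterministic regularization bias and a purely stochastic noise term, and then to tame the noise term with a self-normalized martingale concentration inequality. First I would substitute $X_s = \langle \theta , A_s \rangle + \epsilon_s$ into the closed form~\eqref{eq:lse}. Writing $S_t := \sum_{s=1}^t A_s \epsilon_s$ and using $\sum_{s=1}^t A_s A_s^{\mathsf{T}} = V_t - \lambda \mathbb{I}$, this gives the exact decomposition $\hat{\theta}_t - \theta = V_t^{-1} S_t - \lambda V_t^{-1}\theta$. Applying the triangle inequality in the $\|\cdot\|_{V_t}$ norm and simplifying $\|V_t^{-1} x\|_{V_t} = \|x\|_{V_t^{-1}}$ yields $\|\hat{\theta}_t - \theta\|_{V_t} \leq \|S_t\|_{V_t^{-1}} + \lambda \|\theta\|_{V_t^{-1}}$. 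The bias term is easy: since $V_t \succeq \lambda \mathbb{I}$ we have $\lambda \|\theta\|_{V_t^{-1}} \leq \lambda \|\theta\|_2 / \sqrt{\lambda} = \sqrt{\lambda}\,\|\theta\|_2$, which under the stated norm bound accounts for the additive $\eta\sqrt{\lambda}$ piece of $\sqrt{\beta_{t,\delta}}$. So the entire problem reduces to bounding the self-normalized quantity $\|S_t\|_{V_t^{-1}}$ uniformly in $t$.

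The hard part will be this self-normalized bound, which is where the nontrivial probabilistic work lives. I would prove it by the \emph{method of mixtures}: for a fixed direction $x \in \mathbb{R}^d$ the process $M_t(x) = \exp\bigl(\langle x, S_t\rangle - \tfrac{1}{2}\eta^2 \|x\|_{\bar{V}_t}^2\bigr)$, with $\bar{V}_t = \sum_{s=1}^t A_s A_s^{\mathsf{T}}$, is a nonnegative supermartingale with $\EX[M_0] = 1$; this uses precisely the conditional $\eta$-subgaussianity~\eqref{eq:subgaussian_def} of $\epsilon_s$ given the past and the current action, together with the predictability of $A_s$. Integrating $M_t(x)$ against a centered Gaussian prior on $x$ with covariance $\lambda^{-1}\mathbb{I}$ produces a new supermartingale whose value is, after a Gaussian integral, proportional to $\det(V_t)^{-1/2}\exp\bigl(\tfrac{1}{2\eta^2}\|S_t\|_{V_t^{-1}}^2\bigr)$. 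A maximal inequality (Ville's inequality) applied to this mixture supermartingale then gives, with probability at least $1-\delta$ and \emph{simultaneously for all} $t$,
\begin{align}
\|S_t\|_{V_t^{-1}}^2 \leq 2\eta^2 \log\!\left( \frac{1}{\delta}\sqrt{\frac{\det(V_t)}{\det(\lambda \mathbb{I})}} \right).
\end{align}
The uniform-in-time nature of Ville's inequality is exactly what delivers the ``$\forall s \in [t]$'' guarantee in~\eqref{eq:prob_confidence} for free, with no additional union bound.

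It then remains to convert the determinant ratio into the explicit $\beta_{t,\delta}$. Here I would bound $\det(V_t)$ by an elementary eigenvalue argument: the trace satisfies $\Tr(V_t) = d\lambda + \sum_{s=1}^t \|A_s\|_2^2 \leq d\lambda + t L^2$, and by the AM--GM inequality on the eigenvalues $\det(V_t) \leq \bigl(\Tr(V_t)/d\bigr)^d$, while $\det(\lambda\mathbb{I}) = \lambda^d$. Hence $\log\bigl(\det(V_t)/\det(\lambda\mathbb{I})\bigr) \leq d\log\bigl((d\lambda + tL^2)/(d\lambda)\bigr)$, which matches the term inside $\beta_{t,\delta}$ in~\eqref{eq:beta_book}. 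Combining the self-normalized bound with the bias estimate through $(a+b)^2$ and the definition of $\beta_{t,\delta}$ gives $\|\hat{\theta}_t - \theta\|_{V_t}^2 \leq \beta_{t,\delta}$ on the same high-probability event, i.e.\ $\theta \in \mathcal{C}_t$ for all $s \in [t]$, completing the argument. I expect the only genuinely delicate step to be verifying the supermartingale property and justifying the interchange of expectation and Gaussian integration; the decomposition and the determinant bound are routine.
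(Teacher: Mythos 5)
Your proposal is correct and follows essentially the same route as the paper, which defers the proof of this confidence region to the weighted variant in Lemma~\ref{lem:confidence_region_weighted}: there too the argument decomposes the error into a $\sqrt{\lambda}\,\|\theta\|_2$ regularization bias plus a self-normalized noise term, builds the exponential supermartingale $\exp(\langle x, S_t\rangle - \tfrac12\|x\|^2_{V_t})$, mixes it against a Gaussian prior, and applies Ville's maximal inequality to get the uniform-in-time guarantee. The AM--GM conversion of $\det(V_t)$ into the explicit $d\log\bigl((d\lambda+tL^2)/(d\lambda)\bigr)$ term is likewise the step the paper uses when instantiating $\beta_{t,\delta}$.
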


We note that the confidence region given in Theorem~\ref{th:confidence_region} is an ellipsoid centered at the least squares estimator and aligned with the principal axes of the eigenvectors of the design matrix. A sketch of this region for a particular case in $\mathbb{R}^2$ is shown in Figure~\ref{fig:confidence_region}. We will provide a proof of the confidence region $\mathcal{C}_t$ in Section~\ref{sec:subgaussian_parameter} for a modified version of the Lemma.  

Given the confidence region~\eqref{eq:confidence_region}, the \textsf{LinUCB} algorithm proceeds as follows: at each round $t\in \{1, \dots, T \}$, it constructs the confidence region $\mathcal{C}_t$, selects an optimistic estimate $\tilde{\theta}_t$, and chooses the action that maximizes the estimated reward:  
\begin{equation}\label{eq:action_selection_linUCB}  
    (A_t, \tilde{\theta}_t) = \argmax_{A\in\mathcal{A}, \theta'\in\mathcal{C}_{t-1}} \langle \theta', A \rangle.
\end{equation}  
The algorithm then selects $A_t$ and observes the corresponding reward. The vector $\tilde{\theta}_t$ can be interpreted as the optimistic estimate of the parameter used by the algorithm. The complete pseudo-code is provided in Algorithm~\ref{alg:LinUCB}.

\begin{figure}[h!]
\centering
\begin{overpic}[percent,width=0.65\textwidth]{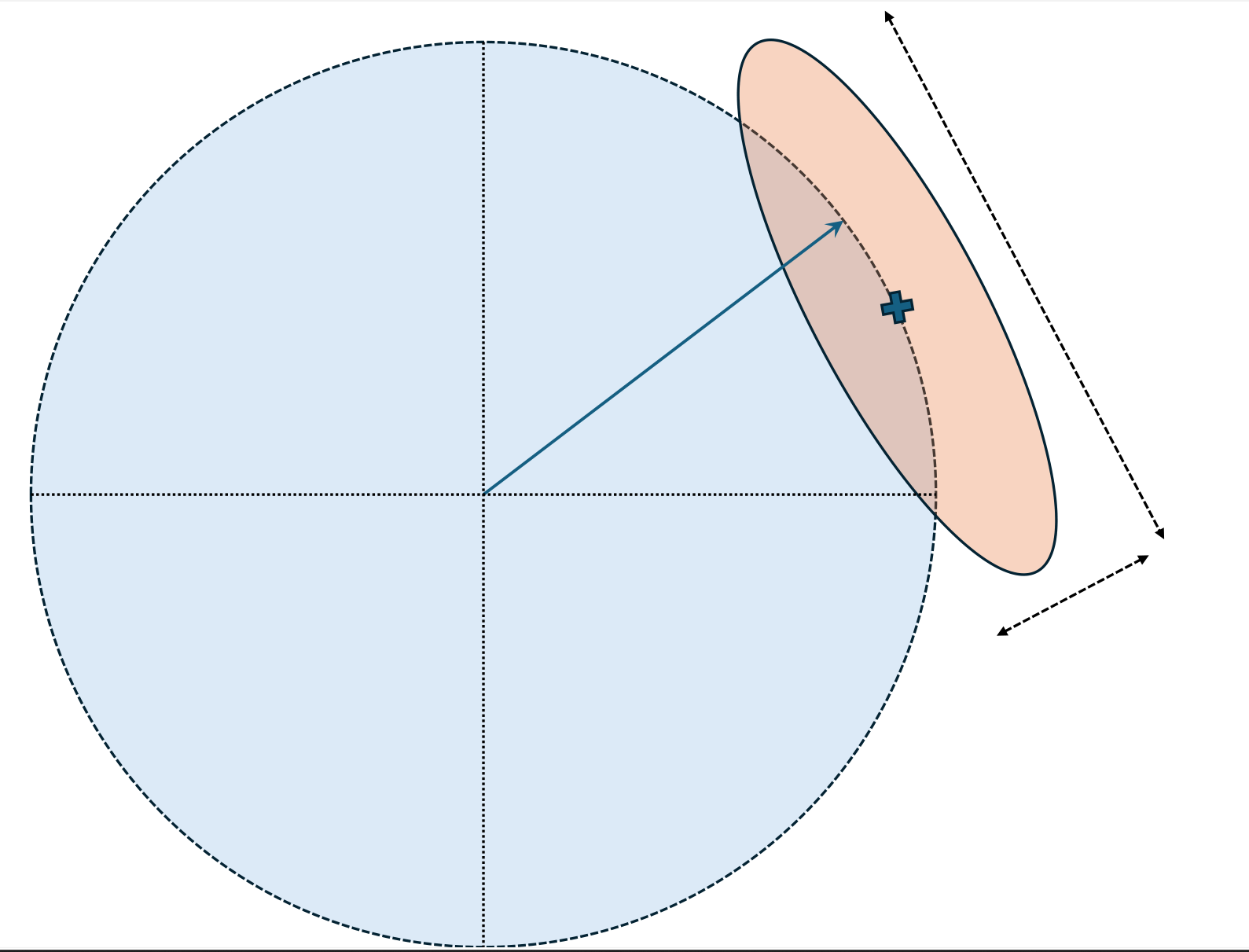}
\put(20,50){$\mathbb{S}^1$}
\put(50,43){$A_t$}
\put(74,49){$\hat{\theta}_{t-1}$}
\put(30,20){$V_t = \lambda I + \sum_{s=1}^t A_s A_s^{\text{T}}$}
\put(80,65){\rotatebox{-60}{$\sqrt{\frac{\beta_{t-1}}{\lambda_{\min}(V_{t-1})}}$}}
\put(80,18){\rotatebox{30}{$\sqrt{\frac{\beta_{t-1}}{\lambda_{\max}(V_{t-1})}}$}}
\end{overpic}
\label{fig:linucb_circle}
\caption{\label{fig:confidence_region} Sketch of a linear bandit with $\mathcal{A}= \mathbb{S}^1$, $\theta\in\mathbb{S}^1$, where $\hat{\theta}_{t-1}$ is the least squares~\eqref{eq:lse} centered around the confidence region defined in Theorem~\ref{th:confidence_region}. The orientation of the ellipsoid is defined by the eigenvectors of $V_t$ along its principal axis and the lengths are given by the eigenvalues.}
\end{figure}

\begin{algorithm}[H]
	\caption{\textsf{LinUCB}} 
	\label{alg:LinUCB}
	\begin{algorithmic}[1]
		\For {$t=1,2,\ldots$}
			\State $(A_t, \tilde{\theta}_t) = \argmax_{A\in\mathcal{A},\theta'\in\mathcal{C}_{t-1}}\langle \theta, A \rangle$;
			\State Select $A_t$ and observe reward $X_t$;
			\State Update $\mathcal{C}_t$;
			\EndFor
	
	\end{algorithmic} 
\end{algorithm}

\subsubsection{Regret analysis}

With the tools introduced above, we are now ready to present the regret bound achieved by \textsf{LinUCB}. While the proof follows the same steps as in~\cite[Chapter 19]{lattimore_banditalgorithm_book}, we reproduce it here for completeness. Understanding this proof will help illustrate the limitations of standard approaches and justify why we later introduce a different algorithm that achieves better regret scaling in certain multi-armed quantum bandit settings. First of all we state the elliptical potential lemma, which is a standard tool for analyzing regrets in online learning problems especially for linear stochastic bandits~\cite{carpentier2020elliptical}.

\begin{lemma}[Elliptical potential lemma, Lemma 19.4 in \cite{lattimore_banditalgorithm_book}]\label{lem:elliptical_potential}
    Let $a_1,...,a_T\in\mathbb{R}^d$ be a sequence of vectors with $\| a_t \|_2 \leq L <\infty$ for all $t\in [T]$, $V_0\in\mathbb{R}^{d\times d}$ positive definite and $V_t = V_0 + \sum_{s\leq t} a_s a_s^\mathsf{T} $ . Then,
    \begin{align}
        \sum_{t=1}^T \min \lbrace 1 , \| a_t \|^2_{V^{-1}_{t-1}}  \rbrace \leq 2\log \left( \frac{\det V_T}{\det V_0} \right) \leq 2d\log\left(\frac{\Tr ( V_0 ) + TL^2}{d \det (V_0)^{1/d}} \right).
    \end{align}
\end{lemma}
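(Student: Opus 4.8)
The plan is to reduce the sum of instantaneous terms $\min\{1,\|a_t\|^2_{V_{t-1}^{-1}}\}$ to a telescoping product of determinants, and then to control that product by elementary eigenvalue inequalities. The central observation is the rank-one update structure $V_t = V_{t-1} + a_t a_t^\mathsf{T}$. Since $V_0$ is positive definite and we only add positive semidefinite terms, every $V_{t-1} \succeq V_0 \succ 0$ is positive definite, so $V_{t-1}^{-1}$ and the weighted norm are well defined. The matrix determinant lemma applied to this rank-one update gives
\[
\det(V_t) = \det(V_{t-1})\bigl(1 + \|a_t\|^2_{V_{t-1}^{-1}}\bigr),
\]
where I use that $\|a_t\|^2_{V_{t-1}^{-1}} = a_t^{\mathsf{T}} V_{t-1}^{-1} a_t$. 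Taking the product over $t=1,\ldots,T$ telescopes to $\det(V_T)/\det(V_0) = \prod_{t=1}^T (1 + \|a_t\|^2_{V_{t-1}^{-1}})$, so that $\log(\det V_T/\det V_0) = \sum_{t=1}^T \log(1 + \|a_t\|^2_{V_{t-1}^{-1}})$.

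For the first inequality I would then invoke the elementary bound $\min\{1,x\} \leq 2\log(1+x)$, valid for all $x \geq 0$. This is verified by splitting into two ranges: on $x\in[0,1]$ the function $g(x) := 2\log(1+x) - x$ satisfies $g(0)=0$ and $g'(x) = (1-x)/(1+x) \geq 0$, hence $g \geq 0$; on $x \geq 1$ one has $2\log(1+x) \geq 2\log 2 > 1 = \min\{1,x\}$. Applying this termwise with $x = \|a_t\|^2_{V_{t-1}^{-1}}$ gives
\[
\sum_{t=1}^T \min\bigl\{1,\|a_t\|^2_{V_{t-1}^{-1}}\bigr\} \leq 2\sum_{t=1}^T \log\bigl(1 + \|a_t\|^2_{V_{t-1}^{-1}}\bigr) = 2\log\frac{\det V_T}{\det V_0},
\]
which is the first claimed bound.

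For the second inequality I would bound $\det(V_T)$ from above via the AM--GM inequality on its eigenvalues: $\det(V_T)^{1/d} = \bigl(\prod_{i=1}^d \lambda_i(V_T)\bigr)^{1/d} \leq \Tr(V_T)/d$. Since the trace is additive under the rank-one updates, $\Tr(V_T) = \Tr(V_0) + \sum_{t=1}^T \|a_t\|_2^2 \leq \Tr(V_0) + TL^2$, this yields $\det(V_T) \leq \bigl((\Tr(V_0) + TL^2)/d\bigr)^d$. Dividing by $\det(V_0)$, taking logarithms and multiplying by $2$ produces exactly
\[
2\log\frac{\det V_T}{\det V_0} \leq 2d\log\!\left(\frac{\Tr(V_0) + TL^2}{d\,\det(V_0)^{1/d}}\right).
\]

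The argument is a chain of standard estimates with no single hard obstacle; the only step demanding genuine care is the matrix determinant lemma identity together with the (easy) observation that every $V_{t-1}$ inherits positive definiteness from $V_0$, which is what makes the weighted norms and the telescoping product legitimate.
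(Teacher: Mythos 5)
Your proof is correct and follows exactly the standard argument for this lemma (the one in the cited reference, Lemma 19.4 of Lattimore--Szepesv\'ari): the matrix determinant lemma for the rank-one update, the telescoping of determinants, the elementary bound $\min\{1,x\}\leq 2\log(1+x)$, and AM--GM on the eigenvalues of $V_T$ combined with the trace bound. The thesis itself only states the lemma with a citation and gives no proof, so there is nothing further to compare against; all your steps check out.
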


The following regret bound is derived using the above lemma along with the probabilistic guarantees of $\mathcal{C}_t$ from Theorem~\ref{th:confidence_region}. We will use the normalization of the regret $\sup_{a,b\in\mathcal{A}}\langle \theta , a - b \rangle \leq 1$.

\begin{theorem}[Theorem 19.2 in \cite{lattimore_banditalgorithm_book}]\label{th:regret_linucb}
Let $T\in\mathbb{N}$ be the finite time horizon. Under the assumptions of Theorem \ref{th:confidence_region} for $\theta\in\mathbb{R}^d,\mathcal{A}\subseteq \mathbb{R}^d$ the regret~\eqref{eq:regret_linearbandits} of \textsf{LinUCB} satisfies with probability at least $1-\delta$
\begin{align}
R_T (\mathcal{A}, \theta,\pi ) \leq \sqrt{8dT\beta_{T,\delta}\log \left( \frac{d\lambda + TL^2}{d\lambda} \right)},
\end{align}
where $\beta_T$ is defined as in~\eqref{eq:beta_book}. In particular the expected regret of \textsf{LinUCB} with $\delta = 1/T$ is bounded by
\begin{align}
\EX_{\theta,\pi}[ R_T (\mathcal{A},\theta,\pi ) ]  \leq Cd\eta\sqrt{T}\log(T),
\end{align}
where $C>0$ is a suitably large universal constant, and $\eta$ is the subgaussian parameter of the rewards.
\end{theorem}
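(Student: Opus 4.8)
The plan is to run the standard optimistic regret decomposition on the high-probability event $\mathcal{E}$ on which $\theta\in\mathcal{C}_{t-1}$ for every $t\in[T]$; by Theorem~\ref{th:confidence_region} this event has probability at least $1-\delta$, so I may condition on it throughout and convert to expectation only at the end. The core is to control the instantaneous regret $r_t = \max_{A\in\mathcal{A}}\langle \theta, A\rangle - \langle \theta, A_t\rangle$ on $\mathcal{E}$. The key conceptual step is optimism: writing $A_\star = \argmax_{A\in\mathcal{A}}\langle \theta, A\rangle$, the pair $(A_\star,\theta)$ is feasible in the maximization \eqref{eq:action_selection_linUCB} because $\theta\in\mathcal{C}_{t-1}$, so the optimistic value obeys $\langle \tilde\theta_t, A_t\rangle \geq \langle \theta, A_\star\rangle$. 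This immediately yields $r_t \leq \langle \tilde\theta_t - \theta, A_t\rangle$. Since both $\tilde\theta_t$ and $\theta$ lie in $\mathcal{C}_{t-1}$, I would apply Cauchy–Schwarz in the $V_{t-1}$-geometry together with the triangle inequality about the center $\hat\theta_{t-1}$ to get $r_t \leq \|\tilde\theta_t - \theta\|_{V_{t-1}}\,\|A_t\|_{V_{t-1}^{-1}} \leq 2\sqrt{\beta_{t-1,\delta}}\,\|A_t\|_{V_{t-1}^{-1}}$, and then, using the normalization $r_t\leq 1$ and the monotonicity of $\beta_{\cdot,\delta}$, record the clean bound $r_t \leq 2\sqrt{\beta_{T,\delta}}\,\min\{1,\|A_t\|_{V_{t-1}^{-1}}\}$.

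Next I would sum over the horizon. By Cauchy–Schwarz across the $T$ rounds, $R_T \leq \sqrt{T\sum_{t=1}^T r_t^2}$, and bounding $r_t^2 \leq 4\beta_{T,\delta}\min\{1,\|A_t\|_{V_{t-1}^{-1}}^2\}$ (pulling the truncation outside, which is legitimate because $\beta_{T,\delta}\geq 1$) feeds the sum into the elliptical potential lemma, Lemma~\ref{lem:elliptical_potential}, applied with $V_0 = \lambda\mathbb{I}$. This gives $\sum_t \min\{1,\|A_t\|_{V_{t-1}^{-1}}^2\} \leq 2d\log((d\lambda+TL^2)/(d\lambda))$, and assembling the factors produces exactly $R_T \leq \sqrt{8dT\beta_{T,\delta}\log((d\lambda+TL^2)/(d\lambda))}$ on $\mathcal{E}$, the claimed high-probability bound.

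For the bound in expectation I would set $\delta = 1/T$, so that $\beta_{T,1/T} = O(\eta^2 d\log T)$ and the high-probability bound reads $R_T = O(d\eta\sqrt{T}\log T)$. Because $r_t\leq 1$ always, $R_T \leq T$ holds deterministically, so the complementary event of probability $\delta = 1/T$ contributes at most $T\cdot(1/T)=1$ to the expectation. Combining the two contributions gives $\EX_{\theta,\pi}[R_T] \leq Cd\eta\sqrt{T}\log T$ for a suitable universal constant $C$.

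I expect the genuinely delicate points to be bookkeeping rather than ideas: ensuring the confidence radius evaluated at time $t-1$ is correctly dominated by $\beta_{T,\delta}$, and introducing the $\min\{1,\cdot\}$ truncation \emph{before} squaring so that Lemma~\ref{lem:elliptical_potential} applies verbatim. The conceptual crux remains the optimism inequality $\langle \tilde\theta_t, A_t\rangle \geq \langle \theta, A_\star\rangle$, which is precisely what turns ``the algorithm acts optimistically'' into a per-round regret controlled only by the width of the confidence ellipsoid in the direction $A_t$; everything downstream is Cauchy–Schwarz plus the two black-box lemmas, and the high-probability-to-expectation conversion is cheap exactly because the per-round regret is bounded by the normalization.
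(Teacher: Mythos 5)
Your proposal is correct and follows essentially the same route as the paper's proof: the optimism inequality $\langle \tilde\theta_t, A_t\rangle \geq \langle\theta, A_\star\rangle$ on the event $\theta\in\mathcal{C}_{t-1}$, Cauchy--Schwarz in the $V_{t-1}$-geometry with the triangle inequality about $\hat\theta_{t-1}$, truncation by $\min\{1,\cdot\}$, Cauchy--Schwarz across rounds, and the elliptical potential lemma. Your explicit care in placing the $\min\{1,\cdot\}$ truncation before squaring, and your spelled-out high-probability-to-expectation conversion, are both consistent with (and slightly more careful than) the paper's writeup.
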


\begin{proof}
    Let $A^* = \argmax_{a\in\mathcal{A}} \langle \theta , a\rangle$ be the optimal action. We will prove the following for all $t\in[T]$ under the assumption that $\theta\in\mathcal{C}_t$ and apply that this holds with probability at least $1-\delta$ using Theorem~\ref{th:confidence_region}. Then using the action selection rule of the algorithm~\eqref{eq:action_selection_linUCB} and the assumption $\theta\in\mathcal{C}_t$ and  we have
    \begin{align}
        \langle \theta , A^* \rangle \leq \langle \tilde{\theta}_t , A_t \rangle ,
    \end{align}
    where $\tilde{\theta}_t$ is the optimistic parameter selected by the algorithm at time step $t$. Using Cauchy-Schwarz inequality, the above inequality and again the assumption $\theta\in\mathcal{C}_t$ we have
    \begin{align}
        \langle \theta , A^* - A_t \rangle \leq \|A_t \|_{V^{-1}_{t-1}} \|\tilde{\theta}_t - \theta\|_{V_{t-1}} \leq 2 \|A_t \|_{V^{-1}_{t-1}} \sqrt{\beta_{t,\delta}}.
    \end{align}
    Then using the normalization  $\sup_{a,b\in\mathcal{A}}\langle \theta , a - b \rangle \leq 1$ and $\beta_T \geq \max \lbrace 1 , \beta_t \rbrace$ we have
    \begin{align}
         \langle \theta , A^* - A_t \rangle \leq 2\sqrt{\beta_{T,\delta}} \min \lbrace 1 , \|A_t \|_{V^{-1}_{t-1}} \rbrace ,
    \end{align}
 Then finally applying Cauchy-Schwarz inequality into the regret expression we have
    \begin{align}
        R_T\left(\mathcal{A},\theta,\pi\right) &= \sum_{t=1}^T  \langle \theta , A^* - A_t \rangle \leq \sqrt{T\sum_{t=1}^T \langle \theta , A^* - A_t \rangle^2 } \\ 
        &\leq 2 \sqrt{T\beta_{T,\delta} \sum_{t=1}^T \min \lbrace 1 , \|A_t \|_{V^{-1}_{t-1}} \rbrace}.
    \end{align}
    The result follows applying the elliptical potential Lemma~\ref{lem:elliptical_potential}.
\end{proof}

%Now the question is, can we adapt this algorithm to our problem of stochastic quantum bandits?
Given the regret bound for \textsf{LinUCB} stated in Theorem~\ref{th:regret_linucb} we can use the connection between linear stochastic bandits and multi-armed quantum  bandits from Section~\ref{sec:connection_linearbandits} in order to get a minimax regret upper bound for the MAQB model. The following theorem shows how the \textsf{LinUCB} can be applied to the multi-armed quantum bandit problem and the scaling of the regret.

\begin{theorem}\label{teo:linucbregretmaqb}
Let $d,T\in\mathbb{N}$, and consider a general or discrete multi-armed quantum bandit described by $(\mathcal{A},\Gamma)$ where $\mathcal{A}\subseteq \mathcal{O}_d$ is the action set and $\Gamma\subseteq \mathcal{S}_d$ is the set of potential environments. Consider the normalization of the action set $\| O \| \leq 1$ for any $O\in\mathcal{A}$. Then the \textsf{LinUCB} algorithm~\ref{alg:LinUCB}, associated to policy $\pi$, can be applied to the general or discrete multi-armed bandit problem such that for any $\rho\in\Gamma$ the expected regret is bounded by
\begin{align}
\EX_{\rho,\pi} [R_T (\mathcal{A},\rho ,\pi  )] \leq C d^2 \sqrt{T} \log(T),
\end{align}
for some constant $C>0$ that depends on the action set.
\end{theorem}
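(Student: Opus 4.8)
The plan is to invoke the reduction of the multi-armed quantum bandit to a classical linear stochastic bandit described in Section~\ref{sec:connection_linearbandits} and then apply the \textsf{LinUCB} regret bound of Theorem~\ref{th:regret_linucb} essentially verbatim. The essential observation is that, once we fix an orthonormal (in the Hilbert--Schmidt inner product) Hermitian basis $\{\sigma_i\}_{i=1}^{d^2}$ and write $\rho = \sum_i \theta_i \sigma_i$ and $O_{A_t} = \sum_i A_{t,i}\sigma_i$ as in~\eqref{dparametrization}, the quantum bandit becomes a linear bandit whose \emph{ambient dimension is $d^2$}, not $d$, with $\EX_{\rho,\pi}[X_t] = \Tr(\rho O_{A_t}) = \langle \theta, A_t\rangle$. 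This is exactly where the quadratic factor $d^2$ in the final bound originates, since Theorem~\ref{th:regret_linucb} scales linearly in the ambient dimension.

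To apply Theorem~\ref{th:regret_linucb} I first verify its three hypotheses in the parametrized problem. For the subgaussianity of the noise $\epsilon_t := X_t - \langle\theta, A_t\rangle = X_t - \Tr(\rho O_{A_t})$, Lemma~\ref{lem:subgaussianrewards} gives that $\epsilon_t$ is $\eta$-subgaussian with $\eta \leq \max_{O\in\mathcal{A}}\|O\| \leq 1$ under the normalization $\|O\|\leq 1$. For the norm bounds, orthonormality of the basis yields $\|A_t\|_2^2 = \Tr(O_{A_t}^2)$ and $\|\theta\|_2^2 = \Tr(\rho^2)$; since $\Tr(\rho^2)\leq 1$ for any density matrix we may take $m=1$, and since $\Tr(O^2) = \sum_i \lambda_i(O)^2 \leq d\,\|O\|^2\leq d$ for $\|O\|\leq 1$ we may take $L=\sqrt{d}$. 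Crucially, $L$ enters only inside logarithms in~\eqref{eq:beta_book} and in the elliptical-potential factor, so the choice $L=\sqrt{d}$ affects the bound only through $\log(\cdot)$ terms and does not alter the polynomial dependence on $d$.

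With these constants in hand, I substitute the ambient dimension $d^2$ (in place of $d$), $\eta\leq 1$, $L=\sqrt{d}$ and $\delta = 1/T$ into Theorem~\ref{th:regret_linucb}. The confidence width $\beta_{T,\delta}$ of~\eqref{eq:beta_book} then becomes $O(\eta^2 d^2\log T)$, and the regret bound $\sqrt{8 d^2 T\,\beta_{T,\delta}\log(\cdot)}$ collapses to $O(d^2\eta\sqrt{T}\log T) = O(d^2\sqrt{T}\log T)$, as claimed. The only bookkeeping subtlety is the regret normalization: Theorem~\ref{th:regret_linucb} assumes $\sup_{a,b\in\mathcal{A}}\langle\theta, a-b\rangle\leq 1$, whereas here $\langle\theta, O_a - O_b\rangle = \Tr(\rho(O_a-O_b))$ is only bounded by $2$ via $|\Tr(\rho O)|\leq\|O\|\,\|\rho\|_1\leq 1$; this harmless constant factor is absorbed into the universal constant $C$ (alternatively one rescales every observable by $1/2$ before running the algorithm). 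The main point to get right is therefore not any hard estimate but the correct identification of $d^2$ as the linear-bandit dimension, together with the $L=\sqrt{d}$ operator-norm-to-Hilbert--Schmidt conversion; everything else is a direct appeal to the already-proven classical bound.
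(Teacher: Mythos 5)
Your proposal is correct and follows essentially the same route as the paper: parametrize $\rho$ and the observables in an orthonormal Hermitian basis to obtain a linear bandit of ambient dimension $d^2$, invoke Lemma~\ref{lem:subgaussianrewards} for $\eta\leq 1$, and apply Theorem~\ref{th:regret_linucb}. You are in fact slightly more careful than the paper on the bookkeeping (the bound $L=\sqrt{d}$ for the Hilbert--Schmidt norm of the actions and the factor-of-two regret normalization), both of which only affect logarithmic terms or the constant $C$.
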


\begin{proof}

Considering a set $\lbrace \sigma_i \rbrace_{i=1}^{d^2}$ of orthonormal Hermitian matrices, we can parametrize $\rho\in\mathcal{S}_d$ and any $O_a\in\mathcal{A}$ as $\rho = \sum_{i=1}^{d^2} \theta_i \sigma_i$, $ O_a = \sum_{i=1}^{d^2} A_{a,i} \sigma_i$ where $\theta_i = \Tr(\rho\sigma_i)$ and $A_{a,i} = \Tr (O_a\sigma_i)$. Note that $\theta,A_a\in\mathbb{R}^{d^2}$. We fix the normalization of the basis to $\Tr(\sigma_i \sigma_j) = 1$, thus since $\Tr(\rho^2)\leq 1$ we have $\| \theta \|_2 \leq 1$ and $\|A_a \|_2\leq \| O_a \|_2 = \Tr(O^2_a)$. We also have $\Tr(\rho O_a ) = \langle \theta, A_a \rangle$ and the rewards are of the form
\begin{align}
X_t = \langle \theta , A_t \rangle + \epsilon_t,
\end{align}
where $\epsilon_t$ is $\eta$-subgaussian with $\eta \leq \max_{O_a\in\mathcal{A}}\| O \| =1$ by Lemma~\ref{lem:subgaussianrewards}. Note that the regret can be rewritten as 
\begin{align}
    R_T (\mathcal{A},\rho ,\pi  ) = \sum_{t=1}^T \max_{O_A\in\mathcal{A}} \langle \theta , A \rangle  - \langle \theta , A_t \rangle .
\end{align}
Then we can use the regret bound of Theorem~\ref{th:confidence_region} and Theorem~\ref{th:regret_linucb} under the linear stochastic bandit given by the considered vector parametrization.

\end{proof}

\subsection{\textsf{UCB} and \textsf{Phased Elimination} algorithms for discrete multi-armed quantum bandits}\label{sec:UCB}

The previous algorithm can be applied to both discrete and general multi-armed quantum bandits. From Theorem~\ref{th:regret_linucb}, we see that the regret scales with the dimension of the Hilbert space. However, in cases where the action set is discrete and consists of only a few observables, it is natural to ask whether we can achieve a regret bound that depends on the number of observables instead. The key idea is that when the number of observables is small, we can estimate their mean rewards individually and apply the optimistic principle by treating each action independently, rather than relying on a high-dimensional parameter estimation.

This approach is precisely the foundation of one of the earliest multi-armed bandit algorithms, the \textit{Upper Confidence Bound} (\textsf{UCB}) algorithm, first introduced in~\cite{firstUCB}. We will see that under certain conditions on the number of actions and the dimension of the Hilbert space, \textsf{UCB} can provide a tighter regret bound than \textsf{LinUCB}. To understand this, we will first review the \textsf{UCB} algorithm and introduce a simplified version of the multi-armed bandit problem.

\subsubsection{Multi-armed stochastic bandit}

A \textit{multi-armed stochastic bandit} consists of a set of probability distributions (the environment) \( \nu = (P_a : a\in \mathcal{A}) \), where \( \mathcal{A} \) is the set of available actions with cardinality \( k = |\mathcal{A}| \). Each action \( i \in [k] \) has an associated mean reward \( \mu_i \). At each round \( t \in [n] \), a learner selects an action \( A_t \in \mathcal{A} \) and observes a reward \( X_t \sim P_{A_t} \). The expected regret is defined as  
\begin{align}
R_T ( \nu , \pi ) = \sum_{t=1}^T \max_{i\in[k]}\mu_i - \mathbb{E}_{\nu,\pi}[X_t],
\end{align}  
where \( \pi \) denotes the policy and $\mathbb{E}_{\nu,\pi}$ is the reward distribution induced by the interaction of the policy $\pi$ with the environment $\nu$. The \emph{sub-optimality gap} for each action \( a \in [k] \) is given by \( \Delta_a = \max_{i\in[k]}\mu_i - \mu_a \), which measures how much worse the action \( a \) is compared to the optimal one.  

This model can be seen as a special case of a linear bandit problem with the following structure:  
\begin{itemize}
    \item The action set is discrete, consisting of \( k \) elements, so that \( \mathcal{A} = \{ e_i \}_{i=1}^k \), where \( e_i \in \mathbb{R}^k \) are the standard basis vectors.  
    \item The unknown parameter is the reward vector \( \theta = (\mu_1, \dots, \mu_k) \in \mathbb{R}^k \).  
    \item The reward noise is given by \( \epsilon_t = X_t - \mu_{A_t} \), where \( X_t \sim P_{A_t} \).  
\end{itemize}

Thus, if we consider each action independently, a discrete multi-armed quantum bandit can also be viewed as a multi-armed stochastic bandit.

\subsubsection{UCB algorithm}\label{sec:ucb_algorithm_mab}

The core idea of the \textsf{UCB} algorithm is to estimate the mean reward of each action while ensuring that, with high probability, these estimates are optimistic. This approach follows the optimistic principle described in~\eqref{eq:optimistic_principle}, where the algorithm selects actions based on the upper confidence bound of their estimated rewards. Unlike the linear bandit setting, where the confidence region \( \mathcal{C}_t \) is an ellipsoid, here it is shaped along \( k \) independent directions. Each direction corresponds to an action, and its length is determined by an upper bound on the estimated mean reward.

The version of \textsf{UCB} that we describe is based on the following concentration bound for subgaussian random variables. If \( (Z_t)_{t=1}^T \) is a sequence of independent \( \eta \)-subgaussian random variables with mean \( \mu \) and empirical mean \( \hat{\mu} = \frac{1}{T} \sum_{t=1}^T Z_t \), then for any \( \delta \in (0,1) \),
\begin{align}
\Pr \left( \mu \geq \hat{\mu} + \sqrt{\frac{2\eta^2\log (1/\delta )}{T}} \right) \leq \delta .
\end{align}
This bound states that, with high probability, the true mean \( \mu \) is close to the empirical mean \( \hat{\mu} \), up to a confidence radius that decreases as more samples are collected.

Using this bound, the \textsf{UCB} algorithm estimates the reward of each action optimistically following~\eqref{eq:optimistic_principle}. Let \( T \) be the total number of rounds, and let \( (X_t)_{t=1}^T \) be the sequence of observed $\eta$-subgaussian rewards in a multi-armed stochastic bandit. For each action \( a \in [k] \), we define its empirical mean as
\begin{align}
\hat{\mu}_a (T) = \frac{1}{T_a(T)}\sum_{t=1}^T X_t \mathbbm{1}\lbrace A_t = a \rbrace,
\end{align}
where \( A_t \in [k] \) is the action selected at round \( t \), and \( T_a(T) = \sum_{t=1}^T\mathbbm{1}\lbrace A_t = a \rbrace \) represents the number of times action \( a \) has been chosen up to round \( T \). At each round \( t \), the \textsf{UCB} algorithm assigns an upper confidence estimate \( \tilde{\mu}_a(t) \) to each action \( a \in [k] \):
\begin{align}
\tilde{\mu}_a(t) =
\begin{cases}
\infty, & \text{if } T_a(t-1) = 0, \\
\hat{\mu}_a(t-1) + \sqrt{\frac{2\eta^2\log(1/\delta)}{T_a(t-1)}}, & \text{otherwise}.
\end{cases}
\end{align}
The assignment of \( \infty \) ensures that each action is selected at least once in the first \( k \) rounds. Then, the action selected at round \( t \) is the one with the highest upper confidence estimate:
\begin{align}
A_t = \argmax_{a\in[k]} \tilde{\mu}_a(t).
\end{align}
This strategy balances exploration and exploitation by favoring actions that have either shown high empirical rewards or have been played infrequently and have thus a large uncertainty. The pseudocode of the \textsf{UCB} algorithm is presented in Algorithm \ref{alg:UCB}.

\begin{algorithm}[H]
	\caption{UCB} 
	\label{alg:UCB}
	\begin{algorithmic}[1]
		
		\For {$t=1,2,\ldots,T$}
			\State Choose action $A_t = \argmax_{i\in [k]} \begin{cases}
\infty \quad \text{if}\quad  T_i(t-1)=0.\\
 \hat{\mu}_i(t-1) + \sqrt{\frac{2\eta^2\log(1/\delta)}{T_i(t-1)}}\quad \text{otherwise.}
\end{cases}$
			\State Observe reward $X_t$ and update confidence bounds;
			
			\EndFor
	
	\end{algorithmic} 
\end{algorithm}

The following result on the regret scaling of the \textsf{UCB} algorithm for discrete multi-armed quantum bandits follows from Lemma~\ref{lem:subgaussianrewards} and the regret analysis of \textsf{UCB} for stochastic bandits in~\cite[Theorem 7.2]{lattimore_banditalgorithm_book}. Applying \textsf{UCB} to a discrete multi-armed quantum bandit problem $(\mathcal{A},\Gamma)$ is straightforward: we fix an environment $\rho \in \Gamma$ and define the corresponding action set as
\begin{align}
    \nu = \left( P_{\rho}(\cdot | a) : O_a\in\mathcal{A} \right).
\end{align}
Then we can treat each action independently without need of estimating the unknown $\rho\in\Gamma$.

\begin{theorem}\label{th:UCB}
Let $d,T\in\mathbb{N}$, and consider a discrete multi-armed quantum bandit described by $(\mathcal{A},\Gamma)$, where $\mathcal{A} \subseteq \mathcal{O}_d$ is the action set and $\Gamma \subseteq \mathcal{S}_d$ is the set of potential environments. Consider the normalization of the action set $\| O \|\leq 1$ for any $O\in\mathcal{A}$. Let $k = |\mathcal{A}|$ be the number of available actions. Then, the \textsf{UCB} algorithm~\ref{alg:UCB}, following policy $\pi$, can be applied to this setting with $\delta = \frac{1}{T^2}$, ensuring that for any $\rho \in \Gamma$, the expected regret is bounded by
\begin{align}
\EX_{\rho,\pi}[R_T(\mathcal{A},\rho , \pi ) ]  \leq 8\sqrt{Tk\log(T)} + \sum_{a=1}^k \Delta_a,
\end{align}
where $\Delta_a = \max_{O\in\mathcal{A}}\Tr(\rho O ) - \Tr(\rho O_a )$ for $a \in [k]$.
\end{theorem}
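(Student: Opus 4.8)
The plan is to prove this as a corollary of the classical worst-case regret bound for \textsf{UCB}, by showing that once the environment $\rho\in\Gamma$ is fixed the discrete MAQB collapses exactly onto a $k$-armed stochastic bandit to which~\cite[Theorem 7.2]{lattimore_banditalgorithm_book} applies verbatim. First I would fix an arbitrary $\rho\in\Gamma$ and define the induced stochastic bandit $\nu=(P_\rho(\cdot\mid a):O_a\in\mathcal{A})$, whose $a$-th arm has mean reward $\mu_a=\Tr(\rho O_a)$ by Born's rule~\eqref{eq:reward_distribution_discrete}. The point of this reduction is that \textsf{UCB} never needs to estimate $\rho$ itself: it only tracks the $k$ empirical means $\hat\mu_a(t)$, so the quantum structure enters solely through the reward distributions and the problem becomes genuinely $k$-dimensional rather than $d^2$-dimensional.

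Second, I would verify the two ingredients that the classical bound requires. The subgaussian parameter is controlled by Lemma~\ref{lem:subgaussianrewards}: since $\|O\|\le 1$ for every $O\in\mathcal{A}$, the noise $\epsilon_t=X_t-\Tr(\rho O_{A_t})$ is $\eta$-subgaussian with $\eta\le\max_{O\in\mathcal{A}}\|O\|\le 1$, which is exactly what fixes the constant $\eta=1$ inside the confidence radius $\sqrt{2\eta^2\log(1/\delta)/T_a(t-1)}$ used by Algorithm~\ref{alg:UCB}. Next I would check that the MAQB regret of Definition~\ref{def:regret_maqb} coincides with the stochastic-bandit regret of $\nu$: the decomposition~\eqref{regretarms} gives $\EX_{\rho,\pi}[R_T]=\sum_{a=1}^k\Delta_a\,\EX_{\rho,\pi}[T_a(T)]$ with $\Delta_a=\max_{O\in\mathcal{A}}\Tr(\rho O)-\Tr(\rho O_a)$, which is precisely the gap-weighted pull-count expression analysed in the classical proof.

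Third, I would invoke the gap-dependent guarantee of \textsf{UCB} with $\delta=1/T^2$ (the choice $\log(1/\delta)=2\log T$ is what produces the $\log T$ factor after a union bound over the $T$ rounds), namely $\EX_{\rho,\pi}[T_a(T)]\le 1+\frac{C\log T}{\Delta_a^2}$ for each suboptimal arm, and then convert it to the worst-case form by a peeling argument. For a threshold $\Delta>0$ I would split the arms into those with $\Delta_a\le\Delta$, which contribute at most $T\Delta$ to the regret, and those with $\Delta_a>\Delta$, which contribute at most $\sum_a(\Delta_a+\frac{C\log T}{\Delta_a})\le\sum_a\Delta_a+\frac{Ck\log T}{\Delta}$; optimising over $\Delta\sim\sqrt{k\log T/T}$ balances the two terms and yields the stated $8\sqrt{Tk\log T}+\sum_{a}\Delta_a$ bound. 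I do not expect a genuine obstacle here—the conceptual content is the reduction, and the remainder is a direct citation of~\cite[Theorem 7.2]{lattimore_banditalgorithm_book}; the only points demanding care are matching the numerical constant $8$ and tracking how the additive $\sum_a\Delta_a$ (originating from the ``$+1$'' forced initial pull of each arm encoded by the $\infty$ in the index) survives the threshold optimisation.
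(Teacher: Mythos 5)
Your proposal is correct and follows essentially the same route as the paper: fix $\rho$, reduce to the classical $k$-armed stochastic bandit $\nu=(P_\rho(\cdot\mid a):O_a\in\mathcal{A})$ with $1$-subgaussian rewards via Lemma~\ref{lem:subgaussianrewards}, and invoke \cite[Theorem 7.2]{lattimore_banditalgorithm_book} with $\delta=1/T^2$. The paper states this reduction in two sentences and cites the classical theorem outright, whereas you additionally sketch its internal peeling argument; that extra detail is consistent with the cited result and introduces no gap.
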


Next, we compare the regret scaling of \textsf{UCB} with that of \textsf{LinUCB} for discrete multi-armed quantum bandits. Let $k$ be the number of observables and $d$ the dimension of the Hilbert space. Theorem~\ref{th:UCB} shows that when $k \leq d^4$, the regret scaling of \textsf{UCB} is more favorable than that of \textsf{LinUCB} (see Theorem~\eqref{teo:linucbregretmaqb}). However, it is important to note that when applying \textsf{UCB} to discrete multi-armed quantum bandits, the algorithm does not exploit the underlying linear structure of the rewards.

\subsubsection{Phased Elimination algorithm}

%One can still refine the dimensional dependence of the regret for \textsf{LinUCB} for discrete multi-armed quantum bandits. We mention that there is a variant of \textsf{LinUCB} (see \cite{banditalgorithm}[Chapter 22]) based on optimal designs that assumes a finite number of actions and can be applied to the discrete multi-armed quantum bandits. This variant is called \textsf{Phased Elimination} and it is stated in Algorithm \ref{alg:phased}. This algorithm requires the assumptions of Theorem \ref{teo:regretlinucb} and the action set $\mathcal{A}\subseteq \mathbb{R}^d$ to be finite. HERE The analysis of {Theorem 22.1} in \cite{banditalgorithm} along with the observations of the proof of Theorem \ref{teo:linucbregretmaqb} lead immediately to the following result.

One can further refine the dependence of the regret on the dimension for \textsf{LinUCB} in the discrete multi-armed quantum bandit setting. A variant of \textsf{LinUCB}, known as \textsf{Phased Elimination} (see \cite[Chapter 22]{lattimore_banditalgorithm_book}), leverages optimal design techniques and assumes a finite number of actions. This algorithm, stated in Algorithm~\ref{alg:phased}, can be applied to discrete multi-armed quantum bandits under the assumptions of Theorem~\ref{teo:linucbregretmaqb}, requiring the action set $\mathcal{A} \subseteq \mathbb{R}^d$ to be finite.  

The key idea behind phased elimination is to structure the learning process into successive phases, where in each phase, the algorithm maintains a set of promising actions and gradually eliminates suboptimal ones based on collected data. Instead of updating estimates after every round, the algorithm gathers observations over a phase and then uses an optimal design criterion (such as G-optimality~\cite[Chapter 21]{lattimore_banditalgorithm_book}) to determine the next set of actions to explore. 

The analysis of Theorem~\cite[Theorem 22.1]{lattimore_banditalgorithm_book}, together with the observations from the proof of Theorem~\ref{teo:linucbregretmaqb}, leads directly to the following result.

\begin{theorem}
Let $d,T\in\mathbb{N}$, and consider a discrete multi-armed quantum bandit described by $(\mathcal{A},\Gamma)$ where $\mathcal{A}\subseteq \mathcal{O}_d$ is the action set and $\Gamma\subseteq \mathcal{S}_d$ is the set of potential environments. Consider the normalization of the action set $\| O \|$ for any $O$ in $\mathcal{A}$. Let $k= | \mathcal{A} |$ be the number of available actions, Consider the normalization of the action set $\sup_{O_a,O_b\in\mathcal{A}}\Tr(\rho (O_a - O_b))\leq 1$ for any $\rho\in\Gamma$. Then, the \textsf{Phased Elimination} algorithm~\ref{alg:phased} associated to policy $\pi$, with $\delta= \frac{1}{T}$, can be applied to the discrete multi-armed bandit problem and for some universal constant $C>0$ and for all $\rho\in\mathcal{S}_d$ the expected regret is bounded by
\begin{align}
 \EX_{\rho,\pi} [R_T (\mathcal{A},\rho,\pi ) ] \leq C d\sqrt{T\log(Tk)}.
\end{align} 
\end{theorem}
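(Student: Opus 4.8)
The plan is to reuse the reduction to classical linear stochastic bandits established in the proof of Theorem~\ref{teo:linucbregretmaqb}, and then invoke the known regret guarantee for the classical \textsf{Phased Elimination} algorithm from \cite[Theorem 22.1]{lattimore_banditalgorithm_book}, being careful that the ambient dimension of the induced linear bandit is $d^2$ rather than $d$. The entire gain over \textsf{LinUCB} comes from the fact that \textsf{Phased Elimination} scales as the square root of the ambient dimension, so that $\sqrt{d^2}=d$ reproduces exactly the claimed factor of $d$.

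First I would fix an arbitrary environment $\rho\in\Gamma$ and choose an orthonormal Hermitian basis $\{\sigma_i\}_{i=1}^{d^2}$ with $\Tr(\sigma_i\sigma_j)=\delta_{ij}$, writing $\theta_i=\Tr(\rho\sigma_i)$ and $A_{a,i}=\Tr(O_a\sigma_i)$ so that $\theta,A_a\in\mathbb{R}^{d^2}$ and $\Tr(\rho O_a)=\langle\theta,A_a\rangle$. Exactly as in Theorem~\ref{teo:linucbregretmaqb}, the rewards then take the linear form $X_t=\langle\theta,A_t\rangle+\epsilon_t$, and Lemma~\ref{lem:subgaussianrewards} guarantees that $\epsilon_t$ is $\eta$-subgaussian with $\eta\leq\max_{O\in\mathcal{A}}\|O\|\leq 1$ under the stated normalization. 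The finiteness of $\mathcal{A}$ (with $k=|\mathcal{A}|$ arms) is precisely the hypothesis required by \textsf{Phased Elimination}, and the regret normalization $\sup_{O_a,O_b\in\mathcal{A}}\Tr(\rho(O_a-O_b))\leq 1$ becomes $\sup_{a,b}\langle\theta,A_a-A_b\rangle\leq 1$, which is the gap normalization used in the classical analysis.

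Next I would apply the classical regret bound directly in this $d^2$-dimensional linear bandit. The heart of the \textsf{Phased Elimination} analysis is the Kiefer--Wolfowitz theorem on $G$-optimal designs \cite[Chapter 21]{lattimore_banditalgorithm_book}: over a finite action set spanning $\mathbb{R}^{D}$ there is a distribution on the arms whose design matrix $V$ satisfies $\max_a\|a\|_{V^{-1}}^2=D$ and is supported on at most $D(D+1)/2$ points. Feeding $D=d^2$ into \cite[Theorem 22.1]{lattimore_banditalgorithm_book}, whose bound is of order $\sqrt{DT\log(kT)}$, yields
\begin{align}
\EX_{\rho,\pi}[R_T(\mathcal{A},\rho,\pi)]\leq C\sqrt{d^2\,T\log(Tk)}=Cd\sqrt{T\log(Tk)},
\end{align}
for a universal constant $C>0$; since $\rho$ was arbitrary, the bound holds for all $\rho\in\mathcal{S}_d$.

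The proof is thus essentially a black-box instantiation of the classical result through the linear reduction, and the only genuine subtlety — where I expect the bookkeeping to concentrate — is tracking the dimension correctly: the linear bandit lives in $\mathbb{R}^{d^2}$, so one must verify that every dimensional quantity in the cited theorem (the $G$-optimal value $g=D$, the logarithmic covering term, and the subgaussian constant from Lemma~\ref{lem:subgaussianrewards}) is evaluated at $D=d^2$. Recognizing that \textsf{Phased Elimination} contributes a factor $\sqrt{D}$ where \textsf{LinUCB} contributes $D$ is exactly what upgrades the $d^2\sqrt{T}\log T$ bound of Theorem~\ref{teo:linucbregretmaqb} to the sharper $d\sqrt{T\log(Tk)}$.
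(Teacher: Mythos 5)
Your proposal is correct and follows essentially the same route as the paper: the paper's own justification is precisely the remark that the classical \textsf{Phased Elimination} bound \cite[Theorem 22.1]{lattimore_banditalgorithm_book} combined with the vector parametrization and subgaussianity observations from the proof of Theorem~\ref{teo:linucbregretmaqb} yields the result. Your explicit tracking of the ambient dimension $D=d^2$, so that the $\sqrt{D}$ scaling of the $G$-optimal design gives the factor $d$, is exactly the bookkeeping the paper leaves implicit.
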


Note that the scaling is better than \textsf{LinUCB} since the dimensional factor is $d$ instead of $d^2$ as long as the number of observables $k$ is not exponential in the dimension of the Hilbert space. 

\begin{algorithm}[H]
	\caption{\textsf{Phased Elimination}} 
	\label{alg:phased}
	\begin{algorithmic}[1]
		\State Set $l=1$ and $\mathcal{A}_1 = \mathcal{A}$.
		\State Let $t_l = t$ be the current time step $t$ and find $\pi_l : \mathcal{A}_l \rightarrow [0,1]$ with $\mathtt{support}(\pi_l) \leq \frac{d(d+1)}{2}$ that maximises 
		\[ \log\det \left( \sum_{a\in\mathcal{A}} \pi(a) a a^T \right) \; \;  \text{subject to } \sum_{a\in\mathcal{A}_l} \pi(a) = 1  \]
		\State Let $\epsilon_l = \frac{1}{2^l}$ and $T_l(a) = \left\lceil \frac{2d\pi_l (\bold{a})}{\epsilon^2_l} \log \left( \frac{kl(l+1)}{\delta}\right) \right\rceil $ and $T_l = \sum_{a\in\mathcal{A}_l} T_l (\bold{a}) $
		\State Choose each action $a\in\mathcal{A}_l$ exactly $T_l ({a})$ times
		\State Calculate the empirical estimate:
		\[ \hat{\theta}_l = V^{-1}_l \sum_{t=t_l}^{t_l+T_l} A_t X_t \;\;\text{with }  V_l = \sum_{a	\in\mathcal{A}_l} T_l(a)aa^T  \]
		\State Eliminate low rewarding arms:
			\[  \mathcal{A}_{l+1} = \lbrace a\in\mathcal{A}_l : \max_{b\in\mathcal{A}_l} \; \hat{\theta}_l \cdot (b-a) \leq 2\epsilon_l \rbrace \]
		\State $l=l+1$ and \textbf{Goto Step 2}
	
	\end{algorithmic} 
\end{algorithm}

\section{Explore--then--commit tomography for the PSMAQB}\label{sec:explorethencommit}

After reviewing well-known algorithms for linear stochastic bandits, we found that they achieve a regret scaling of $\sqrt{T}$ in the time horizon $T$. This matches the lower bounds established in Chapter~\ref{ch:lowerbounds}, except in the case of the PSMAQB setting where the action set consists of rank-1 projectors and the environment is a pure quantum state. Recall that in this setting, the regret is given by  
\begin{align}
    R_T(\mathcal{S}^*_d, \psi, \pi) = \sum_{t=1}^T 1 - \langle \psi | \Pi_{A_t} | \psi \rangle,
\end{align}
where $\rho = |\psi \rangle \! \langle\psi |$ is the unknown pure quantum state, and $\Pi_{A_t}$ is the rank-1 projector chosen from the action set $\mathcal{A} = \mathcal{S}^*_d$, which consists of all rank-1 projectors.  

From Theorem~\ref{th:state_tomography_lowerbound}, the minimax lower bound for this setting is $\Omega(\log T)$, which does not match the $O(\sqrt{T})$ regret of existing algorithms. This lower bound was derived by exploiting the connection between state tomography and this specific bandit setting. This raises the natural question of whether quantum state tomography techniques can also be used to improve the upper bound on regret.  

A straightforward approach is to apply an \textit{explore-then-commit} strategy. In this method, a fraction of the time horizon $T$ is dedicated solely to estimating the unknown quantum state. Once the estimation is sufficiently accurate, the algorithm commits to a single action for the remaining rounds based on that estimate. The algorithm that we propose is based on the \emph{projected least squares} (PLS) method given in \cite{guctua2020fast}, but any other single copy state tomography algorithm can be adapted (as long as we can map the outcome measurements to outcomes of rank-1 projectors). Now, we briefly review the PLS methond and explain how to apply it for the regret analysis.

%\begin{align}
%R_T(\mathcal{S}^*_d, \psi, \pi) = \frac{1}{4} \sum_ {t=1}^T \EX_{\rho , \pi} \big\| \rho - \Pi_{A_t} \big\|^2_1 \, ,
%\end{align}

\subsubsection{Least-squares quantum state tomography algorithm}

Let $\rho\in\mathcal{S}_d$ an unknown quantum state that we want to estimate, and $\left\lbrace M_i\right\rbrace_{i=1}^m$ the elements of a POVM used to measure $\rho$. Suppose we perform measurements over $T$ rounds, and let $t_i$ be the number of times we observe the outcome $i$ associated with the element $M_i$ from the POVM. Since the probabilities of each outcome $i$ are given by Born's rule $\Tr(\rho M_i)$, the least squares estimator for $\rho$ is defined as  
\begin{align}\label{leastsquareesti}
L_T = \argmin_{X\in \mathcal{O}_d} \sum_{i=1}^m \left( \frac{t_i}{T} - \Tr \left( M_i X \right)\right)^2.
\end{align}  
The estimator $L_T$ is not necessarily a physical state, so the next step is to project it onto the space of physical states
\begin{align}\label{projection}
\hat{\rho}_T = \argmin_{\tilde{\rho}\in \mathcal{S}_d} \| L_T - \tilde{\rho} \|_2,
\end{align}  
where $\| \cdot \|_2$ denotes the Frobenius norm. In \cite{guctua2020fast}, this estimator is analyzed for structured POVMs, Pauli observables, and Pauli basis measurements. In our case, we use Pauli observables. Fix $d=2^n$, for some $n\in\mathbb{N}$ and let $ \left\lbrace \sigma_i\right\rbrace_{i=1}^{d^2}$ be the set of all possible tensor products of the $2\times 2$ Pauli matrices. These matrices form a basis of the form \eqref{dparametrization}. For Pauli observables, the least squares estimator \eqref{leastsquareesti} takes the form  
\begin{align}
L_T = \frac{1}{d}\sum_{i=1}^{d^2} \left( \frac{t_i^+ - t_i^-}{T/d^2} \right)\sigma_i,
\end{align}  
where $t_i^{\pm}$ are the empirical frequencies associated with the two-outcome POVM $\Pi_i^{\pm} = \frac{1}{2}(\mathbb{I}\pm\sigma_i)$, and each observable is measured $T/d^2$ times. The convergence analysis in \cite{guctua2020fast} leads to the following theorem.  

\begin{theorem}[Theorem 1 in \cite{guctua2020fast}]\label{convergencePLS}
Let $\rho$ be a quantum state in a $d$-dimensional Hilbert space, and fix a number of samples $T\in\mathbb{N}$. Then, using Pauli measurements, the PLS estimator $\hat{\rho}_T$ \eqref{projection} satisfies  
\begin{align}
\Pr \left( \| \hat{\rho}_T - \rho   \|_1 \leq  \epsilon  \right) \geq 1 - e^{-\frac{T\epsilon^2}{43d^2 r^2}} \quad \text{for} \quad \epsilon \in [0,1],
\end{align}
where $r = \min \left\lbrace \mathrm{rank}(\rho ) , \mathrm{rank} ( \hat{\rho}_T ) \right\rbrace $.
\end{theorem}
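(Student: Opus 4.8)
The plan is to treat the projected least squares estimator as a two-stage object: an \emph{unbiased linear inversion} step that concentrates around $\rho$, followed by a \emph{Euclidean projection} onto the state set whose effect on the error we control separately. First I would rewrite $L_T$ as an average of independent, bounded single-shot estimators. Measuring the Pauli observable $\sigma_{i_j}$ on the $j$-th copy and recording the outcome $b_j\in\{\pm1\}$, set $\hat\rho_j := d\,b_j\,\sigma_{i_j}$, so that $\mathbb{E}[\hat\rho_j]=\rho$ by the Pauli expansion of $\rho$, and $L_T=\frac1T\sum_{j=1}^T\hat\rho_j$ (whether the Pauli index is drawn uniformly at random or visited on a fixed round-robin schedule, the analysis is the same). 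The two facts that drive everything are the deterministic bound $\|\hat\rho_j\|\le d$ and the exact second moment $\hat\rho_j^2=d^2\,\mathbb{I}$, whence $\mathbb{E}[(\hat\rho_j-\rho)^2]\preceq d^2\mathbb{I}$ in the positive-semidefinite order.

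Next I would establish an operator-norm concentration bound for $L_T-\rho=\frac1T\sum_j(\hat\rho_j-\rho)$. This is a sum of independent, zero-mean Hermitian matrices with per-term norm at most $2d/T$ and total variance proxy of order $d^2/T$, so a matrix Bernstein inequality yields a subgaussian tail $\Pr(\|L_T-\rho\|\ge\tau)\le \exp(-c\,T\tau^2/d^2)$ in the relevant small-$\tau$ regime.

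The heart of the argument is then a \emph{projection lemma} converting this spectral control into a trace-norm guarantee for $\hat\rho_T$. Writing $\Delta:=\hat\rho_T-\rho$ and using that $\hat\rho_T$ is the Frobenius projection of $L_T$ onto $\mathcal{S}_d$, the first-order optimality (KKT) conditions for the projection force the ``spread-out'' part of $\Delta$ into a subspace of dimension at most $2r$, where $r=\min\{\mathrm{rank}(\rho),\mathrm{rank}(\hat\rho_T)\}$; combined with the non-expansiveness of the projection this gives a bound of the shape $\|\hat\rho_T-\rho\|_1\le C\,r\,\|L_T-\rho\|$, trading the trace norm for the operator norm at the cost of a single rank factor rather than the full dimension $d$. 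I expect this to be the main obstacle: one must show that projecting onto the states does not redistribute the error mass across all $d$ eigendirections, so that the rank $r$ (and not $d$) governs the $\ell_1$-to-$\ell_\infty$ conversion.

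Finally I would combine the two pieces. Setting $\tau=\epsilon/(Cr)$ in the concentration bound and chaining through the projection lemma gives $\Pr(\|\hat\rho_T-\rho\|_1\ge\epsilon)\le \exp(-c'\,T\epsilon^2/(d^2r^2))$; tracking the constants through the matrix Bernstein step and the projection lemma is exactly what pins down the numerical value $43$ in the exponent, and the restriction $\epsilon\in[0,1]$ keeps us in the regime where the variance term dominates the Bernstein bound.
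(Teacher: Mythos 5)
The thesis does not actually prove this statement: it is imported verbatim as Theorem~1 of \cite{guctua2020fast}, so there is no in-paper proof to compare against. Your sketch is nonetheless a faithful reconstruction of the argument in that reference: the linear inversion step written as an average of bounded, unbiased single-shot estimators with $\hat\rho_j^2 = d^2\,\mathbb{I}$, operator-norm concentration via matrix Bernstein, and a rank-dependent projection lemma converting control of $\|L_T-\rho\|$ into control of $\|\hat\rho_T-\rho\|_1$ at the cost of a factor proportional to $r=\min\{\mathrm{rank}(\rho),\mathrm{rank}(\hat\rho_T)\}$ rather than $d$ — which is exactly how the constant $43$ and the $d^2r^2$ in the exponent arise in the original paper. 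Two details deserve care if you were to write this out in full: matrix Bernstein produces a dimensional prefactor (of order $d$) in front of the exponential, which must be absorbed into the exponent or the constant, and the round-robin schedule the thesis actually uses gives independent but not identically distributed terms, so the variance proxy has to be computed per observable rather than by a single i.i.d. second-moment identity; neither changes the shape of the bound.
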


\subsubsection{Explore-Then-Commit algorithm}

After reviewing the PLS estimation method we explain how we use it to construct an algorithm for our pure state problem and how to analyse the regret. Recall that we have considered an action set that contains all rank-1 projectors. For that reason we consider the algorithm for one qubit states ($d=2$) since Pauli observables for one qubit can be measured using rank-1 projectors. The pseudo-code for the algorithm can be found in Algorithm~\ref{alg:pure}. If $T$ is the number of rounds, fix the error $\epsilon^2 = \frac{172\log(T)}{\sqrt{T}}$ and assume $\frac{172\log(T)}{\sqrt{T}} \leq 1$ in order to have $\epsilon\in [0,1]$. The condition $\frac{172\log(T)}{\sqrt{T}} \leq 1$ is achieved by $ T \geq 8\cdot 10^6$. Then the algorithm (policy $\pi$) works as follows.

\begin{itemize}
\item During the first $\sqrt{T}$ rounds we perform the Pauli observables measurements, and calculate the estimator $\hat{\rho}_{\sqrt{T}}$~\eqref{projection} for $\rho$. In our setting this using the MAQB framework the two-outcome POVM $\Pi_i^{\pm} = \frac{1}{2}(\mathbb{I}\pm\sigma_i)$ simply corresponds to the rank-1 projector observable $\Pi^+_i$. The frequencies can be written as
\begin{align}
    t^+_i = \sum_{t=1}^{\sqrt{T}} \mathbbm{1}\lbrace X_t = 1 , A_t = \Pi^+_i \rbrace, \quad \Pi^+_i = \frac{\mathbb{I}}{2} + \frac{1}{2}\sigma_i .
\end{align}

Then using Theorem~\ref{convergencePLS} we have,
\begin{align}\label{probesti}
P_{\rho,\pi}\left( \| \hat{\rho}_{\sqrt{T}} - \rho   \|_1 \leq  \epsilon  \right) \geq 1 - \frac{1}{T},
\end{align}
where we have used $r = 1$ since $\mathrm{rank}(\rho ) = 1$ and $\epsilon^2 = \frac{172\log(T)}{\sqrt{T}}$. In order to ensure that the estimator is pure we project the estimator into the rank-1 subspace in a $\epsilon-$ball as follows,
\begin{align}\label{rank1estimator}
\hat{\rho} = \argmin_{\rho \in \mathcal{S}^*_2} \|  \hat{\rho}_{\sqrt{T}} - \rho  \|_1 \quad \text{such that} \quad \|  \hat{\rho}_{\sqrt{T}} - \rho   \|_1\leq \epsilon.
\end{align}
Recall that for the above equation there is at least one solution with probability greater than $1-1/T$ since $\rho$ is rank-1 by assumption. Note that using \eqref{probesti} we have,
\begin{align}\label{probesti2}
P_{\rho,\pi}\left( \| \hat{\rho} - \rho   \|_1 \leq  2\epsilon  \right) \geq 1 - \frac{1}{T}.
\end{align}
\item For the remaining rounds we perform the measurement using $\hat{\rho} = |\hat{\psi}\rangle \! \langle \hat{\psi} | = \hat{\Pi}$ as the rank-1 projector from the action set.
\end{itemize}

\begin{algorithm}[H]
	\caption{Bandit PLS} 
	\label{alg:pure}
	\begin{algorithmic}[1]
		\For {$i=1,2,\ldots,  d^2  $}
			\For {$j=1,2,\ldots , \lceil \sqrt{T}\rceil/d^2$}
				\State Measure $\rho$ with $\sigma_i$;
				\State Update outcomes $t_i^{\pm}$;
				\EndFor
			\EndFor
		\State Compute $L_T = \frac{1}{d}\sum_{i=1}^{d^2} \left( \frac{t_i^+ - t_i^-}{T/d^2} \right)\sigma_i$;
		\State Compute 	$\hat{\rho}_{\sqrt{T}} = \argmin_{\tilde{\rho}\in \mathcal{S}_d} \| L_T - \tilde{\rho} \|_2$;
		\State Fix $ \epsilon^2 = \frac{43d^2 \log(T)}{\sqrt{T}} $
		\State Compute $\hat{\rho} = \argmin_{\rho \in \mathcal{S}^*_d} \|  \hat{\rho}_{\sqrt{T}} - \rho   \|_1 \quad \text{such that} \quad \|  \hat{\rho}_{\sqrt{T}} - \rho   \|_1\leq \epsilon$
		\For {$t= \lceil \sqrt{T}   \rceil+1,...,n$}
			\State Measure $\rho$ with $\hat{\rho} = \hat{\Pi}$;
		\EndFor	
	\end{algorithmic} 
\end{algorithm}

\begin{theorem}
Let $T\in\mathbb{N}$ denote the time horizon, assume $\frac{172\log(T)}{\sqrt{T}} \leq 1$ $\left( T\geq 8\cdot 10^6\right)$. Consider the PSMAQB setting with $d=2$ (qubits). Then, for all environments $\ket{\psi}\! \bra{\psi}\in\mathcal{S}^*_2$ the \textsf{Bandit PLS} algorithm~\ref{alg:pure} associated to policy $\pi$ achieves the expected regret 
\begin{align}
\EX_{\psi,\pi} [R_T (\mathcal{S}^*_2,\psi,\pi ) ] = O\left( \sqrt{T}\log (T)\right).  
\end{align} 
\end{theorem}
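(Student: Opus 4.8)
The plan is to split the regret into an \emph{exploration} phase of length $\lceil\sqrt{T}\rceil$ and an \emph{exploitation} phase of length roughly $T-\sqrt{T}$, bound each contribution separately, and combine them. For the exploration phase the bound is immediate: since the instantaneous regret $1-\langle\psi|\Pi_{A_t}|\psi\rangle$ is always at most $1$, the first $\lceil\sqrt{T}\rceil$ rounds contribute at most $\lceil\sqrt{T}\rceil=O(\sqrt{T})$ to the expected regret, regardless of which Pauli observables are measured. This is the cheap part and requires no probabilistic argument.

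The substance is the exploitation phase. First I would invoke the concentration guarantee~\eqref{probesti2}, which gives $P_{\rho,\pi}(\|\hat\rho-\rho\|_1\le 2\epsilon)\ge 1-1/T$ with $\epsilon^2=\frac{172\log(T)}{\sqrt T}$. On the good event, I want to translate the trace-distance bound into a bound on the per-round regret $1-\langle\psi|\hat\Pi|\psi\rangle$. Here I would use that for two pure qubit states the trace distance and infidelity are related by $\frac14\|\,|\psi\rangle\!\langle\psi|-\hat\Pi\,\|_1^2 = 1-\langle\psi|\hat\Pi|\psi\rangle$ (this is exactly the identity already recorded in~\eqref{eq:regret_psmaqb}), so on the good event $1-\langle\psi|\hat\Pi|\psi\rangle=\frac14\|\hat\Pi-|\psi\rangle\!\langle\psi|\|_1^2\le\frac14(2\epsilon)^2=\epsilon^2=\frac{172\log(T)}{\sqrt T}$. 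Summing this over the at most $T$ exploitation rounds yields a contribution of at most $T\cdot\frac{172\log(T)}{\sqrt T}=172\sqrt{T}\log(T)=O(\sqrt T\log T)$.

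It remains to handle the \emph{bad} event, of probability at most $1/T$. On this event I simply bound each instantaneous regret by its maximum value $1$, so the total regret is at most $T$; weighting by the failure probability $1/T$ contributes at most $T\cdot\frac1T=O(1)$ to the expectation. Putting the three pieces together,
\begin{align}
\EX_{\psi,\pi}[R_T(\mathcal S_2^*,\psi,\pi)]
\le \lceil\sqrt T\rceil + 172\sqrt T\log(T) + 1 = O\!\left(\sqrt T\log(T)\right),
\end{align}
which is the claimed bound.

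The main obstacle, such as it is, is conceptual rather than technical: one must be careful that the estimator $\hat\rho$ fed into the exploitation phase is genuinely a rank-$1$ projector, so that the trace-distance-to-infidelity identity for \emph{pure} states applies. This is guaranteed by the projection step~\eqref{rank1estimator}, which forces $\hat\rho\in\mathcal S_2^*$ while keeping it within an $\epsilon$-ball of $\hat\rho_{\sqrt T}$; the triangle inequality then gives the factor-$2\epsilon$ bound~\eqref{probesti2} that the argument above consumes. I would also take care that the fixed error $\epsilon^2=\frac{172\log T}{\sqrt T}$ is chosen precisely so that the Güţă--Kahn--Kueng--Tropp tail bound (Theorem~\ref{convergencePLS}, with $r=1$ and $\sqrt T$ samples) yields failure probability exactly $1/T$, which is what makes the bad-event contribution $O(1)$ rather than something larger.
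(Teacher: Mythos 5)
Your proposal is correct and follows essentially the same route as the paper's proof: the same decomposition into an exploration phase bounded trivially by $\lceil\sqrt{T}\rceil$, the same use of the PLS concentration bound with $\epsilon^2 = \frac{172\log(T)}{\sqrt{T}}$ combined with the pure-state trace-distance-to-infidelity identity to bound the exploitation phase by $O(\sqrt{T}\log T)$, and the same splitting of the expectation over the good and bad events with the trivial bound $R_T\le T$ on the failure event of probability $1/T$. No substantive differences.
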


\begin{proof}
First we can rewrite the regret in terms trace distance as,
\begin{align}
R_T (\mathcal{S}^*_2,\psi,\pi ):= \frac{1}{4} \sum_ {t=1}^n \big\| |\psi \rangle \! \langle\psi |  - \Pi_{A_t} \big\|^2_1 \, ,
\end{align}
and then we will take the expectation value. During the first $\sqrt{T}$ rounds the algorithm performs the Pauli measurements using rank-1 projectors and builds the estimator $\hat{\rho}$ using Equation~$\eqref{rank1estimator}$. Using Equation~\eqref{probesti2} we have the probabilistic error bound
\begin{align}\label{probreg}
P_{\psi,\pi}\left( \| |\psi \rangle \! \langle\psi | - \hat{\rho}  \|_1 \leq 2\epsilon \right)\geq 1-\frac{1}{T},
\end{align}
where $\epsilon^2 = \frac{172\log(T)}{\sqrt{T}}$. Then, with probability greater than $1-\frac{1}{T}$, the regret can be bounded as,
\begin{align}\label{Gbound}
R_T (\mathcal{S}^*_2,\psi,\pi ) \leq  \sqrt{T} + (T-\sqrt{T}) \frac{172\log(T)}{\sqrt{T}} = O(\sqrt{T}\log(T)),
\end{align}
where the first term comes from the trivial bound $ \frac12 \| |\psi \rangle \! \langle\psi | - \Pi_{A_t} \|_1 \leq 1 $ for the first $\sqrt{T}$ rounds and the second term from the bound of Equation~\eqref{probreg} for the remaining rounds where $\Pi_{A_t}$ is the estimator~\eqref{rank1estimator}. Let $C$ be the hidden constant in $O(\sqrt{T}\log(T))$, and $G$ the probabilistic event where $R_T (\mathcal{S}^*_2,\psi,\pi )\leq C\sqrt{T}\log(T)$ and $G^C$ its complement. Note that $P_{\psi,\pi}(G^C)\leq 1/T$. Then we can calculate the expected regret as,
\begin{align}
\EX_{\psi,\pi} [  R_T (\mathcal{S}^*_2,\psi,\pi )] = \EX_{\psi,\pi} [\mathbbm{1}\left\lbrace G \right\rbrace  R_T (\mathcal{S}^*_2,\psi,\pi )] + \EX_{\psi,\pi} [ \mathbbm{1}\left\lbrace G^C \right\rbrace  R_T (\mathcal{S}^*_2,\psi,\pi ) ]. 
\end{align}
For the first term we use the bound \eqref{Gbound} given by the event G. For the second term we use the trivial bound $R_T (\mathcal{S}^*_2,\psi,\pi )\leq T$ combined with $P_{\psi,\pi}(G_i^C ) \leq  1/T$. Thus,
\begin{align}
\EX_{\psi,\pi} [  R_T (\mathcal{S}^*_2,\psi,\pi )] \leq  C\sqrt{T}\log(T)  + TP_{\psi,\pi}(G_i^C) \leq C\sqrt{T}\log(T)  +1 = O ( \sqrt{T}\log(T)   ).
\end{align}
\end{proof}

\section{Beyond the Square Root Barrier}

From Section~\ref{sec:ofu}, we see that standard \textsf{UCB}-type algorithms achieve an \( \tilde{O}(\sqrt{T}) \) regret bound across all MAQB settings almost matching all minimax lower bounds studied in Chapter~\ref{ch:lowerbounds}. However, this remains suboptimal compared to the $ \Omega(\log T) $ lower bound we established for the PSMAQB setting. Since the \textsf{UCB} algorithms we used are general-purpose, they apply to a broad range of settings but do not exploit the specific structure of the quantum problem.  A key observation is that the environment is given by a pure state $ |\psi\rangle\!\langle\psi| \in \mathcal{S}^*_d $, which suggests the possibility of exploiting its lower-dimensional structure to design a more efficient algorithm. In Section~\ref{sec:explorethencommit}, we applied this idea to construct a simple explore-then-commit algorithm. While this approach still yields an \( O(\sqrt{T}) \) regret bound, it only involves a single round of adaptivity, and one might argue that a more adaptive algorithm could achieve a better scaling. The first question that we ask ourselves is why $\textsf{LinUCB}$ achieves the square root scaling. From the proof in Theorem~\ref{th:regret_linucb} we see that the square root behavior is introduced when we apply Cauchy-Schwarz as 
\begin{align}
    R_T (\mathcal{A}, \theta , \pi) = \sum_{t=1}^T  \langle \theta , A^* - A_t \rangle \leq \sqrt{T\sum_{t=1}^T \langle \theta , A^* - A_t \rangle^2 }.
\end{align}
Then the term $\sum_{t=1}^T \langle \theta , A^* - A_t \rangle^2 $ can be argued to behave logarithmically in $T$ using the optimistic principle and the elliptical potenital lemma~\ref{lem:elliptical_potential}. This procedure always introduces a $\sqrt{T}$ and for that reason we need to develop a new technique to upper bound the cumulative regret based on a tight control of the instantaneous regret  $\langle \theta , A^* - A_t \rangle$ at each time step $t\in[T]$. In order to design the algorithm we will take advantage of the underlying structure of the problem, specifically that the problem is constrained to the Bloch sphere. %The rest of this Chapter will be devoted to design an analyse an algorithm based on \textsf{LinUCB} that achieves a better regret scaling for the setting of pure state environments.

One of the key features of this problem is its underlying geometry. For qubits, we can represent any pure state projector \( |\psi \rangle \! \langle \psi | \in \mathcal{S}_2^* \) using the Pauli basis \( \sigma = (\sigma_x, \sigma_y, \sigma_z) \) as follows
\begin{align}\label{eq:PSMAQB_link_sphere}
    |\psi \rangle \! \langle \psi | = \frac{\mathbb{I}}{2} + \frac{1}{2} r \cdot \sigma,
\end{align}
where \( r \in \mathbb{S}^2 \) is a unit vector. This suggests that we can study the problem as a classical linear stochastic bandit with an action set \( \mathcal{A} = \mathbb{S}^{d-1} \), a \((d-1)\)-dimensional sphere, and an unknown parameter \( \theta \in \mathbb{S}^{d-1} \).  

Existing regret lower bounds for linear bandits, such as those in~\cite{lin2} and~\cite[Chapter 24, Theorem 24.2]{lattimore_banditalgorithm_book}, assume that \( \theta \) lies within the unit ball and give a $\Omega (\sqrt{T})$ scaling. These bounds do not apply here because the worst-case instances they construct involve unknown vectors far from the surface of the ball. However, when the noise is finite and \(\sigma\)-subgaussian, we adapt a lower bound for logistic bandits from~\cite{abeille2021instance} and establish a minimax regret lower bound of \( \Omega (\sigma d \sqrt{T}) \), as shown in Appendix~\ref{ap:lowerbound}.  We denote $\mathcal{N}(\mu,\sigma^2)$ a normal distribution of mean $\mu$ and variance $\sigma$.

\begin{theorem}\label{th:lowerbound_sphere}
For any policy $\pi$ of a stochastic linear bandit with action set $\mathcal{A} =  \mathbb{S}^{d-1} = \lbrace x\in\mathbb{R}^d: \| x \|_2 =1 \rbrace$, reward noise given by $\epsilon_t = \mathcal{N}(0,\sigma^2 )$ where $\sigma > 0$, then there exists an unknown parameter $\theta\in\mathbb{R}^d$ such that $\|\theta \|_2 = 1$ and
\begin{align}
    \EX_{\theta,\pi} [R_T (\mathbb{S}^{d-1},\theta,\pi)] \geq \frac{1}{100}{\sigma}d\sqrt{T}, 
\end{align}
for $T\geq \frac{1}{6400}d^2 {\sigma}^2$.
\end{theorem}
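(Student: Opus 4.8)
The plan is to establish the bound by an Assouad-type, multiple-hypothesis argument adapted from the logistic-bandit analysis in \cite{abeille2021instance}, exploiting that on the sphere the reward landscape is strongly concave around its maximizer. First I would rewrite the regret in a form where the instantaneous loss is a squared distance. Since $\|\theta\|_2=\|A_t\|_2=1$, the maximizer of $A\mapsto\langle\theta,A\rangle$ over $\mathbb{S}^{d-1}$ is $A=\theta$, with value $1$, so
\[
R_T(\mathbb{S}^{d-1},\theta,\pi)=\sum_{t=1}^T\big(1-\langle\theta,A_t\rangle\big)=\frac12\sum_{t=1}^T\|\theta-A_t\|_2^2 ,
\]
and in particular the loss splits additively across coordinates.

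Next I would set up the hard family. Fix a perturbation scale $\delta>0$ and, for every $\xi\in\{-1,+1\}^{d-1}$, define
\[
\theta_\xi:=\sqrt{1-(d-1)\delta^2}\,e_1+\delta\sum_{i=1}^{d-1}\xi_i\,e_{i+1}\in\mathbb{S}^{d-1},
\]
which is well defined as long as $(d-1)\delta^2\le 1$. Under Gaussian noise the reward law given $A_t$ is $\mathcal{N}(\langle\theta_\xi,A_t\rangle,\sigma^2)$, so the one-step Kullback--Leibler divergence is exactly $\tfrac{1}{2\sigma^2}\langle\theta_\xi-\theta_{\xi'},A_t\rangle^2$; applying the divergence decomposition of Lemma~\ref{divergence} to two environments differing only in sign $i$ gives
\[
D(P_{\theta_\xi,\pi}\|P_{\theta_{\xi'},\pi})=\frac{2\delta^2}{\sigma^2}\,\EX_{\theta_\xi,\pi}\Big[\sum_{t=1}^T A_{t,i+1}^2\Big].
\]
The structural point I would stress is that the exploration mass $\sum_t A_{t,i+1}^2$ controlling identification of $\xi_i$ is the very same quantity that feeds the regret through $\tfrac12(\delta\xi_i-A_{t,i+1})^2$, while the unit-norm constraint couples the coordinates via $\sum_{i=1}^{d-1}\sum_t A_{t,i+1}^2\le T$.

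The core is then a two-point bound for each coordinate. Writing $\mathcal{R}_i:=\tfrac12\sum_t(\delta\xi_i-A_{t,i+1})^2$, I would compare the environments with $\xi_i=+1$ and $\xi_i=-1$ (all other signs frozen) and, via the Bretagnolle--Huber inequality (Lemma~\ref{pinsker}) applied to the aggregated sign event, prove
\[
\EX_{\theta_\xi,\pi}[\mathcal{R}_i]+\EX_{\theta_{\xi'},\pi}[\mathcal{R}_i]\ \ge\ c_0\,\min\Big\{\frac{\sigma^2}{\delta^2},\ \delta^2 T\Big\}
\]
for a universal $c_0>0$. The mechanism the formal bound must capture is the exploration--exploitation dilemma in its sharpest form: reliably identifying $\mathrm{sign}(\xi_i)$ demands exploration mass $\sum_t A_{t,i+1}^2=\Omega(\sigma^2/\delta^2)$, and paying that mass costs $\Omega(\sigma^2/\delta^2)$ regret \emph{irrespective of how it is spread over time}; whereas not identifying it forces a loss of order $\delta^2$ on almost every round, i.e.\ $\Omega(\delta^2 T)$. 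Averaging over $\xi$ and summing over the $d-1$ coordinates gives $\max_\xi\EX_{\theta_\xi,\pi}[R_T]\ge c_1(d-1)\min\{\sigma^2/\delta^2,\delta^2 T\}$. Choosing $\delta^2\asymp\sigma/\sqrt{T}$ balances the two terms at order $\sigma\sqrt{T}$ each, yielding $\Omega(\sigma d\sqrt{T})$; the hypothesis $T\ge \tfrac{1}{6400}d^2\sigma^2$ is exactly what guarantees that this (small-constant) choice still satisfies $(d-1)\delta^2\le 1$, and a careful accounting of constants produces the stated $\tfrac{1}{100}\sigma d\sqrt{T}$.

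The main obstacle I anticipate is precisely the per-coordinate bound. In the hypercube construction one has $A_{t,i}^2\equiv 1$, which decouples the coordinates and makes the Fisher information for each sign policy-independent; on the sphere the information about $\xi_i$ is instead governed by the adaptively chosen, mutually constrained allocation $A_{t,i+1}^2$. I must therefore argue that the $\Omega(\sigma^2/\delta^2)$ exploration cost is unavoidable without presupposing how the policy distributes its action budget across coordinates and rounds, and keep the $d-1$ coordinate-wise KL budgets additive under $\|A_t\|_2=1$. This is the strongly-convex/self-concordant regret argument of \cite{abeille2021instance}, and transplanting it from the logistic link to the exact spherical geometry, while controlling all constants so as to reach the explicit threshold, is the delicate part.
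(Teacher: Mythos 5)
Your hard family, the Gaussian KL computation via the divergence decomposition, and the final choice $\delta^2\asymp\sigma/\sqrt{T}$ coincide exactly with the paper's (Lemma~\ref{lem:regret_logistic_bandits} and Appendix~\ref{ap:lowerbound}); where you genuinely diverge is in how the information--regret tradeoff is closed. The paper never proves your per-coordinate $\min\{\sigma^2/\delta^2,\delta^2T\}$ dichotomy. Instead it aggregates all $d-1$ sign flips at once through Pinsker's inequality, obtaining $\frac{1}{|\Xi|}\sum_{\theta}\EX_{\theta}[R_T]\ge T\epsilon^2\bigl(\tfrac{d}{8}-\tfrac{\sqrt d}{4}\sqrt{\overline{D}}\bigr)$ with $\overline{D}$ the averaged total KL, and then closes the loop with a \emph{self-bounding} step: assume $\EX_{\theta}[R_T]\le d\sigma\sqrt T$ for every $\theta\in\Xi$ (otherwise there is nothing to prove), and bound $\overline{D}$ by the regret itself using $[A_t]_i^2\le 2[\theta-A_t]_i^2+2[\theta]_i^2$, so the total KL is at most a multiple of $\tfrac{\epsilon^2}{\sigma^2}\bigl(\EX_\theta[R_T]+T(d-1)\epsilon^2\bigr)$. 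Your route --- a Bretagnolle--Huber two-point argument coordinate by coordinate, split on whether the KL budget of that coordinate exceeds a constant --- is the classical unit-ball-style packaging; it avoids the self-bounding hypothesis but requires you to verify that exploration mass converts into regret via $(\delta\xi_i-A_{t,i+1})^2\ge\tfrac12 A_{t,i+1}^2-\delta^2$ and that the subtracted $T\delta^2/2$ term is dominated once the constant in $\delta^2=c\,\sigma/\sqrt T$ is small; this does go through, and the unit-norm coupling $\sum_i\sum_t A_{t,i+1}^2\le T$ causes no obstruction since the per-coordinate regret bounds are simply added. Both mechanisms are sound: the paper's buys a single aggregated Cauchy--Schwarz and a clean constant computation, while yours makes the explore-versus-commit dilemma explicit per coordinate. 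The one caveat is quantitative: the Bretagnolle--Huber route leaks constants (the $e^{-D}$ factor at the KL threshold and the $-T\delta^2/2$ correction in the exploration branch), so recovering the stated $\tfrac{1}{100}$ and the exact threshold $T\ge d^2\sigma^2/6400$ is not automatic and would need the same tuning the paper performs with $\epsilon^2=\sigma/(80\sqrt T)$; the $\Omega(\sigma d\sqrt T)$ scaling itself is correctly established by your argument.
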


 The proof is a simplified proof of the lower bound found in~\cite{abeille2021instance} but we decided to not include it in the main text since the focus of this Chapter is on algorithms. This result shows that for linear bandits with a constant subgaussian parameter (or variance), the $\sqrt{T}$ regret scaling remains optimal, even when the problem is constrained to a sphere. One may try to now apply the above lower bound to the MAQB problem with pure state, however the bound does not apply. In order to see this we have to look more closely to the quantum reward model. Recall that the action is given by a rank-1 projector $\Pi_{A_t}\in\mathcal{S}_d^*$ and the reward is sampled by measuring the two outcome POVM $\lbrace \Pi_{A_t} , \mathbb{I}-\Pi_{A_t} \rbrace$ on the unknown state. The reward is given by
\begin{align}
    X_t = \langle \psi | \Pi_{A_t} | \psi \rangle + \epsilon_t ,
\end{align}
where $X_t \in \lbrace 0 , 1\rbrace$ is Bernoulli distributed and $|\psi\rangle$ is the environment. We note that the variance of the reward behaves as
\begin{align}\label{eq:vanishing_variance}
    \VX [\epsilon_t] = \langle \psi | \Pi_{A_t} | \psi \rangle ( 1 - \langle \psi | \Pi_{A_t} | \psi \rangle ) ,
\end{align}
which means that it is not constant and actually vanishes linearly as we select actions close to the optimal one. Although the lower bound in Theorem~\ref{th:lowerbound_sphere} assumes a different noise model, it already suggests that non-zero reward variance is necessary to achieve the typical $\sqrt{T}$ regret scaling. This becomes clear when considering the trivial case where the variance is always zero: in such a scenario, the environment could be learned using only $d^2$ actions, and the regret would no longer grow with $T$. While the subgaussian parameter also vanishes when actions are chosen close to the unknown environment, its behavior is more complicated. From~\cite{buldygin2013sub} we have that the exact $\sigma$-subgaussian parameter of $X_t$ is given by
\begin{align}\label{eq:subgaussian_parameter_bernoulli}
    \sigma^2 = \begin{cases}
    0 \quad \text{for} \quad \langle \psi | \Pi_{A_t} | \psi \rangle \in\lbrace 0,1 \rbrace \\
    \frac{1}{4} \quad \text{for} \quad \langle \psi | \Pi_{A_t} | \psi \rangle = \frac{1}{2}\\
    \frac{1-2\langle \psi | \Pi_{A_t} | \psi \rangle}{2\ln\left( \frac{1-\langle \psi | \Pi_{A_t} | \psi \rangle}{\langle \psi | \Pi_{A_t} | \psi \rangle} \right)}\quad \text{for} \quad \langle \psi | \Pi_{A_t} | \psi \rangle \neq 0,1,\frac{1}{2}.
    \end{cases}
\end{align}
In Figure~\ref{fig:variance_vs_subgaussian}, we observe that both the variance and the subgaussian parameter of $X_t$	vanish when selecting actions close to the environment. However, the key difference lies in their rates of decay: the variance decreases linearly, whereas the subgaussian parameter decays more slowly. This observation motivates our focus on the vanishing variance, rather than the subgaussian parameter, when designing the algorithm for the PSMAQB setting.

\begin{figure}
    \centering
    \includegraphics[width=0.9\linewidth]{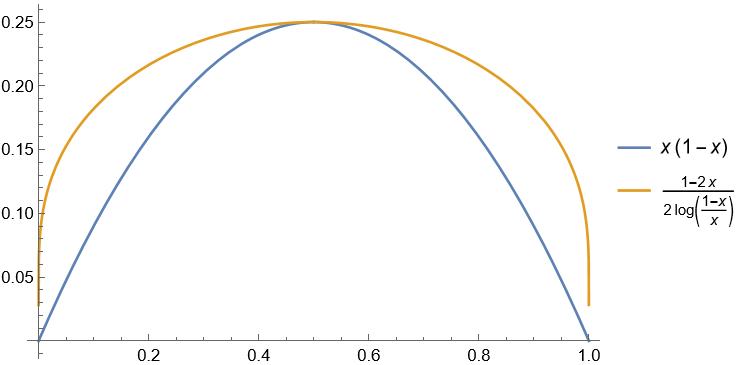}
    \caption{Plot of the variance and subgaussian parameter of a Bernoulli random variable $X\in\lbrace 0 , 1 \rbrace$ where $\mathrm{Pr}(X = 1 ) = x$.}
    \label{fig:variance_vs_subgaussian}
\end{figure}

We could also try to implement the extra structure given by~\eqref{eq:vanishing_variance} into the proof of Theorem~\ref{th:lowerbound_sphere}, but this leads to a constant lower bound. Specifically, we encounter the problem that these calculations rely on the computation of the $KL$ divergences $D_{KL} (\mathbb{P}_\theta , \mathbb{P}_{\theta '} ) $ for two ``close'' unknown parameters $\theta, \theta ' \in\mathbb{S}^d$ and it is not possible to give a non-trivial bound since their variance is arbitrarily close to $0$ when we select actions near the unknown parameters. We can easily see this fact for example if we consider Gaussian distributed rewards and use the divergence decomposition lemma that leads to
\begin{align}
    D_{KL}(\mathbb{P}_\theta , \mathbb{P}_{\theta '} ) = \EX_\theta \left[ \sum_{t=1}^T D_{KL}\left( \mathcal{N}(\langle \theta , a_t  \rangle, 1 - \langle \theta ,a_t \rangle^2 ) , \mathcal{\mathcal{N}}(\langle \theta' , a_t  \rangle, 1 - \langle \theta' ,a_t \rangle^2) \right) \right],
\end{align}
where $\mathcal{N}(\mu,\sigma^2)$ is a Gaussian probability distribution with mean $\mu\in\mathbb{R}$ and variance $\sigma^2 \geq 0$. The proof requires to upper bound of the above divergence but a simple computation shows that the divergence $ D_{KL}$ for the reward distributions cannot be upper bounded since the variances can go arbitrarily close to $0$ and we only get the trivial bound $D_{KL} \leq \infty$. Recall that in Section~\ref{sec:lower_bound} we also encountered the same issue of constant lower bound when trying to derive a minimmax lower bound for the PSMAQB setting using the techniques we developed for the other MAQB lower bounds. Specifically when computing $D_{\frac{1}{2}} \left( \psi \| \psi' \right)$  we found that the tradeoff with closeness of the environments and the time horizon $T$~\eqref{eq:constant_d12} leads to the constant bound. From these observations we conclude that the second structure that we want to exploit is the vanishing variance of the reward.

\section{Controlling the eigenvalues of the design matrix}\label{sec:designmatrix}

From the variance expression in~\eqref{eq:vanishing_variance} and the subgaussian parameter in~\eqref{eq:subgaussian_parameter_bernoulli}, we see that the statistical noise is not independent of the chosen actions and the environment. In fact, there is a well-defined dependence. This setting, where the noise depends on the actions, is known as \textit{heteroscedastic noise}, and was first studied for linear bandits in~\cite{kirschner2018information}. There, the authors considered a model in which the noise is \(\sigma(a_t)\)-subgaussian, where \(\sigma(a_t)\) is a \textit{link function} that maps actions \(a_t \in \mathcal{A}\) to real numbers, i.e., \(\sigma: \mathcal{A} \rightarrow \mathbb{R}\). The key difference in our model is that the noise also depends on the unknown environment.

One of the techniques introduced in~\cite{kirschner2018information} is to replace the online least-squares estimator~\eqref{eq:least_squares_estimator_online}—used, for instance, in the \textsf{LinUCB} algorithm—with a weighted version. The main idea is to bias the estimator towards actions with lower noise, using the inverse of the link function as a weight. In our case, we also will need such a weighted estimator, since the noise depends on the unknown environment. We note that the setting studied in~\cite{kirschner2018information} does not provide an improvement over the regret square root barrier since they focus on general class of linear bandits. However, our setting is more concrete, which will allow us to design a simpler strategy while still providing theoretical guarantees on the regret. Although~\cite{kirschner2018information} focuses on link functions that capture the subgaussian parameter, the same ideas can be applied when we have a link function for the variance.

Thus, we introduce an estimator for the noise subgaussian parameter (or variance) at time step $t \in [T]$ as
\begin{align}\label{eq:variance_estimator}
\hat{\sigma}^2_t:\mathcal{H}_{t-1}\times\mathcal{A} \rightarrow \mathbb{R}_{> 0},
\end{align}
where \(\mathcal{H}_{t-1} = (A_1,X_1,\dots,A_{t-1},X_{t-1})\) contains the history of past actions and rewards, and is independent of the unknown environment. For simplicity, we will often omit the dependence on past actions and rewards and simply write \(\hat{\sigma}^2_t(A)\) for \(A \in \mathcal{A}\). We will refer to~\ref{eq:variance_estimator} as the \textit{noise estimator}.
The weighted version of the linear least square estimator for linear bandits is defined with the following minimization problem
\begin{align}\label{eq:least_squares}
    \hat{\theta}^{\text{w}}_t := \argmin_{\theta' \in\mathbb{R}^d} \sum_{s=1}^t \frac{1}{\hat{\sigma}^2_s(A_s)}\left(X_s - \langle \theta', A_s\rangle \right)^2 + \lambda \| \theta' \|_2,
\end{align}
where $\lambda\in\mathbb{R}_{>0}$ is a regularization term. The analytical solution is 
\begin{align}\label{eq:estimator_weighted}
    \hat{\theta}^{\text{w}}_t  = V_t^{-1} (\lambda) \sum_{s=1}^{t} \frac{1}{\hat{\sigma}^2_s(A)} A_s X_s,
\end{align}
where 
\begin{align}\label{eq:design_matrix_weighted}
    V_t (\lambda ) = \lambda \mathbb{I} + \sum_{s=1}^{t} \frac{1}{\hat{\sigma}^2_s(A_s)} A_s A^\mathsf{T}_s,
\end{align}
is the \textit{weighted design matrix}.

\textbf{Note.} The above quantities have been defined in the context of a linear bandit problem rather than the PSMAQB setting. However, recall from~\eqref{eq:PSMAQB_link_sphere} that there exists a one-to-one correspondence between the action set and environment of the qubit PSMAQB problem and the unit sphere $\mathbb{S}^2$. Therefore, although we will focus on linear bandits in the following, we will later make the explicit connection to the PSMAQB problem.

In the next sections, we will introduce a method that allows us to control the eigenvalues of the design matrix at any time. Combined with the use of weighted estimators, this will enable us to break the square-root barrier. Before presenting the general case, we first illustrate the technique in a simplified setting: a linear bandit on the circle $\mathbb{S}^1$. This will help us see how the method works in practice. Later we will consider the general case.

\subsection{Controlling the eigenvalues of the design matrix in the circle}\label{sec:circle_eigenvalues}

We now have the tool we want to use to solve our problem, but how should we use it? As mentioned earlier, we aim to design strategies that follow the optimistic principle, which in the case of \textsf{LinUCB} typically takes the form given in~\eqref{eq:optimistic_principle}. As previously discussed, the action selection rule derived from this principle leads to a \(\sqrt{T}\) regret in the analysis. For this reason, we explore an alternative way to implement the optimistic principle. To illustrate our idea, we begin by studying the setting where the action set is the circle \(\mathcal{A} = \mathbb{S}^1\) and the unknown parameter also lies on the circle, i.e., \(\theta \in \mathbb{S}^1\). The optimistic principle tells us to select action accordingly to
\begin{align}\label{eq:at_selection_circle}
    A_t \in \argmax_{A\in\mathbb{S}^1} \max_{\theta'\in\mathcal{C}_{t-1}} \langle \theta' , A \rangle = \mathcal{C}_{t-1}\cap \mathbb{S}^1 ,
\end{align}
where $\mathcal{C}_{t-1}\subset \mathbb{R}^2$ is a confidence region centered in the weighted estimator~\eqref{eq:estimator_weighted} with the same form as \textsf{LinUCB} i.e 
\begin{align}
    \mathcal{C}_{t-1} = \lbrace \theta'\in\mathbb{R}^2 : \| \theta' - \hat{\theta}_{t-1} \|^2_{V_{t-1}} \leq \tilde{\beta}_{t-1}  \rbrace ,
\end{align}
where $\lbrace \tilde{\beta}_{t}\rbrace_{t=1}^\infty \subset \mathbb{R}$ is some sequence that defines a well-behaved confidence region (we will work out the specific details later). Using the geometry of the problem we have that for $A_t$ choosen as in~\eqref{eq:at_selection_circle} we can reduce the task of bounding the regret to a geometry problem since
\begin{align}\label{eq:distance_circle}
    \max_{A\in\mathbb{S}^1} \langle \theta , A \rangle - \langle \theta, A_t \rangle = 1 - \langle \theta, A_t \rangle = \frac{1}{2}\langle \theta - A_t , \theta - A_t \rangle = \frac{1}{2}\|\theta - A_t \|_2^2.
\end{align}
Moreover using that $\lambda_{\min} (V_t(\lambda)) \mathbb{I} \leq V_t(\lambda)$ and $a_t\in\mathcal{C}_{t-1}$ we can compute the distance~\eqref{eq:distance_circle} as
\begin{align}
 \frac{1}{2}\|\theta - A_t \|_2^2   \leq \frac{\tilde{\beta}_{t-1}}{\lambda_{\min} (V_{t-1}(\lambda))}.
\end{align}
Thus, to bound the individual terms in the regret, it is enough to understand the behavior of the minimum eigenvalue of the design matrix. For standard \textsf{LinUCB}, it is known that the regret scales as $\tilde{O}(\sqrt{T})$, which suggests that $\lambda_{\min}(V_t(\lambda)) \sim \sqrt{t}$. In fact, in~\cite{banerjee2023exploration}, the authors show (under mild assumptions) that any strategy minimizing optimally the regret for linear bandits with continuous and smooth action sets must satisfy $\lambda_{\min}(V_t(\lambda)) = \Omega(\sqrt{t})$ for constant noise models.

Since our goal is to go beyond the square-root scaling, this relation is not convenient, and we are motivated to look for a more general condition. We note that $\lambda_{\max}(V_t(\lambda))$ captures the direction that has been exploited the most, as it relates to the shortest principal axis of the ellipsoid $\mathcal{C}_t$. For the unweighted design matrix~\eqref{eq:design_matrix} with normalized actions $A_t$, we have $\Tr(V_t(\lambda)) = \lambda + t$, so $\lambda_{\max}(V_t(\lambda)) \sim t$, which suggests the relation 
\begin{align}\label{eq:ratio_eigenvalues} 
\lambda_{\min}(V_t(\lambda)) = \Omega\left( \sqrt{\lambda_{\max}(V_t(\lambda))} \right), 
\end{align} 
for any strategy over continuous smooth action sets. Can we prove this relation for the weighted version of \textsf{LinUCB}? It turns out that even in the unweighted case, this is still an open problem in~\cite{banerjee2023exploration}.

In our case, we design an alternative strategy tailored to our setting that ensures this ratio holds. The idea is to select actions by projecting the extremal points of the longest axis of the ellipsoid $\mathcal{C}_t$ onto $\mathbb{S}^1$, as illustrated in Figure~\ref{fig:scheme_actions}. This particular choice helps “stabilize” the ellipsoid and guarantees the ratio~\eqref{eq:ratio_eigenvalues}. Furthermore, we aim to define a general selection rule such that the ratio~\eqref{eq:ratio_eigenvalues} holds independently of both the algorithm’s randomness and the noise model. This is formalized in the following theorem

\begin{figure}
\centering

\begin{overpic}[percent,width=0.8\textwidth]{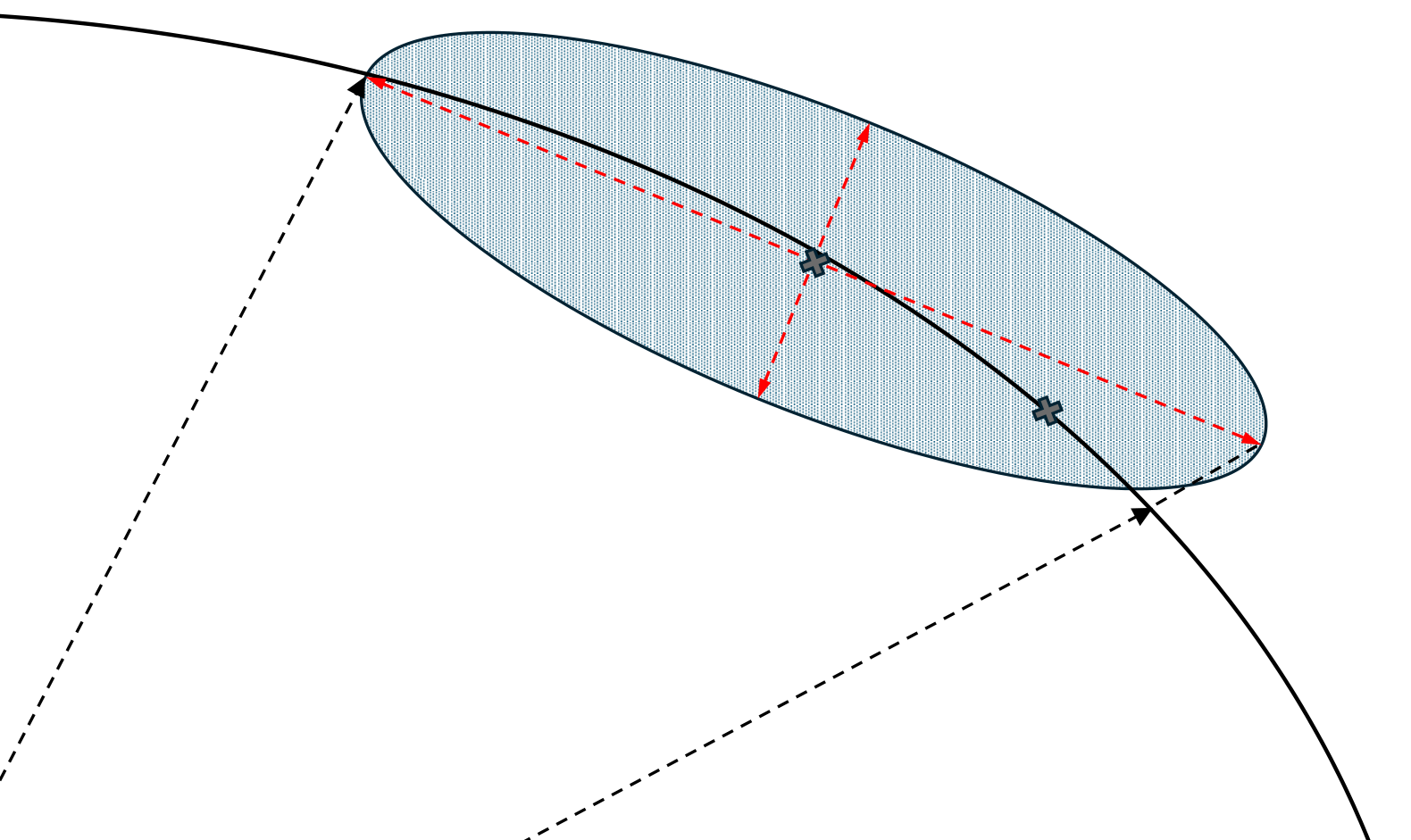}
\put(13,40){ $A^+_t$}
\put(68,15){ $A^-_t$}
\put(70,28){ $\theta$}
\put(55,43){ $\hat{\theta}^{\text{w}}_{t-1}$}
\end{overpic}
\caption{Scheme for the choice of actions $A_t^+,A_t^-$ as an altertative to \textsf{LinUCB}. The actions are selected as the projections of the extremal points across the largest axis of the confidence region centered around a weighted least squares estimator $\hat{\theta}^{\text{w}}_t$ of the unknown parameter $\theta$. This choice is sufficient to increase the minimum eigenvalue of $V_t$ such that the relation $\lambda_{\min}(V_t) = \Omega (\sqrt{\lambda_{\max}(V_t)})$ is satisfied. Moreover, the actions $A_t^+$ and $A_t^-$ are sufficiently close to $\theta$ to keep the regret small.}
\label{fig:scheme_actions}
\end{figure}

\begin{theorem}\label{th:circle_eigenvalues}
    Let $\lbrace c_t \rbrace_{t=0}^\infty \subset \mathbb{S}^1$ be a sequence of normalized vectors and $\omega: \text{P}^2_+ \rightarrow \mathbb{R}_{\geq 0}$ a function such that 
    \begin{align}
        \omega (X) \leq C\sqrt{\lambda_{\max} (X)},
    \end{align}
     for a constant $C > 0$ and any $X\in \text{P}^2_+$. Let $\lambda_0 \geq 4\sqrt{2}C+2$, and define a sequence of matrices
     $\lbrace V_t \rbrace_{t=0}^\infty \subset \mathbb{R}^{2\times 2}$ as
       \begin{align}\label{eq:vt_d2}
         V_0 := \lambda_0\mathbb{I}_{2\times2}, \quad      V_{t+1} := V_t + \omega ( V_t ) \left(a^+_{t+1}  (a^+_{t+1})^\mathsf{T}  +  a^-_{t+1}  (a^-_{t+1})^\mathsf{T} \right), 
       \end{align}
       where $a^+_{t+1},a^-_{t+1}\in\mathbb{S}^1$ are defined as
     \begin{align}
        a^+_{t+1} := \frac{c_t + \frac{1}{\sqrt{\lambda_{\min}(V_t)}}v_{t,\min}}{\sqrt{1+\frac{2 \langle c_t, v_{t,\min} \rangle }{\sqrt{\lambda_{\min}(V_t)}} + \frac{1}{\lambda_{\min}(V_t)}}}, \quad  a^-_{t+1} := \frac{c_t - \frac{1}{\sqrt{\lambda_{\min}(V_t)}}v_{t,\min}}{\sqrt{1-\frac{2 \langle c_t, v_{t,\min} \rangle }{\sqrt{\lambda_{\min}(V_t)}} + \frac{1}{\lambda_{\min}(V_t)}}}, 
        \end{align}
    and $v_{\min,t}$ is the normalized eigenvector corresponding to the minimum eigenvalue of $V_t$. Then we have
    \begin{align}
        \lambda_{\min}(V_t) \geq \sqrt{2\lambda_{\max}(V_t)} \quad \text{for all}\quad t\geq 0.
    \end{align}
  
\end{theorem}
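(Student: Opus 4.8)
The plan is to argue by induction on $t$, maintaining the invariant $\lambda_{\min}(V_t) \ge \sqrt{2\lambda_{\max}(V_t)}$ at every step. The base case $t=0$ is immediate: since $V_0 = \lambda_0\mathbb{I}$ we have $\lambda_{\min}(V_0) = \lambda_{\max}(V_0) = \lambda_0$, and the hypothesis $\lambda_0 \ge 4\sqrt{2}C + 2 \ge 2$ gives $\lambda_0^2 \ge 2\lambda_0$, which is exactly the invariant. For the inductive step I abbreviate $\mu := \lambda_{\min}(V_t)$ and $M := \lambda_{\max}(V_t)$, so that the hypothesis reads $\mu^2 \ge 2M$. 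Because $V_{t+1}$ is $2\times 2$ and symmetric, I will recast the target $\lambda_{\min}(V_{t+1})^2 \ge 2\lambda_{\max}(V_{t+1})$ entirely in terms of the trace $\tau := \Tr(V_{t+1})$ and determinant $\delta := \det(V_{t+1})$, using $\lambda_{\min}\lambda_{\max} = \delta$ and $\lambda_{\min} + \lambda_{\max} = \tau$.

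The first computation is to evaluate $\tau$ and $\delta$ exactly. Writing $P := a^+_{t+1}(a^+_{t+1})^{\mathsf{T}} + a^-_{t+1}(a^-_{t+1})^{\mathsf{T}}$ so that $V_{t+1} = V_t + \omega(V_t)P$, the unit norm of the two actions gives $\Tr P = 2$, hence $\tau = \mu + M + 2\omega(V_t)$. For the determinant I use the $2\times 2$ identity $\det(A+B) = \det A + \det B + \Tr(A)\Tr(B) - \Tr(AB)$ together with $\det P = \sin^2\vartheta$, where $\vartheta$ is the angle between $a^+_{t+1}$ and $a^-_{t+1}$. Evaluating $\Tr(V_t P)$ in the eigenbasis of $V_t$ and writing $P_{11}$ for the matrix element of $P$ along $v_{t,\min}$ (so $P_{22} = 2 - P_{11}$) yields the key identity
\begin{align}
\delta = \mu\big(M + 2\omega(V_t)\big) + \omega(V_t)(M-\mu)P_{11} + \omega(V_t)^2\sin^2\vartheta .
\end{align}
Both correction terms are nonnegative since $M \ge \mu$, and $P_{11},\sin^2\vartheta$ are explicit functions of $\mu$ and of the alignment $\alpha := \langle c_t, v_{t,\min}\rangle$, obtained directly from the definitions of $a^\pm_{t+1}$.

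A short manipulation then shows that $\lambda_{\min}(V_{t+1})^2 \ge 2\lambda_{\max}(V_{t+1})$ is equivalent to the scalar inequality $\delta^2 + 2\delta(3\tau + 2) \ge 2\tau^3$; the sign condition required to square both sides follows from $V_{t+1} \ge \lambda_0\mathbb{I}$ (only positive semidefinite terms are added to $V_0$), which forces both eigenvalues to be at least $\lambda_0 \ge 2$ and hence $\tau^2 \ge 4\delta \ge 2\tau + 2\delta$. Substituting $\tau$ and the crude bound $\delta \ge \mu(M+2\omega(V_t))$, and writing $N := M + 2\omega(V_t)$, this reduces to $g(N) \ge 0$ with $g(N) := \mu^2 N^2 + 4\mu N - 2\mu^3 - 2N^3$. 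I will verify, using the inductive hypothesis $M \le \tfrac12\mu^2$ and a concavity check on $g$, that $g(N)\ge 0$ for all $N \le \tfrac12\mu^2$, with equality exactly at $N = \tfrac12\mu^2$. Thus in this ``balanced'' regime the crude determinant bound already closes the induction.

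The main obstacle is the complementary regime $\tfrac12\mu^2 < N \le \tfrac12\mu^2 + 2\omega(V_t)$, which is genuinely reachable because $N$ exceeds $M$ by $2\omega(V_t)$ and $M$ may saturate the hypothesis. Here $g(N) < 0$ and the crude bound fails, so I must retain the positive correction $E := \omega(V_t)(M-\mu)P_{11} + \omega(V_t)^2\sin^2\vartheta$; the target becomes $E\big(2\mu N + 6\tau + 4 + E\big) \ge -g(N)$, where $-g(N)\ge 0$ is the deficit, of order $\omega(V_t)\mu^4$, while the gain is of order $E\mu^3$, so it suffices to show $E = \Omega(\omega(V_t)\mu)$. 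The delicate point is that this lower bound on $E$ must hold uniformly over the alignment $\alpha\in[-1,1]$: the two correction terms trade off against each other, with $P_{11}$ largest and $\sin^2\vartheta$ smallest as $\alpha\to\pm1$, and the reverse as $\alpha\to 0$, so I will bound their sum from below by its worst-alignment value (where $P_{11} \ge 2/(\mu+1)$ and $M-\mu \approx \tfrac12\mu^2$ combine to give $E = \Omega(\omega(V_t)\mu)$). This is precisely where $\lambda_0 \ge 4\sqrt2 C + 2$ is used: it guarantees $\mu \ge \lambda_0$ and, through $\omega(V_t) \le C\sqrt{M} \le C\mu/\sqrt2$, keeps the noise weight small enough relative to $\mu$ for $E$ to absorb the deficit. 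Carrying out this uniform-in-$\alpha$ estimate on $E$ is the only genuinely intricate step of the argument.
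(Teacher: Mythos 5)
Your overall architecture is sound and genuinely different from the paper's: the induction, the reformulation of $\lambda_{\min}(V_{t+1})^2\ge 2\lambda_{\max}(V_{t+1})$ as $\delta^2+2\delta(3\tau+2)\ge 2\tau^3$, the exact determinant identity $\delta=\mu N+E$, and the factorization $g(N)=-2\bigl(N-\tfrac12\mu^2\bigr)(N^2-2\mu)$ in the balanced regime are all correct. (The paper instead computes the eigenvalues of the $2\times2$ update explicitly as functions of the alignment $\alpha$, proves $\lambda_{\min,t+1}(\alpha)\ge\lambda_{\min,t+1}(0)$, and splits on $\lambda_{\max,t}-\lambda_{\min,t}\gtrless 2\omega(V_t)$ rather than on $N\gtrless\tfrac12\mu^2$.) The gap is in your problematic regime, and it is not a matter of bookkeeping: the sufficient condition you propose, $E=\Omega(\omega\mu)$, is false as a sufficient condition. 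Take the extremal case $M=\tfrac12\mu^2$ (hypothesis saturated) and $\alpha=0$. The deficit is
\begin{align}
-g(N)=4\omega\Bigl(\bigl(\tfrac12\mu^2+2\omega\bigr)^2-2\mu\Bigr)=\omega\mu^4-8\omega\mu+8\omega^2\mu^2+16\omega^3,
\end{align}
while the gain contributed by the $P_{11}$ part of $E$ alone, using the exact value $P_{11}(0)=\tfrac{2}{\mu+1}$ and $2\mu N+6\tau+4=(\mu+1)(\mu^2+2\mu+4)+O(\omega\mu)$, is
\begin{align}
\frac{2\omega(M-\mu)}{\mu+1}\bigl(2\mu N+6\tau+4\bigr)=\omega\mu(\mu^3-8)+O(\omega^2\mu^2)=\omega\mu^4-8\omega\mu+O(\omega^2\mu^2).
\end{align}
The two leading orders cancel \emph{exactly}, and the inequality is decided at order $\omega^2\mu^2$, where the missing $8\omega^2\mu^2$ must be recovered from the precise constants in $P_{11}(0)=\tfrac{2}{\mu+1}$ and $\sin^2\vartheta(0)=\tfrac{4\mu}{(\mu+1)^2}$. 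Any lower bound of the form $E\ge c\,\omega\mu$ with $c$ strictly below the exact constant fails outright, so an order-of-magnitude estimate on $E$ cannot close the induction; you need the exact value of $E$ at the minimizing alignment.

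A second, related problem is your treatment of the uniformity in $\alpha$. You propose to bound $E(\alpha)=\omega(M-\mu)P_{11}(\alpha)+\omega^2\sin^2\vartheta(\alpha)$ below by its ``worst-alignment value'' at $\alpha\to 0$, but the two competing coefficients, $\omega(M-\mu)\approx\tfrac12\omega\mu^2$ and $\omega^2\le\tfrac12 C^2\mu^2$, are of the same order when $\omega$ is near its allowed ceiling $C\sqrt{\lambda_{\max}}$, so it is not evident that $\alpha=0$ minimizes $E$, and given the exact cancellation above you cannot afford to evaluate $E$ anywhere but at its true minimum. Locating that minimum requires the same kind of explicit rational-function analysis in $\alpha$ that the paper carries out for its quantity $f(\alpha_t)$ (showing a unique extremum at $\alpha_t=0$ via the sign of $ad-bc$ for $g(x)=\tfrac{ax^2+b}{cx^2+d}$). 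Until you supply (i) the exact, $\alpha$-uniform lower bound on $E$ with the correct constants and (ii) the verification that the resulting polynomial inequality in $\mu,\omega$ holds using $\lambda_0\ge 4\sqrt2 C+2$, the inductive step is not established.
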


\begin{proof}
 To simplify the notation in the proof we define
 \begin{align}
     \lambda_{\min,t} := \lambda_{\min}(V_t) , \quad \lambda_{\max,t} : = \lambda_{\max}(V_t),
 \end{align}
 with corresponding normalized eigenvectors $v_{t,\min},v_{t,\max}\in\mathbb{S}^2$. We will also need the following properties for positive semidefinite matrices $A,B\in \text{P}^d_+ $:
  
\begin{align}\label{eq:pertboundsAB}
       \lambda_{\min}( A+B) &\geq \lambda_{\min}( A ) + \lambda_{\min}( B ), \\
       \lambda_{\max}(A+B) &\leq \lambda_{\max}(A ) + \lambda_{\max}(B). \nonumber
\end{align}

 We are going to prove the result by induction. At time step $t=0$ we have $\lambda_{\min,t} = \lambda_{\max,t} = \lambda_0$ and the inequality $\lambda_0 \geq \sqrt{2\lambda_0}$ holds since $\lambda_0 \geq 2$.
 Then at time step $t\geq 1$ we can express $V_t$ as
\begin{align}
    V_t = \begin{pmatrix}
        \lambda_{t,\min} & 0 \\
        0 & \lambda_{t,\max}
    \end{pmatrix},
\end{align}
using the basis $\lbrace v_{t,\min}, v_{t,\max}\rbrace$. Then we assume that the inequality $\lambda_{t,\min} \geq \sqrt{2\lambda_{\max,t}}$ is satisfied and we want to check the same inequality at time step $t+1$. We express the vectors that update $V_{t+1}$~\eqref{eq:vt_d2} on the same basis as
\begin{align}\label{eq:atminmaxbasis}
    a^+_{t+1} = \begin{pmatrix}
        \langle a^+_{t+1}, v_{t,\min} \rangle  \\
         \langle a^+_{t+1} ,  v_{t,\max} \rangle
    \end{pmatrix} \quad  a^-_{t+1} = \begin{pmatrix}
          \langle a^-_{t+1} ,  v_{t,\min} \rangle  \\
          \langle a^-_{t+1} ,  v_{t,\max} \rangle
    \end{pmatrix},
\end{align}
and define the following quantities,
    \begin{align}
        x_t &:= \langle a^+_{t+1}, v_{t,\min} \rangle^2 + \langle a^-_{t+1} , v_{t,\min} \rangle^2, \\
        y_t &:= \langle a^+_{t+1} , v_{t,\max} \rangle^2 + \langle a^-_{t+1} ,  v_{t,\max} \rangle^2 , \\
        z_t &:= \langle a^+_{t+1} , v_{t,\min} \rangle  \langle a^+_{t+1} , v_{t,\max} \rangle + \langle a^-_{t+1} , v_{t,\min} \rangle  \langle a^-_{t+1} , v_{t,\max} \rangle  . 
    \end{align}
Using that $a^+_{t+1},a^-_{t+1}\in\mathbb{S}^1$  we get the following relation
\begin{align}\label{eq_relationxy}
    y_t = 2 - x_t .
\end{align}
Thus the perturbation at time step $t+1$ can be written as
\begin{align}
a^+_{t+1}  (a^+_{t+1})^\mathsf{T}  +  a^-_{t+1}  (a^-_{t+1})^\mathsf{T}  = \begin{pmatrix}
         x_t & z_t \\
         z_t & y_t
     \end{pmatrix},
\end{align}     
and $V_{t+1}$ as
\begin{align}
    V_{t+1} = \begin{pmatrix}
        \lambda_{t,\min} + \omega ( V_t ) x_t & \omega ( V_t )z_t \\
        \omega ( V_t )z_t & \lambda_{t,\max} + \omega ( V_t )(2-x_t)
    \end{pmatrix}.
\end{align}
In order to analyze the eigenvalue of $V_{t+1}$ we have to control the overlap $\langle v_{t,\min} , c_t \rangle$, so we define the following variable
    \begin{align}
        \alpha_t : = \langle v_{t,\min} , c_t \rangle \in [-1,1],
    \end{align}
and using that
\begin{align}
    \langle v_{t,\min} , a^+_{t+1} \rangle = \frac{\alpha_t + \frac{1}{\sqrt{\lmin}}}{\sqrt{1 +\frac{2}{\sqrt{\lmin}}\alpha_t + \frac{1}{\lmin}}}, \quad \langle v_{t,\min} , a^-_{t+1} \rangle = \frac{\alpha_t - \frac{1}{\sqrt{\lmin}}}{\sqrt{1-\frac{2}{\sqrt{\lmin}}\alpha_t + \frac{1}{\lmin}}}
\end{align}
we can express $x_t,z_t$ in terms of $\alpha_t$ as
\begin{align}
    x_t(\alpha_t) &= \frac{\big( \alpha_t + \frac{1}{\sqrt{\lmin}}\big)^2}{1+\frac{2}{\sqrt{\lmin}}\alpha_t + \frac{1}{\lmin}} + \frac{\big( \alpha_t - \frac{1}{\sqrt{\lmin}}\big)^2}{1-\frac{2}{\sqrt{\lmin}}\alpha_t + \frac{1}{\lmin}}, \\
    z_t(\alpha_t) &= \left( \frac{\alpha_t + \frac{1}{\sqrt{\lmin}}}{1+\frac{2}{\sqrt{\lmin}}\alpha_t + \frac{1}{\lmin}} + \frac{ \alpha_t - \frac{1}{\sqrt{\lmin}}}{1-\frac{2}{\sqrt{\lmin}}\alpha_t + \frac{1}{\lmin}}\right)\sqrt{1-\alpha_t^2}.
\end{align}
Then we can directly compute the minimum eigenvalue of $V_{t+1}$ as
\begin{align}\label{eq:exact_minimumeig}
    \lambda_{\min,t+1}(\alpha_t) =& \frac{\lambda_{\max,t}+\lmin}{2} + \omega ( V_t ) \\ & - \frac{1}{2}\sqrt{(\lambda_{\max,t} - \lambda_{\min,t}+2\omega ( V_t )(1-x_t(\alpha_t)) )^2 + 4\omega^2 ( V_t ) z_t^2(\alpha_t)}.
\end{align}
Now we define the difference of eigenvalues at time step $t$,
\begin{align}
    \phi_t := \lambda_{\max,t} - \lambda_{\min,t},
\end{align}
and analyze two different regimes for the induction to hold at time step $t+1$.

%\begin{center}
%\underline{\textbf{Case 1}: $\phi_t \geq 2\omega ( V_t )$.}
%\end{center}

\vspace{2mm}
{\textbf{Case 1}: $\phi_t \geq 2\omega ( V_t )$}
\vspace{2mm}

For this case we want to justify that $\lambda_{\min,t+1}(\alpha_t) \geq \lambda_{\min,t+1}(0)$ for $\alpha_t\in[-1,1]$, and later prove the induction using this lower bound. Defining the following
\begin{align}
    f(\alpha_t) &:= (\phi_t +2\omega ( V_t )(1-x(\alpha_t)) )^2 + 4\omega^2 ( V_t )z^2(\alpha_t) \\
    & = \phi^2_t + 4\omega ( V_t )\big( \phi_t ( 1 - x(\alpha_t)) + \omega ( V_t )(1-x(\alpha_t))^2 + \omega ( V_t )z^2 ( \alpha_t ) \big),
\end{align}
and comparing with the exact minimum eigenvalue~\eqref{eq:exact_minimumeig} we see that it suffices to prove that $f(\alpha_t)$ achieves a maximum at $\alpha_t=0$ in the range $\alpha_t\in [-1,1]$ in order to have $\lambda_{\min,t+1}(\alpha_t) \geq \lambda_{\min,t+1}(0)$ for all $\alpha_t\in[-1,1]$. A direct computation shows that
\begin{align}
    &f(\alpha_t ) = \phi_t^2 + 4\omega ( V_t )g(\alpha_t), \\
    &g(\alpha_t) :=\frac{2\big(\lmin-1\big)\lmin\phi_t \alpha_t^2 +(1-\lmin^2)\phi_t - (1-\lmin )^2\omega ( V_t )}{4\lmin \alpha_t^2 - \big( 1 + \lmin \big)^2 }.
\end{align}
We have that $g(\alpha_t)$ is of the form $g(x) = \frac{ax^2+b}{cx^2+d}$ which has an unique maximum at $x=0$ if $ad-bc < 0$. Then identifying the coefficients we have to check that
\begin{align}
  p:=  2(1-\lmin)\lmin (1+\lmin)^2 \phi_t - 4\lmin ( 1-\lmin^2)\phi_t + 4 \lmin (1-\lmin )^2 \omega ( V_t )< 0 .
\end{align}
Using that $\lambda_{\min,t}\geq 2$ and $\omega ( V_t ) \leq \frac{\phi_t}{2}$ we can bound the above as
\begin{align}
    p \leq \phi_t \left( 2(1-\lmin)\lmin (1+\lmin)^2  - 4\lmin ( 1-\lmin^2) + 2 \lmin (1-\lmin )^2 \right).
\end{align}
Finally summing all the terms we get,
\begin{align}
    p \leq -2\phi_t \left(\lambda_{\min,t}^2(1-\lambda_{\min,t})^2 \right) \leq 0 ,
\end{align}
where the inequality follows from $\phi_t \geq 0$. Thus, we conclude that 
\begin{align}\label{eq:minimumeig_inequality}
\lambda_{\min,t+1}(\alpha_t) \geq \lambda_{\min,t+1}(0).
\end{align}
Using that 
\begin{align}
x_t (0 ) = \frac{2}{1+\lambda_{\min,t}} , \quad z_t(0) = 0 ,
\end{align}
and $\phi_t \geq 2\omega (V_t ) $ we have
\begin{align}\label{eq:lambdamin0}
    \lambda_{\min,t+1}( 0 ) = \lambda_{\min,t} + \frac{2\omega ( V_t )}{1+\lambda_{\min,t}}.
\end{align}
Thus, using~\eqref{eq:minimumeig_inequality}\eqref{eq:lambdamin0} and 
\begin{align}
\Tr ( V_{t+1} ) = \lambda_{\min,t+1}(\alpha_t) + \lambda_{\max,t+1}(\alpha_t) = \lambda_{\max,t} + \lambda_{\min,t} + 2\omega ( V_t ),
\end{align}
we can upper bound the maximum eigenvalue as
\begin{align}\label{eq:maximumeig_bound}
    \lambda_{\max,t+1}(\alpha_t) &= \lambda_{\max,t} + \lambda_{\min ,t} - \lambda_{\min,t+1}(\alpha_t) + 2\omega ( V_t ) \\
    &\leq \lambda_{\max,t} + \lambda_{\min ,t} - \lambda_{\min,t+1}(0) + 2\omega ( V_t ) \\
     &= \lambda_{\max,t}+ 2\omega ( V_t )\frac{\lambda_{\min,t}}{1+\lambda_{\min,t}}.
\end{align}
Finally in order to check the induction step at $t+1$, $\lambda_{\min,t+1} \geq  \sqrt{2\lambda_{\max,t+1}}$ we can use the bounds~\eqref{eq:minimumeig_inequality}\eqref{eq:maximumeig_bound} and it suffices to check
\begin{align}
    \lambda_{\min,t} + \frac{2\omega ( V_t )}{1+\lambda_{\min,t}} \geq  \sqrt{2\left(\lambda_{\max,t} + 2\omega ( V_t )\frac{\lambda_{\min,t}}{1+\lambda_{\min,t}}\right)}.
\end{align}
Squaring both sides and rearranging we obtain
\begin{align}
    \underbrace{\lambda^2_{\min,t} - 2\lambda_{\max,t}}_{(i)} + \underbrace{\frac{4\omega^2 ( V_t )}{(1+\lambda_{\min,t})^2}}_{(ii)} \geq 0 ,
\end{align}
where $(i)$ is positive because we assume $\lambda_{\min,t} \geq \sqrt{2\lambda_{\max,t}}$ to hold at time step $t$ and  $(ii)$ is always positive. This concludes the case $\phi_t \geq 2 \omega ( V_t )$ induction.

%{\centering{\underline{\textbf{Case 2}: $\phi_t \leq 2\omega (V_t ) $}}\par} 
\vspace{2mm}
{\textbf{Case 2}: $\phi_t \leq 2\omega (V_t ) $}
\vspace{2mm}

From the definition of $\lambda_0$ we have
\begin{align}
    \lambda_{\min,t} &\geq \lambda_{0} \geq 4\sqrt{2}C+2.
    \end{align}
Multiplying both sides by $\lambda_{\min,t}$,    
    \begin{align}
    \lambda_{\min,t}^2 &\geq 4\sqrt{2}C\lambda_{\min,t}+2\lambda_{\min,t} \\
    &\geq 2\left( 4C\sqrt{\lambda_{\max,t}} + \lambda_{\min,t} \right) \hspace{20mm} \\
    &\geq 2\left( \lambda_{\min,t} + 4\omega (V_t ) \right) \\
    &\geq 2 \left( \lambda_{\min,t} +\phi_t+ 2\omega (V_t ) \right) \\
    & = 2\left( \lambda_{\max,t} + 2\omega (V_t ) \right) ,
\end{align}
where in the second inequality we used the induction hypothesis $\lambda_{\min,t}\geq \sqrt{2\lambda_{\max,t}}$, the third inequality $\omega(V_t) \leq C\sqrt{\lambda_{\max,t}}$, the fourth inequality $\phi_t\leq 2\omega(V_t )$ and the last equality the definition $\phi_t = \lambda_{\max,t} - \lambda_{\min,t}$.
Thus, we have
\begin{align}
    \lambda_{\min,t} \geq \sqrt{2(\lambda_{\max,t}+2\omega (V_t ))},
\end{align}
and the inequality at time step $t+1$, $\lambda_{\min,t} \geq \sqrt{2\lambda_{\max,t+1}}$ follows from the bounds
\begin{align}
    \lambda_{\min,t+1} &\geq \lambda_{\min,t}, \\
    \lambda_{\max,t+1} &\leq \lambda_{\max,t} + 2\omega (V_t ),
\end{align}
where we used~\eqref{eq:pertboundsAB} with $\lambda_{\min}(a_ta_t^\mathsf{T}) = \lambda_{\min}(b_tb_t^\mathsf{T}) = 0 $ and $\lambda_{\max}(a_ta_t^\mathsf{T}) = \lambda_{\max}(b_tb_t^\mathsf{T}) = 1$.
\end{proof}

\subsection{\textsf{LinUCB} on the circle without elliptical potential lemma}

Before presenting the full algorithm, we illustrate how the previously described action selection rule works and how it can already achieve sublinear regret. We focus on the circle example where the action set is $\mathcal{A} = \mathbb{S}^1$ and the unknown parameter is $\theta \in \mathbb{S}^1$, without making further assumptions on the noise model beyond a bounded subgaussian parameter. In Theorem~\ref{th:circle_eigenvalues} we will set the weights $\omega ( X) = 1 $. Thus, we will will not weight the design matrix and work with the usual least squares estimator~\eqref{eq:lse}. The algorithm operates in batches of 2 actions, with a total of $\tilde{T}$ batches, leading to a time horizon of $T = 2\tilde{T}$. At each batch $\tilde{t} \in [\tilde{T}]$, the algorithm selects the actions $A^+_{\tilde{t}}$ and $A^-_{\tilde{t}}$, defined as follows:
\begin{align}\label{eq:actions_circle}
    A^\pm_{\tilde{t}} := \frac{\tilde{A}^\pm_{\tilde{t}}}{\| \tilde{A}^\pm_{\tilde{t}} \|},   \quad \tilde{A}^\pm_{\tilde{t}} := \frac{\hat{\theta}_{\tilde{t}-1}}{\| \hat{\theta}_{\tilde{t}-1} \|_2} \pm \frac{1}{\sqrt{\lambda_{\min}(V_{\tilde{t}
    -1})}} v_{\tilde{t}-1}
\end{align}
where $v_{\tilde{t}-1}$ is the normalized eigenvector of minimum eigenvalue $\lambda_{\min}(V_{\tilde{t}-1})$ of the design matrix
    \begin{align}
        V_{\tilde{t}} = \mathbb{I} + \sum_{s=1}^{\tilde{t}} \left(A^+_{s}(A^+_{s})^{\mathsf{T}} + A^-_{s}(A^-_s )^\mathsf{T}   \right),
    \end{align}
and the least squares estimator is
\begin{align}
   \hat{\theta}_{\tilde{t}} =    V^{-1}_{\tilde{t}}\sum_{s=1}^{\tilde{t}} \left(X_s^+ A_s^+ + X_s^- A_s^- \right),
\end{align}
where $X_s^\pm$ are the rewards sampled for actions $ A^\pm_{s}$. The pseudocode can be found in Algorithm~\ref{alg:linucb_circle}. In order to prove the regret bound we are going to use the confidence region of \textsf{LinUCB} stated in Theorem~\ref{th:confidence_region}, which in our case will have the form
\begin{align}\label{eq:confidence_region_circle}
    \mathcal{C}_{\tilde{t}}  = \lbrace \theta^{'}\in\mathbb{R}^2 : \| \theta' - \hat{\theta}_{\tilde{t}} \|^2_{V_{\tilde{t}}} \leq \beta_{\tilde{t},\delta} \rbrace , 
\end{align}
with 
\begin{align}\label{eq:beta_circle}
    \beta_{\tilde{t},\delta} = \left(1+\sqrt{2\log\frac{1}{\delta} + 2 \log\left(1+\tilde{t} \right)} \right)^2 .
\end{align}

\begin{algorithm}
	\caption{\textsf{LinUCB} on $\mathbb{S}^1$} 
	\label{alg:linucb_circle}
    \begin{algorithmic}[1]
        
       \State Set initial design matrix $V_0 \gets \mathbb{I}_{2\times2}$ 
        
       \State Choose initial estimator $\hat{\theta}_0\in\mathbb{S}^1$ for $\theta$ at random 
        
        \For{$\tilde{t}=1,2,\cdots$}

              \State  Select actions $A_{\tilde{t}}^+$ and $ A_{\tilde{t}}^- $  according to Eq.~\eqref{eq:actions_circle}
              \State  Receive associated rewards $X^+_{\tilde{t}}$ and $X^-_{\tilde{t}}$
             
            \vspace{1mm}
            \textit{Update design matrix and LSE}
            \vspace{1mm}
            
           \State $  V_{\tilde{t}} \gets   V_{\tilde{t}-1}  +   A^+_{\tilde{t}}(A^+_{\tilde{t}})^{\mathsf{T}} + A^-_{\tilde{t}}(A^-_{\tilde{t}} )^\mathsf{T} $
            
          \State  $\widetilde{\theta}_t^\text{w} \gets V^{-1}_{\tilde{t}}\sum_{s=1}^{\tilde{t}} \left(X_s^+ A_s^+ + X_s^- A_s^- \right) $
          \EndFor
        \end{algorithmic}
\end{algorithm}

Before proving the regret scaling we state a Lemma that will quantify how far are the actions from the environment.

\begin{lemma}\label{lem:dist_action_centre}
    Given two normalized vectors $c,v\in\mathbb{S}^{d-1}$, a positive constant $\lambda > 1$ and the following vectors
    \begin{align}
        \widetilde{a}^{\pm} = c \pm \frac{1}{\sqrt{\lambda}}v, \quad {a}^\pm = \frac{\widetilde{a}^\pm}{\| \widetilde{a}^\pm \|_2}.
    \end{align}
    Then we have 
    \begin{align}
        \|{a}^\pm - c\|_2^2 \leq \frac{2}{\lambda}.
    \end{align}
\end{lemma}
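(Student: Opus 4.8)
The plan is to reduce the Euclidean-distance bound to an inner-product bound and then to a one-dimensional optimization over the overlap $\langle c, v\rangle$. Since $a^\pm$ and $c$ are both unit vectors, the polarization identity gives $\|a^\pm - c\|_2^2 = 2 - 2\langle a^\pm, c\rangle = 2\left(1 - \langle a^\pm, c\rangle\right)$, so the claim $\|a^\pm - c\|_2^2 \le 2/\lambda$ is equivalent to $\langle a^\pm, c\rangle \ge 1 - 1/\lambda$. This is the quantity I would control.

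Next I would set $u := 1/\sqrt{\lambda} \in (0,1)$ and $\beta := \langle c, v\rangle \in [-1,1]$. Replacing $v$ by $-v$ interchanges the roles of $a^+$ and $a^-$, so it suffices to treat the $+$ sign with $\beta$ ranging over the full interval $[-1,1]$. A direct expansion gives $\langle \widetilde a^+, c\rangle = 1 + u\beta$ and $\|\widetilde a^+\|_2^2 = 1 + 2u\beta + u^2$, whence
\begin{align}
\langle a^+, c\rangle = \frac{1 + u\beta}{\sqrt{1 + 2u\beta + u^2}} =: h(\beta).
\end{align}

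The heart of the argument is to minimize $h$ over $\beta \in [-1,1]$. Differentiating, one finds $h'(\beta) = u^2 (1 + 2u\beta + u^2)^{-3/2}(\beta + u)$, so $h$ is decreasing for $\beta < -u$ and increasing for $\beta > -u$; the unique minimizer is $\beta = -u$, which lies in $[-1,1]$ because $u < 1$. Evaluating, $h(-u) = (1-u^2)/\sqrt{1-u^2} = \sqrt{1-u^2}$. Since $1 - u^2 \in (0,1)$ and $\sqrt{s} \ge s$ for $s \in [0,1]$, I obtain $\langle a^+, c\rangle \ge \sqrt{1-u^2} \ge 1 - u^2 = 1 - 1/\lambda$, which is exactly the required inner-product bound, and the distance bound follows.

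I expect the only delicate point to be the optimization step: one must check that the stationary point $\beta = -u$ is a genuine interior minimum and compare it against the endpoint values, rather than assuming the worst case is $v \perp c$. A conceptually cleaner alternative that avoids calculus is purely geometric: the tip $\widetilde a^\pm = c \pm uv$ lies on the sphere of radius $u$ centered at the unit vector $c$, so the angle $\theta$ between $a^\pm$ and $c$ cannot exceed the half-angle of the cone from the origin tangent to that sphere, for which $\sin\theta_{\max} = u/\|c\|_2 = u$; hence $\langle a^\pm, c\rangle = \cos\theta \ge \sqrt{1-u^2}$, recovering the same bound. I would note in passing that the naive ``normalization contracts distances toward $c$'' shortcut does not apply directly here, since $\widetilde a^\pm$ may have norm below $1$ (for instance when $v \approx \mp c$), so some version of the above overlap analysis is genuinely needed.
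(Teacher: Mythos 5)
Your proof is correct and follows essentially the same route as the paper's: both reduce the claim to a one-variable optimization over the overlap $\langle c,v\rangle$, locate the extremum at $\pm 1/\sqrt{\lambda}$ where the worst-case inner product equals $\sqrt{1-1/\lambda}$, and conclude via $\sqrt{1-1/\lambda}\geq 1-1/\lambda$. The tangent-cone observation you add at the end is a pleasant calculus-free alternative, but the core argument matches the paper's.
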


\begin{proof}
We are going to give the proof for ${a}^-$. The one for ${a}^+$ follows from an identical calculation.

First, we can relate the distance to the inner product of the vectors as
    \begin{align}\label{eq:acdistance}
        \|{a}^- - c\|_2^2 = \langle {a}^- - c , {a}^- - c \rangle = 2 - 2\langle {a}^-, c \rangle.
    \end{align}
    Using the normalization factor is
    \begin{align}
        \| \widetilde{a}^- \|_2^2 =  1 - \frac{2}{\sqrt{\lambda}}\langle c, v \rangle + \frac{1}{\lambda},
    \end{align}
    then
    \begin{align}
      \langle {a}^-, c \rangle =   \frac{1-\frac{1}{\sqrt{\lambda}}\alpha}{\sqrt{1+\frac{1}{\lambda}-\frac{2}{\sqrt{\lambda}}\alpha}}, \quad \text{where} \quad \alpha: = \langle c ,v \rangle \in [-1,1].
    \end{align}
    In order to study the behavior of~\eqref{eq:acdistance} in terms of the overlap $\langle c , v \rangle$ we define
    \begin{align}\label{eq:def_distance_lambda}
        f(\alpha,\lambda) := 2 \left( 1 - \frac{1-\frac{1}{\sqrt{\lambda}}\alpha}{\sqrt{1+\frac{1}{\lambda}-\frac{2}{\sqrt{\lambda}}\alpha}} \right) 
    \end{align}
    and we are going to check the maximum in the range $\alpha\in [-1,1]$. Note that $f(\alpha,\lambda) =  \|{a}^- - c\|_2^2$.
    Taking the derivative respect to $\alpha$,
    \begin{align}
        \frac{\partial f(\alpha,\lambda)}{\partial \alpha} = 2\frac{\frac{1}{\sqrt{\lambda}}\sqrt{1+\frac{1}{\lambda}-\frac{2}{\sqrt{\lambda}}\alpha} - \left( 1 - \frac{1}{\sqrt\lambda}\alpha \right)\frac{1}{\sqrt{\lambda}\sqrt{1+\frac{1}{\lambda}-\frac{2}{\sqrt{\lambda}}\alpha}}}{1+\frac{1}{\lambda}-\frac{2}{\sqrt{\lambda}}\alpha}
    \end{align}
    Then we can find the extremal points of $f(\alpha)$ as
    \begin{align}
        \frac{\partial f(\alpha,\lambda)}{\partial \alpha} &= 0 \Leftrightarrow 1+\frac{1}{\lambda}-\frac{2}{\sqrt{\lambda}}\alpha = 1 - \frac{1}{\sqrt\lambda}\alpha  \\
        &\Leftrightarrow \alpha = \frac{1}{\sqrt{\lambda}}
    \end{align}
    Using that $\lambda > 1$ we have the following inequalites
    \begin{align}
        \frac{\partial f(\alpha,\lambda)}{\partial \alpha} \bigg\vert_{\alpha=1} = -\frac{2}{\lambda\left(1-\frac{1}{\sqrt\lambda}\right)^2} < 0, \quad \frac{\partial f(\alpha,\lambda)}{\partial \alpha}\bigg\vert_{\alpha=0} = \frac{2}{\sqrt{\lambda}\sqrt{1+\frac{1}{\lambda}}}\left( 1 - \frac{1}{1+\frac{1}{\lambda}} \right) > 0.
    \end{align}
    Thus, $\alpha = \frac{1}{\sqrt{\lambda}}$~\eqref{eq:def_distance_lambda} is a maximum and it is achieved at
    \begin{align}
        f\left( \frac{1}{\sqrt{\lambda}} ,\lambda \right) = 2\left( 1- \sqrt{1-\frac{1}{\lambda}} \right).
    \end{align}
    Using the original definition of $f(\alpha,\lambda)$ we can conclude that 
    \begin{align}
         \|{a}^- - c\|_2^2 \leq 2\left( 1- \sqrt{1-\frac{1}{\lambda}} \right) \leq \frac{2}{\lambda} ,
    \end{align}
    where the last inequality follows from $\lambda > 1$, and $1-x \leq \sqrt{1-x}$ for $0\leq x < 1$.
\end{proof}

This above Lemma actually holds for any dimension and we will also apply it when we study higher dimensional cases. We are now ready to prove a Theorem showing that selecting actions by projecting the extremal points of the principal axis of $\mathcal{C}_{\ttilde}$ onto the circle is enough to achieve the same regret bound as $\textsf{LinUCB}$. The initial part of the proof will also be useful for later Theorems when we upper bound the instantaneous regret in the full setting with vanishing noise.

\begin{theorem}\label{th:circle_regret}
    Let $T = 2\tilde{T}$ for $\tilde{T}\in\mathbb{N}$. Then if we apply Algorithm~\ref{alg:linucb_circle} to a linear bandit with action set $\mathcal{A} = \mathbb{S}^1$ and reward model $X_t = \langle \theta , A_t \rangle + \epsilon_t$ where $\epsilon_1$ is 1-subgaussian we have that with probability at least $1-\delta$
    \begin{align}
        R_T ( \mathbb{S}^1 , \theta , \pi ) \leq \frac{9\sqrt{2}}{4} \sqrt{T}\left(1+\sqrt{2\log\frac{1}{\delta} + 2 \log\left(1+2T \right)} \right)^2 .
    \end{align}
\end{theorem}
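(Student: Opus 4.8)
The plan is to condition on the high-probability event that the true parameter always lies in the confidence ellipsoid, decompose the regret over the individual rounds using the circle geometry, bound each instantaneous regret by the reciprocal of the minimum eigenvalue of the design matrix, and finally sum using the eigenvalue growth guaranteed by Theorem~\ref{th:circle_eigenvalues}. Concretely, I would first invoke Theorem~\ref{th:confidence_region} with $d=2$, $\lambda=1$, $L=1$ and subgaussian parameter $\eta=1$: since after $\tilde t$ batches the learner has played $2\tilde t$ actions on $\mathbb{S}^1$, the resulting confidence radius is exactly $\beta_{\tilde t,\delta}$ as in~\eqref{eq:beta_circle}, and the event $G:=\{\theta\in\mathcal{C}_{\tilde t}\ \forall \tilde t\}$ holds with probability at least $1-\delta$. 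All subsequent estimates are derived on $G$.

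Using the circle identity~\eqref{eq:distance_circle}, the total regret splits over batches as $R_T=\sum_{\tilde t=1}^{\tilde T}\tfrac12\big(\|\theta-A^+_{\tilde t}\|_2^2+\|\theta-A^-_{\tilde t}\|_2^2\big)$, so it suffices to bound one instantaneous term. I would route through the normalized estimator $c=\hat\theta_{\tilde t-1}/\|\hat\theta_{\tilde t-1}\|_2$: the inclusion $\theta\in\mathcal{C}_{\tilde t-1}$ together with $\lambda_{\min}(V_{\tilde t-1})\mathbb{I}\le V_{\tilde t-1}$ gives $\|\theta-\hat\theta_{\tilde t-1}\|_2^2\le \beta_{\tilde t-1,\delta}/\lambda_{\min}(V_{\tilde t-1})$; the reverse triangle inequality yields the contraction $\|\theta-c\|_2\le 2\|\theta-\hat\theta_{\tilde t-1}\|_2$ (since $|\,\|\hat\theta_{\tilde t-1}\|_2-1\,|\le\|\theta-\hat\theta_{\tilde t-1}\|_2$); and Lemma~\ref{lem:dist_action_centre}, applied with this $c$ and $\lambda=\lambda_{\min}(V_{\tilde t-1})$, controls the projection step via $\|A^\pm_{\tilde t}-c\|_2^2\le 2/\lambda_{\min}(V_{\tilde t-1})$. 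Combining these by the triangle inequality gives $\tfrac12\|\theta-A^\pm_{\tilde t}\|_2^2 = O\!\big(\beta_{\tilde t-1,\delta}/\lambda_{\min}(V_{\tilde t-1})\big)$.

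The decisive input is the eigenvalue growth. Applying Theorem~\ref{th:circle_eigenvalues} with the constant weights $\omega\equiv 1$ — legitimate since every $V_{\tilde t}\ge\mathbb{I}$ forces $\lambda_{\max}(V_{\tilde t})\ge 1$, so $\omega(V_{\tilde t})=1\le\sqrt{\lambda_{\max}(V_{\tilde t})}$ (i.e.\ $C=1$) — yields $\lambda_{\min}(V_{\tilde t})\ge\sqrt{2\lambda_{\max}(V_{\tilde t})}$. Because each batch adds two rank-one terms of unit trace, $\Tr(V_{\tilde t})=2(1+\tilde t)$, and since $\lambda_{\max}\ge\tfrac12\Tr$ in dimension two we get $\lambda_{\min}(V_{\tilde t})\ge\sqrt{2(1+\tilde t)}=\Omega(\sqrt{\tilde t})$. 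This is precisely the step where the \emph{elliptical potential lemma is replaced}: rather than a Cauchy--Schwarz over $\sum_t\|A_t\|_{V_{t-1}^{-1}}^2$, I control $\lambda_{\min}$ directly. Substituting into the per-batch bound, factoring out the monotone $\beta_{\tilde T,\delta}$, and using $\sum_{\tilde t=1}^{\tilde T}\tilde t^{-1/2}\le 2\sqrt{\tilde T}$ with $\tilde T=T/2$ gives $R_T=O(\sqrt{T}\,\beta_{\tilde T,\delta})$; careful bookkeeping of the constants in the triangle-inequality and summation steps produces the stated prefactor $\tfrac{9\sqrt2}{4}$ and the argument $\log(1+2T)$ (from $1+\tilde T\le 1+2T$).

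I expect the main obstacle to be the honest translation of the \emph{ratio} bound of Theorem~\ref{th:circle_eigenvalues} into the \emph{absolute} growth rate $\lambda_{\min}(V_{\tilde t})=\Omega(\sqrt{\tilde t})$ while respecting its hypotheses — in particular reconciling the theorem's initialization requirement $\lambda_0\ge 4\sqrt2\,C+2$ with the algorithm's $V_0=\mathbb{I}$, which may require either initializing with a larger $\lambda_0$ or absorbing the first few rounds (each contributing $O(1)$) into the constant. The remaining estimates are elementary; the conceptual content lives entirely in the eigenvalue control, which is exactly what will let the same technique break the $\sqrt T$ barrier once the variance-adapted weights are reintroduced.
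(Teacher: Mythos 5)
Your proposal follows essentially the same route as the paper's proof: the same batch decomposition of the regret via $\tfrac12\|\theta-A^\pm_{\tilde t}\|_2^2$, the same three-term triangle inequality through the normalized estimator (with Lemma~\ref{lem:dist_action_centre} for the projection step), the same application of Theorem~\ref{th:circle_eigenvalues} with $\omega\equiv 1$ plus the trace identity $\Tr(V_{\tilde t})=2+2\tilde t$ to get $\lambda_{\min}(V_{\tilde t})=\Omega(\sqrt{\tilde t})$, and the same final summation. Your flagged concern about reconciling $V_0=\mathbb{I}$ with the hypothesis $\lambda_0\ge 4\sqrt2\,C+2$ of Theorem~\ref{th:circle_eigenvalues} is a legitimate observation — the paper applies that theorem without addressing it — and your proposed fixes (larger initialization or absorbing the first $O(1)$ batches into the constant) are the natural remedies.
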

\begin{proof}
    First we note that since $\theta , A_t\in\mathbb{S}^1$ we can express the regret as
    \begin{align}
        R_T ( \mathbb{S}^1 , \theta , \pi ) &= \frac{1}{2}\sum_{t=1}^T \| \theta - A_t \|_2^2 \\
        & =  \frac{1}{2}\sum_{\tilde{t}=1}^{\tilde{T}} \left( \| \theta - A^+_{\tilde{t}} \|_2^2 + \| \theta - A^-_{\tilde{t}} \|_2^2 \right) .
    \end{align}
    For now, we will assume that $\theta\in\mathcal{C}_{\tilde{t}}$ holds for all $\tilde{t}\in[\tilde{T}]$  where $\mathcal{C}_{\tilde{t}}$ has the form~\eqref{eq:confidence_region_circle}.

    Then from triangle inequality we have
\begin{align}\label{eq:theta_at_bound_circle}
    \| \theta - A^\pm_{\tilde{t}} \|_2 &\leq \| \theta - \hat{\theta}_{\tilde{t}-1} \|_2 + \| \hat{\theta}_{\tilde{t}-1}  - \frac{\hat{\theta}_{\tilde{t}-1}}{\| \hat{\theta}_{\tilde{t}-1} \|_2} \|_2 + \| \frac{\hat{\theta}_{\tilde{t}-1}}{\| \hat{\theta}_{\tilde{t}-1} \|_2} -A^\pm_{t} \|_2 \nonumber \\
    &\leq \underbrace{\sqrt{\frac{\beta_{\tilde{t}-1,\delta}}{\lambda_{\min}(V_{\tilde{t}-1})}}}_{(i)} + \underbrace{\sqrt{\frac{\beta_{\tilde{t}-1,\delta}}{\lambda_{\min}(V_{\tilde{t}-1})}}}_{(ii)} + \underbrace{\sqrt{\frac{2}{\lambda_{\min}(V_{\tilde{t}-1})}}}_{(iii)} \nonumber
    \\ &\leq 3\sqrt{\frac{\beta_{\tilde{t}-1,\delta}}{\lambda_{\min}(V_{\tilde{t}-1})}}, 
\end{align}
where for the above bounds we use:
\begin{itemize}
    \item $(i).$ Since $\theta\in\mathcal{C}_{t-1}$, by definition of $\mathcal{C}_{t-1}$~\eqref{eq:confidence_region_circle} we have $\| \theta -  \tilde{\theta}^{\text{w}}_{t-1} \|_{V_{t-1}} \leq \sqrt{\beta_{t-1,\delta}}$ and then the inequality follows from $\lambda_{\min}(V_{\tilde{t}-1})\mathbb{I}_{2\times 2}\leq V_{\tilde{t}-1}$.
    
    \item  $(ii).$ Since $\theta\in\mathcal{C}_{\tilde{t}-1}$ and $\theta\in\mathbb{S}^1$ this implies that $\mathbb{B}^2_r (\hat{\theta}_{\tilde{t}-1} )\cap \mathbb{S}^1$ is non-empty with $r = \sqrt{\frac{\beta_{\tilde{t}-1,\delta}}{\lambda_{\min}(V_{\tilde{t}-1})}}$ (the longest axis of the ellipsoid). Then using that $\mathcal{C}_{\tilde{t}-1}\subseteq \mathbb{B}^2_r (\hat{\theta}_{\tilde{t}-1} )$ and $\frac{\hat{\theta}_{\tilde{t}-1}}{\| \hat{\theta}_{\tilde{t}-1} \|_2} $ is the normalized vector of $\hat{\theta}_{\tilde{t}-1}$ which is the center of $\mathcal{C}_{\tilde{t}-1}$  we get $\frac{\hat{\theta}_{\tilde{t}-1}}{\| \hat{\theta}_{\tilde{t}-1} \|_2} \in\mathbb{B}^2_r (\hat{\theta}_{\tilde{t}-1} )$.
    
    \item For $(iii).$ We apply Lemma~\ref{lem:dist_action_centre} using the expression of $A^\pm_t$~\eqref{eq:actions_circle} with $c= \frac{\hat{\theta}_{\tilde{t}-1}}{\| \hat{\theta}_{\tilde{t}-1} \|_2}$ and $\lambda= \lambda_{\min}(V_{\tilde{t}-1})$.
\end{itemize}

We sketch this bounds in Figure~\ref{fig:scheme_regret}.

\begin{figure}\label{fig:regret_distances}
\centering
\begin{overpic}[percent,width=0.6\textwidth]{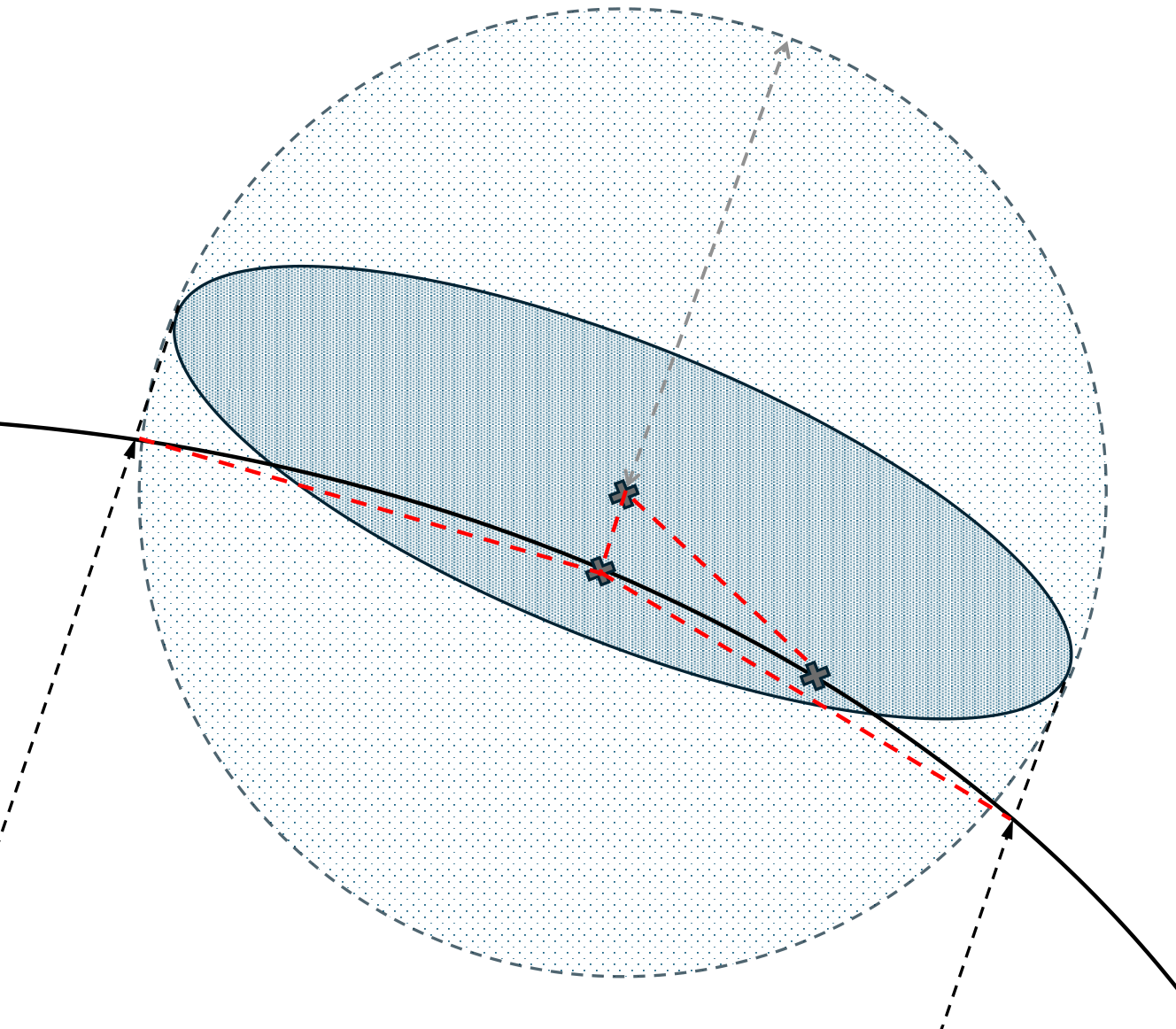}
\put(6,34){ $A^+_{\tilde{t}}$}
\put(78,12){ $A^-_{\tilde{t}}$}
\put(67,25){ $\theta$}
\put(54,46){ $\hat{\theta}_{\tilde{t}-1}$}
\put(46,32){ $\frac{\hat{{\theta}}_{\tilde{t}-1}}{\| {\hat{{\theta}}_{\tilde{t}-1} \|_2}}$ } 
\put(52,63){\rotatebox{70}{\small $r= \sqrt{\frac{\beta_{t-1,\delta}}{\lambda_{\min}(V_{\tilde{t}-1})}}$}}
\put(67,65){ $\mathbb{B}_r^2 (\hat{{\theta}}_{\tilde{t}-1}) $}
\put(35,55){ $\mathcal{C}_{\tilde{t}-1}$}
\end{overpic}
\caption{Sketch for the triangle inequality used to bound $ \| \theta - A^\pm_{\tilde{t}} \|_2 $ in~\eqref{eq:theta_at_bound_circle}. The red lines represent the distances $(i),(ii)$ and $(iii)$. Under the event $\theta\in\mathcal{C}_{\tilde{t}-1}$ we can use $\mathcal{C}_{\tilde{t}-1} \subseteq \mathbb{B}_r^2 (\hat{{\theta}}_{\tilde{t}-1}) $ with $r$ being the longest axis of the ellipsoid and bound all distances by the diameter $2r$.}
\label{fig:scheme_regret}
\end{figure}

\smallskip

To finalize the proof we need to study the eigenvalues of $V_{\tilde{t}}$. From the definition of $V_{\tilde{t}}$ we have
\begin{align}
    \Tr(V_{\tilde{t}} ) = 2 + 2\tilde{t} \Rightarrow \lambda_{\max}(V_{\tilde{t}} ) \geq \tilde{t},
\end{align}
where we used that $A^\pm_{\tilde{t}} \in\mathbb{S}^1$. Then using Theorem~\ref{th:circle_eigenvalues} we have
\begin{align}
    \lambda_{\min}(V_{\tilde{t}} ) \geq \sqrt{2\lambda_{\max}(V_{\tilde{t}})}\geq \sqrt{2\tilde{t}}
\end{align}
Thus combining the above bound with~\eqref{eq:theta_at_bound_circle} and $\beta_{\tilde{t},\delta}$ is an increasing sequence in $\tilde{t}$ leads to
\begin{align}
     R_T ( \mathbb{S}^1 , \theta , \pi ) & \leq \frac{9\beta_{\tilde{T},\delta}}{2} \sum_{\tilde{t}=1}^{\tilde{T}} \frac{1}{\sqrt{\tilde{t}}} \leq \frac{9\beta_{\tilde{T},\delta}}{2} \sqrt{\tilde{T}},
\end{align}
and the result follows using $T = 2 \tilde{T}$, and applying the probabilistic bound from Theorem~\ref{th:confidence_region}.

\end{proof}
\subsection{General case}

In the previous section, we saw how the action selection rule described in Section~\ref{sec:circle_eigenvalues} can be used to minimize the regret. However, so far this rule only works for $\mathbb{S}^1$. In this section, we generalize the same rule to work on any sphere $\mathbb{S}^{d-1}$. Instead of proving the ratio $\lambda_{\min}(V_t) = \Omega\left(\sqrt{\lambda_{\max}(V_t)}\right)$ by induction and directly computing the eigenvalues, we will use matrix perturbation bounds. Specifically, we will use a proof technique based on the following minimax characterization of eigenvalues for Hermitian matrices.

\begin{corollary}[ Corollary III.1.2 in~\cite{bhatia97} ]\label{cor:minimax_eigenvalues}
    Let $A \in \mathbb{C}^{d\times d}$ be a Hermitian matrix, then
    \begin{align}
        \lambda_{k} (A ) = \max_{\substack{\mathcal{M} \subset \mathbb{C}^d \\ \dim ( \mathcal{M} ) = d-k+1}} \min_{\substack{x\in\mathcal{M} \\ \| x\|_2 = 1}} \langle x , A x \rangle = \min_{\substack{\mathcal{M} \subset \mathbb{C}^d \\ \dim ( \mathcal{M} ) = k}} \max_{\substack{x\in\mathcal{M} \\ \| x\|_2 = 1}} \langle x , A x \rangle.
    \end{align}
    In particular, if $A \geq B$ then $\lambda_k (A) \geq \lambda_k (B) $.
\end{corollary}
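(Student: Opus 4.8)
The plan is to derive the two variational (Courant--Fischer) identities directly from the spectral decomposition, and then obtain the monotonicity statement as an immediate consequence of the variational formula. First I would fix the spectral decomposition $A = \sum_{i=1}^d \lambda_i(A)\, v_i v_i^\dagger$ with orthonormal eigenvectors $v_1,\ldots,v_d$ ordered so that $\lambda_1(A)\le\cdots\le\lambda_d(A)$, consistent with the non-decreasing convention fixed in the notation section. For any unit vector $x$ one then has the identity $\langle x, Ax\rangle = \sum_{i=1}^d \lambda_i(A)\,|\langle v_i, x\rangle|^2$, a convex combination of the eigenvalues with weights $|\langle v_i, x\rangle|^2$ summing to one. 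This single identity is the workhorse for both bounds.

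Next I would establish the min-max formula $\lambda_k(A) = \min_{\dim\mathcal{M}=k}\max_{x\in\mathcal{M},\,\|x\|_2=1}\langle x, Ax\rangle$ via two matching inequalities. For the upper bound, I take the specific subspace $\mathcal{M} = \operatorname{span}\{v_1,\ldots,v_k\}$: on it the weights are supported on indices $i\le k$, so $\langle x, Ax\rangle \le \lambda_k(A)$ for every unit $x\in\mathcal{M}$, giving $\min_{\mathcal{M}}\max \le \lambda_k(A)$. For the lower bound, given an arbitrary $\mathcal{M}$ with $\dim\mathcal{M}=k$, I would intersect it with $\mathcal{N} = \operatorname{span}\{v_k,\ldots,v_d\}$, which has dimension $d-k+1$; since $\dim\mathcal{M}+\dim\mathcal{N} = d+1 > d$, the intersection contains a unit vector $x$, and on $\mathcal{N}$ the identity above forces $\langle x, Ax\rangle \ge \lambda_k(A)$, hence $\max_{x\in\mathcal{M}}\langle x, Ax\rangle \ge \lambda_k(A)$ and $\min_{\mathcal{M}}\max \ge \lambda_k(A)$. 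Combining gives equality. The max-min formula then follows most economically by applying the proven identity to $-A$ and using $\lambda_k(-A) = -\lambda_{d-k+1}(A)$, which turns the outer minimum over $k$-dimensional subspaces into an outer maximum over $(d-k+1)$-dimensional subspaces.

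Finally, the monotonicity is the easy payoff: if $A\ge B$ then $\langle x, Ax\rangle \ge \langle x, Bx\rangle$ for every $x$, and taking the minimum over $x$ in a fixed subspace and then the maximum over all $(d-k+1)$-dimensional subspaces preserves the inequality, so the max-min formula yields $\lambda_k(A)\ge\lambda_k(B)$ at once. The only genuinely delicate point is the lower-bound step: it rests entirely on the dimension count $\dim\mathcal{M}+(d-k+1) > d$ guaranteeing a nontrivial intersection, which is precisely what makes the bound hold uniformly over all candidate subspaces $\mathcal{M}$; everything else is bookkeeping with the eigenvalue weights. Since the statement is classical and quoted from \cite{bhatia97}, I would keep the write-up short and highlight this intersection argument as the one nonroutine ingredient.
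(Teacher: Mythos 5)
Your proof is correct: the spectral-decomposition identity, the dimension-count intersection argument for the lower bound, the reduction of max-min to min-max via $-A$, and the monotonicity consequence are all the standard Courant--Fischer argument and all steps check out. The paper itself gives no proof — it simply cites Corollary III.1.2 of Bhatia — and your argument is essentially the one found in that reference, so there is nothing to reconcile.
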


The following Theorem is the generalization of Theorem~\ref{th:circle_eigenvalues}.

\begin{theorem}\label{th:main_eigenvalues}
    Let $d\geq 2$, $\lbrace c_t \rbrace_{t=0}^\infty \subset \mathbb{S}^{d-1}$ be a sequence of normalized vectors and $\omega: \textup{P}^d_+ \rightarrow \mathbb{R}_{\geq 0}$ a function such that 
    \begin{align}
        \omega (X) \leq C\sqrt{\lambda_{\max}(X)} \quad \forall X\in \textup{P}^d_+,
    \end{align}
     for some constant $C > 0$. Let $\lambda_0 \geq \max \big\lbrace 2,\sqrt{\frac{2}{3(d-1)}}2dC+\frac{2}{3(d-1)} \big\rbrace$, and define a sequence of matrices
     $\lbrace V_t \rbrace_{t=0}^\infty \subset \mathbb{R}^{d\times d}$ as
       \begin{align}\label{eq:vt_lemma}
         V_0 := \lambda_0\mathbb{I}_{d\times d}, \quad      V_{t+1} := V_t + \omega ( V_t ) \sum_{i=1}^{d-1}P_{t,i}, 
       \end{align}
       where 
       \begin{align}\label{eq:defP_a}
           P_{t,i} : = a^+_{t+1,i}(a^+_{t+1,i})^\mathsf{T}  +  a^-_{t+1,i} (a^-_{t+1,i})^\mathsf{T} , \quad
           a^\pm_{t+1,i} : = \frac{\tilde{a}^\pm_{t+1,i}}{\| \tilde{a}^\pm_{t+1,i}\|_2}, \quad \tilde{a}^\pm_{t+1,i} := c_t \pm \frac{1}{\sqrt{\lambda_{t,1}}} v_{t,i},
       \end{align}
       with $\lambda_{t,i} = \lambda_{i}(V_t)$ the eigenvalues of $V_t$ with corresponding normalized eigenvectors \\$v_{t,1},...,v_{t,d}\in\mathbb{S}^d$.
  Then we have
    \begin{align}\label{eq:eig_relation}
        \lambda_{\min}(V_t) \geq \sqrt{\frac{2}{3(d-1)}\lambda_{\max}(V_t)} \quad \text{for all}\quad t\geq 0.
    \end{align}
  
\end{theorem}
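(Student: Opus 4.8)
The plan is to prove \eqref{eq:eig_relation} by induction on $t$, keeping the two-regime structure of the planar case (Theorem~\ref{th:circle_eigenvalues}) but replacing the explicit $2\times 2$ diagonalization with the monotonicity statement of Corollary~\ref{cor:minimax_eigenvalues}. Write $\kappa := \frac{2}{3(d-1)}$ and abbreviate $m := \lambda_{\min}(V_t)$, $M := \lambda_{\max}(V_t)$, $w := \omega(V_t)$, so the inductive hypothesis reads $m^2 \geq \kappa M$. The base case $t=0$ is immediate since $\lambda_{\min}(V_0)=\lambda_{\max}(V_0)=\lambda_0 \geq \kappa$. For the maximal eigenvalue I would use subadditivity of $\lambda_{\max}$ together with $\lambda_{\max}(P_{t,i}) \leq \lambda_{\max}\big(a^+_{t+1,i}(a^+_{t+1,i})^{\mathsf{T}}\big) + \lambda_{\max}\big(a^-_{t+1,i}(a^-_{t+1,i})^{\mathsf{T}}\big) = 2$ (both vectors are unit), which gives the clean one-step bound
\begin{align}
\lambda_{\max}(V_{t+1}) \leq M + 2(d-1)w .
\end{align}

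The heart of the argument is a matching lower bound on $\lambda_{\min}(V_{t+1})$, for which I would establish the uniform (in $c_t$) operator inequality
\begin{align}
\sum_{i=1}^{d-1} P_{t,i} \geq \frac{c''}{m}\,\Pi_t, \qquad \Pi_t := \mathbb{I} - v_{t,d}v_{t,d}^{\mathsf{T}} = \sum_{i=1}^{d-1} v_{t,i}v_{t,i}^{\mathsf{T}},
\end{align}
with constant $c'' = 2/(1+1/\sqrt{m})^2 \geq 2/3$, where the last inequality uses $m \geq \lambda_0 \geq 2$. To prove this I would expand the quadratic form on an arbitrary $x$: with $s := \langle x, c_t\rangle$, $x_i := \langle x, v_{t,i}\rangle$, $\gamma_i := \langle c_t, v_{t,i}\rangle$ and squared normalizers $n_i^{\pm} := \|\tilde a^\pm_{t+1,i}\|_2^2 = 1 \pm 2\gamma_i/\sqrt{m} + 1/m$, each pair contributes $\langle x, a^+_{t+1,i}\rangle^2 + \langle x, a^-_{t+1,i}\rangle^2 = (s + x_i/\sqrt{m})^2/n_i^{+} + (s - x_i/\sqrt{m})^2/n_i^{-}$. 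Lower bounding $1/n_i^{\pm} \geq 1/(1+1/\sqrt{m})^2$ and discarding the nonnegative $(d-1)s^2$ contribution collapses the sum to $\frac{c''}{m}\sum_{i=1}^{d-1} x_i^2 = \frac{c''}{m}\|\Pi_t x\|_2^2$, which is the claim. Since $\Pi_t$ commutes with $V_t$, the lower bound $V_{t+1} \geq V_t + \frac{c''w}{m}\Pi_t$ is diagonal in the eigenbasis of $V_t$, so Corollary~\ref{cor:minimax_eigenvalues} yields $\lambda_{\min}(V_{t+1}) \geq \min\{\, m + c''w/m,\ M\,\}$.

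With these two one-step estimates the induction closes by splitting on which term realizes the last minimum. In the regime $M \geq m + c''w/m$ I would use $\lambda_{\min}(V_{t+1}) \geq m + c''w/m$; squaring and invoking $c'' \geq 2/3$ together with the identity $2(d-1)\kappa = \tfrac{4}{3}$ gives $\lambda_{\min}(V_{t+1})^2 \geq m^2 + \tfrac{4}{3}w \geq \kappa M + \kappa\,2(d-1)w \geq \kappa\,\lambda_{\max}(V_{t+1})$. In the complementary regime $M < m + c''w/m$ the spectral gap is below $w$, so $\lambda_{\max}(V_{t+1}) < m + 2dw$; here I would instead use $\lambda_{\min}(V_{t+1}) \geq m$ together with the defining inequality $\lambda_0 \geq \sqrt{\kappa}\,2dC + \kappa$ and $w \leq C\sqrt{M}$ to verify $m^2 \geq \kappa(m + 2dw) \geq \kappa\,\lambda_{\max}(V_{t+1})$ directly, mirroring Case~2 of Theorem~\ref{th:circle_eigenvalues}.

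I expect the main obstacle to be the operator inequality of the second paragraph. The naive route of bounding $P_{t,i}$ below by the rank-one parallelogram term $\tfrac12(a^+_{t+1,i}-a^-_{t+1,i})(a^+_{t+1,i}-a^-_{t+1,i})^{\mathsf{T}}$ is too lossy: it discards the $c_t$ contribution and vanishes on a direction $v_{t,j}$ when $c_t$ aligns with it, so it fails to dominate $\frac{c''}{m}\Pi_t$ uniformly. The correct bound must retain the full pair and exploit that the antisymmetric pairing injects a $\Theta(1/m)$ component into every $v_{t,i}$ with $i \leq d-1$ regardless of $c_t$, the worst case being $c_t = v_{t,d}$ (orthogonal to $\operatorname{range}\Pi_t$), where the inequality is tight. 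Forcing $c''$ to stay above $2/3$ is exactly what pins the requirement $\lambda_0 \geq 2$, while the dimensional factor $(d-1)$ in $\kappa$ is precisely what absorbs the $2(d-1)$ loss in the maximal eigenvalue; keeping these constants aligned across both regimes is the delicate bookkeeping.
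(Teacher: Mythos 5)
Your proposal is correct, and its core is identical to the paper's argument: the same operator lower bound $\sum_{i}P_{t,i}\geq \frac{2}{(1+\sqrt{\lambda_{t,1}})^{2}}\sum_{i=1}^{d-1}v_{t,i}v_{t,i}^{\mathsf{T}}$ obtained by cancelling the cross terms $\pm\frac{1}{\sqrt{\lambda_{t,1}}}(c_t v_{t,i}^{\mathsf{T}}+v_{t,i}c_t^{\mathsf{T}})$ in the $\pm$ pair and discarding the $c_tc_t^{\mathsf{T}}$ contribution (your constant $c''/m=2/(1+\sqrt{m})^{2}$ is exactly the paper's), the same appeal to Corollary~\ref{cor:minimax_eigenvalues}, and the same induction anchored by the two conditions on $\lambda_0$. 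Where you genuinely diverge is in the case analysis: the paper tracks the exact spectrum of $V_t+\frac{2w}{(1+\sqrt{\lambda_{t,1}})^{2}}\Pi_t$ (locating where $\lambda_{t,d}$ interleaves among the incremented eigenvalues, hence its Cases 1, 2.1 and 2.2) and uses the trace identity to get a refined upper bound on $\lambda_{\max}(V_{t+1})$, whereas you collapse everything to the crude subadditivity bound $\lambda_{\max}(V_{t+1})\leq M+2(d-1)w$ and the single observation $\lambda_{\min}(V_{t+1})\geq\min\{m+c''w/m,\,M\}$, closing the large-gap regime via the identity $2(d-1)\kappa=\tfrac{4}{3}$ together with $2c''\geq\tfrac{4}{3}$ (which is where $\lambda_0\geq 2$ enters). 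I checked this bookkeeping and it works: $(m+c''w/m)^2\geq m^2+\tfrac{4}{3}w\geq\kappa M+2(d-1)\kappa w\geq\kappa\lambda_{\max}(V_{t+1})$, and your small-gap regime reproduces the paper's Case 2.2 verbatim. The net effect is that your version eliminates the paper's intermediate Case 2.1 and all of the trace manipulation, at the cost of nothing — the constants in the theorem statement are exactly tight enough to absorb the cruder max-eigenvalue bound. This is a cleaner write-up of the same proof rather than a new one.
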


\begin{proof}
    We start giving an upper bound of the Euclidean norm of $\tilde{a}^\pm_{t+1,i} $~\eqref{eq:defP_a} with the following calculation
    \begin{align}
        \| \tilde{a}^\pm_{t+1,i} \|^2_2 &= 1 \pm \frac{2}{\sqrt{\lambda_{1,t}} }\langle c_t , v_{t,i} \rangle + \frac{1}{\lambda_{t,1}} \\
        & \leq 1 + \frac{2}{\sqrt{\lambda_{t,1}}} + \frac{1}{\lambda_{t,1}}  = \left( 1 + \frac{1}{\sqrt{\lambda_{t,1}}} \right)^2,
    \end{align}
    where we used that $c_t,v_{t,i}\in\mathbb{S}^d$. Thus using the definition of $P_{t,i}$~\eqref{eq:defP_a},
    \begin{align}
        P_{t,i} \geq \left( 1 + \frac{1}{\sqrt{\lambda_{t,1}}}  \right)^{-2} \big( \tilde{a}^+_{t+1,i}(\tilde{a}^+_{t+1,i})^\mathsf{T}  + \tilde{a}^-_{t+1,i} (\tilde{a}^-_{t+1,i})^\mathsf{T} \big) .
    \end{align}
    From the definition of $\tilde{a}^\pm_{t+1,i} $~\eqref{eq:defP_a} we have
    \begin{align}
    \tilde{a}^\pm_{t+1,i} (\tilde{a}^\pm_{t+1,i})^\mathsf{T} = c_t c_t^\mathsf{T} + \frac{1}{\lambda_{t,1}}v_{t,i} v^\mathsf{T}_{t,i} \pm \frac{1}{\sqrt{\lambda_{t,1}}} \left( c_t v_{t,i}^\mathsf{T} + v_{t,i}c_t^\mathsf{T} \right).
    \end{align}
    The above allow us to bound $P_{t,i}$ as 
    \begin{align}
        P_{t,i} &\geq 2 \left( 1 +\frac{1}{\sqrt{\lambda_{t,1}}} \right)^{-2}\left( c_t c^\mathsf{T}_t + \frac{1}{\lambda_{t,1}}v_{t,i} v^\mathsf{T}_{t,i} \right) \\
        &\geq \frac{2}{(1+\sqrt{\lambda_{t,1}})^2} v_{t,i}v^\mathsf{T}_{t,i},
    \end{align}
    where we used $c_t c^\mathsf{T}_t \geq 0$ and $\frac{1}{\lambda_{t,1}}\left( 1 +\frac{1}{\sqrt{\lambda_{t,1}}} \right)^{-2} = (1+\sqrt{\lambda_{t,i}})^{-2} $. Thus, from the abound bound and the definition of $V_t$~\eqref{eq:vt_lemma} we obtain the following matrix inequality
    \begin{align}\label{eq:vt+1_bound}
        V_{t+1} \geq V_t +  \frac{2 w(V_t)}{(1+\sqrt{\lambda_{t,1}})^2} \sum_{i=1}^{d-1} v_{t,i} v^\mathsf{T}_{t,i}.
    \end{align}

    We want to prove the result using induction. The case $t=0$ is immediately satisfied since $\lambda_{0,d} = \lambda_{0,1} \geq \frac{2}{3(d-1)}$.
    Now we will assume that 
    \begin{align}\label{eq:induction}
        \lambda_{t,1} \geq \sqrt{\frac{2}{3(d-1)}\lambda_{t,d}},
    \end{align}
    is satisfied and we want to prove the same inequality for $t+1$. We distinguish cases depending on the growth of the maximum eigenvalue of $V_t$.

    \begin{center}
\underline{\textbf{Case 1}: $\lambda_{t,d} \geq \lambda_{t,d-1} +  \frac{2 w(V_t)}{(1+\sqrt{\lambda_{t,1}})^2}$.}
\end{center}
%\textbf{Case 1:}  $\lambda_{t,d} \geq \lambda_{t,d-1} +  \frac{2 w(V_t)}{(1+\sqrt{\lambda_{t,1}})^2}$

Using the hypothesis $\lambda_{t,d} \geq \lambda_{t,d-1} +  \frac{2 w(V_t)}{(1+\sqrt{\lambda_{t,1}})^2}$, the fact that $V_t$ diagonalizes in the $\lbrace v_{t,i} \rbrace_{i=1}^d$ basis and the ordering  $\lambda_{t,1}\leq ....\leq \lambda_{t,d-1}\leq \lambda_{t,d}$ we have
\begin{align}
    \lambda_i \left( V_t +  \frac{2 w(V_t)}{(1+\sqrt{\lambda_{t,1}})^2} \sum_{i=1}^{d-1} v_{t,i} v^\mathsf{T}_{t,i} \right) &= \lambda_{t,i} + \frac{2 w(V_t)}{(1+\sqrt{\lambda_{t,1}})^2} \quad \text{for} \quad i= 1,...,d-1 \\
    \lambda_d \left(V_t +  \frac{2 w(V_t)}{(1+\sqrt{\lambda_{t,1}})^2} \sum_{i=1}^{d-1} v_{t,i} v^\mathsf{T}_{t,i} \right) &= \lambda_{t,d} .
\end{align}
Then combining with the mini-max principle for eigenvalues~\ref{cor:minimax_eigenvalues} and using that both sides of~\eqref{eq:vt+1_bound} are real and symmetric, we arrive at
\begin{align}\label{eq:lambdaminbound}
    \lambda_{i,t+1} \geq \lambda_{i,t} + \frac{2 w(V_t)}{(1+\sqrt{\lambda_{t,1}})^2} \quad \text{for} \quad i=1,...,d-1 .
\end{align}
From the expression for $V_{t+1}$~\eqref{eq:vt_lemma}, we deduce that
\begin{align}
    \Tr (V_{t+1}) = \big( \lambda_{t+1,d} + \sum_{i=1}^{d-1} \lambda_{t+1,i} \big) = \lambda_{t,d}+\sum_{i=1}^{d-1} \lambda_{t,i}+2(d-1)w(V_t).
\end{align}
Also from~\eqref{eq:lambdaminbound}
\begin{align}
    \Tr (V_{t+1} ) \geq \lambda_{t+1,d} + \sum_{i=1}^{d-1} \lambda_{t,i} + \frac{2(d-1) w(V_t)}{(1+\sqrt{\lambda_{t,1}})^2} . 
\end{align}
Combining the above we can bound the maximum eigenvalue as
\begin{align}\label{eq:lambda_upper}
    \lambda_{t+1,d} &\leq \lambda_{t,d}+\sum_{i=1}^{d-1} \lambda_{t,i}+2(d-1)w(V_t) - \left( \sum_{i=1}^{d-1} \lambda_{t,i} + \frac{2(d-1) w(V_t)}{(1+\sqrt{\lambda_{t,1}})^2} \right) \\
    &= \lambda_{d,t} + 2(d-1)w(V_t ) \frac{\lambda_{t,1}+2\sqrt{\lambda_{t,1}}}{(1+\sqrt{\lambda_{t,1}})^2}.
\end{align}
Recall that we want to check
\begin{align}
    \lambda_{t+1,1} \geq \sqrt{\frac{2}{3(d-1)}\lambda_{t+1,d}}.
\end{align}
Using~\eqref{eq:lambdaminbound} and~\eqref{eq:lambda_upper} we can square the above and see that it suffices to check
\begin{align}
    \left(\lambda_{1,t} + \frac{2 w(V_t)}{(1+\sqrt{\lambda_{t,1}})^2}\right)^2 \geq \frac{2}{3(d-1)} \lambda_{d,t} + \frac{4w(V_t )(\lambda_{t,1}+2\sqrt{ \lambda_{t,1}})}{3(1+\sqrt{\lambda_{t,1}})^2}.
\end{align}
Multiplying out the terms, this is equivalent to the condition
\begin{align}
    \underbrace{\lambda^2_{1,t} - \frac{2}{3(d-1)} \lambda_{d,t}}_{(i)} + \underbrace{\frac{4 w^2(V_t)}{(1+\sqrt{\lambda_{t,1}})^4}}_{(ii)} + \underbrace{\frac{4w(V_t)}{(1+\sqrt{\lambda_{t,1}})^2}(\lambda_{t,1} - \frac{1}{3}(\lambda_{t,1}+2\sqrt{\lambda_{t,1}})}_{(iii)} \geq 0.
\end{align}
It remains to observe that $(i)$ is positive by induction at time step $t$ (cf.~\eqref{eq:induction}), $(ii)$ is always positive and $(iii)$ is positive for $\lambda_{t,1} \geq 1$ and this is true since $\lambda_{t,1} \geq \lambda_0 \geq 2$, $\lambda_{t,1} $ is non-decreasing in $t$ and $f(x) = x-\sqrt{x}$ is positive for $x\geq 1$.
\begin{center}
\underline{\textbf{Case 2.1}: $ \lambda_{t,1} +  \frac{2 w(V_t)}{(1+\sqrt{\lambda_{t,1}})^2}<\lambda_{t,d} < \lambda_{t,d-1}+ \frac{2 w(V_t)}{(1+\sqrt{\lambda_{t,1}})^2}$.}
\end{center}

%\subsection*{\textbf{Case 2.1}: $ \lambda_{t,1} +  \frac{2 w(V_t)}{(1+\sqrt{\lambda_{t,1}})^2}<\lambda_{t,d} < \lambda_{t,d-1}+ \frac{2 w(V_t)}{(1+\sqrt{\lambda_{t,1}})^2}$}

First we find $k\in\mathbb{N}$, $1<k\leq d-1$ such that
\begin{align}
  \lambda_{t,k-1} +  \frac{2 w(V_t)}{(1+\sqrt{\lambda_{t,1}})^2} \leq  \lambda_{t,d} \leq \lambda_{t,k} + \frac{2 w(V_t)}{(1+\sqrt{\lambda_{t,1}})^2}.
\end{align}
Then using that $\lbrace v_{t,i} \rbrace_{i=1}^d$ are the eigenvectors of $V_t$ we have
\begin{align}
   \lambda_i \left( V_t +  \frac{2 w(V_t)}{(1+\sqrt{\lambda_{t,1}})^2} \sum_{i=1}^{d-1} v_{t,i} v^\mathsf{T}_{t,i} \right) &= \lambda_{t,i} + \frac{2 w(V_t)}{(1+\sqrt{\lambda_{t,1}})^2} \quad \text{for} \quad i= 1,...,k-1 \\
    \lambda_k \left(V_t +  \frac{2 w(V_t)}{(1+\sqrt{\lambda_{t,1}})^2} \sum_{i=1}^{d-1} v_{t,i} v^\mathsf{T}_{t,i} \right) &= \lambda_{t,d} \\
    \lambda_i \left( V_t +  \frac{2 w(V_t)}{(1+\sqrt{\lambda_{t,1}})^2} \sum_{i=1}^{d-1} v_{t,i} v^\mathsf{T}_{t,i} \right) &= \lambda_{t,i-1} + \frac{2 w(V_t)}{(1+\sqrt{\lambda_{t,1}})^2} \quad \text{for} \quad i= k+1,...,d .
\end{align}
Again using the mini-max principle for eigenvalues~\ref{cor:minimax_eigenvalues} and that both sides of~\eqref{eq:vt+1_bound} are real and symmetric
\begin{align}
    \lambda_{t+1,i} &\geq 
        \lambda_{t,i} + \frac{2 w(V_t)}{(1+\sqrt{\lambda_{t,1}})^2} \quad \text{if} \quad i \in \lbrace 1,...,k-1 \rbrace, \\
      \lambda_{t+1,i} &\geq    \lambda_{t,d} \quad \text{if} \quad i= k, \\
      \lambda_{t+1,i} &\geq    \lambda_{t,i-1} + \frac{2 w(V_t)}{(1+\sqrt{\lambda_{t,1}})^2} \quad \text{if} \quad i \in \lbrace k+1,...,d \rbrace.
\end{align}
Thus, using the above inequalities we can bound the trace of $V_{t+1}$ as
\begin{align}
    \Tr (V_{t+1} ) &= \sum_{i=1}^{k-1} \lambda_{t+1,i} + \lambda_{t+1,k} +\sum_{i=k+1}^{d-1} \lambda_{t+1,i} + \lambda_{t+1,d}  \\
    &\geq \left(\sum_{i=1}^{k-1} \lambda_{t,i} \right) + \frac{2(k-1) w(V_t)}{(1+\sqrt{\lambda_{t,1}})^2}  +  \lambda_{t,d} + \left(\sum_{i=k}^{d-2} \lambda_{t,i} \right) + \frac{2(d-k-1) w(V_t)}{(1+\sqrt{\lambda_{t,1}})^2}  + \lambda_{t+1,d}  \\
    & =  \lambda_{t+1,d} + \lambda_{t,d} + \sum_{i=1}^{d-2} \lambda_{t,i} + \frac{2(d-2) w(V_t)}{(1+\sqrt{\lambda_{t,1}})^2} \\
    & \geq \lambda_{t+1,d}  + \sum_{i=1}^{d-1} \lambda_{t,i} + \frac{2(d-2) w(V_t)}{(1+\sqrt{\lambda_{t,1}})^2},
\end{align}
where in the last bound we used simply $\lambda_{t,d} \geq \lambda_{t,d-1}$.
From the expression of $V_{t+1}$~\eqref{eq:vt_lemma}
\begin{align}
    \Tr (V_{t+1})  = \lambda_{t,d}+\sum_{i=1}^{d-1} \lambda_{t,i}+2(d-1)w(V_t),
\end{align}
and combining with the previous bound we obtain
\begin{align}
    \lambda_{t+1,d} &\leq \lambda_{d,t} + 2(d-1)w(V_t ) -\frac{2(d-2)w(V_t)}{(1+\sqrt{\lambda_{t,1}})^2} \\
    & = \lambda_{d,t} + \frac{2w(V_t )}{(1+\sqrt{\lambda_{t,1}})^2}\left( \lambda_{t,1}(d-1) + 2\sqrt{\lambda_{t,1}}(d-1) + 1 \right) \\
    & \leq \lambda_{d,t} + \frac{2(d-1)w(V_t )}{(1+\sqrt{\lambda_{t,1}})^2}\left( \lambda_{t,1} + 2\sqrt{\lambda_{t,1}} + 1 \right).
\end{align}
Again to check
\begin{align}
    \lambda_{t+1,1} \geq \sqrt{\frac{2}{3(d-1)}\lambda_{t+1,d}},
\end{align}
 we can square both sides and using the above bounds on $\lambda_{t+1,1}$ and $\lambda_{t+1,d}$ it suffices to check
\begin{align}
    \left(\lambda_{1,t} + \frac{2 w(V_t)}{(1+\sqrt{\lambda_{t,1}})^2}\right)^2 \geq \frac{2}{3(d-1)} \lambda_{d,t} + \frac{4w(V_t )(\lambda_{t,1}+2\sqrt{ \lambda_{t,1}}+1)}{3(1+\sqrt{\lambda_{t,1}})^2}.
\end{align}
Multiplying out the terms, this is equivalent to the condition
\begin{align}
    \underbrace{\lambda^2_{1,t} - \frac{2}{3(d-1)} \lambda_{d,t}}_{(i)} + \underbrace{\frac{4 w^2(V_t)}{(1+\sqrt{\lambda_{t,1}})^4}}_{(ii)} + \underbrace{\frac{4w(V_t)}{(1+\sqrt{\lambda_{t,1}})^2}(\lambda_{t,1} - \frac{1}{3}(\lambda_{t,1}+2\sqrt{\lambda_{t,1}}+1)}_{(iii)} \geq 0.
\end{align}
Finally, we observe that $(i)$ is positive by induction at time step $t$, $(ii)$ is always positive and $(iii)$ is positive since $\lambda_{t,1} \geq 2$, $\lambda_{t,1} \geq 2$ is non-decreasing and $f(x) = 2x-2\sqrt{x}-1$ is positive for $x \geq 2$.

\begin{center}
\underline{\textbf{Case 2.2}:  $\lambda_{t,d} \leq \lambda_{t,1} +  \frac{2 w(V_t)}{(1+\sqrt{\lambda_{t,1}})^2}$.}
\end{center}

%\subsection*{\textbf{Case 2.2}:  $\lambda_{t,d} \leq \lambda_{t,1} +  \frac{2 w(V_t)}{(1+\sqrt{\lambda_{t,1}})^2}$}

From the statement of the theorem we have 
\begin{align}
    \lambda_{t,1} \geq \lambda_0 \geq \sqrt{\frac{2}{3(d-1)}}2dC+\frac{2}{3(d-1)}.
\end{align}
Multiplying both sides by $\lambda_{t,1}$ we have
\begin{align}
     \lambda^2_{t,1} &\geq  \sqrt{\frac{2}{3(d-1)}}2dC\lambda_{t,1}+\frac{2}{3(d-1)}\lambda_{t,1} \\
     & \geq \frac{2}{3(d-1)}\left(2dC\sqrt{\lambda_{t,d}} + \lambda_{t,1} \right) \\
     & \geq \frac{2}{3(d-1)}\left(2dw(V_t) + \lambda_{t,1} \right) \\
     & = \frac{2}{3(d-1)}\left(2(d-1)w(V_t) + 2 w(V_t )+ \lambda_{t,1} \right) \\
     & \geq \frac{2}{3(d-1)}\left(2(d-1)w(V_t) + \lambda_{t,d} \right) ,
\end{align}
where the second inequality follows from the induction hypothesis $\lambda_{t,1}\geq \sqrt{\frac{2}{3(d-1)}\lambda_{t,d}}$, the third inequality from $w(V_t ) \leq C\sqrt{\lambda_{t,d}}$ and the fourth inequality from the assumption $\lambda_{t,d} \leq \lambda_{t,1} +  \frac{2 w(V_t)}{(1+\sqrt{\lambda_{t,1}})^2} \leq \lambda_{t,1}+2w(V_t)$. Thus taking the square root in both sides we have
\begin{align}
    \lambda_{t,1} \geq \sqrt{\frac{2}{3(d-1)}\left( \lambda_{t,d} + 2(d-1) w(V_t ) \right)},
\end{align}
and the induction at time step $t+1$,  $\lambda_{t+1,1}\geq \sqrt{\frac{2}{3(d-1)}\lambda_{t+1,d}}$ follows from the bounds
\begin{align}
    \lambda_{t+1,1} &\geq \lambda_{t,1}, \\
    \lambda_{t+1,d} & \leq \lambda_{t,d} + 2 w(V_t ) (d-1) ,
\end{align}
where we used the inequalities~\eqref{eq:pertboundsAB} and the definition of $V_t$~\eqref{eq:vt_lemma}.
\end{proof}

\section{Linear bandits with polylogarithmic minimax regret}

Inspired by the shot noise in the PSMAQB setting, we now study a class of noise-dependent linear stochastic bandits which, by combining the previously developed tools, will allow us to achieve polylogarithmic regret in the time horizon.

\subsection{Linear bandits with linearly vanishing subgaussian parameter}\label{sec:subgaussian_parameter}

In this Section, we combine the weighted least squares estimator defined in~\eqref{eq:estimator_weighted} with the action selection rule developed in Theorems~\ref{th:circle_eigenvalues} and~\ref{th:main_eigenvalues} to achieve polylogarithmic regret for a linear bandit with vanishing noise. Before tackling the variance condition~\eqref{eq:vanishing_variance} in the PSMAQB setting, we introduce a related model. We study a linear bandit with action set $\mathcal{A} = \mathbb{S}^{d-1}$, unknown parameter $\theta \in \mathbb{S}^{d-1}$, and reward model 
\begin{align}
X_t = \langle \theta, A_t \rangle + \epsilon_t,
\end{align}
where $\epsilon_t$ is conditionally $\eta_t$-subgaussian and satisfies
\begin{align}\label{eq:linear_vanishing_subgaussian}
    \eta^2_t \leq 1 - \langle \theta, A_t \rangle^2.
\end{align}
This noise model is inspired by the variance expression from the PSMAQB setting~\eqref{eq:vanishing_variance} in the qubit Bloch sphere, replacing the variance with the subgaussian parameter. Note that the noise decreases linearly as actions get closer to $\theta$, since $\eta_t \sim 1 - \langle \theta, A_t \rangle$. The extra factor $(1 + \langle \theta, A_t \rangle)$ is not needed to achieve polylogarithmic regret, but we keep it to better match the behavior of the quantum model. We also note that solving this model does not directly solve the PSMAQB setting, as the subgaussian parameter is strictly smaller than the variance. However, assumption simplifies the algorithm and helps illustrate how polylogarithmic regret can be achieved.

\subsubsection{Weighted confidence region}\label{sec:weighted_confidence_region}

Before proving a regret bound we need to derive a confidence region based on the weighted least squares estimator~\eqref{eq:estimator_weighted}. Our proof of is an adaptation from the one presented in~\cite{lattimore_banditalgorithm_book}[Chapter 20] to our setting. We refer to the reference for detailed computations. The proof is based on supermartingles properties specifically the following Theorem.

\begin{theorem}[Theorem 3.9 in~\cite{lattimore_banditalgorithm_book}]\label{th:supermartingale}
    Let $(X_t)_{t=0}^\infty$ be a supermartingale with $X_t \geq 0$ almost surely for all $t$. Then for any $\epsilon > 0$,
    \begin{align}
        \mathrm{Pr} \left( \sup_{t} X_t \geq \epsilon \right) \leq \frac{\EX \left[ X_0\right]}{\epsilon}.
    \end{align}
\end{theorem}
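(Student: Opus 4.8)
The plan is to prove this maximal inequality, which is the discrete-time form of Ville's (equivalently Doob's) inequality for nonnegative supermartingales, by the standard optional-stopping argument. First I would fix $\epsilon > 0$ and, for a truncation level $0 < \lambda \le \epsilon$, introduce the first-hitting time
\[
\tau_\lambda := \inf\{ t \ge 0 : X_t \ge \lambda \}, \qquad \inf \emptyset := +\infty.
\]
This is a stopping time with respect to the underlying filtration $(\mathcal{F}_t)$, since $\{\tau_\lambda \le t\} = \bigcup_{s \le t}\{X_s \ge \lambda\} \in \mathcal{F}_t$ for every $t$.

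Next I would exploit that the stopped process $(X_{t \wedge \tau_\lambda})_{t \ge 0}$ is again a nonnegative supermartingale, so that $\EX[X_{t \wedge \tau_\lambda}] \le \EX[X_0]$ for every fixed $t$ (equivalently, one applies optional stopping to the bounded stopping time $t \wedge \tau_\lambda$). On the event $\{\tau_\lambda \le t\}$ the definition of $\tau_\lambda$ in discrete time forces $X_{t \wedge \tau_\lambda} = X_{\tau_\lambda} \ge \lambda$, while on its complement $X_{t \wedge \tau_\lambda} = X_t \ge 0$. Hence $X_{t\wedge\tau_\lambda} \ge \lambda\,\mathbbm{1}\{\tau_\lambda \le t\}$, and taking expectations gives $\lambda\,\mathrm{Pr}(\tau_\lambda \le t) \le \EX[X_0]$. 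Letting $t \to \infty$ and using continuity of probability from below (as $\{\tau_\lambda \le t\} \uparrow \{\tau_\lambda < \infty\}$) yields $\mathrm{Pr}(\tau_\lambda < \infty) \le \EX[X_0]/\lambda$.

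Finally I would pass from the hitting time back to the supremum. For any $\delta \in (0,\epsilon)$ there is the inclusion $\{\sup_t X_t \ge \epsilon\} \subseteq \{\tau_{\epsilon-\delta} < \infty\}$: indeed, if the supremum is at least $\epsilon$ then, since $\epsilon > \epsilon - \delta$, some $X_t$ must exceed $\epsilon - \delta$, so $\tau_{\epsilon-\delta}$ is finite. Combining with the bound above gives $\mathrm{Pr}(\sup_t X_t \ge \epsilon) \le \EX[X_0]/(\epsilon - \delta)$, and letting $\delta \downarrow 0$ produces the claimed inequality.

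I expect the only genuinely delicate point to be this last step. The event $\{\sup_t X_t \ge \epsilon\}$ is in general strictly larger than $\{\exists\, t : X_t \ge \epsilon\} = \{\tau_\epsilon < \infty\}$, the discrepancy occurring precisely when the supremum equals $\epsilon$ but is never attained; it is the limiting argument over $\delta$, rather than a naive identification of the two events, that closes this gap cleanly. Everything else — measurability of $\tau_\lambda$, the supermartingale property of the stopped process, and the resulting expectation bound — is routine.
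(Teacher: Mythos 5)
Your proof is correct: the optional-stopping argument via the hitting time $\tau_\lambda$, the bound $\lambda\,\mathrm{Pr}(\tau_\lambda \le t) \le \EX[X_0]$ for the stopped supermartingale, and the careful passage from $\{\tau_{\epsilon-\delta} < \infty\}$ to $\{\sup_t X_t \ge \epsilon\}$ via $\delta \downarrow 0$ all hold, and you rightly flag the only delicate point (the supremum need not be attained). The thesis does not prove this statement itself — it cites it as Theorem 3.9 of the referenced bandit textbook — and your argument is precisely the standard maximal-inequality proof that the cited source gives, so there is nothing to reconcile.
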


\begin{lemma}\label{lem:confidence_region_weighted}
Let $\delta \in (0,1)$ and $(A_t)^\infty_{t=1}$ be the actions selected by some policy with corresponding rewards $(X_t)^\infty_{t=1}$ given  by $X_t = \langle \theta ,A_t \rangle + \epsilon_t$, where $\theta\in\mathbb{S}^{d-1}$ is the unknown parameter and $\epsilon_t$ is $\eta_t$-subgaussian. Let $\hat{\sigma}^2_t$ be an estimator of the form~\eqref{eq:variance_estimator} and define the following event,
\begin{align}\label{eq:gt_event}
  G_t := \left\{ \big( (X_s, A_s)_{s=1}^{t-1}, A_t \big)  :  \sigma_s \leq \hat{\sigma}^2_s(A_1,X_1,\ldots,A_{s-1},X_{s-1},A_s)\ \forall {s \in [t]} \right\}.
\end{align}
Then we can define the following confidence region 
\begin{align}\label{eq:confidence_region}
    \mathcal{C}^{\textup{w}}_t := \lbrace \theta'\in\mathbb{R}^d : \| \theta' - \hat{\theta}^{\textup{w}}_t \|^2_{V_t (\lambda )} \leq \beta_{t,\delta}  \rbrace,
\end{align}
where $V_t(\lambda)$ is the weighted design matrix~\eqref{eq:design_matrix_weighted} and $\theta^{\textup{w}}_t$ the weighted least squares estimator~\eqref{eq:estimator_weighted} both defined with with $\hat{\sigma}^2_s$ and
\begin{align}\label{eq:beta}
\beta_{t,\delta} = \left( \sqrt{2\log \frac{1}{\delta} + \log\left(\frac{\det(V_t (\lambda ))}{\det (V_0 (\lambda ) )} \right)}+ \sqrt{\lambda}\right)^2. 
\end{align}
Then
\begin{align}\label{eq:prob_confidence_weighted}
    \mathrm{Pr}\left[ \forall s \in [t]:  \theta \in \mathcal{C}_s^{\textup{\textup{w}}} \big| G_t \right] \geq 1 - \delta.
\end{align}
\end{lemma}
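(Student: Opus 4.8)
The plan is to follow the self-normalized martingale (``method of mixtures'') argument of Abbasi-Yadkori et al., as presented in \cite[Chapter 20]{lattimore_banditalgorithm_book}, adapting it to the weighted estimator~\eqref{eq:estimator_weighted} and to the conditioning on the good event $G_t$~\eqref{eq:gt_event}. First I would absorb the weights into rescaled quantities. Writing $w_s := 1/\hat{\sigma}^2_s(A_1,X_1,\ldots,A_{s-1},X_{s-1},A_s)$, the key structural observation is that $w_s$ is \emph{predictable}: it is a function of the history $\mathcal{H}_{s-1}$ and of $A_s$, both fixed before the reward $X_s$ is revealed. Defining the rescaled actions and noise $\tilde A_s := \sqrt{w_s}\,A_s$ and $\tilde\epsilon_s := \sqrt{w_s}\,\epsilon_s$, the weighted design matrix~\eqref{eq:design_matrix_weighted} becomes $V_t(\lambda) = \lambda\mathbb{I} + \sum_{s=1}^t \tilde A_s \tilde A_s^{\mathsf{T}}$, while $\sum_{s=1}^t w_s A_s\epsilon_s = \sum_{s=1}^t \tilde A_s\tilde\epsilon_s =: S_t$. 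On $G_t$ one has $\eta_s^2 \leq \hat\sigma_s^2$ for every $s\leq t$, so each $\tilde\epsilon_s$ is conditionally $1$-subgaussian given $\mathcal{H}_{s-1}$ and $A_s$, which is exactly the normalization needed to reduce to the standard (unit-noise) setting.

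Next I would carry out the error decomposition. Substituting $X_s = \langle\theta,A_s\rangle + \epsilon_s$ into~\eqref{eq:estimator_weighted} gives $\hat\theta^{\textup{w}}_t - \theta = V_t^{-1}(\lambda)(S_t - \lambda\theta)$, and therefore
\[
\|\hat\theta^{\textup{w}}_t - \theta\|_{V_t(\lambda)} = \|S_t - \lambda\theta\|_{V_t^{-1}(\lambda)} \leq \|S_t\|_{V_t^{-1}(\lambda)} + \lambda\|\theta\|_{V_t^{-1}(\lambda)} \leq \|S_t\|_{V_t^{-1}(\lambda)} + \sqrt{\lambda},
\]
where the last inequality uses $V_t^{-1}(\lambda) \leq \lambda^{-1}\mathbb{I}$ together with $\|\theta\|_2 = 1$. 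Comparing with the definition~\eqref{eq:beta} of $\beta_{t,\delta}$, it then suffices to prove the self-normalized tail bound $\|S_s\|_{V_s^{-1}(\lambda)}^2 \leq 2\log(1/\delta) + \log\big(\det V_s(\lambda)/\det V_0(\lambda)\big)$ simultaneously for all $s\leq t$, with conditional probability at least $1-\delta$ given $G_t$.

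For the self-normalized bound I would introduce, for each fixed $x\in\mathbb{R}^d$, the exponential process $M_t(x) := \exp\big(\langle x, S_t\rangle - \tfrac12\|x\|^2_{V_t(\lambda)-\lambda\mathbb{I}}\big)$. Conditional $1$-subgaussianity of $\tilde\epsilon_s$ and predictability of $\tilde A_s$ yield $\EX[M_s(x)\mid\mathcal{F}_{s-1}]\leq M_{s-1}(x)$, so $M_t(x)$ is a nonnegative supermartingale. Mixing over $x$ against the density of $\mathcal{N}(0,\lambda^{-1}\mathbb{I})$ and evaluating the resulting Gaussian integral in closed form gives
\[
\bar M_t := \int M_t(x)\,d\mathcal{N}(0,\lambda^{-1}\mathbb{I})(x) = \Big(\tfrac{\det V_0(\lambda)}{\det V_t(\lambda)}\Big)^{1/2}\exp\Big(\tfrac12\|S_t\|_{V_t^{-1}(\lambda)}^2\Big),
\]
which is again a nonnegative supermartingale with $\EX[\bar M_0]=1$. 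Applying the maximal inequality of Theorem~\ref{th:supermartingale} to $\bar M_t$ with threshold $1/\delta$ and taking logarithms turns $\mathrm{Pr}(\sup_t \bar M_t\geq 1/\delta)\leq\delta$ into the claimed uniform bound on $\|S_s\|_{V_s^{-1}(\lambda)}^2$, and combining with the decomposition above establishes $\theta\in\mathcal{C}^{\textup{w}}_s$ for all $s\leq t$.

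I expect the only genuinely delicate point to be the interaction between the conditioning event $G_t$ and the supermartingale property: $\tilde\epsilon_s$ is guaranteed $1$-subgaussian \emph{only} on $G$, so $M_t(x)$ need not be a supermartingale once the history leaves the good event. I would resolve this with a stopping-time (freezing) argument. Since the underweighting is predictable, define $\tau := \min\{s : \eta_s^2 > \hat\sigma_s^2\}$ and work throughout with the stopped process $M_{t\wedge\tau}(x)$, whose increments up to $\tau$ are all $1$-subgaussian, so it is a genuine supermartingale unconditionally; on $G_t$ one has $\tau > t$, whence the stopped and unstopped processes agree up to time $t$ and the conditional statement~\eqref{eq:prob_confidence_weighted} follows. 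The Gaussian-integral computation and the weighted-norm manipulations are routine and can be imported essentially verbatim from \cite[Chapter 20]{lattimore_banditalgorithm_book}; the predictability of the weights and the freezing argument are the genuinely new ingredients relative to the unweighted case.
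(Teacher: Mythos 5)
Your proposal follows essentially the same route as the paper's proof: the same error decomposition $\hat\theta^{\textup{w}}_t - \theta = V_t^{-1}(\lambda)(S_t - \lambda\theta)$ with the triangle inequality and $\|\theta\|_2=1$, the same exponential supermartingale mixed against $\mathcal{N}(0,\lambda^{-1}\mathbb{I})$, and the same appeal to the maximal inequality of Theorem~\ref{th:supermartingale}. The one place you go beyond the paper is the stopping-time (freezing) argument for reconciling the conditioning on $G_t$ with the supermartingale property — the paper simply declares that all calculations are conditioned on $G_t$ and invokes subgaussianity under that event, so your treatment of this point is, if anything, the more careful one.
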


\begin{proof} 
To simplify notation we use $\hat{\sigma}_s^2 = \hat{\sigma}_s^2 (A_s)$. From now we will condition all our calculations on the events $G_t$ since we want to prove~\eqref{eq:prob_confidence}. First define the following random quantity
\begin{align}
\widetilde{S}_t := \sum_{s=1}^t \frac{1}{\hat{\sigma}_s^2}\epsilon_s A_s,
\end{align}
and define the following process
\begin{align}
    \widetilde{M}_t (x) := \exp \left( \langle x , \widetilde{S}_t \rangle - \frac{1}{2} \| x \|^2_{{V}_t} \right),
\end{align}
for all $x\in\mathbb{R}^d$ and $V_t = V_t (0)$. We want to check that $\widetilde{M}_t(x)$ is a supermartingale, i.e $\EX \left[ \widetilde{M}_t (x) | \mathcal{H}_{t-1} \right] \leq \widetilde{M}_{t-1} (x)$ where $\mathcal{H}_{t-1}$ is the history for all information up to time step $t$ before reward $X_t$ is observed (also contains $A_t$). From a direct calculation, we have
\begin{align}\label{eq:supermartingale_check}
    \EX \left[ \widetilde{M}_t (x) | \mathcal{H}_{t-1} \right] = \widetilde{M}_{t-1}(x) \EX \left[ \exp \left( \frac{\epsilon_t}{\hat{\sigma}_t} \left\langle x ,\frac{A_t}{\hat{\sigma}_t}\right\rangle - \frac{1}{2}\| x \|_{\frac{1}{\hat{\sigma}^2_t} A_t A^\top_t} \right) \big| \mathcal{H}_{t-1}\right].
\end{align}
Then, using that in the definition of $\hat{\sigma}^2_t$~\eqref{eq:variance_estimator} is defined only using the information up to time step $t-1$, the subgaussian property, and that the event $G_t$ holds we have
\begin{align}
    \EX \left[ \exp \left( \frac{\epsilon_t}{\hat{\sigma}_t} \left\langle x ,\frac{A_t}{\hat{\sigma}_t}\right\rangle \right) \big| \mathcal{F}_{t-1}\right] \leq \exp \left(  \frac{1}{2}\left\langle x, \frac{A_t}{\hat{\sigma}_t} \right\rangle^2 \right) = \exp \left( \frac{1}{2}\| x \|_{\frac{1}{\hat{\sigma}^2_t} A_t A^\top_t}\right).
\end{align}
Inserting the above expression into~\eqref{eq:supermartingale_check} we immediately get $\EX \left[ \widetilde{M}_t (x) | \mathcal{F}_{t-1} \right] \leq \widetilde{M}_{t-1} (x)$. Using Lemma 20.3 in~\cite{lattimore_banditalgorithm_book}[Chapter 20] we have that
\begin{align}
    \Bar{M}_t := \int_{\mathbb{R}^d} \widetilde{M}_t (x) d h(x)
\end{align}
is a supermartingale where $h$ is a probability measure on $\mathbb{R}^d$. In particular we choose $h = \mathcal{N} ( 0 , H^{-1} )$ with $H = \lambda \mathbb{I}_{d\times d} \in \mathbb{R}^{d \times d}$ and we get
\begin{align}
\Bar{M}_t = \frac{1}{\sqrt{(2\pi)^d \det (H^{-1} )}} \int_{\mathbb{R}^d} \exp \left( \langle x , \widetilde{S}_t \rangle - \frac{1}{2} \| x\|^2_{V_t} - \frac{1}{2} \| x \|_{H} \right)dx .
\end{align}
The above quantity can be exactly computed as
\begin{align}\label{eq:expected_mt}
    \Bar{M}_t = \left( \frac{\det (V_0 )}{\det  ( V_t(\lambda) )}\right)^{\frac{1}{2}}\exp \left( \frac{1}{2} \| \widetilde{S}_t \|^2_{V^{-1}_t(\lambda )} \right),
\end{align}
where we used the solution for a Gaussian integral, we have completed the square inside the exponential term and we used that $V_t(\lambda ) = H + V_t$. Then we can use Theorem~\ref{th:supermartingale} in order to get
\begin{align}
    \mathrm{Pr} \left ( \sup_{t} \log \left( \Bar{M}_t\right) \geq \log \left( \frac{1}{\delta} \right) \right) = \mathrm{Pr} \left( \sup_t \Bar{M}_t \geq \frac{1}{\delta} \right) \leq \delta ,
\end{align}
where we used that by definition of $\widetilde{M}_{t-1}(x)$ we have $\Bar{M}_t \geq 0$ almost surely and $\Bar{M}_0 = 1$. Inserting~\eqref{eq:expected_mt} into the above equation we have
\begin{align}\label{eq:prob_wt}
    \mathrm{Pr}\left( \sup_{t} \| \widetilde{S}_t \|^2_{V^{-1}_t(\lambda )} \geq 2 \log \left( \frac{1}{\delta}\right) + \log \frac{\det  ( V_t(\lambda) )}{\det (V_0(\lambda))} \right) \leq \delta .
\end{align}
Finally using the expression for the weighted least squares estimator~\eqref{eq:estimator_weighted} we have
\begin{align}
    \| \hat{\theta}^{\text{wls}}_t - \theta \|_{V_t (\lambda )} \leq \| \widetilde{S}_t \|_{V^{-1}_t(\lambda )} + \sqrt{\lambda},
\end{align}
where we used triangle inequality and $\|\theta \|_2^2=1$. And the result follows by combining the above expression with~\eqref{eq:prob_wt} and conditioning under the event $G_t$.
\end{proof}

\subsubsection{\textsf{LinUCB} vanishing noise}

In this Section, we give the specific algorithm that minimizes the regret for the stochastic linear bandits with linear vanishing noise~\eqref{eq:linear_vanishing_subgaussian}. The algorithm is based on the principle of “optimism in the face of uncertainty" (OFU) or upper confidence bounds (UCB). We name the algorithm \textsf{LinUCB-VN}, where VN stands for vanishing noise. The \textsf{LinUCB-VN} algorithm is designed to keep the relation $\lambda_{\min} ( V_t ) = \Omega (\sqrt{\lambda_{\max}(V_t)} ) $ at each time step $t$ following the general action selection rule from Theorem~\ref{th:main_eigenvalues} combined with the weighted least squares estimator~\eqref{eq:estimator_weighted}.

\begin{figure}
    \centering
    \includegraphics[scale=0.5]{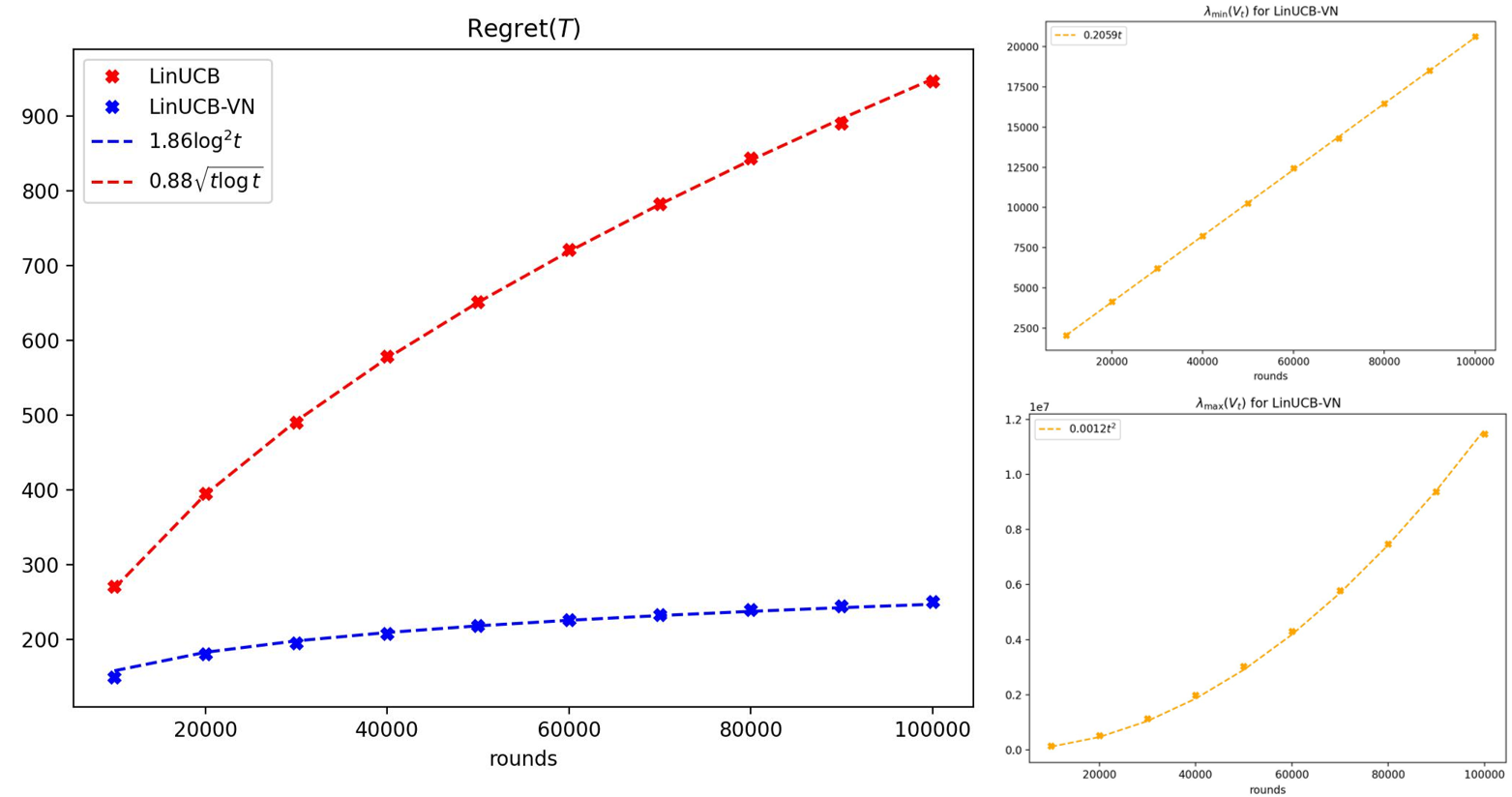}
    \caption{ \small We numerically test \textsf{LinUCB} and \textsf{LinUCB-VN} in a linear bandit with action set $\mathcal{A} = \mathbb{S}^1$ and reward model $X_t = \mathcal{N}(\langle \theta , A_t \rangle , 1 - \langle \theta , A_t \rangle^2 )$. Each point in the graphic is run independently and averaged over 100 instances for random environments $\theta\in\mathbb{S}^1$. \emph{Left plot:} Scaling of the regret for \textsf{LinUCB} algorithm and \textsf{LinUCB-VN}. We fit the functions $R(t) = 1.86\log^2 t$ for \textsf{LinUCB-VN} and $R(t) = 0.88\sqrt{t\log t}$ for \textsf{LinUCB}.
     \emph{Right plots:} Scaling of the maximum and minimum eigenvalue of the matrix $V_t$ for \textsf{LinUCB-VN}. The scaling shows the relation $\lambda_{\min}(V_t) = \Omega ( \sqrt{V_t(\lambda_{\max}}) )$. We fit the function  $\lambda_{\min}(V_t) = 0.2059t$ for the minimum eigenvalue and $ \lambda_{\max}(V_t) = 0.0012t^2$ for the maximum eigenvalue. The behavior $\lambda_{\min}(V_t) = \Theta ( t ) $ is the one that gives us the theoretical guarantee of polylogarithmic scaling of the regret.}
    \label{fig:numerical_results}
\end{figure}

Our algorithm works in batches of $2(d-1)$ actions for a total of $\tilde{T}\in\mathbb{N}$ batches or $T = 2(d-1)\tilde{T}$ rounds. At each batch $\tilde{t} \in [\tilde{T}]$ the algorithm selects the following actions, 
\begin{align}\label{eq:general_update}
        A^{\pm}_{\tilde{t},i} := \frac{\widetilde{A}^{\pm}_{\ttilde,i}}{\| \widetilde{A}^{\pm}_{\ttilde,i} \|_2},  
        \quad \textrm{where} \quad
        \widetilde{A}^{\pm}_{\ttilde,i} := \frac{ \hat{\theta}^{\text{w}}_{\tilde{t}}}{ \| \hat{\theta}^{\text{w}}_{\tilde{t}} \|_2}  \pm \frac{1}{\sqrt{\lambda_{\tilde{t}-1,1}}}v_{\tilde{t}-1,i } ,
\end{align}
for $i\in[d-1]$, where $v_{\tilde{t}-1,i}$ is the normalized eigenvector with eigenvalue $\lambda_{\tilde{t}-1,i} = \lambda_{i} ( V_{\tilde{t}-1}(\lambda ))$ of the weighted design matrix $V_{\tilde{t}} (\lambda )$~\eqref{eq:design_matrix_weighted} and $\hat{\theta}^{\text{w}}_{\tilde{t}}$ the weighted estimator~\eqref{eq:estimator_weighted}. The design matrix $V_{\tilde{t}}(\lambda)$ is updated at each batch $\tilde{t}$ as
\begin{align}\label{eq:vt_update}
    V_{\tilde{t}}(\lambda) := V_{\tilde{t}-1}(\lambda) + \omega ( V_{\tilde{t}-1}(\lambda))\sum_{i=1}^{d-1} \left(A^+_{\tilde{t},i}  (A^+_{\tilde{t},i})^\mathsf{T}  +  A^-_{\tilde{t},i}  (A^-_{\tilde{t},i})^\mathsf{T} \right),
\end{align}
and the weighted least squares estimator as
\begin{align}
    \hat{\theta}^{\text{w}}_{\tilde{t}} = V^{-1}_{\tilde{t}}(\lambda)\left( \sum_{s=1}^{\tilde{t}}\omega ( V_{s-1}(\lambda))\left(\sum_{i=1}^{d-1}\left(X^+_{s,i} A^+_{s,i} + X^-_{s,i} A^-_{s,i}   \right)\right)\right)
\end{align}
where $X^\pm_{s,i}$ is the reward sampled with $A^{\pm}_{s,i}$ and the weights (or noise estimator~\eqref{eq:variance_estimator}) are set to
\begin{align}\label{eq:omega}
    \omega ( V_{\tilde{t}-1} (\lambda) )  = \frac{\sqrt{\lambda_{\max}(V_{\tilde{t}-1}(\lambda))}}{12\sqrt{d-1}\beta_{\tilde{t}-1,\delta}}, \quad \hat{\sigma}^2_{\tilde{t}} ({A}^{\pm}_{\tilde{t},i} ) = \frac{1}{\omega ( V_{\tilde{t}-1}(\lambda) ) },
\end{align}
with $\beta_{\tilde{t},\delta}$ defined as in Lemma~\ref{lem:confidence_region_weighted} with input $V_{\tilde{t}}$~\eqref{eq:vt_update}. When is clear from the context we will denote $V_{\tilde{t}}(\lambda)$ simply as $V_{\tilde{t}}$. We state the pseudo-code of $\textsf{LinUCB-VN}$ below.

\begin{algorithm}
	\caption{\textsf{LinUCB-VN}} 
	\label{alg:weighted_linUCB}
    \begin{algorithmic}[1]
        \State Require: $\lambda_0\in\mathbb{R}_{>0}$, $\omega: \text{P}^d_+ \rightarrow \mathbb{R}_{\geq 0}$
        
         \State Set initial design matrix $V_0 \gets \lambda_0\mathbb{I}_{d\times d}$ 
        
        \State Choose initial estimator ${\theta}_0\in\mathbb{S}^{d-1}$ for $\theta$ at random 
        
        \For{$\tilde{t}=1,2,\cdots$}
            \vspace{1mm}
            \State \textit{Optimistic action selection}
            \vspace{1mm}
            
            \For{$i = 1,2,\cdots d-1$}   
                \State Select actions $A^+_{\ttilde,i}$ and $A^-_{\ttilde,i}$ according to Eq.~\eqref{eq:general_update}
                
               \State Receive associated rewards $X^+_{\ttilde,i}$ and $X^-_{\ttilde,i}$
            \EndFor
            
            \vspace{1mm}
            \textit{Update estimator of sub-gaussian noise for $a^+_{t,i}$}
            \vspace{1mm}
            
            $\hat{\sigma}^2_{\ttilde} \gets \frac{1}{\omega ( V_{\ttilde-1}(\lambda_0 ) ) }$ for $\ttilde\geq 2$ or $\hat{\sigma}^2_{\ttilde} \gets 1$ for $\ttilde=1$.

            \vspace{1mm}
          \State  \textit{Update design matrix and WLSE}
            \vspace{1mm}
            
         \State   $V_{\ttilde}(\lambda_0) \gets V_{t-1}(\lambda_0) + \frac{1}{\hat{\sigma}_t^2} \sum_{i=1}^{d-1} \left(A^+_{\ttilde,i}  (A^+_{\ttilde,i})^{\mathsf{T}}  +  A^-_{\ttilde,i}  (A^-_{\ttilde,i})^\mathsf{T} \right)$
            
           \State $\widetilde{\theta}_t^\text{w} \gets V_{\ttilde}^{-1} ( \lambda_0 )  \sum_{s = 1}^{\ttilde} \frac{1}{\hat{\sigma}_{\ttilde}^2} \sum_{i=1}^{d-1} (A^+_{s,i} X^+_{t,i} + A^-_{s,i} X^-_{s,i} ) $
        \EndFor
         \end{algorithmic}
\end{algorithm}

\subsubsection{Regret analysis for \textsf{LinUCB-VN}}

In the simplified proof of our action selection rule for the circle $\mathbb{S}^1$, one of the key steps was showing that $\lambda_{\max}(V_t)$ grows linearly with $t$. This followed from the fact that the actions were bounded and $\Tr(V_t) = O(t)$. However, for the weighted version~\eqref{eq:vt_update}, this no longer holds, and the analysis becomes a bit more involved. To understand the scaling, we will use a differential inequality involving $\lambda_{\max}(V_t)$, and the lemma below will help us solve it.

\begin{theorem}[\cite{MichelPetrovitch1901}]\label{th:dif_equation_ineq}
If $u$ satisfies the differential inequality $\frac{\text{d}u(t)}{\text{d}t} \lesseqgtr f(u(t),t)$, and $y$ is the solution to the ordinary differential equation (ODE) $\frac{\text{d}y(t)}{\text{d}t}=f(y(t),t)$ under the boundary condition $u(t_0)=y(t_0)$, then:
\begin{align}
    \forall t > t_0 , u(t) \lesseqgtr y(t) .
\end{align}
\end{theorem}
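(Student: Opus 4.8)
The plan is to prove the comparison principle in the $\leq$ form, i.e.\ assuming $u'(t) \leq f(u(t),t)$ with $u(t_0) = y(t_0)$ to conclude $u(t) \leq y(t)$ for $t > t_0$; the $\geq$ form then follows by applying this to $\tilde u = -u$ and $\tilde y = -y$ with $f$ replaced by the reflected vector field $g(v,t) := -f(-v,t)$, which inherits the regularity of $f$, so the combined $\lesseqgtr$ statement reduces to a single direction. Throughout I would assume the standard regularity that makes the cited ODE well-posed, namely that $f$ is continuous and locally Lipschitz in its first argument on the relevant domain, so that $y$ is the unique solution and solutions depend continuously on parameters.

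First I would establish the strict version: if $u'(t) < f(u(t),t)$ for all $t$ and $u(t_0) = y(t_0)$, then $u(t) < y(t)$ for all $t > t_0$. Here I would argue by contradiction using the first contact time $t_1 := \inf\{ t > t_0 : u(t) \geq y(t)\}$. Just to the right of $t_0$ one has $u < y$, because $u'(t_0) < f(u(t_0),t_0) = f(y(t_0),t_0) = y'(t_0)$; hence $t_1 > t_0$, and by continuity $u(t_1) = y(t_1)$ while $u(t) < y(t)$ on $(t_0,t_1)$. The function $w = u - y$ then satisfies $w(t_1)=0$ and $w < 0$ immediately to the left, forcing $w'(t_1) \geq 0$, i.e.\ $u'(t_1) \geq y'(t_1)$. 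But $u'(t_1) < f(u(t_1),t_1) = f(y(t_1),t_1) = y'(t_1)$, a contradiction.

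Next I would remove the strictness by perturbation. For $\epsilon > 0$ let $y_\epsilon$ solve the perturbed ODE $y_\epsilon'(t) = f(y_\epsilon(t),t) + \epsilon$ with $y_\epsilon(t_0) = y(t_0)$. Then $u'(t) \leq f(u(t),t) < f(u(t),t) + \epsilon$, so $u$ satisfies a strict differential inequality against $y_\epsilon$, and the previous step gives $u(t) < y_\epsilon(t)$ for every $t > t_0$ in the common interval of existence. Letting $\epsilon \to 0^+$ and invoking continuous dependence of the ODE solution on the constant perturbation (guaranteed by the local Lipschitz hypothesis), one gets $y_\epsilon(t) \to y(t)$ pointwise, whence $u(t) \leq y(t)$ for all $t > t_0$.

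The main obstacle I anticipate is the justification of this limiting step: the convergence $y_\epsilon \to y$ is exactly where the regularity assumption on $f$ is essential, since without at least a one-sided Lipschitz (or uniqueness) condition the perturbed solutions need not converge to $y$. A secondary technical point is that the first-contact argument implicitly uses that $u$ is differentiable, or at least absolutely continuous with the inequality holding almost everywhere; if $u$ is only assumed to satisfy the inequality in an integrated or Dini-derivative sense, the same contradiction still goes through after replacing the ordinary derivative of $w$ at $t_1$ by its lower-right Dini derivative. I would state this refinement explicitly, since it is what keeps the lemma applicable to the $\lambda_{\max}(V_t)$ trajectory that motivates its use here, where the object being bounded is only piecewise regular.
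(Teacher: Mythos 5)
The paper does not prove this statement: it is quoted as a classical result from the cited reference and used as a black box in the regret analysis, so there is no in-paper argument to compare against. Your proof is the standard comparison-principle argument (first the strict inequality via a first-contact-time contradiction, then removal of strictness by an $\epsilon$-perturbation of the vector field and continuous dependence), and it is correct under the regularity you state; the reduction of the $\geq$ case to the $\leq$ case via the reflected field $g(v,t)=-f(-v,t)$ is also fine. Your two caveats are the right ones to flag: the local Lipschitz (or at least one-sided uniqueness) hypothesis is genuinely needed for $y_\epsilon \to y$, and the Dini-derivative refinement is exactly what makes the lemma applicable in the paper, where it is invoked on the piecewise-linear interpolations $g_1$ and $g_2$ of $\lambda_{\max}(V_{\tilde t})$ in the proof of Theorem~\ref{th:regret_bound_d2}, which are only piecewise differentiable. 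In that application $f(y,t)=c\sqrt{y}$ with the trajectories bounded away from zero (since $\lambda_{\max}\geq\lambda_0\geq 2$), so the Lipschitz condition holds on the relevant domain and your argument covers the paper's use case.
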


The Theorem below formalized the regret scaling for $\textsf{LinUCB-VN}$~\ref{alg:weighted_linUCB}.

\begin{theorem}\label{th:regret_bound_d2}
Let $d\geq 2$, $\delta\in ( 0,1)$ and $T = 2(d-1)\widetilde{T}$ for some $\widetilde{T}\in\mathbb{N}$. Let $\omega (X)$ defined as in~\eqref{eq:omega} using $\delta$ and $\lambda_0$ satisfy the constraints in Theorem~\ref{th:main_eigenvalues}. Then if we apply Algorithm~\ref{alg:weighted_linUCB} to a $d$ dimensional stochastic linear bandit with linear vanishing noise~\eqref{eq:linear_vanishing_subgaussian} with probability at least $1-\delta$ the regret satisfies
\begin{align}
     R_T ( \mathbb{S}^{d-1},\theta , \pi ) \leq 4(d-1) + \left(144d^2\beta^2_{T,\max}+24(d-1)^{\frac{3}{2}}\beta_{T,\max}\right)\log \left( \frac{T}{2(d-1)} \right),
\end{align}
    where
    \begin{align}
        \beta_{T,\max} : = \left(\lambda_0+\sqrt{2\log\frac{T}{\delta} + d\log\left(\frac{1}{144\lambda_0}T^2 + \frac{1}{6\sqrt{\lambda_0}}T +1\right)} \right)^2 .
    \end{align}
\end{theorem}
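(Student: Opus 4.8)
The plan is to bound the regret batch by batch, exploiting that on $\mathbb{S}^{d-1}$ the instantaneous regret of an action $A$ equals $\tfrac12\|\theta-A\|_2^2$, and then to control the individual distances $\|\theta - A^{\pm}_{\tilde t,i}\|_2$ through the minimum eigenvalue of the weighted design matrix. Writing $T = 2(d-1)\tilde T$, I would first decompose
\begin{align}
R_T(\mathbb{S}^{d-1},\theta,\pi) = \tfrac12\sum_{\tilde t=1}^{\tilde T}\sum_{i=1}^{d-1}\left(\|\theta - A^{+}_{\tilde t,i}\|_2^2 + \|\theta - A^{-}_{\tilde t,i}\|_2^2\right),
\end{align}
and then, exactly as in the triangle-inequality argument of Theorem~\ref{th:circle_regret}, bound each term by $\|\theta - A^{\pm}_{\tilde t,i}\|_2 \leq 3\sqrt{\beta_{\tilde t-1,\delta}/\lambda_{\min}(V_{\tilde t-1})}$, provided $\theta \in \mathcal{C}^{\mathrm{w}}_{\tilde t-1}$. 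The three contributions come from membership of $\theta$ in the confidence ellipsoid~\eqref{eq:confidence_region}, from normalising the estimator, and from Lemma~\ref{lem:dist_action_centre} applied with $\lambda = \lambda_{\min}(V_{\tilde t-1})$.

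The first delicate point is to certify that $\theta$ indeed lies in all confidence regions. This requires the good event $G_{\tilde t}$ of Lemma~\ref{lem:confidence_region_weighted}, namely that the weights $\hat\sigma^2_s = 1/\omega(V_{s-1})$ dominate the true subgaussian parameters. Here the specific constant in~\eqref{eq:omega} enters: since the noise satisfies $\eta_{\tilde t}^2 \leq 1 - \langle\theta,A\rangle^2 \leq \|\theta - A\|_2^2$, combining the distance bound above with the eigenvalue ratio $\lambda_{\min}(V_{\tilde t-1}) \geq \sqrt{\tfrac{2}{3(d-1)}\lambda_{\max}(V_{\tilde t-1})}$ from Theorem~\ref{th:main_eigenvalues} gives $\eta_{\tilde t}^2 \leq 9\sqrt{3/2}\,\sqrt{d-1}\,\beta_{\tilde t-1,\delta}/\sqrt{\lambda_{\max}(V_{\tilde t-1})}$, which is strictly smaller than $\hat\sigma^2_{\tilde t} = 12\sqrt{d-1}\,\beta_{\tilde t-1,\delta}/\sqrt{\lambda_{\max}(V_{\tilde t-1})}$ because $9\sqrt{3/2}<12$. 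Thus the good event is self-consistent: whenever the confidence regions have held through batch $\tilde t-1$, the event $G_{\tilde t}$ holds deterministically. I expect this interlocking induction between ``$\theta$ stays in the ellipsoids'' and ``the weights dominate the noise'' to be the main obstacle, since both statements depend on the same trajectory; it must be arranged so that the conditional guarantee $\Pr[\forall s:\theta\in\mathcal{C}^{\mathrm{w}}_s \mid G]\geq 1-\delta$ of Lemma~\ref{lem:confidence_region_weighted} yields an unconditional probability $1-\delta$ for the joint event on which the regret is controlled.

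Next I would establish the growth rate $\lambda_{\max}(V_{\tilde t}) = \Omega(\tilde t^2)$. Since each $A^{\pm}_{\tilde t,i}$ is a unit vector, the update~\eqref{eq:vt_update} gives $\Tr(V_{\tilde t}) = \Tr(V_{\tilde t-1}) + 2(d-1)\,\omega(V_{\tilde t-1})$, and inserting the definition of $\omega$ from~\eqref{eq:omega} together with $\lambda_{\max}(V_{\tilde t-1}) \geq \Tr(V_{\tilde t-1})/d$ yields a recurrence $u_{\tilde t}-u_{\tilde t-1}\gtrsim \tfrac{\sqrt{d-1}}{6\beta_{T,\max}\sqrt d}\sqrt{u_{\tilde t-1}}$ for $u_{\tilde t}=\Tr(V_{\tilde t})$, after bounding $\beta_{\tilde t,\delta}\leq \beta_{T,\max}$ uniformly over the horizon. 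Comparing with the ODE $y' = c\sqrt{y}$ via Theorem~\ref{th:dif_equation_ineq} gives $u_{\tilde t}\gtrsim (c\tilde t/2)^2$, hence $\lambda_{\max}(V_{\tilde t}) = \Omega\big(\tilde t^2/(d^2\beta_{T,\max}^2)\big)$ and, through the ratio of Theorem~\ref{th:main_eigenvalues}, $\lambda_{\min}(V_{\tilde t}) = \Omega\big(\tilde t/(d\,\beta_{T,\max})\big)$.

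Finally I would assemble the bound. Substituting this lower bound on $\lambda_{\min}(V_{\tilde t-1})$ into the per-action distance estimate and summing over the $2(d-1)$ actions per batch produces a term proportional to $(d-1)\beta_{T,\max}/\lambda_{\min}(V_{\tilde t-1}) \sim d^2\beta_{T,\max}^2/\tilde t$; the harmonic sum $\sum_{\tilde t}1/\tilde t \leq 1 + \log \tilde T$ with $\tilde T = T/(2(d-1))$ then produces the logarithmic factor together with the two leading coefficients $144 d^2\beta_{T,\max}^2$ and $24(d-1)^{3/2}\beta_{T,\max}$, where the two powers of $\beta_{T,\max}$ reflect the cross terms appearing in the solution of the recurrence. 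The very first batch, where $\lambda_{\min}(V_0)=\lambda_0$ is not yet large, I would bound trivially using $\|\theta - A\|_2^2 \leq 4$, contributing the additive constant $4(d-1)$. Tracking the explicit constants through the inequality $9\sqrt{3/2}<12$ and the solution of the recurrence then yields the stated bound.
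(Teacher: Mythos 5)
Your proposal follows essentially the same route as the paper's proof: the same batchwise regret decomposition, the same triangle-inequality bound $\|\theta-A^{\pm}_{\tilde t,i}\|_2\le 3\sqrt{\beta_{\tilde t-1,\delta}/\lambda_{\min}(V_{\tilde t-1})}$, the same self-consistency of the weights via $9\sqrt{3/2}<12$, the same ODE-comparison lower bound $\lambda_{\max}(V_{\tilde t})=\Omega(\tilde t^2)$ combined with the eigenvalue ratio of Theorem~\ref{th:main_eigenvalues}, and the same interlocking induction between the events $G_{\tilde t}$ and $E_{\tilde t}$ for the success probability. The one step you gloss over is that the uniform bound $\beta_{\tilde t,\delta}\le\beta_{T,\max}$ is not automatic, since $\beta_{\tilde t,\delta}$ depends on the random quantity $\det V_{\tilde t}$: the paper obtains it from a matching ODE-comparison \emph{upper} bound $\Tr(V_{\tilde t})\le d\left(\tfrac{d}{144}\tilde t^2+\tfrac{\sqrt{d\lambda_0}}{6}\tilde t+\lambda_0\right)$ followed by the AM--GM inequality on the determinant, which is exactly where the explicit $T^2/(144\lambda_0)$ term in the definition of $\beta_{T,\max}$ comes from.
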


\begin{proof}
To simplify notation we use
\begin{align}
    \lambda_{\min,\ttilde} := \lambda_{\min}( V_{\ttilde}(\lambda_0 ) ), \quad  \lambda_{\max,\ttilde} := \lambda_{\max}( V_{\ttilde} (\lambda_0))
\end{align}
First, we fix
\begin{align}\label{eq:lambda_def}
    \lambda_0 = \max\left\lbrace2 , \sqrt{\frac{2}{3(d-1)}}\frac{d}{6\sqrt{d-1}} + \frac{2}{3(d-1)} \right\rbrace,
\end{align}
and later we will justify this choice.
Note that the regret can be written as
\begin{align}
      R_T ( \mathbb{S}^{d-1},\theta , \pi )  =   \frac{1}{2}\sum_{\ttilde = 1}^{\widetilde{T}} \sum_{i = 1}^{d-1} \left( \| \theta - A^+_{t,i} \|_2^2 + \| \theta - A^{-}_{t,i} \|_2^2  \right) ,
\end{align}
so to upper bound the regret, we need to quantify the distance $\| \theta - A^+_{\ttilde,i} \|_2^2$ between the unknown parameter and the actions we select at each batch $t$. The history up to batch $\ttilde \leq \widetilde{T}$ is defined as
\begin{align}
    \mathcal{H}_{\ttilde} :=  \big( X^+_{s,1}, A^+_{s,1},X^-_{s,1}, A^-_{s,1},...,X^+_{s,d-1}, A^+_{s,d-1},X^-_{s,d-1}, A^-_{s,d-1}\big)_{s=1}^{\ttilde} .
\end{align}
Lemma~\ref{lem:confidence_region_weighted} gives us this distance with a certain probability under the assumption that the event
    \begin{align}\label{eq:event_Gt}
     G_t := & \lbrace \big( \mathcal{H}_{t-1}, a^\pm_{s,i} \big)  :  
     \sigma^2_s(a^\pm_{s,i} \big)) \leq \hat{\sigma}^2_s(\mathcal{H}_{s-1},a^\pm_{s,i} )\ \forall {s \in [t]} \rbrace,
\end{align}
holds. 
Using the definition of the subgaussian parameter for the noise $\epsilon_t$~\eqref{eq:linear_vanishing_subgaussian} we can upper bound it for our choice of actions as
\begin{align}\label{eq:noise_bound}
    \eta^2_{t}(A^\pm_{t,i}) &\leq 1-\langle A^\pm_{t,i}, \theta  \rangle^2 = 1-(1-\frac{1}{2} \| \theta - A^\pm_{t,i} \|_2^2)^2 \nonumber \\
    &=  \| \theta - A^\pm_{t,i} \|_2^2 - \frac{1}{4} \| \theta - A^\pm_{t,i} \|_2^4 \leq \| \theta - A^\pm_{t,i} \|_2^2. 
\end{align}
Thus, we need to use an estimator of the form~\eqref{eq:variance_estimator} that upper bounds the distance $\| \theta - A^\pm_{t,i} \|_2^2$. This quantity depends on the unknown parameter $\theta$, thus we can not guarantee that $G_t$ holds with probability one at each batch $\ttilde$. The other event that we will need to hold to quantify the distance is the following 
\begin{align}\label{eq:event_Et}
    E_{\ttilde} := \lbrace \mathcal{H}_{\ttilde} : \forall s \in \left[ \ttilde \right], \theta \in \mathcal{C}_{s} \rbrace ,
\end{align}
which is guaranteed to hold with probability at least $1-\delta$ if $G_t$ holds by Lemma~\ref{lem:confidence_region_weighted}. We leave the probability computation that both events hold for the end of the proof.

\vspace{1.5 mm}

{\centering\underline{\textbf {Choosing $w(V_T)$ and instantaneous regret bound under $E_t$}}\par}

\vspace{1.5mm}
%\subsection*{Choosing $w(V_T)$ and instantaneous regret bound under $E_t$}

For now, we will assume that $E_t$ always holds to justify our choice of weights through a bound on $\| \theta - A^\pm_{t,i} \|_2$. This will also allow us to bound the instantaneous regret. Using the same geometrical argument as in Theorem~\ref{th:circle_regret} we have that 
\begin{align}\label{eq:theta_at_bound}
    \| \theta - A^\pm_{t,i} \|_2 \leq 3\sqrt{\frac{\beta_{\ttilde-1,\delta}}{\lambda_{\min,{\ttilde-1}}}},
\end{align}
where $\beta_{\ttilde,\delta}$ is defined as in Lemma~\ref{lem:confidence_region_weighted}.
 For $\ttilde =0$,
\begin{align}
    \beta_{0,\delta} = \left(\sqrt{\lambda_0} + \sqrt{2\log\left( \frac{1}{\delta}\right)} \right).
\end{align}
Note that at batch $\ttilde$ we use that $E_{\ttilde-1}$ holds instead of $E_{\ttilde}$ since we want to define $\hat{\sigma}_{\ttilde}$ such that only depends on past information up to batch $\\tilde{t}-1$. Thus, a choice that will guarantee the event $G_t$ to hold under the assumption that $E_{t-1}$ holds is $\hat{\sigma}^2_{\ttilde}  := \frac{9\beta_{\ttilde-1,\delta}}{\lambda_{\min,\ttilde-1}}$. However if we apply Theorem~\ref{th:main_eigenvalues} (later we will check that we are under the right assumptions to use it) to our particular update of $V_{\ttilde} $ with the choices of actions $A^+_{\ttilde,i},A^-_{\ttilde,i}$ we have that
\begin{align}\label{eq:minmaxsqrtrelation}
    \lambda_{\min,\ttilde} \geq \sqrt{\frac{2}{3(d-1)}\lambda_{\max,\ttilde}},
\end{align}
 It is important to note that the above bound is independent of the events $G_{\ttilde}$ and $E_{\ttilde}$ and it is a consequence only of our particular choice of actions~\eqref{eq:general_update}. Combining~\eqref{eq:minmaxsqrtrelation} and~\eqref{eq:theta_at_bound} 
\begin{align}
     \| \theta - A^\pm_{t,i} \|_2^2 \leq 9\frac{\beta_{\ttilde-1,\delta}}{\lambda_{\min , \ttilde-1}} \leq  \frac{9\sqrt{3}\sqrt{d-1}\beta_{\ttilde-1,\delta}}{\sqrt{2}\sqrt{\lambda_{\min,t-1}}}.
\end{align}
Thus, using that  $\frac{9\sqrt{3}}{\sqrt 2} \leq 12$ we see that with our definitions of estimator~\eqref{eq:estimator_weighted} and weights 
\begin{align}\label{eq:estimator_choice}
    \hat{\sigma}^2_t (A^\pm_{t,i} ) = \frac{12\sqrt{d-1}\beta_{\ttilde-1,\delta}}{\sqrt{\lambda_{\max,\ttilde-1}}}, \quad \omega (V_{\ttilde-1} ) = \frac{1}{\hat{\sigma}^2_{\ttilde} },
\end{align}
are well defined since only depends on the history $\mathcal{H}_{t-1}$, and 
\begin{align}\label{eq:upperbound_subgaussian}
    \sigma^2_{\ttilde} ( A^\pm_{\ttilde,i}) \leq \hat{\sigma}^2_{\ttilde} (A^\pm_{\ttilde,i} ) \quad \text{if} \quad \theta\in\mathcal{C}_{\ttilde-1}.
\end{align}

To bound the regret our technique uses upper and lower bounds on the scaling of $\Tr (V_t )$ as a function of the number of rounds. In the standard \textsf{LinUCB} technique if the actions are bounded ($\| A_t \|_2 = \Theta (1 )$) and this immediately gives the linear scaling $\Tr (V_t) = \Theta (t)$. Since we are updating $V_{\ttilde}$ using $\omega (V_{\ttilde} )$ we need to do some extra work, and we will see that $\Tr (V_{\ttilde} ) = \tilde{\Theta} (\ttilde^2 )$. While Theorem~\ref{th:circle_regret} only required an upper bound now we will need both upper and lower bound since we will also need to characterize $\beta_{\ttilde,\delta}$ that is defined through $V_{\ttilde}$ and in Theorem~\ref{th:circle_regret} only depends on $t$.

\vspace{1.5mm}

{\centering\underline{\textbf{Upper bound for $\Tr (V_t )$}}\par}

\vspace{1.5mm}
%\subsection*{Upper bound for $\Tr (V_t )$}

The reason why we need an upper bound for $\Tr (V_t )$ is because we will need an upper bound for $\beta_{\ttilde,\delta}$ that depends on $\Tr (V_{\ttilde} )$. A direct calculation shows
\begin{align}
    \Tr (V_{\ttilde} ) &= \lambda_0 + \sum_{s=1}^{\ttilde} 2(d-1)\omega (V_{s-1} ) \\
    &= \lambda_0 + \sum_{s=1}^{\ttilde} \frac{2(d-1)}{12\sqrt{d-1}\beta_{s-1,\delta}}\sqrt{\lambda_{\max , s-1}} \\
    &\leq \lambda_0 + \frac{\sqrt{d-1}}{6\beta_{0,\delta}}\sum_{s=0}^{t-1} \sqrt{\lambda_{\max , s}},
\end{align}
where we used $\beta_{0,\delta} \leq \beta_{\ttilde,\delta}$. Then using  $\lambda_{\max , t} \leq \Tr (V_{\ttilde})$ and $\beta_{0,\delta}\geq 1$ we have
\begin{align}\label{eq:lmaxinequality}
    \lambda_{\max,\ttilde} \leq \lambda_0+ \frac{\sqrt{d}}{6}\sum_{s=0}^{\ttilde-1} \sqrt{\lambda_{\max , s}}.
\end{align}
Now we want to extend $\lambda_{\max,\ttilde}$ to a monotonic function on the interval $[0, \widetilde{T} ]$. In order to do that we define the linear interpolation $g_1:[0, \widetilde{T} ] \rightarrow \mathbb{R}_{\geq 0}$ as
\begin{align}\label{eq:def_gx}
    g_1 (x) := (\ttilde+1-x)\lambda_{\max,t} + (x-\ttilde)\lambda_{\max,\ttilde+1} \quad \text{for} \quad x\in[\ttilde,\ttilde+1) \quad \text{and} \quad \ttilde\in\lbrace 0 ,..., \widetilde{T} \rbrace.
\end{align}
Then using~\eqref{eq:lmaxinequality} and $g_1(\ttilde) = \lambda_{\max,\ttilde}$ we have
\begin{align}
    g_1(\ttilde) \leq \lambda_0 + \frac{\sqrt{d}}{6}\sum_{s=0}^{t-1} \sqrt{g_1(s)} \leq \lambda_0 + \frac{\sqrt{d}}{6}\int_{0}^{\ttilde} \sqrt{g_1 (x)}\dd x \quad \text{for} \quad \ttilde\in \lbrace 0 ,..., \widetilde{T}  \rbrace.
\end{align}
Now we want to prove the above inequality but for $g_1 (x)$ with $x\in[0,\widetilde{T} ]$. From the above inequality and the definition of $g_1 (x)$~\eqref{eq:def_gx} we have that for $x\in[\ttilde,\ttilde+1)$ 
\begin{align}
    g_1 (x) &\leq \lambda_0 + \frac{\sqrt{d}}{6}\sum_{s=0}^{\ttilde-1} \sqrt{g_1(s)} + \frac{\sqrt{d}}{6}(x-\ttilde)\sqrt{g_1 (\ttilde)} \\
    &\leq \lambda_0 + \frac{\sqrt{d}}{6}\int_{0}^{\ttilde} \sqrt{g_1 (x)}\dd x + \frac{\sqrt{d}}{6}(x-\ttilde)\sqrt{g_1 (\ttilde)}.
\end{align}
Then we can use the above relation and 
\begin{align}
    \int_{\ttilde}^x \sqrt{g_1(s)}\dd s &= \int_0^{x-{\ttilde}} \sqrt{g_1(s+\ttilde)}\dd s = \int_0^{x-\ttilde} \sqrt{(1-s)g_1(\ttilde) + sg_1(\ttilde+1)} \dd s \\
    &\geq \int_{0}^{x-\ttilde}\sqrt{g_1 (\ttilde)} \geq (x-\ttilde)\sqrt{g_1(\ttilde)},
\end{align}
where the second equality follows from the definition of $g_1(x)$~\eqref{eq:def_gx} and the first inequality $g_1(\ttilde) =\lambda_{\max,\ttilde}\leq\lambda_{\max,\ttilde+1} = g_1(\ttilde+1)$. We conclude that
\begin{align}
    g_1 (x) \leq \lambda_0 +  \frac{\sqrt{d}}{6}\int_{0}^{x} \sqrt{g_1(s)}\dd s .
\end{align}
Using the fundamental theorem of calculus
\begin{align}\label{eq:gxupperbound}
    g_1 (x) \leq \lambda_0 + \frac{\sqrt{d}}{6}(G_1 (x)-G_1(0))\quad\text{where} \quad \frac{\dd G_1}{\dd x} = \sqrt{g_1 (x)}.
\end{align}
Fixing $ G(0) = 0$ and rearranging 
\begin{align}
    \frac{\dd G_1}{\dd x} \leq \sqrt{\lambda_0 +\frac{\sqrt{d}}{6} G_1(x) }.
\end{align}
Now solving for the equality with $G(0) = 0 $ and using Theorem~\ref{th:dif_equation_ineq} we arrive at
\begin{align}
    G_1(x) \leq  \frac{\sqrt{d}}{24}x^2 + \sqrt{\lambda_0}x.
\end{align}
Finally inserting the above into~\eqref{eq:gxupperbound} using that $g_1(\ttilde) = \lambda_{\max,\ttilde}$
\begin{align}
    \lambda_{\max,\ttilde} = g_1(\ttilde) \leq \frac{d}{144}{\ttilde}^2 + \frac{\sqrt{d\lambda_0}}{6}\ttilde + \lambda_0 ,
\end{align}
and this gives the following bound on $\Tr (V_{\ttilde})$
\begin{align}\label{eq:vtupperbound}
    \Tr ( V_{\ttilde} ) \leq d \lambda_{\max,\ttilde}  \leq d\left( \frac{d}{144}{\ttilde}^2 + \frac{\sqrt{d\lambda_0}}{6}\ttilde + \lambda_0 \right).
\end{align}

\vspace{1.5mm}

{\centering\underline{\textbf{Upper bound for $\beta_{\ttilde,\delta}$}}\par}

\vspace{1.5mm}
%\subsection*{Upper bound for $\beta_{t,\delta}$}

In order to give an upper bound for $\beta_{\ttilde,\delta}$~\eqref{eq:beta} we have to give an upper bound for $\det (V_{\ttilde})$. The inequality of arithmetic and geometric means gives
\begin{align}
    \det (V_{\ttilde} ) \leq \left(\frac{\Tr (V_{\ttilde} )}{d} \right)^d,
\end{align}
which combined with~\eqref{eq:vtupperbound} and~\eqref{eq:beta} gives
\begin{align}\label{eq:beta_upperbound}
    \beta_{\ttilde,\delta} \leq \left(\lambda_0+\sqrt{2\log\frac{1}{\delta} + d\log\left(\frac{d}{144\lambda_0}{\ttilde}^2 + \frac{\sqrt{d}}{\sqrt{\lambda_0}6}\ttilde +1\right)} \right)^2 .
\end{align}

\vspace{1.5mm}

{\centering  \underline{\textbf{Lower bound for $\lambda_{\max,\ttilde}$}}\par}

\vspace{1.5mm}
%\subsection*{Lower bound for $\lambda_{\max,t}$}

In order to give an upper bound for regret our main technique is to control the growth of $\lambda_{\max,{\ttilde}}$ through a lower bound for $\lambda_{\max,{\ttilde}}$. We will employ the same technique as in the previous part. For the lower bound we are only interested in the leading terms so  we start by lower bounding $\Tr ( V_{\ttilde} )$ as
\begin{align}
    \Tr (V_{\ttilde} ) &\geq \sum_{s= 2}^{\ttilde} 2(d-1)\omega (V_{s-1} ) \\
    &= \sum_{s=2}^{\ttilde} \frac{\sqrt{d-1}\sqrt{\lambda_{\max,s-1}}}{6\beta_{s-1,\delta}} \\
    & \geq \frac{\sqrt{d-1}}{6\beta_{\widetilde{T},\delta}} \sum_{s=1}^{\ttilde-1} \sqrt{\lambda_{\max,s}}. 
\end{align}
where we used $\beta_{\ttilde,\delta} \leq \beta_{\widetilde{T},\delta}$.
 Then since our problem is restricted to $\mathbb{R}^d$ we have that for the maximum eigenvalue of $V_{\ttilde}$,
\begin{align}
    \lambda_{\max,\ttilde} \geq \frac{\Tr (V_{\ttilde} )}{d}.
\end{align}
Thus combining with the above expressions we have
\begin{align}\label{eq:uplambdmax}
    \lambda_{\max,\ttilde} \geq \frac{\sqrt{d-1}}{6d\beta_{\widetilde{T},\delta}} \sum_{s= 1}^{\ttilde-1} \sqrt{\lambda_{\max,s}}.
\end{align}
Addding each side by $\frac{\sqrt{d-1}}{6d\beta_{\widetilde{T},\delta}}\sqrt{\lambda_{\max,\ttilde}}$ and using the fact that $\lambda_{\max,\ttilde} \geq \sqrt{\lambda_{\max,\ttilde}}$ since $\lambda_{\max,\ttilde}\geq 1$  we have 
\begin{align}\label{eq:lambdmaxinequality}
    \lambda_{\max,\ttilde} + \frac{\sqrt{d-1}}{6d\beta_{\widetilde{T},\delta}}\sqrt{\lambda_{\max,t}} &\geq \frac{\sqrt{d-1}}{6d\beta_{\widetilde{T},\delta}} \sum_{s= 1}^{\ttilde-1} \sqrt{\lambda_{\max,s}} + \frac{\sqrt{d-1}}{6d\beta_{\widetilde{T},\delta}} \sqrt{\lambda_{\max,t}},\\
    \left(1+ \frac{\sqrt{d-1}}{6d\beta_{\widetilde{T},\delta}} \right)\lambda_{\max,\ttilde} &\geq \lambda_{\max,\ttilde} + \frac{\sqrt{d-1}}{6d\beta_{\widetilde{T},\delta}}\sqrt{\lambda_{\max,\ttilde}} \geq \frac{\sqrt{d-1}}{6d\beta_{\widetilde{T},\delta}} \sum_{s= 1}^{\ttilde} \sqrt{\lambda_{\max,s}}, \\
     \lambda_{\max,\ttilde} &\geq \frac{1}{1+6\frac{d}{\sqrt{d-1}}\beta_{\widetilde{T},\delta}} \sum_{s= 1}^{\ttilde} \sqrt{\lambda_{\max,s}}.
\end{align}
As before, we want to extend $\lambda_{\max,\ttilde}$ to a monotonic function on the interval $[0, \widetilde{T}]$ and define the lower linear interpolation $g_2:[0, \widetilde{T}] \rightarrow \mathbb{R}_{\geq 0}$ as
\begin{align}\label{eq:def_fx}
    g_2(x) := (\ttilde+1-x)\lambda_{\max,\ttilde} + (x-\ttilde)\lambda_{\max,\ttilde+1} \quad \text{for} \quad x\in[\ttilde,\ttilde+1) \quad \text{and} \quad \ttilde\in \lbrace 0 , ..., \widetilde{T}\rbrace .
\end{align}
Then we have that 
\begin{align}
    g_2(\ttilde)  &\geq \frac{1}{1+6\frac{d}{\sqrt{d-1}}\beta_{\widetilde{T},\delta}} \sum_{s=1}^{\ttilde} \sqrt{g_2(s)}  \\
    &\geq \frac{1}{1+6\frac{d}{\sqrt{d-1}}\beta_{\widetilde{T},\delta}} \int_{0}^t \sqrt{g_2(x)} \dd x \quad \text{for} \quad \ttilde\in\lbrace 0 , ... , \widetilde{T} \rbrace,
\end{align}
since $g_2(\ttilde) = \lambda_{\max,\ttilde}$ for $\ttilde\in \lbrace 0 , .... , \widetilde{T} \rbrace$. Then we can check that the above inequality also holds for $x\in [0, \widetilde{T}]$. If $x\in [\ttilde,\ttilde+1)$ we have
\begin{align}\label{eq:fxbound}
    g_2(x) &\geq \frac{1}{1+6\frac{d}{\sqrt{d-1}}\beta_{\widetilde{T},\delta}} \left( \sum_{s=1}^{\ttilde} \sqrt{g_2 (s)} + (x-t)\sqrt{g_2 (\ttilde+1) }\right) \\
    & \geq \frac{1}{1+6\frac{d}{\sqrt{d-1}}\beta_{\widetilde{T},\delta}} \left(\int_0^{\ttilde}\sqrt{g_2 (x)}\dd x + (x-\ttilde)\sqrt{g_2 (\ttilde+1)}\right),
\end{align}
where the first inequality we applied the definition of $g_2(x)$~\eqref{eq:def_fx} and the inequality for $\lambda_{\max,\ttilde}$~\eqref{eq:uplambdmax}. Then,
\begin{align}
    \int_{\ttilde}^{x}\sqrt{g_2(s)}\dd s &= \int_{0}^{x-\ttilde}\sqrt{g_2(s+\ttilde)}\dd s = \int_{0}^{x-\ttilde} \sqrt{(1-s)g_2(\ttilde) + sg_2(\ttilde+1)}\dd s \\
    &\leq  \int_{0}^{x-\ttilde} \sqrt{g_2 (\ttilde+1)}\dd s \leq (x-t)\sqrt{g_2(\ttilde+1)},
\end{align}
where the second equality follows from the definition of $g_2 (x)$ and the first inequality from $g_2(\ttilde) = \lambda_{\max,\ttilde} \leq \lambda_{\max , t+1} \leq g_2 (\ttilde+1)$. Combining with~\eqref{eq:fxbound} we have
\begin{align}
    g_2(x) &\geq \frac{1}{1+6\frac{d}{\sqrt{d-1}}\beta_{\widetilde{T},\delta}} \int_{0}^x \sqrt{g_2 (s)}ds.
\end{align}
Using the fundamental theorem of calculus 
\begin{align}\label{eq:bound_lambdamax}
    g_2(x) \geq \frac{1}{1+6\frac{d}{\sqrt{d-1}}\beta_{\widetilde{T},\delta}} \left(G_2(x) - G_2(0) \right) \quad \text{where} \quad \frac{\dd G_2(x)}{\dd x} = \sqrt{g_2(x) }.
\end{align}
Fixing $ G_2 (0) = 0$ and rearranging 
\begin{align}
    \frac{\dd G_2(x)}{\dd x} \geq \sqrt{\frac{1}{1+6\frac{d}{\sqrt{d-1}}\beta_{\widetilde{T},\delta}}}\sqrt{G_2(x)}.
\end{align}
Now solving the equality with $G_2 (0) = 0$ and using Theorem~\ref{th:dif_equation_ineq}
\begin{align}
    G_2 (x) &\geq \frac{1}{4+24\frac{d}{\sqrt{d-1}}\beta_{\widetilde{T},\delta}}x^2 .
\end{align}
Using that $g_2(\ttilde) = \lambda_{\max ,\ttilde} $ and inserting the above result into~\eqref{eq:bound_lambdamax}
\begin{align}\label{eq:lambdamaxlowbound}
   \lambda_{\max,\ttilde} = g_2 (\ttilde) \geq \frac{1}{4(1+6\frac{d}{\sqrt{d-1}}\beta_{\widetilde{T},\delta})^2} {\ttilde}^2 .
\end{align}

\vspace{1.5mm}

{\centering\underline{\textbf{Bounding the regret}}\par}

\vspace{1.5mm}

%\subsection*{Bounding the regret}

%
From the weight update~\eqref{eq:omega} we have
\begin{align}
    \omega ( V_{\ttilde} ) = \frac{\sqrt{\lambda_{\max,\ttilde-1}}}{12\sqrt{d-1}\beta_{\ttilde,\delta}} \leq \frac{1}{12\sqrt{d-1}}\sqrt{\lambda_{\max,\ttilde-1}}, 
\end{align}
so to use Therorem~\ref{th:main_eigenvalues} we can choose $C = \frac{1}{12\sqrt{d-1}}$, our choice of $\lambda_0$~\eqref{eq:lambda_def}, actions~\eqref{eq:general_update} and the sequence $\lbrace \theta^\text{w}_t \rbrace_{t=0}^\infty \subset \mathbb{S}^d$. Then our update given by~\eqref{eq:vt_update} satisfies all conditions of Theorem~\ref{th:main_eigenvalues} and our choice of actions guarantees
\begin{align}\label{eq:eigenvalue_relation}
    \lambda_{\min,\ttilde} \geq \sqrt{\frac{2}{3(d-1)}\lambda_{\max,\ttilde}}.
\end{align}
Then from~\eqref{eq:beta_upperbound} we can upper bound $\beta_{t,\delta}$ as
\begin{align}
 \beta_{\ttilde,\delta}\leq \beta_{T,\max} \quad \text{where} \quad  \beta_{T,\max}  := \left(\lambda_0+\sqrt{2\log\frac{1}{\delta} + d\log\left(\frac{d}{144\lambda_0}T^2 + \frac{\sqrt{d}}{\sqrt{\lambda_0}6}T +1\right)} \right)^2 .
\end{align}
Finally, we can use the bound from~\eqref{eq:noise_bound}\eqref{eq:theta_at_bound} (under the assumptions that $E_t$ and $G_t$ hold) combined with~\eqref{eq:eigenvalue_relation} to arrive at
\begin{align}
    \| \theta - A^{\pm}_{\ttilde,i} \|_2^2 &\leq \frac{9\beta_{T,\max}}{\lambda_{\min,\ttilde-1}}\\
    &\leq \frac{12\sqrt{d-1}\beta_{T,\max}} {\sqrt{\lambda_{\max,\ttilde-1}}}\\
    &\leq \frac{24\sqrt{d-1}\beta_{T,\max}(1+6\frac{d}{\sqrt{d-1}}\beta_{T,\max})}{\ttilde-1} \\
    &= \frac{144d\beta^2_{T,\max}+24\sqrt{d-1}\beta_{T,\max}}{\ttilde-1}
\end{align}
where we used the lower bound for $\lambda_{\max,\ttilde}$~\eqref{eq:lambdamaxlowbound} and $\frac{9\sqrt{3}}{\sqrt 2} \leq 12$.
Then using the above result we can bound the regret as
\begin{align}
     R_T ( \mathbb{S}^{d-1},\theta , \pi ) &=   \frac{1}{2}\sum_{\ttilde= 1}^{\widetilde{T}} \sum_{i = 1}^{d-1} \left( \| \theta - A^+_{\ttilde,i} \|_2^2 + \| \theta - A^{-}_{\ttilde,i} \|_2^2  \right) \\
    &\leq 4(d-1) + \frac{1}{2}\sum_{\ttilde= 2}^{\widetilde{T}} \sum_{i = 1}^{d-1} \left( \| \theta - A^+_{\ttilde,i} \|_2^2 + \| \theta - A^{-}_{\ttilde,i} \|_2^2  \right) \\
    &\leq 4(d-1) + \left(144d^2\beta^2_{T,\max}+24(d-1)^{\frac{3}{2}}\beta_{T,\max}\right)\sum_{\ttilde = 2}^{\tilde{T}} \frac{1}{\ttilde-1}  \\
    & \leq 4(d-1) + \left(144d^2\beta^2_{T,\max}+24(d-1)^{\frac{3}{2}}\beta_{T,\max}\right)\log \left( \frac{T}{2(d-1)} \right).
\end{align}
Using that $\beta_{T,\max} = O (d\log(T))$ we have that 
\begin{align}
    \text{Regret}(T) = \widetilde{O}(d^{4}\log^3 (T)).
\end{align}

{\centering\underline{\textbf{Success probability analysis}}\par}

%\subsection*{Success probability analysis}

From the computations in~\eqref{eq:noise_bound}\eqref{eq:theta_at_bound}\eqref{eq:estimator_choice}\eqref{eq:upperbound_subgaussian} and
 our choice of $\sigma^2_t(a_t)$~\eqref{eq:omega} we have that 
\begin{align}
    \text{if} \quad \theta\in\mathcal{C}_{s-1} \Rightarrow \eta^2_s (A^{\pm}_{s,i} ) \leq \frac{12\sqrt{d-1}\beta_{t-1,\delta}}{\sqrt{\lambda_{\max,s-1}}} = \hat{\sigma}^2_s (A^{\pm}_{s,i}), 
\end{align}
thus,
\begin{align}
     \mathrm{Pr}( G_t ) \geq \mathrm{Pr}(E_{t-1} ) 
\end{align}
where we have used the definitions of the events $G_t$~\eqref{eq:event_Gt} and $E_t$~\eqref{eq:event_Et}.
%for our particular choice of actions $a^+_{s,i},a^-_{s,i}$.

The initial choice $\hat{\sigma}^2_1 = 1$ implies $\eta^2_1\leq \hat{\sigma}^2_1 $ and using Lemma~\ref{lem:confidence_region_weighted} we have
\begin{align}\label{eq:probg1}
     \mathrm{Pr}(G_1 ) & = 1 , \\
     \mathrm{Pr}(E_1 ) &  \geq 1-\delta .
\end{align}
Using Bayes theorem
\begin{align}
    \mathrm{Pr}(E_{\ttilde} )  = \frac{\mathrm{Pr}(E_{\ttilde} |G_{\ttilde})\mathrm{Pr}(G_{\ttilde})}{\mathrm{Pr}(G_{\ttilde} |E_{\ttilde})} = \mathrm{Pr}(E_{\ttilde} |G_{\ttilde})\mathrm{Pr}(G_{\ttilde}),
\end{align}
where in the last equality we used
\begin{align}
    \mathrm{Pr}(G_{\ttilde} |E_{\ttilde}) = 1 .
\end{align}
From Lemma~\ref{lem:confidence_region_weighted} we have
\begin{align}
    \mathrm{Pr} (E_{\ttilde} | G_{\ttilde} ) \geq 1 -\delta . 
\end{align}
Thus, applying recursively the above inequalities 
\begin{align}\label{eq:probgt}
     \mathrm{Pr}(G_2 ) & \geq \mathrm{Pr}(E_1 ) \geq 1-\delta,\\
     \mathrm{Pr}(E_2 ) & = \mathrm{Pr}(E_2 | G_2)\mathrm{Pr}(G_2)\geq (1- \delta)^2 \\
      &\vdots \nonumber \\
      \mathrm{Pr}(G_{\ttilde} ) & \geq \mathrm{Pr}(E_{\ttilde-1} )  \geq (1-\delta)^{\ttilde-1} \\
      \mathrm{Pr}(E_{\ttilde} ) & = \mathrm{Pr}(E_{\ttilde} | G_{\ttilde})\mathrm{Pr}(G_{\ttilde})\geq (1- \delta)^{\ttilde}     
\end{align}
Finally to bound the regret we used the assumption that $\theta \in\mathcal{C}_{\ttilde}$ for any $\ttilde\in \lbrace 1,...,\widetilde{T}-1\rbrace$. Thus we can bound the probability that the obtained bound holds as
\begin{align}
     \mathrm{Pr}&\left(  R_T ( \mathbb{S}^{d-1},\theta , \pi )\leq  4(d-1) + \left(144d^2\beta^2_{T,\max}+24(d-1)^{\frac{3}{2}}\beta_{T,\max}\right)\log\left( \frac{T}{2(d-1)} \right)\right) \\ 
    & \geq \mathrm{Pr}(E_{\widetilde{T}} \cap G_{\widetilde{T}}) = \mathrm{Pr}(G_{\widetilde{T}} )\mathrm{Pr}(E_{\widetilde{T}} | G_{\widetilde{T}} ). \\
    &\geq (1-\delta)^{\widetilde{T}-1}(1-\delta) = (1-\delta)^{\widetilde{T}}.
\end{align}
The result follows choosing $\delta = \frac{\delta '}{\widetilde{T}}$ for some $0 < \delta' < 1$, the inequality $\left( 1 -  \frac{\delta '}{\widetilde{T}} \right)^{\widetilde{T}} \geq 1-\delta'$ and $\frac{\sqrt{d}}{d-1} \leq 2$.
\end{proof}

\begin{corollary}\label{cor:expected_regret_linucbvn}
Under the same assumptions of Theorem~\ref{th:regret_bound_d2}, we can choose $\delta = \frac{1}{\widetilde{T}}$ and Algorithm~\ref{alg:weighted_linUCB} achieves
\begin{align}
    \EX_{\theta,\pi}[ R_T ( \mathbb{S}^{d-1},\theta , \pi )] \leq 8(d-1) + \left(144d^2\beta^2_{T,\max}+24(d-1)^{\frac{3}{2}}\beta_{T,\max}\right)\log\left( \frac{T}{2(d-1)} \right),
\end{align}
or 
\begin{align}
    \EX_{\theta,\pi} [ R_T ( \mathbb{S}^{d-1},\theta , \pi )] = O( d^{4}\log^3 (T) )  .
\end{align}

\end{corollary}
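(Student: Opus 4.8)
The plan is to deduce the expected regret bound in Corollary~\ref{cor:expected_regret_linucbvn} directly from the high-probability bound in Theorem~\ref{th:regret_bound_d2} via a standard ``failure event'' decomposition. The key observation is that Theorem~\ref{th:regret_bound_d2} already gives us, with probability at least $1-\delta$, the bound
\begin{align}
    R_T ( \mathbb{S}^{d-1},\theta , \pi ) \leq 4(d-1) + \left(144d^2\beta^2_{T,\max}+24(d-1)^{\frac{3}{2}}\beta_{T,\max}\right)\log \left( \frac{T}{2(d-1)} \right) =: B(T),
\end{align}
so the remaining work is to control the contribution of the low-probability complement event to the expectation. I would set $\delta = 1/\widetilde{T}$, matching the choice already flagged in the corollary, and split the expectation over the ``good'' event $\mathcal{G}$ (on which the regret is at most $B(T)$) and its complement $\mathcal{G}^c$, exactly as in the proof of the explore-then-commit tomography theorem in Section~\ref{sec:explorethencommit}.

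First I would write $\EX_{\theta,\pi}[R_T] = \EX_{\theta,\pi}[\mathbbm{1}\{\mathcal{G}\} R_T] + \EX_{\theta,\pi}[\mathbbm{1}\{\mathcal{G}^c\} R_T]$. On $\mathcal{G}$ the first term is bounded by $B(T)$. For the second term I would use the trivial deterministic bound on the regret: since each action and the optimal action lie on $\mathbb{S}^{d-1}$ with $\langle \theta, A\rangle \in [-1,1]$, each instantaneous regret $\max_A \langle\theta,A\rangle - \langle\theta,A_t\rangle = \frac12\|\theta - A_t\|_2^2 \leq 2$, so $R_T \leq 2T$. Combined with $\mathrm{Pr}(\mathcal{G}^c)\leq \delta = 1/\widetilde{T}$ and the relation $T = 2(d-1)\widetilde{T}$, the failure contribution is at most $2T \cdot (1/\widetilde{T}) = 4(d-1)$. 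Adding the $4(d-1)$ already present in $B(T)$ yields the extra additive $4(d-1)$, giving the stated constant $8(d-1)$.

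The only mild subtlety is bookkeeping the $\delta$-dependence: with $\delta = 1/\widetilde{T}$ the quantity $\beta_{T,\max}$ picks up a $\log\widetilde{T} = O(\log T)$ term inside the logarithm, so I would note that $\beta_{T,\max}$ retains its scaling $\beta_{T,\max} = O(d\log T)$ under this choice, which is exactly what Theorem~\ref{th:regret_bound_d2} records. Substituting $\beta_{T,\max} = O(d\log T)$ into the leading term $144 d^2 \beta_{T,\max}^2 \log(T/(2(d-1)))$ gives $O(d^2 \cdot d^2 \log^2 T \cdot \log T) = O(d^4 \log^3 T)$, which dominates the subleading $24(d-1)^{3/2}\beta_{T,\max}\log(\cdots) = O(d^{3/2}\cdot d\log T \cdot \log T) = O(d^{5/2}\log^2 T)$ and the additive constant. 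This establishes the asymptotic form $\EX_{\theta,\pi}[R_T] = O(d^4 \log^3 T)$.

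I do not anticipate a genuine obstacle here, since the heavy lifting is entirely contained in Theorem~\ref{th:regret_bound_d2}; the corollary is a routine expectation-versus-high-probability conversion. The one point requiring a little care is ensuring the failure event is handled with a valid \emph{deterministic} worst-case regret bound (the $2T$ bound) rather than any quantity that itself depends on the good event holding — this is the standard trap in such conversions, and using the geometric fact $\|\theta - A_t\|_2^2 \leq 4$ on the sphere sidesteps it cleanly.
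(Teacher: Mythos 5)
Your proposal is correct and follows essentially the same route as the paper: the same good/bad event decomposition with $\delta = 1/\widetilde{T}$, the same deterministic worst-case bound $R_T \leq 2T = 4(d-1)\widetilde{T}$ on the failure event yielding the extra additive $4(d-1)$, and the same substitution $\beta_{T,\max} = O(d\log T)$ to obtain the $O(d^4\log^3 T)$ scaling. No gaps.
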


\begin{proof}   
    Defining the event
    \begin{align}
        R_T = \bigg\lbrace  \mathcal{H}_{\widetilde{T}}: 
         R_T ( \mathbb{S}^{d-1},\theta , \pi ) \leq 4(d-1) + \left(144d^2\beta^2_{T,\max}+24(d-1)^{\frac{3}{2}}\beta_{T,\max}\right)\log\left( \frac{T}{2(d-1)} \right) \bigg\rbrace
    \end{align}
    we have by Theorem~\ref{th:regret_bound_d2} that
    \begin{align}
        \mathrm{Pr}(R_T) &\geq 1 - \delta = 1 - \frac{1}{\widetilde{T}}, \\
        \mathrm{Pr}(R_T^C) &\leq \frac{1}{\widetilde{T}}.
    \end{align}
    Then
    \begin{align}
            \EX_{\theta,\pi}[ R_T ( \mathbb{S}^{d-1},\theta , \pi )] &= \EX_{\theta,\pi}\left[ R_T ( \mathbb{S}^{d-1},\theta , \pi ) \mathbbm{1}\lbrace R_T \rbrace\right] + \EX_{\theta,\pi}\left[ R_T ( \mathbb{S}^{d-1},\theta , \pi ) \mathbbm{1}\lbrace R_T^C \rbrace \right] \nonumber \\
            &\leq 4(d-1) + \left(144d^2\beta^2_{T,\max}+24(d-1)^{\frac{3}{2}}\beta_{T,\max}\right)\log\left( \frac{T}{2(d-1)} \right) \nonumber \\
            &+ 4(d-1)\widetilde{T} \mathrm{Pr}\left( R_T^C \right) \nonumber \\
            &\leq 8(d-1) + \left(144d^2\beta^2_{T,\max}+24(d-1)^{\frac{3}{2}}\beta_{T,\max}\right)\log\left( \frac{T}{2(d-1)} \right),
    \end{align}
    and the results follows using that for $\delta = \frac{1}{\widetilde{T}}$, $\beta_{T,\max} = O (d\log (T))$.   
\end{proof}

\subsection{Linear bandits with linearly vanishing variance}\label{sec:linearly_vanishin_variance}

We conclude the algorithms Chapter by studying the linear bandit model in which the PSMAQB setting naturally fits, allowing us to address the questions posed in the Introduction. Specifically, we relax the subgaussian condition~\eqref{eq:linear_vanishing_subgaussian} considered in Section~\ref{sec:subgaussian_parameter}, and study a linear bandit with action set $\mathcal{A} = \mathbb{S}^{d-1}$, unknown parameter $\theta \in \mathbb{S}^{d-1}$, and reward model $
X_t = \langle \theta, A_t \rangle + \epsilon_t$ where the noise $\epsilon_t$ satisfies
\begin{align}\label{eq:linearly_vanishing_variance}
     \EX[\epsilon_t \mid A_1,X_1,\dots,A_{t-1},X_{t-1},A_t] &= 0, \nonumber \\
    \mathrm{Var}[\epsilon_t \mid A_1,X_1,\dots,A_{t-1},X_{t-1},A_t ] &\leq 1 - \langle \theta, A_t \rangle^2.
\end{align}
This setting corresponds to the variance condition~\eqref{eq:vanishing_variance} in the Bloch sphere representation of the PSMAQB problem. Although we still aim to use the weighted estimators introduced under the subgaussian noise assumption~\eqref{eq:linear_vanishing_subgaussian}, the confidence region derived in Lemma~\ref{lem:confidence_region_weighted} does not directly extend to the case of weighted variance estimation. We will need an additional tool on top of the weighted estimators.

\subsubsection{Median of means for an online least squares estimator}\label{sec:MoMLSE}

We want to use the same idea as in the previous sections and, through a careful choice of actions, introduce weighted terms in the design matrix $1/\hat{\sigma}^2_s (a_s)$ that overestimate the reward variances and “boost” the confidence along directions $a_s$ close to $\theta$. However, this comes at a price: to obtain good concentration bounds for our estimator, we now have to deal with unbounded random variables and only finite variance. While the subgaussian property is strong enough to guarantee well-defined confidence regions, the variance condition is looser, and the random variables can have heavy tails with looser concentration. We address this issue using the recent ideas of median of means (MoM) for online least squares estimators introduced in~\cite{bandits_heavytail,heavy_tail_linear_noptimal,heavy_tail_linear_optimal}. The construction builds on the classical median of means method~\cite[Chapter 3]{lerasle2019lecture} for real-valued random variables with unbounded support and bounded variance, but requires non-trivial adaptation to the online linear least squares setting. As in the classical case, we use the $k$ independent estimators to construct a median of means estimator from which we can derive a confidence region with concentration bounds scaling as $1 - \exp(-k)$.

First we discuss the medians of means method for the online linear least squares estimator introduced in~\cite{heavy_tail_linear_optimal}. We are going to use this estimator later in order to design a strategy that minimizes the regret for the noise model~\eqref{eq:linearly_vanishing_variance}. In order to build the median of means online least squares estimator for linear bandits we need to sample $k$ independent rewards for each action. Specifically given an action set $\mathcal{A}\subset\mathbb{R}^d$, an unknown parameter $\theta\in\mathbb{R}^d$, at each time step $t$ we select an action $A_t\in\mathcal{A}$ and sample $k$ independent rewards using $A_t$ where the outcome rewards are distributed as
\begin{align}
    X_{t,i} = \langle \theta , A_t \rangle + \epsilon_{t,i} \quad \text{for }i\in[k],
\end{align}
for some noise such that $\EX [\epsilon_{t,i} | A_1,X_1,...,A_{t-1},X_{t-1}, A_t ] = 0$. We refer to $k$ as the number of subsamples per time step. Then at time step $t$ we define $k$ least squares estimators as
\begin{align}
    \hat{\theta}_{t,i} = V_t^{-1}(\lambda ) \sum_{s=1}^t X_{s,i} A_s \quad \text{for }i\in[k],
\end{align}
where $V_t$ is the design matrix defined as
\begin{align}
    V_t (\lambda ) = \lambda \mathbb{I} + \sum_{s=1}^t A_s A_s^{\mathsf{T}},
\end{align}
with $\lambda > 0$ being a parameter that ensures invertibility of $V_t$. We note that the design matrix is independent of $i$. Then the median of means for least squares estimator (MOMLSE) is defined as
    \begin{align}\label{eq:linearMOM}
        \hat{\theta}_{t}^{\text{\tiny MoM}} := \hat{\theta}_{t,k^*} \quad \text{where }k^* = \argmin_{j\in[k]} y_j ,
    \end{align}
where     
\begin{align}
        y_j = \text{median}\lbrace \|\hat{\theta}_{t,j} - \hat{\theta}_{t,i} \|_{V_t}: i\in [k]/j \rbrace \quad \text{for } j \in [k].
    \end{align}

Using the results in~\cite{heavy_tail_linear_optimal} we have that the above estimator has the following concentration property around the true estimator.
\begin{lemma}[Lemma 2 and 3 in~\cite{heavy_tail_linear_optimal}]\label{lem:concentration_mom}
Let $\hat{\theta}_t^{\textup{\tiny MoM}}$ be the MOMLSE defined in~\eqref{eq:linearMOM} with $k$ subsamples with  $\lbrace X_{s,i}\rbrace_{(s,i)\in [t]\times [k]} $ rewards and corresponding actions $\lbrace A_{s} \rbrace_{s\in [t]} $. Assume that the noise of all rewards has bounded variance, i.e $\EX \left[ \epsilon^2_{s,i} |  A_1,X_1,...,A_{t-1},X_{t-1}, A_t \right] \leq 1$ for all $s\in[t]$ and $i\in [ k ]$. Then we have
    \begin{align}
    \mathrm{Pr}\left( \|\theta - \hat{\theta}_t^{\text{\tiny MoM}} \|^2_{V_t} \leq 9\left(\sqrt{9d} + \lambda \| \theta \|_2 \right)^2 \right)\geq 1 - \exp \left( \frac{-k}{24}\right).
\end{align}

\end{lemma}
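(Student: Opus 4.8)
The plan is to follow the two-step structure of the classical median-of-means method: first establish that each of the $k$ individual estimators $\hat{\theta}_{t,i}$ is close to $\theta$ with constant probability, and then show that the selection rule~\eqref{eq:linearMOM} amplifies this into a bound that fails only with probability exponentially small in $k$.

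First I would control a single estimator in second moment. Writing $X_{s,i}=\langle\theta,A_s\rangle+\epsilon_{s,i}$ and using $\sum_{s=1}^t A_sA_s^\mathsf{T}=V_t-\lambda\mathbb{I}$, the error decomposes as
\begin{align}
\hat{\theta}_{t,i}-\theta = -\lambda V_t^{-1}\theta + V_t^{-1}\sum_{s=1}^t \epsilon_{s,i}A_s ,
\end{align}
so that, by the triangle inequality for $\|\cdot\|_{V_t}$,
\begin{align}
\|\hat{\theta}_{t,i}-\theta\|_{V_t}\leq \lambda\|\theta\|_{V_t^{-1}} + \Big\|\sum_{s=1}^t\epsilon_{s,i}A_s\Big\|_{V_t^{-1}} .
\end{align}
The first (bias) term is bounded by $\lambda\|\theta\|_2$ using $V_t^{-1}\preceq\mathbb{I}$ (which holds since $\lambda\geq 1$ in our regime). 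For the noise term I would combine the deterministic elliptical-potential identity $\Tr\big(V_t^{-1}(V_t-\lambda\mathbb{I})\big)=d-\lambda\Tr(V_t^{-1})\leq d$ with the martingale-difference and bounded-variance properties of $\epsilon_{s,i}$ to obtain $\EX\big[\big\|\sum_{s=1}^t\epsilon_{s,i}A_s\big\|_{V_t^{-1}}^2\big]\leq d$. Markov's inequality applied to the squared norm then gives that, with probability at least $8/9$, the noise term is at most $\sqrt{9d}$, so on this event $\|\hat{\theta}_{t,i}-\theta\|_{V_t}\leq\sqrt{9d}+\lambda\|\theta\|_2=:r$. I call an index $i$ \emph{good} when this holds.

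Next I would run the median-of-means amplification, which is purely deterministic once the good set is identified. Let $G$ be the set of good indices and suppose $|G|>k/2$. For every good $j$ and every good $i$ the triangle inequality gives $\|\hat{\theta}_{t,j}-\hat{\theta}_{t,i}\|_{V_t}\leq 2r$; since more than half of the indices are good, $y_j\leq 2r$ for each good $j$, whence $y_{k^*}=\min_j y_j\leq 2r$. But $y_{k^*}\leq 2r$ means that more than half of the indices $i$ satisfy $\|\hat{\theta}_{t,k^*}-\hat{\theta}_{t,i}\|_{V_t}\leq 2r$; by a pigeonhole argument this set must intersect $G$, so there is a good $i^*$ with $\|\hat{\theta}_{t,k^*}-\hat{\theta}_{t,i^*}\|_{V_t}\leq 2r$, and therefore $\|\hat{\theta}_{t,k^*}-\theta\|_{V_t}\leq 3r$. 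Squaring yields precisely $\|\theta-\hat{\theta}_t^{\text{\tiny MoM}}\|_{V_t}^2\leq 9(\sqrt{9d}+\lambda\|\theta\|_2)^2$. It then remains to lower bound $\mathrm{Pr}(|G|>k/2)$: treating the good-event indicators $\mathbbm{1}\{i\in G\}$ as (conditionally) independent Bernoulli variables of mean at least $8/9$, a Hoeffding/Chernoff bound gives $\mathrm{Pr}(|G|\leq k/2)\leq\exp\!\big(-2k(8/9-1/2)^2\big)\leq\exp(-k/24)$, matching the claimed failure probability.

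The step I expect to be the main obstacle is making rigorous the unconditional bound $\EX\big[\big\|\sum_{s=1}^t\epsilon_{s,i}A_s\big\|_{V_t^{-1}}^2\big]\leq d$ together with the conditional independence of the good-event indicators across subsamples. Both are delicate because the actions $A_s$—and hence the shared design matrix $V_t$—are chosen adaptively from the observed rewards, so future actions depend on past noise; in particular the cross terms $\EX[\epsilon_{s,i}\epsilon_{s',i}A_s^\mathsf{T}V_t^{-1}A_{s'}]$ cannot be eliminated by naive conditioning, since $V_t^{-1}$ is not measurable with respect to the $\sigma$-algebra generated up to the larger of $s,s'$. This is exactly the adaptivity coupling that the heavy-tailed analysis of~\cite{heavy_tail_linear_optimal} resolves via a self-normalized martingale argument on the common design; I would import that argument, noting that it applies cleanly here because $V_t$ is identical across all $k$ subsamples and only the noise realizations $\epsilon_{s,i}$ differ with $i$.
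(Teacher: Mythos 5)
The paper does not actually prove this lemma---it is imported verbatim by citation from \cite{heavy_tail_linear_optimal}---so there is no in-paper argument to compare against. Your reconstruction follows the standard median-of-means route, and its deterministic components are all correct: the decomposition $\hat{\theta}_{t,i}-\theta=-\lambda V_t^{-1}\theta+V_t^{-1}\sum_s\epsilon_{s,i}A_s$, the norm identity $\|V_t^{-1}x\|_{V_t}=\|x\|_{V_t^{-1}}$, the pigeonhole amplification (more than half the estimators within $r$ of $\theta$ forces the selected index within $3r$, whence the factor $9$ after squaring), and the arithmetic $2(8/9-1/2)^2=49/162\geq 1/24$ all check out. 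Markov applied to a second moment of $d$ does give the $8/9$ success probability and the $\sqrt{9d}$ radius.

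The obstacle you flag is, however, a genuine gap, and it affects both probabilistic ingredients for the same reason. Expanding $\|\sum_s\epsilon_{s,i}A_s\|^2_{V_t^{-1}}=\sum_{s,s'}\epsilon_{s,i}\epsilon_{s',i}\,A_s^\mathsf{T}V_t^{-1}A_{s'}$, neither the diagonal terms nor the cross terms can be handled by conditioning at time $s$ or $s'$, because $V_t^{-1}$ is built from actions chosen \emph{after} those rewards were observed; the identity $\Tr\big(V_t^{-1}(V_t-\lambda\mathbb{I})\big)\leq d$ only closes the argument when $V_t$ is deterministic (or predictable relative to the noise of subsample $i$). The correct predictable argument for adaptive designs---telescoping $\|S_u\|^2_{V_u^{-1}}\leq\|S_{u-1}\|^2_{V_{u-1}^{-1}}+2\epsilon_{u,i}\langle V_u^{-1}A_u,S_{u-1}\rangle+\epsilon_{u,i}^2\|A_u\|^2_{V_u^{-1}}$, where every coefficient of the noise is $\sigma(\mathcal{H}_{u-1},A_u)$-measurable, followed by Sherman--Morrison and the elliptical potential lemma---yields $\EX\big[\|S_t\|^2_{V_t^{-1}}\big]\leq 2\log(\det V_t/\det V_0)=O(d\log t)$, not $d$; the extra logarithm is known to be unavoidable in the worst case for self-normalized sums with adaptive designs. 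The same adaptivity breaks the independence of the good-event indicators: the actions, hence $V_t$, are functions of all $k$ noise arrays, so the events $\{i\in G\}$ are coupled and a plain Chernoff bound does not directly apply. Your instinct to defer to the reference is consistent with what the paper itself does, but you should be aware that the reference obtains the clean constant $9d$ because its phased algorithm keeps the design predictable with respect to each subsample; a proof valid for a fully adaptive design such as Algorithm~\ref{alg:linucb_vn_var} either needs that structural assumption made explicit or must accept a $\log t$ degradation of the radius.
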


We will use a slight modification of the above result introducing the weights as in Section~\ref{sec:weighted_confidence_region} $,   \hat{\sigma}^2_t : \mathcal{H}_{t-1}\times A \rightarrow \mathbb{R}_{>0},$
where $\mathcal{H}_{t-1} = \lbrace X_{s,i}\rbrace_{(s,i)\in [t-1]\times [k]} \cup \lbrace A_{s} \rbrace_{s \in [t-1]}$ contains the past information of rewards and actions. As before for our purposes we will use only the information of the past actions and in order to simplify notation we will use $\hat{\sigma}^2_t (A)$ to denote an estimator of the variance for the reward associated action $A\in\mathcal{A}$ with the information collected up to time step $t-1$. Then to define the weighted version of the median of means we need to sample $k$ rewards for each action and define the following $k$ linear least squares estimators
\begin{align}\label{eq:weighted_lse}
    \hat{\theta}^{\text{w}}_{t,i} = V_t^{-1} \sum_{s=1}^t \frac{1}{\hat{\sigma}^2_s (A_s)} X_{s,i} A_s \quad \text{for }i\in[k],
\end{align}
with the weighted design matrix
\begin{align}\label{eq:weighted_design}
    V_t = \lambda \mathbb{I} + \sum_{s=1}^t \frac{1}{\hat{\sigma}^2_s (A_s)} A_s A_s^{\mathsf{T}}.
\end{align}
Then the weighted version of the median of means linear estimator is defined analogously to~\eqref{eq:linearMOM} with the corresponding weighted versions~\eqref{eq:weighted_lse}\eqref{eq:weighted_design} and we will denote it as $\hat{\theta}_t^{\text{\tiny wMOM}}$.
In our algorithm analysis we will use the following analogous concentration bound under the condition that the estimators $\hat{\sigma}^2_t$ overestimate the true variance.
\begin{corollary}\label{cor:concentration_wmom}
    Let $\hat{\theta}_t^{\textup{\tiny wMOM}}$ be the weighted version of the MOMLSE with $k$ subsamples, $\lbrace X_{s,i}\rbrace_{(s,i)\in [t]\times [k]} $ rewards with corresponding actions $\lbrace A_{s} \rbrace_{s\in [t]} $ and variance estimator $\hat{\sigma}^2_t$. Define the following event
    \begin{align}\label{eq:gt_event_mom}
        G_t : = \lbrace \big( \mathcal{H}_{t-1}, A_t \big) : \textup{\textrm{Var}}[\epsilon_{s,i}] \leq \hat{\sigma}^2 (A_s )\ \forall s,i\in [t]\times [k  ]\rbrace.
    \end{align}
    Then we have
    \begin{align}
    \mathrm{Pr}\left( \|\theta - \hat{\theta}_t^{\textup{\tiny wMOM}} \|^2_{V_t} \leq  \beta_{\textup{w}}\mid G_t \right)\geq 1 - \exp \left( \frac{-k}{24}\right),
    \end{align}
    where 
    \begin{align}\label{eq:beta_constant_mom}
        \beta_{\textup{w}} := 9\left(\sqrt{9d} + \lambda \| \theta \|_2 \right)^2 .
    \end{align}
\end{corollary}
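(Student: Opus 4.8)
The plan is to reduce the weighted estimator to the unweighted median-of-means estimator of Lemma~\ref{lem:concentration_mom} through a predictable change of variables, so that the corollary follows by invoking that lemma verbatim on a transformed problem. The key structural fact is that the weighting enters both the design matrix~\eqref{eq:weighted_design} and the estimators~\eqref{eq:weighted_lse} in a way that can be absorbed into a rescaling of the actions and rewards. Since the noise estimators $\hat{\sigma}^2_s$ take values in $\mathbb{R}_{>0}$ and depend only on the history $\mathcal{H}_{s-1}$ and the current action $A_s$, this rescaling is well defined and predictable.

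First I would introduce the rescaled actions and rewards
\begin{align}
B_s := \frac{A_s}{\hat{\sigma}_s(A_s)}, \qquad Y_{s,i} := \frac{X_{s,i}}{\hat{\sigma}_s(A_s)} \quad \text{for } (s,i)\in[t]\times[k].
\end{align}
Using $\langle\theta,A_s\rangle = \hat{\sigma}_s(A_s)\langle\theta,B_s\rangle$, the reward model becomes $Y_{s,i} = \langle\theta,B_s\rangle + \tilde{\epsilon}_{s,i}$ with $\tilde{\epsilon}_{s,i} := \epsilon_{s,i}/\hat{\sigma}_s(A_s)$. A direct substitution shows that the weighted quantities coincide exactly with the \emph{unweighted} design matrix and least squares estimators of the transformed data $(B_s,Y_{s,i})$, namely
\begin{align}
V_t = \lambda\mathbb{I} + \sum_{s=1}^t B_s B_s^{\mathsf{T}}, \qquad \hat{\theta}^{\textup{w}}_{t,i} = V_t^{-1}\sum_{s=1}^t Y_{s,i} B_s,
\end{align}
since $B_s B_s^{\mathsf{T}} = \hat{\sigma}_s^{-2}(A_s) A_s A_s^{\mathsf{T}}$ and $Y_{s,i}B_s = \hat{\sigma}_s^{-2}(A_s) X_{s,i} A_s$. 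Because the median-of-means selection rule depends on the data only through the estimators $\hat{\theta}^{\textup{w}}_{t,i}$ and the pairwise distances $\|\hat{\theta}^{\textup{w}}_{t,j}-\hat{\theta}^{\textup{w}}_{t,i}\|_{V_t}$, all of which are preserved, the weighted estimator $\hat{\theta}^{\textup{\tiny wMOM}}_t$ is \emph{identically} the unweighted MoM estimator of the transformed problem.

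The remaining task is to verify the variance hypothesis of Lemma~\ref{lem:concentration_mom} for the transformed noise. As $\hat{\sigma}_s(A_s)$ is measurable with respect to the conditioning $\sigma$-algebra, it factors out of the conditional expectation, and using $\EX[\epsilon_{s,i}\mid\cdots]=0$ we get
\begin{align}
\EX\!\left[\tilde{\epsilon}_{s,i}^2 \,\middle|\, A_1,X_1,\dots,A_{s-1},X_{s-1},A_s\right] = \frac{\mathrm{Var}[\epsilon_{s,i}]}{\hat{\sigma}_s^2(A_s)}.
\end{align}
On the event $G_t$ of~\eqref{eq:gt_event_mom} one has $\mathrm{Var}[\epsilon_{s,i}]\le\hat{\sigma}_s^2(A_s)$ for every $(s,i)$, so the transformed noise meets the unit conditional-variance bound required by Lemma~\ref{lem:concentration_mom}. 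Applying that lemma (with the same $\lambda$ and the unchanged $\|\theta\|_2$) then gives the claimed inequality with $\beta_{\textup{w}} = 9(\sqrt{9d}+\lambda\|\theta\|_2)^2$ and confidence $1-\exp(-k/24)$.

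The main obstacle I anticipate is not the algebra of the reduction but making the \emph{conditional} statement rigorous: Lemma~\ref{lem:concentration_mom} is proved under an almost-sure variance bound, whereas here the bound holds only on $G_t$. The clean way around this mirrors the conditioning device already used in the subgaussian confidence region of Lemma~\ref{lem:confidence_region_weighted}: the concentration in Lemma~\ref{lem:concentration_mom} rests on a supermartingale whose increments are dominated by the conditional second moments, and on $G_t$ these are dominated pathwise by their unit-variance counterparts. Restricting the probability space to $G_t$ — equivalently, replacing $\epsilon_{s,i}$ by a surrogate that agrees with it on $G_t$ and has conditional variance at most $\hat{\sigma}_s^2(A_s)$ everywhere — preserves the supermartingale property and hence the tail bound, so no new machinery beyond the reduction is needed.
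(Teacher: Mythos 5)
Your reduction — rescaling actions and rewards by $1/\hat{\sigma}_s(A_s)$ so that the weighted design matrix and estimators become the unweighted ones of the transformed data, then verifying the unit conditional-variance bound on $G_t$ and invoking Lemma~\ref{lem:concentration_mom} — is exactly the argument the paper gives, including the same variance computation $\EX[\tilde{\epsilon}_{s,i}^2\mid\mathcal{H}_{s-1},A_s]=\mathrm{Var}[\epsilon_{s,i}]/\hat{\sigma}_s^2(A_s)\le 1$. Your additional remarks on why the median-of-means selection rule is preserved and on making the conditioning on $G_t$ rigorous are more careful than the paper's one-line treatment, but they do not change the route.
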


\begin{proof}
    The result follows from applying Lemma~\ref{lem:concentration_mom} to the sequences of re-normalized rewards $\lbrace \frac{X_{s,i}}{\hat{\sigma}_s (A_s)}\rbrace_{(s,i)\in [t]\times [k]} $  and actions $\lbrace \frac{A_{s,i}}{\hat{\sigma}_s (a_s)} \rbrace_{s\in [t]} $. We only need to check that the sequence $\lbrace \frac{\epsilon_{s,i}}{\hat{\sigma}_s (A_s)}\rbrace_{(s,i)\in [t]\times [k]} $ has finite variance. Conditioning with the event $G_t$ and the fact that by definition $\hat{\sigma}^2_s (A_s)$ only depend on the past $s-1$ action and rewards we have that the re-normalized noise has bounded variance since
    \begin{align}
        \EX \left[ \left(\frac{\epsilon_{s,i}}{\hat{\sigma}_s (A_s)} \right)^2 \Bigg|\mathcal{H}_{s-1}, A_s\right] = \frac{1}{\hat{\sigma}^2_s (A_s)}\EX [ \epsilon^2_{s,i} | \mathcal{H}_{s-1}, A_s ] = \frac{\mathrm{Var} [\epsilon_{s,i}]}{\hat{\sigma}^2_s (A_s)}\leq 1.
    \end{align}
\end{proof}

\subsubsection{\textsf{LinUCB} variance vanishing noise algorithm}\label{sec:linucb_vvn}

The algorithm that we design for linear bandits with linearly variance vanishing noise~\eqref{eq:linearly_vanishing_variance} is \textsf{LinUCB-VVN} (\textsf{LinUCB} vanishing variance noise) stated in Algorithm~\ref{alg:linucb_vn_var}. This algorithm combines all the tools we developed in this Chapter including the action selection rule from Theorem~\ref{th:main_eigenvalues}, the weighted LSE from Section~\ref{sec:weighted_confidence_region} and the recently introduced median of means~\eqref{eq:linearMOM}. The algorithm runs in $\tilde{T}\in\mathbb{N}$ batches, for a total time horizon of $T = 2k(d-1)$ time steps where $k\in\mathbb{N}$ is a free parameter that controls the median of means. At each batch $\tilde{t}\in[\mathbb{T}]$ the algorithm plays the actions
\begin{align}\label{eq:action_general_update_mom}
    A^\pm_{\tilde{t},i} := \frac{\widetilde{A}^\pm_{\tilde{t},i}}{\| \widetilde{A}^\pm_{\tilde{t},i}\|_2}, \quad  \tilde{A}^\pm_{\tilde{t},i} = \frac{\hat{\theta}^{\text{wMoM}}_{\tilde{t}}}{\| \hat{\theta}^{\text{wMoM}}_{\tilde{t}} \|_2}  \pm \frac{1}{\sqrt{\lambda_{\min}(V_{\tilde{t}-1})}}v_{\tilde{t}-1,i}, \quad 
\end{align}
for $i\in [d-1 ]$, where $v_{\tilde{t}-1,i}$ is $i$-th eigenvector of the weighted design matrix $V_{\tilde{t}-1}$ and where for each action $A^\pm_{\tilde{t},i}$ we sample $k$ independent rewards $\lbrace X^\pm_{\tilde{t},i,j} \rbrace_{j=1}^k$ in order to build the weighted MOMLSE defined as in Section~\ref{sec:MoMLSE}. The design matrix $V_{\tilde{t}}$ is updated as
\begin{align}\label{eq:design_update_mom}
    V_{\tilde{t}} (\lambda ) = V_{\tilde{t}-1}(\lambda ) + \omega (V_{\tilde{t}-1} (\lambda ) ) \sum_{i=1}^{d-1} \left( A^+_{\tilde{t},i} (A^+_{\tilde{t},i})^{\mathsf{T}} + A^-_{\tilde{t},i} (A^-_{\tilde{t},i})^\mathsf{T} \right) ,
\end{align}
and the $k$ weighted least squares estimators are 
\begin{align}
    \hat{\theta}^{\text{w}}_{\tilde{t},j} =  V^{-1}_{\tilde{t}} (\lambda ) \left( \sum_{s=1}^{\tilde{t}} \omega (V_{s-1} )  \left( \sum_{i=1}^{d-1}(X^+_{s,i,j}A^+_{s,i} + X^-_{s,i,j}A^-_{s,i} \right) \right)
\end{align}
where the weights $\omega (V_{\tilde{t}-1} )$ and variance estimator are chosen as
\begin{align}\label{eq:weight_choice_mom}
    \omega (V_{\tilde{t}-1}) := \frac{\sqrt{\lambda_{\max}(V_{\tilde{t}-1})}}{12\sqrt{d-1}\beta_w} , \quad \hat{\sigma}^2_t (A^\pm_{\tilde{t},i}) := \frac{1}{\omega (V_{\tilde{t}-1})},
\end{align}
with $\beta_w$ defined as in~\eqref{eq:beta_constant_mom}. We note that the definition for $\hat{\sigma}^2_t (A^\pm_{\tilde{t},i})$ fulfills the definition of variance estimator~\eqref{eq:variance_estimator} stated in the previous section since it only depends on the past history $\mathcal{H}_{t-1}$.

\begin{algorithm}
    \begin{algorithmic}
    
	\caption{\textsf{LinUCB-VVN}} 
	\label{alg:linucb_vn_var}
 
       \State Require: $\lambda_0\in\mathbb{R}_{>0}$, $k\in\mathbb{N}$,  $\omega: \text{P}^d_+ \rightarrow \mathbb{R}_{\geq 0}$
        
      \State  Set initial design matrix $V_0 \gets \lambda_0\mathbb{I}_{d\times d}$ 
        
     \State   Choose initial estimator ${\theta}_0\in\mathbb{S}^d$ for $\theta$ at random 
        
        \For{$\tilde{t} = 1,2,\cdots$}
            \vspace{1mm}
           \State \textit{Optimistic action selection}
            \vspace{1mm}
            
            \For{$i = 1,2,\cdots d-1$}
               \State Select actions $A^+_{\tilde{t},i}$ and $A^-_{\tilde{t},i}$ according to Eq.~\eqref{eq:action_general_update_mom}
                
                \vspace{1mm}
               \State \textit{Sample $k$ independent rewards for each $a^\pm_{t,i}$}
                \vspace{1mm}
                
                \For{$j=1,...,k$}
                \State    Receive associated rewards $X^+_{\tilde{t},i,j}$ and $X^-_{\tilde{t},i,j}$
                    \EndFor
            
            \EndFor
            \vspace{1mm}
          \State  \textit{Update estimator of subgaussian noise for $A^+_{t,i}$}
            \vspace{1mm}
            
          \State  $\hat{\sigma}^2_{\tilde{t}} \gets \frac{1}{\omega ( V_{\tilde{t}-1}(\lambda_0 ) ) }$ for $\tilde{t}\geq 2$ or $\hat{\sigma}^2_{\tilde{t}} \gets 1$ for $\tilde{t}=1$.

            \vspace{1mm}
          \State  \textit{Update design matrix}
            \vspace{1mm}
            
        \State    $V_{\tilde{t}} \gets V_{\tilde{t}-1} + \frac{1}{\hat{\sigma}_{\tilde{t}}^2} \sum_{i=1}^{d-1} \left(A^+_{t,i}  (A^+_{t,i})^{\mathsf{T}}  +  A^-_{t,i}  (A^-_{t,i})^\mathsf{T} \right)$

            \vspace{1mm}
        \State    \textit{Update LSE for each subsample}
            \vspace{1mm}

            \For{$j=1,2,...,k$:}
              \State   $\hat{\theta}_{t,j}^\text{w} \gets V_{\tilde{t}}^{-1}  \sum_{s = 1}^{\tilde{t}} \frac{1}{\hat{\sigma}_{\tilde{t}}^2} \sum_{i=1}^{d-1} (A^+_{s,i} X^+_{s,i,j} + A^-_{s,i} X^-_{s,i,j} ) $
             \EndFor
          \State  Compute $\hat{\theta}_{\tilde{t}}^{\text{\tiny wMOM}}$ using $\lbrace \hat{\theta}_{\tilde{t},j}^\text{w} \rbrace_{j=1}^k$
        \EndFor
            \end{algorithmic}
\end{algorithm}

\subsubsection{Regret analysis for \textsf{LinUCB-VVN}}

\begin{theorem}\label{th:regret_scaling_variance}
Let $d\geq 2$, $k\in\mathbb{N}$, $\widetilde{T}\in\mathbb{N}$ the number of batches and $T=2(d-1)k\widetilde{T}$ the total time horizon. Let $\omega (X)$ defined as in~\eqref{eq:weight_choice_mom} using $\lambda_0$ satisfying the constraints in Theorem~\ref{th:main_eigenvalues}. Then if we apply Algorithm~\ref{alg:linucb_vn_var}($\lambda_0,k,\omega$) to a $d$ dimensional stochastic linear bandit with variance as in~\eqref{eq:linearly_vanishing_variance} with probability at least $(1-\exp (-k/24))^{\widetilde{T}}$ the regret satisfies
      \begin{align}
      R_T (\mathbb{S}^{d-1},\theta, \pi) \leq 4k(d-1)&+144d(d-1)k(\beta_w)^2\log\left(\frac{T}{2(d-1)k} \right) \nonumber \\
      &+24(d-1)^{\frac{3}{2}}k\beta_w\log\left(\frac{T}{2(d-1)k} \right),
    \end{align}
    and at each time step $t\in[T]$ with the same probability it can output an estimator $\hat{\theta}_{t}\in\mathbb{S}^{d-1}$ such that
    \begin{align}
        \|\theta - \hat{\theta}_{t}\|_2^2 \leq \frac{576d^2(\beta_w)^2 k+96d\sqrt{d-1}\beta_w k}{t},
    \end{align}
    with $\beta_w$ defined as in~\eqref{eq:beta_constant_mom}.
\end{theorem}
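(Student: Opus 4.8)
The plan is to mirror the structure of the proof of Theorem~\ref{th:regret_bound_d2} (the subgaussian case) almost verbatim, replacing the confidence region from Lemma~\ref{lem:confidence_region_weighted} with the median-of-means concentration bound from Corollary~\ref{cor:concentration_wmom}. The main conceptual point is that \textsf{LinUCB-VVN} uses the weighted MoM estimator $\hat{\theta}^{\text{wMoM}}_{\tilde t}$ instead of the plain weighted LSE, so the relevant confidence region is the ball $\|\theta-\hat{\theta}^{\text{wMoM}}_{\tilde t}\|^2_{V_{\tilde t}}\le \beta_w$ rather than $\|\theta-\hat{\theta}^{\text{w}}_{\tilde t}\|^2_{V_{\tilde t}}\le\beta_{\tilde t,\delta}$. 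Crucially, $\beta_w$ defined in~\eqref{eq:beta_constant_mom} is a \emph{constant} in $T$ (it scales as $O(d)$), whereas $\beta_{\tilde t,\delta}$ carried a $\log\det V_{\tilde t}$ term. This is what will let us drop the extra logarithmic factor hidden in $\beta_{T,\max}$ and simplify the bookkeeping considerably.

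First I would set up the two events exactly as before: the variance-overestimation event $G_t$ from~\eqref{eq:gt_event_mom} and the containment event $E_{\tilde t}=\{\forall s\le\tilde t:\ \theta\in\mathcal{C}_s\}$ where now $\mathcal{C}_s$ is the MoM ball of radius $\sqrt{\beta_w}$. Conditioned on $E_{\tilde t-1}$, the same triangle-inequality geometry used in Theorem~\ref{th:circle_regret} and~\eqref{eq:theta_at_bound} gives $\|\theta-A^\pm_{\tilde t,i}\|_2\le 3\sqrt{\beta_w/\lambda_{\min,\tilde t-1}}$, using the three bounds $(i)$–$(iii)$ (with Lemma~\ref{lem:dist_action_centre} for the projection step); the only change is $\beta_{\tilde t-1,\delta}\mapsto\beta_w$. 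Because the variance satisfies $\mathrm{Var}[\epsilon_t]\le 1-\langle\theta,A_t\rangle^2\le\|\theta-A^\pm_{\tilde t,i}\|_2^2$ by the identical computation as in~\eqref{eq:noise_bound}, the choice of weights~\eqref{eq:weight_choice_mom}, namely $\hat\sigma^2_t=12\sqrt{d-1}\beta_w/\sqrt{\lambda_{\max,\tilde t-1}}$, overestimates the true variance whenever $\theta\in\mathcal{C}_{\tilde t-1}$, establishing $\Pr(G_{\tilde t})\ge\Pr(E_{\tilde t-1})$. Next I would invoke Theorem~\ref{th:main_eigenvalues} with $C=1/(12\sqrt{d-1})$ and $\lambda_0$ as in~\eqref{eq:lambda_def} — the action update~\eqref{eq:action_general_update_mom} and design-matrix update~\eqref{eq:design_update_mom} are structurally identical to~\eqref{eq:general_update}–\eqref{eq:vt_update} — to obtain the eigenvalue ratio $\lambda_{\min,\tilde t}\ge\sqrt{\tfrac{2}{3(d-1)}\lambda_{\max,\tilde t}}$ deterministically. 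Then I would reuse the differential-inequality machinery (Theorem~\ref{th:dif_equation_ineq}) to get the lower bound $\lambda_{\max,\tilde t}=\Omega(\tilde t^2/(1+6\tfrac{d}{\sqrt{d-1}}\beta_w)^2)$; with constant $\beta_w$ this is cleaner than the $\beta_{T,\max}$-dependent version in~\eqref{eq:lambdamaxlowbound}. Combining yields $\|\theta-A^\pm_{\tilde t,i}\|_2^2=O((d^2\beta_w^2+d^{3/2}\beta_w)/(\tilde t-1))$, and summing $\sum 1/(\tilde t-1)$ over batches gives the stated $\log(T/(2(d-1)k))$ regret bound, with the factor $k$ entering because each batch now costs $2(d-1)k$ rounds. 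The online estimator guarantee follows from the same per-batch distance bound: at any $t$, $\hat\theta_t$ is the current $\hat\theta^{\text{wMoM}}$, and $\|\theta-\hat\theta_t\|_2^2\le\beta_w/\lambda_{\min}\le\sqrt{3(d-1)/2}\,\beta_w/\sqrt{\lambda_{\max}}=O(d^2\beta_w^2 k/t)$ after translating batch index to round count via $t=2(d-1)k\tilde t$.

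The success-probability analysis is the one genuinely new piece, and I expect it to be the main obstacle — though a mild one. In the subgaussian proof the recursion $\Pr(E_{\tilde t})\ge(1-\delta)^{\tilde t}$ used a per-batch failure probability $\delta$ tunable to $1/\tilde T$. Here the MoM concentration from Corollary~\ref{cor:concentration_wmom} gives a \emph{fixed} per-batch failure $\exp(-k/24)$ that we cannot shrink arbitrarily without increasing $k$ (and hence $T$). So I would run the identical Bayes-rule recursion — $\Pr(G_{\tilde t})\ge\Pr(E_{\tilde t-1})$, $\Pr(G_{\tilde t}\mid E_{\tilde t})=1$, and $\Pr(E_{\tilde t}\mid G_{\tilde t})\ge 1-\exp(-k/24)$ — to conclude $\Pr(E_{\tilde T}\cap G_{\tilde T})\ge(1-\exp(-k/24))^{\tilde T}$, which is exactly the claimed high-probability statement. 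The subtlety to handle carefully is that the $k$ subsamples per action are used to form the MoM estimator, and the conditioning structure of $G_{\tilde t}$ must be compatible with the finite-variance hypothesis of Corollary~\ref{cor:concentration_wmom}; I would verify, as in the proof of that corollary, that conditioning on $G_{\tilde t}$ and on the fact that $\hat\sigma^2_s$ depends only on $\mathcal{H}_{s-1}$ makes the renormalized noise $\epsilon_{s,i,j}/\hat\sigma_s$ have conditional variance at most one, so the concentration applies batchwise. No expectation-version corollary is needed since the theorem is stated purely as a high-probability bound.
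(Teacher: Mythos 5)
Your proposal is correct and follows essentially the same route as the paper's proof: the same events $G_{\tilde t}$ and $E_{\tilde t}$, the same triangle-inequality geometry giving $\|\theta-A^\pm_{\tilde t,i}\|_2\le 3\sqrt{\beta_w/\lambda_{\min,\tilde t-1}}$, the same invocation of Theorem~\ref{th:main_eigenvalues} and the differential-inequality lower bound on $\lambda_{\max,\tilde t}$ (with the same observation that the $\Tr(V_{\tilde t})$ upper bound is no longer needed because $\beta_w$ is constant), and the same Bayes-rule recursion yielding $(1-\exp(-k/24))^{\widetilde T}$. The only cosmetic deviation is that the paper outputs one of the normalized actions $A^\pm_{\tilde t,i}$ as $\hat{\theta}_t\in\mathbb{S}^{d-1}$ rather than the (unnormalized) MoM estimator, which is what makes the membership $\hat{\theta}_t\in\mathbb{S}^{d-1}$ in the statement immediate.
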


From the above Theorem we have that if we set $k = \lceil 24\log\left( \frac{\widetilde{T}}{\delta}\right) \rceil$ for some $\delta \in \left( 0 , 1 \right)$ then with probability at least $1 - \delta$ \textsf{LinUCB-VNN} achieves
\begin{align}
    \text{Regret}(T) = O\left( d^4\log^2 (T)\right), \quad  \|\theta - \hat{\theta}_t\|_2^2 = O\left( \frac{\log(T)}{t}\right).
\end{align}

\begin{proof}
The proof follows the lines of the proof in Theorem~\ref{th:regret_bound_d2} with few modifications. From the expression of the regret using $\theta,A^\pm_{\tilde{t},i}\in\mathbb{S}^{d-1}$ we have that
\begin{align}
       R_T (\mathbb{S}^{d-1},\theta, \pi) &= \frac{1}{2}\sum_{t=1}^{T} \|\theta - A_t \|_2^2   \\
    &= \frac{1}{2}\sum_{\tilde{t}=1}^{\tilde{T}}\sum_{i=1}^{d-1}\sum_{j=1}^k \left( \|\theta - A^{+}_{\tilde{t},i} \|_2^2 + \|\theta - A^{-}_{\tilde{t},i} \|_2^2 \right) \\
    &\leq  4k(d-1) + \frac{k}{2}\sum_{\tilde{t}=2}^{\tilde{T}}\sum_{i=1}^{d-1} \left( \|\theta - A^{+}_{\tilde{t},i} \|_2^2 + \|\theta - A^{-}_{\tilde{t},i} \|_2^2 \right) . 
\end{align}
Thus, it suffices to gives an upper bound between the distance of the unknown parameter $\theta$ and the actions $A^\pm_{\tilde{t},i}$ selected by the algorithm~\eqref{eq:action_general_update_mom}. We denote the batches $\tilde{t}\in [\widetilde{T} ]$. First we will do the computation assuming that the event
\begin{align}
    E_{\tilde{t}} := \lbrace \mathcal{H}_{\tilde{t}} : \forall s \in [\tilde{t}], \theta \in \mathcal{C}_s \rbrace , 
\end{align}
holds where $\mathcal{C}_s = \lbrace \theta' \in\mathbb{R}^d : \| \theta' - \hat{\theta}^{\text{wMOM}}_{s} \|^2_{V_s} \leq \beta_w \rbrace$. Here the history $\mathcal{H}_{\tilde{t}}$ is defined with the previous outcomes and actions of our algorithm i.e 
\begin{align}
    \mathcal{H}_{\tilde{t}} := \left(X^+_{s,i,j}, A^+_{s,i}, X^-_{s,i,j} , A^-_{s,i} \right)_{(s,i,j)\in[\tilde{t}]\times[d-1]\times [k]} 
\end{align}
Later we will quantify the probability that this event always hold. Using the definition of the actions $A^\pm_{\tilde{t},i}$~\eqref{eq:action_general_update_mom} and $\theta,\hat{\theta}^{\text{\tiny wMOM}}_{\tilde{t}}\in\mathbb{S}^{d-1}$ and the arguments of Theorem~\ref{th:circle_regret} we have that 
\begin{align}
    \|\theta - A^{\pm}_{\tilde{t},i} \|_2^2 \leq \frac{9\beta^\text{w}}{\lambda_{\min}(V_{\tilde{t}-1})}.
\end{align}
Then using that the design matrix $V_{\tilde{t}}$~\eqref{eq:design_update_mom} is updated as in Theorem~\ref{th:main_eigenvalues} and the choice of weights~\eqref{eq:weight_choice_mom} we fix 
\begin{align}\label{eq:lambda_condition}
    \lambda_0 \geq \max \left\lbrace 2, 2\sqrt{\frac{2}{3(d-1)}}\frac{d}{12\sqrt{d-1}\beta_w}+\frac{2}{3(d-1)} \right\rbrace
\end{align}
and we have that $\lambda_{\min}(V_{\tilde{t}})\geq \sqrt{\frac{2}{3(d-1)}\lambda_{\max}(V_{\tilde{t}})}$ applying Theorem~\ref{th:main_eigenvalues}. Inserting this into the above we have
\begin{align}\label{eq:dist_attheta_mom}
   \|\theta - A^{\pm}_{\tilde{t},i} \|_2^2 \leq \frac{12\sqrt{d-1}\beta^{\text{w}}}{\sqrt{\lambda_{\max}(V_{\tilde{t}})}}.
\end{align}
Thus, it remains to provide a lower bound on $\lambda_{\max}(V_{\tilde{t}})$. We note that in Theorem~\ref{th:regret_bound_d2} we also had to provide an upper bound but this was because the constant $\beta_w$ beta depended on the time step $t$. The update of the design matrix is the same as the algorithm analyzed in Theorem~\ref{th:regret_bound_d2}, thus we can reuse the same computations and arrive at
\begin{align}
    \lambda_{\max}(V_{\tilde{t}}) \geq \frac{\tilde{t}^2}{4(1+6\frac{d}{\sqrt{d-1}}\beta)^2}.
\end{align}
Now we can insert the above into~\eqref{eq:dist_attheta_mom} and we have
\begin{align}\label{eq:theta_at_distance_mom}
     \|\theta - A^{\pm}_{\tilde{t},i} \|_2^2 &\leq \frac{24\sqrt{d-1}\beta_w(1+6\frac{d}{\sqrt{d-1}}\beta_w)}{\tilde{t}-1} \\
     &= \frac{144d(\beta_w)^2+24\sqrt{d-1}\beta_w}{\tilde{t}-1}.
\end{align}
Thus, we can inserted the above bound into the regret expression and we have
\begin{align}
     R_T (\mathbb{S}^{d-1},\theta, \pi) & \leq 4k(d-1) + (144d(d-1)k(\beta_w)^2+24(d-1)^{\frac{3}{2}}k\beta_w) \sum_{\tilde{t}=2}^{\tilde{T}} \frac{1}{t-1}\\
    &\leq 4k(d-1)+144d(d-1)k(\beta_w)^2\log \widetilde{T} +24(d-1)^{\frac{3}{2}}k\beta_w\log\widetilde{T} \\
    &=  4k(d-1)+144d(d-1)k(\beta_w)^2\log\left(\frac{T}{2(d-1)k} \right) \\
    &+24(d-1)^{\frac{3}{2}}k(\beta_w)\log\left(\frac{T}{2(d-1)k} \right).
\end{align}
It remains to quantify the probability that the event $E_{\tilde{t}}$ holds. For that we will use the concentration bounds of the median of means for least squares estimator stated in Corollary~\ref{cor:concentration_wmom}. From the variance condition of our model~\eqref{eq:linearly_vanishing_variance} we have that for the rewards $X^\pm_{\tilde{t},i,j} = \langle \theta , A^\pm_{\tilde{t},i} \rangle + \epsilon^\pm_{\tilde{t},i,j}$ the variance of the noise satisfies
\begin{align}
   \mathrm{Var} [\epsilon^\pm_{\tilde{t},i,j} | \mathcal{H}_{\tilde{t}-1}, A^\pm_{\tilde{t},i}] \leq 1 - \langle \theta , A^\pm_{\tilde{t},i} \rangle^2 \leq 2(1-\langle \theta , A^\pm_{\tilde{t},i} )) = \| \theta - A^\pm_{\tilde{t},i} \|_2^2,
\end{align}
where we used $1+\langle \theta , A^\pm_{\tilde{t},i} \rangle \leq 2$.
Thus from our choice of weights~\eqref{eq:weight_choice_mom} and~\eqref{eq:theta_at_distance_mom} we have that
\begin{align}
    \text{if } \theta\in\mathcal{C}_{s-1} \Rightarrow \VX [\epsilon^\pm_{\tilde{t},i,j} | \mathcal{F}_{\tilde{t}-1}] \leq \hat{\sigma}_s^2 (A^\pm_{s,i}).
\end{align}
Then in order to apply Corollary~\ref{cor:concentration_wmom} we note that from the choice $\hat{\sigma}_s^2 (a^\pm_{1,i}) = 1$ the event $G_{\tilde{t}}$ at $\tilde{t}=1$ is always satisfied i.e $\mathrm{Pr}(G_1) = 1$. Then applying Bayes theorem, union bound over the events $G_1,E_1,...,G_{t-1},E_t$ and Corollary~\ref{cor:concentration_wmom} we have
\begin{align}
    \mathrm{Pr}(E_{\widetilde{T}} \cap G_{\widetilde{T}}) \geq \left(1 - \exp (-k/24) \right)^{\widetilde{T}} .
\end{align}   
This probability also quantifies the probability that~\eqref{eq:theta_at_distance_mom} holds since the only assumption we used is $\theta\in\mathcal{C}_{\tilde{t}-1}$. Then we can take simply one of the actions $A^\pm_{\tilde{t},i}$ as the estimator $\hat{\theta}_t$ and the result follows using the relabeling $t = 2(d-1)k\tilde{t}$ and the inequality $1/(\tilde{t} - 1) \leq 2/\tilde{t}$ for $\tilde{t}\geq 2$. The more detailed computation of the above probability is identical as the on in Theorem~\ref{th:regret_bound_d2}.
\end{proof}

In the previous Theorem we did not set a specific value for the parameter $k$ or the number of subsamples per action. We note that the regret scales linearly with $k$ but since the success probability scales exponentially with $k$ it will suffice to set $k\sim \log (T)$ such that in expectation we get the $\log^2 (T)$ behaviour. We formalize this in the following Corollary.

\begin{corollary}\label{cor:expected_regret_mom}
Under the same assumptions of Theorem~\ref{th:regret_scaling_variance} we can fix $k = \lceil 24\log(\widetilde{T}^2) \rceil$ and we have
\begin{align}
    \EX_{\theta,\pi} \left[ R_T (\mathbb{S}^{d-1},\theta, \pi) \right] \leq 344(d-1)\log\left(T \right)+ 
    \left( 3546d(d-1)(\beta_w)^2 +1152(d-1)^{\frac{3}{2}}\beta_w \right)\log^2\left(T \right)
\end{align}
and for $t\in [T ]$,
\begin{align}
    \EX_{\theta,\pi} \left[ \|\theta - \hat{\theta}_t \|_2^2 \right] \leq  \frac{27648d^2(\beta_w)^2 \log ( T)+4608d\sqrt{d-1}\beta_w \log ( T)}{t} + \frac{4(d-1)\log (T )}{T}.
\end{align}
Using that $\beta_w = O (d)$ gives
\begin{align}
    \EX_{\theta,\pi} \left[ R_T (\mathbb{S}^{d-1},\theta, \pi) \right] = O(d^4\log^2 (T) ), \quad  \EX \left[ \|\theta - \hat{\theta}_t \|_2^2 \right] = \tilde{O}\left( \frac{d^4}{t} \right).
\end{align}
\end{corollary}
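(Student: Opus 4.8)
The plan is to derive Corollary~\ref{cor:expected_regret_mom} from Theorem~\ref{th:regret_scaling_variance} by the standard ``high-probability bound plus trivial fallback'' argument, exactly as in the proof of Corollary~\ref{cor:expected_regret_linucbvn}, but now tracking the explicit role of the median-of-means parameter $k$. First I would fix $k = \lceil 24\log(\widetilde{T}^2)\rceil$ so that the failure probability per batch, $\exp(-k/24)$, is bounded by $\widetilde{T}^{-2}$. With $\widetilde{T}$ batches, the probability that the favorable event $E_{\widetilde{T}}\cap G_{\widetilde{T}}$ holds is at least $(1-\exp(-k/24))^{\widetilde{T}} \ge (1-\widetilde{T}^{-2})^{\widetilde{T}} \ge 1 - \widetilde{T}^{-1}$, using Bernoulli's inequality. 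Call this event $R_T$, the event on which the regret bound of Theorem~\ref{th:regret_scaling_variance} holds, so $\mathrm{Pr}(R_T^C)\le 1/\widetilde{T}$.

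Next I would split the expectation as
\begin{align}
\EX_{\theta,\pi}[R_T] = \EX_{\theta,\pi}[R_T\,\mathbbm{1}\{R_T\}] + \EX_{\theta,\pi}[R_T\,\mathbbm{1}\{R_T^C\}].
\end{align}
On $R_T$ I substitute the high-probability regret bound from Theorem~\ref{th:regret_scaling_variance}. On $R_T^C$ I use the trivial deterministic bound $R_T(\mathbb{S}^{d-1},\theta,\pi)\le T = 2(d-1)k\widetilde{T}$ together with $\mathrm{Pr}(R_T^C)\le 1/\widetilde{T}$, so that the second term contributes at most $2(d-1)k$. The whole argument is then bookkeeping: absorb the additive $2(d-1)k$ into the leading terms, express $k=\lceil 24\log\widetilde{T}^2\rceil = O(\log\widetilde{T})$, and rewrite $\log\widetilde{T}$ and $\log(T/(2(d-1)k))$ in terms of $\log T$. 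Since $\widetilde{T}\le T$ and the extra $\log k = O(\log\log T)$ factors are dominated, each factor of $k$ and each $\log$ becomes a factor of $\log T$, turning the $k\log\widetilde{T}$ scaling of the regret into $\log^2 T$ and producing the stated constants. The estimator bound follows identically: on $R_T$ the per-step bound $\|\theta-\hat\theta_t\|_2^2 \le (576d^2\beta_w^2 k + 96d\sqrt{d-1}\beta_w k)/t$ holds, I multiply through by $k=O(\log T)$, and on $R_T^C$ I use the trivial bound $\|\theta-\hat\theta_t\|_2^2\le 4$ (diameter of the unit sphere) weighted by $\mathrm{Pr}(R_T^C)\le 1/\widetilde{T} = O(\log T / T)$, which yields the additive $4(d-1)\log T/T$ term.

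The only genuinely delicate point — and the one I would be most careful with — is the probability lower bound $(1-\exp(-k/24))^{\widetilde{T}}$. This is not a simple per-batch union bound but a product arising from the recursive Bayes argument in the proof of Theorem~\ref{th:regret_scaling_variance} (the chain $\mathrm{Pr}(E_{\widetilde{T}}) \ge (1-\delta)^{\widetilde{T}}$ with $\delta = \exp(-k/24)$). I must verify that my choice of $k$ makes this product at least $1-1/\widetilde{T}$, which requires $\exp(-k/24)\le \widetilde{T}^{-2}$ and then $(1-\widetilde{T}^{-2})^{\widetilde{T}}\ge 1-1/\widetilde{T}$; both are routine but the constant $24\log(\widetilde{T}^2)$ is chosen precisely so the exponent works out, so I would check it explicitly rather than hand-wave. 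The remainder is arithmetic simplification of the constants $576$, $96$ into $27648$, $4608$ (a factor of $k$'s constant $48 = 2\cdot 24$ from $k\le 48\log\widetilde{T}$ and $\log\widetilde T \le \log T$), which I would carry out but expect to present compactly. Finally I invoke $\beta_w = O(d)$ from~\eqref{eq:beta_constant_mom} to collapse everything to the asymptotic statements $O(d^4\log^2 T)$ and $\widetilde{O}(d^4/t)$.
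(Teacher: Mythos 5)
Your proposal is correct and follows essentially the same route as the paper: choose $k=\lceil 24\log(\widetilde T^2)\rceil$ so that $(1-\exp(-k/24))^{\widetilde T}\ge(1-\widetilde T^{-2})^{\widetilde T}\ge 1-\widetilde T^{-1}$, split the expectation over the success event and its complement, use the trivial bounds on the complement, and then substitute $k=O(\log T)$ and $\widetilde T\le T$ into the constants. The only slip is the deterministic fallback $R_T\le T$: since each round contributes $1-\langle\theta,A_t\rangle\le 2$, the correct trivial bound is $R_T\le 2T=4(d-1)k\widetilde T$ (as the paper uses), but this only changes a constant in the subleading additive term and does not affect the argument.
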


\begin{proof}
The result of Theorem~\ref{th:regret_scaling_variance}  holds with probability at least $(1-\exp (-k/24))^{\widetilde{T}}$. Setting $k = \lceil 24\log(\widetilde{T}^2) \rceil $ gives 
    \begin{align}
        (1-\exp (-k/24))^{\widetilde{T}} \geq \left( 1 - \frac{1}{\widetilde{T}^2} \right)^{\widetilde{T}} \geq 1 - \frac{1}{\widetilde{T}}.
    \end{align}
Then given the event $R_T$ such that Algorithm~\ref{alg:linucb_vn_var} achieves the bounds given by Theorem~\ref{th:regret_scaling_variance} we have that the probability of failure is bounded by 
\begin{align}
    \mathrm{Pr}(R^C_T)  \leq \frac{1}{\widetilde{T}},
\end{align}
where we used $1 = \mathrm{Pr}(R_T)  + \mathrm{Pr}(R^C_T)  $. Then the expectation of the bad events can be bounded as
\begin{align}
    \EX \left[\text{Regret}(T)\mathbbm{1}\lbrace R^C_T \rbrace \right] &\leq  4(d-1)k\widetilde{T}\mathrm{Pr}(R^C_T) \leq  4(d-1)k \\
    \EX \left[ \| \theta -\hat{\theta}_t \|_2^2 \mathbbm{1}\lbrace R^C_T \rbrace \right] &\leq 4\mathrm{Pr}(R^C_T) \leq \frac{4}{\widetilde{T}} 
\end{align}
where we used $R_T (\mathbb{S}^{d-1},\theta, \pi) \leq 2T = 4(d-1)k\widetilde{T}$, $\| \theta - \hat{\theta}_t\|_2^2 \leq 4$. Finally the result follows inserting the value of $k = 24\log (\widetilde{T}^2) $ into the bounds of Theorem~\ref{th:regret_scaling_variance} and using $\widetilde{T} \leq T$.
\end{proof}

\textbf{Note.} While the above results apply for the qubit version of the PSMAQB setting, we will state the exact connection in Theorem~\ref{th:regret_PSMAQB} in the first section of the next Chapter~\ref{ch:applications} of applications.

\chapter{Applications}\label{ch:applications}

This chapter is based on the author's works~\cite{lumbreras24pure,bramachari24intelligence,lumbreras2025quantumwork}, which explore applications of the multi-armed quantum bandit framework.

We begin with an application of the PSMAQB algorithm~\ref{alg:linucb_vn_var} to quantum state tomography, where we relate regret to cumulative disturbance. We show that it is possible to learn pure quantum states with minimal disturbance to the samples.

Next, we consider a specific setting of learning with minimal disturbance: quantum state-agnostic work extraction protocols. In this task, we study protocols for extracting work when given oracle access to an unknown pure qubit state, with the goal of transferring energy to a battery system. We analyze both discrete and continuous battery settings, and show that using the PSMAQB algorithm~\ref{alg:linucb_vn_var}, we can achieve polylogarithmic cumulative dissipation. Here, dissipation refers to the loss of extractable energy due to not knowing the underlying state.

As a final application, we modify the MAQB setting to study a recommender system for quantum data. In this setting, in each round, a learner receives an observable (the context) and must choose one quantum state to measure from a finite set of unknown states (the actions). The goal is to maximize the reward in each round, defined as the outcome of the measurement. Based on this model, we formulate the low-energy quantum state recommendation problem, where the context is a Hamiltonian and the objective is to recommend the state with the lowest energy. For this task, we study two families of contexts: the Ising model and a generalized cluster model.

\section{Learning pure states without disturbing them}

We start with one of the most immediate applications of the PSMAQB setting, a new approach for quantum state tomography. Given sequential access to a finite number of samples of a quantum state, our goal is not only to accurately learn a classical description of the state but also to use measurements that disturb the copies as little as possible. Generally, these two goals are incompatible, and we are thus interested in tomography algorithms that find an optimal balance between them. 

Minimizing disturbance is important in many real-world scenarios where the samples that we use for tomography are in fact resources for another tasks\,---\, and thus we want to learn the state in a way that is as non-intrusive as possible, ensuring that the post-measurement states remain useful for their intended purpose. An example of this occurs in quantum key distribution, where tomography can be used to keep reference frames aligned during a run, but any disturbance due to tomographic measurements will induce bit errors in the correlations used to extract a secret key. Disturbance is also relevant for state-agnostic resource distillation, where resourceful states might be destroyed by tomographic measurements, but learning the unknown state is crucial since optimal extraction protocols generally depend on its description. 

In both cases, we encounter a fundamental trade-off between exploration (learning the state) and exploitation (using the samples for another purpose). This trade-off is exactly the one we find in a multi-armed bandit problem, and we are going to see how we can relate the disturbance to the samples to the figure of merit we studied during this thesis, the regret.

More formally, we consider a scenario where we have sequential access to $T$ samples in an unknown pure qubit quantum state $|\psi\rangle$ and at each round $t\in[T]$, we select a probe direction $|\psi_t\rangle$ and perform a measurement sampling a result $r_t\in\lbrace 0 ,1 \rbrace$ distributed according to Born's rule, i.e., $p_t = \Pr[R_t = 1] = |\langle \psi|\psi_t\rangle|^2$. Here, $r_t = 1$ indicates that the post-measurement state is $\psi_t$ and $r_t = 0$ indicates its collapse to the orthogonal complement, $\psi_t^c$. Specifically, the distribution of the post-selected state $\xi_t\in\mathcal{S}_d$ after the measurement is
\begin{align}\label{eq:post_measured_state_dist}
    \mathrm{Pr}\left(\xi_t | \psi_{t} \right) = \begin{cases}
      |\langle \psi | \psi_t \rangle |^2 \quad \text{if } \quad \xi_t =  |\psi_{t}\rangle \! \bra{\psi_{t}} \\
       1 - |\langle \psi | \psi_t \rangle |^2 \quad \text{if } \quad \xi_t = |\psi^c_{t}\rangle \! \bra{\psi^c_{t}}  \\
       0 \quad \text{otherwise} ,
    \end{cases} 
\end{align}
where 
\begin{align}
|\psi^c_{t}\rangle = \frac{\ket{\psi}- \langle  \psi_{t}| \psi \rangle \ket{\psi_{t}}}{\sqrt{1 - |\langle \psi | \psi_{t} \rangle |^2}} .
\end{align}

Then the disturbance is quantified by the cumulative infidelity between the unknown state and the post-measurement state $\rho_t = p_t \psi_t + (1-p_t) \psi_t^c$, i.e., and we define it as
\begin{align}\label{eq:disturbance}
  \text{Disturbance}(T) :=& \sum_{t=1}^T  1 - \bra{\psi}\! \rho_t\! \ket{\psi} \\
  =& \sum_{t=1}^T 2\, |\!\braket{\psi|\psi_t}\!|^2  \big( 1 - |\!\braket{\psi|\psi_t}\!|^2 \big) \,.
\end{align}
As can be seen from the second equation above, the disturbance is small when the probe direction is closely aligned with the unknown state. Hence, learning the state is, in fact, necessary to minimize disturbance. However, since we are interested in tomography, we also explicitly require that the algorithm outputs an estimate $\hat{\psi}_T$ with high fidelity to the unknown state after $T$ rounds, i.e., we aim to minimize the error
\begin{align}\label{eq:fidelity_intro_error}
    \text{Err}(T) := 1 - \big| \langle \psi |\hat{\psi}_T\rangle \big|^2 \,.
\end{align}

\subsection{Minimizing disturbance with the PSMAQB}

We note that the setting described in the previous section is captured by the PSMAQB framework, as it corresponds to a sequential quantum state tomography problem where the unknown state (the environment) is pure and the measurements are all rank-1 projectors. The only difference lies in the form of the figures of merit: the disturbance~\eqref{eq:disturbance} and the regret~\eqref{eq:regret_psmaqb}. However, these are technically equivalent, as we formalize in the following lemma.

\begin{lemma}
Let $\pi$ be a policy for a $d$-dimensional PSMAQB and let $T \in \mathbb{N}$ be the time horizon. Then, minimizing the cumulative regret $R_T(\mathbb{S}_d^*,\pi,\pi)$~\eqref{eq:regret_psmaqb} is equivalent to minimizing the disturbance $\textup{Disturbance}(T)$~\eqref{eq:disturbance}.
\end{lemma}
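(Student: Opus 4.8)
The plan is to show the two figures of merit are related by a simple pointwise identity at each round, so that minimizing one is equivalent to minimizing the other. First I would recall the two expressions. The regret for the PSMAQB is
\begin{align}
R_T(\mathcal{S}_d^*, \psi, \pi) = \sum_{t=1}^T \left( 1 - |\langle \psi | \psi_t \rangle|^2 \right),
\end{align}
while the disturbance admits the closed form
\begin{align}
\text{Disturbance}(T) = \sum_{t=1}^T 2\, |\langle \psi | \psi_t \rangle|^2 \big( 1 - |\langle \psi | \psi_t \rangle|^2 \big).
\end{align}
Introducing the shorthand $p_t := |\langle \psi | \psi_t \rangle|^2 \in [0,1]$, the per-round regret term is $1 - p_t$ and the per-round disturbance term is $2 p_t (1 - p_t)$.

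The key observation is that both per-round quantities are strictly monotone functions of the single scalar $p_t$ on the relevant range, and both are minimized exactly when $p_t = 1$, i.e.\ when the probe direction $|\psi_t\rangle$ aligns perfectly with the unknown state $|\psi\rangle$. Concretely, the map $p \mapsto 1 - p$ is strictly decreasing, and $p \mapsto 2p(1-p)$ is strictly decreasing on the interval $[1/2, 1]$ where an effective learner operates; at the common optimum $p_t = 1$ both vanish. I would make this precise by noting the pointwise bound relating the two terms, namely $2 p_t (1 - p_t) \leq 2 (1 - p_t)$, which follows from $p_t \leq 1$, and conversely that small disturbance forces $p_t$ close to either $0$ or $1$. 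The cleanest route is to argue that the single decision variable at round $t$ is the alignment $p_t$, the choice that optimizes it (maximizing $p_t$ towards $1$) simultaneously drives $1 - p_t \to 0$ and $2 p_t (1 - p_t) \to 0$, so any policy optimal for the regret objective is optimal for the disturbance objective and vice versa. Taking expectations over the policy randomness and the reward distribution~\eqref{eq:post_measured_state_dist} preserves this correspondence by linearity.

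The main subtlety — not a deep obstacle but the point requiring care — is that $2p(1-p)$ is not globally monotone in $p$: it increases on $[0,1/2]$ and decreases on $[1/2,1]$, so the equivalence in the strict sense ``same minimizers'' holds only once one restricts attention to the regime $p_t \geq 1/2$, where a sensible tomography strategy selects probe directions aligned (rather than anti-aligned) with the state. I would address this by observing that the optimal action for the regret, $\max_{\Pi \in \mathcal{S}_d^*} \langle \psi | \Pi | \psi \rangle = 1$, coincides with the global minimizer of the disturbance, so the shared target of both objectives is $p_t = 1$; hence a policy that drives the regret to its minimum necessarily places $p_t$ in a neighbourhood of $1$, where the disturbance is also driven to its minimum. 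This establishes that the two optimization problems share the same optimal policies and completes the equivalence.
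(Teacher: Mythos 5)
Your upper-bound direction is exactly the paper's: from $p_t \le 1$ you get $2p_t(1-p_t) \le 2(1-p_t)$ termwise, hence $\textup{Disturbance}(T) \le 2 R_T$. The gap is in the converse. You correctly flag that $p \mapsto 2p(1-p)$ is not monotone, but your resolution only establishes one implication (regret-optimal $\Rightarrow$ disturbance-optimal); the ``and vice versa'' is false as you have argued it. The per-round disturbance $2p_t(1-p_t)$ has \emph{two} global minimizers, $p_t=0$ and $p_t=1$, so a policy that always probes orthogonally to $\ket{\psi}$ achieves zero disturbance while incurring maximal regret $T$. Saying that ``small disturbance forces $p_t$ close to $0$ or $1$'' therefore does not close the argument, and restricting attention to ``the regime $p_t \ge 1/2$ where a sensible learner operates'' is an assumption on the policy, not something you have derived.

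The paper closes this with two ingredients you are missing. First, the elementary pointwise inequality $2p(1-p) \ge \min\{p,\,1-p\}$ (check the two cases $p \ge 1/2$ and $p \le 1/2$ separately), which gives $\textup{Disturbance}(T) \ge \sum_t \min\{p_t,\,1-p_t\}$ with no assumption on the policy. Second, an argument for why $\min\{p_t,\,1-p_t\}$ can be replaced by $1-p_t$: the two-outcome POVM $\{\Pi_{A_t},\,\mathbb{I}-\Pi_{A_t}\}$ is the same physical measurement under relabeling of its outcomes, so whenever $p_t < 1/2$ the policy may be replaced by the relabeled one without changing the disturbance while only decreasing the regret. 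Together these yield the two-sided sandwich $R_T \le \textup{Disturbance}(T) \le 2R_T$ (with $R_T$ understood up to this relabeling), which is the precise sense in which the two minimization problems are equivalent. You should add the lower bound $2p(1-p)\ge\min\{p,1-p\}$ and the relabeling step to make your argument complete.
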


\begin{proof}
   Defining $\Pi_{a_t} = |\psi_t\rangle \! \langle \psi_t |$ we have that
    \begin{align}
         \text{Disturbance}(T)   = &2\sum_{t=1}^T \langle \psi |\Pi_{a_t} | \psi \rangle ( 1 - \langle \psi |\Pi_{a_t} | \psi \rangle ) \\
         &\leq 2\sum_{t=1}^T ( 1 - \langle \psi |\Pi_{a_t} | \psi \rangle )\\
         &= 2 R_T(\mathbb{S}_d^*,\pi,\psi)
    \end{align}
    where we used $\langle \psi |\Pi_{a_t} | \psi \rangle\leq 1$ . For the reverse inequality, we can use
    \begin{align}
         \text{Disturbance}(T)  \geq \sum_{t=1}^T \min \lbrace 1 - \langle \psi |\Pi_{a_t} | \psi \rangle , \langle \psi |\Pi_{a_t} | \psi \rangle\rbrace ,
    \end{align}
    combined with $ \min \lbrace 1 - \langle \psi |\Pi_{a_t} | \psi \rangle  , \langle \psi |\Pi_{a_t} | \psi \rangle  \rbrace =  1 - \langle \psi |\Pi_{a_t} | \psi \rangle  $ where we used that for $\langle \psi |\Pi_{a_t} | \psi \rangle \leq \frac{1}{2}$ we can just relabel the binary outcomes of the POVM $(\Pi_{a_t}, \mathbb{I}-\Pi_{a_t} )$. 
\end{proof}

With the above lemma, we can say that in fact the PSMAQB setting and the tomography setting are equivalent, and we can study policies that minimize regret since they will also minimize the disturbance~\eqref{eq:disturbance}. This is the main reason why we call this setting \textit{quantum state tomography with minimal regret}. In the following theorem, we have the specific instance of \textsf{LinUCB-VNN} (Algorithm~\ref{alg:linucb_vn_var}) for $d=3$ that applies to the qubit PSMAQB setting. We state the regret and the infidelity scaling of the estimator, since in quantum state tomography, we are also interested in minimizing the error~\eqref{eq:fidelity_intro_error}.

\begin{figure}
    \centering
    \begin{overpic}[percent,width=0.7\textwidth]{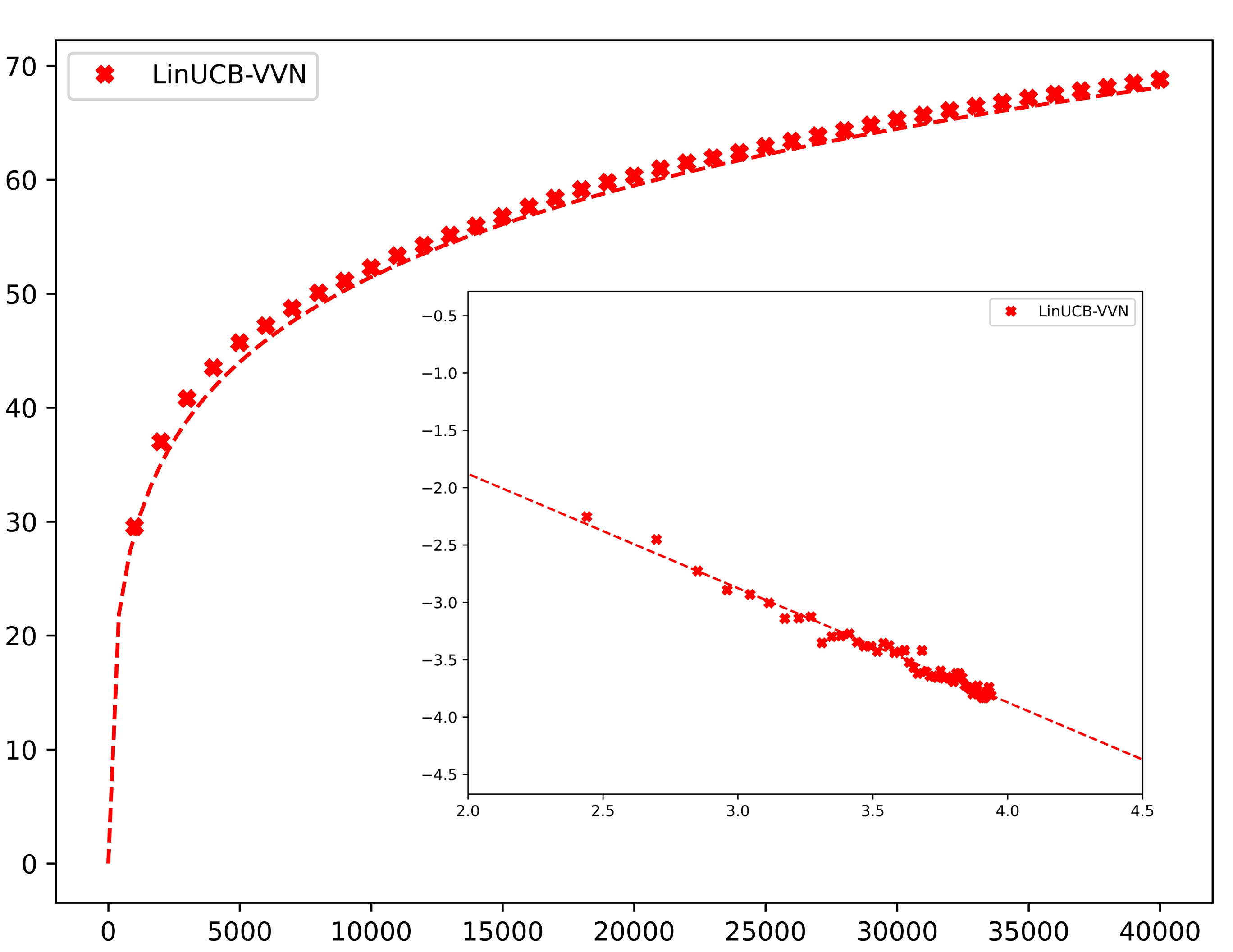}
    \put(-6,35){\rotatebox{90}{Regret($T$)}}
    \put(50,-2){$T$}
    \put(32,25){\rotatebox{90}{\tiny$\log\left( 1-F(\Pi,\Pi_t) \right)$}}
    \put(60,8){\tiny $\log\left( \frac{t}{\log (t)} \right)$}
    \end{overpic}
    \caption{Expected regret vs the number or rounds $T$ for the \textsf{LinUCB-VNN} algorithm. We run $T = 4\cdot 10^4$ rounds with $k = 10$ subsamples for the median of means construction. We use $100$ independents experiments and average over them. We obtain results for each round but only plot (red crosses) few for clarity of the figure. We fit the regression $\text{Regret}(T) = m_1\log^2 T + b_1$ with $m_1 = 3.2164 \pm 0.0009$ and $b_1 = 0.84 \pm 0.016$. In the inset plot we plot the expected infidelity  of the output estimator at each rounds $t\in [T]$ versus the number of rounds $t$. We take $\Pi_t = \Pi_{\theta^{\text{wMoM}}_t} $ as the estimator given by the median of means linear least squares estimator. We fit the regression $1- F(\Pi,\Pi_t) = b_2 \left(\frac{\log t}{t}\right)^{m_2} $ and we obtain $m_2 = -0.996 \pm 0.002$ $b_2 = 0.112\pm 0.007$. We note that the number of subsamples of the theoretical results is very conservative in comparison with the value we take for the simulations.}
    \label{fig:regret_PSMAQB_plot}
\end{figure}

\begin{figure}
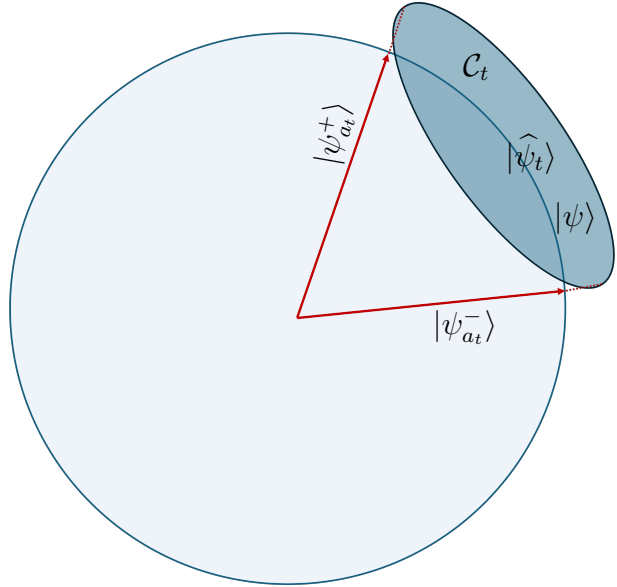

    \centering
    \begin{overpic}[percent,width=0.5\textwidth]{figures/algorithm_psmaqb.png}
    \put(50,70){\rotatebox{80}{$|\psi^+_{a_t}\rangle$}}
    \put(70,42){\rotatebox{0}{$|\psi^-_{a_t}\rangle$}}
    \put(75,85){\rotatebox{0}{$\mathcal{C}_t$}}
    \put(82,70){{$|\widehat{\psi}_t\rangle$}}
    \put(90,60){{$| \psi \rangle$}}
    
    \end{overpic}
    \caption{Sketch of Algorithm~\ref{alg:linucb_vn_var} that minimizes the disturbance. At each time step it computes an estimator $|\widehat{\psi}_t\rangle$ and builds a high-probability confidence region $\mathcal{C}_t$ (shaded region) around the unknown state $ | \psi \rangle $ on the Bloch sphere representation. Then uses the optimistic principle to output measurement directions $|\psi^\pm_{a_t} \rangle$ that are close the unknown state $\ket{\psi} $ projecting into the Bloch sphere the extreme points of the largest principal axis of $\mathcal{C}_t$. This particular choice allows optimal learning of $\ket{\psi}$ (exploration) and simultaneously minimizes the regret or disturbance (exploitation).}
    \label{fig:psmaqb_exploration_exploitation_chapterv}
\end{figure}

\begin{theorem}\label{th:regret_PSMAQB}
    Let $\widetilde{T}\in\mathbb{N}$ and fix the time horizon $T = \lceil 96\widetilde{T}\log (\widetilde{T}^2) \rceil $. Then given a \textup{PSMAQB} with qubit action set $\mathcal{A} = \mathcal{S}^*_2$ and qubit environment $|\psi\rangle \! \langle \psi |\in\mathcal{S}^*_2$, we can apply Algorithm~\ref{alg:linucb_vn_var} for $d=3$ and it achieves
    \begin{align}
        \EX_{\psi,\pi}\left[R_T(\mathbb{S}_2^*,\pi,\psi) \right] \leq  C_1\log\left(T \right)+ 
    C_2\log^2\left(T\right).
    \end{align}
    for some universal constants $C_1, C_2 \geq 0$. Also at each time step $t\in [T ]$, it outputs an estimator  $\hat{\Pi}_{t}\in\mathcal{S}^*_2$ of $\psi$ with infidelity scaling 
    \begin{align}
         \EX_{\psi,\pi} \left[ 1 -  \langle \psi | \hat{\Pi}_{t} | \psi \rangle  \right] \leq   \frac{C_3\log(T)}{t} ,
    \end{align}
    for some universal constant $C_3 \geq 0$.
\end{theorem}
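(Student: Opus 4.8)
The plan is to obtain Theorem~\ref{th:regret_PSMAQB} as a direct corollary of the general linear-bandit result for linearly vanishing variance (Corollary~\ref{cor:expected_regret_mom}) by verifying that the qubit PSMAQB problem is an instance of that model with dimension $d=3$. First I would set up the explicit correspondence via the Bloch-sphere parametrization~\eqref{eq:PSMAQB_link_sphere}: write the unknown pure state as $|\psi\rangle\!\langle\psi| = \tfrac{\mathbb{I}}{2} + \tfrac{1}{2}\theta\cdot\sigma$ with $\theta\in\mathbb{S}^2$, and each action projector $\Pi_{A_t} = \tfrac{\mathbb{I}}{2}+\tfrac12 A_t\cdot\sigma$ with $A_t\in\mathbb{S}^2$. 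Under this map the reward expectation is $\langle\psi|\Pi_{A_t}|\psi\rangle = \tfrac12(1+\langle\theta,A_t\rangle)$, so up to the affine rescaling $X_t \mapsto 2X_t - 1$ the reward takes the linear-bandit form $2X_t-1 = \langle\theta,A_t\rangle + \epsilon_t$ with $\theta,A_t\in\mathbb{S}^{d-1}$ for $d=3$. I would then check the two noise conditions in~\eqref{eq:linearly_vanishing_variance}: the mean-zero condition is immediate by construction, and for the variance I would use that the Bernoulli reward has variance $\langle\psi|\Pi_{A_t}|\psi\rangle(1-\langle\psi|\Pi_{A_t}|\psi\rangle)$, which after the rescaling by $2$ gives $\mathrm{Var}[\epsilon_t] = 4\langle\psi|\Pi_{A_t}|\psi\rangle(1-\langle\psi|\Pi_{A_t}|\psi\rangle) = 1-\langle\theta,A_t\rangle^2$, exactly matching the linearly vanishing variance bound required by the model.

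Once the reduction is established, the regret statement follows almost verbatim: the regret~\eqref{eq:regret_psmaqb} of the PSMAQB equals $\sum_t (1-\langle\psi|\Pi_{A_t}|\psi\rangle) = \tfrac12\sum_t(1-\langle\theta,A_t\rangle) = \tfrac14\sum_t\|\theta-A_t\|_2^2$, which is (up to the constant factor absorbed into the rescaling) the linear-bandit regret $R_T(\mathbb{S}^{d-1},\theta,\pi)$. Applying Corollary~\ref{cor:expected_regret_mom} with $d=3$ and the stated choice $k=\lceil 24\log(\widetilde{T}^2)\rceil$, together with the time-horizon bookkeeping $T = 2(d-1)k\widetilde{T} = 4k\widetilde{T}$ (which for $d=3$ gives the stated $T = \lceil 96\widetilde{T}\log(\widetilde{T}^2)\rceil$), yields $\EX_{\psi,\pi}[R_T] = O(d^4\log^2 T) = O(\log^2 T)$ since $d$ is now a constant. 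Absorbing the dimensional constants into universal constants $C_1,C_2$ gives the first claimed bound. For the infidelity of the estimator, I would take $\hat\Pi_t$ to be the projector whose Bloch vector is $\hat\theta_t$, the online estimate output by Algorithm~\ref{alg:linucb_vn_var} (e.g.\ one of the played actions, as in the proof of Theorem~\ref{th:regret_scaling_variance}), and use $1-\langle\psi|\hat\Pi_t|\psi\rangle = \tfrac14\|\theta-\hat\theta_t\|_2^2$ together with the per-round estimation bound $\EX[\|\theta-\hat\theta_t\|_2^2] = \widetilde{O}(d^4/t)$ from Corollary~\ref{cor:expected_regret_mom}, giving $\EX_{\psi,\pi}[1-\langle\psi|\hat\Pi_t|\psi\rangle] \leq C_3\log(T)/t$.

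The main obstacle I anticipate is not conceptual but bookkeeping: one must be careful that the affine rescaling $X_t\mapsto 2X_t-1$ needed to put the Bernoulli reward into the centered linear form is compatible with the noise-model hypotheses and does not change the achievability of the variance-overestimation condition $\mathrm{Var}[\epsilon_{s,i}]\le\hat\sigma_s^2(A_s)$ that underlies Corollary~\ref{cor:concentration_wmom}; in particular the factor-of-$4$ rescaling multiplies the variance bound and must be tracked consistently through the constant $\beta_w$ and the weight choice~\eqref{eq:weight_choice_mom}. A second, minor, point is to confirm that the map between the qubit state space and $\mathbb{S}^2$ preserves the optimality structure so that $\max_{\Pi\in\mathcal{S}_2^*}\langle\psi|\Pi|\psi\rangle = 1$ corresponds to $\max_{A\in\mathbb{S}^{d-1}}\langle\theta,A\rangle = 1$ at $A=\theta$; this is where the geometric equivalence~\eqref{eq:PSMAQB_link_sphere} is essential. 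Everything else is a transcription of the already-proved general bounds with $d$ fixed to $3$.
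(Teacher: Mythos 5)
Your proposal follows essentially the same route as the paper's proof: Bloch-sphere parametrization with $d=3$, the affine rescaling $\tilde{X}_t = 2X_t-1$ to obtain the centered linear reward with variance $1-\langle\theta,A_t\rangle^2$, invocation of Corollary~\ref{cor:expected_regret_mom} with $k=\lceil 24\log(\widetilde{T}^2)\rceil$, and the identity $\|\theta-\hat\theta_t\|_2^2 = 4\bigl(1-F(\Pi_\theta,\Pi_{\hat\theta_t})\bigr)$ for the infidelity claim. The bookkeeping points you flag (tracking the variance rescaling through $\beta_w$ and verifying the $\lambda_0$ condition) are exactly the checks the paper carries out explicitly, so the argument is correct and complete in the same way.
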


\begin{proof}
    In order to apply Algorithm~\ref{alg:linucb_vn_var} to a PSMAQB, we set $d=3$ (dimension for a classical linear stochastic bandit) and the actions that we select will be given by $\Pi_{A^\pm_{t,i}}\in\mathbb{S}^*_2$ where $A^\pm_{t,i}$ are the Bloch vectors of $\Pi_{A^\pm_{t,i}}\in\mathbb{S}^*_2$ and are updated as in~\eqref{eq:action_general_update_mom}. Note that they are valid actions since $ A^{\pm}_{t,i} \in\mathbb{S}^2$ implies $\Pi_{A^\pm_{t,i}} \in\mathcal{S}^*_2$. In order to update the action in the right form, we need to renormalize the rewards $X_t\in\lbrace 0 ,1 \rbrace $ received by the algorithm as
    \begin{align}
        \tilde{X}_t = 2X_t - 1 ,
    \end{align}
    so we have the guarantee $\EX[\tilde{X}_t ] = \langle \theta ,A_t \rangle$, where $\theta\in\mathbb{S}^2$ is the Bloch vector of the unknown $|\psi\rangle$ and $A_t$ is the one of the measurements $\Pi_{A_t}$. Then we have that the variance of the statistical noise of $\tilde{X}_t $ has the following form
    \begin{align}
        \mathrm{Var}[\tilde{\epsilon_t}] = 1 - \langle \theta ,A_t \rangle^2 .
    \end{align}
    Thus, this model fits into the linear bandit with linearly vanishing variance noise model explained in Section~\ref{sec:linearly_vanishin_variance}, and we can apply the guarantees established in Theorem~\ref{th:regret_scaling_variance} and Corollary~\ref{cor:expected_regret_mom}.
    
    The algorithm is set with $k= \lceil 24\log (\widetilde{T}^2) \rceil $ batches for the median of means construction. We set $\lambda_0 = 2$, and using $\| \theta \|_2 =1 $
    we have that the constant $\beta$ given in~\eqref{eq:beta_constant_mom} has the value
    \begin{align}
        \beta = 9\left(3\sqrt{3}+2 \right)^2 = 279+108\sqrt{3}.
    \end{align}
    Then we can check that for $d=3$, the condition~\eqref{eq:lambda_condition} for the input parameter $\lambda_0$ for Theorem~\ref{th:regret_scaling_variance} to hold is satisfied since
    \begin{align}
        \lambda_0 = 2 \geq \max\left\lbrace 2 , \frac{1}{3}+\frac{1}{2\sqrt{6}(279+108\sqrt{3})} \right\rbrace = 2 .
    \end{align}
    In the above, we just substituted all numerical values. Then we are under the assumptions of Theorem~\ref{th:regret_scaling_variance} and Corollary~\ref{cor:expected_regret_mom}, and the result follows by applying both results with the relation of regrets between the classical and quantum model. Then we also have
    \begin{align}
        \| \theta - \hat{\theta}_t \|_2^2 = 4\left(1-F\left(\Pi_\theta , \Pi_{ \hat{\theta}_t} \right) \right),
    \end{align}
    where take the estimator $\hat{\theta}_t$ given in Theorem~\ref{th:regret_scaling_variance} for $d=3$ and we use it as the Bloch vector of $\Pi_{ \hat{\theta}_t}$. We also use the bound $\widetilde{T}\leq T$ and reabsorb all the constants into $C_1,C_2,C_3$.
\end{proof}

From the above theorem, we have the following remarks.

\textbf{Remark 1.} The constant dependence can be slightly improved taking the estimator for $\Pi_\theta$ as $\Pi_{\theta^{\text{wMoM}}_t}$ with $\hat{\theta}^{\text{wMoM}}_t$ defined in Section~\ref{sec:MoMLSE}.

\textbf{Remark 2.} The result of Theorem~\ref{th:regret_PSMAQB} also holds with high probability. In particular for the choice of batches $k = 24\log(\widetilde{T}^2)$ the guarantee holds with probability at least $1-\frac{1}{\widetilde{T}}$.

\textbf{Remark 3.} The same notion of disturbance can also be extended to mixed states recovering the general expression of the regret for general bandits \\ $R_T (\mathbb{S}_d^* , \pi , \rho ) = \EX_{\rho,\pi}\left[ \sum_{t=1}^T \lambda_{\max}(\rho) - \langle \psi_t | \rho |\psi_t\rangle \right]$. For more details, see~\cite[Lemma 3]{lumbreras24pure}.

\section{Quantum state-agnostic work extraction (almost) without dissipation}

As a specific task where learning with low disturbance is important, we consider work-extraction protocols from unknown quantum sources.

%In this section, we apply the multi-armed quantum bandit framework—specifically, the PSMAQB algorithm (Algorithm~\ref{alg:linucb_vn_var})—to the task of work extraction from unknown quantum systems.

Given sequential access to finite copies of an identical, unknown quantum system, what is the optimal strategy for extracting work from them and charging a battery? A natural strategy might involve first performing quantum state tomography to estimate the unknown state, followed by work extraction based on this estimate \cite{vsafranek2023work,watanabe2024black,watanabe2025universal}. Indeed, knowledge of the quantum state is essential for efficient work extraction; consider, for example, Szilard's engine, where the information about the system's configuration---such as the position of a particle---enables work to be extracted \cite{szilard1929entropieverminderung}. However, with only finitely many copies, any estimate of the true state has statistical uncertainty~\cite{ODonnell_2016,haah2016sample}, resulting in unavoidable heat dissipation during work extraction \cite{riechers2021initial}. Furthermore, quantum systems that are measured during state tomography are no longer available for work extraction and must therefore be accounted for as a form of dissipation.

This raises a natural question: can adaptive strategies, which simultaneously learn the unknown state and extract work, offer better performance?
The extraction of useful work from available resources has been a central problem in classical thermodynamics and continues to attract significant attention in the quantum domain. While numerous protocols have been proposed~ \cite{allahverdyan2004maximal,aaberg2013truly,brandao2013resource,skrzypczyk2014work,elouard2018efficient}, the majority assume that the agent has full knowledge of the quantum state---these are so-called \emph{state-aware} protocols. In practical scenarios, we may not always know how the state is prepared; in this case, how can an agent then design the protocol?

\begin{figure}
    \centering
    \includegraphics[width=0.75\linewidth]{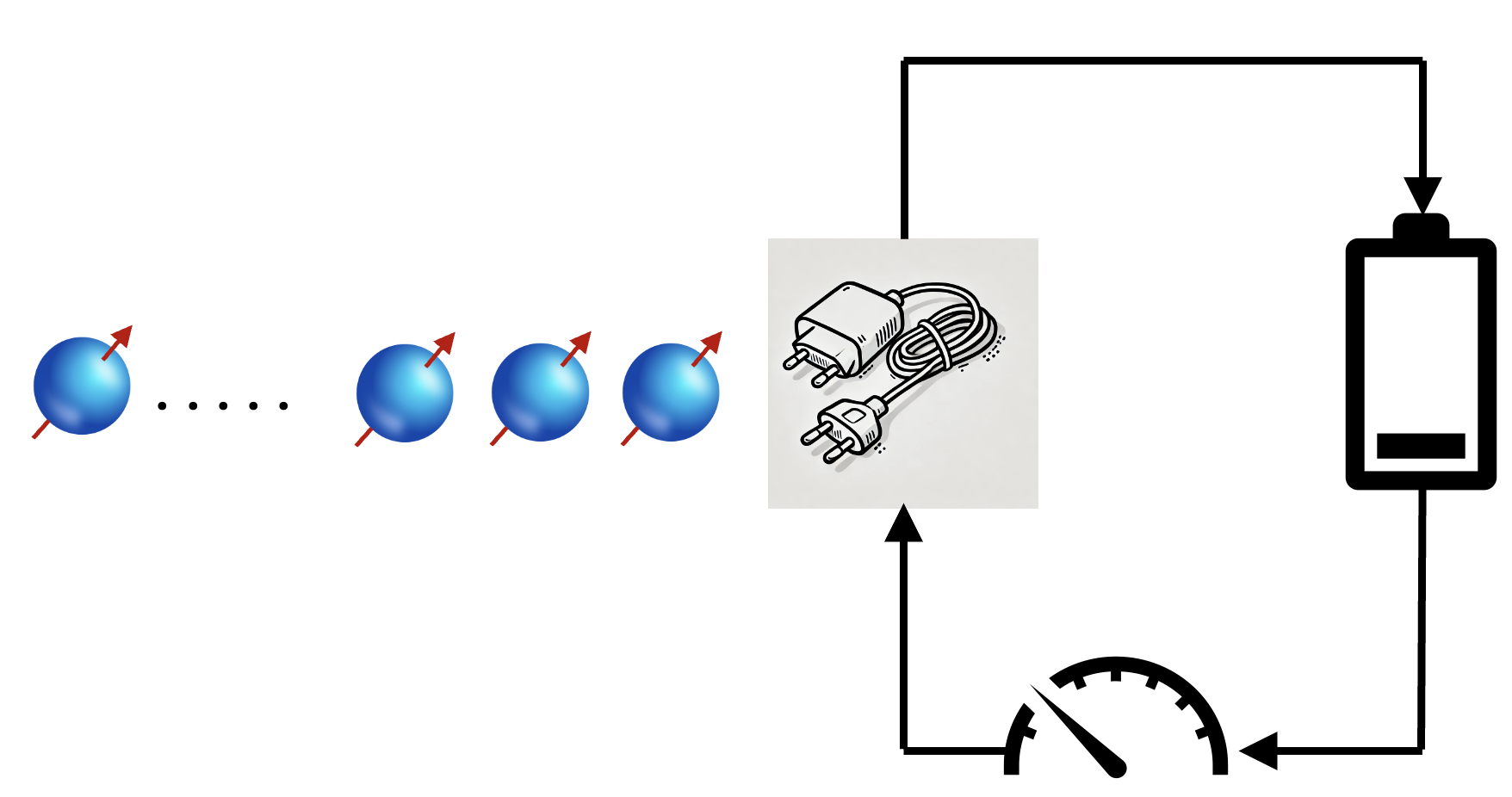}
    \caption{Sketch of the work extraction protocols considered in this section. On the left, a source provides single copies of an unknown pure qubit state. A charging protocol (represented as a cartoon charger) uses each copy to attempt charging a battery. Since the state is unknown, a measurement is performed during the process to extract partial information. This measurement outcome provides feedback that can be used to improve the charging strategy in future rounds.}
    \label{fig:charging_sketch}
\end{figure}

In this section, we investigate state-agnostic protocols for work extraction. Specifically, we are interested in how the cumulative dissipation scales with the number of available copies of an unknown quantum state. We show that adaptive strategies can simultaneously learn the unknown state and maximize the amount of work transferred to a battery system. Notably, such strategies can lead to an exponential improvement in cumulative dissipation: while tomography-based approaches yield a dissipation scaling of $\Omega(\sqrt{T})$ in the number of available copies $T$, we demonstrate that a scaling of $O(\operatorname{polylog}(T))$ is achievable for pure-state sources using suitable adaptive work extraction algorithms.

We consider a source of unknown pure qubit states used to charge batteries in two distinct models. First, we analyze a battery with discrete energy levels, modeled as a bosonic system where work is accumulated via a Jaynes–Cummings interaction~\cite{Jaynes_1963,Seah_2021,skrzypczyk2014work}. Then, we turn to a semiclassical thermodynamic model that includes a thermal reservoir, with the battery described as a classical weight storing energy by being lifted or dropped.

The main technical contribution is an upper bound on the dissipation in both models. We establish this by connecting each setting to the PSMAQB framework and relating the cumulative dissipation to regret. Using the PSMAQB algorithm (Algorithm~\ref{alg:linucb_vn_var}), which achieves polylogarithmic regret, we show that the cumulative dissipation in both battery models similarly scales polylogarithmically with the number of available copies $T$.

\subsection{Discrete battery}

In this setting, we model the \emph{battery} as a quantum harmonic oscillator with Hamiltonian
\begin{align}
    H_B = \omega a^\dagger a,
\end{align}
where $\omega > 0$ is the oscillator frequency, and $a^\dagger$, $a$ are the creation and annihilation operators, respectively, defined as
\begin{align}
    a = \sum_{n=1}^\infty \sqrt{n} \ket{n-1}\bra{n}, \quad a^\dagger = \sum_{n=0}^\infty \sqrt{n+1} \ket{n+1}\bra{n},
\end{align}
where $|n\rangle$ are the eigenvectors of $H_B$ with eigenvalue $\omega n$ for $n=0,1,2,....$.

We assume sequential access to $T$ copies of an unknown pure qubit state $\ket{\psi}$. The goal is to design a protocol that uses these unknown states to charge the battery by increasing its energy level over time.

At each time step $t$, the battery is assumed to be in an energy eigenstate $\ket{n_t}$ and the learner holds the state $|\psi\rangle\ket{n_t}$. The protocol proceeds as follows:
\begin{enumerate}
    \item \textbf{Direction Selection:} Choose a direction $\psi_t$ on the Bloch sphere. Apply a local Hamiltonian
    \begin{align}
        H_A = \omega \ket{\psi_t}\!\bra{\psi_t}
    \end{align}
    to the incoming qubit $\ket{\psi}$, effectively projecting the qubit into the chosen basis.

    \item \textbf{Interaction with the battery:} Use a Jaynes--Cummings-type interaction for a fixed time $\tilde{t}$:
    \begin{align}
        H_I = \frac{\Omega}{2} \left( a \otimes \ket{\psi_t}\!\bra{\psi_t^c} + a^\dagger \otimes \ket{\psi_t^c}\!\bra{\psi_t} \right),
    \end{align}
    where $\Omega > 0$ controls the interaction strength.
\end{enumerate}

After the interaction, a measurement is performed on the battery to determine whether it has successfully absorbed energy. If the interaction time is set to
\begin{align}
    \tilde{t} = \frac{\pi}{\Omega \sqrt{n_t + 1}},
\end{align}
then the battery may transition to one of the following energy levels:
\[
    \ket{n_t + 1}, \quad \ket{n_t}, \quad \text{or} \quad \ket{n_t - 1}.
\]
A transition to $\ket{n_t + 1}$ indicates a successful charging event, i.e., the battery has gained one quantum of energy. The task is to adaptively choose the direction $\psi_t$ at each round to maximize the number of successful charges over $T$ rounds.

We summarize this charging procedure in Algorithm~\ref{alg:jc_work_extraction}.

\begin{algorithm}
\begin{algorithmic}
	\caption{\textsf{Jaynes-Cummings work extraction}} 
	\label{alg:jc_work_extraction}

\For{$ t = 1,2,\cdots$}
       \State Get new copy of $|\psi\rangle$.
       \State Choose direction $\psi_t$ .
       \State Expose $|\psi\rangle$ to a field induced by Hamiltonian $H_A=\omega \ket{\psi_t}\!\bra{\psi_t}$. 
        
       \State  For a time $\tilde{t}=\pi\Omega^{-1}(n_t+1)^{-\frac{1}{2}}$, turn on the interaction between the system and the battery with Hamiltonian is $H_I = \frac{\Omega}{2}(a \otimes \ket{\psi_t}\!\bra{\psi_t^c} + a^\dagger \otimes \ket{\psi_t^c}\!\bra{\psi_t}) $ 

     \State    Measure battery system and get output $n_{t+1}$. 

        \EndFor
\end{algorithmic}        
\end{algorithm}

\subsubsection{Dissipation}

We now compute the expected energy transferred to the battery system when interacting with the unknown state using the chosen direction $\psi_t$. Based on this, we define the \emph{dissipation} at each time step $t$ as the difference between the maximum achievable energy input and the actual energy input obtained using $\psi_t$.

At each time step $t\in[T]$, the total Hamiltonian of the system is
\begin{align}
    H= \omega (\ket{\psi_t}\!\bra{\psi_t} + a^\dagger a) +\frac{\Omega}{2}(a \otimes \ket{\psi_t}\!\bra{\psi_t^c} + a^\dagger \otimes \ket{\psi_t^c}\!\bra{\psi_t}),
\end{align}
One can verify that the eigenstates of the total Hamiltonian are  
\begin{align}
    &\ket{0}\ket{\psi_t^c} \\
    &\ket{n,+}  = \frac{1}{\sqrt{2}}(\ket{n-1}\ket{\psi_t} + \ket{n}\ket{\psi_t^c}) \\
    &\ket{n,-}  = \frac{1}{\sqrt{2}}( \ket{n-1}\ket{\psi_t} - \ket{n}\ket{\psi_t^c}),
\end{align}
for $n=1,2,\ldots$ and the eigenvalues are respectively
\begin{align}
    E_0=0, \quad
    E_{n+}  =  n\omega + \frac{\Omega}{2} \sqrt{n}, \quad 
    E_{n-}  =  n\omega - \frac{\Omega}{2} \sqrt{n}.
\end{align}

The state at time step $t\in[T]$ is $\ket{n_t}\ket{\psi}$ and can be decomposed into $4$ eigenstates $\ket{n_t,\pm}$ and $\ket{n_t+1,\pm}$, i.e.,
\begin{align}
    \ket{n_t}\ket{\psi} = \frac{1}{\sqrt{2}}\langle \psi_t^c |\psi \rangle(\ket{n_t,+}-\ket{n_t,-}) + \frac{1}{\sqrt{2}} \langle \psi_t |\psi \rangle (\ket{n_t+1,+}+\ket{n_t+1,-}). 
\end{align}
These eigenstates gain phase factors during the time evolution. After time $t_k=\pi\Omega^{-1}(n_t+1)^{-\frac{1}{2}}$, the state evolves to, up to an irrelevant global phase, 
\begin{align}
    e^{-i H t_k}\ket{n_t}\ket{\psi} & = \frac{1}{\sqrt{2}}\langle \psi_t^c |\psi \rangle(e^{i\theta_k}\ket{n_t,+}-e^{-i\theta_k}\ket{n_t,-}) \\
    &+ \frac{1}{\sqrt{2}} e^{i\frac{\omega\pi}{\Omega\sqrt{n_t+1}} } \langle \psi_t |\psi \rangle(i\ket{n_t+1,+}-i\ket{n_t+1,-}) ,
\end{align}
which rearranged gives
\begin{align}\label{eq:evolution_jc}
  e^{-i H t_k}\ket{n_t}\ket{\psi} & =   \langle \psi_t^c |\psi \rangle( i \sin\theta_k \ket{n_t-1}\ket{\psi_t}+ \cos\theta_k \ket{n_t}\ket{\psi_t^c}) \\
    &+  \frac{1}{\sqrt{2}} i e^{i\frac{\omega\pi}{\Omega\sqrt{n_t+1}} } \langle \psi_t |\psi \rangle \ket{n_t+1}\ket{\psi_t^c}, 
\end{align}
where $\theta_t = \frac{\pi}{2}\sqrt{\frac{n_t}{n_t+1}}$. After the evolution we finally measure the battery energy. The measurement outcome is $n_{t+1}$. The probability distribution of $n_{t+1}$ can be computed from~\ref{eq:evolution_jc} is given by 
\begin{align}
    \Pr(n_{k+1}) = \begin{cases}
        |\langle \psi_t |\psi \rangle |^2, \quad & n_{t+1} = n_t+1,\\
        |\langle \psi_t^c |\psi \rangle |^2 \cos^2\theta_t, \quad & n_{k+1} = n_t,\\
        |\langle \psi_t^c |\psi \rangle |^2 \sin^2\theta_t, \quad& n_{t+1} = n_t-1. 
    \end{cases}
\end{align}
The extracted work is defined as $\Delta W_k =\omega( n_{t+1}-n_t) $. The expected extracted work is then given by 
\begin{align}
    \EX[\Delta W_t] = \omega( |\langle \psi_t|\psi \rangle |^2 - |\langle \psi_t^c |\psi \rangle |^2 \sin^2\theta_t )= \omega( |\langle \psi_t |\psi \rangle |^2 (1+\sin^2\theta_t) - \sin^2\theta_t), 
\end{align}
where we have used $|\langle \psi_t^c |\psi \rangle |^2 = 1 - |\langle \psi_t |\psi \rangle |^2 $.  Therefore, we can compute the maximal extracted work as
\begin{align}
    \max_{\ket{\psi_t}} \EX[\Delta W_t] = \omega ,
\end{align}
where the maximum is achieved if $\psi_t = \psi$. The dissipation in this round is thus given by the difference between the maximal and the actual expected extracted work
\begin{align}\label{eq:dissipation_jc}
    \dissipation^{\text{jc},t} := \max_{\ket{\psi_t}} \EX[\Delta W_t] -  \EX[\Delta W_t] = \omega (1+\sin^2\theta_t^2)(1- |\langle \psi_t|\psi\rangle |^2 ) \leq 2 \omega (1-|\langle \psi_t|\psi\rangle |^2). 
\end{align}

\subsubsection{Cumulative dissipation and regret}

Recall that the goal is to maximize the energy transferred to the battery system, or equivalently, to minimize the cumulative dissipation. Using the expression for the dissipation at each time step $t$~\eqref{eq:dissipation_jc}, we find that the cumulative dissipation over $T$ rounds is bounded by
\begin{align}
    \dissipation^{\text{jc}}(T) := \sum_{t=1}^{T} \dissipation^{\text{jc},t} \leq 2\omega \sum_{t=1}^{T} \big(1 - |\langle \psi_t | \psi \rangle|^2 \big).
\end{align}
It turns out that the right-hand side matches the regret expression in the PSMAQB setting. Moreover, given the measured battery energy level $n_t$ at each time step $t$, we can define a reward as follows:
\begin{align}\label{eq:reward_jc}
    X_t = \begin{cases}
        1, & \text{if } n_{t+1} = n_t + 1, \\
        0, & \text{otherwise}.
    \end{cases}
\end{align}
The probability distribution of this reward is
\begin{align}
    \Pr[X_t = x_t] = \begin{cases}
        |\langle \psi_t | \psi \rangle |^2, & \text{if } x_t = 1, \\
        1 - |\langle \psi_t | \psi \rangle |^2, & \text{if } x_t = 0,
    \end{cases}
\end{align}
which coincides with the reward distribution in the PSMAQB setting. Thus, we can adaptively select the directions $\psi_t$ using Algorithm~\ref{alg:linucb_vn_var} (on the Bloch sphere representation) and we arrive to the following result.

\begin{theorem}
Let $T \in \mathbb{N}$ be a finite time horizon. Then, there exists an update rule for the directions $\lbrace \psi_t \rbrace_{t=1}^T$ in the work extraction protocol described in Algorithm~\ref{alg:jc_work_extraction}, based on the past rewards $\{ x_s \}_{s=1}^{t-1}$ as defined in~\eqref{eq:reward_jc}, that guarantees, with probability at least $1 - \delta$,
\begin{align}
    \dissipation^{\text{jc}}(T) = O\left( \omega \log(T) \log\left( \frac{T}{\delta} \right) \right).
\end{align}
\end{theorem}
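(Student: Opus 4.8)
The plan is to reduce the cumulative dissipation to the PSMAQB regret and then invoke the polylogarithmic high-probability guarantee established for Algorithm~\ref{alg:linucb_vn_var}. The starting point is the inequality already derived above,
\begin{align}
\dissipation^{\text{jc}}(T) \;\leq\; 2\omega \sum_{t=1}^{T}\big(1 - |\langle \psi_t | \psi \rangle|^2\big) \;=\; 2\omega\, R_T(\mathcal{S}_2^*, \psi, \pi),
\end{align}
where the equality is precisely the form of the PSMAQB regret~\eqref{eq:regret_psmaqb} with action $\Pi_{A_t} = |\psi_t\rangle\!\langle\psi_t|$. Thus it suffices to exhibit a direction-update rule, depending only on the past rewards, that makes the right-hand side polylogarithmic with high probability. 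The whole argument therefore rests on showing that the feedback available in the charging protocol is statistically indistinguishable from PSMAQB feedback, so that the algorithm and its analysis transfer verbatim.

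First I would verify the reward-distribution match. The reward $X_t$ in~\eqref{eq:reward_jc} records a successful charging event $n_{t+1} = n_t + 1$, and from the transition probabilities computed for the evolution~\eqref{eq:evolution_jc} we have $\Pr[X_t = 1\mid \psi_t] = |\langle \psi_t | \psi \rangle|^2$, exactly the Bernoulli reward~\eqref{eq:prob_quantum_reward_PSMAQB} of a PSMAQB with rank-1 projector $\Pi_{\psi_t}$. Crucially, although the physical measurement is performed on the battery rather than on $|\psi\rangle$ directly, the bandit analysis depends only on the conditional law of the reward given the chosen action, and these laws coincide. I would then pass to the Bloch-sphere representation as in the proof of Theorem~\ref{th:regret_PSMAQB}: identify the direction $\psi_t$ with a Bloch vector $A_t \in \mathbb{S}^2$, renormalize the reward to $\tilde{X}_t = 2X_t - 1$ so that $\EX[\tilde{X}_t] = \langle \theta, A_t\rangle$ with $\theta$ the Bloch vector of $|\psi\rangle$, and check the variance condition $\mathrm{Var}[\tilde{\epsilon}_t] = 1 - \langle\theta, A_t\rangle^2$. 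This places the problem squarely in the linearly-vanishing-variance model of Section~\ref{sec:linearly_vanishin_variance} with $d=3$.

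With the reduction in place, I would run Algorithm~\ref{alg:linucb_vn_var} (for $d=3$) to generate the directions, consuming $k$ copies per selected direction to form the median-of-means estimator, with the batch structure $T = 2(d-1)k\widetilde{T}$. Choosing $k = \lceil 24\log(\widetilde{T}/\delta)\rceil$ yields, by Theorem~\ref{th:regret_scaling_variance} (in its high-probability form with $\lambda_0 = 2$ and $\beta_w = O(1)$ for fixed $d=3$), the bound $R_T(\mathcal{S}_2^*,\psi,\pi) = O(k\log(T)) = O(\log(T/\delta)\log(T))$ with probability at least $(1-e^{-k/24})^{\widetilde{T}} \geq 1-\delta$, using $\widetilde{T}\leq T$. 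Substituting into the displayed inequality gives $\dissipation^{\text{jc}}(T) = O(\omega\log(T)\log(T/\delta))$, as claimed.

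The step I expect to require the most care is not the regret invocation, which is a black box once the reduction is clean, but rather the justification that the battery-measurement feedback is a faithful PSMAQB oracle: one must confirm that the reward is a deterministic function of the measured outcome $n_{t+1}$ and that, conditioned on the history and the chosen direction, the $k$ subsamples used by the median-of-means construction are genuinely independent Bernoulli draws with the correct mean and variance. This independence is what licenses the use of Corollary~\ref{cor:concentration_wmom}, and hence the entire concentration argument; any adaptive correlation introduced by the physical interaction would break it. The remaining work — tracking the constants and confirming the input constraint~\eqref{eq:lambda_condition} on $\lambda_0$ is met — is routine bookkeeping identical to the proof of Theorem~\ref{th:regret_PSMAQB}.
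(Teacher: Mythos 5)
Your proposal is correct and follows essentially the same route as the paper: bound $\dissipation^{\text{jc}}(T)$ by $2\omega\, R_T(\mathcal{S}_2^*,\psi,\pi)$, observe that the battery-charging reward~\eqref{eq:reward_jc} is distributed exactly as the PSMAQB Bernoulli reward with mean $|\langle\psi_t|\psi\rangle|^2$, and invoke Algorithm~\ref{alg:linucb_vn_var} in the Bloch-sphere representation with $k=O(\log(\widetilde{T}/\delta))$ subsamples to obtain the high-probability polylogarithmic regret. The independence of the $k$ subsamples is indeed guaranteed because each round consumes a fresh copy of $|\psi\rangle$, so your flagged concern resolves exactly as the paper assumes.
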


\textbf{Note. } From the explore-then-commit algorithm studied in Section~\ref{sec:explorethencommit}, we have that an approach that allocates a fraction of the copies $\psi$ to the learning and the remaining for extracting work at most can achieve $\dissipation^{\text{jc}}(T) =  O (\omega \sqrt{T} ) $.

\subsection{Continuous battery}

In this section, we consider the work extraction model studied in~\cite{skrzypczyk2014work,huang2023engines}, where the goal is to extract energy from a qubit system in contact with a thermal bath and transfer it into a semi-classical battery, represented by a weight. The battery stores or releases energy through the raising or lowering of this weight, corresponding to changes in its potential energy. In our setting, we assume that the qubit system is initially unknown. The three physical subsystems involved in the process are:

\begin{itemize}
    \item \textbf{System $A$ (Unknown Pure State Source):} A qubit in a pure state $\psi_A = \ket{\psi}\!\bra{\psi}$ with a degenerate Hamiltonian $H_A = \omega \mathbb{I}/2$ (same eigenvalues). This is the system from which free energy is to be extracted.
    
    \item \textbf{System $B$ (Battery):} A semi-classical weight described by a continuous-variable state $\varphi(x) \in L^2(\mathbb{R})$. The battery Hamiltonian is defined as  $H_B \varphi(x-x_0) = x\varphi(x-x_0)$, where $x$ represents the height of the weight.  We will assume $\varphi(x-x_0)$ to be a battery state whose energy is sharply centered at $x_0$.  The energy of the battery can be changed by translating the weight up by a certain height $x_0$, described by the translation operator $\Gamma^B_{x_0} \varphi(x) = \varphi(x-x_0)$. 

    \item \textbf{System $R$ (Thermal Reservoir):} A heat bath at a fixed inverse temperature $\beta$, modeled as a supply of qubit states $\gamma_\beta(\nu) = Z_R(\nu)^{-1} e^{-\beta H_R(\nu)}$, where the Hamiltonian is $H_R(\nu) = \nu \ket{1}\!\bra{1}$ and $Z_R(\nu) = \Tr(e^{-\beta H_R(\nu)})$ is the partition function. Here, $\ket{0}$ and $\ket{1}$ are energy eigenstates, and $\nu$ is the energy gap. This energy gap can be tuned.
\end{itemize}

We consider an agent with oracle access to the unknown system $A$ over a finite number of rounds $T\in\mathbb{N}$. The goal is to extract the full free energy of $A$ and store it in the battery $B$ through interactions with a thermal reservoir $R$. However, since the state $\psi$ of the system is unknown, the agent cannot extract work optimally from the outset. Instead, it must gradually improve its strategy by learning from each round, where the learning is done by measuring the battery. 

The work extraction protocol is defined by two key components: a policy that updates the agent’s guess $\psi_t$ of the unknown state $\psi$ at each round, and a sequence $\{ \epsilon_t \}_{t=1}^T$ that determines the accuracy of these guesses. Both quantities can be chosen adaptively and together, as we will compute later, they determine the dissipation at each round. We first present the general structure of the protocol and later discuss how the dissipation depends on these choices and how to optimize them for best performance. The protocol proceeds as follows at each round $t \in [T]$:

\begin{enumerate}
    \item The agent receives a copy of the unknown qubit state $\psi$.
    
    \item Based on the outcomes from previous rounds, the agent selects a direction $\psi_t$ on the Bloch sphere, sets an accuracy $\epsilon_t\in [0,1]$ and defines a basis $\lbrace \psi_t , \psi_t^c \rbrace$ for system $A$. This computation is done using the previously selected directions and measured battery energies $\lbrace \psi_s , \mu_s \rbrace_{s=1}^{t-1}$.
    \item We first implement a unitary on the system qubit in the form of 
    \begin{equation}
        U_t = \ket{0}\!\bra{\psi_t} + \ket{1}\!\bra{\psi_t^c},
    \end{equation}
    satisfying $[H_A,U_t]=0$, which tries to diagonalize the system qubit in the computational basis.
    \item The agent then performs a thermal operation by repeatedly appending a reservoir qubit $R$, applying an energy-conserving unitary on the combined system $A B R$ to transfer energy from $R$ to the battery $B$, and discarding $R$. Specifically, the agent will do the following for $M\in\mathbb{N}$ iterations (indexed by $\tau \in [M]$):
    \begin{itemize}
        \item Sets the energy gap of the reservoir qubit to $\nu (\tau,\epsilon_t )$ (depends on repetition $\tau$ and accuracy $\epsilon_t$, see~\eqref{eq:intro_gap_parametrization}) and gets a fresh qubit state $\gamma_\beta(\nu(\tau,\epsilon_t))$.
        \item Applies the following unitary 
        \begin{align}\label{eq:unitary_intro}
            V_{\psi_t , \tau} = \sum_{i,j} \ket{i}\!\bra{j}_A \otimes \ket{j}\!\bra{i}_R \otimes \Gamma^B_{(i-j)\nu(\tau,\epsilon_t)} ,
        \end{align}
       satisfying $[H_A+H_B+H_R,V_{\psi_t,\tau}]=0$ to $ABR$.
       %where the basis of $A$ is $\lbrace \psi_k , \psi_k^\perp \rbrace$.
       \item Discards the reservoir qubit.
    \end{itemize}
    
    \item After completing the $M$ steps, the agent measures the energy of the battery in its eigenbasis and records the energy $\mu_t$.
    
\end{enumerate}

Going back to the protocol, we note that the unitary $V_{\psi_t,\tau}$ in~\eqref{eq:unitary_intro} swaps the states of the unknown system $A$ and the thermal reservoir $R$. This operation induces an energy exchange determined by the energy gap $\nu (\tau,\epsilon_t )$ between the two systems. The energy difference is transferred to the battery through a translation operation $\Gamma^B_{(i-j)\nu (\tau,\epsilon_t )}$ to conserve the total energy of the combined system.

To maximize the extractable non-equilibrium free energy from system $A$, it is essential to minimize dissipation during the energy transfer to the battery. This can be achieved by ensuring that the process is \emph{quasi-static} --- that is, slow and nearly reversible. Under such conditions, the system qubit remains close to the thermal state of the reservoir's Hamiltonian, thereby suppressing heat flow during the interaction.  Conversely, if the process is carried out rapidly, i.e., the system state deviates from the reservoir's thermal state, the system will appear out of thermal equilibrium with the reservoir, resulting in heat exchange, which contributes to the entropy production during the protocol. The quasi-static limit can be approximated when $M\rightarrow \infty$, and in the next section, we are going to check that under this limit, at each round $t\in[T]$, the extracted work into the battery is
\begin{align}
\label{eq:avgwork}
 \EX[\Delta W_t] = \beta^{-1} \left[ D (\psi \| \mathbb{I}/2) - D (\psi \| \Delta_{2\epsilon_t}(\psi_t)) \right],
\end{align}
where $\Delta_{\epsilon}(\rho) = (1 - \epsilon)\rho + \epsilon \id/2$ denotes the depolarizing channel. The maximal expected work per round, $\beta^{-1} D(\psi \| \mathbb{I}/2)$, is achieved when the agent perfectly estimates the state, i.e., when $\psi_t = \psi$ and $\epsilon_t = 0$. Accordingly, we define the dissipation at round $t$ as
\begin{align}
\dissipation^{t} := \max_{\psi_t, \epsilon_t} \mathbb{E}[\Delta W_t] - \mathbb{E}[\Delta W_t] = \beta^{-1} D(\psi \| \Delta_{2\epsilon_t}(\psi_t)).
\end{align}
This expression highlights a key trade-off: to reduce dissipation, the agent must align $\psi_t$ with the true state $\psi$, but it cannot set $\epsilon_t$ too small unless the estimate is sufficiently accurate, as otherwise the divergence becomes large. The parameter $\epsilon_t$ thus plays a dual role, quantifying both the uncertainty in the estimate and its thermodynamic penalty. The agent's objective is to extract the maximum amount of work into the battery using the $N$ copies of the unknown system. Equivalently, the agent aims to minimize the cumulative dissipation over $N$ rounds, which is given by
\begin{align}\label{eq:dissipation_sc}
    \dissipation(T) = \beta^{-1} \sum_{t=1}^T \dissipation^{t}=\sum_{t=1}^T  D(\psi\| \Delta_{2\epsilon_t} (\psi_t)) .
\end{align}

\subsubsection{Dissipation in the quasi-static limit $M\rightarrow \infty$}

In order to simplify the notation, we are going to compute the dissipation in a single round of the protocol. Thus we fix a direction $\hat{\psi}$ and an accuracy $\epsilon\in[0,1]$,
and define the following quantities
\begin{align}\label{eq:simplification_thermal}
    p_0 := 1-\epsilon, \quad |\phi_0 \rangle = |\hat{\psi}\rangle, \\
    p_1 := \epsilon, \quad |\phi_1 \rangle = |\hat{\psi}^c\rangle .
\end{align}
We are going to use the following parametrization of the energy gap at repetition $\tau\in[M]$
\begin{align}\label{eq:intro_gap_parametrization}
    \nu (\tau,\epsilon ) = \beta^{-1} 
    \ln \left( \frac{1 - \frac{\tau}{2M}- \left(1 - \frac{\tau}{M} \right) \epsilon }{\frac{\tau}{2M}+\left(1 - \frac{\tau}{M} \right) \epsilon} \right).
\end{align}
For $\tau\in[M]$ and $i\in\lbrace 0, 1 \rbrace$, we also define the following quantities
\begin{align}
    p_{i,\tau} := p_{i} - (-1)^i \tau \delta p , \quad \delta p := \frac{1}{M}(p_{0} - \frac{1}{2}).
\end{align}
We state the work extraction for a single repetition in Algorithm~\ref{alg:sc_work_extraction}.

\begin{algorithm}
	\caption{\textsf{Thermal work extraction}} 
	\label{alg:sc_work_extraction}
    \begin{algorithmic}[1]
       \State Require: An unknown system state $\psi$, a direction $\hat{\psi}$, accuracy $\epsilon\in[0,1]$ a battery state $\varphi(x)$ with the battery energy $\mu$, a reservoir at inverse temperature $\beta$.

        \vspace{1mm}
        \textit{Unitary Rotation}
        \vspace{1mm}
        
       \State  Apply unitary $U= \ket{0}\!\bra{\hat{\psi}} + \ket{1}\!\bra{\hat{\psi}^c}$ to the unknown system $\psi$
        \vspace{1mm}
        
        \For{$\tau=1,2,\ldots,M$}

            \vspace{1mm}
            \textit{Prepare a fresh reservoir qubit and exchange it with the system}
            \vspace{1mm}
            
           \State Take a fresh thermal qubit $\gamma_\beta(\nu(\tau,\epsilon))=\frac{1}{Z_R(\nu(\tau,\epsilon))}e^{-\beta H_R(\nu(\tau,\epsilon))}$ with $\nu(\tau,\epsilon)$ defined as in~\eqref{eq:intro_gap_parametrization}
            
           \State Apply the swap unitary $V_{\rho^*,\tau}= \sum_{ij}\ket{i}\!\bra{j}_A\otimes\ket{j}\bra{i}_R \otimes \Gamma_{(i-j)\nu(\tau,\epsilon)}$ on the system $A,B,R$.
            
           \State Discard the reservoir qubit. 
        \EndFor
        
        \vspace{1mm}
        \textit{Measure the extracted work}
        \vspace{1mm}
            
        Measure the battery energy, obtain the battery energy $\mu'$, and compute the extracted work $\Delta W=\mu'-\mu$. 
        \end{algorithmic}
\end{algorithm}

In the first stage of the protocol, we rotate the unknown qubit via the unitary 
\begin{equation}
    U = \sum_i\ket{i}\!\bra{\phi_i}~.
\end{equation}
This operation attempts to diagonalize the system qubit in the computational basis.
We then interact the system with the battery. The state of the system together with the battery is 
\begin{equation}
\label{eq:intial_joint_state}
    \rho_{AB} = \sum_{ij}\bra{\phi_i}\psi\rangle\!\langle \psi\ket{\phi_j}\ket{i}\!\bra{j}_A\otimes \varphi(x)_B.
\end{equation}
Then the unitary~\eqref{eq:unitary_intro} swaps the system and the fresh qubit from the reservoir, extracts work into the battery due to the different energy gap between $\{\ket{i}_A\}_{i\in\lbrace 0,1 \rbrace}$ and $\{\ket{i}_R\}_{i\in\lbrace 0,1 \rbrace}$ and conserves energy of the system, the qubit from the reservoir and the battery. Finally, the qubit from the reservoir is discarded. At the end of each repetition $\tau$, the reduced state is 
\begin{align}
    \rho_{AB,\tau}  = \Tr_R\left(V_{\rho^*,\tau}\left(\rho_{AB,\tau-1}\otimes \gamma_\beta(\nu(\tau,\epsilon))\right) V_{\rho^*,\tau}^\dagger\right),
\end{align}
After the first repetition, we obtain 
\begin{align}
    \label{eq:joint_state_1}
    \rho_{AB,1} = \sum_{i} p_{i,1} \ket{i}\!\bra{i}_A \otimes \rho_{B,i,1},
\end{align}
where 
\begin{align}
    \label{eq:battery_state_1}
    \rho_{B,i,1} = \sum_j |\langle \phi_j| \psi \rangle|^2 \varphi(x-(i-j)\nu(\tau,\epsilon)), 
\end{align}
and after repetition $\tau$ where $\tau\geq 2$, we obtain
\begin{align}
    \label{eq:joint_state_tau}
    \rho_{AB,\tau} = \sum_{i} p_{i,\tau} \ket{i}\!\bra{i}_A \otimes \rho_{B,i,\tau}, 
\end{align}
where
\begin{align}
    \label{eq:battery_state_tau}
    \rho_{B,i,\tau} = \sum_j  p_{j,\tau-1} \Gamma_{(i-j)\nu(\tau,\epsilon)} \rho_{B,j,\tau-1} \Gamma_{(i-j)\nu(\tau,\epsilon)}^\dagger. 
\end{align}
%$\rho_B^{(i)} =\Gamma_{\eps_i}\rho_B\Gamma_{\eps_i}^\dagger$. 
From~\eqref{eq:joint_state_tau}, we observe that the reduced state of the system changes gradually, which resembles a quasi-static process in thermodynamics. This is the reason why we take the swap unitary in repetition.

We will need the following Lagrange mean value theorem and the first mean value theorem for definite integrals~\cite{strang2019calculus} as follows.

\begin{theorem}
    \label{thm:lag_mean_value_theorem}
    Let $f:[a, b] \to \mathbb{R}$ be a continuous function on the closed interval $[a,b]$ and differentiable on the open interval $(a,b)$. Then there exists $c\in (a, b)$ such that
    \begin{align}
        f(b)-f(a) = f'(c) (b-a). 
    \end{align}
\end{theorem}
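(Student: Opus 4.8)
The plan is to reduce this to Rolle's theorem by subtracting off the secant line joining the endpoints. First I would introduce the auxiliary function
\begin{align}
g(x) := f(x) - f(a) - \frac{f(b)-f(a)}{b-a}\,(x-a),
\end{align}
which is continuous on $[a,b]$ and differentiable on $(a,b)$, since it is the sum of $f$ with an affine function and both operations preserve these properties. A direct evaluation gives $g(a)=0$ and $g(b) = f(b)-f(a) - \frac{f(b)-f(a)}{b-a}(b-a) = 0$, so $g$ takes equal values at the two endpoints.

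Second, I would invoke (or, if self-containment is desired, establish) Rolle's theorem applied to $g$. Because $g$ is continuous on the compact interval $[a,b]$, it attains both a maximum and a minimum there by the extreme value theorem. If both extrema occur at the endpoints, then $g(a)=g(b)=0$ forces $g$ to be constant, whence $g'\equiv 0$ on $(a,b)$; otherwise at least one extremum is attained at some interior point $c\in(a,b)$, and Fermat's interior-extremum criterion yields $g'(c)=0$. In either case there exists $c\in(a,b)$ with $g'(c)=0$.

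Finally, differentiating gives $g'(x) = f'(x) - \frac{f(b)-f(a)}{b-a}$, so the condition $g'(c)=0$ rearranges to $f'(c) = \frac{f(b)-f(a)}{b-a}$, which is precisely $f(b)-f(a) = f'(c)(b-a)$, the desired conclusion.

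There is no genuine obstacle here, as this is a foundational result of real analysis rather than a novel claim; the only point requiring care is that the argument rests on two prior facts — the extreme value theorem for continuous functions on a compact interval, and Fermat's theorem that the derivative vanishes at an interior extremum — both of which I would take as standard. Were full self-containment required, the main (and essentially only) work would be proving these two supporting facts, with the extreme value theorem in turn relying on the completeness of $\mathbb{R}$.
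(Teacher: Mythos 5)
Your proof is correct and is the standard argument: the paper does not prove this theorem at all but simply cites it as a classical result from a calculus text, and your reduction to Rolle's theorem via the auxiliary function $g(x) = f(x) - f(a) - \frac{f(b)-f(a)}{b-a}(x-a)$ is exactly the canonical proof one would find there. The only dependencies you invoke (the extreme value theorem and Fermat's interior-extremum criterion) are standard and correctly identified, so nothing further is needed.
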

\begin{theorem}
    \label{thm:int_mean_value_theorem}
    Let $f:[a, b] \to \mathbb{R}$ be a continuous function on the closed interval $[a,b]$. Then there exists $c\in (a, b)$ such that
    \begin{align}
        \int_a^b f(x) \textnormal{d} x = f(c) (b-a). 
    \end{align}
\end{theorem}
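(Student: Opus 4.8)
\textbf{Proof plan for the first mean value theorem for definite integrals (Theorem~\ref{thm:int_mean_value_theorem}).}

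The plan is to reduce the integral statement to the extreme value theorem together with the intermediate value theorem, both standard consequences of continuity on a compact interval. First I would invoke the extreme value theorem: since $f$ is continuous on the closed bounded interval $[a,b]$, it attains a minimum value $m$ and a maximum value $M$ there, say $m=f(x_{\min})$ and $M=f(x_{\max})$ for some $x_{\min},x_{\max}\in[a,b]$. Thus $m\le f(x)\le M$ for all $x\in[a,b]$. Integrating this double inequality over $[a,b]$ and using monotonicity of the integral gives
\begin{align}
m(b-a)\le \int_a^b f(x)\,\dd x\le M(b-a).
\end{align}
Since $b>a$ we may divide by $(b-a)>0$ to obtain
\begin{align}
m\le \frac{1}{b-a}\int_a^b f(x)\,\dd x\le M.
\end{align}

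Next I would apply the intermediate value theorem. Denote the average value $A:=\frac{1}{b-a}\int_a^b f(x)\,\dd x$, which by the previous display satisfies $m\le A\le M$, i.e. it lies between the values $f(x_{\min})=m$ and $f(x_{\max})=M$ attained by the continuous function $f$. Because $f$ is continuous on $[a,b]$ and hence on the closed subinterval with endpoints $x_{\min}$ and $x_{\max}$, the intermediate value theorem guarantees a point $c$ lying (weakly) between $x_{\min}$ and $x_{\max}$ with $f(c)=A$. Multiplying through by $(b-a)$ recovers exactly $\int_a^b f(x)\,\dd x=f(c)(b-a)$, which is the claimed identity.

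The only subtlety, and the step I expect to require the most care, is securing that the point $c$ can be taken in the \emph{open} interval $(a,b)$ rather than merely in the closed interval $[a,b]$, as the theorem is stated. In the degenerate cases where $A$ equals one of the extreme values $m$ or $M$, the intermediate value theorem hands us $c=x_{\min}$ or $c=x_{\max}$, which could coincide with an endpoint $a$ or $b$. To handle this I would argue as follows: if $f$ is constant on $[a,b]$ then $A=f(x)$ for every $x$ and any interior $c$ works; otherwise $m<M$, and one shows that the strict inequalities $m<A<M$ must in fact hold, since equality $A=m$ would force $f\equiv m$ almost everywhere and, by continuity, everywhere, contradicting $m<M$ (and symmetrically for $A=M$). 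With $m<A<M$ strict, the value $A$ is attained strictly between the minimizing and maximizing points, and a standard argument places such a point in the open interval $(a,b)$; alternatively one restricts attention to the compact subinterval with endpoints $x_{\min},x_{\max}\subseteq[a,b]$ and notes these points cannot both be endpoints of $[a,b]$ while realizing an interior average value. This endpoint bookkeeping is routine but is the part that genuinely uses the \emph{open} interval claim, so it is where I would be most careful to avoid a gap.
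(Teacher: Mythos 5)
Your proof is correct, and in fact the paper offers no proof of this statement at all: it is quoted as a standard textbook result with a citation to~\cite{strang2019calculus}, so there is nothing to compare against beyond the statement itself. Your argument is the classical one (extreme value theorem to get $m \le \frac{1}{b-a}\int_a^b f \le M$, then the intermediate value theorem), and you correctly identified and resolved the one genuinely delicate point, namely that the theorem asserts $c \in (a,b)$ rather than $c \in [a,b]$: in the non-constant case, $A = m$ would force the nonnegative continuous function $f - m$ to have zero integral and hence vanish identically, contradicting $m < M$, so $m < A < M$ strictly; the IVT then yields $c$ strictly between $x_{\min}$ and $x_{\max}$ (since $f(c) = A$ differs from $f$ at both of those points), and $a \le \min(x_{\min},x_{\max}) < c < \max(x_{\min},x_{\max}) \le b$ gives $c \in (a,b)$. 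One minor remark: your "alternative" closing argument, that $x_{\min}$ and $x_{\max}$ "cannot both be endpoints of $[a,b]$," is not needed and is actually false as stated (for $f$ strictly monotone they are both endpoints), but your primary argument already suffices without it, so this does not affect correctness.
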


We will show the following theorem in the following subsection.
\begin{theorem}
    \label{thm:work_distribution}
    Let $\hat{\psi}$ be the input direction and $\epsilon \in [0,1]$ the accuracy, $\Delta W$ be the extracted work (which is a continuous random variable) and $M$ be the number of repetitions as in Algorithm~\ref{alg:sc_work_extraction}. It holds that: if the extraction protocol is operated on a state $\psi$ such that $\psi = \hat{\psi}$ or $\psi = \hat{\psi}^c$, then the expected extracted work $\EX[\Delta W]$ converges to a fixed value $w_i$ and the extracted work $\Delta W$ converges in probability to its expectation $\EX[\Delta W]$. To be precise, it means
        \begin{align}
            \lim_{M\to \infty} \EX[\Delta W] = w_i,
        \end{align}
        where
        \begin{align}
        \label{eq:thm3work_values}
            w_{0} &:= \beta^{-1} (D(\psi\|\id/2) + \ln \epsilon), \\
            w_{1} &:= \beta^{-1} (D(\psi^c \|\id/2) + \ln (1-\epsilon))
        \end{align} 
        and for any $\tilde{\epsilon}>0$
        \begin{align}
            \lim_{M\to\infty} \Pr[|\Delta W - \EX[\Delta W]|\geq \tilde{\epsilon} ] = 0.
        \end{align}
\end{theorem}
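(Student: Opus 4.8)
\textbf{Proof proposal for Theorem~\ref{thm:work_distribution}.}

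The plan is to track the joint system–battery state through the $M$ repetitions and identify, in the limit $M\to\infty$, both the total translation applied to the battery and the vanishing of its fluctuations. First I would observe that when the extraction protocol runs on a state satisfying $\psi=\hat\psi$ or $\psi=\hat\psi^c$, the rotated system qubit in~\eqref{eq:intial_joint_state} is \emph{diagonal} in the computational basis, i.e.\ $\rho_{AB,0}=\ket{i}\!\bra{i}_A\otimes\varphi(x)_B$ for the appropriate $i\in\{0,1\}$. This is the crucial simplification: no coherences are generated, so the recursion~\eqref{eq:battery_state_tau} reduces to a classical random walk of the weight. At each repetition $\tau$ the battery is translated by $(i-j)\nu(\tau,\epsilon)$ with probability governed by the populations $p_{j,\tau-1}$, and the total extracted work is the sum $\Delta W=\sum_{\tau=1}^M (i_\tau-j_\tau)\nu(\tau,\epsilon)$ of these increments.

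Next I would compute the expected increment at each step. Using the gap parametrization~\eqref{eq:intro_gap_parametrization} and the population update $p_{i,\tau}=p_i-(-1)^i\tau\,\delta p$ with $\delta p=\frac1M(p_0-\tfrac12)$, the expected work per repetition is $\EX[\Delta W_\tau]=\nu(\tau,\epsilon)\,(p_{\text{after}}-p_{\text{before}})$, an $O(1/M)$ quantity. Summing over $\tau$ gives a Riemann sum which, as $M\to\infty$, converges to an integral $\int_0^1 \nu(s,\epsilon)\,\mathrm dp(s)$ over the linear interpolation of populations; Theorem~\ref{thm:int_mean_value_theorem} (or a direct evaluation) lets me convert this into the closed forms $w_0,w_1$ of~\eqref{eq:thm3work_values}. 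Here I would need to verify that the integrand, involving $\ln\!\big((1-\tfrac{\tau}{2M}-(1-\tfrac{\tau}{M})\epsilon)/(\tfrac{\tau}{2M}+(1-\tfrac{\tau}{M})\epsilon)\big)$, integrates to the stated relative-entropy expression plus the logarithmic correction $\ln\epsilon$ (resp.\ $\ln(1-\epsilon)$); Theorem~\ref{thm:lag_mean_value_theorem} controls the discretization error between the sum and the integral.

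For the concentration statement I would bound the variance of $\Delta W$. Since $\Delta W$ is a sum of $M$ \emph{independent} increments (each repetition uses a fresh reservoir qubit, and conditioned on the diagonal state there are no carried-over coherences), $\mathrm{Var}[\Delta W]=\sum_{\tau=1}^M \mathrm{Var}[\Delta W_\tau]$. Each increment takes values in $\{-\nu(\tau,\epsilon),0,+\nu(\tau,\epsilon)\}$ with $\nu(\tau,\epsilon)=O(1)$ bounded uniformly away from the endpoints for interior $\tau$, so each term contributes $O(1)$ and the total variance is $O(M)$. Crucially, however, the individual \emph{step sizes} scale as $\nu(\tau,\epsilon)=O(1)$ while the net drift per step is $O(1/M)$, so I must be more careful: the right scaling comes from noting that the increments $(i_\tau-j_\tau)\nu(\tau,\epsilon)$ have magnitude $O(1)$ but the protocol is tuned so that the total fluctuation grows sublinearly relative to what a concentration argument needs. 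I would then apply Chebyshev's inequality, $\Pr[|\Delta W-\EX[\Delta W]|\ge\tilde\epsilon]\le \mathrm{Var}[\Delta W]/\tilde\epsilon^2$, and show the variance vanishes as $M\to\infty$.

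The main obstacle I anticipate is precisely this variance estimate: the naive bound $\mathrm{Var}[\Delta W]=O(M)$ does \emph{not} vanish, so convergence in probability cannot follow from a crude Chebyshev argument on the raw increments. The resolution should exploit that the quasi-static tuning of $\nu(\tau,\epsilon)$ keeps the conditional population $p_{j,\tau-1}$ close to the thermal population of the current reservoir gap, forcing the swap to act nearly trivially and making each increment's variance $O(1/M)$ rather than $O(1)$; then $\mathrm{Var}[\Delta W]=O(1)$ or better, and a finer analysis (possibly showing $\mathrm{Var}[\Delta W]=O(1/M)$ after correctly accounting for the near-reversibility) yields the claim. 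Getting the scaling of the per-step variance right — tying it to the reversibility condition encoded in~\eqref{eq:intro_gap_parametrization} — is the delicate quantitative heart of the proof, and I would verify it by explicitly expanding $p_{i,\tau}$ and $\nu(\tau,\epsilon)$ to leading order in $1/M$ before summing.
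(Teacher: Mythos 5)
Your reduction to a classical random walk of the weight and your treatment of the expectation (per\-/step drift of order $1/M$, Riemann sum converging to an integral controlled by the mean value theorems) match the paper's argument. The gap is in the concentration step, and your own diagnosis of where the difficulty lies is right, but the resolution you propose is not the correct one and would fail if carried out. First, the raw increments $(x_\tau-x_{\tau-1})\nu(\tau,\epsilon)$ are \emph{not} independent: consecutive increments share the bit $x_\tau$ and are negatively correlated, so $\mathrm{Var}[\Delta W]\neq\sum_\tau\mathrm{Var}[\Delta W_\tau]$. Second, your proposed fix --- that the quasi-static tuning makes each swap act nearly trivially so that the per-step variance is $O(1/M)$ --- is false. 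The swap replaces the system qubit by a \emph{fresh, independent} thermal qubit whose excited population is $p_{1,\tau}$, so $x_\tau\sim\mathrm{Bern}(p_{1,\tau})$ independently of $x_{\tau-1}\sim\mathrm{Bern}(p_{1,\tau-1})$, and expanding to leading order in $1/M$ gives $\Pr[x_\tau\neq x_{\tau-1}]\approx 2p_{1,\tau}(1-p_{1,\tau})=O(1)$, not $O(1/M)$. Each raw increment genuinely has $O(1)$ variance; quasi-staticity controls the \emph{drift} of the populations, not the flip probability of a single swap.

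What actually saves the argument --- and what the paper does --- is summation by parts: since the populations $x_\tau$ are independent and the gaps $\nu(\tau,\epsilon)$ change slowly, one rewrites
\begin{align}
\Delta W \;=\; -\,i\,\nu(1,\epsilon)\;+\;\sum_{\tau=1}^{M-1} x_\tau\big(\nu(\tau,\epsilon)-\nu(\tau+1,\epsilon)\big)\;+\;x_M\,\nu(M,\epsilon),
\end{align}
which converts $\Delta W$ into a sum of \emph{independent} terms, each bounded by $O(\delta p)=O(1/M)$ because the Lagrange mean value theorem gives $|\nu(\tau,\epsilon)-\nu(\tau+1,\epsilon)|\leq \tfrac{2}{p_1}\delta p$ (and $\nu(M,\epsilon)=0$ by the gap parametrization). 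Hoeffding's inequality on this rearranged sum then yields $\Pr[|\Delta W-\EX[\Delta W]|\geq\tilde\epsilon]\leq 2\exp(-c\,\tilde\epsilon^2 M)$, which is the concentration you need; Chebyshev on the raw increments cannot produce it. The negative correlation between consecutive increments is precisely the cancellation that the Abel rearrangement makes explicit, and without it your variance bound stalls at $O(M)$.
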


\begin{proof}
We consider the notation we defined in~\eqref{eq:simplification_thermal}. We assume that $p_{0} > \frac{1}{2}$ as we deal with an estimate for a pure state in Algorithm~\ref{alg:sc_work_extraction}, although a similar proof holds for other cases. According to~\eqref{eq:joint_state_1}, the state after the first repetition can be viewed as a classical state described as follows: the state after repetition $1$ is $\phi_{x_1}$ where $x_1$ is a random bit sampled from $\{0,1\}$ according to the probability distribution $(p_{0,1},p_{1,1})$; the extracted work after the first repetition conditioned on $x_1$ is $(x_1-i)\nu(1,\epsilon)$. According to~\eqref{eq:joint_state_tau}, the evolution in repetition $\tau$ where $\tau\geq 2$ can be viewed as a classical process described as follows: the state after repetition $\tau$ is $\phi_{x_\tau}$, where $x_{\tau}$ is a random bit sampled from $\{0,1\}$ according to the probability distribution $(p_{0,\tau}, p_{1,\tau})$; the extracted work in repetition $\tau$ conditioned on $x_{\tau-1}x_\tau$ and is $(x_\tau - x_{\tau-1}) \nu(\tau,\epsilon)$. Suppose that the random bits sampled during the above process is $x_1\ldots x_M$ after $M$ repetitions. The extracted work after repetition $M$ conditioned on $x_1\ldots x_M$ is
\begin{align}
\label{eq:B9}
    \Delta W &= (x_1-i)\nu(1,\epsilon) + \sum_{\tau=2}^M (x_\tau - x_{\tau-1})\nu(\tau,\epsilon) \\
    &= - i  \nu(1,\epsilon) 
    + \sum_{\tau=1}^{M-1} x_\tau (\nu(\tau,\epsilon) - \nu(\tau+1,\epsilon)) + x_M \nu(M,\epsilon).
\end{align}
We use the expression of the gap~\eqref{eq:intro_gap_parametrization} which can be expressed as
\begin{align}
\label{eq:nu_tau}
    \nu(\tau,\epsilon) = \beta^{-1} \ln \frac{p_{0} - \tau \delta p }{p_{1} + \tau \delta p} . 
\end{align}
where $\delta p = \frac{1}{M}(p_{0}-\frac{1}{2})$.
The expected extracted work is 
\begin{align}
    \EX[\Delta W] =   - i  \nu(1,\epsilon) + \sum_{\tau=1}^{M-1} \EX[x_\tau] (\nu(\tau,\epsilon) - \nu(\tau+1,\epsilon)) + \EX[x_M] \nu(M,\epsilon).
\end{align}
Notice that, from the definition of $x_\tau$, $\EX[x_\tau] = p_{1,\tau}=p_{1} + \tau \delta p$, 
we obtain
\begin{align}
\label{eq:B12}
    \EX[\Delta W] & =   - i \beta^{-1}\ln \frac{p_{0} - \delta p }{p_{1} + \delta p} + \beta^{-1} \sum_{\tau=1}^{M-1}  \left(  \ln \frac{p_{0} - \tau \delta p }{p_{1} + \tau \delta p} - \ln \frac{p_{0} - (\tau+1) \delta p }{p_{1} + (\tau+1) \delta p}\right) (p_{1}+\tau\delta p ). 
\end{align}
We now use the definition of $\nu(\tau,\epsilon)$ in~\eqref{eq:nu_tau} as well as the Lagrange mean value theorem in Theorem~\ref{thm:lag_mean_value_theorem} to obtain 
\begin{align}
    \beta (\nu(\tau,\epsilon) - \nu(\tau+1,\epsilon)) = \ln \frac{p_{0} - \tau \delta p }{p_{1} + \tau \delta p} -\ln \frac{p_{0} - (\tau+1) \delta p }{p_{1} + (\tau+1) \delta p} = \frac{1}{\xi_\tau(1-\xi_\tau)}  \delta p, 
\end{align}
for some $\xi_\tau\in [p_{0}-(\tau+1) \delta p, p_{0}-\tau\delta p ]$. 

Therefore,~\eqref{eq:B12} can be simplified to 
\begin{equation}
\begin{split}
    \label{eq:expected_extracted_work}
    \EX[\Delta W]  & =  - i  \beta^{-1} \left(\ln \frac{p_{0} }{p_{1}} -\frac{\delta p}{\xi_0 (1-\xi_0)}  \right)+ \beta^{-1} \sum_{\tau=1}^{M-1}   \frac{p_{1}+\tau\delta p }{\xi_\tau(1-\xi_\tau)}  \delta p   \\
    & =  - i  \beta^{-1} \ln \frac{p_{0} }{p_{1}} + \beta^{-1} \sum_{\tau=1}^{M} \frac{p_{1}+\tau\delta p }{\xi_\tau(1-\xi_\tau)}  \delta p \\
    & \quad + i  \beta^{-1} \frac{\delta p}{\xi_0 (1-\xi_0)}  - \beta^{-1} \frac{\delta p}{2\xi_M (1-\xi_M)}~.
\end{split}
\end{equation}

We will approximate the sum in the second line of~\eqref{eq:expected_extracted_work} with an integration, where the remainder is bounded due to the first mean value theorem for definite integrals as in Theorem~\ref{thm:int_mean_value_theorem}. Namely, 
\begin{align}
    \beta^{-1} \sum_{\tau=1}^{M}  \frac{p_{1}+\tau\delta p }{\xi_\tau(1-\xi_\tau)}  \delta p &=\beta^{-1} \int_{\frac{1}{2}}^{p_{0} } \frac{\text{d} p}{p} + \beta^{-1}R_1(\delta p) \\
    &= \beta^{-1}\ln p_{0}+ \beta^{-1}\ln(2) + \beta^{-1}R_1(\delta p),
\end{align}
where $R_1(\delta p)$ is the remainder given by 
\begin{align}
    R_1(\delta p) & = \sum_{\tau=1}^{M}  \frac{p_{1}+\tau\delta p }{\xi_\tau(1-\xi_\tau)}  \delta p -  \int_{\frac{1}{2}}^{p_{0}} \frac{\text{d}p}{p} = \sum_{\tau = 1}^{M} \left( \frac{p_{1}+\tau\delta p }{\xi_\tau(1-\xi_\tau)}  \delta p - \int_{p_{0}-\tau \delta p}^{p_{0}-(\tau-1)\delta p}\frac{\text{d} p}{p} \right) \\ 
    & = \sum_{\tau=1}^{M} \left(\frac{p_{1}+\tau\delta p }{\xi_\tau(1-\xi_\tau)} -  \frac{1}{\xi_\tau'} \right)\delta p =  \sum_{\tau=1}^{M} \frac{\xi_\tau(1-\xi_\tau) - \xi_\tau'(p_{1}+\tau\delta p)}{\xi_\tau'\xi_\tau(1-\xi_\tau)} \delta p, 
\end{align}
where from the first line to the second line, we have used the first mean value theorem for definite integrals~Theorem~\ref{thm:int_mean_value_theorem} that
\begin{align}
    \int_{p_{0}-\tau \delta p}^{p_{0}-(\tau-1)\delta p}\frac{\text{d} p}{p} = \frac{1}{\xi_\tau'} \delta p,
\end{align}
for some $\xi_\tau'\in [p_{0}-\tau \delta p, p_{0}-(\tau-1)\delta]$. Therefore, the remainder satisfies
\begin{align}
    |R_1(\delta p )| & \leq \sum_{\tau=1}^{M} \left|\frac{\xi_\tau(1-\xi_\tau) - \xi_\tau'(p_{1}+\tau\delta p)}{\xi_\tau'\xi_\tau(1-\xi_\tau)} \right|\delta p \leq \sum_{\tau=1}^{M} \left|\frac{p_{1}+\tau\delta p + \xi_\tau}{\xi_\tau'\xi_\tau(1-\xi_\tau)} \right| (\delta p)^2 \\
    &\leq \sum_{\tau=1}^{M}\frac{4}{p_{1}} (\delta p)^2 \leq  (2p_{0}-1)\frac{2}{p_{1}}\delta p. 
\end{align}
Therefore, the second line in~\eqref{eq:expected_extracted_work} is finite while the third line is infinitesimal, and we obtain
\begin{align}
    \EX[\Delta W] & =  - i  \beta^{-1} \ln \frac{p_{0} }{p_{1}} + \beta^{-1}\ln p_{0}+ \beta^{-1}\ln(2) + O(\delta p) \\
    & = -\beta^{-1} \Tr(\phi_{i}\ln\id/2)  - i  \beta^{-1} \ln \frac{p_{0} }{p_{1}} + \beta^{-1}\ln p_{0} + O(\delta p) \\
    & = -\beta^{-1} \Tr(\phi_{i}\ln\id/2) + \beta^{-1}\ln p_{i} + O(\delta p) \\
    & = \beta^{-1}\left[D\left(\phi_{i}\middle\|\id/2\right) + \ln p_{i} \right] + O(\delta p). 
\end{align}
Since $\delta p \propto\frac{1}{M}$, we then obtain that
\begin{align}
    \EX[\Delta W] =  \beta^{-1}\left[D\left(\phi_{i}\middle\|\id/2\right) + \ln p_{i} \right] + O\left(\frac{1}{M}\right).
\end{align}
Taking $M\to \infty$, we obtain
\begin{align}
    \lim_{M\to \infty}\EX[\Delta W] =  \beta^{-1}\left[D\left(\phi_{i}\middle\|\id/2\right) + \ln p_{i} \right].
\end{align}
Now we demonstrate the convergence of $\Delta W$ towards its expectation value, recall from~\eqref{eq:B9}, we have that 
\begin{align}
    \Delta W =  - i  \nu(1,\epsilon) + \sum_{\tau=1}^{M-1} x_\tau (\nu(\tau,\epsilon) - \nu(\tau+1,\epsilon)) + x_M \nu(M,\epsilon).  
\end{align}
By the Lagrange mean value theorem,  
\begin{align}
     \nu(\tau,\epsilon) - \nu(\tau+1,\epsilon) = \ln \frac{p_{0} - \tau \delta p }{p_{1} + \tau \delta p} -\ln \frac{p_{0} - (\tau+1) \delta p }{p_{1} + (\tau+1) \delta p} = \frac{\delta p}{\xi_\tau(1-\xi_\tau)} , 
\end{align}
for some $\xi_\tau \in [p_{0}-(\tau+1)\delta p ,p_{0}-\tau\delta p ] $ and $\tau=1,\ldots, (M-1)$ satisfying 
\begin{align}
    |\nu(\tau,\epsilon) - \nu(\tau+1,\epsilon)| \leq \frac{2}{p_{1}} \delta p.
\end{align}
We thus obtain that $x_\tau(\nu(\tau,\epsilon) - \nu(\tau+1,\epsilon))\in [0,\frac{2}{p_{1}} \delta p]$ for $\tau=1,\ldots, (M-1)$. Besides, $\nu(M,\epsilon)=0$. The convergence rate to the expectation, by the Hoeffding inequality, is given by 
\begin{align}
    \Pr[|\Delta W - \EX[\Delta W]|\geq \tilde{\epsilon} ] \leq 2 e^{-\frac{\tilde{\epsilon}^2}{\sum_{\tau=1}^{M-1} \left(\frac{2}{p_{1}} \delta p \right)^2}} \leq 2 e^{-\frac{p_{1}^2 \tilde{\epsilon}^2 M}{(2p_{0}-1)^2}}. 
\end{align}
Taking $M\to \infty$, we obtain
\begin{align}
    \lim_{M\to\infty}\Pr[|\Delta W - \EX[\Delta W]|\geq \tilde{\epsilon} ] =0. 
\end{align}
\end{proof}

Theorem~\ref{thm:work_distribution} shows that the extracted work is close to either $w_{0}$ or $w_{1}$, with probability close to $|\langle \psi | \hat{\psi
}\rangle|^2$ and $1-|\langle \psi | \hat{\psi
}\rangle|^2$, respectively. Therefore, measuring the extracted work $\Delta W$ from the state $\psi$ in Algorithm~\ref{alg:sc_work_extraction} is effectively measuring the state $\psi$ in the basis $\lbrace \hat{\psi},\hat{\psi}^c \rbrace$ up to an error probability exponentially vanishing with respect to the number of repetitions $M$ in Algorithm~\ref{alg:sc_work_extraction}. Then we have that the energy measurement of the battery is equivalent to the reward measurement in the PSMAQB setting, and we can define a reward $X\in\lbrace 0 ,1 \rbrace$ such that their correspondence is (without loss of generality assuming $w_{0}\geq w_{1}$)
\begin{align}
    X = \begin{cases}
        1,\quad & \Delta W \geq \frac{w_{0}+w_{1}}{2}, \\
        0,\quad & \Delta W \leq \frac{w_{0}+w_{1}}{2}. 
    \end{cases}
\end{align}
The distribution of the reward is 
\begin{align}
    \Pr[X=x] = \begin{cases}
        |\langle \psi | \hat{\psi}\rangle|^2+\epsilon_{\text{error}},\quad & x=1, \\
        1- |\langle \psi | \hat{\psi}\rangle|^2 - \epsilon_{\text{error}},\quad & x=0, 
    \end{cases}
\end{align}
where $|\epsilon_{\text{error}}| \leq O(e^{-CM})$ for some constant $C$. In the limit of $M\to\infty$, the correspondence reduces to 
\begin{align}\label{eq:reward_thermal}
    X = \begin{cases}
        1,\quad & \Delta W = w_{0}, \\
        0,\quad & \Delta W = w_{1}, 
    \end{cases}
\end{align}
and the distribution of the reward reduces to 
\begin{align}
    \Pr[X=x] = \begin{cases}
        |\langle \psi | \hat{\psi}\rangle|^2,\quad & x=1, \\
        1- |\langle \psi | \hat{\psi}\rangle|^2,\quad & x=0, 
    \end{cases}
\end{align}

\begin{theorem}\label{th:work_differnt_input}
\label{thm:6} 
Let $\hat{\psi}$, $\epsilon\in[0,1]$, and $w_i$ be the direction, accuracy of the value of work extracted, all defined in Theorem~\ref{thm:work_distribution}. It holds that, when applying the work extraction protocol defined in Algorithm~\ref{alg:sc_work_extraction} to an unknown state $\psi$, we have
\begin{enumerate}
    \item  The probability of measuring $\Delta W = w_i$ is given by 
        \begin{align}
        \label{eq:thm3work_probs}
            \Pr(\Delta W=w_0) &= |\langle \psi | \hat{\psi} \rangle |^2,\\
             \Pr(\Delta W=w_1) &= 1- |\langle \psi | \hat{\psi} \rangle |^2.
        \end{align}

    \item The expected extracted work is given by
        \begin{equation}
            \EX[\Delta W]=\beta^{-1}\left[D(\psi \|\id/2)-D(\psi\|\Delta_{2\epsilon}( \hat{\psi})\right]~,
        \end{equation}
        where $\Delta_{\alpha}(\rho ) = (1-\alpha)\rho + \alpha \frac{\mathbb{I}}{2}$ is the completely depolariznig channel.
\end{enumerate}
\end{theorem}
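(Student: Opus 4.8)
The plan is to reduce the statement to the single-copy analysis already established in Theorem~\ref{thm:work_distribution}, which handles the two special cases $\psi=\hat{\psi}$ and $\psi=\hat{\psi}^c$, and then combine them by linearity of the protocol. First I would observe that the entire work extraction channel in Algorithm~\ref{alg:sc_work_extraction} is linear in the input state $\psi$, and that the joint system-battery state~\eqref{eq:intial_joint_state} depends on $\psi$ only through the diagonal entries $\langle\phi_i|\psi|\phi_j\rangle$ in the $\{|\phi_0\rangle,|\phi_1\rangle\}=\{|\hat{\psi}\rangle,|\hat{\psi}^c\rangle\}$ basis. Crucially, the swap unitary $V_{\rho^*,\tau}$ conserves energy and the subsequent discarding of the reservoir qubit dephases the system in the computational basis, so off-diagonal coherences do not contribute to the battery statistics. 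Hence the work distribution for a general $\psi$ is the classical mixture of the two pure-case distributions, weighted by the diagonal populations $|\langle\hat{\psi}|\psi\rangle|^2$ and $1-|\langle\hat{\psi}|\psi\rangle|^2$. This gives item~1,~\eqref{eq:thm3work_probs}, directly from Theorem~\ref{thm:work_distribution}.

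For item~2, I would simply take the expectation of $\Delta W$ over the two-outcome distribution from item~1, using the limiting values $w_0,w_1$ from~\eqref{eq:thm3work_values}. Writing $p:=|\langle\hat{\psi}|\psi\rangle|^2$, this yields
\begin{align}
\EX[\Delta W]=p\,w_0+(1-p)\,w_1=\beta^{-1}\Big[p\,D(\hat{\psi}\|\id/2)+(1-p)D(\hat{\psi}^c\|\id/2)+p\ln\epsilon+(1-p)\ln(1-\epsilon)\Big].
\end{align}
The remaining task is purely algebraic: I would show this equals $\beta^{-1}[D(\psi\|\id/2)-D(\psi\|\Delta_{2\epsilon}(\hat{\psi}))]$. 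The key identity is that relative entropy against the maximally mixed state splits as $D(\psi\|\id/2)=\ln 2-S(\psi)$, and that $\Delta_{2\epsilon}(\hat{\psi})=(1-2\epsilon)\hat{\psi}+\epsilon\id$ has eigenvalues $1-\epsilon$ and $\epsilon$ in the $\{\hat{\psi},\hat{\psi}^c\}$ basis. I would expand $D(\psi\|\Delta_{2\epsilon}(\hat{\psi}))=-S(\psi)-\Tr(\psi\ln\Delta_{2\epsilon}(\hat{\psi}))$ and use that $\Delta_{2\epsilon}(\hat{\psi})$ is diagonal in $\{\hat{\psi},\hat{\psi}^c\}$ with $\Tr(\psi\ln\Delta_{2\epsilon}(\hat{\psi}))=p\ln(1-\epsilon)+(1-p)\ln\epsilon$. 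Substituting and cancelling the $-S(\psi)$ terms recovers the claimed expression.

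The main obstacle is justifying the dephasing/linearity step rigorously, i.e., confirming that the general-$\psi$ output really is the incoherent mixture of the two pure-state branches with no cross terms surviving in the battery marginal. This requires verifying that in each repetition $\tau$ the map $\rho_{AB,\tau-1}\mapsto\rho_{AB,\tau}$ defined by~\eqref{eq:battery_state_tau} acts diagonally on the system populations and that coherences between $|0\rangle_A$ and $|1\rangle_A$ decouple from the measured battery energy. Since the battery measurement commutes with the energy-conserving structure and the reservoir is traced out at each step, the coherences evolve independently and are annihilated by the final energy measurement; I would spell this out by tracking that $\rho_{AB,\tau}$ in~\eqref{eq:joint_state_tau} is block-diagonal in the system basis for every $\tau\ge 1$. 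Everything else is a routine consequence of Theorem~\ref{thm:work_distribution} and the entropy identities above.
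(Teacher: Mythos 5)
Your proposal follows essentially the same route as the paper's own proof: item~1 via the observation that the off-diagonal terms of~\eqref{eq:intial_joint_state} do not survive the first swap-and-discard step, so the output statistics for a general $\psi$ coincide with those of the incoherent mixture $|\langle\hat{\psi}|\psi\rangle|^2\,\hat{\psi}+(1-|\langle\hat{\psi}|\psi\rangle|^2)\,\hat{\psi}^c$ and Theorem~\ref{thm:work_distribution} applies branchwise; and item~2 by taking the two-outcome expectation and rewriting it as a relative-entropy difference using that $\Delta_{2\epsilon}(\hat{\psi})$ is diagonal in $\{\hat{\psi},\hat{\psi}^c\}$ with eigenvalues $1-\epsilon$ and $\epsilon$. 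The one point to watch is the $\epsilon\leftrightarrow(1-\epsilon)$ bookkeeping: taking $w_0,w_1$ literally as stated in~\eqref{eq:thm3work_values} your intermediate expression is $p\ln\epsilon+(1-p)\ln(1-\epsilon)$ with $p=|\langle\hat{\psi}|\psi\rangle|^2$, whereas $\Tr\bigl(\psi\ln\Delta_{2\epsilon}(\hat{\psi})\bigr)=p\ln(1-\epsilon)+(1-p)\ln\epsilon$; the derivation inside the proof of Theorem~\ref{thm:work_distribution} in fact gives $w_i=\beta^{-1}(\ln 2+\ln p_i)$ with $p_0=1-\epsilon$, under which convention your algebra closes — the same labeling slip is present in the paper's own computation, so this is a typo to reconcile rather than a gap in your argument.
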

\begin{proof}
Recall that to simplify the notation, we use $|\phi_0\rangle = \ket{\hat{\psi}}$ and $|\phi_1 \rangle = \ket{\hat{\psi}^c}$. We first observe that the off-diagonal term $\bra{\phi_i} \psi \rangle \! \langle \psi\ket{\phi_j} \ket{i}\!\bra{j}_A\otimes \varphi(x)_B$ of the join state in~\eqref{eq:intial_joint_state} does not affect~\eqref{eq:joint_state_1}. Therefore, it is identical for the case where the input is $\psi$ and the case where the input is $\sum_i |\langle \phi_i | \psi \rangle |^2 \phi_i$. The latter case can be viewed as a probabilistic mixture of cases where the input is $\phi_{i}$ with probability $ |\langle \phi_i | \psi \rangle |^2 $. Therefore, Statement 1 in Theorem~\ref{thm:6} holds, i.e.,
\begin{equation}
    \Pr(\Delta W=w_i)=  |\langle \phi_i | \psi \rangle |^2 ~.
\end{equation}

Then using~\eqref{eq:thm3work_values} and~\eqref{eq:thm3work_probs}, the expected extracted work is given by 
\begin{align}
    \EX[\Delta W] & = \beta^{-1}\left(|\langle \psi | \hat{\psi} \rangle |^2( D(\hat{\psi} \| \mathbb{I}/2 )  + \ln \epsilon ) + (1-|\langle \psi | \hat{\psi} \rangle |^2)(D(\hat{\psi}^c \| \mathbb{I}/2)+ \ln (1-\epsilon) ) \right) .
\end{align}
Using $D(\hat{\psi} \| \mathbb{I}/2 ) = D(\hat{\psi}^c \| \mathbb{I}/2) $ we have
\begin{align}
     \EX[\Delta W] & = \beta^{-1}\left( D(\hat{\psi} \| \mathbb{I}/2 ) + |\langle \psi | \hat{\psi} \rangle |^2 \ln \epsilon + (1-|\langle \psi | \hat{\psi} \rangle |^2)\ln (1-\epsilon) \right).
\end{align}
Then we use that $\lbrace\hat{\psi},\hat{\psi}^c\rbrace$ determine an orthonormal basis and
\begin{align}
\EX[\Delta W] & = \beta^{-1}\left( D(\hat{\psi} \| \mathbb{I}/2 ) + \Tr(|\langle \psi | \hat{\psi} \rangle |^2 \hat{\psi}^c \ln \epsilon \hat{\psi}^c ) + \Tr ((1-|\langle \psi | \hat{\psi} \rangle |^2)\hat{\psi}\ln (1-\epsilon)\hat{\psi} )  \right) \\
&=  \beta^{-1}\left( D(\hat{\psi} \| \mathbb{I}/2 ) + \Tr \left(|\langle \psi | \hat{\psi} \rangle |^2 \hat{\psi}^c +(1-|\langle \psi | \hat{\psi} \rangle |^2)\hat{\psi} \ln \left((1-\epsilon)\hat{\psi}+\epsilon \hat{\psi}^c \right)\right)\right) \\
&=  \beta^{-1}\left( D(\hat{\psi} \| \mathbb{I}/2 ) + \Tr\left(\psi \ln \left((1-\epsilon)\hat{\psi}+\epsilon \hat{\psi}^c\right) \right) \right) \\
& = \beta^{-1}\left( D(\hat{\psi} \| \mathbb{I}/2 ) + \Tr\left(\psi \ln \Delta_{2\epsilon}(\hat{\psi}) \right) \right) \\
& = \beta^{-1}\left( D(\hat{\psi} \| \mathbb{I}/2 ) -D \left(\psi \| \Delta_{2\epsilon}(\hat{\psi}) \right) \right) .
\end{align}
\end{proof}

\subsubsection{Cumulative dissipation}

We finalize considering the original setting where we have oracle sequential access to an unknown pure qubit state $\psi$, and our goal is to extract the maximal amount of work into a battery system. To achieve this, we can use Algorithm~\ref{alg:linucb_vn_var} for the PSMAQB setting with the rewards~\eqref{eq:reward_thermal} to learn an approximate direction of the state, and then run Algorithm~\ref{alg:sc_work_extraction} to extract work based on the approximate input. Assuming sequential access to the unknown state over $N$ rounds, and using the expected extracted work from Theorem~\ref{th:work_differnt_input}, we define the dissipation at round $k \in [N]$ with respect to the optimal protocol as
\begin{align}
\dissipation^{t} := \beta^{-1} D(\psi \| \Delta_{2\epsilon_t}(\psi_t ) ),
\end{align}
where $\psi_t$ is the direction chosen by the bandit algorithm at time step $t\in[T]$ and $\epsilon_t\in[0,1]$ is the accuracy. Then the cumulative dissipation over all $T$ rounds is
\begin{align}\label{eq:apendix_dissipation_sc}
\dissipation(T) := \beta^{-1} \sum_{t=1}^T \dissipation^{t} = \beta^{-1} \sum_{t=1}^T D(\psi \| \Delta_{2\epsilon_t}(\psi_t )).
\end{align}

\begin{algorithm}[H]
	\caption{\textsf{Thermal work extraction}}
	\label{alg:cum_dissipation}
    \begin{algorithmic}[1]
      \State  Require: sequence $\lbrace \epsilon_k \rbrace_{t=1}^\infty$
          
	\For {$k=1,2,\ldots$}
           \State Receive unknown state $ | \psi \rangle$ and couple to battery state $\varphi(x-\mu_{t-1})$
           
          \State  Compute direction $|\psi_t \rangle$ with Algorithm~\ref{alg:linucb_vn_var} using $\lbrace \psi_s , X_s \rbrace_{s=1}^{t-1}$
           
           \State Extract work using Algorithm~\ref{alg:sc_work_extraction} with input $\psi_t$ and $\epsilon_t$ and get extracted work $\Delta W_t$ and energy $\mu_t$
           
          \State  Set reward $X_t = \lbrace 0 , 1 \rbrace$ according to~\eqref{eq:reward_thermal}
        \EndFor
        \end{algorithmic}
\end{algorithm}

To minimize the cumulative dissipation, we use Algorithm~\ref{alg:cum_dissipation}, which takes as input a sequence of accuracies $\lbrace \epsilon_t \rbrace_{t=1}^\infty$. At each round, the estimator uses $\psi_t$, the direction output by Algorithm~\ref{alg:linucb_vn_var}, and if $\epsilon_t$ is a good approximation of the infidelity between the true state $\psi$ and the estimate $\psi_t$, then the dissipation $\dissipation^t$ is controlled by $\epsilon_t$. This is formalized in the following theorem.
\begin{theorem}[Theorem 2.34 in~\cite{Flammia2024quantumchisquared}]\label{th:relative_entropy_fideity} 
Let $\rho$ and $\hat{\rho}$ be $d$-dimensional quantum states achieving infidelity $1 - F(\rho,\hat{\rho}) \leq \epsilon \leq \frac{1}{2}$. Then we have
\begin{align}
    D ( \rho \| \Delta_{2\epsilon} ( \hat{\rho}) ) \leq 16\epsilon \left( 2 + \ln \left( \frac{d}{2\epsilon} \right) \right) . 
\end{align}
\end{theorem}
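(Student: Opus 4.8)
The plan is to exploit the fact that $\sigma:=\Delta_{2\epsilon}(\hat\rho)=(1-2\epsilon)\hat\rho+\frac{2\epsilon}{d}\mathbb{I}$ commutes with $\hat\rho$, which lets me reduce the genuinely non-commutative relative entropy to a classical divergence plus a controllable coherence correction. First I would record the two structural facts that drive the whole estimate: $\sigma$ shares its eigenbasis with $\hat\rho$, and its spectrum lies in $[\frac{2\epsilon}{d},1]$, so that $\|\log\sigma\|_\infty\le\log\frac{d}{2\epsilon}$. This last quantity is exactly the origin of the $\ln(d/2\epsilon)$ factor appearing in the claimed bound, and the whole point of the argument is to show that it only enters multiplied by the small infidelity $1-F\le\epsilon$.

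The central device is the Pythagorean identity for relative entropy under the pinching $\mathcal{P}$ onto the eigenspaces of $\sigma$ (equivalently of $\hat\rho$). Since $\mathcal{P}(\sigma)=\sigma$ and $\log\mathcal{P}(\rho)$, $\log\sigma$ are both block-diagonal in the pinching basis, a one-line computation using $\Tr[\rho\,\log\mathcal{P}(\rho)]=\Tr[\mathcal{P}(\rho)\log\mathcal{P}(\rho)]$ and $\Tr[\rho\log\sigma]=\Tr[\mathcal{P}(\rho)\log\sigma]$ yields the exact decomposition $D(\rho\|\sigma)=D(\rho\|\mathcal{P}(\rho))+D(\mathcal{P}(\rho)\|\sigma)$. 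This splits the target into a purely classical term $D(\mathcal{P}(\rho)\|\sigma)$ between the diagonals of $\rho$ and $\sigma$, and a coherence term $D(\rho\|\mathcal{P}(\rho))=S(\mathcal{P}(\rho))-S(\rho)$, where $S(\omega)=-\Tr[\omega\log\omega]$, which measures how far $\rho$ is from being diagonal in $\hat\rho$'s eigenbasis.

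For the classical term, writing $\sigma_i=(1-2\epsilon)q_i+\frac{2\epsilon}{d}$ for the eigenvalues of $\sigma$ (with $q_i$ those of $\hat\rho$) and $r_i=\langle i|\rho|i\rangle$ for the diagonal of $\rho$, one has $D(\mathcal{P}(\rho)\|\sigma)=\sum_i r_i\log(r_i/\sigma_i)$. I would split the sum into the directions aligned with the large eigenvalues of $\hat\rho$, which contribute only $O(\epsilon)$ through $\log\frac{1}{1-2\epsilon}$, and the directions on which $\rho$ places the mass that the infidelity bound forces to be small: that mass is at most $1-F\le\epsilon$, while each such term costs at most $\log\frac{d}{2\epsilon}$, using $r_i\le 1$ to discard the nonpositive $r_i\log r_i$ contribution. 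This gives a bound of the form $\epsilon\log\frac{d}{2\epsilon}+O(\epsilon)$. For the coherence term I would bound $S(\mathcal{P}(\rho))-S(\rho)$ by the entropy of the pinched diagonal distribution, whose tail carries mass $1-F\le\epsilon$, via the elementary estimates $-F\log F\le 1-F$ and the grouping inequality $x\log(C/x)\le x\log C + x\log(1/x)$, again landing at $O(\epsilon\log(d/\epsilon))$. Summing the two $O(\epsilon\log(d/2\epsilon))$ contributions and absorbing universal constants produces the stated $16\epsilon\big(2+\ln\frac{d}{2\epsilon}\big)$.

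The step I expect to be the main obstacle is the coherence term $D(\rho\|\mathcal{P}(\rho))$ for \emph{general} mixed $\rho,\hat\rho$: one must show that the entropy increase under pinching onto $\hat\rho$'s eigenbasis is $O(\epsilon\log(d/\epsilon))$ using only the fidelity hypothesis, which means relating the discarded off-diagonal blocks of $\rho$ to $1-F$. The naive route via Fuchs--van de Graaf only gives $\|\rho-\hat\rho\|_1\le 2\sqrt{\epsilon}$, an $O(\sqrt{\epsilon})$ control that is too weak; the delicate part is to upgrade this to an $O(\epsilon)$-level control of the entropy increase, presumably by tracking the block structure rather than the trace norm. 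In the pure-state case that is actually needed for our work-extraction and tomography applications this difficulty disappears entirely, since $S(\rho)=0$ and $S(\mathcal{P}(\rho))$ is just the entropy of a distribution whose off-support tail has total mass $1-F\le\epsilon$, so the whole estimate becomes transparent.
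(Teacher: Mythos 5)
First, a structural remark: the thesis does not actually prove this statement — it is imported verbatim as Theorem~2.34 of the cited reference, so there is no in-paper proof to match your argument against. Judged on its own terms, your architecture (the pinching Pythagorean identity $D(\rho\|\sigma)=D(\rho\|\mathcal{P}(\rho))+D(\mathcal{P}(\rho)\|\sigma)$ for $\sigma=\Delta_{2\epsilon}(\hat\rho)$, plus the spectral floor $\sigma\geq\frac{2\epsilon}{d}\mathbb{I}$ as the source of the $\ln(d/2\epsilon)$ factor) is sound, and it does give a complete, essentially two-line proof in the case the thesis actually uses (pure $\rho$ and pure $\hat\rho$, where $S(\rho)=0$ and $-\Tr[\rho\ln\sigma]$ splits into an on-support term of size $O(\epsilon)$ and an off-support term of mass $1-F\leq\epsilon$ each costing at most $\ln(d/2\epsilon)$).

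As a proof of the theorem as stated, however, there is a genuine gap, and it is larger than the one you flag. You correctly identify that the coherence term $D(\rho\|\mathcal{P}(\rho))=S(\mathcal{P}(\rho))-S(\rho)$ is not controlled: for mixed $\rho$ this is not "the entropy of a tail of mass $\leq\epsilon$", and the only generic handle you name (Fuchs–van de Graaf) gives $O(\sqrt{\epsilon}\log(d/\epsilon))$, which misses the claimed rate. But the classical term $D(\mathcal{P}(\rho)\|\sigma)=\sum_i r_i\ln(r_i/\sigma_i)$ is also not closed by your sketch. For mixed $\hat\rho$ there is no canonical split into "directions aligned with large eigenvalues" versus "small-mass directions"; what the hypothesis gives you (via monotonicity of fidelity under the pinching channel) is only that the classical Bhattacharyya overlap satisfies $\sum_i\sqrt{r_i\sigma_i}\geq 1-O(\epsilon)$. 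Converting that into $\sum_i r_i\ln(r_i/\sigma_i)=O(\epsilon\ln(d/\epsilon))$ is not elementary: if you discard the negative terms (indices with $r_i<\sigma_i$) and first-order Taylor-expand the rest, the surviving sum $\sum_{r_i>\sigma_i}\sqrt{r_i}(\sqrt{r_i}-\sqrt{\sigma_i})$ is only $O(\sqrt{\epsilon})$ in general (already for $d=2$ with $r=(\tfrac12+\sqrt{\epsilon},\tfrac12-\sqrt{\epsilon})$, $\sigma=(\tfrac12,\tfrac12)$), so the cancellation against the negative part must be retained, which means running a genuine second-order (Hellinger-versus-$\chi^2$-versus-KL) comparison with the floor $\sigma_i\geq 2\epsilon/d$ controlling where the quadratic approximation degrades. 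That comparison is the actual content of the cited result, and it is missing from your write-up. In short: your proposal proves the special case needed downstream, but for the general mixed-state statement both halves of your decomposition still require an $O(\epsilon)$-level estimate that the sketch does not supply.
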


Given the above bound, we can use the fidelity guarantee of Algorithm~\ref{alg:linucb_vn_var} in Theorem~\ref{th:regret_PSMAQB} to prove a bound on the cumulative dissipation.

\begin{theorem} Given a finite time horizon $T\in\mathbb{N}$ and $\delta\in (0,1 )$, there exists an explicit sequence of accuracies $\lbrace \epsilon_t \rbrace_{t=1}^\infty $ such that Algorithm~\ref{alg:cum_dissipation} achieves
\begin{align} 
\dissipation (T) = O\left(\beta^{-1}\log^2 ( T ) \log \left( \frac{T}{\delta} \right)  \right) .
\end{align} 
\end{theorem}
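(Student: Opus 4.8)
The plan is to combine the high-probability infidelity guarantee of the PSMAQB algorithm (Theorem~\ref{th:regret_PSMAQB}, together with its underlying statement Theorem~\ref{th:regret_scaling_variance}) with the relative-entropy-to-fidelity bound of Theorem~\ref{th:relative_entropy_fideity}, choosing the accuracy schedule $\{\epsilon_t\}$ so that it tracks the infidelity of the estimates produced by the learning step. First I would instantiate Algorithm~\ref{alg:linucb_vn_var} for $d=3$ inside Algorithm~\ref{alg:cum_dissipation}, taking the number of median-of-means subsamples to be $k=\lceil 24\log(\widetilde{T}/\delta)\rceil$, so that by Theorem~\ref{th:regret_scaling_variance} the event
\begin{align}
E:=\left\{\,1-F(\psi,\psi_t)\leq \frac{Ck}{t}\ \text{ for all }t\in[T]\,\right\}
\end{align}
holds with probability at least $(1-\delta/\widetilde{T})^{\widetilde{T}}\geq 1-\delta$, where $C$ absorbs the $d=3$ constants $\beta_w$ from~\eqref{eq:beta_constant_mom}. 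Here $\psi_t$ is the direction output by the algorithm, and I translate the Euclidean guarantee into infidelity via $1-F=\tfrac14\|\theta-\hat\theta_t\|_2^2$.

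The second step is to fix the explicit, state-independent accuracy schedule
\begin{align}
\epsilon_t:=\min\left\{\frac{1}{2},\ \frac{Ck}{t}\right\},
\end{align}
which is a legitimate input to Algorithm~\ref{alg:cum_dissipation} since it depends only on $t$, $\delta$, and known constants. On the event $E$, every round with $t>2Ck$ satisfies $1-F(\psi,\psi_t)\leq Ck/t=\epsilon_t\leq\tfrac12$, so the hypothesis of Theorem~\ref{th:relative_entropy_fideity} is met (with $d=2$, since the environment states are qubits) and
\begin{align}
D\big(\psi\,\big\|\,\Delta_{2\epsilon_t}(\psi_t)\big)\leq 16\epsilon_t\left(2+\ln\frac{1}{\epsilon_t}\right)=O\!\left(\frac{k\log t}{t}\right).
\end{align}
For the early window $t\leq 2Ck$ I would keep the cap $\epsilon_t=\tfrac12$, use $\Delta_1(\psi_t)=\mathbb{I}/2$, and bound each term by the constant $D(\psi\|\mathbb{I}/2)=\ln 2$; there are at most $O(k)$ such rounds, contributing only $O(\beta^{-1}k)$ to the dissipation.

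Summing the two regimes then gives, on $E$,
\begin{align}
\dissipation(T)=\beta^{-1}\sum_{t=1}^{T}D\big(\psi\,\big\|\,\Delta_{2\epsilon_t}(\psi_t)\big)=O\!\left(\beta^{-1}k\sum_{t=1}^{T}\frac{\log t}{t}\right)=O\big(\beta^{-1}k\log^2 T\big),
\end{align}
and substituting $k=O(\log(T/\delta))$ yields the claimed $O\big(\beta^{-1}\log^2(T)\log(T/\delta)\big)$ with probability at least $1-\delta$.

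The main obstacle I anticipate is reconciling the a priori choice of $\epsilon_t$ with the algorithm's guarantee: the schedule must be fixed before the run, yet must dominate the \emph{random} infidelity $1-F(\psi,\psi_t)$ with high probability while simultaneously staying below $1/2$ so that Theorem~\ref{th:relative_entropy_fideity} applies. The key trick for the initial window, where the infidelity bound exceeds $1/2$, is the observation that $\Delta_1(\psi_t)=\mathbb{I}/2$ makes the dissipation per round equal to the constant $\ln 2$, so those terms stay bounded. A secondary technical point, which I would dispatch by invoking Theorem~\ref{thm:work_distribution}, is that the measured work outcomes reduce exactly to PSMAQB Bernoulli rewards only in the quasi-static limit $M\to\infty$; for finite $M$ the reward distribution differs by an error $O(e^{-CM})$, so one takes $M$ large enough that this perturbation does not disturb the regret/infidelity guarantees underlying $E$.
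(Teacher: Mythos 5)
Your proposal is correct and follows essentially the same route as the paper: choose the explicit schedule $\epsilon_t = \min\{1/2,\, O(\log(T/\delta))/t\}$ to dominate the high-probability infidelity guarantee of Theorem~\ref{th:regret_PSMAQB}, apply Theorem~\ref{th:relative_entropy_fideity} on the event that the guarantee holds, and sum $\epsilon_t(2-\ln\epsilon_t)$ over the rounds, treating the initial window where $\epsilon_t=1/2$ separately. Your explicit observation that $\Delta_1(\psi_t)=\mathbb{I}/2$ makes each capped round contribute exactly $\ln 2$ is a slightly cleaner way of handling the early rounds than the paper's brief remark, but the argument is the same.
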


\begin{proof}
    We can choose
    \begin{align}
        \epsilon_t = \min \left\lbrace C \frac{\log\left(\frac{T}{\delta} \right)}{t} ,\frac{1}{2} \right\rbrace ,
    \end{align}
    where $C$ is the constant in Theorem~\ref{th:regret_PSMAQB} for the fidelity bound of the direction $\psi_t$. Then we can use Theorem~\ref{th:relative_entropy_fideity} combined with the fidelity guarantee of Theorem~\ref{th:regret_PSMAQB} to get that with  probability at least $1-\delta$, we have
    \begin{align}
        \dissipation (T) \leq \beta^{-1} \sum_{t=1}^{T} 16 \epsilon_t (2-\ln \epsilon_t ).
    \end{align}
The result follows by noting that, for a sufficiently large constant $t^*$, we have $\epsilon_t = C \frac{\ln\left(\frac{N}{\delta} \right)}{t}$ for all $t \geq t^*$. The dissipation incurred during the first $t^*$ rounds contributes a constant term. For the remaining rounds $t \geq t^*$, using $-\ln\epsilon_t\leq \log T$ for $t\leq T$ and summing the corresponding dissipation terms yields the claimed polylogarithmic scaling in the number of rounds $T$.
\end{proof}

\subsection{Cost of measurement and erasure: Landauer’s principle }

There is one final aspect to consider when analyzing dissipation in both of the previous models: the act of registering and erasing measurement outcomes by the learner may itself incur an energy cost. Indeed, both the measurement process and the subsequent erasure of the learner's memory are not thermodynamically free. 

This cost can be lower bounded by a quantity known as the \emph{QC-mutual information}, $I_{QC}$, which quantifies how much information the measurement extracts about the quantum system~\cite{Sagawa_2009}. Importantly, $I_{QC}$ is upper bounded by the entropy of the classical memory register. Thus, the total cost of measurement and erasure can be loosely lower bounded by the entropy of this memory register alone.

According to Landauer's principle~\cite{Sagawa_2009,Goold_2015,Tan_2022}, the minimum heat dissipation required to erase a memory register is lower bounded by the entropy change $\Delta S$ via
\begin{align}
    \beta Q \geq \Delta S.
\end{align}
Moreover, it is widely accepted that this bound can be saturated in the quasi-static limit, provided the state to be erased is known~\cite{riechers2021initial,Jun_2014,Miller_2020}.

In both of our work extraction models, we repeatedly perform measurements in the energy eigenbasis. Each such measurement requires memory erasure before the next outcome can be stored. When the source state $\ket{\psi}$ is known, the measurement outcome is deterministic and incurs no entropy cost. However, when $\ket{\psi}$ is unknown, outcomes are stochastic, leading to entropy increase and associated energy dissipation.

This additional contribution can be included in the total dissipation. Specifically, we define
\begin{align}
    \label{eq:regret_modified}
    \dissipation'(T) = \dissipation(T) + \beta^{-1} \sum_{t=1}^T \Delta S_t,
\end{align}
where $\Delta S_t$ is the entropy change of the memory register at round $t$.

Let $\alpha_t = 1 - |\langle \psi_t | \psi\rangle |^2$ be the infidelity at round $t$. $\Delta S_t$ is closely related to $\alpha_t$ in both models we consider. In the semi-classical battery model, the entropy change is upper bounded by 
\begin{align}
    \Delta S_t = -(1-\alpha_t) \ln (1-\alpha_t) - \alpha_t \ln \alpha_t \leq  \alpha_t - \alpha_t \ln \alpha_t, 
\end{align}
and the dissipation is upper bounded by 
\begin{align}
    \dissipation^{\text{sc},*}(T) = \dissipation^{\text{sc}}(T) + \beta^{-1} \sum_{t=1}^T (\alpha_t - \alpha_t \ln \alpha_t). 
\end{align}
In the Jaynes-Cummings battery model, the entropy change in round $t$ is upper bounded by 
\begin{align}
    \Delta S_t &= -(1-\alpha_t) \ln (1-\alpha_t) - \alpha_t \ln \alpha_t - \alpha_t (\cos^2\theta_t \ln \cos^2\theta_t + \sin^2\theta_t \ln \sin^2\theta_t ) \\
    &\leq 2\alpha_t - \alpha_t \ln \alpha_t, 
\end{align}
and the dissipation is upper bounded by 
\begin{align}
    \dissipation^{\text{jc},*}(T) = \dissipation^{\text{jc}}(T) + \beta^{-1} \sum_{t=1}^T (2\alpha_t - \alpha_t \ln \alpha_t). 
\end{align}

Then, the PSMAQB Algorithm~\ref{alg:linucb_vn_var}, we have from Theorem~\ref{th:regret_PSMAQB} that ensures a polylogarithmic cumulative infidelity with high probability. Now we focus on the case where the cumulative infidelity is upper bounded by $\ln\frac{T}{\delta}\ln T$ with high probability $1-\delta$, that is, for some constant $C$, 
\begin{align}
    \sum_{t=1}^T \alpha_t \leq C \ln \frac{T}{\delta}\ln T.
\end{align}
By taking the derivative $(-\alpha_t\ln\alpha_t)' = -  \ln\alpha_t - 1$, it is obvious that when $0<\alpha_t \leq e^{-1}$, $-\alpha_t\ln  \alpha_t$ increases with $\alpha_t$ while $e^{-1}\leq \alpha_t\leq 1$, $-\alpha_t\ln\alpha_t$ decreases with $\alpha_t$. Now we consider two cases: 

\textbf{Case 1: $C (\ln (T/\delta) \ln T)/T \leq e^{-1}$.} This happens when $T$ is large enough. As a result, $-\alpha_t\ln\alpha_t$ monotonically increases with respect to $\alpha_t$ when $\alpha_t\leq C  (\ln (T/\delta) \ln T)/T$. On the one hand, for $\alpha_t\leq  C(\ln (T/\delta) \ln T)/T $, 
\begin{align}
    -\alpha_t\ln\alpha_t\leq \frac{C\ln\frac{T}{\delta} \ln T}{T} \ln \frac{T}{C\ln\frac{T}{\delta} \ln T},
\end{align}
On the other hand, for $\alpha_t > C (\ln (T/\delta) \ln T)/T$, 
\begin{align}
    -\alpha_t\ln\alpha_t \leq  \alpha_t \ln\frac{T}{C\ln \frac{T}{\delta} \ln T}, 
\end{align}
Then, 
\begin{align}
    -\sum_{t=1}^{T} \alpha_t \ln\alpha_t & = - \sum_{\alpha_t:\alpha_t\leq \frac{C \ln \frac{T}{\delta} \ln T}{T}} \alpha_t \ln\alpha_t - \sum_{\alpha_t:\alpha_t >\frac{C \ln \frac{T}{\delta} \ln T}{T}}\alpha_t \ln\alpha_t \\
    & \leq \sum_{\alpha_t:\alpha_t\leq \frac{C \ln \frac{T}{\delta} \ln T}{T}} \frac{C \ln \frac{T}{\delta} \ln T}{T} \ln \frac{T}{C \ln \frac{T}{\delta} \ln T}+ \sum_{\alpha_t:\alpha_t>\frac{C \ln \frac{T}{\delta} \ln T}{T}} \alpha_t  \ln\frac{T}{C \ln \frac{T}{\delta} \ln T}  \\
    & \leq  C \ln \frac{T}{\delta} \ln T \ln \frac{T}{C \ln \frac{T}{\delta} \ln T} + C \ln \frac{T}{\delta} \ln T \ln \frac{T}{C \ln \frac{T}{\delta} \ln T} \\
    &\leq 2 C \ln \frac{T}{\delta} \ln T \ln \frac{T}{C \ln \frac{T}{\delta} \ln T}. 
\end{align}

\textbf{Case 2: $C (\ln (T/\delta) \ln T)/T > e^{-1}$.} This happens when $T$ is not very large. Therefore, $T/e\leq  C \ln (T/\delta) \ln T$ and certainly $T \geq 2$. Using the fact that $-\alpha_t\ln\alpha_t\leq e^{-1}$, we obtain
\begin{align}
    \sum_{t=1}^{T} \alpha_t \ln \frac{1}{\alpha_t} \leq \frac{T}{e} \leq C \ln \frac{T}{\delta} \ln T . 
\end{align}
Combining both cases, we conclude that 
\begin{align}
    - \sum_{t=1}^{T} \alpha_t \ln\alpha_t \leq O((\log T)^3). 
\end{align}
Substituting into~\eqref{eq:regret_modified}, we obtain the dissipation for the semi-classical model,  
\begin{align}
    \dissipation^{\text{sc},*}(T)  \leq  \dissipation^{\text{sc}}(T) + \beta^{-1} O((\log T)^3) = O((\log T)^3). 
\end{align}
and the Jaynes-Cummings battery model, 
\begin{align}
    \dissipation^{\text{jc},*}(T)  \leq  \dissipation^{\text{jc}}(T) + 2\beta^{-1} O((\log T)^3) = O((\log T)^3), 
\end{align}
Therefore, the regret still scales as $O(\mathrm{polylog}(T))$ even if we take the energy dissipation due to Landauer's principle into account.

As mentioned above, in order to achieve Landauer's bound, the erasure process must necessarily be done quasi-statically. This is not something that a sequential adaptive learner can accomplish since it needs to measure and decide on its action in real time. To circumvent this, an array of empty memory registers can be first prepared. Suppose the agent will be extracting work for $T$ steps, we first prepare a memory register $M_0$, consisting of $T$ empty registers.
\begin{equation}
    M_0 = \{\underbrace{0,0,\ldots,0}_T\}
\end{equation}
At every time step, rather than erasing the old memory that has the previous outcome remembered, the learner simply records the outcome of the measurement on the battery into a new empty register. In other words, after time step $t$, the memory register takes the form
\begin{equation}
    M_t = \{r_1,r_2,\ldots,r_t,0,\ldots,0\} \quad \text{for}\quad t\in \lbrace{1,2,\ldots T \rbrace}~.
\end{equation}
At the end of all the extractions, the memory $M_T$ can then be quasi-statically reset. As previously calculated, as $t\to T$, the distribution of $r_t$ becomes more peaked and the total entropy of the memory registers scales with $O(\log^3 T)$, likewise for the cost of measurement. The resultant dissipation therefore still scales with $O(\mathrm{polylog}(T))$.

\section{Recommender systems for quantum data}

The last application we study in this thesis is one of the natural settings where bandit algorithms can be used: \emph{recommender systems}. Specifically, we consider a model of a classical recommender system adapted to quantum data. To study this problem, however, we adopt a different model from the MAQB setting that has been the focus of previous chapters.

We are interested in analyzing a recommender system for quantum data modeled by a set of unknown quantum processes—referred to as the \textit{environment}—and a set of tasks that must be performed using these quantum processes, which we call the \textit{context set}. A learner interacts sequentially with the environment: at each round, they receive a task from the context set and must select the quantum process that best performs the task. For example, the environment could represent a set of noisy quantum computers, while the context set consists of different quantum algorithms. At each round, the learner is given a quantum algorithm to execute, and the goal is to recommend the best quantum computer for that specific task.

This model naturally shows the bandit exploration-exploitation tradeoff: the learner must explore different quantum computers to learn their performance (exploration) while also aiming to select the most suitable one for each task (exploitation). This trade-off is particularly relevant in practical scenarios where online decision-making is costly or performance-sensitive. For instance, in our example, the use of a quantum computer may incur a monetary cost, so the learner has an incentive to make near-optimal decisions at each step.

\begin{figure}
    \centering
    \includegraphics[scale=0.5]{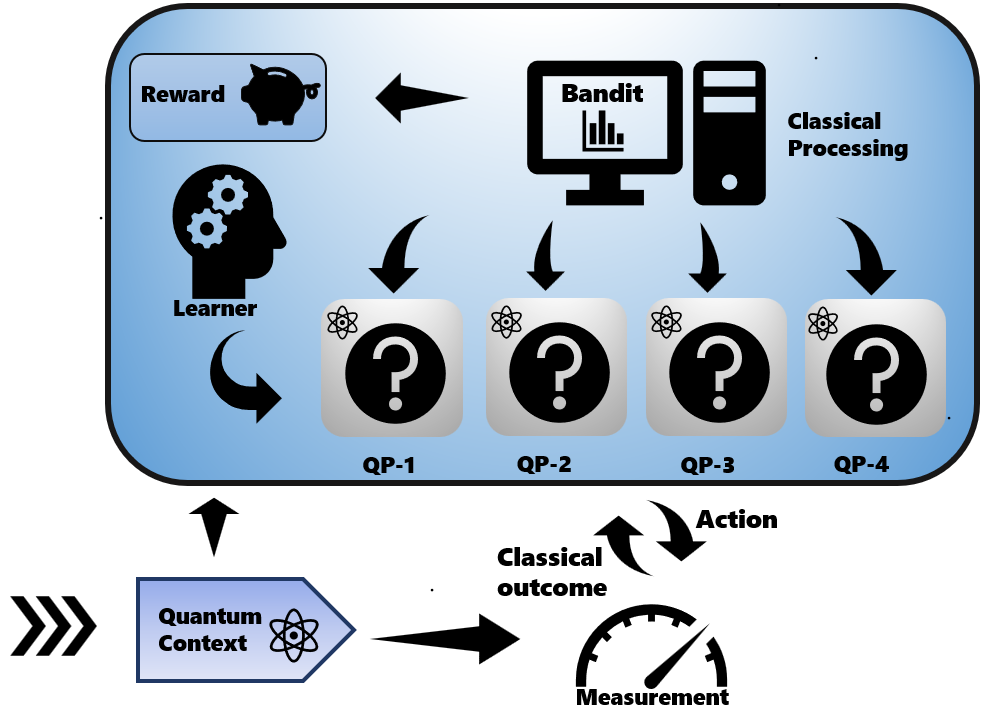}
    \caption{Sketch of a recommender system for quantum data. The learner receives sequential quantum contexts and feeds them into a classical processing system. The context is also used in the measurement system. The classical processor selects one of the quantum processes (which are unknown, except through measurements). The selected process is applied to the measurement system, and the outcome is fed back into the classical system and accumulated into the total reward.}
    \label{fig:QCB}
\end{figure}

In our formal model, the environment consists of a set of unknown quantum states, and the context set comprises a (possibly infinite) set of observables. At each round, the learner receives an observable and must choose one of the quantum states on which to perform a measurement, aiming to maximize the expected outcome. We refer to this model as the \textit{quantum contextual bandit} (QCB). It falls within the class of linear contextual bandits~\cite{abe2003reinforcement,aurer,chu} and serves as the framework for our study of recommender systems for quantum data.

The regret in the QCB setting is defined as the cumulative difference between the expected outcomes of the optimal and selected actions over all rounds. While the QCB is a restricted instance of the general linear contextual bandit problem, it still exhibits similar worst-case regret scaling (see Section~\ref{sec:lowerbound_qcb}).

The main objective of this part of the thesis is to demonstrate how the bandit framework can be applied to quantum recommender systems, rather than focusing solely on optimizing regret bounds in specific settings. 

As a concrete instance of the QCB framework, we study the \textit{low-energy quantum state recommendation problem}. In this problem, at each round, the learner receives a quantum Hamiltonian and must recommend from the environment the state with the lowest energy. This task relates closely to the ground state preparation problem, a key component in many NISQ algorithms~\cite{kishor_noisy2022}. Our model could potentially support the implementation of online recommendation strategies that guide the learner in selecting effective ansatz for energy minimization across multiple problem instances.

One advantage of using bandit algorithms for this purpose is that they do not require full reconstruction of the $d$-dimensional quantum state. Instead, they exploit the structure of the context set to focus only on the most relevant information for making recommendations. To achieve this, we combine a Gram-Schmidt procedure with classical linear bandit techniques, allowing our algorithm to maintain low-dimensional approximations of the quantum states without prior knowledge of the context set. We emphasize that our goal differs from the standard ground state finding problem: rather than identifying the exact ground state, we focus on recommending a \textit{good enough} state for the task at hand.

\subsection{The quantum contextual bandit model}

We start giving the formal definitions of quantum contextual bandit framework including the notions of policy and regret which are similar to the ones of the MAQB setting but adapted to this particular problem.

\begin{definition}[Quantum contextual bandit] Let $d\in\mathbb{N}$. A $d$-dimensional \textit{quantum contextual bandit} is given by a set of observables $\mathcal{C} = \lbrace O_c \rbrace_{c\in\Omega_{\mathcal{C}}}\subseteq \mathcal{O}_d$ that we call the \textit{context set}, $(\Omega_{\mathcal{C}},\Sigma_{\mathcal{C}})$ is a measurable space and $\Sigma_{\mathcal{C}}$ is a $\sigma$-algebra of subsets of $\Omega_{\mathcal{C}}$. The bandit is in an \textit{environment}, a finite set of quantum states  $\gamma = \lbrace \rho_1,\rho_2,\cdots , \rho_k \rbrace \subset \mathcal{S}_d $, that it is unknown. The quantum contextual bandit problem is characterized by the tuple $(\mathcal{C},\gamma)$.
\end{definition}

 Given the environment $\gamma$ such that $ |\gamma | = k$, we define the \textit{action set} $\mathcal{A} = \lbrace 1,...,k \rbrace$ as the set of indices that label the quantum states $\rho_i\in\gamma$ in the environment. For every observable $O_c \in\mathcal{C}$, the spectral decomposition is given by 
\begin{align}
O_c = \sum_{i=1}^{d_c} \lambda_{c,i}\Pi_{c,i}, 
\end{align}
where $\lambda_{c,i} \in \mathbb{R}$ denote the $d_c \leq d$ distinct eigenvalues of $O_c$ and $\Pi_{c,i}$ are the orthogonal projectors on the respective eigenspaces.
For each action $a\in\mathcal{A}$, we define the reward distribution with outcome $X \in \mathbb{R}$ as the conditional probability distribution associated with performing a measurement using $O_c$ on $\rho_a$ given by Born’s rule
\begin{align}\label{eq:reward_distribution}
Pr\left[ X = x | A = a , O= O_c \right] = P_{\rho_a}(x|a,c) = \begin{cases}
\Tr ( \rho_a \Pi_{c,i} ) \text{ if } x= \lambda_{c,i}, \\
0 \text{ else. }
\end{cases}
\end{align}

 With the above definitions, we can explain the learning process. The learner interacts sequentially with the QCB over $T$ rounds such that for every round $t\in [T]$:
 
 \begin{enumerate}
 \item The learner receives a context $O_{c_t} \in \mathcal{C}$ from some (possibly unknown) probability measure  $P_\mathcal{C}:\Sigma \rightarrow [0,1]$, over the set $\Omega_\mathcal{\mathcal{C}}$.
 
 \item Using the previous information of received contexts, actions played, and observed rewards, the learner chooses an action $A_t\in\mathcal{A}$.
 
 \item The learner uses the context $O_{c_t}$ and performs a measurement on the unknown quantum state $\rho_{A_t}$ and receives a reward $X_t$ sampled according to the probability distribution~\eqref{eq:reward_distribution}. 
 \end{enumerate}
 
 We use the index $c_t$ to denote the observable $O_{c_t}$ received at round $t\in[T]$. The strategy of the learner is given by a set of (conditional) probability distributions $\pi = \lbrace \pi_t \rbrace_{t\in\mathbb{N}}$ (policy) on the action index set $[k]$ of the form
\begin{align}
\pi_t (a_t|a_1,x_1,c_1,...,a_{t-1},x_{t-1},c_{t-1}, c_t ),
\end{align}
defined for all valid combinations of actions, rewards, and contexts \newline $(a_1,x_1,c_1,...,a_{t-1},x_{t-1},c_{t-1} )$ up to time $t-1$.
Then, if we run the policy $\pi$ on the environment $\gamma$ over $T\in\mathbb{N}$ rounds, we can define a joint probability distribution over
the set of actions, rewards, and contexts as
\begin{align}\label{eq:reward_prob}
P_{\gamma,\mathcal{C},\pi} (a_1,X_1,C_1,...,a_T,X_T,C_T)  =  \int_{C_T} \int_{X_T}\cdots\int_{C_1} \int_{X_1}  \prod_{t=1}^T \pi_t (a_t|a_1,x_1,c_1,...,a_{t-1},x_{t-1},c_{t-1} ) \times \nonumber \\
\times P_{\mathcal{C}} (dc_1) P_{\rho_{a_1}} (d x_1 |a_1,c_1 )\cdots P_{\mathcal{C}} (dc_T) P_{\rho_{a_T}} (d x_T |a_T,c_T ). 
\end{align}
Thus, the conditioned expected value of reward $X_t$ is given by
\begin{align}
\EX_{\gamma,\mathcal{C},\pi} [ X_{t} | A_t = a , O_{c_t} = O_c ] = \Tr ( \rho_a O_c ),
\end{align}
where $\EX_{\gamma,\mathcal{C},\pi} $ denotes the expectation value over the probability distribution~\eqref{eq:reward_prob}. The goal of the learner is to maximize its expected cumulative reward $\sum_{t=1}^T \EX_{\gamma,\mathcal{C},\pi} \left[ X_t \right]$ or, equivalently, minimize the \textit{cumulative regret} as
\begin{align}
R_T(\mathcal{\gamma,C},\pi ) = \sum_{t=1}^T  \max_{\rho_i\in\gamma}\Tr(\rho_iO_{c_t}) - X_t,
\end{align}
and the \textit{cumulative expected regret}
\begin{align}\label{eq:regretqcb}
\EX_{\gamma,\mathcal{C},\pi}[R_T(\mathcal{\gamma,C},\pi )] = \sum_{t=1}^T \EX_{\gamma,\mathcal{C},\pi}\left[ \max_{\rho_i\in\gamma}\Tr(\rho_iO_{c_t}) - X_t \right].
\end{align}
For a given action $a\in \mathcal{A}$ and context $O_c \in \mathcal{C}$ the \textit{sub-optimality gap}, is defined as
\begin{align}\label{eq:suboptimalitygap}
     \Delta_{a,O_c}=\max_{i\in \mathcal{A}}\Tr(\rho_{i}O_c) - \Tr(\rho_{a}O_c).
\end{align}

\textbf{Remark.} For the QCB, we will just focus on the expected regret, although similar bounds with probabilistic guarantees can be proven for the regret.

Note that the learner could try to learn the distribution of contexts $P_\mathcal{C}$, however, this will not make a difference in minimizing the regret. The strategy of the learner has to be able to learn the relevant part of the unknown states $\lbrace \rho_a \rbrace_{a=1}^{k}$ that depend on the context set and at the same time balance the tradeoff between exploration and exploitation.
We note that it is straightforward to generalize the above setting to continuous sets of contexts $\mathcal{C}$. To do that, we need a well-defined probability distribution $P_\mathcal{C} (O) dO$ over the context set $\mathcal{C}$.

\subsection{Lower bound}\label{sec:lowerbound_qcb}

In this section, we derive a lower bound for the cumulative expected regret by finding, for any strategy, a QCB that is hard to learn. Our regret lower bound proof for the QCB model relies on a reduction to a classical multi-armed stochastic bandit given in Theorem 5.1 in~\cite{nonstochasticbandit}. Recall that we reviewed the multi-armed stochastic bandit setting in Section~\ref{sec:ucb_algorithm_mab}.

  The bandit setting constructed in~Theorem 5.1 in~\cite{nonstochasticbandit} is defined with an environment $\nu = (P_a : a\in [k] )$ for $k\geq 2$ such that $P_a$ are Bernoulli distributions for all $a\in[k]$ with outcomes $\lbrace l_1,l_2 \rbrace$. Then we set the distributions as follows: we choose an index $i\in[k]$ uniformly at random and assign $P_i (R=l_1) = \frac{1+\Delta}{2}$ for some $\Delta >0$ and $P_a (R = l_1 ) = \frac{1}{2}$ for $a\neq i$.  Thus, there is a unique best action corresponding to $a=i$. Then choosing $\Delta = \small \epsilon\smash{\sqrt{\frac{k}{T}}} \normalsize$ for some small positive constant $\epsilon$, for $T\geq k$, the expected regret for any strategy will scale as 
\begin{align}\label{eq:indep_reg}
R_T (\nu,\pi)= \Omega (\sqrt{kT} ).
\end{align}

Using the above bound, we can construct the following lower bound.

\begin{theorem}
Consider a quantum contextual bandit with underlying dimension $d = 2^n$ and $T\in\mathbb{N}$, context size $c\geq 1$ and $k\geq 2$ actions. Then, for any strategy $\pi$, there exists a context set $\mathcal{C}$, $|\mathcal{C}|=c$, a probability distribution over the context set $\mathcal{C}$, $P_\mathcal{C}$ and an environment $\gamma \subset \mathcal{S}_d$ such that for the QCB defined by $(\mathcal{C},\gamma )$, the expected cumulative regret will scale as
\begin{align}\label{eq:qcblowerbound}
\EX_{\gamma,\mathcal{C},\pi}[R_T(\mathcal{\gamma,C},\pi )] = \Omega \left(\sqrt{kT}\cdot\min \left\lbrace d,\sqrt{c} \right\rbrace\right),
\end{align}
for $T\geq  k\min \lbrace c,d^2 \rbrace$.
\end{theorem}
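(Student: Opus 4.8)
The plan is to prove the bound by a \emph{direct-sum reduction}: embed $D := \min\{c, d^2\}$ essentially independent copies of the hard classical $k$-armed stochastic bandit of Theorem~5.1 in~\cite{nonstochasticbandit} into a single QCB, attach each copy to one orthogonal context direction, and split the horizon evenly among them. First I would fix the contexts. Since $d = 2^n$, the $d^2-1$ non-identity Pauli strings $\{\sigma_s\}$ form an orthogonal traceless Hermitian basis, each with eigenvalues $\pm 1$; I would take $D' := \min\{c, d^2-1\} = \Theta(D)$ of them as the context set $\mathcal{C}$, and let $P_{\mathcal{C}}$ be uniform, so each context is played $\approx T/D'$ times. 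The crucial structural point is that measuring $\rho$ with a Pauli $\sigma_s$ produces a $\{+1,-1\}$ outcome with mean $\Tr(\rho\,\sigma_s)$, i.e.\ exactly an affine image of a Bernoulli reward, and that distinct Paulis are Hilbert--Schmidt orthogonal, so a measurement in direction $j$ carries no first-order information about the coordinate of any $\rho_a$ in a different direction. This orthogonality is what makes the $D'$ sub-bandits statistically decoupled.

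Next I would plant the hard instance. For each direction $j \in [D']$ I pick a best action $a_j^\star \in [k]$ (uniformly, as in the classical construction) and set the environment so that $\Tr(\rho_a\,\sigma_{s_j}) = \Delta\,\mathbbm{1}[a = a_j^\star]$, with a common gap $\Delta = \epsilon\sqrt{kD'/T}$ chosen so that each sub-bandit sees the classical critical gap $\epsilon\sqrt{k/(T/D')}$. Concretely $\rho_a = \frac{1}{d}(\mathbb{I} + \Delta \sum_{j : a_j^\star = a}\sigma_{s_j})$. The sub-optimality gap~\eqref{eq:suboptimalitygap} then equals $\Delta$ for every wrong arm in every direction, so the expected regret decomposes as a sum over directions. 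Applying the classical bound~\eqref{eq:indep_reg} to each of the $D'$ sub-bandits — valid because the per-direction horizon satisfies $T/D' \ge k$, i.e.\ $T \ge kD'$, which is exactly the hypothesis $T \ge k\min\{c,d^2\}$ — yields $D' \cdot \Omega(\sqrt{kT/D'}) = \Omega(\sqrt{kTD'}) = \Omega(\sqrt{kT}\,\min\{d,\sqrt c\})$, using $\sqrt{D} = \min\{d,\sqrt c\}$.

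To make the reduction rigorous I would then (i) use the divergence-decomposition and Bretagnolle--Huber type machinery to show that no policy can identify the hidden $a_j^\star$ faster than in the isolated classical problem, the Pauli orthogonality guaranteeing that rounds spent on direction $j'$ contribute nothing to distinguishing the planted bit of direction $j$; (ii) average over the random planting and invoke the probabilistic method to extract one fixed environment $\gamma$ attaining the bound; and (iii) treat separately the two branches of the minimum ($c \le d^2$ versus $c > d^2$) and the harmless off-by-one between $d^2$ and $d^2-1$.

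The main obstacle is ensuring each $\rho_a$ is a genuine density matrix while keeping $\Delta$ as large as the reduction demands. Positivity requires $\Delta\,\|\sum_{j: a_j^\star = a}\sigma_{s_j}\| \le 1$, and since a given action is best in $|S_a| \approx D'/k$ directions, this couples the attainable gap to the operator norm of a sum of many orthogonal Paulis. The crude triangle-inequality bound $\|\sum_{j\in S_a}\sigma_{s_j}\| \le |S_a|$ (and even the Hilbert--Schmidt bound $\sqrt{|S_a|\,d}$) is far too lossy, forcing a much stronger condition than $T \ge kD$. The delicate step is therefore to choose the assignment $j \mapsto a_j^\star$ and the Pauli labels so that the relevant sums have operator norm as small as $O(\sqrt{|S_a|})$ — for instance by loading each action with mutually anticommuting Paulis, for which $(\sum_{j\in S_a}\sigma_{s_j})^2 = |S_a|\,\mathbb{I}$ exactly, or more generally via a matrix-Rademacher concentration argument on random signs absorbed into the signs of the context observables. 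Reconciling this norm control with the stated threshold $T \ge k\min\{c,d^2\}$ uniformly across all regimes of $k,c,d$ (in particular the small-$k$, large-$D$ regime, where $|S_a|$ is large) is the technically demanding part of the argument, and I expect it to be where essentially all of the care is needed.
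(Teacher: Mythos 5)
Your construction is essentially the one the paper uses: take $c'=\min\{c,d^2-1\}$ mutually orthogonal Pauli strings as contexts, plant for each context a unique best action with gap $\Delta=\Theta(\sqrt{kc'/T})$ by adding $\frac{\Delta}{d}\sigma_{a,j}$ terms to the maximally mixed state, exploit Hilbert--Schmidt orthogonality to decouple the $c'$ sub-problems, and sum the classical $\Omega(\sqrt{kT'})$ bound of Theorem~5.1 of~\cite{nonstochasticbandit} over the groups. The only cosmetic difference is that you draw contexts i.i.d.\ uniformly while the paper schedules them in $c'$ consecutive blocks of length $T'=\lfloor T/c'\rfloor$; both work, the blocked version avoiding a binomial concentration step at the price of not literally being a fixed distribution $P_\mathcal{C}$.

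The positivity issue you single out is real, and you should know that the paper's proof simply does not address it: it asserts $\rho_a=\frac{\mathbb{I}}{d}+\sum_{j=1}^{n_a}\frac{\Delta}{d}\sigma_{a,j}\geq 0$ without comment, even though an action is loaded with $n_a\approx c'/k$ Paulis. Your proposed remedies cannot fully close this under the stated hypothesis. Since the $\sigma_{a,j}$ are orthogonal, $\|\sum_{j\in S_a}\sigma_{a,j}\|^2\geq \Tr\big[(\sum_j\sigma_{a,j})^2\big]/d=n_a$, so no choice of signs or labels does better than $\|M_a\|=\Theta(\sqrt{n_a})$ (anticommuting families achieve this exactly but have size at most $2\log_2 d+1$; random signing gives $O(\sqrt{n_a\log d})$). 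Positivity then forces $\Delta\leq \|M_a\|^{-1}\lesssim\sqrt{k/c'}$, whereas the reduction needs $\Delta=\epsilon\sqrt{kc'/T}$ for a fixed constant $\epsilon$; these are compatible only when $T\gtrsim \epsilon^2 c'^2$, which is strictly stronger than the stated threshold $T\geq kc'$ whenever $c'>k$. In the regime $c'\leq k$ there is no problem at all (assign each Pauli to a distinct action so $n_a\leq 1$), but for $c'\gg k$ either the theorem needs the additional hypothesis $T=\Omega(c'^2)$ or the construction must be modified; neither your sketch nor the paper's proof resolves this as written.
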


\begin{proof}
We use a similar proof technique to~\cite{abe2003reinforcement,chu} in order to analyze the regret by dividing the problem into subsets of independent rounds. We start dividing the $T$ rounds in $c' = \min \lbrace c,d^2-1 \rbrace$ groups of $T' = \lfloor \frac{T}{c'} \rfloor$ elements. We say that time step $t$ belongs to group $s$ if $\lfloor \frac{t}{T'} \rfloor = s$. We construct a context set $\mathcal{C}$ by picking a set of $c'$ distinct Pauli observables (which is possible since the maximum number of independent Pauli observables is $d^2-1 \geq c'$), so $ \mathcal{C} = \lbrace \sigma_i \rbrace_{i=1}^{c'}$. Recall that a Pauli observable is a $n$-fold tensor product of the 2 × 2 Pauli matrices; thus, the reward will be a binary outcome $r_t\in\lbrace -1,1\rbrace$. Then, the context distribution works as follows: at each group $s$ of rounds, the learner will receive a different context $\sigma_s\in\mathcal{C}$, so at group $s$ the learner only receives $\sigma_s$. 

We want to build an environment such that for each group of rounds $s\in[c' ]$, all probability distributions are uniform except one that is slightly perturbed. We associate each Pauli observable $\sigma_i$ to one unique action $a\in [k]$, and we do this association uniformly at random (each action can be associated with more than 1 Pauli observable). Then each action $a\in[k]$ will have $\lbrace{ \sigma_{a,1},...,\sigma_{a,n_a} \rbrace}$ associated Pauli observables and we can construct the following environment $\gamma = \lbrace \rho_a \rbrace_{a=1}^{k}$ where
\begin{equation}
\rho_a = \frac{I}{d} + \sum_{j=1}^{n_a} \frac{\Delta}{d}\sigma_{a,j},
\end{equation}
$n_a \in \left\lbrace 0,1,...,d^2 -1 \right\rbrace$, $\sum_{a=1}^k n_a = c'$ and $\Delta$ is some positive constant. For every group $s\in[c' ]$, the learner will receive a fixed context $\sigma_s \in \mathcal{C}$ and there will be a unique action $a'$ with $P_{\rho_a'}(1 | A_t =a',s) =\frac{1}{2}+\frac{\Delta}{2}$ (probability of obtaining $+1$) and the rest $a\neq a'$ will have $P_{\rho_a}(1 | a,s) = \frac{1}{2}$ (uniform distributions). Thus, using that the contexts are independent ($\Tr (\sigma_i\sigma_{j}= 0)$ for $i\neq j$), we can apply~\eqref{eq:indep_reg} independently to every group $s$ and we obtain a regret lower bound $ \Omega ( \sqrt{T'k} ) =\Omega ( \smash{\sqrt{\frac{Tk}{c'}}} ) $. Note that in order to apply~\eqref{eq:indep_reg}, we need $T' \geq k$ or equivalently $T\geq c'k$. Thus, summing all the $c' $ groups we obtain the total expected regret scales as,
\begin{align}
\EX_{\gamma,\mathcal{C},\pi}[R_T(\mathcal{\gamma,C},\pi )] = \Omega \left( c' \sqrt{\frac{Tk}{c'}} \right) = \Omega \left( \sqrt{kT}\cdot\min \left\lbrace d,\sqrt{c} \right\rbrace \right).
\end{align}
\end{proof}

\subsection{Algorithm}\label{sec:algorithm}

In this section, we review the linear contextual model of multi-armed stochastic bandits and one of the main classical strategies that can be used to minimize regret in this model and also in the QCB model.

\subsubsection{Linear disjoint single context bandits and QCB}\label{subsec: linear disjoint}
The classical setting that matches our problem is commonly referred to as linear contextual bandits~\cite{chu}, although it has received other names depending on the specific setting, such as linear disjoint model~\cite{contextual_new_article_recommendation} or associative bandits~\cite{aurer}. The setting that we are interested in uses discrete action sets, and optimal algorithms are based on upper confidence bounds (\textsf{UCB}). We use the contextual linear disjoint bandit model from~\cite{contextual_new_article_recommendation} where each action $a\in [k]$ has an associated unknown parameter $\theta_a\in\mathbb{R}^d$ and at each round $t$, the learner receives a context vector $c_{t,a}\in \mathbb{R}^d$ for each action. Then after selecting an action $a\in [k]$, the sampled reward is
\begin{align}\label{eq:contextualproblem}
 X_t = \langle \theta_a , c_{t,a} \rangle + \eta_t,
\end{align} 
where $\eta_t$ is some bounded subgaussian noise.

In order to map the above setting to the $d$-dimensional QCB model $(\gamma,\mathcal{C})$, it suffices to consider a vector parametrization, similarly done for the MAQB studied in Section~\ref{sec:connection_linearbandits}. We choose a set $\lbrace \sigma_i \rbrace_{i=1}^{d^2}$ of independent Hermitian matrices and parametrize any $\rho_a \in \gamma$ and $O_l \in\mathcal{C}$ as
\begin{align}\label{eq:qcbvecparametrization}
    \rho_a = \sum_{i = 1}^{d^2}\theta_{a,i} \sigma_i , \quad O_l = \sum_{i=1}^{d^2} c_{l,i} \sigma_i ,
\end{align}
where $\theta_{a,i} = \Tr (\rho_a \sigma_i ) $ and $c_{l,i} = \Tr (O_c \sigma_i) $ and we define the vectors $\theta_a = (\theta_{a,i})_{i=1}^{d^2} \in \mathbb{R}^{d^2}$ and  $c_l  = (c_{a,i})_{i=1}^{d^2}  \in \mathbb{R}^{d^2}$. Then we note that for the QCB model, the rewards will be given by~\eqref{eq:contextualproblem} with the restriction that since we only receive one observable at each round, then the context vector is constant among all actions. Thus, in our model, the rewards have the following expression
\begin{align}\label{eq:singlecontextbandit}
    X_t = \langle \theta_a , c_{t} \rangle + \eta_t.
\end{align}
We denote this classical model as \textit{linear disjoint single context bandits}. In order to make clear when the classical real vectors parametrize an action $\rho_a\in\gamma$ or context $O_l \in \mathcal{C}$, in~\eqref{eq:qcbvecparametrization} we will use the notation $\theta_{\rho_a}$ and $c_{O_l}$  respect to the standard Pauli basis.

\subsubsection{Contextual linear Upper Confidence Bound algorithm}

Now we discuss the main strategy for the linear disjoint single context model~\eqref{eq:singlecontextbandit} that is a variant of the \textsf{LinUCB} (linear upper confidence bound) algorithm~\cite{aurer,chu,lin1,lin2,lin3} studied in Section~\ref{sec:linucb}. We describe the procedure for selecting an action, and we leave for the next section a complete description of the algorithm for the QCB setting. In order to differentiate this algorithm from the studied \textsf{LinUCB}, we named it \textsf{CLinUCB}.

At each time step $t$, given the previous rewards $X_1,...,X_{t-1} \in \mathbb{R}$, selected actions $a_1,...,a_{t-1} \in [k]$ and observed contexts $c_1 ,..., c_t \in \mathbb{R}^d$, the \textsf{CLinUCB} algorithm builds the \textit{regularized least squares estimator} for each unknown parameter ${\theta}_a$ that have the following expression
\begin{align}
\tilde{{\theta}}_{t,a}  = V_{t,a}^{-1} \sum_{s=1}^{t-1} X_s c_{s} \mathbbm{1} \lbrace a_t = a \rbrace,
\end{align}
where $V_{t,a} = \mathbb{I} + \sum_{s=1}^{t-1} c_{s} c^\top_{s} \mathbbm{1} \lbrace a_t = a \rbrace$.
Then \textsf{CLinUCB} selects the following action according to
\begin{align}
a_{t+1} = \argmax_{a\in [k]} \langle \tilde{{\theta}}_{t,a}  , {c}_{t} \rangle + \alpha \sqrt{  \langle c^\top_{t} , {V^{-1}_{t}} {c}_t \rangle } ,
\end{align}
where $\alpha > 0$ is a constant that controls the width of the confidence region on the direction of $c_{t,a}$. The idea behind this selection is to use an overestimate of the unknown expected value using an upper confidence bound. This is the principle behind \textsf{UCB}~\cite{firstUCB}, which is the main algorithm that gives rise to this class of optimistic strategies.
The value of the constant $\alpha$ is chosen depending on the structure of the action set. In the next section, we will discuss the appropriate choice of $\alpha$ for our setting.
We note that this strategy can also be applied in an adversarial approach where the context is chosen by an adversary instead of sampled from some probability distribution. 

The above procedure is shown to be sufficient for practical applications~\cite{contextual_new_article_recommendation}, but the algorithms achieving the optimal regret bound are \textsf{SupLinRel}~\cite{aurer} and \textsf{BaseLinUCB}~\cite{chu}. They use a phase elimination technique that consists of each round playing only with actions that are highly rewarding, but still the main subroutine for selecting the actions is \textsf{LinUCB}. This technique is not the most practical for applications, but it was introduced in order to derive rigorous regret upper bounds. For these strategies, if we apply it to a $d$-dimensional QCB bandit $(\gamma,\mathcal{C})$ they achieve the almost optimal regret bound of
\begin{align}\label{eq:sublinear}
\EX_{\gamma,\mathcal{C},\pi}[R_T(\mathcal{\gamma,C},\pi )] = O \left( d\sqrt{kT\log^3(T^2\log (T))} \right).
\end{align}
The above bound comes from~\cite{chu} and it is adapted to our setting using the vector parametrization~\eqref{eq:qcbvecparametrization}. Our model uses a different unknown parameter ${\theta}_a \in \mathbb{R}^d$ for each action $a\in [k]$. This model can be easily adapted to settings where they assume only one unknown parameter shared by all actions if we enlarge the vector space and define ${\theta} = (\theta_1,..., \theta_k ) \in\mathbb{R}^{dk}$ as the unknown parameter.
Their regret analysis works under the normalization assumptions $\| {\theta} \|_2 \leq 1$, $\| c_t \|_2 \leq 1$ and the choice of $\alpha = \sqrt{\frac{1}{2} \ln (2T^2 k ) } $.  We note that it matches our lower bound~\eqref{eq:qcblowerbound} except for the logarithmic terms.
While we deal with arbitrarily large qubit systems, because of the choice of the families of Hamiltonians, the dimension the algorithm works in (which we define as $d_\text{eff}$) is much smaller than the dimension of the entire Hilbert space of these multi-qubit systems. This is discussed in detail in the next section. The previous argument regarding the upper and lower bounds holds also for $d_{\text{eff}}$. It is important to note that this effective dimension is less than the entire Hilbert space in these chosen families, but not in general sets of Hamiltonians.

\subsection{Low energy quantum state recommender system}\label{sec:lowenergy}

In this section, we describe how the QCB framework can be adapted for a recommender system for low-energy quantum states.
We consider a setting where the learner is given optimization problems in an online fashion and is able to encode these problems into Hamiltonians and also has access to a set of unknown preparations of (mixed) quantum states that they want to use in order to solve these optimization problems. The task is broken into several rounds; at every round, they receive an optimization problem and are required to choose the state that they will use for that problem. As a recommendation rule, we use the state with the lowest energy with respect to the Hamiltonian where the optimization problem is encoded. We denote this problem as the \textit{low energy quantum state recommendation problem}. We note that our model focuses on the recommendation following the mentioned rule. After selecting the state the learner will use it for the optimization problem (for example the initial ansatz state of a variational quantum eigensolver), but that is a separate task.
When a learner chooses an action, they must perform an energy measurement using the given Hamiltonian on the state corresponding to the chosen action. Then the measurement outcomes are used to model rewards, and their objective is to maximize the expected cumulative reward, i.e, the expectation on the sum of the measurement outcomes over all the rounds played.  Any Hamiltonian can be written as a linear combination of Pauli observables, and we use this property to perform measurements. Now by measuring each of these Pauli observables (since these measurements are conceivable, \cite{peruzzo2014variational}), and taking the appropriately weighted sum of the measurement outcomes, we can simulate such a measurement. The QCB framework naturally lends itself to this model, where the Hamiltonians are the contexts, and the set of states that can be prepared reliably serve as the actions.

In this section, we study some important families of Hamiltonians --- specifically, the Ising model and a generalized cluster model from~\cite{verresen2017one}, which are linear combinations of Pauli observables with nearest-neighbor interactions and for $n$ qubits can be written as 
\begin{align}\label{eq:ising}
&H_\text{ising}(h)=\sum_{i=1}^{n}(\sigma_{z,i} \sigma_{z, i+1}+h\sigma_{x,i}),\\
 \label{eq:cluster}   &H_\text{cluster}(j_1,j_2)=\sum_{i=1}^{n} (\sigma_{z,i}-j_i\sigma_{x,i} \sigma_{x, i+1}-j_2 \sigma_{x,i-1}\sigma_{z,i} \sigma_{x,i+1}),
\end{align}
where $h,j_1,j_2 \in \mathbb{R}$ and $\sigma_{l,i}$ is the $l\in{x,y,z}$ Pauli matrix acting in the $i\in[n]$ site. In the Ising model, $h$ corresponds to the external magnetic field. Specifically, we consider QCB with the following context sets

\begin{align}\label{eq:contexts}
    \mathcal{C}_{\text{Ising}} = \left\lbrace H_\text{ising}(h) : h\in \mathbb{R} \right\rbrace, \quad  \mathcal{C}_{\text{cluster}} = \left\lbrace  H_\text{cluster}(j_1,j_2) : j_1,j_2\in \mathbb{R} \right\rbrace .
\end{align}
   
Important families of Hamiltonians like the models discussed above show translation-invariance and are spanned by Pauli observables showing nearest-neighbor interactions, and as a result, span a low-dimensional subspace. We illustrate the scheme described above through the example of the Ising model contexts. The Pauli observables that need to be measured are $\lbrace \sigma_{x,i}\rbrace_{ i \in [n]}$ and $\lbrace \sigma_{z,i} \sigma_{z, i+1} \rbrace_{i \in [n]}$ . These observables have 2 possible measurement outcomes, -1 and 1, and by the reward distribution of a Pauli observable $M$ given by Born's rule~\eqref{eq:reward_distribution} on a quantum state $\rho$, the reward can be modeled as
\begin{align}
    X_{M,\rho}= 2\text{Bern}\left(\frac{\Tr(M\rho)+1}{2}\right)-1 , 
\end{align}
where $\text{Bern}(x) \in \lbrace 0 , 1 \rbrace$ is a random variable with Bernoulli distribution with mean $x\in [0,1]$. By performing such a measurement for all the Pauli observables and adding the rewards, the reward for $\mathcal{C}_\text{Ising}$ is
\begin{align}
    X_{\text{Ising}}=-h\sum_{i \in [n]}X_{\sigma_{x,i},\rho} -\sum_{i \in [n]}X_{\sigma_{z,i} \sigma_{z, i+1}',\rho},
\end{align}
where we took the negative of the sum of the measurements because we are interested in a recommender system for the lowest energy state. A similar formulation applies to the QCB with generalized cluster Hamiltonian contexts.\\
In the rest of this section, we illustrate a modified \textsf{CLinUCB} algorithm for the QCB setting. Then we implement this recommender system where the contexts are Hamiltonians belonging to the Ising and generalized cluster models~\eqref{eq:contexts}, and demonstrate our numerical analysis of the performance of the algorithm by studying the expected regret. We also demonstrate that, depending on the action set, the algorithm is able to approximately identify the phases of the context Hamiltonians. 

\subsubsection{Gram-Schmidt method}\label{subsec:gram}

Similarly to the tasks of shadow tomography~\cite{shadow_aaronson}, and classical shadows~\cite{huang2020predicting}, we do not need to reconstruct the full quantum states since the algorithm only has to predict the trace between the contexts and the unknown quantum states. Thus, the \textsf{LinUCB} algorithm has only to store the relevant part of the estimators for this computation. As the measurement statistics depend only on the coefficient corresponding to the Pauli observables spanning the observables in the context set, only those Pauli observables in the expansion of the estimators are relevant. This means that our algorithm can operate in a space with a smaller dimension than the entire space spanned by $n$ qubits, which has a dimension that is exponential in the number of qubits. As a side note, we would like to remark that one could hope to apply classical shadows to our problem since both tasks seem very similar. However, this is not the case since we need an online method that performs the best recommendation at each time step $t$.

In order to exploit this property to improve the space complexity of the \textsf{CLinUCB} algorithm, we use the Gram-Schmidt procedure in the following way. At any round, a basis for the vector parameterizations (as shown in~\eqref{eq:qcbvecparametrization}) of all the previously received contexts is stored. If the incoming vector parameterization of the context is not spanned by this basis, the component of the vector orthogonal to the space spanned by this set is found by a Gram-Schmidt orthonormalization-like process, and this component is added to the set, after normalization. Therefore, at any round, there will be a list of orthonormal vectors that span the subspace of all the vector parameterizations of the contexts  received so far, and the size of the list will be equal to the dimension of the subspace, which we call \textit{effective dimension}, i.e, \\
\begin{align}\label{eq:d_eff}
    d_{\text{eff},t} :=\text{dim}(\{O_{c_s} \in \mathcal{C}\}: s \in [t]).
\end{align}
From now on, we will omit the subscript for the time step $t$ and simply denote the effective dimension as $d_\text{eff}$. Instead of feeding the context vectors directly, for any incoming context vector, we construct $d_\text{eff}$-dimensional vectors, whose $i^{th}$ term is the inner product of the context vector and the $i^{th}$ basis vector. In case the incoming vector is not spanned by the basis, we first update the list by a Gram-Schmidt procedure (which will result in an addition of another orthonormal vector to the list, and an increase in $d_\text{eff}$ by 1), and then construct a $d_\text{eff}$-dimensional vector as described before.
 This vector is fed to the \textsf{CLinUCB} algorithm. The Gram-Schmidt procedure is stated in Algorithm~\ref{alg:Gram-Schmidt} and the modified \textsf{LinUCB} algorithm is stated explicitly in Algorithm~\ref{alg:LinUCBgram}.
The efficiency of this method is well illustrated in the case where all the contexts are local Hamiltonians. As an example, we discuss the case of generalised cluster Hamiltonians. Note that the space complexity of the standard QCB framework is $O(kd^2)$, where $k$ is the number of actions, and $d$ is the dimension of the vector parameterizations of the contexts. In the standard \textsf{CLinUCB} technique, the context vectors $c_t,t \in [T]$ would be $4^n$-dimensional, where $n$ is the number of qubits the Hamiltonian acts on, in which case the space complexity of the algorithm is $O(k4^{2n})$. 
In our studies, the contexts are Ising Hamiltonians and a generalised cluster Hamiltonian~\eqref{eq:contexts} with $d_\text{eff}\leq 2$ and $d_\text{eff}\leq 3$, respectively. Since the vectors fed into the modified \textsf{LinUCB} are $d_\text{eff}$-dimensional, the space complexity is $O(kd_\text{eff}^2)$, i.e, $O(4k)$ and $O(9k)$, respectively.

\begin{algorithm}[H]
	\caption{Gram-Schmidt Algorithm ($\text{Gram}(c,{V_{a}},b_a,\text{CBasis})$)}
	\label{alg:Gram-Schmidt}
	\begin{algorithmic}
	    \State Input $[c,\{V_{a}\}_{a \in \mathcal{A}}, \{b_a\}_{a \in \mathcal{A}},\text{CBasis}]$
		    \For {v in CBasis}  
    		    \State $c \leftarrow c - \langle v, c\rangle v$
    		    \State $v_\text{ct} = v_\text{ct} \oplus \langle v , c \rangle$
    	       \EndFor
            \If {$c \ne 0$}
                \State $v_\text{ct} = v_\text{ct} \oplus \|c\|_2$
                \State Add $c/\|c\|_2$ to CBasis
    		    \For {$a = 1, 2, \ldots, K$}
    		        \State Set $V_a = V_a \oplus I_{1}$, $b_a = b_a \oplus 0_{1}$
    		    \EndFor 
            \EndIf
            \State Return $\left[c', \{V_{a}\}_{a \in \mathcal{A}}, \{b_a\}_{a \in \mathcal{A}}, \text{CBasis}\right]$	
	\end{algorithmic} 
\end{algorithm}

\begin{algorithm}[H]
	\caption{CLinUCB with Gram-Schmidt} 
	\label{alg:LinUCBgram}
	\begin{algorithmic}[1]
        \State Input $\alpha \in \mathbb{R}$ 
        \State Set CBasis = $\left[ \,\, \right]$
        \State Set $V_a = 1,\, b_a = 0, \, \forall a \in \mathcal{A}$
		\For {$t = 1, 2, \ldots$}
            \State $\left[c'_{O_t}, \{V_a\}_{a \in \mathcal{A}}, \{b_a\}_{a \in \mathcal{A}}, \text{CBasis}\right] \leftarrow \text{Gram}(c_{O_t}, \{V_a\}_{a \in \mathcal{A}}, \{b_a\}_{a \in \mathcal{A}}, \text{CBasis})$
            \For {$a \in \mathcal{A}$}
                \State $\tilde{\theta}_{\rho_a} \leftarrow V_a^{-1} b_a$
                \State $p_{t,a} \leftarrow \tilde{\theta}_{\rho_a} c'_{O_t} + \alpha \sqrt{c_{O_t}^{'\top} V_a^{-1} c'_{O_t}}$
            \EndFor
    		\State Choose action $a_t = \arg\max_{a \in \mathcal{A}} p_{t,a}$
    		\State Measure state $\rho_{a_t}$ with $O_{c_t}$ and observe reward $X_{O_t}$
    		\State Set $V_{a_t} \leftarrow V_{a_t} + c'_{O_t} c_{O_t}^{'\top}$
    		\State Set $b_{a_t} \leftarrow b_{a_t} + X_{O_t} c'_{O_t}$
		\EndFor
	\end{algorithmic} 
\end{algorithm}

\subsubsection{Phase classifier}

In order to implement the numerical simulations, we need to choose the environments for the QCB with context sets $\mathcal{C}_\text{ising}$ and $\mathcal{C}_\text{cluster}$. Elements of both context sets are parameterized by tunable parameters. We study the performance of the recommender system by choosing a context probability distribution that is uniform (over a chosen finite interval on the real line) on these parameters.  Then we chose the actions 
 as ground states of Hamiltonians that corresponded to the limiting cases (in terms of the parameters) of these models. In order to study the performance of our strategy apart from the expected regret~\eqref{eq:regretqcb}, we want to observe how the actions are chosen. For every action, we maintained a set, which contained all the Hamiltonians for which that action was chosen. We observed that almost all the elements in each of these sets belonged to the same phase of the Hamiltonian models. 
 
 In order to study the performance of the algorithm in this respect, we define the \textit{classifier regret} as
\begin{align}
   \text{ClassifierRegret}_{T} (\mathcal{\gamma,C},\pi )=\sum_{t=1}^{T}\mathbbm{1}\left[a_t \neq a_\text{optimal,t} \right],
\end{align}
where $a_\text{optimal,t}=\argmax_{a \in [k]}\tr{O_{t}\rho_a}$, and $O_t \in \mathcal{C}$ is the context observable received in $t^\text{th}$ round.
Note that the above classifier regret is not guaranteed to be sublinear like expected regret is ~\eqref{eq:sublinear} for the \textsf{LinUCB} strategy. This can be understood intuitively: consider a scenario where the bandit picks an action with a small sub-optimality gap~\eqref{eq:suboptimalitygap}; then the linear regret will increase by a very small amount, the classifier regret will increase by one unit, as all misclassifications have equal contribution to regret. These, however, are theoretic worst-case scenarios, and this classifier regret is useful to study the performance of the algorithm in practice in our settings.

\subsubsection{Numerical simulations}\label{sec:simulations}

 Before we move into the specific cases, we note the importance of the choice of $\alpha$ in Algorithm~\ref{alg:LinUCBgram}. While the theoretical analysis of the \textsf{LinUCB} algorithm depends on the choice of $\alpha$, in practice, one can tune this value to observe a better performance. We primarily use  the $\alpha$ described in~\cite{lattimore_banditalgorithm_book} (Chapter 19) given by 
\begin{align}
    \alpha_t= m + \sqrt{2\log\left( 
\frac{1}{\delta} \right)+d\log\left(1+\frac{tL^2}{d} \right)}.
\end{align}
Here, $L$ and $m$ are upper bounds on the 2-norm of the action vectors and unknown parameter, respectively,  $d$ is the dimension, and $\delta$ is once more a probability of failure.

Finally, while we study the performance of our algorithm in our simulations with estimates of expected regret and expected classifier regret, it is important to note that in an experimental setup, the learner will only be able to measure the cumulative reward at every round. However, since these are simulations, we are able to study the regret as well, as they are standard metrics to gauge the performance of the algorithms. In the next subsection, we discuss our simulations of the QCB bandit $(\gamma,\mathcal{C}_\text{cluster})$ model and the QCB bandit $(\gamma,\mathcal{C}_\text{Ising})$.

We study the performance of the recommender system for the QCB bandit $(\gamma,\mathcal{C}_\text{cluster})$, where the generalised cluster Hamiltonians  ~\cite{verresen2017one}, act on 10 qubits and 100 qubits respectively. We observe that the performance of the algorithm is not affected by the number of qubits, as the effective dimension of the context set remains unchanged,i.e, $d_\text{eff}=3$. We study the expected regret and classifier regret for these two cases, and illustrate the system's performance in finding the phases of the generalised cluster Hamiltonians. This model was also studied in \cite{caro2021generalization}, where they designed a quantum convolutional neural network to classify quantum states across phase transitions. We chose 5 actions corresponding to approximate ground states of Hamiltonians that are the limiting cases of the generalised cluster model; i.e, generalised cluster Hamiltonians with parameters $j_1,j_2$ in ~\eqref{eq:cluster}, $j_1,j_2\rightarrow \lbrace 0,0 \rbrace,\lbrace 0,\infty \rbrace,\lbrace \infty,0 \rbrace,\lbrace 0,-\infty \rbrace$and $\lbrace 0,-\infty \rbrace$. Note that these methods of approximating ground states are only for simulation purposes.

Initially, a steep growth in regret is observed, followed by a sudden slower pace. Looking closely, in the plot~\ref{fig:clusterregret}, we find that the regret indeed continues to grow, albeit at a slower pace. This was explained by observing that the sub-optimality gap of the second-best action is quite small in comparison to the sub-optimality gaps of the rest of the actions. 
Initially, the \textsf{LinUCB} algorithm does not have enough information about the unknown parameters and has to play all actions, resulting in an exploration phase. However, at some point, the bandit recognizes the “bad" actions, and plays either the best action or the action with a small sub-optimality gap most of the time - this is when the bandit has begun to balance exploration and exploitation. This is illustrated by observing the growth of the regret before and after the first 50 rounds in the insets of Fig. \ref{fig:clusterregret}.
\begin{figure}
    \centering
    \includegraphics[scale=.5]{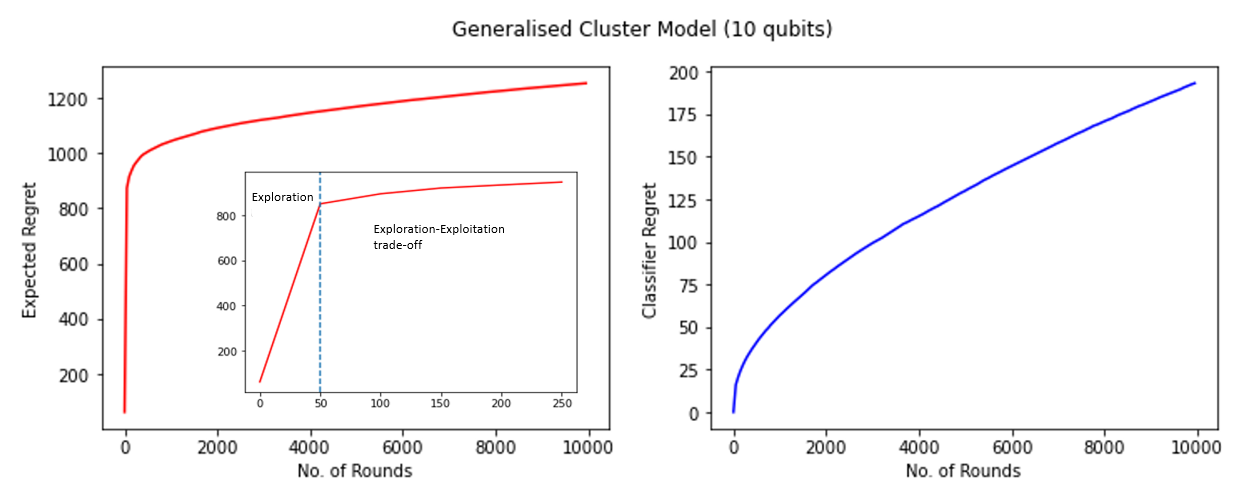}
    \includegraphics[scale=.6]{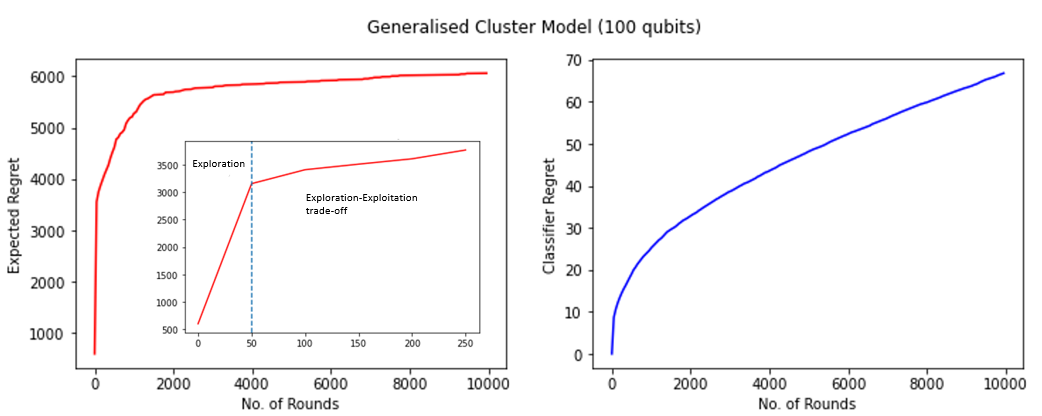}
    \caption{Plots for Regret and Classifier regret for QCB bandit $(\gamma,\mathcal{C})$, where the Hamiltonians in $\mathcal{C}$ are a specific form of generalized cluster models acting on 10 and 100 qubits, respectively. The performance is not very different since $d_\text{eff}=3$ ~\eqref{eq:d_eff} for both cases. The action set is chosen to be approximately. ground states of some generalized cluster Hamiltonians.}
    \label{fig:clusterregret}
\end{figure}
At the beginning of this subsection, we mentioned that the recommendation system picks the same action for context Hamiltonians belonging to the same phase.
 We illustrate this in Figure \ref{fig:phaseCluster}. In the scatter plot, when a context generalized cluster Hamiltonian is received, a dot is plotted with the x-axis and y-axis coordinates corresponding to its parameters $j_1,j_2$, respectively. Depending on the action picked by the algorithm, we associate a color with the dot. The resultant plot is similar (but not exact) to the phase diagram of the generalized cluster Hamiltonian. We stress that the algorithm that we are using for the below plots is the same as the one for Figure~\ref{fig:clusterregret}, and we are just plotting a visual representation in terms of the phases. We do not expect that this method can compete with other phase classifiers where the algorithms were designed for such tasks as the one in~\cite{Huang2021ProvablyEM}.
\begin{figure}
    \centering
    \includegraphics[scale=.5]{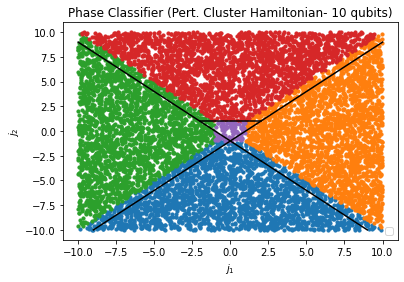}
    \includegraphics[scale=.5]{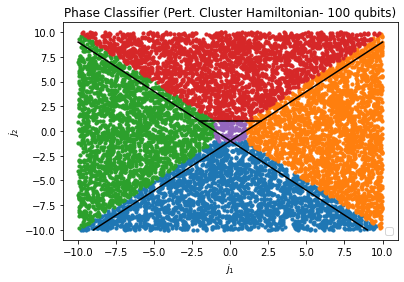}
    \caption{These plots illustrate how the recommender system identifies the phases of the generalized cluster Hamiltonian. The x and y-axis represent the coupling coefficients of the generalized cluster Hamiltonian received as context. Like the Ising Model simulations, we associate a color to each action. For any context, $H_\text{cluster}(j_1,j_2)$  corresponding to any of the T rounds, one of these actions is picked by the algorithm. We plot the corresponding colored dot (blue for the ground state of $H_\text{cluster}(-\infty,0)$, orange for $H_\text{cluster}(0,\infty)$, red for $H_\text{cluster}(\infty,0)$, green for $H_\text{cluster}(0,-\infty)$ and purple for $H_\text{cluster}(0,0)$) at the appropriate coordinates, for rounds that follow after the bandit has “learned" the actions, i.e, the growth in regret has slowed down. The lines in black represent the actual phase diagram.} 
    \label{fig:phaseCluster}
\end{figure}

We also study the performance of the recommender system for the QCB bandit $(\gamma,\mathcal{C}_\text{ising})$, where the Ising Hamiltonians act on 10 qubits and 100 qubits, respectively. We observe that the performance of the algorithm is not affected by the number of qubits, as the effective dimension of the context set remains unchanged. We study the expected regret and classifier regret for these two cases and illustrate the system's performance in finding the phases of the Ising Model.  The action set corresponds to the ground state of the 3 limiting cases of the Ising Model, i.e., Ising Hamiltonians for parameter $h$ in~\eqref{eq:ising}, $h=0,h\rightarrow -\infty$ and $h \rightarrow \infty$ .
\begin{figure}
    \centering
    \includegraphics[scale=.6]{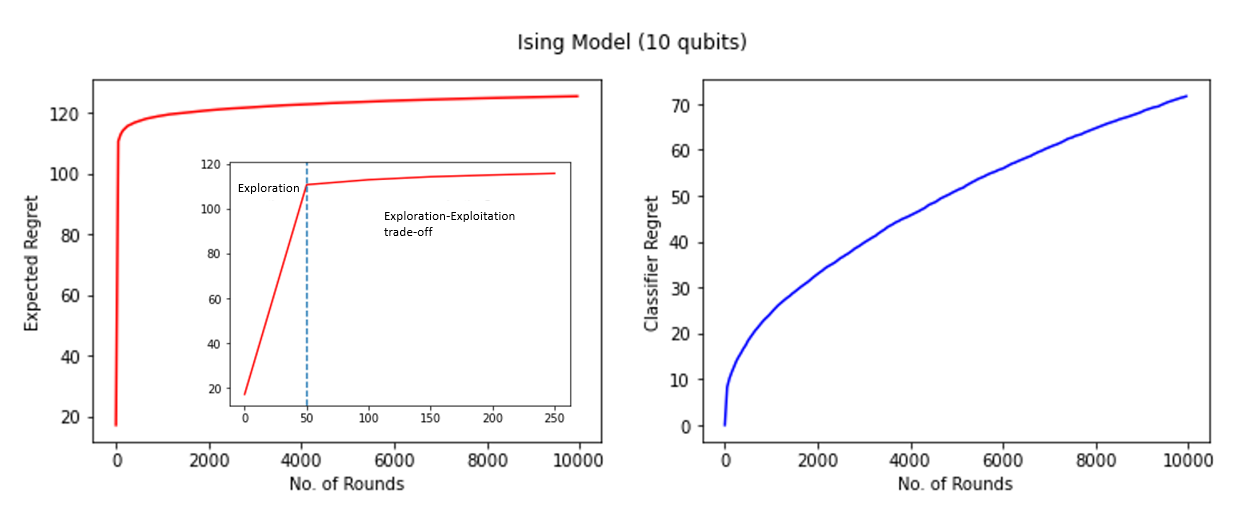}
    \includegraphics[scale=.6]{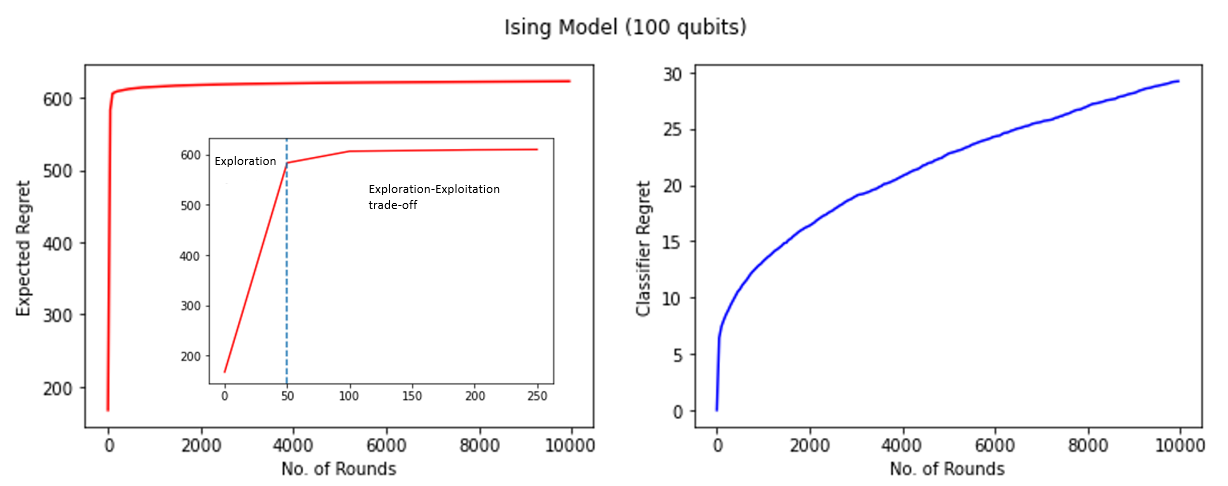}
    \caption{Plots for Regret and Classifier regret for QCB bandit $(\gamma,\mathcal{C})$, where the Hamiltonians in $\mathcal{C}$ are Ising Hamiltonians acting on 10 and 100 qubits, respectively. The performance is not very different since $d_\text{eff}=2$ ~\eqref{eq:d_eff} for both cases. The action set is chosen to be approximately the ground states of some Ising Hamiltonians.}
    \label{fig:iusingregret}
\end{figure}

Once more, like the generalized cluster Model simulations, we observe a distinct “exploration" stage, followed by an “exploration-exploitation trade-off" stage, which we again plot separately. This is illustrated by observing the growth of the regret before and after the first 50 rounds, shown in the insets of Fig. \ref{fig:iusingregret}.

At the beginning of this subsection, we mentioned that the recommendation system picks the same action for context Hamiltonians belonging to the same phase.
 We illustrate this in Figure \ref{fig:phaseIsing}. In the scatter plot, when a context Ising Hamiltonian is received, a dot is plotted with the x-axis coordinate corresponding to its parameter. Depending on the action picked by the algorithm, we associate a color with the dot. The resultant plot is very similar to that of the phase diagram of an Ising model. The Ising model is known to have phase transitions at $h=-1,1$, resulting in 3 phases, $(-\infty,-1],[-1,1]$ and $[1,\infty)$, and we observe that different actions were played in each of these ranges.
\begin{figure}[H]
    \centering
    \includegraphics[scale=.5]{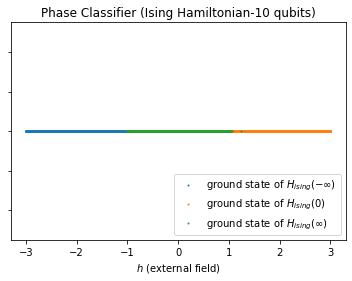}
    \includegraphics[scale=.5]{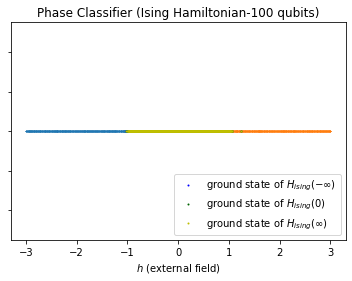}
    \caption{These plots illustrate how the recommender system identifies the phases of the Ising Hamiltonian. The x-axis represents the external field coefficient of the Ising Hamiltonian received as context. The blue, green, or yellow mark indicates that the algorithm plays the $1^\text{st}$,$2^\text{nd}$, or $3^\text{rd}$ action. We plot the corresponding colored dot at the appropriate coordinates, for rounds that follow after the bandit has “learned" the actions, i.e, the growth in regret has slowed down.}
    \label{fig:phaseIsing}
\end{figure}

\chapter{Outlook and open problems}\label{ch:open_problems}

In this thesis, we have proposed a new quantum learning framework that captures the exploration-exploitation tradeoff in reinforcement learning. Specifically, we introduced the multi-armed quantum bandit, a model that belongs to the class of multi-armed stochastic bandits, but where reward sampling is inspired by the shot noise of quantum mechanics, as described by Born's rule.

We began by establishing fundamental bounds on the scaling of regret across different environments and action sets, highlighting the difficulty of the corresponding learning tasks. For the case of mixed-state environments and discrete action sets, our upper and lower bounds match. In the case of general bandits with action sets containing all rank-1 projectors, our bounds match in terms of the number of rounds $T$, but not in the dimension $d$.

Given that learning a density matrix requires estimating $d^2$ parameters~\cite{gross2010quantum,haah2016sample}, and that the regret for continuous action sets scales accordingly, we propose the following open problem. This is motivated by the classical setting, where linear bandits exhibit regret scaling as $R_T = \Theta(d\sqrt{T})$.

\begin{itemize}
\item \textit{\textbf{Open Problem 1.} Can we tighten the lower bound for general bandits~\ref{def:general_maqb} with mixed-state environments and continuous action sets to $R_T = \Omega(d^2\sqrt{T})$? Moreover, if we know that the rank of the environment is $r\leq d$, can we have a rank-dependent lower bound $R_T = \Omega(rd\sqrt{T})$?}
\end{itemize}

Interestingly, for general bandits with pure-state environments, we show that in the qubit case the regret scales as $R_T = \Theta(\mathrm{polylog}(T))$. From the perspective of quantum state tomography, the algorithm for the PSMAQB problem introduces new techniques for the adaptive setting, including the use of the median-of-means online least squares estimator and the principle of optimism in the face of uncertainty. We expect these techniques to be applicable in other quantum learning scenarios that demand adaptiveness, especially when quantum states act as resources and must be minimally disturbed during the learning process.

Our algorithm achieves polylogarithmic regret, representing an exponential improvement over all previously known algorithms for quantum tomography, which could only reach similar fidelities by incurring linear regret. At a fundamental level, our algorithm goes beyond traditional tomography methods by showing that it is sufficient to project near the state to learn it optimally. From a classical bandit perspective, it is notable that the problem of learning pure quantum states provides the first non-trivial example of a linear bandit with a continuous action set that achieves polylogarithmic regret. This model builds a novel connection between quantum state tomography and linear stochastic bandits.

However, while we derived a dimension-dependent lower bound in Section~\ref{sec:lower_bound}, the algorithm we employed (Algorithm~\ref{alg:linucb_vn_var}) only works for qubits. This is because we formulated the algorithm over the unit sphere action set and then mapped it to the Bloch sphere—a correspondence that does not hold in higher dimensions. This leads to another open problem:

\begin{itemize}
\item \textit{\textbf{Open Problem 2.} Can we find an explicit algorithm for the PSMAQB problem in $d$ dimensions such that the regret scales as $R_T = O(d \mathrm{polylog}(T))$?}
\end{itemize}

Note that we conjecture the linear scaling in $d$ (as opposed to the $d^2$ scaling in Open Problem 1) because pure states have rank $r=1$.

We also leave as an open question of whether other quantum state tomography algorithms—particularly those designed to minimize disturbance, such as those based on weak measurements—can achieve sublinear regret, and in particular, polylogarithmic regret. We believe that adaptiveness is a key ingredient in any such algorithm.

\begin{itemize}
\item \textit{\textbf{Open Problem 3.} Are there other quantum state tomography algorithms that achieve $R_T = O(\mathrm{polylog}(T))$?}
\end{itemize}

In order to study the PSMAQB problem, we first examined a classical toy model of linear stochastic bandits with {vanishing noise}, i.e., a noise parameter that decreases linearly with the noise as we select actions close to the unknown environment. This led us to develop new proof techniques based on tightly controlling the eigenvalues of the {design matrix}. This technique allows to control the regret at any time, in contrast with the well-established regret bounds based on the elliptical potential lemma.   In~\cite{banerjee2023exploration}, the authors showed (under mild assumptions) that any strategy minimizing the regret in linear bandits with continuous and smooth action sets must satisfy $\lambda_{\min}(V_t) = \Omega(\sqrt{t})$ for {constant} noise models. However, this result does not apply to our setting due to the non-constant nature of the noise.

Our strategy instead achieves the relation $\lambda_{\min}(V_t) = \Omega(\sqrt{\lambda_{\max}(V_t)})$ (see Theorem~\ref{th:main_eigenvalues}), which is a more general, noise-independent condition. Based on this, we propose the following conjecture.

\begin{conjecture}[Informal]
Consider a linear stochastic bandit with $\mathcal{A} \subset \mathbb{R}^d$ a smooth, continuous action set and an arbitrary bounded noise model. Then, any strategy that minimizes the regret must satisfy  
\[
\lambda_{\min}(V_t) = \Omega(\sqrt{\lambda_{\max}(V_t)}).
\]
\end{conjecture}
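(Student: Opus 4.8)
The plan is to prove the conjecture by the contrapositive, via a self-consistent \emph{differential inequality} that ties the two extremal eigenvalues together and, crucially, is independent of the noise model. The starting observation is that a regret-minimizing policy keeps the instantaneous regret small, so on a smooth (strongly convex) action set such as $\mathbb{S}^{d-1}$ the played actions must cluster near the optimal action $A^*=\argmax_{a}\langle\theta,a\rangle$. I would therefore write each action as $A_s=\cos(\phi_s)A^*+\sin(\phi_s)u_s$ with $u_s\perp A^*$ a tangent unit vector and $\phi_s$ a small angle, and let $w_s=1/\hat\sigma_s^2$ denote the (possibly noise-dependent) weight. Since $\langle\theta,A^*\rangle-\langle\theta,A_s\rangle\asymp\phi_s^2$, the instantaneous regret of round $s$ is $\Theta(\phi_s^2)$; meanwhile the weighted design matrix decomposes into a dominant $A^*$-component and a tangent block, yielding the increment estimates $\lambda_{\max}(V_{s})-\lambda_{\max}(V_{s-1})\asymp w_s$ and, for the relevant least-explored tangent direction, a contribution of order $w_s\phi_s^2$ to $\lambda_{\min}$.

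The heart of the argument is to pin down the balance $\phi_s^2\asymp\beta_s/\lambda_{\min}(V_{s-1})$ forced by optimality. One direction is the easy one: the posterior uncertainty of $\theta$ along the least-explored eigendirection $v_{\min}$ is $\Theta(\sqrt{\beta_s/\lambda_{\min}(V_{s-1})})$ by Lemma~\ref{lem:confidence_region_weighted}, so a policy that deviates by $\phi_s\gg\sqrt{\beta_s/\lambda_{\min}}$ is over-exploring and pays avoidable regret. The reverse inequality --- that an optimal policy \emph{must} deviate by at least this much --- is the subtle part: deviating by less barely increases $\lambda_{\min}$, so the uncertainty does not shrink and future rounds remain costly, which I would formalize as a ``stuck'' argument showing that any policy keeping $\lambda_{\min}$ below $c\sqrt{\lambda_{\max}}$ accumulates regret strictly faster than the optimal rate. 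Substituting the balance into the increments, the weight $w_s$ cancels in the ratio, giving the noise-independent relation $\mathrm{d}\lambda_{\min}/\mathrm{d}\lambda_{\max}\asymp\beta/\lambda_{\min}$; integrating this via Theorem~\ref{th:dif_equation_ineq} yields $\lambda_{\min}^2=\Omega(\beta\,\lambda_{\max})$, which is exactly the claimed $\lambda_{\min}(V_t)=\Omega(\sqrt{\lambda_{\max}(V_t)})$.

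I expect the main obstacle to be twofold. First, the phrase ``any strategy that minimizes the regret'' must be given a precise, workable meaning (optimal rate up to polylogarithmic factors, in a minimax or instance-dependent sense), and the lower-bound direction of the balance must be established \emph{without} information-theoretic tools: the usual divergence-decomposition route of Lemma~\ref{divergence} becomes vacuous here because, for vanishing-variance noise, the per-round divergences can be driven to zero --- the very phenomenon that breaks the $\sqrt{T}$ barrier in Section~\ref{sec:lower_bound} and around~\eqref{eq:constant_d12}. This forces a purely geometric and variational proof of optimality of the balance. Second, the case $d>2$ introduces the difficulty that $\lambda_{\min}$ tracks the \emph{single worst} of $d-1$ tangent directions, so one must argue that an optimal policy spreads its exploration evenly rather than starving one direction; controlling this requires the smoothness hypothesis to be made quantitative, and is, I believe, the reason the statement --- already open for constant noise in the unweighted case of~\cite{banerjee2023exploration} --- remains a conjecture.
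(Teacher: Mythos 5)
First, be aware that the paper does not prove this statement: it is stated explicitly as an open conjecture in Chapter~\ref{ch:open_problems}, supported only by two pieces of evidence --- the paper's own action selection rule provably satisfies the relation (Theorem~\ref{th:main_eigenvalues}), and~\cite{banerjee2023exploration} establishes $\lambda_{\min}(V_t)=\Omega(\sqrt{t})$ for regret-optimal strategies under \emph{constant} noise, where $\lambda_{\max}(V_t)=\Theta(t)$ makes the two statements coincide. So there is no proof in the paper to compare yours against; a complete argument here would be a new result, not a reconstruction.

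Your plan is a reasonable road map, but it has a genuine gap exactly where you flag it: the ``reverse inequality'' or ``stuck'' argument. Everything downstream of the balance $\phi_s^2\asymp\beta_s/\lambda_{\min}(V_{s-1})$ --- the cancellation of $w_s$, the differential inequality, the integration via Theorem~\ref{th:dif_equation_ineq} --- is mechanical, but that balance is the entire content of the conjecture, and you name it without supplying an argument. Two concrete reasons it is hard. First, ``regret-minimizing'' constrains only the cumulative sum, not per-round behaviour: a policy can front-load its exploration into an $o(T)$ prefix (playing far from $A^*$ so that $\lambda_{\min}$ grows fast early) and then starve exploration, so that at intermediate times the ratio $\lambda_{\min}(V_t)/\sqrt{\lambda_{\max}(V_t)}$ decays while total regret remains optimal up to constants; ruling this out requires showing that under-exploration at time $t$ is \emph{necessarily} punished later, which is precisely the information-theoretic lower bound that becomes vacuous when the noise can vanish (cf.~the discussion around~\eqref{eq:constant_d12}). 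Second, even for the paper's own hand-designed rule the eigenvalue relation is not a soft consequence of increment estimates: the proof of Theorem~\ref{th:main_eigenvalues} is a case-split induction that relies on playing the $d-1$ tangent eigendirections in symmetric pairs, and an arbitrary optimal policy has no such structure, so your heuristic increment $w_s\phi_s^2$ for ``the relevant least-explored direction'' can fail when exploration concentrates on a strict subset of tangent directions --- the $d>2$ issue you mention is a second genuine obstruction, not a technicality. In short, your proposal correctly identifies the shape a proof would take and correctly locates the missing ingredients, but it does not supply them, and neither does the paper.
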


Interestingly, during the writing of this thesis, the work~\cite{zhan2025regularizedonlinenewtonmethod} extended our setting to more general action sets, specifically in the context of convex bandits. In their framework, the reward mean is no longer given by an inner product but by a convex function applied to the action.

However, we currently lack matching lower bounds for linear bandits with non-constant noise, which leads to the following open problem.

\begin{itemize}
    \item \textbf{Open Problem 5.} \textit{Can we extend minimax lower bound techniques for linear stochastic bandits to the non-constant noise setting?}
\end{itemize}

In order to solve this problem, one has to go beyond the lower bound proofs based on analysing the KL divergences of the probability distributions of close environments.

To explore potential applications of the MAQB framework, we considered scenarios where minimizing disturbance is essential. One such application is {quantum work extraction} from unknown quantum sources. We examined both discrete and continuous battery models and we are the first to frame {sequential work extraction} as an instance of the exploration-exploitation tradeoff, quantifying cumulative dissipation in the finite-copy regime.

Even without full knowledge of the source state, our results show that it is possible to extract work {near-optimally}, with cumulative disturbance scaling polylogarithmically. However, our analysis focused only on {free energy} as a resource, which motivates the following problem.

\begin{itemize}
    \item \textbf{Open Problem 6.} \textit{Can the MAQB framework be extended to extract other quantum resources—such as entanglement, coherence, or magic—within their respective resource theories?}
\end{itemize}

Finally, we investigated a novel application of MAQB to {recommender systems for quantum data}. We introduced a rigorous bandit-based framework for quantum data recommendation, implemented using the theory of linear contextual bandits. Our analysis shows that upper and lower bounds on performance match asymptotically, and simulations confirm the efficiency of our algorithm. We further demonstrated that such systems can be used to identify {phases of Hamiltonians}.

Our current model assumes that the expected reward is a linear function of the context and the unknown state. While this is suitable for tasks like recommending low-energy quantum states (where the reward is derived from a measurement outcome), more complex tasks may require {non-linear reward functions}. Such models, known as {structured bandits}, have been studied in the classical setting~\cite{lattimore2014bounded,combes2017minimal,russo2013eluder}, and extending them to the quantum case is a natural next step.

More generally, we modeled the environment as a set of unknown, stationary quantum states. However, in realistic quantum technologies, environments may evolve dynamically due to Hamiltonian evolution or external noise. In such scenarios, recommender systems might be necessary to efficiently select quantum resources or protocols under resource constraints. This leads to our final open problem.

\begin{itemize}
    \item \textbf{Open Problem 7.} \textit{Can we go beyond the QCB setting and design recommender systems for more general quantum environments, with dynamic behaviors and practical relevance?}
\end{itemize}

\bibliography{thesis_qbandits}
\bibliographystyle{ultimate}

\appendix

\chapter{Auxiliary Lemmas of Chapter~\ref{ch:lowerbounds}}\label{ap:chap3}

\section{Proof of Lemma~\ref{lem:radon}}
\begin{proof}
Let $\{\ket{i}\}_{i=1}^{\tdim}$ be a basis of $\widetilde{\mathcal{H}}$ We will first show that $P_{O,\sigma}$ is dominated by $P_{O,\id}$ for all $\sigma$. Indeed, let $A\in \Sigma$. Assume that $P_{O,\mathbb{I}}(A)=0$. This gives us 
\begin{equation}
  \Tr [O(A)\mathbbm{1}]=\Tr [O(A)]=0,
\end{equation} 
and, because $O(A) \ge 0$, we also have $O(A)=0$.
Therefore,
\begin{equation}
  P_{O,\sigma}(A)=\Tr[O(A)\sigma]=0.
\end{equation}
Hence, for any $\ket{i},\ket{j}$ from the basis we can introduce the Radon-Nikodym derivatives $f_{\ket{i}\bra{j}}$, which will satisfy~\eqref{eq:radon_derivatives}. Then, for any $\sigma\in \End \widetilde{\mathcal{H}}$ we can define
\begin{equation}
    f_\sigma(\omega)=\sum_{i,j=1}^{\tdim} \bra{i}\sigma\ket{j}f_{\ket{i}\bra{j}}(\omega).
\end{equation}
These $f_\sigma$ are linear in $\sigma$ by definition. A direct calculation shows that they also satisfy~\eqref{eq:radon_derivatives}.
\end{proof}

\section{Proof of Lemma~\ref{lem:magic} }

\begin{proof}
  Let $\ket{\psi}\in \thil$ and define
\begin{equation}
  g_{\psi}(\omega) = \bra{\psi}V(\omega)\ket{\psi}.
\end{equation}
By the given condition, for any $A\in\Sigma$
\begin{equation}
  \int_A g_{\psi}(\omega)d\mu(\omega)= \bra{\psi}\int_A V(\omega)d\mu(\omega) \ket{\psi} \ge 0.
\end{equation}
It follows that $g_{\psi}(\omega)\ge 0$ $\mu$-almost everywhere. Let 
\begin{equation}
  Z_\psi=\{\omega\in \Omega \text{ s.t. } g_\psi(\omega) <0\}
\end{equation}
We have shown that $\mu(Z_\psi)=0$. Next, since  $\thil$ is finite-dimensional, it is separable. Therefore, there exists a countable set $\{\ket{\psi_k}\}_k$ dense in $\thil$. Let
\begin{equation}
  Z=\bigcup_k Z_{\psi_k}.
\end{equation}
We have that $\mu(Z)=0$. Finally, let $\omega\in \Omega \setminus Z$ and $\ket{\psi}\in \thil$. Because $\{\ket{\psi_k}\}$ is dense in $\thil$, there exists a sequence $\{\ket{\psi_{k_i}}\}$ converging to $\ket{\psi}$. Then,
\begin{equation}
  0 \le \bra{\psi_{k_i}}V(\omega)\ket{\psi_{k_i}} \xrightarrow{i \to \infty } \bra{\psi}V(\omega)\ket{\psi}.
\end{equation}
Overall, we get that
\begin{equation}
\forall \omega \in \Omega \setminus Z,\ \ket{\psi}\in \thil\quad \bra{\psi}V(\omega)\ket{\psi} \ge 0.
\end{equation}
Together with $\mu(Z)=0$, this gives the desired result.
\end{proof}

\chapter{Proof of Theorem~\ref{th:lowerbound_sphere}}\label{ap:lower_bound_noise}

\section{Minimax lower bound for linear bandit $\mathcal{A} = \mathbb{S}^{d-1}, \theta\in\mathbb{S}^{d-1}$ and constant noise}\label{ap:lowerbound}

In this section, we prove that any small perturbation in the reward noise of a linear stochastic bandit on the sphere (action set and environment) yields a $\Omega(\sqrt{T})$ lower bound on the regret. We study a minimax lower bound for the following reward model
\begin{align}\label{eq:noisy_classical_quantum_reward2}
    X_t =  \langle \theta, A_t \rangle +  \mathcal{N}\left(0, \sigma^2 \right) 
\end{align}
where $A_t,\theta\in\mathbb{S}^{d-1}$ and $\sigma >0$. For this setting using that $\max_{A\in\mathbb{S}^{d-1}} \langle \theta , A\rangle =1$ we have that the regret is given by
\begin{align}\label{eq:logisticregret}
    \text{Regret} (T) = \sum_{t=1}^T 1 - \langle \theta , A_t \rangle.
\end{align}
Our lower bound proof is an adaptation of the lower bound given in~\cite{abeille2021instance} that was introduced for logistic bandits and this provides a lower bound for linear bandits with $\mathcal{A} =  \mathbb{S}^{d-1},\theta\in \mathbb{S}^{d-1}$. For completeness, in  Lemma~\ref{lem:regret_logistic_bandits} we reproduce the main steps and we note that in our setting some parts simplify.  In order to state the result we define $\lbrace e_i \rbrace_{i=1}^d $ the standard basis in $\mathbb{R}^d$ and the flip operator. Given $\theta\in\mathbb{R}$, $i\in [d]$ the flip operator is defined as
\begin{align}
    \text{Flip}_i (\theta ) := (\theta_1,\theta_2,...,-\theta_i,...,\theta_d ).
\end{align}
\begin{lemma}\label{lem:regret_logistic_bandits}
Given a stochastic linear bandit with action set $\mathcal{A} =  \mathbb{S}^{d-1} = \lbrace x\in\mathbb{R}^d : \| x \|_2 =1 \rbrace$, a policy $\pi$, a reference parameter $\theta_* = \|\theta_* \|_2e_1\in\mathbb{R}^d$ and the set of parameters
\begin{align}
    \Xi = \left\lbrace \theta_* + \epsilon\sum_{i=2}^d v_ie_i , \quad v_i \in \lbrace -1 , 1 \rbrace \right\rbrace,
\end{align}
where $0< \epsilon\leq \frac{1}{d-1}$.
Then for any policy $\pi$ the average of the expected regret over the set can be lower bounded as
\begin{align}
   \frac{1}{|\Xi |} \sum_{\theta\in\Xi} \EX_{\theta,\pi}\left[ R_T ( \mathbb{S}^{d-1},\theta,\pi ) \right] \geq  T \epsilon^2  \left(\frac{d}{8} - \frac{\sqrt{d}}{4}\sqrt{\frac{1}{|\Xi |}\sum_{\theta\in\Xi}\sum_{i=2}^d D \left( \mathbb{P}_{\theta,\pi} , \mathbb{P}_{\textup{Flip}_i (\theta),\pi} \right)} \right),
\end{align}
where $\mathbb{P}_{\theta,\pi},\mathbb{P}_{\textup{Flip}_i (\theta),\pi}$ are the probability measures of actions and rewards obtained by the interaction of the policy $\pi$ with the environments $\theta$,$\textup{Flip}_i (\theta)$ respectively. Moreover, $\epsilon$ can be fixed such that for all $\theta\in\Xi$ then $\theta\in\mathbb{S}^{d-1}$.
\end{lemma}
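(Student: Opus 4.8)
The plan is to run an Assouad-type argument over the hypercube of sign patterns $v\in\{-1,1\}^{d-1}$ indexing the coordinates $2,\dots,d$. First I would fix the free norm $\|\theta_*\|_2$ so that every element of $\Xi$ lies on the sphere. Since $\|\theta\|_2^2=\|\theta_*\|_2^2+(d-1)\epsilon^2$ for all $\theta\in\Xi$, choosing $\|\theta_*\|_2=\sqrt{1-(d-1)\epsilon^2}$ gives $\|\theta\|_2=1$, and this is well defined because $0<\epsilon\le 1/(d-1)$ forces $(d-1)\epsilon^2\le 1/(d-1)\le 1$ for $d\ge 2$. This establishes the final sentence of the statement and, crucially, lets me use the closed form of the regret~\eqref{eq:logisticregret}.

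\textbf{Reduction to coordinate-wise sign tests.} With $\theta,A_t\in\mathbb{S}^{d-1}$ the instantaneous regret is $1-\langle\theta,A_t\rangle=\tfrac12\|\theta-A_t\|_2^2$. Discarding the first coordinate and keeping $i\ge 2$ (where $\theta_i=\epsilon v_i$), I would lower bound
\begin{align}
1-\langle\theta,A_t\rangle\ \ge\ \frac12\sum_{i=2}^d(\epsilon v_i-A_{t,i})^2\ \ge\ \frac{\epsilon^2}{2}\sum_{i=2}^d\mathbbm{1}\{\hat v_{t,i}\neq v_i\},
\end{align}
where $\hat v_{t,i}:=\mathrm{sign}(A_{t,i})$, the last step holding because a sign mismatch forces $|\epsilon v_i-A_{t,i}|\ge\epsilon$. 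As $\hat v_{t,i}$ is a measurable function of the interaction history, it is a legitimate test for $v_i$. Summing over $t$ and taking expectations,
\begin{align}
\EX_{\theta,\pi}[R_T(\mathbb{S}^{d-1},\theta,\pi)]\ \ge\ \frac{\epsilon^2}{2}\sum_{t=1}^T\sum_{i=2}^d\mathbb{P}_{\theta,\pi}(\hat v_{t,i}\neq v_i).
\end{align}

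\textbf{Pairing and the testing bound.} Next I average over $\Xi$ and, for each fixed $i$, pair every $\theta$ with $v_i=+1$ to $\mathrm{Flip}_i(\theta)$. For such a pair the two error probabilities combine into $\mathbb{P}_{\theta,\pi}(B)+\mathbb{P}_{\mathrm{Flip}_i(\theta),\pi}(B^c)$ with $B=\{\hat v_{t,i}=-1\}$, which is at least $1-\mathrm{TV}(\mathbb{P}_{\theta,\pi},\mathbb{P}_{\mathrm{Flip}_i(\theta),\pi})$ by the variational characterization of total variation, evaluated on the full-horizon measures (valid by data processing, since $B$ is history-measurable). Tracking the bijection and the symmetry of $\mathrm{TV}$ yields, for every $t$,
\begin{align}
\frac{1}{|\Xi|}\sum_{\theta\in\Xi}\mathbb{P}_{\theta,\pi}(\hat v_{t,i}\neq v_i)\ \ge\ \frac12\cdot\frac{1}{|\Xi|}\sum_{\theta\in\Xi}\big(1-\mathrm{TV}(\mathbb{P}_{\theta,\pi},\mathbb{P}_{\mathrm{Flip}_i(\theta),\pi})\big).
\end{align}
Since the right-hand side is $t$-independent, summing over $t$ and $i$ gives the intermediate bound $\tfrac{T\epsilon^2}{4}\big[(d-1)-\sum_{i=2}^d\tfrac{1}{|\Xi|}\sum_\theta\mathrm{TV}(\mathbb{P}_{\theta,\pi},\mathbb{P}_{\mathrm{Flip}_i(\theta),\pi})\big]$.

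\textbf{Pinsker, Cauchy--Schwarz, and constants.} Finally I apply Pinsker, $\mathrm{TV}(P,Q)\le\sqrt{D(P\|Q)/2}$, and then Cauchy--Schwarz across the $(d-1)|\Xi|$ terms to obtain $\tfrac{1}{|\Xi|}\sum_{i,\theta}\mathrm{TV}\le\sqrt{\tfrac{d-1}{2}}\,\sqrt{\bar D}$, where $\bar D:=\tfrac{1}{|\Xi|}\sum_{\theta}\sum_{i=2}^d D(\mathbb{P}_{\theta,\pi},\mathbb{P}_{\mathrm{Flip}_i(\theta),\pi})$. The bound becomes $T\epsilon^2\big[\tfrac{d-1}{4}-\tfrac14\sqrt{\tfrac{d-1}{2}}\sqrt{\bar D}\big]$, and the claimed form follows from the elementary estimates $\tfrac{d-1}{4}\ge\tfrac{d}{8}$ and $\tfrac14\sqrt{\tfrac{d-1}{2}}=\tfrac{\sqrt{d-1}}{4\sqrt2}\le\tfrac{\sqrt d}{4}$, both valid for $d\ge 2$. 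The main obstacle is the Assouad bookkeeping of the pairing step, namely correctly carrying the factor $\tfrac12$, using the symmetry of $\mathrm{TV}$, and justifying the comparison of the test $\hat v_{t,i}$ under the two full-trajectory measures; once the reduction to coordinate-wise sign tests is secured, the remaining inequalities are routine and the Gaussian noise model plays no role here (it enters only later, when $\bar D$ is bounded to optimize $\epsilon$ in Theorem~\ref{th:lowerbound_sphere}).
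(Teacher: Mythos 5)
Your proof is correct and lands on the claimed bound (in fact a slightly stronger intermediate form, $T\epsilon^2\bigl[\tfrac{d-1}{4}-\tfrac{1}{4}\sqrt{\tfrac{d-1}{2}}\sqrt{\bar D}\bigr]$, before you weaken the constants). It shares the overall architecture of the paper's proof — the same set $\Xi$, the same normalization $\|\theta_*\|_2=\sqrt{1-(d-1)\epsilon^2}$, the flip pairing, the bound $P(B)+Q(B^c)\ge 1-\mathrm{TV}(P,Q)$ followed by Pinsker and Cauchy--Schwarz — but the reduction step is genuinely different. The paper works with a single cumulative event per coordinate, $B_i(\theta)=\{[\theta-\theta_*/\|\theta_*\|]_i\,[\sum_{t=1}^T \theta_*/\|\theta_*\|-A_t]_i\ge 0\}$, and extracts the factor $T\epsilon^2$ by expanding $\sum_t\|\theta-A_t\|^2$ around the normalized reference direction and discarding the cross term, which is nonnegative precisely under $B_i(\theta)$; it then needs the identification $B_i(\mathrm{Flip}_i(\theta))=B_i(\theta)^c$ to run the two-point testing argument. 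You instead use per-round sign tests $\hat v_{t,i}=\mathrm{sign}(A_{t,i})$ together with the pointwise inequality $(\epsilon v_i-A_{t,i})^2\ge\epsilon^2\,\mathbbm{1}\{\mathrm{sign}(A_{t,i})\ne v_i\}$, which is more elementary: it avoids the expansion around $\theta_*/\|\theta_*\|$ and the cross-term bookkeeping entirely, at the cost of invoking the total-variation bound once per round rather than once per coordinate (harmless, since the bound is $t$-independent and the sum over $t$ just produces the factor $T$). The only point worth stating explicitly in a write-up is the convention for $\mathrm{sign}(0)$, which does not affect the inequality either way.
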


\begin{proof}
First, we note that for any $\theta\in\Xi$ they have constant norm since
\begin{align}
    \| \theta \| = \sqrt{\| \theta_* \|^2 + (d-1)\epsilon^2}, \quad \text{for all }\theta\in\Xi .
\end{align}
Thus, we fix 
\begin{align}
    \| \theta_* \| = \sqrt{1-(d-1)\epsilon^2}
\end{align}
and we have that 
\begin{align}
    \text{if} \quad \theta \in \Xi \Rightarrow \theta \in\mathcal{S}_{d-1}.
\end{align}
Note that when choosing $\epsilon$ we will have the following restriction
\begin{align}
    \epsilon^2 \leq \frac{1}{d-1}.
\end{align}
From the expression of the regret we have
\begin{align}
    \EX_{\theta,\pi}\left[ R_T ( \mathbb{S}^{d-1},\theta,\pi ) \right] &= \frac{1}{2}\EX_{\theta,\pi}\left[ \sum_{t=1}^T \| \theta - A_t \|^2 \right] = \frac{1}{2}\EX_{\theta,\pi}\left[ \sum_{i=1}^d \sum_{t=1}^T \left[ \theta - A_t \right]_i^2 \right] \\ &\geq \frac{1}{2}\EX_{\theta,\pi}\left[ \sum_{i=1}^d \sum_{t=1}^T \left[ \theta - A_t \right]_i^2 \mathbbm{1}\lbrace B_i(\theta) \rbrace \right] = (\text{a})
\end{align}
where the event $B_i(\theta)$ is defined as
\begin{align}
    B_i(\theta) := \left\lbrace \left[\theta -  \frac{\theta_*}{\| \theta_* \|} \right]_i \left[ \sum_{t=1}^T\frac{\theta_*}{\| \theta_* \|} - A_t\right]_i \geq 0 \right\rbrace,
\end{align}
for $i=1,2,...,d$. Then we want to compare with the reference environment $\theta_*$, and we can introduce it as
\begin{align}
    (\text{a}) &= \frac{1}{2}\EX_{\theta,\pi}\left[ \sum_{i=1}^d \sum_{t=1}^T \left[ \theta - \frac{\theta_*}{\|\theta_*\|}+ \frac{\theta_*}{\|\theta_*\|} - A_t \right]_i^2 \mathbbm{1}\lbrace B_i(\theta) \rbrace \right] \\ &\geq \frac{1}{2}\EX_{\theta,\pi}\left[ \sum_{i=1}^d \sum_{t=1}^T \left[ \theta - \frac{\theta_*}{\|\theta_*\|} \right]_i^2 \mathbbm{1}\lbrace B_i(\theta) \rbrace \right] = \frac{T}{2}\EX_{\theta,\pi}\left[ \sum_{i=1}^d \left[ \theta - \frac{\theta_*}{\|\theta_*\|} \right]_i^2 \mathbbm{1}\lbrace B_i(\theta) \rbrace \right] 
\end{align}
where the last inequality follows from the identity $(x+y)^2 = x^2+2xy+y^2$ with $x = \theta - \frac{\theta_*}{\| \theta_* \|} $, $y = \frac{\theta_*}{\|\theta_*\|} - A_t $ and the definition of the event $B_i(\theta)$. Using the above computation and that $\theta\in\Xi$,
\begin{align}\label{eq:regret_with_prob}
      \EX_{\theta,\pi}\left[ R_T ( \mathbb{S}^{d-1},\theta,\pi ) \right] &\geq  \frac{T}{2} \EX_{\theta,\pi}\left[ \sum_{i=2}^d \left[ \theta - \frac{\theta_*}{\|\theta_*\|} \right]_i^2 \mathbbm{1}\lbrace B_i(\theta) \rbrace \right] \\
    &\geq \frac{T\epsilon^2}{2} \sum_{i=2}^d  \EX_{\theta,\pi}\left[ \mathbbm{1}\lbrace B_i(\theta) \rbrace \right] = \frac{T\epsilon^2}{2} \sum_{i=2}^d \mathbb{P}_{\theta,\pi} \left( B_i(\theta) \right). 
\end{align}
Now we want to apply the averaging hammer technique that was introduced in~\cite[Chapter 24, Theorem 24.1]{lattimore_banditalgorithm_book}. Let $i\in \lbrace 2,...,d \rbrace$, fix $\theta\in\Xi$ then using that $[ \theta_* ]_i = 0$ it is easy to check that
\begin{align}
    B_i (\text{Flip}_i (\theta )) = B^C_i (\theta).
\end{align}
Then using the definition of the total variation distance and Pinsker inequality we have
\begin{align}
    \mathbb{P}_{\text{Flip}_i(\theta ),\pi} (B_i (\text{Flip}_i(\theta) ) &\geq \mathbb{P}_{\theta,\pi} ( B_i (\text{Flip}_i ( \theta )  ) ) - D_{TV} (\mathbb{P}_{\theta,\pi} ,\mathbb{P}_{\text{Flip}_i(\theta )}) \\
    &\geq \mathbb{P}_{\theta,\pi} ( A^C_i (\theta)) - \sqrt{\frac{1}{2}D (\mathbb{P}_{\theta,\pi} ,\mathbb{P}_{\text{Flip}_i(\theta )}),\pi}. 
\end{align}
Then applying the above results we have
\begin{align}
  \frac{1}{|\Xi |} \sum_{\theta\in\Xi} \sum_{i=2}^d \mathbb{P}_{\theta,\pi} \left( B_i(\theta) \right) &\geq \frac{1}{2|\Xi |}\sum_{i=2}^d \sum_{\theta\in\Xi} \left( \mathbb{P}_{\theta}(B_i (\theta ) )  +\mathbb{P}_{\text{Flip}_i(\theta )} (B_i (\text{Flip}_i(\theta) )\right) \\
  & \geq \frac{1}{2|\Xi |}\sum_{i=2}^d \sum_{\theta\in\Xi} 1 - \sqrt{\frac{1}{2}D(\mathbb{P}_{\theta,\pi} ,\mathbb{P}_{\text{Flip}_i(\theta )})}.
\end{align}
Using Jensen inequality, Cauchy-Schwartz, and the fact that $d\geq1$ we have
\begin{align}\label{eq:sum_prob}
    \frac{1}{|\Xi |} \sum_{\theta\in\Xi} \sum_{i=2}^d \mathbb{P}_{\theta,\pi} \left( B_i(\theta) \right) &\geq \frac{d}{4} - \frac{\sqrt{d}}{2}\sqrt{\frac{1}{|\Xi |}\sum_{\theta\in\Xi}\sum_{i=2}^d D \left( \mathbb{P}_{\theta,\pi} , \mathbb{P}_{\text{Flip}_i (\theta),\pi} \right)}
\end{align}
And the result follows from combining~\eqref{eq:regret_with_prob} and~\eqref{eq:sum_prob}.
\end{proof}

Before proving the main theorem we need a formula for the KL divergence between two normal distributions. Given $\mathcal{N}(\mu_1,\sigma^2_1)$ and $\mathcal{N}(\mu_2,\sigma^2_2)$ it follows from a direct calculation that
\begin{align}\label{eq:kl_gaussians}
    D \left(\mathcal{N}(\mu_1,\sigma^2_1),\mathcal{N}(\mu_2,\sigma^2_2) \right) = \frac{1}{2}\left( \log\left( \frac{\sigma^2_2}{\sigma^2_1}\right)+ \frac{\sigma^2_1}{\sigma^2_2}-1 \right) + \frac{(\mu_1-\mu_2)^2}{\sigma_2^2}.
\end{align}
With these results, we are ready to prove the main theorem.
\begin{theorem}
Given a stochastic linear bandit with action set $\mathcal{A} =  \mathbb{S}^{d-1} = \lbrace x\in\mathbb{R}^d: \| x \|_2 =1 \rbrace$ and reward model given by~\eqref{eq:noisy_classical_quantum_reward2}, then there exists an unknown parameter $\theta\in\mathbb{R}^d$ such that $\|\theta \|_2 = 1$ and for any policy $\pi$
\begin{align}
    \EX_{\theta,\pi}\left[ R_T ( \mathbb{S}^{d-1},\theta,\pi ) \right]  \geq \frac{1}{100}{\sigma}d\sqrt{T}, 
\end{align}
for $T\geq \frac{1}{6400}d^2{\sigma}^2$.
\end{theorem}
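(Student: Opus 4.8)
The plan is to invoke Lemma~\ref{lem:regret_logistic_bandits}, which reduces the problem to controlling the averaged symmetrized information $\bar D := \frac{1}{|\Xi|}\sum_{\theta\in\Xi}\sum_{i=2}^d D(\mathbb{P}_{\theta,\pi}\|\mathbb{P}_{\text{Flip}_i(\theta),\pi})$. First I would compute the per-round contributions. Since both environments $\theta$ and $\text{Flip}_i(\theta)$ induce Gaussian rewards $\mathcal{N}(\langle\theta,A_t\rangle,\sigma^2)$ of equal variance, the KL formula~\eqref{eq:kl_gaussians} collapses to $\frac{\langle\theta-\text{Flip}_i(\theta),A_t\rangle^2}{2\sigma^2}$, and because $\theta-\text{Flip}_i(\theta)=2\epsilon v_i e_i$ this equals $\frac{2\epsilon^2 A_{t,i}^2}{\sigma^2}$, where $A_{t,i}$ is the $i$-th coordinate of the unit-norm action. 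Applying the divergence decomposition (the continuous-action-set chain rule underlying Lemma~\ref{divergence}, justified via the quantum-to-classical POVM construction in the proof of Lemma~\ref{lemma:D1/2ineq}) then gives $\sum_{i=2}^d D(\mathbb{P}_{\theta,\pi}\|\mathbb{P}_{\text{Flip}_i(\theta),\pi})=\frac{2\epsilon^2}{\sigma^2}\,\mathbb{E}_{\theta,\pi}\!\big[\sum_{t=1}^T\sum_{i=2}^d A_{t,i}^2\big]$.

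The crux is that the naive estimate $\sum_{i\ge2}A_{t,i}^2\le1$ only yields a bound constant in $T$; to recover the $\sqrt T$ scaling I would tie the information back to the regret itself. Expanding $\sum_{i\ge2}(\theta_i-A_{t,i})^2\le\|\theta-A_t\|_2^2=2(1-\langle\theta,A_t\rangle)$ and using $\theta_i=\epsilon v_i$ for $i\ge2$, a short quadratic estimate yields the self-bounding inequality $\sum_{i\ge2}A_{t,i}^2\le 4(1-\langle\theta,A_t\rangle)+4(d-1)\epsilon^2$. Summing over $t$ and averaging over $\Xi$ produces $\bar D\le\frac{8\epsilon^2}{\sigma^2}\big(\bar R+T(d-1)\epsilon^2\big)$, where $\bar R:=\frac1{|\Xi|}\sum_\theta\mathbb{E}_{\theta,\pi}[R_T]$. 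Substituting into Lemma~\ref{lem:regret_logistic_bandits} gives the self-referential inequality $\bar R\ge T\epsilon^2\Big(\frac d8-\frac{\sqrt{2d}\,\epsilon}{\sigma}\sqrt{\bar R+T(d-1)\epsilon^2}\Big)$, which is a quadratic in $\sqrt{\bar R}$ that I would solve to extract a lower bound on $\bar R$.

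To finish, I would set $\epsilon^2=\min\{\tfrac{\sigma}{c\sqrt T},\tfrac1{2(d-1)}\}$ for a universal constant $c$, which keeps $\theta_*=\sqrt{1-(d-1)\epsilon^2}\,e_1$ well defined and places every $\theta\in\Xi$ on the unit sphere. In the regime $T\ge\frac{d^2\sigma^2}{6400}$ this choice makes the subtracted term at most half of $\frac d8$, so the term under the square root is dominated by its deterministic part $T(d-1)\epsilon^2$, and the quadratic reduces to $\bar R\ge c'\,T\epsilon^2 d=\Omega(\sigma d\sqrt T)$; tracking the constants through the self-referential step then delivers the stated $\frac1{100}\sigma d\sqrt T$. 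Since the maximum over the finite family $\Xi\subset\mathbb{S}^{d-1}$ dominates the average, there exists $\theta\in\mathbb{S}^{d-1}$ attaining the bound, which is exactly the conclusion.

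The main obstacle I anticipate is the self-bounding step together with the bookkeeping required to solve the self-referential quadratic while respecting the hard constraint $\epsilon^2\le\frac1{d-1}$ forced by the unit-sphere prior. Because the noise variance $\sigma^2$ here is constant, there is no help from a vanishing-variance mechanism as in the logistic case of~\cite{abeille2021instance}: all of the $\sqrt T$ behavior must come from the geometric fact that any information about the orthogonal directions $e_2,\dots,e_d$ can only be gathered by playing actions with nonzero components there, which necessarily costs regret. Matching the explicit numerical constant $\frac1{100}$ at the threshold $T=\frac{d^2\sigma^2}{6400}$, where the cap on $\epsilon$ becomes active, is the delicate part and is where I would concentrate the calculation.
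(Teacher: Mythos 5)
Your proposal follows the paper's proof essentially step for step: it invokes Lemma~\ref{lem:regret_logistic_bandits}, applies the divergence decomposition with the Gaussian KL formula~\eqref{eq:kl_gaussians}, uses the same self-bounding inequality relating $\sum_{i\ge 2}A_{t,i}^2$ to the instantaneous regret plus $(d-1)\epsilon^2$, and makes the same choice $\epsilon^2\sim\sigma/\sqrt{T}$ under the same threshold on $T$. The only difference is in how the self-reference is closed: where you would solve the resulting quadratic in $\sqrt{\bar R}$ directly, the paper first assumes $\EX_{\theta,\pi}[R_T(\mathbb{S}^{d-1},\theta,\pi)]\le d\sigma\sqrt{T}$ for every $\theta\in\Xi$ (the statement being trivial otherwise) and substitutes that bound into the information term, turning the final step into a one-line numerical substitution rather than a quadratic solve.
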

\begin{proof}
First we define
\begin{align}
    \mu_{t,\theta} := \langle \theta , A_t \rangle,
\end{align}
and suppose that for any $\theta \in \Xi$,
\begin{align}\label{eq:regret_hypothesis}
   \EX_{\theta,\pi}\left[ R_T ( \mathbb{S}^{d-1},\theta,\pi ) \right] \leq C d \sigma \sqrt{T},
\end{align}
for some constant $C >0$ that without loss of generality we set to $C=1$. We see from Lemma~\ref{lem:regret_logistic_bandits} that we need to compute $D \left( \mathbb{P}_{\theta,\pi} , \mathbb{P}_{\text{Flip}_i (\theta),\pi} \right)$. From de divergence decomposition Lemma in~\cite{lattimore_banditalgorithm_book}[Chapter 24, Theorem 24.1] we have
\begin{align}
    D \left( \mathbb{P}_{\theta,\pi} , \mathbb{P}_{\text{Flip}_i (\theta),\pi} \right) = \EX_\theta \left[\sum_{t=1}^T D\left( \mathcal{N}(\mu_{t,\theta},\sigma^2),\mathcal{N}({\mu}_{t,\text{Flip}_i(\theta)},\sigma^2) \right) \right].
\end{align}
Using~\eqref{eq:kl_gaussians} we have
\begin{align}
   &D\left( \mathcal{N}(\mu_{t,\theta},\sigma^2),\mathcal{N}({\mu}_{t,\text{Flip}_i(\theta)},\sigma^2) \right) = \frac{\langle A_t,\theta - \text{Flip}_i(\theta)\rangle ^2}{\sigma^2}.
\end{align}
Thus, we have that 
\begin{align}
   D\left( \mathcal{N}(\mu_{t,\theta},\sigma^2),\mathcal{N}({\mu}_{t,\text{Flip}_i(\theta)},\sigma^2) \right) \leq \frac{4\epsilon^2}{\sigma^2} \left[A_t \right]^2_i .
\end{align}
Inserting the above into $D\left( \mathbb{P}_{\theta,\pi} , \mathbb{P}_{\text{Flip}_i (\theta),\pi} \right)$ we have
\begin{align}
   &\sum_{i=2}^d D_{KL} \left( \mathbb{P}_\theta , \mathbb{P}_{\text{Flip}_i (\theta),\pi} \right) \leq \frac{4\epsilon^2}{\sigma^2} \sum_{i=2}^d \EX_{\theta,\pi}\left[ \sum_{t=1}^T \left[\theta - A_t +\theta \right]^2_i \right]  \\ 
   &\leq \frac{8\epsilon^2}{\sigma^2} \EX_{\theta,\pi} \left[\sum_{t=1}^T \sum_{i=2}^d \left[\theta - A_t \right]^2_i + \left[ \theta \right]^2_i \right] 
   \leq \frac{8\epsilon^2}{\sigma^2} \EX_{\theta,\pi} \left[\sum_{t=1}^T \sum_{i=1}^d \left[\theta - A_t \right]^2_i + \sum_{t=1}^T\sum_{i=2}^d\left[ \theta \right]^2_i \right] \\
   &= \frac{8\epsilon^2}{\sigma^2}\left(2\EX_{\theta,\pi}\left[ R_T ( \mathbb{S}^{d-1},\theta,\pi ) \right]+T(d-1)\epsilon^2\right) \leq \frac{16\epsilon^2\sigma d\sqrt{T}}{\sigma^2} + \frac{8d\epsilon^4 T }{\sigma^2}
\end{align}
where we have used that $d\geq 1$, $\theta\in\Xi$, and~\eqref{eq:regret_hypothesis}. Thus, inserting the above into the result of Lemma~\ref{lem:regret_logistic_bandits},
\begin{align}
    \frac{1}{|\Xi |} \sum_{\theta\in\Xi} \EX_{\theta,\pi}\left[ R_T ( \mathbb{S}^{d-1},\theta,\pi ) \right]\geq  dT \epsilon^2  \left(\frac{1}{8} - \frac{1}{4}\sqrt{\frac{16\sigma\epsilon^2 \sqrt{T}+ 8\epsilon^4 T}{\sigma^2}} \right) 
\end{align}
Finally, choosing $\epsilon^2 = \frac{\sigma}{80\sqrt{T}}$ we have
\begin{align}
    \frac{1}{|\Xi |} \sum_{\theta\in\Xi} \EX_{\theta,\pi}\left[ R_T ( \mathbb{S}^{d-1},\theta,\pi ) \right]&\geq \sigma d\sqrt{T}\left( \frac{1}{8} - \frac{1}{4}\sqrt{\frac{16}{80}+\frac{8}{6400}} \right) \\
    &\geq \frac{1}{100}\sigma d\sqrt{T}.
\end{align}
We note that in order to have $\epsilon^2 \leq \frac{1}{d-1}$ we need $T\geq \frac{1}{6400}d^2\sigma^2$. Note that we proved the result under the hypothesis that~\eqref{eq:regret_hypothesis} holds. If~\eqref{eq:regret_hypothesis} does not hold the result follows trivially. 
\end{proof}

\end{document}